\newtheorem{theorem}{Theorem}[section]
\newtheorem*{corollary*}{Corollary}
\newtheorem*{cosmic*}{Cosmic censor conjecture}
\newtheorem*{state*}{Stationary state conjecture}
\newenvironment{sproof}{%
  \proof}{\endproof}
\definecolor{blue}{RGB}{41,5,195}
\begin{document}
\newtheorem{proposition}[theorem]{Proposition}
\numberwithin{equation}{section}
\let\oldchapter\chapter
\renewcommand{\chapter}{
  \renewcommand{\theequation}{\thechapter.\arabic{equation}}
  \oldchapter}
  \let\oldsection\section
\renewcommand{\section}{
  \renewcommand{\theequation}{\thesection.\arabic{equation}}
  \oldsection}
\selectlanguage{english}

\frenchspacing 

\renewcommand{\ABNTEXchapterfontsize}{\fontsize{12}{12}\bfseries}
\renewcommand{\ABNTEXsectionfontsize}{\fontsize{12}{12}\bfseries}
\renewcommand{\ABNTEXsubsectionfontsize}{\fontsize{12}{12}\normalfont}
\renewcommand{\ABNTEXsubsubsectionfontsize}{\fontsize{12}{12}\normalfont}
\renewcommand{\ABNTEXsubsubsubsectionfontsize}{\fontsize{12}{12}\normalfont}

\imprimircapa

\imprimirfolhaderosto*

\includepdf{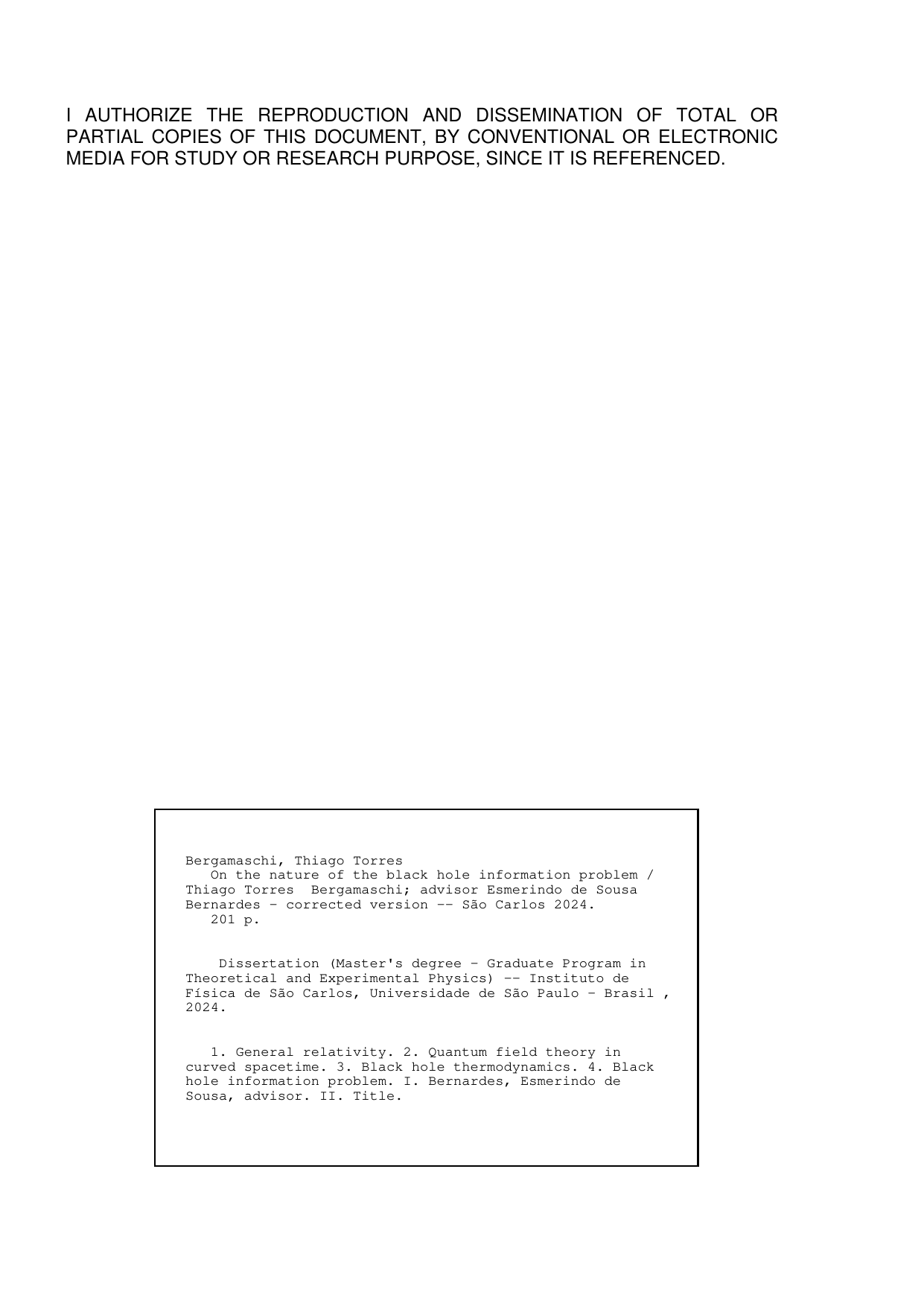}

\begin{agradecimentos}

I am grateful to my advisor, Prof. Dr. Esmerindo de Sousa Bernardes, for his help in the development of this work, which ranges from his availability to supervise it to the countless hours dedicated to discussions. This work would not have been developed without him.

I thank Prof. Dr. Daniel A. T. Vanzella for his lectures on general relativity and several hours dedicated to discussions that were fundamental to this work.

I thank Prof. Dr. Leo Maia for his lectures on statistical mechanics and many important and enthusiastic discussions on physics and mathematics.

I thank Prof. Dr. Fernando Mortari for his lectures on linear algebra and helpful comments on a draft of appendix A.

I thank the librarian Neusa for the revision of this dissertation.

This study was financed by the Coordenação de Aperfeiçoamento de Pessoal de
Nível Superior – Brasil (CAPES) – Finance Code 001.
	
\end{agradecimentos}

\begin{resumo}[Abstract]
 \begin{otherlanguage*}{english}
	\begin{flushleft} 
		\setlength{\absparsep}{0pt} 
 		\SingleSpacing  		\imprimirautorabr~~\textbf{\imprimirtitleabstract}.	\imprimirdata.  \pageref{LastPage}p. 
		\imprimirtipotrabalhoabs~-~\imprimirinstituicao, \imprimirlocal, 	\imprimirdata. 
 	\end{flushleft}
	\OnehalfSpacing 
The aim of this work is to present the black hole information problem and discuss the assumptions and hypotheses necessary for its formulation. As the problem arises in the framework of semiclassical gravity, we first review the necessary notions to describe Lorentzian manifolds equipped with physical properties, as well as the physical concepts of the theory that describes the gravitational interaction as the curvature of spacetime, general relativity. From its classical perspective, we develop the formalism to study the dynamical aspects of black holes in spacetimes obeying suitable causality conditions. Equipped with conjectures that nature censors naked singularities and that black holes reach a stationary configuration after they form, the black hole uniqueness theorems allow us to review several relations for the geometrical quantities associated with them. Following considerations of the other fundamental interactions, which are described by quantum field theory, we review the arguments in the formalism of quantum field theory in curved spacetime that give rise to the effective particle creation effect, its approximately thermal character, and the concept of black hole evaporation. With a precise quantification of information in quantum mechanics and assuming that the condition for physically acceptable states is given by the Hadamard condition, we review the result that entanglement between causally complementary regions is an intrinsic feature of quantum field theory. As a consequence, we discuss how the formation and complete evaporation of black holes leads to information loss. Conscious that such a prediction follows if no deviations from the semiclassical picture occur at the Planck scale, we discuss alternatives to this non-unitary dynamical evolution and formulate the black hole information problem. Lastly, we analyze the assumptions and hypotheses that lead to the problem.

   \vspace{\onelineskip}
 
   \noindent 
   \textbf{Keywords}: General relativity. Quantum field theory in curved spacetime.  Black hole thermodynamics. Black hole information problem.
 \end{otherlanguage*}
\end{resumo}


\setlength{\absparsep}{18pt} 
\begin{resumo}
	\begin{flushleft} 
			\setlength{\absparsep}{0pt} 
			\SingleSpacing 
			\imprimirautorabr~~\textbf{\imprimirtituloresumo}.	\imprimirdata. \pageref{LastPage}p. 
			\imprimirtipotrabalho~-~\imprimirinstituicao, \imprimirlocal, \imprimirdata. 
 	\end{flushleft}
\OnehalfSpacing 		

 O objetivo desse trabalho é apresentar o problema da informação em buracos negros e discutir as suposições e hipóteses necessárias para sua formulação. Como o problema surge com considerações de gravitação semiclássica, nós primeiramente revisamos as noções necessárias para descrever variedades Lorentzianas equipadas com propriedades físicas, assim como os conceitos físicos da teoria que descreve a interação gravitacional como a curvatura do espaço-tempo, relatividade geral. A partir da sua visão clássica, nós desenvolvemos o formalismo para estudar aspectos dinâmicos de buracos negros em espaço-tempos que obedecem a condições de causalidade adequadas. Equipados com conjecturas de que a natureza censura singularidades nuas e que buracos negros atingem configurações estacionárias após sua formação, os teoremas de unicidade de buracos negros nos permitem revisar várias relações para as quantidades geométricas associadas a eles. Seguindo conside\-rações das outras interações fundamentais, descritas por teoria quântica de campos, nós revisamos os argumentos no formalismo de teoria quântica de campos em espaço-tempo curvo que dão origem ao efeito de criação de partículas, seu caráter aproximadamente térmico, e o conceito de evaporação de buracos negros. Com uma quantificação precisa de informação em mecânica quântica e assumindo que a condição para estados fisicamente aceitáveis é dada pela condição de Hadamard, nós revisamos o resultado que emaranhamento entre regiões causalmente complementares é uma característica intrínseca de teoria quântica de campos. Como uma consequência, nós discutimos como a formação e evaporação completa de buracos negros resulta em perda de informação. Cientes de que tal previsão só é válida se não houver desvios das previsões semiclássicas na escala de Planck, nós discutimos alternativas para essa evolução dinâmica não-unitária e formulamos o problema da informação em buracos negros. Por último, nós analisamos as suposições e hipóteses que levam ao problema.

 \textbf{Palavras-chave}: Relatividade geral. Teoria quântica de campos em espaço-tempo curvo. Termodinâmica de buracos negros. Problema da informação em buracos negros.
\end{resumo}

\pdfbookmark[0]{\listfigurename}{lof}
\listoffigures*
\cleardoublepage

\pdfbookmark[0]{\contentsname}{toc}
\tableofcontents*
\cleardoublepage

\textual
\chapter[Introduction]{Introduction}\label{Introduction}

The purpose of this work is to present the black hole information problem, which arises due to predictions from two highly successful theories that describe the fundamental interactions of nature known to date. In essence, black holes are regions of spacetime predicted by the theory that describes the gravitational interaction, and when one considers the description of the electromagnetic and nuclear interactions, one is led to the conclusion that the evolution of these regions can result in information loss. There is, however, a question regarding the physical plausibility of this conclusion, mainly following from the expected limitations of the theories involved, which characterizes the black hole information problem. Building the path from the basis of each pertinent theory, we review the description of black holes and discuss the assumptions and hypotheses that lead to this question concerning its evolution. It should be noted that this work is self-contained in the sense that knowledge of undergraduate physics courses is sufficient to be able to follow the developments and arguments presented.

Consider, first, the gravitational interaction, which is the weakest of the known fundamental interactions. Its universal character, also known as the \textit{equivalence principle}, was first quantified by Newtonian gravity, which translated to the idea that the gravitational ``charge'' equaled the inertial mass of bodies (an hypothesis currently referred to as the \textit{weak equivalence principle}). Newtonian gravity gave a successful description of why objects fall as well as an explanation of the movement of astronomical bodies, i.e., Kepler's laws. As a matter of fact, these explanations were so accurate that they strengthened the fact that, at least in some limit, gravitational interaction must be reduced to the equations of motion predicted by the universal law of Newtonian gravity. Additionally, this theory gives an interesting result when one considers a body with a high mass concentrated in a small region, so that its escape velocity equals that of the speed of light. From a viewpoint in which light is a massive particle, Newtonian gravity then predicts that this body would be perceived as black by observers infinitely distant from it. Bodies with such an extreme gravitational field became known as \textit{dark stars}.

With the unification of electric and magnetic interactions, the lack of experimental evidence of a medium for the propagation of electromagnetic waves gave rise to the necessity of correcting coordinate transformation laws between inertial frames. Following the interpretation of this correction, the theory of special relativity was then proposed, which carried drastic consequences for the concepts of time and space (see, e.g., \cite{Wald1984} for an objective review) and was quickly found to be a successful explanation for several phenomena. In fact, the limit of speed of signals postulated by special relativity, given by the speed of light, $c$, was one of the main reasons as to why Newtonian gravity was deemed to not be the best possible explanation of the gravitational interaction, since in such a framework, the interaction is instantaneous. Seeking to unify the concepts of special relativity with the gravitational interaction \cite{Misner1973}, \textit{general relativity} \cite{Einstein1916} was proposed, a \textit{classical} theory describing gravity as the curvature of spacetime generated by an energy distribution. By classical, it is meant that the characterization of the gravitational interaction described by it, given by the spacetime \textit{metric}, assumes definite values in any order of length, not only those much larger than atomic ones (i.e., distances of order $10^{-9}\;\text{m}$). In contrast, a \textit{quantum} theory takes into account, for instance, the intrinsic probabilistic nature of measurements in a system, wave-particle duality, and the discretization of possible values of energy and momentum, characteristics that become relevant in scales of order of atomic ones.

In a very brief summary, the content of general relativity follows from the \textit{Einstein equivalence principle} \cite{Will2018}, which states that the weak equivalence principle holds and that the outcome of any local\footnote{In the sense that inhomogeneities in the gravitational field can be neglected throughout the region where the experiment is carried out.} non-gravitational\footnote{The measurement of the electromagnetic interaction between two distributions of electric charge would constitute such an experiment. In contrast, measurement of the gravitational interaction between such distributions would not.} experiment is independent of its position in space and time (i.e., \textit{local position invariance}), as well as the speed of the frame of reference in free fall (i.e., \textit{local Lorentz invariance}). Stated in this manner, the Einstein equivalence principle can be interpreted as the physical equivalence of gravitational acceleration and inertial acceleration. As a consequence, a theory that obeys the Einstein equivalence principle has to be a \textit{metric theory of gravity} \cite{Will2014}, which means that a symmetric metric is defined on spacetime, freely falling bodies follow ``locally straight'' curves, and that for local freely falling frames of reference, non-gravitational physics must reduce to that predicted by special relativity. The main point of these principles and ideas is that the effects of gravitation must be described by a curved spacetime, in which a free energy distribution that ``bends'' it follows the ``straightest possible'' trajectories.

Although still a classical theory, general relativity provides more accurate results than Newtonian gravity. Indeed, the measurement of light deflection due to gravity \cite{Dyson1920} was the first experiment that consolidated general relativity as a superior theory to describe the gravitational interaction. Nevertheless, it should come as no surprise to the reader that, even to this day, one describes the launch of a rocket or even the entire solar system to an excellent order of approximation, using only Newtonian gravity. This is because in these descriptions one is dealing with a \textit{low curvature} regime (i.e., the curved spacetime is very similar to Minkowski spacetime), small time variations of the metric when compared to spatial ones, and a low speed regime (i.e., $v\ll c$), so that the predictions of general relativity reduce to those of Newtonian gravity \cite{Wald1984}. Still, there are critical philosophical differences concerning \textit{why} the movement of observers is described by the same equations of motion in such a regime, as a consequence of the curved spacetime. Perhaps the best example of the disparity that arises in the movement of observers in the framework of general relativity is the necessity to correct the passage of time for a satellite, not only because of its relative speed to the Earth's surface (as per special relativity), but also because of the difference in intensity of gravity. In this manner, the success of general relativity is undeniable \cite{Misner1973}, being in agreement with every experimental test made to date (see, e.g., \cite{LIGO2016} for one of the most notable and recent agreements, and \cite{Will2018} for an extensive review). Consequently, it is then natural to expect the predictions of general relativity to be of interest to further our knowledge of the gravitational interaction.

One of the most intriguing of these predictions \cite{Oppenheimer1939,Penrose1965, Penrose1969} is of gravitational collapse resulting in regions of spacetime where gravity acts in such a manner that nothing can escape from it \cite{Hawking1972a}, and that can also be related to pathologies in the structure of spacetime \cite{Geroch1968, Hawking1970}. Although these regions, known as \textit{black holes}, may seem similar to the dark stars that can exist in Newtonian gravity, they are fundamentally different. For instance, in the Newtonian theory, the body would be perceived as black only by observers sufficiently distant from it. Hence, it would be possible for observers inside or on the dark star to send signals to some observers outside of it. However, because of the nature of the gravitational interaction in the framework of general relativity and the postulate that an energy distribution can travel, at most, at the speed of light, observers inside the black hole would not be able to communicate with those outside of it, regardless of physical distance.

An extensive investigation of black holes in the purely classical framework of general relativity brought to light several important properties (e.g., \cite{Israel1967, Carter1971,Penrose1971, Hawking1971, Hawking1973, Bardeen1973}). Some of the most interesting of them are those summed up in the black hole \textit{uniqueness theorems} \cite{Heusler1996}, which are a series of results that state that under suitable conditions, any black hole that does not change over time must be completely characterized by three parameters: its mass, angular momentum, and electric charge. Any other physical property, such as the area of its surface or the gravitational acceleration on it as measured by distant observers, are all dependent only on these three parameters. Consequently, the lack of indistinguishable external features for time independent black holes with the same parameters also merits the nomenclature for these results as the \textit{no-hair theorems}.

In light of this, one can then identify the peculiar character of black holes when it comes to the accessibility of information concerning the energy distribution that gave rise to them. For instance, in the case of a dark star in Newtonian gravity, it would be possible for an observer to get closer to the matter distribution that gave rise to it, make measurements, and then share that data with any other observer. However, the same is not possible for a black hole, since if an observer enters it, it will never be able to communicate with observers outside of it. The significance of this line of reasoning is evident when one considers the conjecture that black holes must reduce to a time invariant configuration after they form. In particular, such an assumption is justified by the expectation that physical quantities not associated with conservation laws should be radiated away. Indeed, in the framework of general relativity, details about a gravitational collapse are expected to be radiated away in the form of \textit{gravitational radiation}, in a very similar fashion as an electric charge distribution radiates away its higher order multipole moments. Consequently, if a gravitational collapse produces a black hole that reduces to a time invariant configuration, the black hole uniqueness theorems state that such a black hole should be characterized only by three parameters. In this sense, details about the energy distribution that gave rise to it (e.g., its degrees of freedom or details about elementary particle composition) will forever be concealed from any observer that remains outside of it. Evidently, this does not mean that information is lost, but simply that a class of observers cannot have access to it, in contrast with a dark star in the Newtonian case.

These arguments also lead one to a question concerning the thermodynamic properties of black holes. In particular, since nothing can get out of a black hole, its physical temperature (i.e., the one associated with the emission of radiation in accordance with a black body spectrum form) should be null, which in turn would mean that black holes could act as a way to reduce the entropy of the universe. Considering the fact that thermodynamics is also a highly successful theory to describe macroscopic properties of systems in \textit{equilibrium}, an \textit{ad-hoc} proposal for black hole entropy \cite{Bekenstein1972,Bekenstein1973} surfaced. Although such a proposal was a way to make it so that black holes obey some ``generalized second law of thermodynamics'', there was no microscopic justification for it. This state of affairs developed further when several classical properties of black holes \cite{Bardeen1973} pointed to a mathematical analogy with the \textit{laws of thermodynamics}. However, because a black hole should have a vanishing physical temperature, the proposed correspondence between the classical characterization of black holes and the laws of thermodynamics would not merit a physical status.

Further analysis of black holes then required consideration of the other fundamental interactions, which are currently described by \textit{quantum field theory} (see, e.g., \cite{Weinberg1995, Mandl1984}). In essence, this theory can be understood as the union of quantum mechanics and special relativity, which takes into account not only relativistic effects but also systems with varying particle content. Additionally, quantum field theory possesses the very attractive feature of being a description of fundamental interactions in terms of fields, which, for example, is the fundamental quantity of the electromagnetic interaction. As a matter of fact, it has provided a unified description of the electromagnetic and weak interactions, known as the \textit{electroweak} interaction. So far, this theory has proven to be in excellent agreement with experiments to describe the electroweak and \textit{strong} nuclear interactions, and indeed, it is the basis of the \textit{Standard Model} (see, e.g., \cite{Nagashima2010, Peskin1995}). Perhaps the best example of its success is the extremely accurate agreement with the experimental value of the ratio of the electron's magnetic and angular momentum \cite{Fan2022}. Being highly successful in this sense, quantum field theory also provides an interesting feature of the interpretation of particles as excitations of a field. Very much so, one of its most fascinating predictions is the \textit{Fulling-Davies-Unruh effect} \cite{Fulling1973,Davies1975,Unruh1976}, which states that a uniformly accelerated observer measures a thermal bath of particles, demonstrating the dramatic feature that the particle content of a quantum state is \textit{observer dependent}. Nevertheless, the predictions of quantum field theory are made by studying fields in Minkowski spacetime, i.e., not taking into account gravitation.

The consideration of gravity and the other fundamental interactions simultaneously is currently best made by the framework of \textit{quantum field theory in curved spacetime} \cite{Parker1968, Parker1969, Parker2009, Wald1994, Birrell1984, Fulling1989}. This formalism, also referred to as \textit{semiclassical} gravity, considers the propagation of quantized fields on a curved spacetime background. More precisely, gravitation is still treated classically, being described by a spacetime metric that has definite values in any order of length, while the quantum fields propagate on it. Consequently, semiclassical gravity is expected to provide a good description of quantum effects in gravitation for scales in which general relativity is an adequate description of gravity \cite{Wald1984}. Dimensional arguments \cite{Misner1973} suggest that this will not be the case when the scales of the spacetime structure reach the \textit{Planck length},
\begin{equation}\label{planck} \ell_p=\left(\frac{G\hbar}{c^3}\right)^{1/2}\approx 1.62\;10^{-35}\;\text{m},\end{equation}
where $G$ is the \textit{Newtonian constant of gravitation} and $\hbar$ is the \textit{reduced Planck constant}. For instance, when studying gravitation in a \textit{high curvature} regime, such that the curvature of spacetime is of order $\ell_p^{-2}$, one justifiably expects that the classical description of gravity will no longer be adequate, in the sense that details about the quantum structure of the gravitational interaction should become relevant. It is in this manner that general relativity is expected to be only an approximation on scales much larger than the Planck length, the \textit{Planck time},
\begin{equation}t_p=\left(\frac{G\hbar}{c^5}\right)^{1/2}\approx 5.39\;10^{-44}\;\text{s},\end{equation}
and the \textit{Planck mass},
\begin{equation} m_p=\left(\frac{\hbar c}{G}\right)^{1/2}\approx 2.18\;10^{-8}\;\text{kg}.\end{equation}

Hence, the development of a theory that takes into account a quantum description of gravity should provide a much better picture of phenomena in regimes of extreme gravity, such as black holes. In fact, even though phenomena at the Planck scale may be out of experimental reach for the foreseeable future, the scales of interest for a possible unification of the electroweak and strong interactions suggest that quantum gravitational effects should be of importance. However, the construction of this more fundamental, \textit{quantum gravity} theory, is an open problem. Nonetheless, the predictions of semiclassical gravity have been promising and extensive in many ways, and it may very well be interpreted as an intermediate step in the development of the quantum theory of gravity. More specifically, following the particle interpretation of semiclassical gravity, some of the main results concern the creation of particles due to the expansion of the universe and in spacetimes containing a black hole. The former provided a justification for the anisotropies of the \textit{Cosmic Background radiation} \cite{Parker2009}, while the latter was responsible for giving rise to the concept of black hole \textit{evaporation}, as a consequence of the effective emission of radiation with an approximately thermal character (in the sense that it approximately obeys a black body spectrum). This effective particle creation effect is also referred to as the \textit{Hawking effect} \cite{Hawking1974, Hawking1975}.

When one considers the effective particle creation by a black hole predicted by semiclassical gravity, the almost full picture of formation and evaporation of most black holes can be analyzed with certainty. In essence, suppose one starts with a time independent configuration, which is described by its mass, angular momentum, electric charge, degrees of freedom, baryon and lepton number, and any other physical property. From the perspective of quantum mechanics, complete knowledge of the quantum state of this energy distribution would correspond to a \textit{pure} state. An example of this would be a star at the early stages of its life, composed mainly of hydrogen and helium in which one has all the possible information about the energy distribution. Of course, such a configuration would not be exactly time independent due to nuclear fusion processes, but in regions far away from the star at the very early stages of its life, it would be a good approximation to describe it by a time independent spacetime. Now, as this energy distribution evolves in time, if it is massive enough, the gravitational effects will not be sustained, and a black hole will form as a consequence of gravitational collapse. Even though the black hole formation process is expected to be highly dependent on the details of the energy distribution and how it collapses, the asymmetries of the distribution are expected to be radiated away in the form of gravitational radiation. Following the conjecture that black holes must eventually reach a time independent state, the initial energy distribution will then give rise to a black hole described only by three parameters: its mass, $M$, angular momentum, $L$, and electric charge, $q$. At the same time, the time independent black hole should behave as a gray body\footnote{Gray in the sense that its black body spectrum has a factor due to transmission probabilities.}, effectively emitting an approximately thermal spectrum of radiation. This radiation carries away the mass, angular momentum and electric charge of the black hole, and after some finite time, the black hole should evaporate completely, leaving only radiation. From the perspective of quantum mechanics, this condition of the energy distribution would correspond to a \textit{mixed} state, which arises in a context in which one does not have all the possible information about a state. Therefore, one would then have a configuration in which information about the details of the energy distribution that gave rise to the black hole is completely lost \cite{Hawking1976, Unruh2017}. Fig. \ref{fig:intro} illustrates the process of black hole formation and complete evaporation, as well as how it can lead to loss of information.

 \begin{figure}[h]
\centering
\includegraphics[scale=1.1]{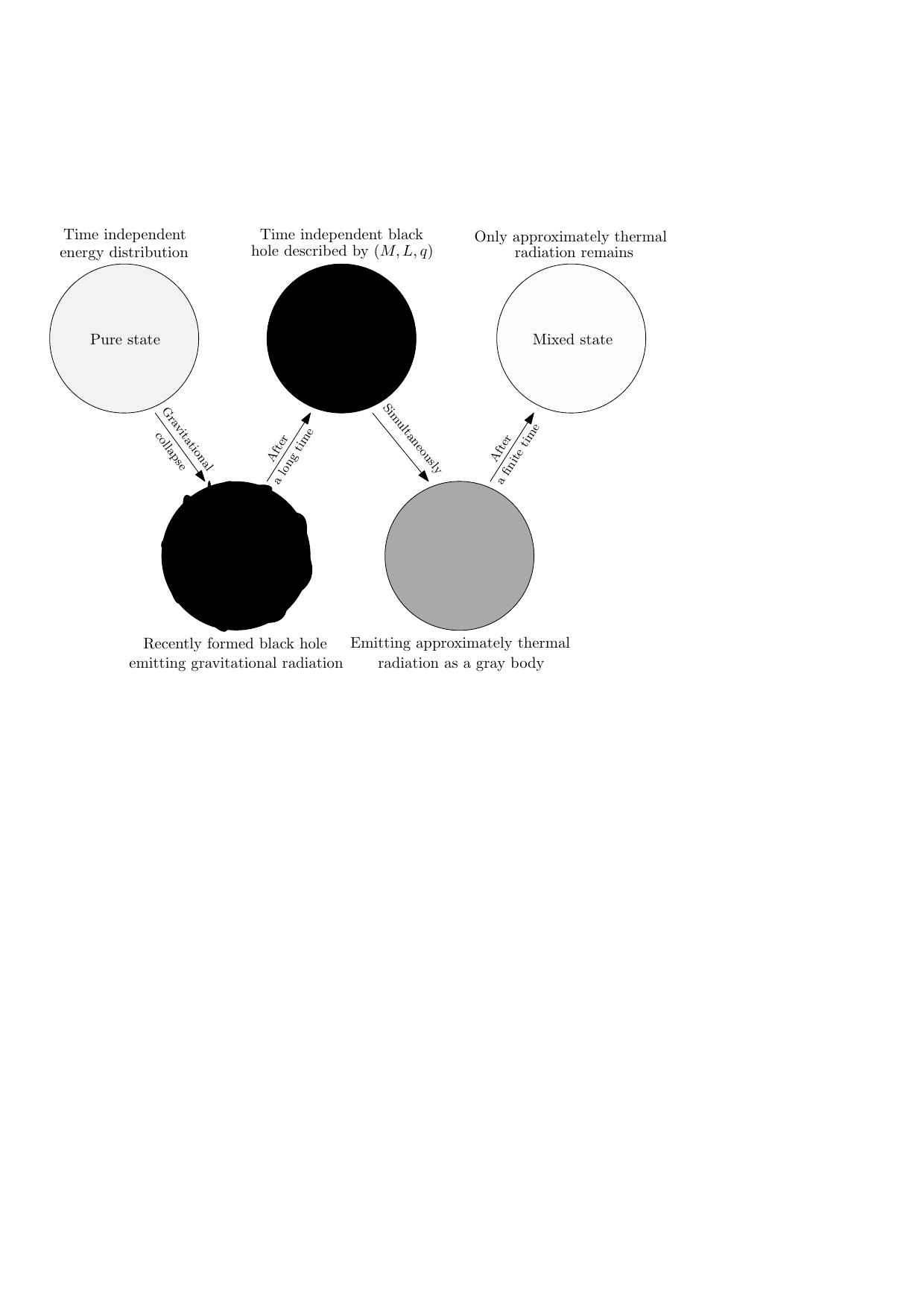} 
\caption{Process of black hole formation and complete evaporation.}
\caption*{Source: By the author.}
\label{fig:intro}
\end{figure}

Evidently, such a prediction of \textit{information loss} follows from the expectation that the evaporation process occurs completely, and lacking a complete theory of quantum gravity, it is still unclear how the evolution of a black hole precisely occurs when it reaches the Planck scale. In summary, the \textit{black hole information problem} can be stated as a question regarding the final state of a black hole in light of its evaporation process. In addition, the classical properties of black holes and the character of the effective particle creation effect point to a possible connection between gravitation, quantum theory and thermodynamics. Arguably, this question and state of affairs surrounding the semiclassical description of black holes are the best clues available for the development of a satisfying theory of quantum gravity. The purpose of this work is to present a review of the classical and semiclassical properties of black holes, a description of the process of evaporation and formulation of the black hole information problem, as well as an analysis of the assumptions and hypotheses that each of the pertinent theories bring to the evaporation process. The organization is as follows.

Chapter \ref{chapter1} concerns the properties of spacetimes and some physical aspects of general relativity necessary for the description and analysis of black holes. We discuss how symmetries can be defined in curved spacetimes and how to evaluate the conserved quantities associated with them. The causal structure of general spacetimes is also analyzed, in which the condition for a spacetime to be considered deterministically and causally ``well behaved'' is presented. Further concepts of interest, such as restrictions on the energy-momentum tensor of a suitable energy distribution, the dynamics of null curves, a definition of a singularity, and a satisfying notion of what it means for a distribution to be isolated, are also discussed in detail.

Chapter \ref{chapter2} presents a definition of the black hole region of a spacetime and the derivation of several properties that follow from chapter \ref{chapter1}. These purely classical results provide relations to the three parameters that describe a time invariant black hole and restrictions on the evolution of general black holes for spacetimes obeying suitable causality conditions. Namely, we will show that these results are rigorous derivations that follow from differential geometry and discuss two conjectures that concern the existence and evolution of black holes.

Chapter \ref{chapter3} provides an introduction to the elements of quantum field theory in curved spacetime that are necessary for the derivation of the effective particle creation effect by black holes. We will see that these results follow from the interpretation of particles as excitations of a field, the conjecture that black holes reduce to time independent configurations, and the classical properties discussed in chapter \ref{chapter2}. We also mention the interpretation of classical properties in light of a thermodynamic perspective, which concerns the possible physical entropy of a black hole and the translation of the effective particle creation process as a consequence of its physical temperature. Additionally, using the formalism of density operators and the assumption that physically acceptable states are those for which some notion of ``energy-momentum expectation value'' can be well defined, we show that entanglement between states inside and outside a black hole is an intrinsic feature of quantum field theory in curved spacetime.

Chapter \ref{chapter4} introduces the black hole information problem. Following the dynamical evolution of a black hole in light of the effective particle creation effect, we show that information will be lost if the evaporation process occurs completely, in the sense that an initial pure state will inevitably evolve into a mixed state. We briefly discuss alternatives to this result and conclude the chapter by discussing the assumptions and hypotheses that lead to the black hole information problem.

Chapter \ref{conclusion} concludes this work with the final remarks and mention of perspectives with regard to further developments on the black hole information problem.

In appendix \ref{A}, we present a precise definition of spacetime and an objective review of the tools necessary to describe it: the geometrical quantities associated with it and the objects that are used to evaluate how they change over events. In particular, readers not familiar with differential geometry and topology are advised to read appendix \ref{A} before chapter \ref{chapter1}. No prior knowledge of general relativity is necessary to start from chapter \ref{chapter1} if one has a solid basis in differential geometry and topology.

In appendix \ref{apb}, we present pertinent results concerning the most general spacetime that describes time independent black holes. These results will mainly be of use in some developments of chapters \ref{chapter2} and \ref{chapter3}, and the reader will be advised to consult it when necessary.

In appendix \ref{information}, we present an objective review of the characterization of information in quantum mechanics. Namely, the formalism of density operators, the Von-Neumann entropy, and the concept of entanglement. The concepts presented there will be of use for arguments and developments in chs. \ref{chapter3} and \ref{chapter4}, and the reader will be advised to consult it when necessary.

Regarding notation, the signature of the spacetime metric is adopted to be given by $-+\ldots+$. The notation used for tensors is known as the \textit{abstract index notation} \cite{Wald1984}. In particular, greek letters represent abstract indices, while latin letters represent concrete indices (see appendix \ref{A2} for details on tensors and this notation). Other notation is introduced as needed. Lastly, most of the calculations will be developed in SI units, in which the pertinent constants will be inserted with the physical parameters. For example, we will refer to the mass, $M$, of a region of spacetime by its \textit{Schwarzschild radius}, 
\begin{equation}\label{schr}
    r_s=\frac{2GM}{c^2}\approx 2.94\;10^{3} \left(\frac{M}{M_\odot}\right)\; \text{m},
\end{equation}
where $M_\odot$ is the Sun's mass. However, with a few exceptions of lengthy developments, \textit{geometrical} (i.e., $G=c=1$) or \textit{natural} (i.e., $G=c=\hbar=k_B=1$, where $k_B$ is Boltzmann's constant) units will be adopted.

 \chapter{Spacetime and general relativity}\label{chapter1}

General relativity is a theory that describes gravitation as the curvature of space and time due to the presence of an energy distribution. It postulates that the universe is a four-dimensional spacetime (see appendix \ref{A2} for the definition of a spacetime) whose Lorentzian metric, $g_{\mu\nu}$, is related to the energy-momentum tensor, $T_{\mu\nu}$, by \textit{Einstein's equation} \cite{Wald1984,Hanoch2023},
\begin{equation}\label{eq1}
    R_{\mu\nu}-\frac{1}{2}Rg_{\mu\nu}=\frac{8\pi G}{c^4}T_{\mu\nu},
\end{equation}
where $R_{\mu\nu}$ is the Ricci tensor and $R$ is the Ricci scalar (see appendix \ref{A3}). Of course, in order for the theory to be consistent with special relativity, it also postulates that the speed of light is a universal constant (which follows from a Lorentzian metric) and that the laws of physics are the same in all inertial frames of reference (i.e., the principle of relativity). Following the Einstein principle of equivalence, the effect of gravity is not to accelerate test bodies, but rather, to shape the paths that they follow on the curved spacetime. Indeed, the physical content of Einstein's equation can perhaps be best summed in Wheeler's words \cite{Wheeler2000}:

\noindent``\textit{Spacetime tells matter how to move; matter tells
spacetime how to curve.}''

Evidently, solving eq. \ref{eq1} in general is not a trivial task, as it reduces to a system of non-linear coupled differential equations. Nonetheless, much can be said about the structure of a physical spacetime (i.e., one whose metric obeys Einstein's equation) given the form of eq. \ref{eq1} and by analyzing the physical content of a Lorentzian metric. In other words, in this chapter we will not be interested in particular solutions of Einstein's equation, but rather, in how its form can lead to results concerning physics under the lens of general relativity. Hence, most of the results that we will review in this chapter are highly geometrical, in the sense that they are intrinsic properties of spacetimes. Regardless, one may dive deeper into their meaning by considering Einstein's equation and physical concepts, such as restrictions on the energy-momentum tensor and how gravitational interaction is expected to behave far from sources.

Although the developments that follow are detailed enough for a complete description of black holes, some aspects of general relativity which are also minimally relevant for our discussion will not be given the same level of attention and detail. For instance, in light of Einstein's equation, a perturbed Minkowski spacetime metric leads one to the conclusion that undulations of curvature can propagate in spacetime. A detailed analysis of this gravitational radiation, known as \textit{gravitational waves}, can be found in, e.g., \cite{Wald1984}. Moreover, a treatment of \textit{cosmological models}, i.e., solutions of Einstein's equation which are in agreement with large-scale experimental data (such as homogeneous and isotropic distribution of galaxies), can be found in, e.g., \cite{Wald1984}.

This chapter is organized as follows. We first use the notion of diffeomorphisms to give a precise definition of the symmetries of a spacetime, which in turn gives us a prescription for conserved quantities. Second, we discuss the causal structure of arbitrary spacetimes, while analyzing constraints and global properties on those deemed physically reliable. We then turn our attention to the possible restrictions on the energy-momentum tensor, considering how one expects physical, classical matter to behave. Next, we discuss the dynamics of null geodesics, and how one can use geodesics in general to give a satisfying notion of ``singularities'' in the structure of a spacetime. We conclude this chapter by presenting the concept of asymptotic flatness, which gives a precise definition of what it means for a distribution of energy to be isolated in the framework of general relativity. These developments rely heavily on several notions of differential geometry and topology, and we refer the reader to appendix \ref{A} for an objective review of them. 

\section{Symmetries}\label{symmetry}

A symmetry in a physical system is defined by a continuous or discrete coordinate transformation that preserves the system. They are of interest because they not only facilitate the analysis of systems, but also give rise to conserved quantities. In the context of general relativity, a symmetry can be associated with a coordinate transformation for which the structure of the spacetime, i.e., the metric, is invariant. In this manner, a precise definition of a symmetry can be given by considering that diffeomorphisms can be interpreted as coordinate transformations.

More specifically, let $(M,g_{\mu\nu})$ be a spacetime and $\psi_s:\mathbb{R}\times M\to M$ be a one-parameter group of diffeomorphisms (see appendix \ref{derivativeoperators}). For a fixed $s$, if the action of $\psi_s$ leaves $g_{\mu\nu}$ unchanged, i.e., $\psi^*_s g_{\mu\nu}=g_{\mu\nu}$ for all $a\in M$, then $\psi_s$ is said to be a symmetry transformation for $g_{\mu\nu}$. In particular, each map is then known as an \textit{isometry}, and $\psi_s$ is said to be a one-parameter group of isometries. Thus, the necessary and sufficient condition for a vector, $\chi^{\mu}$, to be the generator of a one-parameter group of isometries is that the Lie derivative of the metric with respect to it vanishes. From eq. \ref{Liemetric}, this condition reads
\begin{equation}\label{eq2}
    \mathcal{L}_{\chi}g_{\mu\nu}=\chi^{\alpha}\nabla_{\alpha}g_{\mu\nu}+g_{\nu\alpha}\nabla_{\mu}\chi^{\alpha}+g_{\mu\alpha}\nabla_{\nu}\chi^{\alpha}=0,
\end{equation}
but since $\nabla_{\mu}$ is the Levi-Civita connection, eq. \ref{eq2} reduces to 
\begin{equation}\label{eq3}
    \nabla_{(\mu}\chi_{\nu)}=0,
\end{equation}
where the parentheses enclosing the indices are representative of a symmetric permutation of them (see eq. \ref{symmetric}). Eq. \ref{eq3} is known as \textit{Killing's equation}, and any vector that satisfies it is called a \textit{Killing vector}. Consequently, if $\chi^{\mu}$ is the generator of the one-parameter group of isometries, $\psi_s$, then the orbit of $\psi_s$ “follows” the path for which the metric is invariant in a spacetime, and $\chi^{\mu}$ “points” in that direction. In this sense, Killing vectors precisely capture the notion of a \textit{continuous} symmetry. On the other hand, \textit{discrete} symmetries are not generated by Killing vectors. Nonetheless, these discrete coordinate transformations (e.g., reflections) are relevant to a spacetime, and should be regarded as elements of its full group of isometries (see \S~\ref{schsec} for an example).  

An immediate result of Killing's equation is the conservation of a quantity along geodesics. Let $\chi^{\mu}$ be a Killing vector and $\gamma$ be a geodesic parametrized by an affine parameter, $t$, with tangent vector $s^{\mu}$. Then

\begin{equation}\label{eq4}
    \frac{d(s^{\mu}\chi_{\mu})}{dt}=s^{\nu}\nabla_{\nu}(s^{\mu}\chi_{\mu}) =s^{\nu}s^{\mu}\nabla_{\nu}\chi_{\mu}+\chi_{\mu}s^{\nu}\nabla_{\nu}s^{\mu}=0,
\end{equation}
as the first term vanishes by being the contraction of a symmetric and an antisymmetric tensor, while the second term vanishes as a consequence of the geodesic equation (see eq. \ref{aeq8}). Eq. \ref{eq4} can be interpreted as stating that $s^{\mu}\chi_{\mu}$ is constant along $\gamma$, i.e., as $t$ varies. The physical significance of such a conserved quantity can be analyzed straightforwardly when one is given the explicit form of the Killing vector, as will be discussed in detail when we study a specific metric in ch. \ref{chapter2}. Although this conserved quantity along geodesics is an interesting result, this is not the conserved quantity associated with the conservation law that rises due to the continuous symmetry generated by $\chi^{\mu}$, as per Noether's theorem \cite{Greiner1996}.

In order to derive this law, one needs to consider the failure of multiple applications of the Levi-Civita connection on $\chi^{\mu}$ to commute \cite{Dowker}. From eq. \ref{Riedual}, this failure is given by
\begin{equation}\label{eq39}
    (\nabla_{\mu}\nabla_{\nu}-\nabla_{\nu}\nabla_{\mu})\chi_{\alpha}=R_{\mu\nu\alpha}\mathstrut^{\beta}\chi_{\beta},
\end{equation}
which due to Killing's equation, eq. \ref{eq3}, can be rewritten as
\begin{equation}\label{19}
    \nabla_{\mu}\nabla_{\nu}\chi_{\alpha}+\nabla_{\nu}\nabla_{\alpha}\chi_{\mu}=R_{\mu\nu\alpha}\mathstrut^{\beta}\chi_{\beta}.
\end{equation}
Relabeling the indices in eq. \ref{19} allows one to write
\begin{equation}\label{22}
    \nabla_{\nu}\nabla_{\alpha}\chi_{\mu}+\nabla_{\alpha}\nabla_{\mu}\chi_{\nu}=R_{\nu\alpha\mu}\mathstrut^{\beta}\chi_{\beta},
\end{equation}
\begin{equation}\label{35}
    \nabla_{\alpha}\nabla_{\mu}\chi_{\nu}+\nabla_{\mu}\nabla_{\nu}\chi_{\alpha}=R_{\alpha\mu\nu}\mathstrut^{\beta}\chi_{\beta},
\end{equation}
which by summing eq. \ref{19} with eq. \ref{22} and subtracting eq. \ref{35}, as well as using eq. \ref{Rsecond}, yields
\begin{equation}\label{kr}
    \nabla_{\nu}\nabla_{\alpha}\chi_{\mu}=-R_{\alpha\mu\nu}\mathstrut^{\beta}\chi_{\beta}.
\end{equation}
By taking the trace over the first and third indices of the Riemann tensor, one finds a relation between the Ricci tensor and the Killing vector,
\begin{equation}\label{kill}
    \nabla_{\nu}\nabla^{\nu}\chi^{\mu}=-R^{\mu\nu}\chi_{\nu}.
\end{equation}
The significance of eq. \ref{kill} is given by the following proposition and line of reasoning.

\begin{proposition}\label{PropKomar}
Let $A_{\mu\nu}$ be a two-form and $\ell^{\nu}=\nabla_{\mu}A^{\mu\nu}$. Then $\nabla_{\mu}\ell^{\mu}=0$.
\end{proposition}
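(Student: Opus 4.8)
The plan is to show directly that $\nabla_{\nu}\ell^{\nu}=\nabla_{\nu}\nabla_{\mu}A^{\mu\nu}$ vanishes identically, using only the antisymmetry of $A_{\mu\nu}$ and the commutation properties of the Levi-Civita connection. First I would relabel the dummy indices $\mu\leftrightarrow\nu$ in $\nabla_{\nu}\nabla_{\mu}A^{\mu\nu}$ and use $A^{\nu\mu}=-A^{\mu\nu}$ to obtain $\nabla_{\nu}\nabla_{\mu}A^{\mu\nu}=-\nabla_{\mu}\nabla_{\nu}A^{\mu\nu}$; averaging the two expressions turns the divergence into a commutator,
\begin{equation*}
 \nabla_{\nu}\ell^{\nu}=\tfrac{1}{2}\left(\nabla_{\nu}\nabla_{\mu}-\nabla_{\mu}\nabla_{\nu}\right)A^{\mu\nu},
\end{equation*}
so everything is reduced to a curvature term.

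Next I would apply the analogue of eq.~\ref{Riedual} (equivalently eq.~\ref{eq39}) for a $(2,0)$ tensor, namely $\left(\nabla_{\alpha}\nabla_{\beta}-\nabla_{\beta}\nabla_{\alpha}\right)A^{\mu\nu}=-R_{\alpha\beta\sigma}{}^{\mu}A^{\sigma\nu}-R_{\alpha\beta\sigma}{}^{\nu}A^{\mu\sigma}$, and set $\alpha=\nu$, $\beta=\mu$ as dictated by the previous display. This produces two terms, each consisting of a trace of the Riemann tensor contracted against $A$. Using the symmetries of $R_{\alpha\beta\sigma}{}^{\mu}$ I would show that both relevant traces reduce (up to sign, with the convention fixed after eq.~\ref{kr}) to the Ricci tensor, which is symmetric. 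Each term is then the contraction of a symmetric tensor with the antisymmetric tensor $A$, and hence vanishes — precisely the mechanism already exploited in eq.~\ref{eq4}. This yields $\nabla_{\mu}\ell^{\mu}=0$.

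The argument is essentially index bookkeeping rather than anything deep; the one place demanding care is keeping the index positions and signs of the commutator identity for a contravariant tensor consistent with the covector convention of eq.~\ref{eq39}, and then correctly identifying the traces of $R_{\alpha\beta\sigma}{}^{\mu}$ as (multiples of) the symmetric Ricci tensor rather than some a priori non-symmetric contraction. I would pin down the sign for contravariant indices by contracting eq.~\ref{eq39} against an auxiliary vector before assembling the final step.
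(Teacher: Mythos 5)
Your proposal is correct and follows essentially the same route as the paper's own proof: relabel dummy indices and use the antisymmetry of $A^{\mu\nu}$ to convert the divergence $\nabla_{\nu}\nabla_{\mu}A^{\mu\nu}$ into half the commutator $(\nabla_{\nu}\nabla_{\mu}-\nabla_{\mu}\nabla_{\nu})A^{\mu\nu}$, then apply the curvature identity of eq. \ref{Rie2} and observe that the resulting Riemann traces are (up to sign) the symmetric Ricci tensor contracted with the antisymmetric $A^{\mu\nu}$, hence zero. The only differences are cosmetic (averaging versus moving the factor of $2$ to the left-hand side), so the sign bookkeeping you flag is exactly the care the paper's computation also takes.
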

\begin{proof}
Contracting the Levi-Civita connection with the relation between $A_{\mu\nu}$ and $\ell^{\mu}$ gives
\begin{equation}\label{charge}
    \underbrace{\nabla_{\nu}\nabla_{\mu}A^{\mu\nu}}_{(I)}=\nabla_{\nu}\ell^{\nu}.
\end{equation}
By rewriting (I) using the antisymmetry of $A_{\mu\nu}$ and relabeling the indices, one obtains
\begin{equation}\label{1200}
    \nabla_{\nu}\nabla_{\mu}A^{\mu\nu}=-\nabla_{\mu}\nabla_{\nu}A^{\mu\nu}\implies 2\nabla_{\nu}\nabla_{\mu}A^{\mu\nu}=(\nabla_{\nu}\nabla_{\mu}-\nabla_{\mu}\nabla_{\nu})A^{\mu\nu}.
\end{equation}
Using eq. \ref{Rie2}, eq. \ref{1200} then results in
\begin{equation}
    (\nabla_{\nu}\nabla_{\mu}-\nabla_{\mu}\nabla_{\nu})A^{\mu\nu}=-R_{\nu\mu\alpha}\mathstrut^{\mu}A^{\alpha\nu}-R_{\nu\mu\alpha}\mathstrut^{\nu}A^{\mu\alpha}=-2R_{\nu\alpha}A^{\alpha\nu}=0,
\end{equation}
as $R_{\nu\alpha}A^{\alpha\nu}$ is the contraction of a symmetric tensor with an antisymmetric one. Thus, by eq. \ref{charge}, $\nabla_{\mu}\ell^{\mu}=0$.
\end{proof}

A vector that satisfies the relation $\nabla_{\mu}\ell^{\mu}=0$ is said to represent a conserved ``current”, as it respects the covariant form of the continuity equation. The conserved ``charge” associated with this current can be evaluated by analyzing a compact and bounded region, $\mathcal{V}$, of spacetime such that all sources of this current are inside its spacelike segments. Consider a region such that $\partial \mathcal{V}=\{(t,t')\times\partial\Sigma\}\cup\Sigma_{t}\cup\Sigma_{t'}$, where each $\Sigma$ is a region of a spacelike hypersurface and $\{(t,t')\times\partial\Sigma\}$ is the hypersurface connecting $\Sigma_t$ to $\Sigma_{t'}$, closing the region $\mathcal{V}$, as illustrated in fig. \ref{fig:komar}.\begin{figure}[h]
\centering
\includegraphics[scale=1.55]{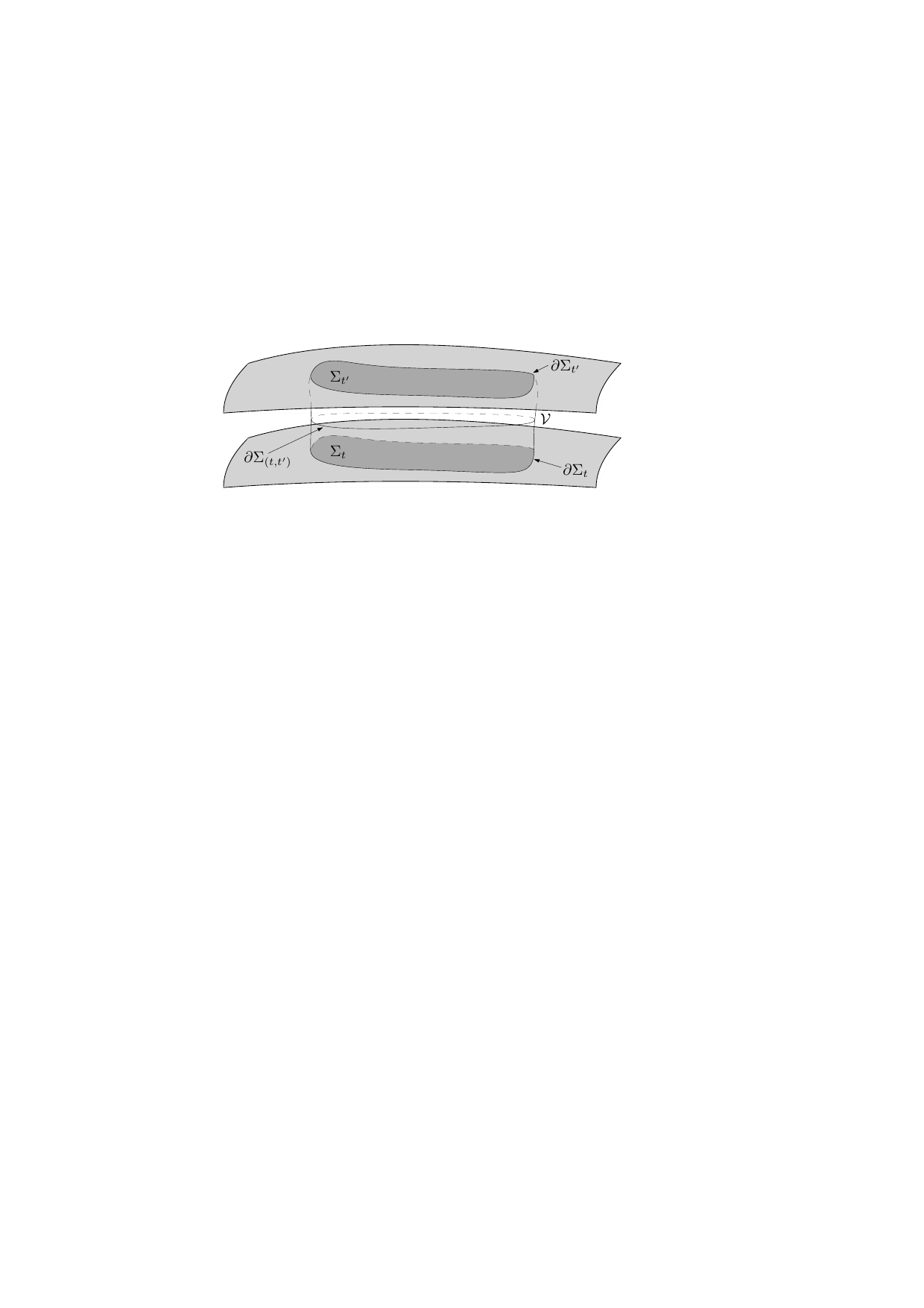} 
\caption{Bounded and compact region, $\mathcal{V}$, for the analysis of the conserved charge.}
\caption*{Source: By the author.}
\label{fig:komar}
\end{figure} Since all sources are inside its spacelike segments, one clearly has that $\ell^{\mu}=0$ on $\{(t,t')\times\partial\Sigma\}$. Now, by integrating $\nabla_{\mu}\ell^{\mu}$ on $\mathcal{V}$ (see appendix \ref{integration}), one obtains
\begin{equation}\label{Komar1}
    \int_{\mathcal{V}}\epsilon_{\mu\nu\alpha\beta}\nabla_{\lambda}\ell^{\lambda}=0,
\end{equation}
but from Stokes' theorem (see eqs. \ref{stokes2} and \ref{stokes3}) and the relation in prop. \ref{PropKomar}, one also has
\begin{equation}\label{Komar2}
    \begin{aligned}[b]
        \int_{\mathcal{V}}\epsilon_{\mu\nu\alpha\beta}\nabla_{\lambda}\ell^{\lambda} &= \int_{\mathcal{\partial V}}\epsilon_{\mu\nu\alpha\beta}\ell^{\beta}\\
        & = \int_{\Sigma_{t'}}\epsilon_{\mu\nu\alpha\beta}\ell^{\beta}-\int_{\Sigma_{t}}\epsilon_{\mu\nu\alpha\beta}\ell^{\beta}+\int_{\partial\Sigma_{(t,t')}}\epsilon_{\mu\nu\alpha\beta}\ell^{\beta}\\
        &= \int_{\Sigma_{t'}}\epsilon_{\mu\nu\alpha\beta}\nabla_{\lambda}A^{\lambda\beta}-\int_{\Sigma_{t}}\epsilon_{\mu\nu\alpha\beta}\nabla_{\lambda}A^{\lambda\beta}\\
        &= \int_{\partial\Sigma_{t'}}\epsilon_{\mu\nu\alpha\beta}A^{\alpha\beta}-\int_{\partial\Sigma_{t}}\epsilon_{\mu\nu\alpha\beta}A^{\alpha\beta},
    \end{aligned}
\end{equation}

Hence, from eq. \ref{Komar1}, one can then identify each term in the last line of eq. \ref{Komar2} as the conserved charge. Then, by eq. \ref{kill}, $A^{\mu\nu}=\nabla^{\mu}\chi^{\nu}$ is the antisymmetric tensor associated with the conserved current $-R^{\mu\nu}\chi_{\nu}$. As a consequence, the global conserved quantity, $c_{\chi}$, associated with a Killing vector, $\chi^{\mu}$, is given by
\begin{equation}\label{Komar}
    c_{\chi}=\int_{\partial\Sigma}\epsilon_{\mu\nu\alpha\beta}\nabla^{\alpha}\chi^{\beta}.
\end{equation}
The integral on the right hand side of eq. \ref{Komar} is known as a \textit{Komar integral} \cite{Komar1959,Komar1963}. Note that $c_{\chi}$ is independent of choice of spacelike hypersurface, $\Sigma$, the only requirement being that the surface $\partial\Sigma$ is such that the flux of the conserved current over it vanishes. Finally, it should be noted that this form of a conserved quantity is not restricted to Killing vectors, in the sense that any physically significant quantity that gives rise to a two-form will result in a conserved quantity. An example of this is the electromagnetic tensor, $F_{\mu\nu}$, whose associated conserved quantity is the electric charge \cite{Wald1984}.

\section{Causal structure}\label{causal}

The characterization of the causal structure of spacetime is made by \textit{causal} curves, which are curves such that its tangent vector is everywhere either timelike or null. Since observers and light rays can only travel on causal curves, it is possible to establish a notion of causality by studying which events can be connected by them. In other words, two events are said to be \textit{causally connected} if an observer or a light ray emanating from one of them can reach the other. Consequently, the causal structure of spacetime can be regarded as the characterization of the sets that can be interpreted as the future and past of events. In this section, these sets are defined, their boundaries are analyzed, and general results and possible pathologies are discussed.

In special relativity, each event, $a$, in Minkowski spacetime has a \textit{light cone} associated with it, which is delimited by null geodesics and contains the events that are connected by timelike geodesics to it. One labels half of these events as the \textit{future} of $a$, as they can be influenced by a light ray or an observer emanating from $a$. Similarly, one labels the other half as the \textit{past} of $a$, as those events can influence $a$. Due to the trivial topology of Minkowski spacetime, the causal structure in special relativity can be completely described by light cones. Since every event, $a$, in an arbitrary spacetime, $(M,g_{\mu\nu})$, has a \textit{convex normal neighborhood}, i.e., a neighborhood, $S$, of $a$ such that for every $a',a''\in S$ there exists a unique geodesic connecting $a'$ and $a''$ and contained entirely in $S$ (A rigorous proof of this can be found in \cite{Hawking1973}, while a convenient statement can be found in \cite{Wald1984}), one can conclude that the causal structure of an arbitrary spacetime is \textit{locally} identical to that of Minkowski spacetime. However, significant changes can arise globally due to the nontrivial topology of an arbitrary spacetime. Evidently, analyzing the details of the causal structure of an arbitrary spacetime is futile, as it is dependent on its topology. Nevertheless, with a few restrictions that are justified by physical assumptions, it is possible derive very important results valid in general. Our goal is to then analyze how restrictions on arbitrary spacetimes can give information about the sets that one would justifiably label as the past and future of events in $(M,g_{\mu\nu})$.

As one is interested in results concerning physically reliable spacetimes, it is necessary to first make a restriction regarding a global property. One would like to be able to make a continuous choice of past and future, that is, for each event, $a\in M$, one wishes to be able to consistently and continuously identify which vectors are directed to the future and which are directed to the past. On physical grounds, such a designation has to be related to a well-defined “arrow of time”, for instance, the one given by ``natural'' thermodynamic processes. If a continuous choice can be made, $(M,g_{\mu\nu})$ is said to be \textit{time orientable} (this is analogous, but not equivalent, to the notion of orientability when one defines integration on manifolds). Conversely, if a continuous timelike vector can be chosen in $(M,g_{\mu\nu})$, then it is time orientable. Such a timelike vector can be interpreted as the geometrical quantity associated with the “arrow of time”. This timelike vector, $t^{\mu}$, is said to provide a \textit{time orientation}, and can be used to divide nonspacelike vectors at each point into two sets. More precisely, by identifying $t^{\mu}$ as being directed to the future, a timelike or null vector, $s^{\mu}$, is said to be \textit{future directed} if $s^{\mu}t_{\mu}< 0$, and \textit{past directed} if $s^{\mu}t_{\mu}> 0$. Note that in order to identify that such vectors are pointing in the same direction, it is necessary to take into account the minus sign in the ``squared distance'' in the timelike direction, hence the respective inequalities. The following result states an important property of the interior of these sets. 

\begin{proposition}\label{lightcone}
    Let $(M,g_{\mu\nu})$ be a time orientable spacetime. The set of future (past) directed timelike vectors in the tangent vector space of any $a\in M$ is path-connected.
\end{proposition}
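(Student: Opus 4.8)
The plan is to observe that the statement lives entirely inside the single tangent space $T_aM$, where $g_{\mu\nu}$ restricts to a constant Lorentzian bilinear form $g_a$, so the claim reduces to a linear-algebra fact about the solid future cone in a Lorentzian vector space. Time orientability supplies a timelike vector $t^\mu$ whose value at $a$ defines, as in the discussion above, the future directed timelike vectors to be those $v^\mu\in T_aM$ with $g_a(v,v)<0$ and $g_a(v,t)<0$.

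First I would fix a $g_a$-orthonormal basis $(e_0,e_1,\dots,e_{n-1})$ of $T_aM$, $n:=\dim M$, with $e_0$ timelike and $e_1,\dots,e_{n-1}$ spacelike; after replacing $e_0$ by $-e_0$ if needed I may assume $e_0$ is future directed. Writing $v=v^0e_0+\vec v$ with $\vec v$ in the spacelike span $V$ of $e_1,\dots,e_{n-1}$, one has $g_a(v,v)=-(v^0)^2+\|\vec v\|^2$ and $g_a(v,e_0)=-v^0$, where $\|\cdot\|$ is the Euclidean norm on $V$ induced by $g_a|_V$. The crux is then to identify the future directed timelike cone with $C^+:=\{\,v^0e_0+\vec v:\ v^0>\|\vec v\|\,\}$. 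For the inclusion $C^+\subseteq{}$future directed, note $v^0>\|\vec v\|\ge 0$ forces $g_a(v,v)<0$ and $g_a(v,e_0)=-v^0<0$; since $e_0$ and $t$ are both future directed timelike, transitivity of the relation ``$u\sim w\iff g_a(u,w)<0$'' on timelike vectors --- a standard consequence of the reverse Cauchy--Schwarz inequality $g_a(u,w)^2\ge g_a(u,u)\,g_a(w,w)>0$, which in particular prevents $g_a(u,w)$ from ever vanishing --- yields $g_a(v,t)<0$. Conversely, a future directed timelike $v$ has $(v^0)^2>\|\vec v\|^2\ge 0$, so $v^0\ne 0$, and it lies in the same time cone as $t$, hence as $e_0$, so $g_a(v,e_0)=-v^0<0$, i.e.\ $v^0>0$; combined with $(v^0)^2>\|\vec v\|^2$ this gives $v^0>\|\vec v\|$.

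It then remains to check that $C^+$ is convex, hence path-connected by straight segments. For $v,w\in C^+$ and $s\in[0,1]$, the $e_0$-component of $(1-s)v+sw$ is $(1-s)v^0+sw^0>(1-s)\|\vec v\|+s\|\vec w\|\ge\|(1-s)\vec v+s\vec w\|$, using the defining inequalities of $C^+$ and the triangle inequality in $(V,\|\cdot\|)$, so $(1-s)v+sw\in C^+$; the map $s\mapsto(1-s)v+sw$ is a continuous path in $C^+$ joining $v$ to $w$. The past directed case is identical, or follows by applying $v\mapsto-v$.

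The only step that is more than bookkeeping is the middle one --- verifying that ``future directed'', defined through the global time orientation $t^\mu$, coincides with $v^0>0$ in the adapted frame. This is precisely where the reverse Cauchy--Schwarz inequality, and the resulting fact that the timelike vectors split into exactly two cones, are needed; everything else is the triangle inequality and linearity of the $e_0$-coordinate.
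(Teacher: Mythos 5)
Your proof is correct and follows essentially the same route as the paper's: both arguments establish that the future (past) cone at $a$ is convex and join any two of its elements by the straight segment $\lambda s^{\mu}+(1-\lambda)w^{\mu}$. Your detour through the adapted orthonormal frame and the characterization $v^{0}>\|\vec v\|$ additionally supplies the justification (via the two-cone/reverse Cauchy--Schwarz structure and the triangle inequality) for the fact that the mixed term $s^{\mu}w_{\mu}$ is negative for two future directed timelike vectors, a point the paper's sketch asserts without proof.
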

\begin{sproof}
    Let $\mathcal{F}_a$ denote the set of future (past) timelike directed vectors in the tangent vector space of $a\in M$, i.e., $s^{\mu}\in\mathcal{F}_a$ if $s^{\mu}t_{\mu}<0\;(s^{\mu}t_{\mu}>0)$ and $s^{\mu}s_{\mu}<0$, and consider a map $\psi:[0,1]\to\mathcal{F}_a$ defined as
    \begin{equation}
        \psi(\lambda)=\lambda s^{\mu}+(1-\lambda)w^{\mu},
    \end{equation}
    with $s^{\mu},\;w^{\mu}\in\mathcal{F}_a$. To show that this map is continuous, first note that for all $\lambda\in[0,1]$, the vector $\ell^{\mu}(\lambda)=\lambda s^{\mu}+(1-\lambda)w^{\mu}$ is in $\mathcal{F}_a$. In other words,
    \begin{equation}
        \begin{aligned}
           \ell^{\mu}(\lambda)t_{\mu}&=(\lambda s^{\mu}+(1-\lambda)w^{\mu})t_{\mu}\\
           &=\lambda s^{\mu}t_{\mu}+(1-\lambda)w^{\mu}t_{\mu}
        \end{aligned}
    \end{equation}
is negative (positive) for all $\lambda\in[0,1]$ and $s^{\mu},\;w^{\mu}\in\mathcal{F}_a$. Similarly,
\begin{equation}
        \begin{aligned}
           \ell^{\mu}\ell_{\mu}(\lambda)&=(\lambda s^{\mu}+(1-\lambda)w^{\mu})(\lambda s_{\mu}+(1-\lambda)w_{\mu})\\
           &=\lambda^2s^{\mu}s_{\mu}+(1-\lambda)^2w^{\mu}w_{\mu}+2\lambda(1-\lambda)s^{\mu}w_{\mu},
        \end{aligned}
    \end{equation}
    which is also negative for all $\lambda\in[0,1]$ and $s^{\mu},\;w^{\mu}\in\mathcal{F}_a$. Using the topology induced by the metric, $g_{\mu\nu}$, on $\mathcal{F}_a$, one can deduce that the inverse image of any open set will be mapped into either the empty set of $[0,1]$, an open interval of $[0,1]$ or $[0,1]$. Thus, the set of future (past) directed timelike vectors in the tangent vector space of any $a\in M$ is path-connected.
\end{sproof}

The notion of orientability can be extended to curves in a similar way. A differentiable curve, $\lambda$, is said to be a \textit{future directed timelike curve} if its tangent vector is everywhere future directed timelike. Similarly, if the tangent vector is everywhere either future directed timelike or null, $\lambda$ is said to be a \textit{future directed causal curve}. It is important to note that, by curve, it is meant one that contains multiple points, so that the trivial case $\lambda(t)=a$ is excluded. In addition, it is also useful to have a precise definition of when a curve “ends”. Let $\lambda$ be a future directed causal curve. An event, $a$, is said to be a \textit{future endpoint} of $\lambda$ if for every neighborhood, $A$, of $a$ there exists a $t_0$ such that $\lambda(t)\in A$ for all $t>t_0$. By the Hausdorff property of $M$, which states that for any two events it is always possible to find a neighborhood of one that does not contain the other, a curve can have, at most, one future endpoint. Likewise, a curve is said to be \textit{future inextendible} if it has no future endpoint. For example, in fig. \ref{fig:231} one can identify the point $a\in\lambda(t)$ as being its future endpoint. In contrast, the curve in fig. \ref{fig:sd3} is future inextendible, since it runs into an artificially removed point.\begin{figure}[h]
  \begin{subfigure}[b]{0.5\textwidth}
  \centering
    \includegraphics[scale=1.5]{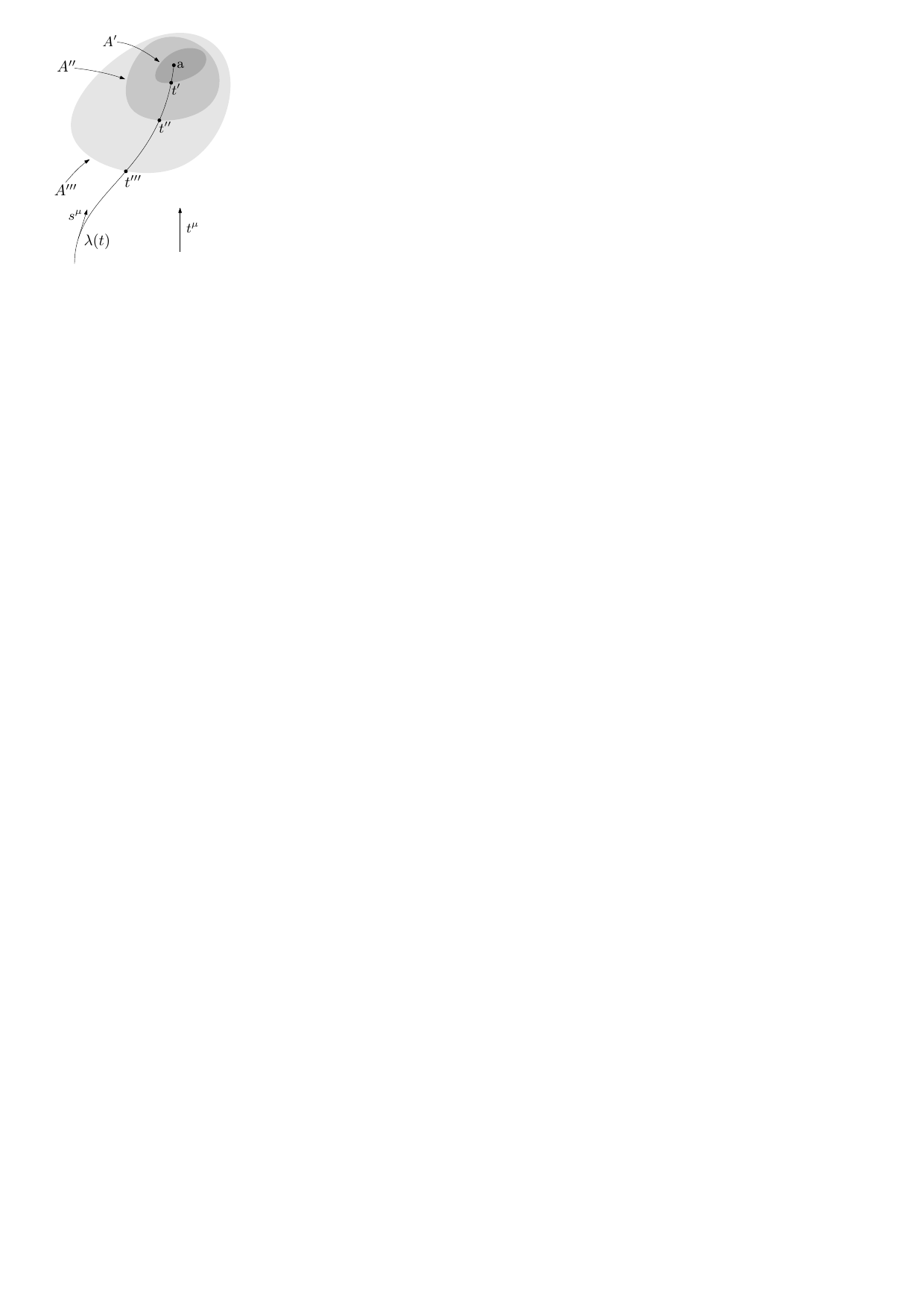}
    \caption{Endpoint of the future directed causal curve.}
    \label{fig:231}
  \end{subfigure}
  \hfill
  \begin{subfigure}[b]{0.5\textwidth}
  \centering
    \includegraphics[scale=1.5]{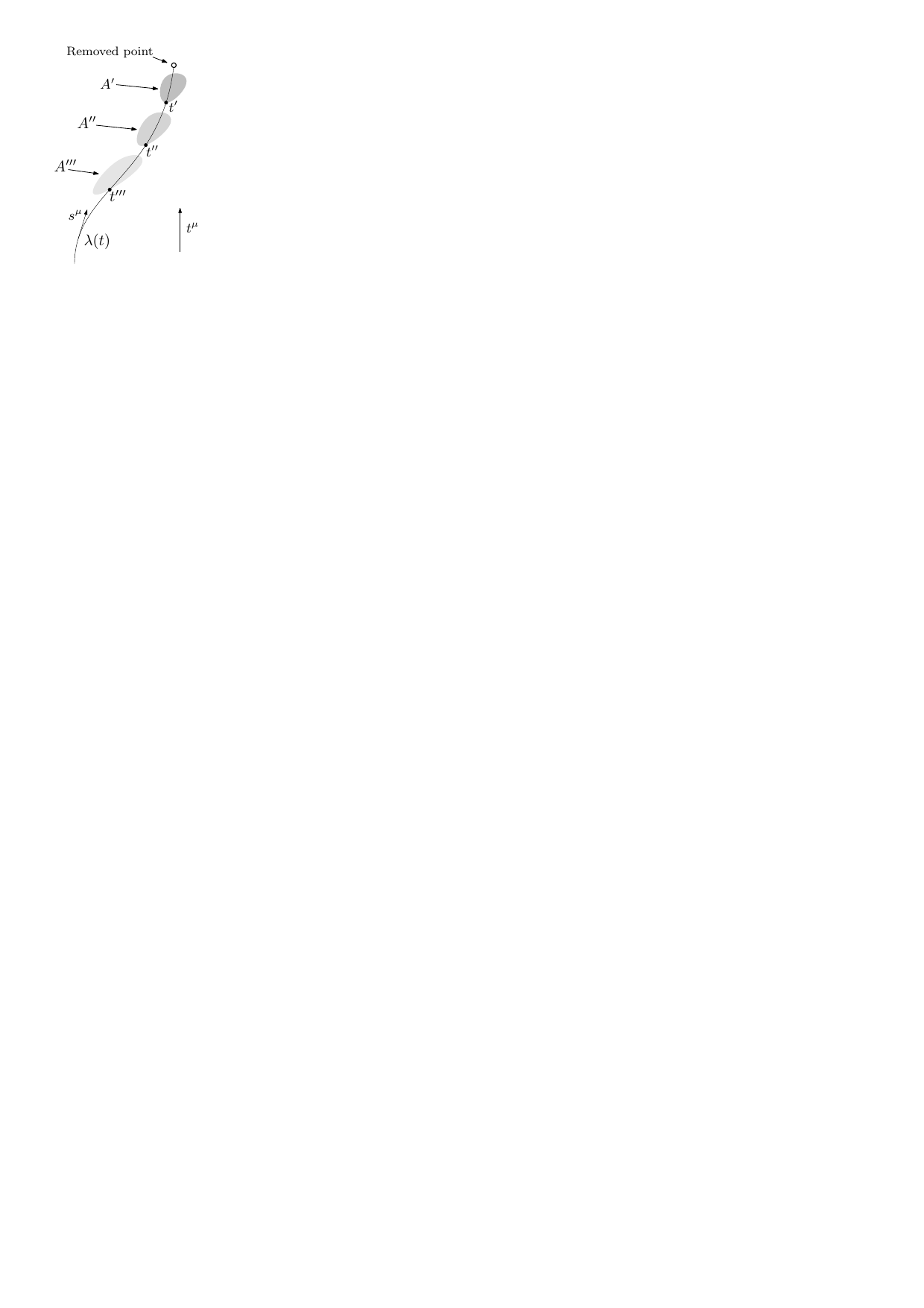}
    \caption{Future directed inextendible causal curve.}
    \label{fig:sd3}
  \end{subfigure}
  \caption{Examples of future directed causal curves.}
  \caption*{Source: By the author.}
  \label{fig123456}
\end{figure} To understand why such a curve has no future endpoint, note that one can always ``zoom in'' closer to the removed point and find neighborhoods such as those exemplified. Consequently, no event in spacetime would respect the properties of a future endpoint. Lastly, analogous definitions and properties apply to past directed curves, past endpoints and past inextendibility, by interchanging future with past in the definitions. 

With these ideas, one can precisely define the sets that can be interpreted as future and past of an event in a time orientable spacetime. The \textit{chronological future} of $a\in M$, $I^+(\{a\})$, is defined as the set of events that can be reached by a future directed timelike curve starting from $a$. In other words, all the events that can be reached by an observer emanating from $a$. Note that, unless spacetime possesses closed timelike curves, $a\not\in I^+(\{a\})$. Similarly, the \textit{causal future} of $a$, $J^+(\{a\})$, is defined as the union of $a$ and the set of events that can be reached by a future directed causal curve starting from $a$. That is, all the events that can be reached by an observer or a light ray emanating from $a$ and $a$ itself. The chronological and causal past of $a$, $I^-(\{a\})$ and $J^-(\{a\})$, are defined likewise. Similarly, for any subset $S\subset M$, 
\begin{equation}\label{chrono}
    I^+(S)=\bigcup_{a\;\in\;S}I^+(\{a\}),
\end{equation}
\begin{equation}
    J^+(S)=\bigcup_{a\;\in\;S}J^+(\{a\}),
\end{equation}
and the chronological and causal past of $S$, $I^-(S)$ and $J^-(S)$, are defined analogously. Evidently, from the definitions of the chronological and causal future,
\begin{equation}\label{future1}
    I^+(S)\subset J^+(S).
\end{equation}
Indeed, this result is also true for the chronological and causal pasts. In the following, the results discussed are valid for the pasts and futures, but for simplicity of notation and space, they will be expressed only for the future or the past, and the notation $I^+(\{a\})=I^+(a)$ and $J^+(\{a\})=J^+(a)$ will be adopted.

Let $a,a'\in M$, $a'\in I^+(a)$ and $\gamma$ denote the future directed timelike curve connecting $a$ to $a'$. It is possible to use the timelike nature of $\gamma$ to deform it slightly and connect $a$ to any point in a given neighborhood of $a'$ while maintaining its timelike nature. This process would correspond to ``tilting'' the tangent vector to $\gamma$ to a null direction at each point, yielding another future directed timelike curve. Evidently, this can be done for any event in $I^+(a)$, since any timelike vector can have its norm arbitrarily reduced while still being timelike. Hence, for any event $a'\in I^+(a)$, it is always possible to find a neighborhood, $A$, of $a'$ such that $A\subset I^+(a)$. Fig. \ref{fig:7} illustrates this property\footnote{In such an illustration, a two-dimensional manifold with Minkowski metric, $ds^2=-c^2(dx^0)^2+(dx^1)^2$, was considered. Generalizations of these diagrams for higher dimensions is straightforward, but for simplicity they will only be presented for the two-dimensional case with a Minkowski metric. In the representations that follow, the coordinate system presented in fig. \ref{fig:7} will be adopted.}, in which the event $a''\in A$ is connected to $a$ by the future directed timelike curve $\gamma'$. In particular, this means that the chronological future of any set event is always open, and from eq. \ref{chrono}, one then has that
\begin{equation}\label{open}
    I^+(S)=\langle I^+(S) \rangle,
\end{equation}
where $\langle I^+(S) \rangle$ denotes the interior of the set $S$ (see appendix \ref{A1}). 

\begin{figure}[h]
\centering
\includegraphics[scale=1.4]{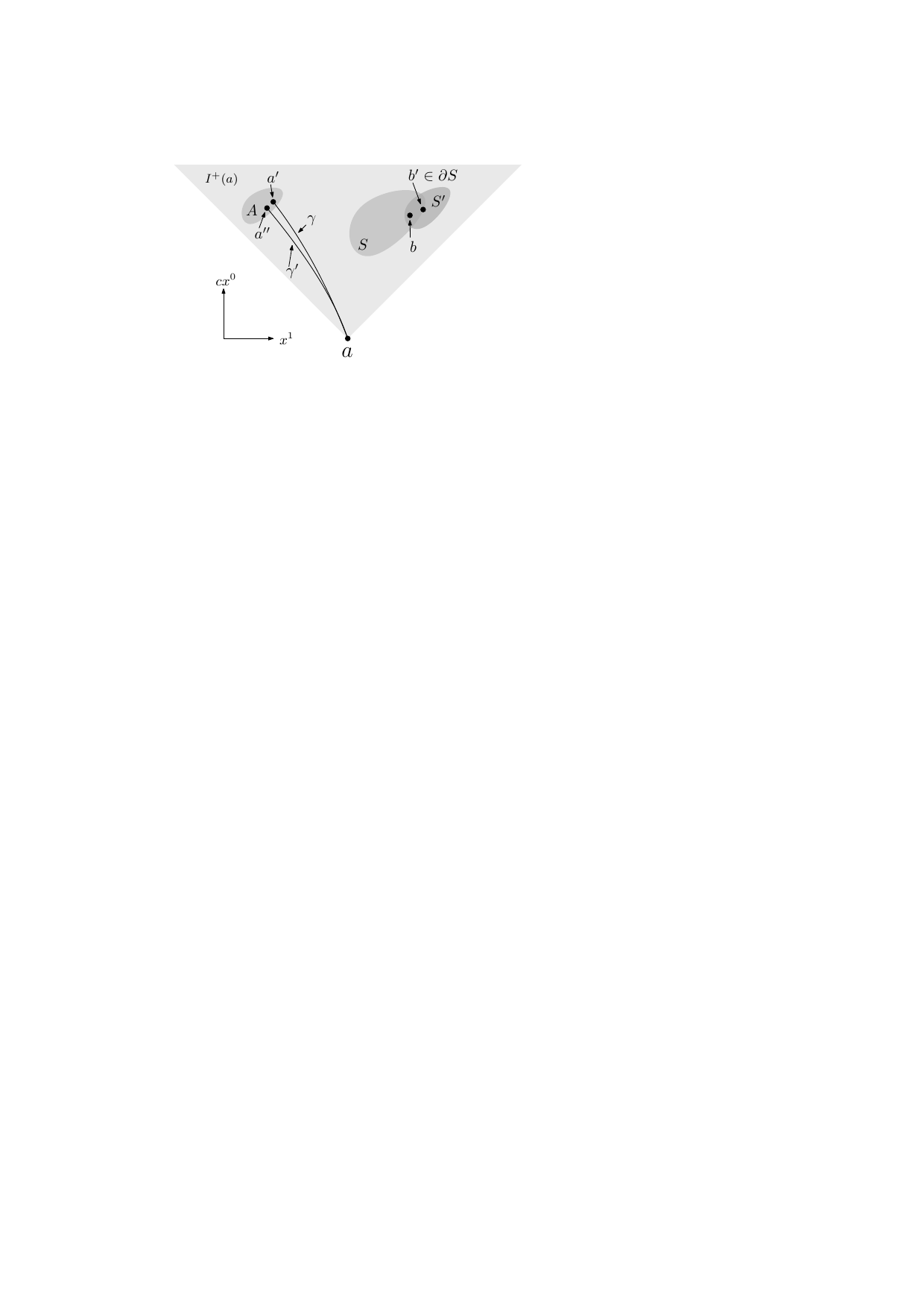} 
\caption{Spacetime diagram of the future of an event.}
\caption*{Source: By the author.}
\label{fig:7}
\end{figure}

Additionally, one can also consider a set $S\subset I^+(a)$, an event, $b\in S$, and an neighborhood of it, $S'$, that intersects $\partial S$ (see fig. \ref{fig:7}). Now, applying the same logic of deformation of a timelike curve to $b$ and an event $b'\in\partial S$, one can conclude that $\overline{S}\subset I^+(a)$. Hence, this means that every event that can be connected to $S$ by a timelike curve can also be connected to $\overline{S}$ by a timelike curve, which allows one to state that
\begin{equation}\label{435}
    I^+(S)=I^+(\overline{S}),
\end{equation}
where $\overline{S}$ denotes the closure of the set $S$.

\begin{proposition}\label{prop2345}
    Let $(M,g_{\mu\nu})$ be a time orientable spacetime, $a,a'\in M$ and $a'\in J^+(a)$. Then $I^+(a')\subset I^+(a)$.
\end{proposition}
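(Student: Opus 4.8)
The plan is to show that every $b\in I^+(a')$ also lies in $I^+(a)$. The case $a=a'$ is immediate, so assume $a\neq a'$. By definition of $J^+$ there is a future directed causal curve $\gamma$ from $a$ to $a'$, and by hypothesis there is a future directed timelike curve $\eta$ from $a'$ to $b$. Concatenating the two produces a future directed causal curve from $a$ to $b$ which is already timelike along its final portion $\eta$; the whole problem is to promote it to a genuinely timelike curve from $a$ to $b$, i.e. to ``absorb'' the possibly-null curve $\gamma$ into a timelike curve using the room provided by the strictly timelike direction of $\eta$.

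Two local facts do the work. First, since the causal structure of $(M,g_{\mu\nu})$ is locally that of Minkowski spacetime, inside a convex normal neighborhood $U$ a future causal vector added to a future timelike vector is future timelike; equivalently, if $y\in J^+_U(x)$ and $z\in I^+_U(y)$ then $z\in I^+_U(x)$, with the witnessing timelike curve staying inside $U$. Second, two future directed timelike curves meeting at a point can be joined into a single future directed timelike curve: the corner is rounded off keeping the tangent future timelike, which is possible because the set of future directed timelike vectors at a point is convex --- exactly the content established in the proof of Proposition \ref{lightcone} --- and this is an instance of the deformation arguments used around fig. \ref{fig:7}.

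To globalize, parametrize $\gamma\colon[0,1]\to M$ with $\gamma(0)=a$, $\gamma(1)=a'$, and set $T=\{s\in[0,1]: b\in I^+(\gamma(s))\}=\gamma^{-1}\big(I^-(b)\big)$. Since $I^-(b)$ is open (the past analogue of the openness of $I^+$) and $\gamma$ is continuous, $T$ is open in $[0,1]$, and $1\in T$ because $\eta$ realizes $b\in I^+(\gamma(1))$. Let $s_*=\inf T$. If $s_*\notin T$, choose $s_k\in T$ with $s_k\downarrow s_*$; for $k$ large the segment $\gamma|_{[s_*,s_k]}$ lies in a convex normal neighborhood $U$ of $\gamma(s_*)$, so $\gamma(s_k)\in J^+_U(\gamma(s_*))$. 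Picking a point $z_k$ on the timelike curve realizing $b\in I^+(\gamma(s_k))$, close enough to $\gamma(s_k)$ that $z_k\in I^+_U(\gamma(s_k))$, the first local fact yields a future timelike curve from $\gamma(s_*)$ to $z_k$ inside $U$; joining it with the timelike tail from $z_k$ to $b$ via the second local fact gives $b\in I^+(\gamma(s_*))$, so $s_*\in T$ --- a contradiction. Hence $s_*\in T$, and since $T$ is open this forces $s_*=0$, i.e. $b\in I^+(\gamma(0))=I^+(a)$. As $b\in I^+(a')$ was arbitrary, $I^+(a')\subset I^+(a)$.

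The step I expect to be the main obstacle is precisely this globalization: a single application of the convex-normal-neighborhood lemma settles the matter only when $\gamma$ happens to stay inside one such neighborhood, whereas in general one must transport the timelike-reachability of $b$ backwards along all of $\gamma$, and the connectedness argument must be arranged carefully because $T$ is not obviously closed --- the openness of $I^-(b)$ alone does not give it --- which is why the proof proceeds through $\inf T$ rather than declaring $T$ clopen. A shorter route is available if one is content to invoke the standard fact $J^+(a)\subset\overline{I^+(a)}$: one then picks $c\in I^+(a)$ inside a neighborhood of $a'$ contained in the open set $I^-(b)$, so that $c\in I^+(a)$ and $b\in I^+(c)$, and only the easy timelike--timelike corner is needed to conclude $b\in I^+(a)$.
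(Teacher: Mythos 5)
Your proof is correct and takes the same essential route as the paper: concatenate the causal curve from $a$ to $a'$ with the timelike curve from $a'$ to $b$ and promote the result to a timelike curve — the paper simply declares that the joined curve can be ``varied slightly'' (supported by a figure), while you make that step rigorous with the convex-normal-neighborhood push-up and the $\inf T$ globalization. One small caveat: the shortcut you mention via $J^+(a)\subset\overline{I^+(a)}$ would be circular in the paper's own development, since that inclusion is established only after, and using, this proposition.
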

\begin{proof}
    Let $\gamma$ denote the future directed causal curve connecting $a$ to $a'$. Consider a point $a''\in I^+(a')$ and let $\gamma'$ denote the future directed timelike curve connecting $a'$ to $a''$. A curve connecting $a$ to $a''$ can be made by joining $\gamma$ and $\gamma'$ at $a'$, and it is possible to vary it slightly to produce a future directed timelike curve, $\gamma''$, connecting $a$ to $a''$. Fig. \ref{fig:prop2345} illustrates an example of this process, and since it can be done for any $a''\in I^+(a')$, then $I^+(a')\subset I^+(a)$. 
\end{proof}
\begin{figure}[h]
\centering
\includegraphics[scale=1.4]{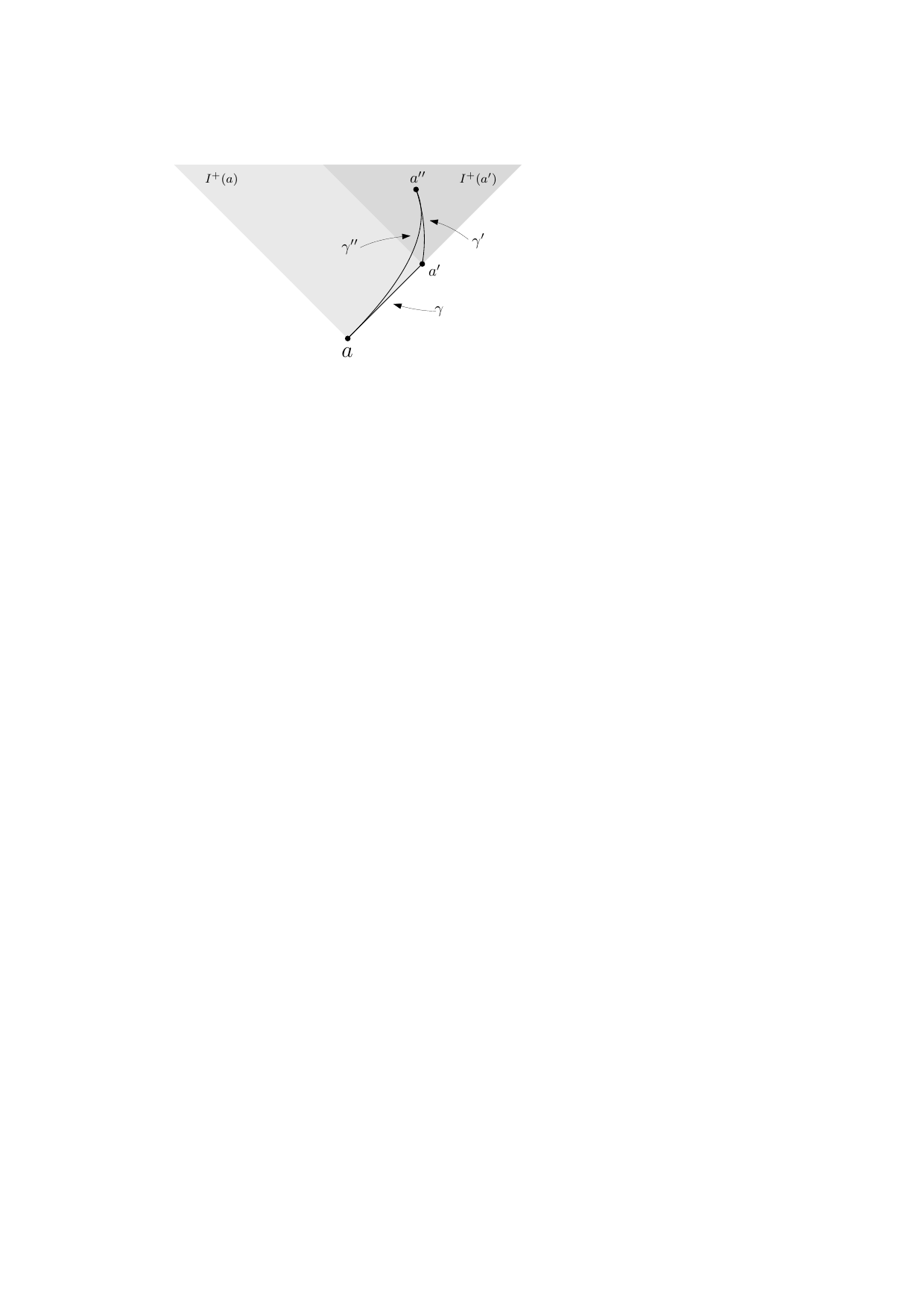} 
\caption{Spacetime diagram for the proof of proposition \ref{prop2345}.}
\caption*{Source: By the author.}
\label{fig:prop2345}
\end{figure}
Note that the requirement of time orientability in this result and the ones that follow is not a geometrical one, in the sense that one could state the following results for more general spacetimes. However, the corresponding concept of the chronological and causal sets would not be of physical interest, since they would present unphysical behavior, e.g., discontinuous change in the notion of going to the future or past.  

It follows from prop. \ref{prop2345} that for any set $S\subset M$, $I^+(S)$ can be expressed as the union of all $I^+(a)$ such that $a\in J^+(S)$. It also allows one to state the following.
\begin{proposition}\label{prop23456}
    Let $(M,g_{\mu\nu})$ be a time orientable spacetime, $a\in M$ and $\gamma$ be a future directed causal curve emanating from $a$ such that for all $a'\in\gamma$, $a'\not\in I^+(a)$. Then $\gamma$ is a null geodesic.
\end{proposition}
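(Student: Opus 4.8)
The plan is to argue by contradiction: assuming $\gamma$ is not a null geodesic, I will exhibit a point of $\gamma$ lying in $I^+(a)$, contradicting the hypothesis. The first step is to rule out timelike tangents. Suppose the tangent vector of $\gamma$ is timelike at some $b=\gamma(t_0)$; then restricting $\gamma$ to parameters slightly to the future of $t_0$ gives a future directed timelike curve from $b$ to a later point $b'$ of $\gamma$, so $b'\in I^+(b)$. Since the arc of $\gamma$ from $a$ to $b$ is causal we have $b\in J^+(a)$, and prop. \ref{prop2345} yields $I^+(b)\subset I^+(a)$; hence $b'\in I^+(a)$, a contradiction. Therefore the tangent of $\gamma$ is null everywhere, i.e., $\gamma$ is a null curve.

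It remains to show that a null curve $\gamma$ emanating from $a$ with no point in $I^+(a)$ must be a geodesic. Suppose not; then there is a parameter value $t_*$ such that $\gamma$ fails to be a null geodesic on every neighborhood of $t_*$. Choose a convex normal neighborhood $U$ of $\gamma(t_*)$ and parameters $t_1<t_*<t_2$ with $\gamma([t_1,t_2])\subset U$, and set $p=\gamma(t_1)$, $q=\gamma(t_2)$, so that $\gamma|_{[t_1,t_2]}$ is a future directed causal curve in $U$ from $p$ to $q$ which is not a null geodesic. I would then invoke the following local fact about convex normal neighborhoods: if a future directed causal curve in $U$ joins $p$ to $q$ and is not a (reparametrized) null geodesic, then $q\in I^+(p)$. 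Intuitively, in $U$ the causal future of $p$ is the image under the exponential map of a solid future cone, a boundary point $q\in J^+(p)\setminus I^+(p)$ is the image of a future null ray, and a causal curve from $p$ that actually reaches such a point is forced to stay on the light cone, where being causal makes its tangent tangent to the cone and hence makes the curve the generating null geodesic; any geodesic curvature or ``corner'' in a null curve instead produces a strictly timelike displacement. This rounding-the-corner statement, which relies essentially on the convexity of $U$ (it can be established via the exponential map and the Gauss lemma, or quoted from \cite{Wald1984,Hawking1973}), is the step I expect to be the main obstacle, being the only genuinely geometric ingredient.

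Granting the lemma, $q=\gamma(t_2)\in I^+(p)$. Since $p=\gamma(t_1)\in J^+(a)$, prop. \ref{prop2345} gives $I^+(p)\subset I^+(a)$, and therefore $q\in I^+(a)$. But $q$ is a point of $\gamma$, contradicting the assumption that no point of $\gamma$ lies in $I^+(a)$. Hence $\gamma$ is a null geodesic.
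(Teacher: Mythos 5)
Your proposal is correct and follows essentially the same route as the paper's proof: both reduce the claim to the local fact that, within a convex normal neighborhood, a causal curve which is not a null geodesic connects chronologically related points, and then propagate the resulting chronological relation back to $a$ via prop.~\ref{prop2345} to obtain the contradiction. Your version merely spells out the paper's terse argument in more detail (splitting off the timelike-tangent case and localizing the failure of the geodesic condition), which is a refinement of, not a departure from, the same idea.
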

\begin{proof}
    Since the causal structure of $(M,g_{\mu\nu})$ is locally the same as of Minkowski spacetime, the causal future of any event is locally delimited by null geodesics emanating from it. If at any point $a'\in\gamma$ the curve $\gamma$ fails to be a null geodesic, then $\gamma$ must enter $I^+(a')$ since the future of $a'$ must be generated, locally, by null geodesics. However, from prop. \ref{prop2345}, this would contradict the condition that points in $\gamma$ are not in $I^+(a)$. Thus, if $\gamma$ is a causal curve emanating from $a$ and does not enter $I^+(a)$, it must be a null geodesic.
\end{proof}
Hence, the property that the causal future of a point is delimited by null geodesics holds globally in any time orientable spacetime. With these properties, it is also possible to show \cite{Penrose1972} that for any set $S\subset M$,
\begin{equation}
    J^+(S)\subset\overline{I^+(S)},
\end{equation}
and combining with eq. \ref{future1} yields
\begin{equation}\label{future2}
    \overline{J^+(S)}=\overline{I^+(S)}.
\end{equation}
Furthermore, from eqs. \ref{future1} and \ref{open}, one also has
\begin{equation}
    I^+(S)\subseteq\langle J^+(S)\rangle.
\end{equation}
However, for any $a\in\langle J^+(S)\rangle$, one can find a neighborhood of it, $S'$, such that $S'\subset J^+(S)$. Using prop. \ref{prop2345}, one can then show that there exists a future directed timelike curve from $S$ to $a$, which means that
\begin{equation}\label{future3}
    I^+(S)=\langle J^+(S)\rangle.
\end{equation}
Finally, by using eqs. \ref{future2}, \ref{future3}, and the fact that $I^+(S)$ is open, one obtains
\begin{equation}\label{future4}
    \partial J^+(S)= \partial I^+(S).
\end{equation}

The next theorem states an important result regarding the inextendibility of null geodesics in $\partial J^+(S)\backslash S$, a proof of which can be found in \cite{Penrose1972}. 

\begin{theorem}\label{the3}
Let $(M,g_{\mu\nu})$ be a time orientable spacetime and $S\subset M$ be nonempty closed set. Then every point $a\in\partial J^+(S)\backslash S$ lies on a null geodesic which lies entirely in $\partial J^+(S)$ and is either past inextendible or has past endpoint on $S$.
\end{theorem}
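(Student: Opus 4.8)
The plan is to exhibit $a$ as a point on a null geodesic that arises as a limit of causal curves reaching into $J^+(S)$. Two elementary observations handle the bookkeeping. Since $S$ is closed and $a\notin S$, there is an open neighbourhood $U$ of $a$ with $U\cap S=\emptyset$; and since $I^+(S)$ is open (eq. \ref{open}) with $\partial I^+(S)=\partial J^+(S)$ (eq. \ref{future4}), the hypothesis $a\in\partial J^+(S)$ forces $a\notin I^+(S)$, an open set being disjoint from its boundary. Because $a\in\partial J^+(S)\subset\overline{J^+(S)}$, I can pick a sequence $a_n\to a$ with $a_n\in J^+(S)\cap U$; discarding finitely many terms we may assume $a_n\neq a$ and, as $a_n\in U$, also $a_n\notin S$, so each $a_n$ is the future endpoint of a nonconstant future-directed causal curve $\lambda_n$ issuing from some $p_n\in S$.

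The technical heart is a \emph{limit curve} argument. After passing to a subsequence, the $\lambda_n$ converge (uniformly on compact sets) to a future-directed causal curve $\lambda$ with future endpoint $a$, and $\lambda$ is either past-inextendible or has a past endpoint $p$ which is a cluster point of $\{p_n\}$ and therefore lies in $\overline{S}=S$. Establishing the existence of this limit curve is the step I expect to be the main obstacle: it requires a compactness (Arzel\`a--Ascoli-type) argument, carried out by covering the curves with convex normal neighbourhoods, extracting a locally convergent subsequence and patching; this is the content cited to \cite{Penrose1972}. Note that the dichotomy ``past-inextendible or past endpoint on $S$'' in the statement of Theorem \ref{the3} is precisely the alternative produced by this lemma.

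Next I would locate $\lambda$ in $\partial J^+(S)$. Every point of $\lambda$ is a limit of points lying on the $\lambda_n\subset J^+(p_n)\subset J^+(S)$, so $\lambda\subset\overline{J^+(S)}$. Moreover $\lambda$ cannot meet $I^+(S)$: if $b\in\lambda\cap I^+(S)$ then $b\neq a$ (since $a\notin I^+(S)$), so the portion of $\lambda$ from $b$ to $a$ gives $a\in J^+(b)$; writing $b\in I^+(q)$ for some $q\in S$ and applying the past version of Proposition \ref{prop2345} to $b\in J^-(a)$ yields $I^-(b)\subset I^-(a)$, hence $q\in I^-(a)$, i.e. $a\in I^+(q)\subset I^+(S)$, a contradiction. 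Therefore $\lambda\subset\overline{J^+(S)}\setminus I^+(S)=\partial J^+(S)$, using $\overline{J^+(S)}=\overline{I^+(S)}$ (eq. \ref{future2}) together with eq. \ref{future4}.

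Finally I would upgrade $\lambda$ to a null geodesic via Proposition \ref{prop23456}. Fix $b_0\in\lambda$; I claim that no point $a'$ of $\lambda$ to the future of $b_0$ (nor $b_0$ itself) lies in $I^+(b_0)$. Indeed, if $a'\in I^+(b_0)$ then $b_0\in I^-(a')$, an open set, so since $b_0\in\overline{J^+(S)}$ there is $b_0'\in J^+(S)\cap I^-(a')$; choosing $q\in S$ with $b_0'\in J^+(q)$, Proposition \ref{prop2345} gives $I^+(b_0')\subset I^+(q)$, whence $a'\in I^+(q)\subset I^+(S)$, contradicting $a'\in\lambda\subset\partial J^+(S)$. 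Thus the sub-curve of $\lambda$ issuing from $b_0$ never enters $I^+(b_0)$, so by Proposition \ref{prop23456} it is a null geodesic; since $b_0$ was arbitrary, $\lambda$ is a null geodesic. Combined with the previous paragraph, $\lambda$ is the claimed null geodesic: it passes through $a$, lies entirely in $\partial J^+(S)$, and is either past-inextendible or has a past endpoint on $S$.
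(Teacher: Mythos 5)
The paper never actually proves Theorem \ref{the3} — it defers entirely to \cite{Penrose1972} — so there is no in-text argument to compare against; judged on its own, your proposal is the standard limit-curve proof of this statement and is correct in outline. The genuinely substantive ingredient, as you acknowledge, is the limit-curve (Arzel\`a--Ascoli-type) lemma delivering the dichotomy ``past inextendible or past endpoint at a cluster point of the $p_n$'', and you defer it to the same reference the paper cites; everything you build on top of it — $\lambda\subset\overline{J^+(S)}$, the exclusion $\lambda\cap I^+(S)=\emptyset$ via the past version of Proposition \ref{prop2345}, and the achronality argument that forces $\lambda$ to be a null geodesic through Proposition \ref{prop23456} — is sound and uses eqs. \ref{open}, \ref{future2}, \ref{future4} exactly as intended. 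Two minor points, neither of which changes the architecture: (i) ``discarding finitely many terms we may assume $a_n\neq a$'' is not justified as written, since infinitely many $a_n$ could equal $a$; either pick $a_n\in I^+(S)$ (possible because $\overline{J^+(S)}=\overline{I^+(S)}$, and automatically $a_n\neq a$ since $a\notin I^+(S)$), or observe that if $a\in J^+(S)$ a causal curve from $S$ to $a$ exists outright and your steps three and four apply to it directly. (ii) The limit curve produced by the compactness argument is a priori only a continuous (Lipschitz) causal curve, whereas Proposition \ref{prop23456} is stated for differentiable causal curves; one needs either the $C^0$ version of that statement (an achronal continuous causal curve is, up to reparametrization, a null geodesic) or the regularity supplied by the limit-curve machinery in \cite{Penrose1972}. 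With those caveats flagged, the proposal matches the proof the paper points to.
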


Theorem \ref{the3} states that the past directed null geodesic $\gamma$ emanating from $a'\in\partial J^+(S)\backslash {S}$ is contained entirely in $\partial J^+(S)\backslash {S}$ and either reaches $S$ or possesses no past endpoint. An exemplification of this result is given in fig. \ref{fig:prop23455}. \begin{figure}[h]
\centering
\includegraphics[scale=1.4]{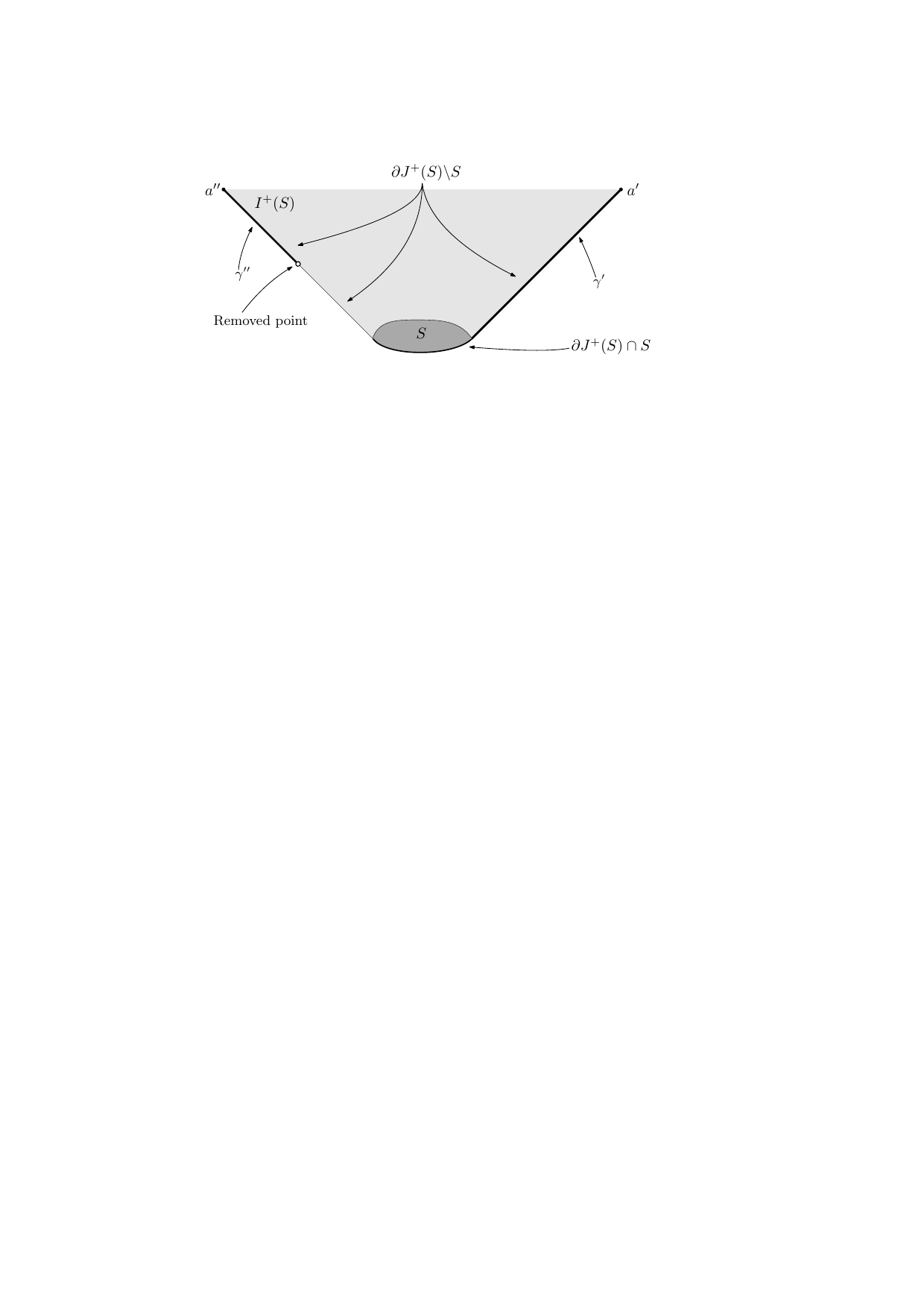} 
\caption{Spacetime diagram of a closed set and its future (see theorem \ref{the3}).}
\caption*{Source: By the author.}
\label{fig:prop23455}
\end{figure}In such an illustration, note that the event $a'$ is causally connected to $S$ by the null geodesic $\gamma'$. However, the event $a''$ is not causally connected to $S$, since the null geodesic $\gamma''$ which is entirely contained in $\partial J^+(S)\backslash {S}$ runs into an removed point. Thus, $\gamma''$ does not reach $S$ and does not have a past endpoint. Furthermore, the conclusion that this null geodesic has no past endpoint translates to the fact that although $a''\in\partial J^+(S)$, $a''\notin J^+(S)$. Finally, note that because of eqs. \ref{435} and \ref{future4}, the requirement of $S$ to be closed implies no loss of generality.

Moving on, two properties of closed sets that will be useful for the developments that follow are now defined. Given a closed set, $S$, its \textit{edge}, $e(S)$, is the set of points $a\in S$ such that every neighborhood of $a$ contains a point $a'\in I^+(a)$, a point $a''\in I^-(a)$ and a future directed timelike curve from $a''$ to $a'$ that does not intersect $S$. If $e(S)=\emptyset$, $S$ is said to be \textit{edgeless}. An example of an edgeless set is a null or spacelike hypersurface that extends indefinitely in all directions. Similarly, a set $S\subset M$ is said to be \textit{achronal} if $I^+(S)\cap S=\emptyset$ (or, equivalently, if $I^-(S)\cap S=\emptyset$). For instance, it is possible to see that the set $S$ in fig. \ref{fig:prop23455} is not achronal. In contrast, fig. \ref{fig:888} illustrates a closed achronal set, $S'$, with edge.\begin{figure}[h]
\centering
\includegraphics[scale=1.5]{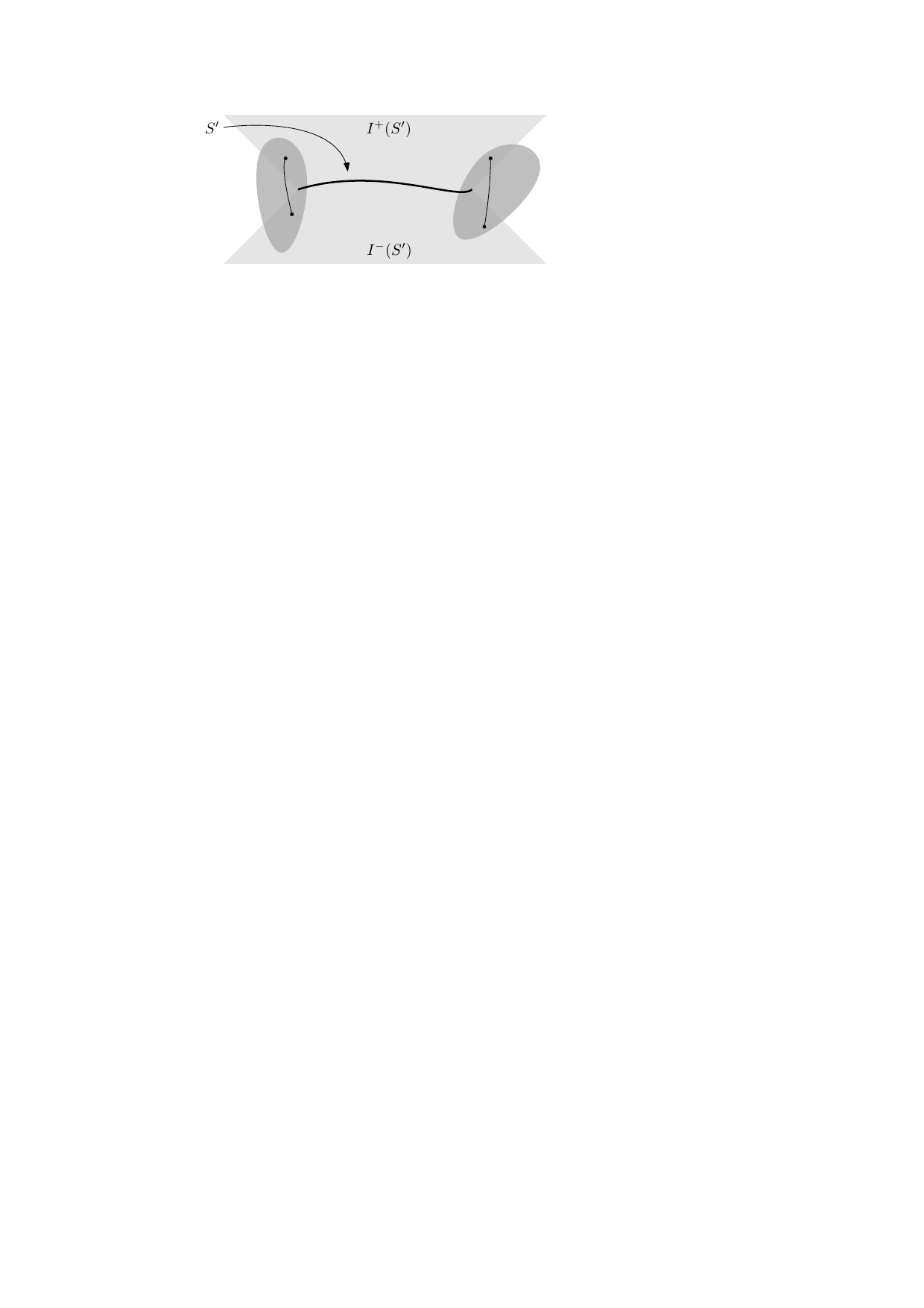} 
\caption{Spacetime diagram illustrating the edge of a closed achronal set, $S'$.}
\caption*{Source: By the author.}
\label{fig:888}
\end{figure} In particular, the achronality property is of importance because, given a three-dimensional achronal set $S'$, it is possible to construct an homeomorphism from neighborhoods of $a\in S'$ to $\mathbb{R}^{3}$. Details on this construction can be found in \cite{Wald1984}, which yields the following result. 

\begin{theorem}\label{2t1}
Let $(M,g_{\mu\nu})$ be a time orientable spacetime and $S\subset M$ be a nonempty edgeless achronal set. Then $S$ is a $C^0$ hypersurface.
\end{theorem}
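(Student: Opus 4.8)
The plan is to prove the statement locally: around each $p\in S$ I would produce a chart of $M$ in which $S$ appears as the graph of a continuous function of three of the coordinates, which is exactly what it means for $S$ to be a $C^{0}$ hypersurface; this is the construction carried out in detail in \cite{Wald1984}. First, fix $p\in S$. Time orientability provides a continuous future-directed timelike vector field $t^{\mu}$ (which we may take smooth) on a neighborhood of $p$. Choosing a convex normal neighborhood of $p$ and following the integral curves of $t^{\mu}$, I would identify a neighborhood $U$ of $p$ with a product $V\times(-1,1)$, with $V\subset\mathbb{R}^{3}$ open and $p\leftrightarrow(0,0)$, so that the ``vertical'' segments $\{x\}\times(-1,1)$ are precisely the integral curves of $t^{\mu}$, and therefore future-directed timelike curves.

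Since $S$ is achronal, any two distinct points on a common vertical segment --- being joined by the timelike segment between them --- cannot both belong to $S$; hence each vertical segment meets $S$ in at most one point. Setting $D:=\{x\in V:(\{x\}\times(-1,1))\cap S\neq\emptyset\}$ and letting $f:D\to(-1,1)$ assign to $x$ the unique time coordinate of the intersection, one gets $S\cap U=\{(x,f(x)):x\in D\}$, the graph of $f$. Everything then reduces to two claims: $D$ is open, and $f$ is continuous.

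The openness of $D$ is where the hypothesis $e(S)=\emptyset$ is used, and I expect this to be the main obstacle, since it requires reading the edge condition in the right direction and exploiting the openness of the sets $I^{\pm}$ (established earlier, cf. eq.~\ref{open}). I would argue by contradiction: if some $s_{0}=(x_{0},\tau_{0})\in S\cap U$ fails to have a neighborhood of $x_{0}$ inside $D$, pick $x_{n}\to x_{0}$ whose whole vertical segment misses $S$. Given any neighborhood $O$ of $s_{0}$, shrunk to a box, the points $(x_{0},\tau_{0}\mp\eta/2)$ lie in $I^{\mp}(s_{0})$ because vertical segments are timelike; by openness of $I^{\pm}(s_{0})$, for large $n$ the points $(x_{n},\tau_{0}-\eta/2)$ and $(x_{n},\tau_{0}+\eta/2)$ still lie in $I^{-}(s_{0})\cap O$ and $I^{+}(s_{0})\cap O$, and the vertical segment joining them is a future-directed timelike curve inside $O$ missing $S$. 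As $O$ was arbitrary, $s_{0}\in e(S)$, a contradiction; hence $D$ is open. This is precisely the failure illustrated by the set with edge in fig.~\ref{fig:888}.

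Continuity of $f$ I would obtain from achronality together with closedness of $S$ (note $S$ is closed, its edge being defined) and, again, openness of $I^{\pm}$. If $x_{n}\to x_{0}$ in $D$ but $f(x_{n})\to\tau^{*}\neq f(x_{0})$ along a subsequence: when $\tau^{*}\in(-1,1)$ the limit $(x_{0},\tau^{*})$ lies in $\overline{S}=S$, giving two points of $S$ on the vertical segment over $x_{0}$, contradicting the at-most-one-intersection property; when $\tau^{*}$ is an endpoint of the interval, a short chain of openness arguments for $I^{+}$ (or $I^{-}$) shows $(x_{0},f(x_{0}))$ is chronologically related to $(x_{n},f(x_{n}))$ for large $n$, again two points of $S$ in chronological relation, contradicting achronality. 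With $D$ open and $f$ continuous, $S\cap U$ is a $C^{0}$ graph over an open subset of $\mathbb{R}^{3}$; since $p$ was arbitrary, $S$ is a $C^{0}$ hypersurface. The only genuinely delicate points are the correct use of the edge definition in the openness-of-$D$ step and the endpoint case of the continuity step, and both ultimately rest on the openness of $I^{\pm}$.
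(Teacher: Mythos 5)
Your proposal is correct and is essentially the construction the paper itself defers to (the homeomorphism onto $\mathbb{R}^{3}$ built from the flow of a timelike vector field, as in \cite{Wald1984}): achronality gives at most one intersection of $S$ with each integral curve, edgelessness gives openness of the projected domain, and openness of $I^{\pm}$ gives continuity of the graph function. The only cosmetic point is your appeal to closedness of $S$, which is covered by the paper's convention that the edge is defined for closed sets, and which the continuity step can in any case avoid by using achronality together with openness of $I^{\pm}$ alone.
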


The significance of this theorem can be exemplified by considering the boundary of the causal future of a closed set, $S\subset M$. From eq. \ref{future4}, it is not possible to find two points, $a,a'\in\partial J^+(S)$, such that $a'\in I^+(a)$, because from prop. \ref{prop2345}, this would contradict the fact that $a'\in\partial J^+(S)$. Thus, $\partial J^+(S)$ is achronal. Moreover, it is also always closed because it is a boundary (see appendix \ref{A1}). Similarly, the arguments of theorem \ref{the3} can be used to show that $e(\partial J^+(S))=\emptyset$ \cite{Penrose1972}, and hence, $\partial J^+(S)$ respects the requirements\footnote{The edgeless property is not necessary to conclude that $\partial J^+(S)$ must be a topological hypersurface, which is a consequence of the fact that $\partial J^+(S)$ is an \textit{achronal boundary} \cite{Penrose1972}. Nonetheless, the theorem was stated as such because it will be used later for an achronal edgeless set which is not an achronal boundary.} of theorem \ref{2t1} and is a topological hypersurface. Of course, $\partial J^+(S)\backslash S$ must contain only null geodesics. Therefore, this segment of the hypersurface must be null. This can be deduced from the fact that its achronality ensures that any timelike vector cannot be tangent to it, otherwise one could follow the integral curves of such a vector in $\partial J^+(S)$ and contradict its achronality. Also, a timelike vector cannot be normal to this segment, as the vector space normal to it would be spanned only by spacelike vectors, and thus, not allowing null geodesics to be contained in it. The same line of reasoning applies to the segments $\partial J^+(S)\cap {S}$ of the hypersurface, which means that they must be null or spacelike.

In order to discuss more aspects of the causal structure, it is necessary to analyze further restrictions for causally and deterministically ``well behaved'' spacetimes. An example of a pathology that would make one deem a spacetime to be ``badly behaved'' is possessing a closed timelike curve. Physically, this condition translates to the possibility of ``time travel'', i.e., for an observer to reach an event in its past. Indeed, a closed timelike curve would also imply that it would not be possible to physically distinguish between a cause and an effect, meaning that the notion of causality, and time itself, would be questionable. In light of this, a spacetime that has no closed timelike curve is said to respect the \textit{chronology condition}. Nevertheless, a spacetime, $(M,g_{\mu\nu})$, can obey the chronology condition but still be deemed to be pathological if a perturbation of the metric results in a closed timelike curve. For instance, if one considers a timelike vector $s^{\mu}$ at $a\in M$, and define the metric 
\begin{equation}
    g'_{\mu\nu}=g_{\mu\nu}-s_{\mu}s_{\nu},
\end{equation}
it is possible to see that any timelike or null vector of $g_{\mu\nu}$ is a timelike vector of $g'_{\mu\nu}$, i.e., the light cone of $g'_{\mu\nu}$ is slightly larger than that of $g_{\mu\nu}$. As such, if the spacetime $(M,g'_{\mu\nu})$ were to have a closed timelike curve, one would deem $(M,g_{\mu\nu})$ physically unreliable even though $(M,g_{\mu\nu})$ itself does not have closed timelike curves. In light of this, a spacetime, $(M,g_{\mu\nu})$, is said to be \textit{stably causal} if there exists a continuous nonvanishing timelike vector, $s^{\mu}$, such that the spacetime $(M,g'_{\mu\nu})$ possesses no closed timelike curve. It can be shown \cite{Hawking1973} that in order for a spacetime to be stably causal there must exist a differentiable function, $f$, on $(M,g_{\mu\nu})$ such that $\nabla^{\mu}f$ is everywhere timelike. The requirement that $\nabla^{\mu}f$ to be timelike means that it can be regarded as the “arrow of time”, as thus, establish a time orientation on $(M,g_{\mu\nu})$. This would imply that along every future (past) directed causal curve $f$ must strictly increase or decrease, and hence, $(M,g'_{\mu\nu})$ with $g'_{\mu\nu}$ constructed from $\nabla^{\mu}f$ could not have any closed timelike curve. 

The notion of the causal structure to be ``well behaved'' can also be analyzed through the set of events, $D(S)$, that can be completely determined by conditions on another set, $S$. Hence, the set $D(S)$ is one such that every observer or light ray that can reach it must have passed through $S$ \cite{Geroch1970}. More precisely, given a closed achronal set, $S$, the \textit{future Cauchy development}, $D^+(S)$, is defined as the set of events such that every past inextendible causal curve that passes through it intersects $S$. The \textit{past Cauchy development}, $D^-(S)$, is defined similarly by interchanging future and past. The \textit{Cauchy development} is defined as
\begin{equation}
    D(S)=D^+(S)\cup D^-(S).
\end{equation}
Thus, the Cauchy development of a closed achronal set, $S$, is the set of events that can be completely determined by information on $S$, which evidently contains $S$, as any inextendible causal curve emanating from $S$ crosses $S$. Note that no generality is lost by considering a closed set, as the physical conditions on any open set, $S'$, are expected to be continuous, and thus, should suffice to determine the conditions on $\overline{S'}$. Concerning the achronality requirement, a notion of a Cauchy development can still be defined for non-achronal sets, but its physical usefulness is not as interesting since parts of the set would be in its future \cite{Penrose1972}. Accordingly, the concept of determinism present in the definition of Cauchy development is then a tool to analyze the evolution laws of physical fields in spacetime. A precise description of the \textit{initial-value formulation}, which analyses how the conditions on a set can be used to determine the physical conditions in its domain of dependence, can be found in, e.g., \cite{Wald1984}. 

With the definition of $D^+(S)$, it is clear that its boundary delimiters the region from which data can be completely predicted from data on $S$. Such boundary is known as the \textit{future Cauchy horizon} of $S$, $H^+(S)$. More precisely, $a\in H^+(S)$ if for all $a'\in D^+(S)$, $a\not\in I^-(a')$. The \textit{past Cauchy horizon}, $H^-(S)$, is defined analogously. The \textit{Cauchy horizon} is the union of $H^+(S)$ and $H^-(S)$, which can also be written as 
\begin{equation}
    H(S)=\partial D(S).
\end{equation}
Fig. \ref{fig:24} \begin{figure}[h]
  \begin{subfigure}[b]{0.5\textwidth}
  \centering
    \includegraphics[scale=1.3]{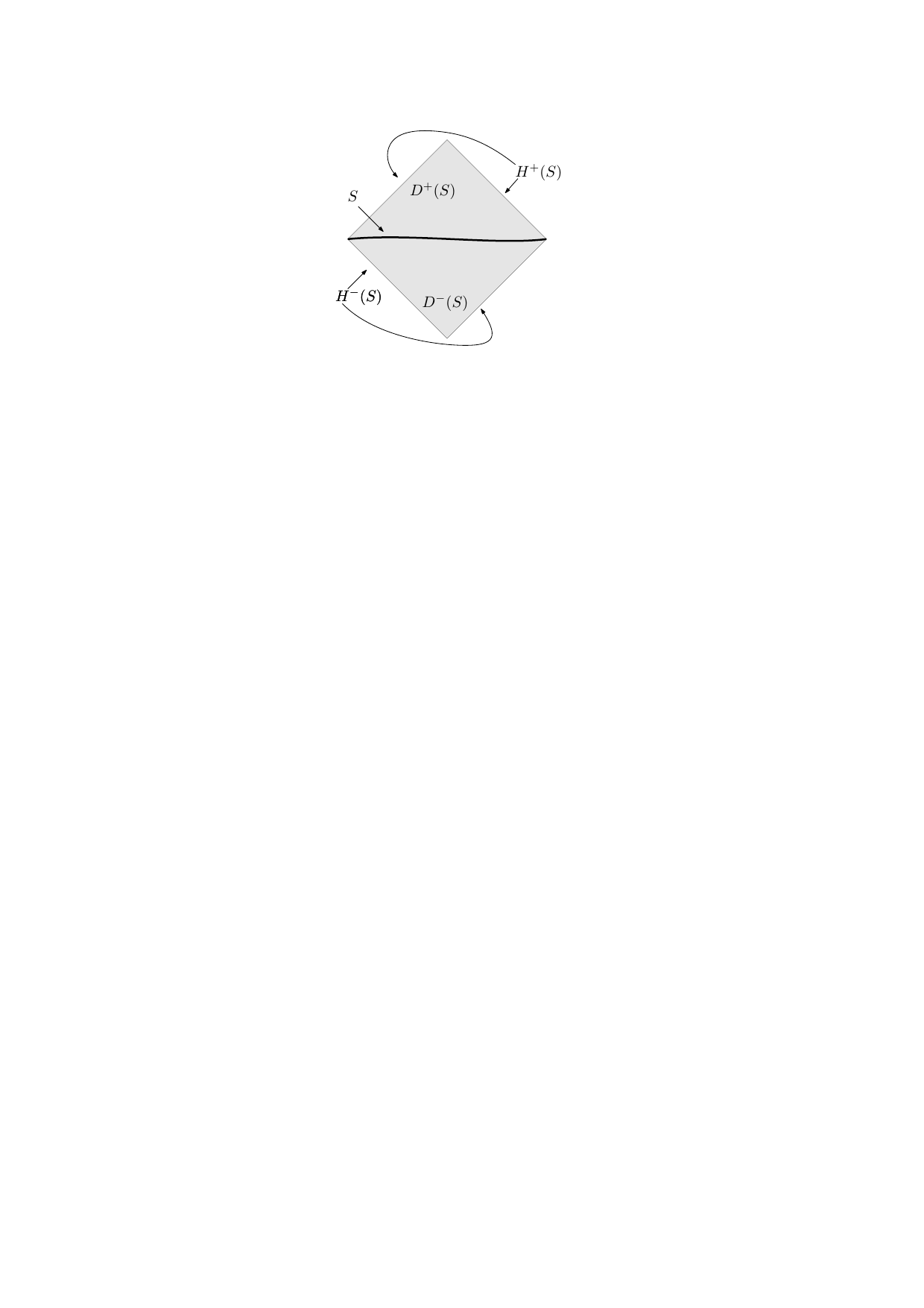}
    \caption{Spacetime with no removed points.}
    \label{fig:242}
  \end{subfigure}
  \hfill
  \begin{subfigure}[b]{0.5\textwidth}
  \centering
    \includegraphics[scale=1.35]{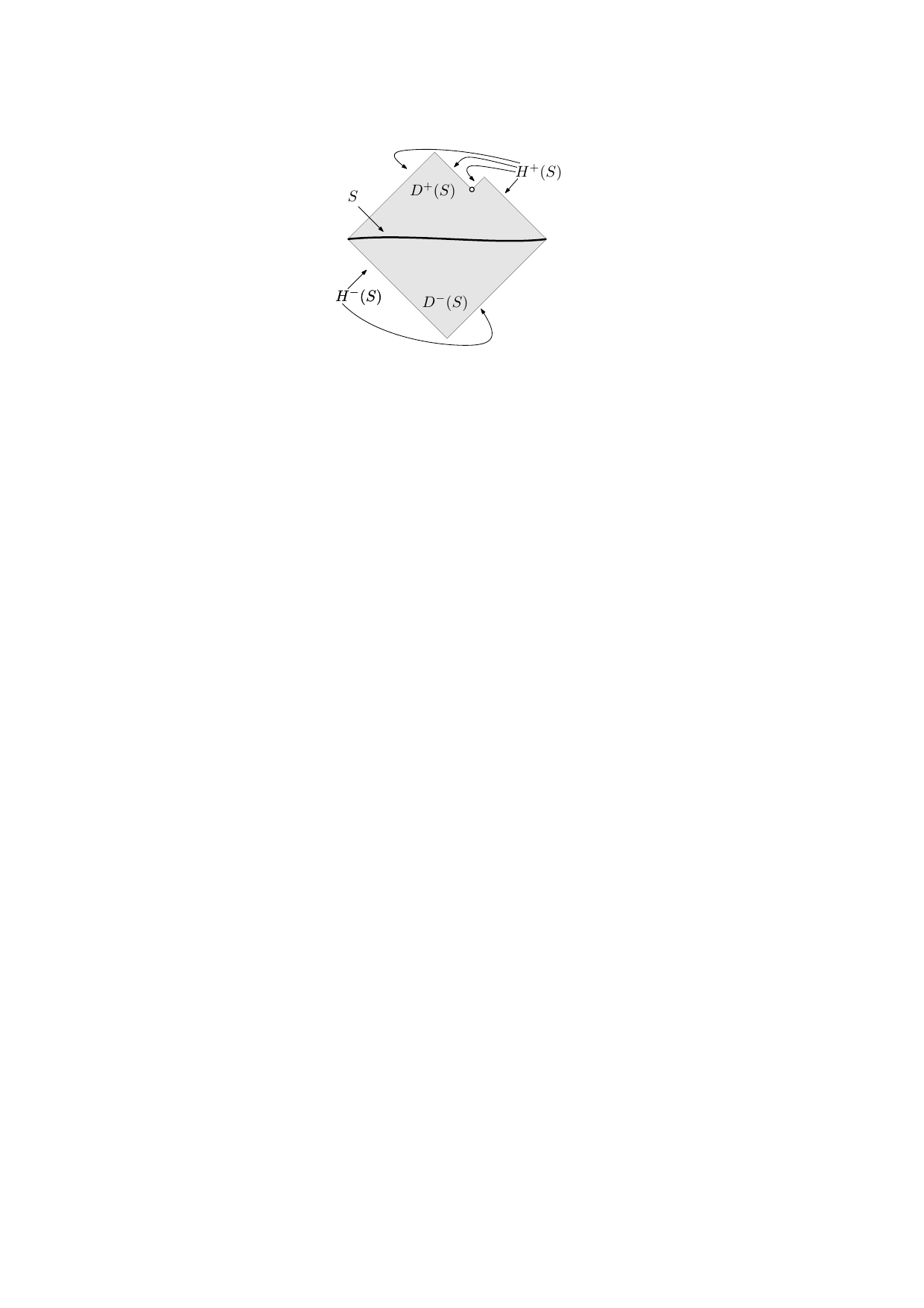}
    \caption{Spacetime with a removed point.}
    \label{fig:241}
  \end{subfigure}
  \caption{Spacetime diagrams illustrating the Cauchy development and Cauchy horizon of a closed achronal set, $S$.}
  \label{fig:24}
\caption*{Source: By the author.}
\end{figure}illustrates examples of the Cauchy development and Cauchy horizon for a closed achronal set, $S$, and the effects of a removed point on these sets. The significance of $D(S)$ and $H(S)$ becomes clear by the following simple, but remarkable result.

\begin{theorem}\label{2p1}
     Let $(M,g_{\mu\nu})$ be a time orientable spacetime. A nonempty closed achronal set, $\Sigma$, has Cauchy development equal to $M$, $D(\Sigma)=M$, if and only if $H(\Sigma)=\emptyset$.
\end{theorem}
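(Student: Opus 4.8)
The plan is to reduce the statement to elementary point-set topology, using the identification $H(S)=\partial D(S)$ recorded above together with the standing assumptions that a spacetime $(M,g_{\mu\nu})$ is a connected manifold without boundary. Under this reduction both implications become short.

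For the forward implication, I would suppose $D(\Sigma)=M$. Then $H(\Sigma)=\partial D(\Sigma)=\partial M$, and since $M$, being a manifold without boundary, coincides with both its interior and its closure, one has $\partial M=\overline{M}\setminus\langle M\rangle=\emptyset$. Hence $H(\Sigma)=\emptyset$.

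For the converse, I would suppose $H(\Sigma)=\emptyset$, that is, $\partial D(\Sigma)=\emptyset$, and first invoke the general topological fact that a subset $A$ of any topological space has empty boundary if and only if it is simultaneously open and closed: writing $\partial A=\overline{A}\cap\overline{M\setminus A}$, emptiness of this intersection together with $\overline{A}\cup\overline{M\setminus A}=M$ forces $\overline{A}$ and $\overline{M\setminus A}$ to be complementary, so that $\langle A\rangle\subseteq A\subseteq\overline{A}=\langle A\rangle$. Applying this to $A=D(\Sigma)$ shows $D(\Sigma)$ is clopen in $M$. It is moreover nonempty, since for any $p\in\Sigma$ every past inextendible causal curve through $p$ meets $\Sigma$ at $p$ itself, giving $\Sigma\subseteq D^+(\Sigma)\subseteq D(\Sigma)$, and $\Sigma\neq\emptyset$ by hypothesis. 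As $M$ is connected, its only nonempty clopen subset is $M$ itself, whence $D(\Sigma)=M$.

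The argument carries no serious obstacle once the identity $H(\Sigma)=\partial D(\Sigma)$ is granted; essentially all of its content is packaged into that identity and into the connectedness of $M$. The place to be careful is precisely this identification of the Cauchy horizon with the topological boundary of the Cauchy development: were it not already available, the genuine work would consist in re-deriving it from the achronality and edge analysis of the present section — in particular, showing that the points of $\overline{D(\Sigma)}\setminus\langle D(\Sigma)\rangle$ are exactly the endpoints of the generating causal curves that fail to intersect $\Sigma$, which is where results such as Theorem \ref{the3} and Theorem \ref{2t1} would enter. Granting it, the three short steps above complete the proof.
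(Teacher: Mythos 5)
Your proof is correct and follows essentially the same route as the paper: both take the identity $H(\Sigma)=\partial D(\Sigma)$ (which the paper records when defining the Cauchy horizon), note that an empty boundary makes $D(\Sigma)$ both open and closed, and conclude from $\Sigma\subseteq D(\Sigma)$, $\Sigma\neq\emptyset$, and the connectedness of $M$ that $D(\Sigma)=M$. The only difference is cosmetic: you spell out the clopen argument and the trivial forward direction ($\partial M=\emptyset$), which the paper leaves implicit.
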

\begin{proof}
    If $H(\Sigma)=\partial D(\Sigma)=\emptyset$, then $\overline{D(\Sigma)}=\langle D(\Sigma)\rangle=D(\Sigma)$, so $D(\Sigma)$ is both open and closed. Thus, since $D(\Sigma)\supset\Sigma\neq\emptyset$ and $M$ is connected, one concludes that $D(\Sigma)=M$.
\end{proof}

A closed achronal set, $\Sigma$, for which $D(\Sigma)=M$ is called a \textit{Cauchy hypersurface}. Since a set with edge implies that there are causal curves that do not intersect it, it follows that any Cauchy hypersurface must be edgeless. Hence, by theorem \ref{2t1}, every Cauchy hypersurface is a topological hypersurface. The nature of this hypersurface can be deduced from its achronal characteristic, since it means that no timelike vector can be tangent to it. Therefore, a Cauchy hypersurface must be made of spacelike or null segments. In addition, a spacetime which possesses a Cauchy hypersurface is said to be \textit{globally hyperbolic}. Consequently, in globally hyperbolic spacetimes, the entire development of spacetime can be analyzed by conditions at $\Sigma$. Finally, note that the Cauchy horizon measures the failure of a set to be a Cauchy hypersurface, as given by theorem \ref{2p1}. 

It can also be deduced that no closed timelike curve can exist in a globally hyperbolic spacetime. A closed timelike curve which intersects $\Sigma$ would violate its achronality, and one which does not intersect it would mean that $D(S)\neq M$, violating global hyperbolicity. These arguments can be strengthened to conclude that a globally hyperbolic spacetime must be stably causal \cite{Wald1984}. Moreover, since globally hyperbolic spacetimes can be interpreted as possessing no “deterministic violations” \cite{Geroch1979}, spacetimes that do not possess a Cauchy hypersurface can be regarded as having sources of “uncontrollable influences” where information is created or destroyed, such as pathological regions or removed points.  Hence, globally hyperbolic spacetimes are those that are causally and deterministic ``well behaved'', in the sense that evolution laws can be regarded as ``well posed'' problems (i.e., evolution laws have a unique solution that changes continuously with variations of initial conditions). Lastly, it is possible to use the integral curves of the timelike vector from the stably causal property of a globally hyperbolic spacetime to construct a homeomorphism between Cauchy hypersurfaces. This construction leads to the following result \cite{Hawking1973}, allowing one to interpret $\Sigma$ as an “instant of time”. 
\begin{theorem}\label{timeins}
    Let $(M,g_{\mu\nu})$ be a globally hyperbolic spacetime. Then the topology of $M$ is $\mathbb{R}\times\Sigma$, where $\Sigma$ denotes any Cauchy hypersurface.
\end{theorem}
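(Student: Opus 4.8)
The plan is to realize the splitting $M\cong\mathbb{R}\times\Sigma$ as the flow of a complete timelike vector field, with $\Sigma$ playing the role of the zero-time slice. Since a globally hyperbolic spacetime is stably causal and hence time orientable, I would first fix a smooth future-directed timelike (in particular nowhere-vanishing) vector field $t^\mu$. Because its integral curves may fail to be defined for all parameter values, I would rescale it: choosing an auxiliary complete Riemannian metric $h$ on $M$ and replacing $t^\mu$ by $(1+\sqrt{h_{\alpha\beta}t^\alpha t^\beta})^{-1}t^\mu$ yields a vector field whose $h$-norm is bounded by $1$, hence complete, and which is still future-directed timelike since rescaling by a positive function does not alter causal character. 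Relabel this field $t^\mu$ and let $\psi_s\colon M\to M$, $s\in\mathbb{R}$, denote its flow.

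The central step is to show that each integral curve $s\mapsto\psi_s(p)$ meets the Cauchy hypersurface $\Sigma$ exactly once. Completeness of $t^\mu$ makes each such curve defined on all of $\mathbb{R}$; since $t^\mu$ is nowhere zero, the curve has no endpoint (near a putative endpoint $q$ the flow would be conjugate to a straight translation and would leave any neighborhood of $q$ in finite parameter time), so it is an inextendible timelike curve. Since $\Sigma$ is a Cauchy hypersurface, $M=D(\Sigma)=D^+(\Sigma)\cup D^-(\Sigma)$: if $p\in D^+(\Sigma)$ the past-inextendible portion of the curve is a past-inextendible causal curve through $p$ and so meets $\Sigma$, and symmetrically if $p\in D^-(\Sigma)$; thus the curve meets $\Sigma$ at least once. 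It meets $\Sigma$ at most once because $\Sigma$ is achronal: two distinct intersection points lying on a single timelike curve would be chronologically related, contradicting $I^+(\Sigma)\cap\Sigma=\emptyset$. For each $p$, write $\tau(p)$ for the unique real number with $\psi_{-\tau(p)}(p)\in\Sigma$.

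With this established I would define $\Phi\colon\mathbb{R}\times\Sigma\to M$ by $\Phi(s,q)=\psi_s(q)$. It is continuous, since the flow is continuous on $\mathbb{R}\times M$ and $\Sigma$ carries the subspace topology; it is injective, since $\psi_{s_1}(q_1)=\psi_{s_2}(q_2)=:p$ forces $q_1=\psi_{-s_1}(p)$ and $q_2=\psi_{-s_2}(p)$ to be intersection points of the integral curve through $p$ with $\Sigma$, whence $s_1=s_2$ and $q_1=q_2$ by uniqueness; and it is surjective, since $p=\Phi(\tau(p),\psi_{-\tau(p)}(p))$ for every $p$. By Theorem \ref{2t1} the Cauchy hypersurface $\Sigma$—which is closed, achronal, and (as noted before the theorem) edgeless—is a $C^0$ hypersurface, so $\mathbb{R}\times\Sigma$ is a topological $4$-manifold, as is $M$; by invariance of domain a continuous injection between topological manifolds of equal dimension is an open map, so $\Phi$ is an open continuous bijection, hence a homeomorphism. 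Therefore the topology of $M$ is that of $\mathbb{R}\times\Sigma$.

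I expect two points to be the genuinely delicate ones. The first is the reduction to a complete vector field together with the verification that its integral curves are inextendible: in informal treatments one simply posits a convenient $t^\mu$, but it is precisely the rescaling that forces the parameter range to be exactly $\mathbb{R}$, which is what makes the target of the homeomorphism literally $\mathbb{R}\times\Sigma$. The second is upgrading the continuous bijection $\Phi$ to a homeomorphism; I have used invariance of domain to avoid a direct proof that $\tau$ is continuous, but if one prefers to dispense with that tool, the alternative is to show $\tau(p_n)\to\tau(p)$ whenever $p_n\to p$ by extracting a limit point of $\psi_{-\tau(p_n)}(p_n)$ and invoking closedness of $\Sigma$, the fiddly part being to rule out $\tau(p_n)\to\pm\infty$—and that is the technical heart of the classical argument.
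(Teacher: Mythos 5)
Your proof is correct and follows essentially the same route the paper gestures at (and attributes to Hawking--Ellis/Geroch): flow a complete timelike vector field, show each integral curve meets the Cauchy hypersurface exactly once using $D(\Sigma)=M$ for existence and achronality for uniqueness, and read off the product structure. Your only deviation is a technical convenience—using invariance of domain to upgrade the continuous bijection $\Phi$ to a homeomorphism instead of proving continuity of the time function $\tau$ directly, as the classical argument does—and that step is valid since $\Sigma$ is a $C^0$ hypersurface by theorem \ref{2t1}, so $\mathbb{R}\times\Sigma$ and $M$ are topological manifolds of equal dimension.
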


We conclude our discussion of causal structure by stating an result regarding the causal future of a compact set in globally hyperbolic spacetimes \cite{Galloway2014}. 
\begin{theorem}\label{theorem7}
    Let $(M,g_{\mu\nu})$ be globally hyperbolic and let $S\subset M$ be compact. Then $J^+(S)$ is closed. 
\end{theorem}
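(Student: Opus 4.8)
The plan is to show that $J^+(S)$ is closed by verifying it contains the limit of every convergent sequence of its points. Let $q_n\to q$ in $M$ with $q_n\in J^+(S)$ for each $n$, and choose $p_n\in S$ together with a future directed causal curve $\gamma_n$ from $p_n$ to $q_n$. Since $S$ is compact, after passing to a subsequence we may assume $p_n\to p\in S$. If $q=p$ then $q\in S\subseteq J^+(S)$ and we are done, so suppose $q\neq p$; then $\gamma_n$ is nondegenerate for all large $n$, and it remains to produce a future directed causal curve from $p$ to $q$.

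The first substantive step is to confine all the curves $\gamma_n$ to one fixed compact set. Pick $a\in I^-(p)$ and $b\in I^+(q)$, which are nonempty (e.g.\ along the integral curve through $p$, respectively $q$, of the timelike gradient $\nabla^\mu f$ supplied by stable causality, cf.\ \S~\ref{causal}). Because $I^+(a)$ and $I^-(b)$ are open and contain $p$ and $q$ respectively, we have $p_n\in I^+(a)$ and $q_n\in I^-(b)$ for all large $n$. Every point of $\gamma_n$ lies in $J^+(p_n)\cap J^-(q_n)$, so concatenating causal curves yields $\gamma_n\subseteq J^+(a)\cap J^-(b)=:K$ for all large $n$.

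The essential input from global hyperbolicity is that $K=J^+(a)\cap J^-(b)$ is compact. Fixing a Cauchy time function $t$ with $M\cong\mathbb{R}\times\Sigma$ as in theorem \ref{timeins}, monotonicity of $t$ along causal curves places $K$ inside the slab $t^{-1}([t(a),t(b)])$, and the standard causal-diamond compactness argument shows that any sequence in $K$ subconverges within $K$: a sequence escaping to infinity along $\Sigma$ would yield, via a limit curve, an inextendible causal curve trapped in a bounded range of $t$ that never meets $\Sigma$, contradicting $D(\Sigma)=M$ (see \cite{Wald1984,Galloway2014}). With $K$ compact we invoke the limit curve theorem, valid in strongly causal spacetimes and hence here, since globally hyperbolic spacetimes are stably causal and therefore strongly causal (see \cite{Wald1984,Penrose1972}): the curves $\gamma_n\subseteq K$, with initial points $p_n\to p$ and final points $q_n\to q$, admit a subsequence converging to a future directed causal curve $\gamma$ from $p$ to $q$. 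Therefore $q\in J^+(p)\subseteq J^+(S)$, so $J^+(S)$ contains all its limit points and is closed.

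I expect the compactness of $K$ to be the crux: once that is granted the argument is just the limit curve theorem together with routine bookkeeping using the causal-structure results already established, whereas compactness of a causal diamond — equivalently, closedness of $J^\pm$ of a point and of the causal relation on $M\times M$ — is precisely the place where the existence of a Cauchy hypersurface, rather than mere strong causality, is indispensable.
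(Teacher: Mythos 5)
The paper does not actually prove this theorem: it is stated as a quoted result with a citation to \cite{Galloway2014}, so there is no internal proof to compare against. Your argument is correct and is essentially the standard proof found in that reference: extract $p_n\to p\in S$ using compactness of $S$, trap the connecting causal curves in a causal diamond $K=J^+(a)\cap J^-(b)$ with $a\in I^-(p)$, $b\in I^+(q)$, and apply the limit curve theorem inside the compact, strongly causal set $K$ to obtain a future directed causal curve from $p$ to $q$, whence $q\in J^+(S)$.

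Two remarks. First, as you yourself note, all of the substance sits in the compactness of $K$; when global hyperbolicity is defined by the existence of a Cauchy hypersurface, that compactness (together with closedness of $J^\pm$ of a point, i.e.\ the $S=\{a\}$ case of the present theorem) is itself a nontrivial result, so your proof is best read as a reduction of the compact-$S$ case to the one-point case plus the limit curve theorem, both of which you cite legitimately from \cite{Wald1984,Galloway2014}. Second, your sketch of the contradiction is slightly loose as worded: an inextendible causal limit curve confined to the slab $t^{-1}([t(a),t(b)])$ need not avoid the particular Cauchy hypersurface $\Sigma$, whose $t$-value may lie inside that interval. The clean statement is that every level set $t^{-1}(c)$ of a Cauchy time function is itself a Cauchy hypersurface (cf.\ theorem \ref{timeins}), so $t$ is unbounded along every inextendible causal curve and no such curve can remain in a bounded $t$-range; choosing $c>t(b)$ gives the contradiction. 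With that adjustment your sketch is the standard argument, and the proof stands.
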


Thus, for any compact set $S$ in a globally hyperbolic spacetime, $J^+(S)=\overline{J^+(S)}$, which implies that $\partial J^+(S)\subset J^+(S)$, and hence, from theorem \ref{the3}, in a globally hyperbolic spacetime every point $a\in\partial J^+(S)\backslash{S}$ can be connected by a past directed null geodesic to $S$. More specifically, this means that the situation presented in fig. \ref{fig:prop23455}, in which a null geodesic in $\partial J^+(S)$ has no past endpoint on $S$, cannot occur for a compact set in a globally hyperbolic spacetime. 

\section{Energy conditions}\label{energycon}

The purpose of this section is to present possible restrictions on the energy-momentum tensor, which are nothing more than ways one would expect a physically reasonable distribution of energy to behave (see, e.g., \cite{Hawking1973}). Such conditions are of significance because, even though one may not have information about the explicit form of the energy-momentum tensor, using Einstein's equation one can infer how the geometry of spacetime is expected to affect the motion of observers and light rays. The interpretation of these conditions in the context of spacetime curvature will be exemplified in the discussion of dynamics of null geodesics in \S\;\ref{null}.

An energy-momentum tensor, $T_{\mu\nu}$, such that
\begin{equation}
    T_{\mu\nu}s^{\mu}s^{\nu}\geq 0,\;\forall\;s^{\mu}\text{ timelike},
\end{equation}
is said to satisfy the \textit{weak energy condition}. This condition can be interpreted as stating that the energy density measured by any observer must be non-negative. In a classical sense, this condition can be seen to hold, as the notion of negative energy density only rises when one considers quantum theories. 

Similarly, $T_{\mu\nu}$ is said to satisfy the \textit{strong energy condition} if
\begin{equation}\label{dominant}
    T_{\mu\nu}s^{\mu}s^{\nu}\geq -\frac{1}{2}T,\;\forall\;s^{\mu}\text{ unit timelike},
\end{equation}
where $T=T_{\mu}\mathstrut^{\mu}$. This condition simply puts a restriction on how strong the stress of matter can become compared with the energy density, as measured by any observer. In particular, the numerical factor on the right hand side of eq. \ref{dominant} is a direct consequence of the explicit form of Einstein's equation. To see this, one contracts Einstein's equation with the inverse metric, $g^{\mu\nu}$, which leads to 
\begin{equation}
    R=-\frac{8\pi G}{c^4}T,
\end{equation}
so that it is possible to write Einstein's equation, equivalently, as
\begin{equation}\label{form37}
    R_{\mu\nu}=\frac{8\pi G}{c^4}\left(T_{\mu\nu}-\frac{1}{2}Tg_{\mu\nu}\right).
\end{equation}
Thus, given eq. \ref{form37}, the strong energy condition implies the non-negativity of the scalar $R_{\mu\nu}s^{\mu}s^{\nu}$. As will be discussed in more detail in \S\;\ref{null}, this non-negativity signifies that timelike or null geodesics tend to get closer together, which in turn can be interpreted as the attractive nature of gravity.

In addition, for all future directed timelike vectors, $s^{\mu}$, the \textit{dominant energy condition} states that the vector $-T^{\mu}\mathstrut_{\nu}s^{\nu}$ is future directed timelike or null. In other words, it states that the flow of energy as measured by any observer must be, at most, at the speed of light and directed to the future \cite{Wald1984}. It should be noted that apart from the dominant energy condition implying the weak one, these conditions are independent mathematical hypotheses. Finally, due to continuity, the weak and the strong energy conditions imply the \textit{null energy condition},
\begin{equation}
    T_{\mu\nu}\ell^{\mu}\ell^{\nu}\geq 0,\;\forall\;\ell^{\mu}\text{ null}.
\end{equation}
Alternatively, one can also deduce that the weak and strong energy condition imply the null one by explicit analysis of the components of the energy-momentum tensor (see, e.g., \cite{Poisson2004}).

\section{Null geodesic congruences}\label{null}

A \textit{congruence} in an open set, $S\subset M$, is a family of curves such that through each $a\in S$ passes exactly one curve of this family. Consequently, a congruence gives rise to a smooth vector field in $S$, by taking it to be the tangent to the curves in the congruence at each point, and the converse is also true. A null geodesic congruence\footnote{See fig. \ref{fig:sub2} for an example of a congruence.} is one that the curves are null geodesics. One's interest in studying them arises from the fact that, through fairly general arguments, it is possible to derive results concerning the dynamics of null geodesics. Such results can then be used, for example, to study the behavior of the null segments of the boundary of the past or future of a set. 

The analysis of a congruence lies in the study of the transverse vector space, which is spanned by vectors orthogonal to the one that generates the congruence, known as deviation vectors (see appendix \ref{A3}). The failure of a deviation vector to be parallel transported along the curves of the congruence gives information about the dynamics of the curves, as it will show how they become closer or further apart, or rotate around each other. For a null geodesic congruence, the deviation vectors can be visualized as the separation between two “infinitesimally nearby” null geodesics that are being followed by light rays that were emitted by a source at the same time. The analysis of the deviation vectors in a null geodesic congruence can be made as follows.

Let $\ell^{\mu}$ be the tangent to a null geodesic congruence in $S$, which is parametrized by affine parameter $\lambda$ and let $V_a$ denote the tangent vector space at $a\in S$. The deviation vectors, $s^{\mu}$, lie in the vector subspace spanned by vectors orthogonal to $\ell^{\mu}$, denoted by $\overline{V}_a$. However, this is not the vector space of interest, because $\ell^{\mu}\in\overline{V}_a$. Namely, it is not possible to analyze the deviation vectors by restricting one's attention to  $\overline{V}_a$, as the projection of tensors to  $\overline{V}_a$ would be associated with a degenerate metric (see eq. \ref{deg}). Thus, it is necessary to isolate the purely transverse part of $\overline{V}_a$, i.e., elements of $\overline{V}_a$ that are not proportional to $\ell^{\mu}$. To do so, one must use an auxiliary vector, $\eta^{\mu}$, such that $\eta^{\mu}\ell_{\mu}\neq 0$, which means that $\ell^{\mu}$ does not belong to the vector space spanned by vectors orthogonal to $\eta^{\mu}$, denoted by $\overline{V'}_a$. Hence, the deviation vectors lie in the vector subspace spanned by the vectors that are orthogonal to both $\ell^{\mu}$ and $\eta^{\mu}$, denoted by $\hat{V}_a$. Since this vector subspace is spanned only by vectors that are orthogonal to $\ell^{\mu}$ and does not include $\ell^{\mu}$ itself, it is precisely the vector space of interest. In this manner, it is very convenient to use the gauge freedom of $\eta^{\mu}$ to make it be null and obey $\eta^{\mu}\ell_{\mu}=-1$. Fig. \ref{fig:null} illustrates the vectors under these conditions and their respective orthogonal spaces at a point $a\in\gamma$, where $\gamma$ is a null geodesic of the congruence. Note that one dimension is suppressed, so that $\overline{V}_a$ and $\overline{V'}_a$ are planes and $\hat{V}_a$ is a line. \begin{figure}[h]
\centering
\includegraphics[scale=1.9]{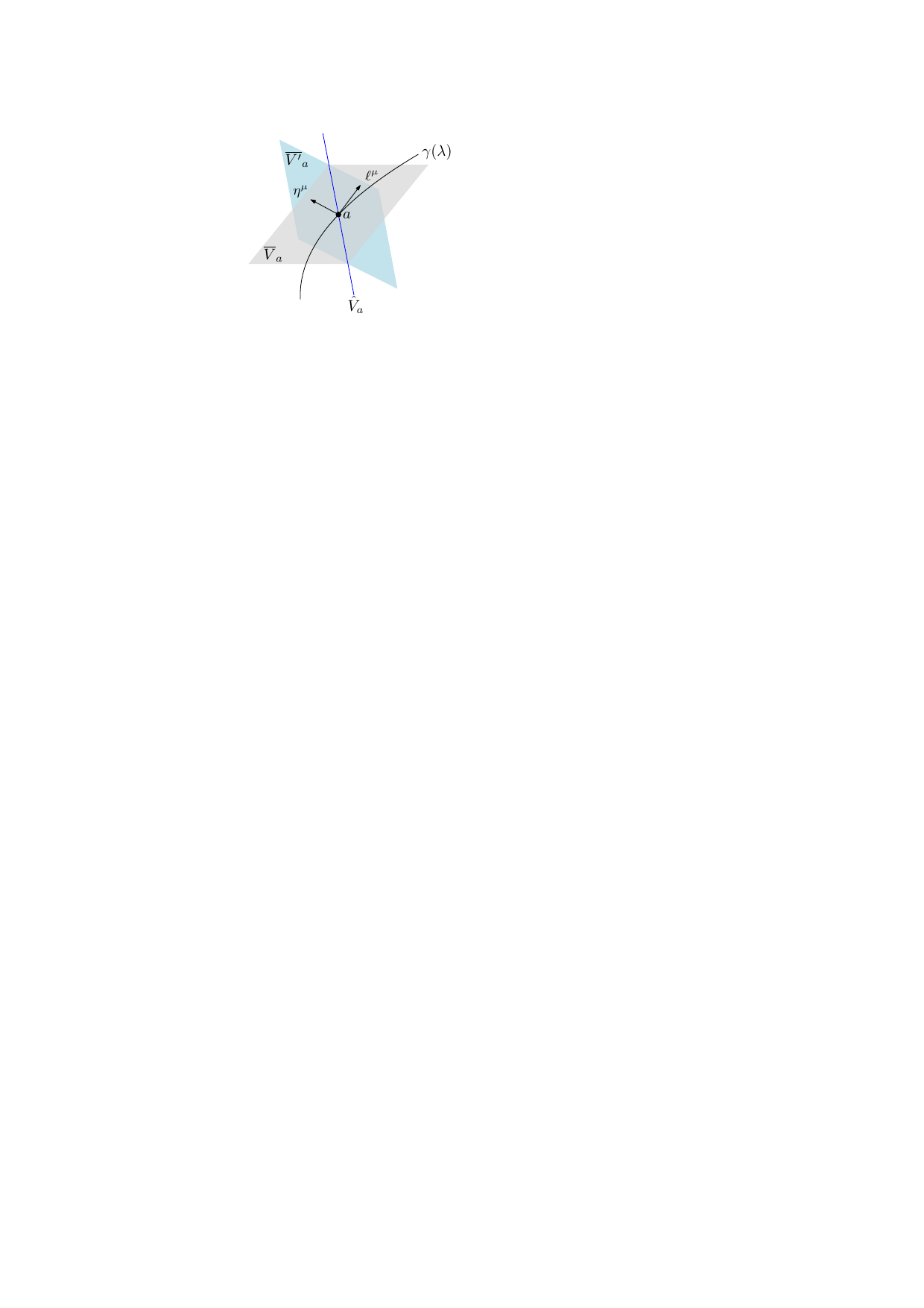} 
\caption{Diagram of the tangent vector spaces of interest at a point on a null geodesic.}
\caption*{Source: By the author.}
\label{fig:null}
\end{figure} With this choice of $\eta^{\mu}$ and considering the properties of an orthogonal projection operator presented in appendix \ref{integration}, the metric that acts on $\hat{V}_a$ is given by
\begin{equation}\label{transversemetric}
    h_{\mu\nu}=g_{\mu\nu}+2\ell_{(\mu}\eta_{\nu)},
\end{equation}
which can be used to project tensors on $\hat{V}_a$, isolating their purely transverse part. By projection of tensors, it is meant the contraction of all its indices with $h^{\mu}\mathstrut_{\nu}$.

The characterization of $\hat{V}_a$ becomes clear if one considers a \textit{local Lorentz frame}, that is, a coordinate system, $\{cx^0,x^1,x^2,x^3\}$, in a neighborhood of $a$ such that the metric is given by $\text{diag}(-1,1,1,1)$\footnote{This coordinate system is convenient to exemplify our analysis, but, as any other, it is otherwise meaningless and has no effect on scalars, which are the quantities of interest.}. Note that this is possible due to the property that any spacetime is locally flat, i.e., locally homeomorphic to Minkowski spacetime. In this coordinate system, one can write the tangent to the null geodesic congruence and an auxiliary null vector as
\begin{equation}\label{eq40}
    \ell^{\mu}=\frac{1}{\sqrt{2}}(1,1,0,0), 
\end{equation}
\begin{equation}\label{eq41}
    \eta^{\mu}=\frac{1}{\sqrt{2}}(1,-1,0,0),
\end{equation}
which indicate the Riemannian character of $h_{\mu\nu}$. Evidently, a basis of $\hat{V}_a$, $\{e^{\mu}_2,e^{\mu}_3\}$, is
\begin{equation}\label{eq43}
    e^{\mu}_2=(0,0,1,0), 
\end{equation}
\begin{equation}\label{eq42}
    e^{\mu}_3=(0,0,0,1).
\end{equation}

Since $\{\ell^{\mu},\eta^{\mu},e^{\mu}_2,e^{\mu}_3\}$ is a basis of $V_a$ and $s^{\mu}$ can be written as a linear combination of $e^{\mu}_2$ and $e^{\mu}_3$, it is clear that $\ell^{\mu}$ and $s^{\mu}$ commute (see eqs. \ref{aeq11} and \ref{commu}). Denoting $B_{\mu\nu}=\nabla_{\nu}\ell_{\mu}$ and using eq. \ref{aeq10}, one finds 
\begin{equation}
    s^{\mu}\nabla_{\mu}\ell^{\nu}=\ell^{\mu}\nabla_{\mu}s^{\nu}=B^{\nu}\mathstrut_{\mu}s^{\mu}.
\end{equation}
Hence, the map $B^{\nu}\mathstrut_{\mu}$ measures the failure of deviation vectors to be parallel propagated, and therefore, it is the operator of interest to analyze the dynamics of the congruence. Clearly, this map is orthogonal to $\ell^{\mu}$, in the sense that contraction of any of its indices with $\ell^{\mu}$ vanishes. However, it is not orthogonal to $\eta^{\mu}$, which means that the vector $B^{\nu}\mathstrut_{\mu}s^{\mu}$ may still be proportional to $\ell^{\mu}$. In order to isolate the purely transverse part of $B_{\mu\nu}$, one uses the transverse metric to project it on $\hat{V}_a$\footnote{This process can be interpreted as restricting the action of $B^{\nu}\mathstrut_{\mu}$ to vectors in $\hat{V}_a$.},
\begin{equation}\label{bhat}
\begin{aligned}[b]
    \hat{B}_{\mu\nu}& =h^{\alpha}\mathstrut_{\mu}h^{\beta}\mathstrut_{\nu}B_{\alpha\beta}\\
    &= B_{\mu\nu}+\ell_{\mu}\eta^{\alpha}B_{\alpha\nu}+\ell_{\nu}\eta^{\alpha}B_{\mu\alpha}+\ell_{\mu}\ell_{\nu}\eta^{\alpha}\eta^{\beta}B_{\alpha\beta}.
    \end{aligned}
\end{equation}
In this manner, $\hat{B}_{\mu\nu}$ is the operator that gives information about the purely transverse behavior of the null congruence. Due to the dimension of the vector space of interest, it is clear that $\hat{B}_{\mu\nu}$ is effectively a $2\times2$ matrix, so that it can be decomposed as
\begin{equation}\label{B}
    \hat{B}_{\mu\nu}=\frac{1}{2}\theta h_{\mu\nu}+\sigma_{\mu\nu}+\omega_{\mu\nu},
\end{equation}
where
\begin{align}
\begin{split}
    \theta& =\hat{B}^{\mu\nu}h_{\mu\nu},
\end{split}\\
\begin{split}
    \sigma_{\mu\nu}& =\hat{B}_{(\mu\nu)}-\frac{1}{2}\theta h_{\mu\nu},
\end{split}\\
    \omega_{\mu\nu}& =\hat{B}_{[\mu\nu]}.
\end{align}

Consequently, the action of $\hat{B}_{\mu\nu}$ on the deviation vectors can be analyzed by interpreting the terms of its decomposition, as exemplified in \cite{Poisson2004}. The trace of $\hat{B}_{\mu\nu}$, $\theta$, is associated with the rate of change of the congruence cross-section area. Therefore, $\theta>0$ means that the geodesics are diverging, while $\theta<0$ means that they are converging. Similarly, the symmetric tracefree part, $\sigma_{\mu\nu}$, is associated with the rate of change of the shape of the cross-section, and the antisymmetric part, $\omega_{\mu\nu}$, is associated with the rotation of the cross-section. Because of these interpretations, $\theta$ is referred to as the \textit{expansion}, $\sigma_{\mu\nu}$ as the \textit{shear tensor} and $\omega_{\mu\nu}$ as the \textit{vorticity tensor}.

The quantity of most physical significance is the expansion, which tells one how geodesics in the congruence move closer or further apart. In particular, one would like to calculate how it changes as one moves along the curves in the congruence. This can be done by first considering how $B_{\mu\nu}$ changes along the integral curves of $\ell^{\mu}$,
\begin{equation}\label{eq12}
\begin{aligned}[b]
    \ell^{\alpha}\nabla_{\alpha}B_{\mu\nu} &=\ell^{\alpha}\nabla_{\alpha}\nabla_{\nu}\ell_{\mu}\\&=\ell^{\alpha}\nabla_{\nu}\nabla_{\alpha}\ell_{\mu}+R_{\alpha\nu\mu}\mathstrut^{\beta}\ell^{\alpha}\ell_{\beta}\\
    & = \nabla_{\nu}(\ell^{\alpha}\nabla_{\alpha}\ell_{\mu})-(\nabla_{\nu}\ell^{\alpha})(\nabla_{\alpha}\ell_{\mu})+R_{\alpha\nu\mu}\mathstrut^{\beta}\ell^{\alpha}\ell_{\beta} \\
    & = -B^{\alpha}\mathstrut_{\nu}B_{\mu\alpha}-R_{\nu\alpha\mu}\mathstrut^{\beta}\ell^{\alpha}\ell_{\beta}.
\end{aligned}
\end{equation}
Now, using the explicit form of the transverse metric, one can verify that $\hat{B}_{\mu\nu}h^{\mu\nu}=B_{\mu\nu}g^{\mu\nu}$ and $B^{\mu\nu}B_{\mu\nu}=\hat{B}^{\mu\nu}\hat{B}_{\mu\nu}$. Considering eq. \ref{B}, one then finds
\begin{equation}
    \hat{B}^{\mu\nu}\hat{B}_{\mu\nu}=\frac{1}{2}\theta^2+\sigma^{\mu\nu}\sigma_{\mu\nu}+\omega^{\mu\nu}\omega_{\mu\nu},
\end{equation}
and thus, taking the trace of eq. \ref{eq12} yields 
\begin{equation}\label{eq13}
    \ell^{\mu}\nabla_{\mu}\theta=\frac{d\theta}{d\lambda}=-\frac{1}{2}\theta^2-\sigma^{\mu\nu}\sigma_{\mu\nu}+\omega^{\mu\nu}\omega_{\mu\nu}-R_{\mu\nu}\ell^{\mu}\ell^{\nu}.
\end{equation}

Eq. \ref{eq13} is known as \textit{Raychaudhuri's equation}, and it dictates the behavior of the expansion along the null geodesics in the congruence. To analyze it, it is of interest to study the sign of the non-expansion terms. The term $-\sigma_{\mu\nu}\sigma^{\mu\nu}$ is manifestly nonpositive due to the fact that the shear tensor is orthogonal to both $\ell^{\mu}$ and $\eta^{\mu}$, i.e., it is purely ``spacelike''. The sign of the Ricci tensor term can be analyzed under the assumptions of the energy conditions. If one contracts Einstein's equation twice with the tangent vector field to the null congruence, one obtains
\begin{equation}\label{eq17}
    R_{\mu\nu}\ell^{\mu}\ell^{\nu}=\frac{8\pi G}{c^4} T_{\mu\nu}\ell^{\mu}\ell^{\nu},
\end{equation}
which gives a relation to the Ricci tensor term in Raychaudhuri's equation, and it will be nonpositive if the weak or strong energy condition holds, as they imply the null energy condition. This can be interpreted as the attractive nature of gravity, as if such a condition is satisfied, null geodesics will tend to converge due to the contribution to the dynamics of the expansion. This interpretation also follows for a timelike congruence, in which the corresponding term in the Raychaudhuri's equation is the one analyzed in \S\;\ref{energycon}. Hence, the attractive nature of gravity follows from suitable energy conditions, as they represent an ``attractive'' contribution to the dynamics of geodesics, be it a timelike or a null one.

In order to analyze the term $\omega^{\mu\nu}\omega_{\mu\nu}$, one should consider the following result, known as \textit{Frobenius' theorem}, a proof of which can be found in \cite{Wald1984}. 
\begin{theorem}\label{frobenius}
    A congruence of curves, whose tangent vector is $\ell^{\mu}$, is orthogonal to a hypersurface (in the sense that it is proportional to the normal of a family of hypersurfaces described by $\Gamma(x^{a})=a$) if and only if $(\ell\wedge d\ell)_{\mu\nu\alpha}=0$, i.e., $\ell_{[\mu}\nabla_{\nu}\ell_{\alpha]}=0$.
\end{theorem}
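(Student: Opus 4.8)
The plan is to prove the two implications separately; the reverse implication is where essentially all of the work lies. \textbf{Necessity} is a direct computation: assume $\ell_{\mu}=f\,\nabla_{\mu}\Gamma$ for a nowhere-vanishing function $f$ and a function $\Gamma$. Differentiating gives $\nabla_{\nu}\ell_{\alpha}=(\nabla_{\nu}f)(\nabla_{\alpha}\Gamma)+f\,\nabla_{\nu}\nabla_{\alpha}\Gamma$, so $\ell_{\mu}\nabla_{\nu}\ell_{\alpha}=f(\nabla_{\mu}\Gamma)(\nabla_{\nu}f)(\nabla_{\alpha}\Gamma)+f^{2}(\nabla_{\mu}\Gamma)\nabla_{\nu}\nabla_{\alpha}\Gamma$. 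On totally antisymmetrising in $\mu\nu\alpha$, the first term dies because the factors $\nabla_{\mu}\Gamma$ and $\nabla_{\alpha}\Gamma$ are symmetric under exchange of those two indices, and the second dies because $\nabla_{[\nu}\nabla_{\alpha]}\Gamma=0$ — the Levi-Civita connection is torsion-free, so its commutator annihilates scalars. Hence $\ell_{[\mu}\nabla_{\nu}\ell_{\alpha]}=0$, using nothing beyond the symmetry of $\nabla$.

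\textbf{Sufficiency} is the substantive direction, and I would split it into two steps, working in a neighbourhood of any point where $\ell^{\mu}\neq0$. The first step rephrases the hypothesis as an involutivity statement for the rank-three distribution $\mathcal{D}=\ker\ell$, the transverse directions orthogonal to $\ell_{\mu}$. Given vector fields $X^{\mu},Y^{\mu}$ with $X^{\mu}\ell_{\mu}=Y^{\mu}\ell_{\mu}=0$, differentiating these relations along $X$ and $Y$ (to move the derivatives onto $\ell$) gives $\ell_{\mu}[X,Y]^{\mu}=2X^{\nu}Y^{\mu}\nabla_{[\nu}\ell_{\mu]}$. On the other hand, the hypothesis, written as the cyclic sum $\ell_{\mu}\nabla_{[\nu}\ell_{\alpha]}+\ell_{\nu}\nabla_{[\alpha}\ell_{\mu]}+\ell_{\alpha}\nabla_{[\mu}\ell_{\nu]}=0$ and contracted with $X^{\nu}Y^{\alpha}$, kills the last two terms and forces $X^{\nu}Y^{\alpha}\nabla_{[\nu}\ell_{\alpha]}=0$ since $\ell_{\mu}\neq0$. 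Comparing, $\ell_{\mu}[X,Y]^{\mu}=0$, so $\mathcal{D}$ is closed under the Lie bracket.

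The second step, and the \textbf{main obstacle}, is to invoke the Frobenius integrability theorem for distributions: a smooth involutive distribution of rank $k$ admits, near any point, coordinates $(y^{1},\dots,y^{4})$ in which it is spanned by $\partial/\partial y^{1},\dots,\partial/\partial y^{k}$. Applied to $\mathcal{D}$ with $k=3$, the integral hypersurfaces are the level sets of $y^{4}$; setting $\Gamma=y^{4}$, the one-form $\ell$ annihilates $\partial/\partial y^{1},\partial/\partial y^{2},\partial/\partial y^{3}$ and is therefore proportional to $dy^{4}=d\Gamma$, that is $\ell_{\mu}=f\,\nabla_{\mu}\Gamma$ with $f$ nowhere zero, which is the claim. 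The genuine work is this integrability theorem — the implication ``involutive $\Rightarrow$ integrable'' is the only truly analytic ingredient; I would prove it by induction on $k$, using the flow-box (straightening) theorem to rectify one nonvanishing field of $\mathcal{D}$ to $\partial/\partial y^{1}$, checking that the distribution induced on a transverse slice stays involutive, and iterating (alternatively one cites \cite{Wald1984}). I would also stress that the statement is \emph{local}: $f$ and $\Gamma$ live only on a neighbourhood of a point where $\ell\neq0$, and a single globally defined $\Gamma$ with $\ell_{\mu}\propto\nabla_{\mu}\Gamma$ need not exist, since the foliation may have nontrivial holonomy.
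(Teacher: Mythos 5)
Your proof is correct. The paper itself does not prove Theorem \ref{frobenius} — it defers to Wald (1984) — and your argument (necessity by direct antisymmetrization using only torsion-freeness, sufficiency by showing the rank-three annihilator distribution of $\ell_{\mu}$ is involutive and then invoking the local integrability theorem for involutive distributions, with the correct remark that the result is purely local) is essentially the standard proof found in that reference. One cosmetic slip: the bracket identity should read $\ell_{\mu}[X,Y]^{\mu}=2X^{\nu}Y^{\mu}\nabla_{[\mu}\ell_{\nu]}$ rather than $2X^{\nu}Y^{\mu}\nabla_{[\nu}\ell_{\mu]}$; the sign is immaterial since you only use that the contraction vanishes.
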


One can deduce an immediate consequence of Frobenius' theorem by considering a hypersurface described by $\Gamma(x^{a})=a$, where $a$ is a constant scalar. With this characterization, the normal vector to this hypersurface is $w^{\mu}=\nabla^{\mu}\Gamma$, as it “points” in the direction of increasing $a$ and is orthogonal to the directions where $a$ is constant. However, if $w^{\mu}=\nabla^{\mu}\Gamma$ is a null vector, then it is also tangent to the null hypersurface. This result is of significance because, using eq. \ref{aeq3}, one readily obtains
\begin{equation}
    w^{\nu}\nabla_{\nu}w_{\mu}=w^{\nu}\nabla_{\mu}w_{\nu}=\frac{1}{2}\nabla_{\mu}(w^{\nu}w_{\nu}).
\end{equation}
In essence, since $w^{\nu}w_{\nu}$ vanishes on $\Gamma$, its gradient, $\nabla_{\mu}(w^{\nu}w_{\nu})$, must be proportional to its normal vector, $w^{\mu}$. This means that $w^{\nu}\nabla_{\nu}w^{\mu}\propto w^{\mu}$, which is the geodesic equation in a non-affinely parametrized form. Hence, $w^{\mu}$ is the tangent to the null geodesics that lie within $\Gamma$. More precisely, the tangent to a null geodesic congruence is normal to the hypersurface in which the null geodesics lie within. Because of this, the null geodesics are referred to as the \textit{generators} of the null hypersurface. Thus, Frobenius' theorem implies that the tangent to any null geodesic congruence will obey $\ell_{[\mu}\nabla_{\nu}\ell_{\alpha]}=0$. In summary, from a physical perspective, each null hypersurface can be used to describe the propagation of the wave front of the light rays following the null geodesics that generate it, and the action of the map $\hat{B}_{\mu\nu}$ on deviation vectors gives information about the behavior of the wave front over time.

In light of this, the requirement for a null vector to be hypersurface orthogonal can be shown to be related to the vorticity tensor of the null congruence, which one can verify by contracting $\eta^{\mu}$ with the condition in Frobenius' theorem, 
\begin{equation}
    \begin{aligned}[b]
    0 &=\ell_{[\mu}\nabla_{\nu}\ell_{\alpha]}\eta^{\mu}\\
    &=(\ell_{\mu}\nabla_{[\nu}\ell_{\alpha]}+\ell_{\nu}\nabla_{[\alpha}\ell_{\mu]}+\ell_{\alpha}\nabla_{[\mu}\ell_{\nu]})\eta^{\mu}\\
    &=-B_{[\alpha\nu]}+\ell_{\nu}\eta^{\mu}B_{[\mu\alpha]}+\ell_{\alpha}\eta^{\mu}B_{[\nu\mu]}\\
    &=-B_{[\alpha\nu]}+B_{\mu[\alpha}\ell_{\nu]}\eta^{\mu}+\ell_{[\alpha}B_{\nu]\mu}\eta^{\mu}\\
    &=\hat{B}_{[\alpha\nu]}\\
    &=\omega_{\alpha\nu},
    \end{aligned}
\end{equation}
where eq. \ref{bhat} was used in the fourth line. Hence, Frobenius' theorem implies that the vorticity tensor of any null geodesic congruence must vanish\footnote{Such a strong result is not valid for timelike geodesic congruence. Nonetheless, Frobenius' theorem also implies that in order for a congruence of timelike curves to be orthogonal to a hypersurface, its vorticity tensor must also vanish \cite{Poisson2004}.}. These conclusions lead to the following result.

\begin{proposition}\label{prop1}
    Let $\ell^{\mu}$ denote the tangent vector to a null geodesic congruence and $R_{\mu\nu}\ell^{\mu}\ell^{\nu}\geq 0$. If the expansion of the congruence attains the negative value $\theta_0$ at any point on a geodesic in the congruence, then $\theta\to -\infty$ along that geodesic in affine parameter $\lambda\leq 2/|\theta_0|$, assuming that such geodesic extends that far.
\end{proposition}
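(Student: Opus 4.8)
The plan is to collapse Raychaudhuri's equation, eq.~\ref{eq13}, into a first-order differential inequality for $\theta$ alone, and then integrate it. First I would discard the terms that cannot drive the expansion upward. Because the congruence consists of null geodesics, Frobenius' theorem (as established in the discussion preceding this proposition) forces the vorticity to vanish, $\omega_{\mu\nu}=0$, so the term $\omega^{\mu\nu}\omega_{\mu\nu}$ disappears. The shear term $-\sigma^{\mu\nu}\sigma_{\mu\nu}$ is manifestly nonpositive since $\sigma_{\mu\nu}$ is purely transverse (orthogonal to both $\ell^{\mu}$ and $\eta^{\mu}$, hence ``spacelike''), and the hypothesis $R_{\mu\nu}\ell^{\mu}\ell^{\nu}\geq 0$ makes the last term nonpositive as well. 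Therefore, along the geodesic in question,
\begin{equation}
    \frac{d\theta}{d\lambda}\leq -\frac{1}{2}\theta^2.
\end{equation}

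Next I would linearise this by the substitution $f=\theta^{-1}$, which is legitimate as long as $\theta$ stays away from zero. The inequality above already shows $d\theta/d\lambda\leq 0$, so $\theta$ is nonincreasing: if it takes the negative value $\theta_0$ at some point — which, after an affine reparametrisation, I place at $\lambda=0$ — then $\theta$ can never climb back through $0$ (that would require a positive derivative somewhere), so it remains strictly negative for every $\lambda\geq 0$ in the domain of the geodesic, and $f$ is well defined and negative there. Differentiating, $df/d\lambda=-\theta^{-2}\,(d\theta/d\lambda)\geq \frac{1}{2}$, and integrating from $0$ to $\lambda$ gives
\begin{equation}
    f(\lambda)\;\geq\; f(0)+\frac{\lambda}{2}\;=\;-\frac{1}{|\theta_0|}+\frac{\lambda}{2}.
\end{equation}

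Finally I would play the two properties of $f$ against each other: $f$ is strictly increasing (its derivative is $\geq\frac{1}{2}$), it is negative wherever defined, and it is bounded below by a function that reaches $0$ at $\lambda=2/|\theta_0|$. An increasing function bounded above by $0$ has a limit $\leq 0$, while the lower bound forces that limit to be $\geq 0$; hence $f(\lambda)\to 0^-$, i.e.\ $\theta=1/f\to -\infty$, at some $\lambda_1\leq 2/|\theta_0|$ — provided, as assumed, the geodesic is affinely extendible at least that far. The argument is short and there is no genuine obstacle; the only point that needs care is the justification that the reciprocal substitution never breaks down before the focal point, i.e.\ that $\theta$ cannot return to $0$, which is precisely why $d\theta/d\lambda\leq 0$ is extracted before inverting. (One may also remark that $\theta\to -\infty$ marks a caustic of the congruence, not necessarily a curvature singularity of the spacetime, but that is interpretation rather than part of the proof.)
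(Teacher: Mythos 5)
Your proof is correct and takes essentially the same route as the paper's: drop the vorticity (vanishing by Frobenius' theorem for a null geodesic congruence), the manifestly nonpositive shear term, and the Ricci term via the hypothesis, reduce Raychaudhuri's equation to $d\theta/d\lambda\leq-\tfrac{1}{2}\theta^2$, invert to $\theta^{-1}$, and integrate to force $\theta\to-\infty$ within $\lambda\leq 2/|\theta_0|$. Your extra care in justifying that $\theta$ stays strictly negative before inverting is a welcome, though minor, refinement of the paper's argument.
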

\begin{proof}
Under the assumption that $R_{\mu\nu}\ell^{\mu}\ell^{\nu}\geq 0$ and that the congruence is of null geodesics, Raychaudhuri's equation implies that
\begin{equation}
    \frac{d\theta}{d\lambda}+\frac{1}{2}\theta^2\leq 0,
\end{equation}
which can be rewritten as
\begin{equation}
    \frac{d}{d\lambda}(\theta^{-1})\geq 0,
\end{equation}
and hence
\begin{equation}\label{eq15}
    \theta^{-1}\geq \theta_0^{-1}+\frac{1}{2}\lambda,
\end{equation}
where $\theta_0$ is the initial value of $\theta$. If the congruence is initially converging, eq. \ref{eq15} implies that $\theta\to -\infty$ within an affine parameter $\lambda\leq 2/|\theta_0|$.
\end{proof}

This result can be interpreted as stating that, in a spacetime where Einstein's equation and the weak or strong energy condition hold, a congruence of null geodesics will develop caustics in a finite affine parameter. A \textit{caustic} is a point at which some curves in the congruence intersect, which is merely a singularity of the congruence (i.e., a point in which it is not defined), and implies nothing regarding pathologies in the structure of spacetime.

We conclude this section by noting that the assumption that the null geodesics are affinely parametrized is merely a way to simplify our analysis. In particular, the following version of Raychaudhuri's equation is valid for arbitrarily parametrized null geodesic congruence, 
\begin{equation}\label{eqq2}
    \frac{d\theta}{d\lambda}=\kappa\theta-\frac{1}{2}\theta^2-\sigma^{\mu\nu}\sigma_{\mu\nu}+\omega^{\mu\nu}\omega_{\mu\nu}-R_{\mu\nu}\ell^{\mu}\ell^{\nu},
\end{equation}
which can be readily derived from \ref{eq12} by using the geodesic equation, $\ell^{\nu}\nabla_{\nu}\ell^{\mu}=\kappa\ell^{\mu}$, and that the expansion is now defined as 
\begin{equation}
    \theta=\nabla^{\mu}\ell_{\mu}-\kappa.
\end{equation}
Consequently, the results of prop. \ref{prop1} are still valid, the only difference being in the limit of the parameter for the development of caustics. 

\section{Conjugate points}\label{conj}

Conjugate points are of interest because they represent points that can be almost joined by a family of geodesics. More precisely, consider a null geodesic, $\gamma$, with tangent $\ell^{\mu}$. If a vector field, $s^{\mu}$, is a solution of the Jacobi equation, 
\begin{equation}
   \ell^{\nu}\nabla_{\nu}(\ell^{\alpha}\nabla_{\alpha}s^{\mu})=-R_{\nu\alpha\beta}\mathstrut^{\mu}\ell^{\nu}s^{\alpha}\ell^{\beta},
\end{equation}
it is referred to as a \textit{Jacobi field} on $\gamma$. Since it is a deviation vector, a Jacobi field can be interpreted as the separation of two “infinitesimally nearby” geodesics. A pair of points $a,\;a'\in \gamma$ are said to be \textit{conjugate along $\gamma$} if there exists a Jacobi field which is not identically zero but vanishes at both $a$ and $a'$. Thus, conjugate points can be interpreted as points in which a “infinitesimally nearby” geodesic to $\gamma$ intersects it at both $a$ and $a'$. The following proposition strengthens this interpretation, as it shows that the existence of conjugate points is associated with the behavior of the expansion of a congruence.
\begin{proposition}\label{p1}
    Let $\gamma$ be a null geodesic and $a,\;a'\in\gamma$. Suppose that the congruence of null geodesics which $\gamma$ is a part of emanates from $a$. Then $a'$ is conjugate to $a$ if and only if $\theta\to-\infty$ at $a'$.
\end{proposition}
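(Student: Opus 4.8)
First I would reduce the statement to a fact about the matrix of Jacobi fields along $\gamma$. Parametrize $\gamma$ by an affine parameter $\lambda$ with $a$ at $\lambda=0$ and $a'$ at $\lambda=\lambda'>0$, and fix a basis $\{e^{\mu}_{2},e^{\mu}_{3}\}$ of the transverse space $\hat{V}$ parallel propagated along $\gamma$. In this frame the Jacobi equation for a (transverse) Jacobi field becomes a linear second order system with smooth $\lambda$‑dependent coefficients; let $A(\lambda)$ be the $2\times2$ matrix solution determined by $A(0)=0$ and $\dot{A}(0)=I$, so that the Jacobi field with initial velocity $v$ at $a$ is $A(\lambda)v$. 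Because the congruence containing $\gamma$ emanates from $a$, each of its deviation vectors vanishes at $a$ and hence restricts on $\gamma$ to $A(\lambda)v$ for a constant $v$; inserting this into the relation $\ell^{\alpha}\nabla_{\alpha}s^{\mu}=\hat{B}^{\mu}{}_{\nu}s^{\nu}$ (which follows, after projecting onto $\hat{V}$, from eq.~\ref{bhat} and the preceding discussion) yields $\dot{A}=\hat{B}A$, so $\hat{B}=\dot{A}A^{-1}$ wherever $A$ is invertible, and by Jacobi's formula $\theta=\hat{B}^{\mu\nu}h_{\mu\nu}=\frac{d}{d\lambda}\ln|\det A|$.

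With this dictionary one implication is immediate. If $\theta\to-\infty$ at $\lambda'$ then $\hat{B}$ cannot stay bounded near $\lambda'$; since $A$ and $\dot{A}$ are smooth this can only happen if $\det A(\lambda')=0$, so $A(\lambda')v=0$ for some $v\neq0$, and then $J(\lambda)=A(\lambda)v$ is a Jacobi field vanishing at $a$ and $a'$ that is not identically zero (as $\dot{J}(0)=v\neq0$), i.e. $a$ and $a'$ are conjugate. Conversely, a nontrivial Jacobi field $J$ with $J(0)=0=J(\lambda')$ is exactly $A(\cdot)v$ with $v=\dot{J}(0)\neq0$, so $\det A(\lambda')=0$. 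What remains, and is the genuinely delicate part, is to upgrade $\det A(\lambda')=0$ to $\theta\to-\infty$: a priori $\frac{d}{d\lambda}\ln|\det A|$ could tend to $+\infty$ or fail to converge.

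For that step I would use that the vorticity tensor of a null geodesic congruence vanishes --- a consequence of Frobenius' theorem (Theorem~\ref{frobenius}) --- so that $\hat{B}_{\mu\nu}$ is symmetric; equivalently, using the symmetry of the tidal operator $R_{\mu\alpha\nu\beta}\ell^{\alpha}\ell^{\beta}$ together with $A(0)=0$, the ``Lagrangian'' identity $A^{T}\dot{A}=\dot{A}^{T}A$ holds for all $\lambda$. Applying this identity to a vector $v$ in the kernel of $A(\lambda')$ shows that $\dot{J}(\lambda')$ (nonzero, by uniqueness for the Jacobi equation applied to $J=A(\cdot)v$) is orthogonal to the image of $A(\lambda')$; expanding $A$ to first order about $\lambda'$ then forces $\det A$ to have a zero there of finite order $k\ge1$. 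Hence $A$ is nonsingular on a punctured left neighborhood of $\lambda'$ and $\theta=\frac{d}{d\lambda}\ln|\det A|\sim\frac{k}{\lambda-\lambda'}\to-\infty$ as $\lambda\nearrow\lambda'$, which closes the argument. I expect the main obstacle to be precisely this sign‑and‑order bookkeeping near the conjugate point: it is where the hypotheses ``null geodesic congruence'' (vanishing vorticity) and ``emanating from $a$'' (the data $A(0)=0$, $\dot{A}(0)=I$) are genuinely used, and where a hurried treatment tends to be imprecise.
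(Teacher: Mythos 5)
Your proof is correct and follows essentially the same route as the paper: the matrix of Jacobi fields $S(\lambda)$ determined by $S(0)=0$, $\dot{S}(0)=I$, the identification $\theta=\frac{d}{d\lambda}\ln|\det S|$ via $\hat{B}=\dot{S}S^{-1}$, and the characterization of conjugacy by $\det S(\lambda')=0$. Your closing argument via the Lagrangian identity $S^{T}\dot{S}=\dot{S}^{T}S$, showing that $\det S$ vanishes to finite order so that $\theta\sim k/(\lambda-\lambda')\to-\infty$, is in fact more careful than the paper's corresponding one-line assertion about the derivative of $\det S$, and fills a step the paper glosses over.
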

\begin{proof}
    Let $\ell^{\mu}$ be the tangent to $\gamma$, being parametrized by $\lambda$. The components of the Jacobi field, $s^{\mu}\in\hat{V_a}$, of $\gamma$, obey the linear ordinary differential equations at each $a\in\gamma$
\begin{equation}\label{eq21}
   \frac{d^2 s^a}{d\lambda^2}=-R_{bcd}\mathstrut^{a}\ell^{b}s^{c}\ell^{d},
\end{equation}
and thus, $s^a(\lambda)$ must depend linearly on the initial conditions, i.e.,
\begin{equation}
    s^a(\lambda)=S^a\mathstrut_b(\lambda)\frac{ds^{b}}{d\lambda}(0)+S'^a\mathstrut_b(\lambda)s^b(0).
\end{equation}
However, since the congruence emanates from $a$, $s^a(0)=0$, so that
\begin{equation}\label{eq22}
   s^a(\lambda)=S^a\mathstrut_b(\lambda)\frac{ds^{b}}{d\lambda}(0),
\end{equation}
from which one can deduce that $S^{a}\mathstrut_{b}(0)=0$ and $dS^{a}\mathstrut_{b}/d\lambda(0)=\delta^{a}\mathstrut_b$. By combining eqs. \ref{eq21} and \ref{eq22}, one obtains
\begin{equation}\label{eq23}
   \frac{d^2S^{a}\mathstrut_{b}}{d\lambda^2}=-R_{cde}\mathstrut^{a}\ell^{c}\ell^{e}S^{d}\mathstrut_{b}.
\end{equation}
Since $a'$ will be conjugate to $a$ if and only if there exists nontrivial initial data for which $s^{\mu}=0$ at $a'$, then by eq. \ref{eq22}, the necessary and sufficient condition of conjugacy is that $\text{det}\;(S^{\mu}\mathstrut_{\nu})=0$ at $a'$. From this, it follows that $\text{det}\;(S^{\mu}\mathstrut_{\nu})\neq0$ between conjugate points. As such, it is possible to study the condition on the determinant of $S^{\mu}\mathstrut_{\nu}$ by noting that it must be related to the tensor field $B_{\mu\nu}=\nabla_{\nu}\ell_{\mu}$ of the congruence. This relation can be derived from
\begin{equation}
    \frac{ds^{a}(\lambda)}{d\lambda}=\ell^{b}\nabla_{b}s^{a}=B^{a}\mathstrut_{b}s^{b},
\end{equation}
which follows from the fact that the Jacobi field and $\ell^{\mu}$ commute. From eq. \ref{eq22}, one then has
\begin{equation}
    \frac{dS^{a}\mathstrut_{b}(\lambda)}{d\lambda}=B^{a}\mathstrut_{c}S^{c}\mathstrut_{b},
\end{equation}
which in matrix form reads
\begin{equation}
    \frac{dS(\lambda)}{d\lambda}=BS(\lambda),\quad B=\frac{1}{S}\frac{dS(\lambda)}{d\lambda}.
\end{equation}
In order to proceed, it is necessary to make use of the result that the derivative and trace operator commute, i.e,
\begin{equation}
    \text{tr}\left(\frac{dA(\lambda)}{d\lambda}\right)=\frac{d}{d\lambda}\left[\text{tr}(A(\lambda))\right],
\end{equation}
and the identity $\text{tr}(A(\lambda))=\ln{\left[\text{det}(\exp{A(\lambda)})\right]}$ which is valid for any invertible matrix $A$ \cite{Magnus1999}. The definition of the expansion then yields
\begin{equation}
\begin{aligned}
\theta &=\text{tr}(B)\\
&=\text{tr}\left(\frac{1}{S}\frac{dS(\lambda)}{d\lambda}\right)\\
&=\text{tr}\left(\frac{d}{d\lambda}\left[\ln{S(\lambda)}\right]\right)\\
&=\frac{d}{d\lambda}(\ln|\text{det} (S(\lambda))|)\\
&=\frac{1}{\text{det} (S(\lambda))}\frac{d}{d\lambda}\text{det} (S(\lambda)).
\end{aligned}
\end{equation}
Now, since the map $S^{a}\mathstrut_{b}(\lambda)$ satisfies eq. \ref{eq23}, the derivative of its determinant cannot vanish anywhere on $\gamma$. Hence, $\theta\to-\infty\Leftrightarrow\text{det} (S(\lambda))\to 0$. 
\end{proof}

In other words, conjugate points are associated with caustics in a congruence. Namely, one can use proposition \ref{prop1} to also deduce the existence of a point conjugate to the one at which the congruence emanates if the expansion attains a negative value at any point on the congruence. The next result states an important consequence of the existence of conjugate points on null geodesics, a proof of which can be found in \cite{Penrose1972}.

\begin{theorem}\label{theorem4}
    Let $\gamma$ be a null geodesic and $a,\;a'\in\gamma$. If there is a point conjugate to $a$ in $(a,a')$, then there is a timelike curve connecting $a$ to $a'$.
\end{theorem}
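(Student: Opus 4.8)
The plan is to show directly that $a'\in I^+(a)$, i.e.\ to produce a future directed timelike curve from $a$ to $a'$, by a second‑variation argument built around the Jacobi field that witnesses the conjugacy; throughout I work in the transverse spaces $\hat V$ of \S\;\ref{null}. Let $p\in(a,a')$ be conjugate to $a$ along $\gamma$, with null tangent $\ell^\mu$ and affine parameter $\lambda$. By definition there is a transverse Jacobi field $Z^\mu$ on $\gamma|_{[a,p]}$, not identically zero, with $Z^\mu=0$ at $a$ and at $p$. Writing $\dot Z^\mu$ for the transverse part of $\ell^\nu\nabla_\nu Z^\mu$, I first note that $\dot Z^\mu|_p\neq 0$: otherwise $Z^\mu$ would solve the linear second‑order Jacobi equation in $\hat V$ with vanishing data at $p$ and hence vanish identically. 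I then extend $Z^\mu$ to a field $V^\mu$ on all of $\gamma|_{[a,a']}$ by setting $V^\mu\equiv 0$ on $[p,a']$; this $V^\mu$ is continuous, piecewise smooth, nonzero, transverse, vanishes at both endpoints, and has a corner at $p$ with derivative jump $\Delta^\mu:=-\dot Z^\mu|_p\neq0$.

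Next I bring in the index form: for transverse fields $W^\mu,U^\mu$ along $\gamma|_{[a,a']}$ vanishing at $a,a'$ put
\[
  I(W,U)=\int_a^{a'}\!\bigl[\,g_{\mu\nu}(\ell^\alpha\nabla_\alpha W^\mu)(\ell^\beta\nabla_\beta U^\nu)-R_{\alpha\mu\beta\nu}\,\ell^\alpha\ell^\beta W^\mu U^\nu\,\bigr]d\lambda,
\]
with the curvature term placed so that stationarity of $I(W,\,\cdot\,)$ is the Jacobi equation; this is the second variation of the energy of $\gamma$. A standard integration by parts on $[a,p]$ using the Jacobi equation and $Z^\mu(a)=Z^\mu(p)=0$, together with $V^\mu\equiv0$ on $[p,a']$, gives $I(V,V)=0$. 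The key point is that $V^\mu$ is a broken Jacobi field with a genuine corner, which lets one push $I$ strictly below zero: pick a smooth transverse field $X^\mu$ vanishing at $a,a'$ with $X^\mu(p)=\Delta^\mu$, so that $I(V+\varepsilon X,V+\varepsilon X)=I(V,V)+2\varepsilon I(V,X)+\varepsilon^2 I(X,X)$, and integrating by parts in $I(V,X)$ (again using that $V^\mu$ solves the Jacobi equation off $p$) leaves only the corner term $I(V,X)=-g_{\mu\nu}\Delta^\mu\Delta^\nu$, which is strictly negative because $\Delta^\mu$ is a nonzero transverse, hence spacelike, vector. Thus for small $\varepsilon>0$ one has $I(V+\varepsilon X,V+\varepsilon X)<0$; after smoothing off the corner, set $W^\mu:=V^\mu+\varepsilon X^\mu$, a smooth transverse field vanishing at $a,a'$ with $I(W,W)<0$.

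Finally I would convert $I(W,W)<0$ into a timelike curve. Consider the variation $\gamma_s(\lambda)=\exp_{\gamma(\lambda)}\!\bigl(sW^\mu(\lambda)\bigr)$ for $s\in[0,s_0)$: each $\gamma_s$ runs from $a$ to $a'$ since $W^\mu$ vanishes at the endpoints, and $\gamma_0=\gamma$. Because $\gamma$ is a null geodesic and $W^\mu$ is transverse, the first $s$‑derivative of $g_{\mu\nu}\dot\gamma_s^\mu\dot\gamma_s^\nu$ vanishes at $s=0$ pointwise in $\lambda$, while the second $s$‑derivative of $\int_a^{a'}g_{\mu\nu}\dot\gamma_s^\mu\dot\gamma_s^\nu\,d\lambda$ at $s=0$ is a positive multiple of $I(W,W)<0$. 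Hence that integral is negative for small $s>0$, so $\gamma_s$ is timelike on some subinterval of $[a,a']$; concatenating a timelike subarc with the adjacent causal arcs of $\gamma_s$ and ``tilting'' the tangent at the joints, exactly as in the deformation arguments of \S\;\ref{causal}, yields a future directed timelike curve from $a$ to $a'$, i.e.\ $a'\in I^+(a)$.

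I expect the main obstacle to be this last step: passing from ``the index form is negative'' to ``there is an honest causal curve from $a$ to $a'$ that is timelike somewhere'' requires ensuring that $\gamma_s$ (or a suitable modification near the endpoints) stays causal along all of $[a,a']$ rather than merely timelike on a subinterval, the clean route being a compactness/limiting argument approximating $\gamma$ by nearby timelike geodesics, as carried out in \cite{Penrose1972}. A secondary, more routine nuisance is the bookkeeping of the transverse projections (working modulo $\ell^\mu$) needed to make ``$\dot Z^\mu|_p\neq0$'' and ``$\Delta^\mu$ spacelike'' literally correct.
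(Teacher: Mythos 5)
The paper itself does not prove theorem \ref{theorem4}: it is stated and the proof is deferred to \cite{Penrose1972}, so there is no internal argument to compare yours against. Judged on its own, the first two thirds of your proposal is the standard index-form argument and is essentially correct: working in the transverse space $\hat{V}$ of \S~\ref{null}, where the induced metric $h_{\mu\nu}$ is positive definite (so your claims that $\dot Z^\mu|_p\neq0$ and that the jump $\Delta^\mu$ is spacelike are legitimate once everything is read modulo $\ell^\mu$), the broken Jacobi field $V^\mu$, the corner perturbation $V^\mu+\varepsilon X^\mu$, the corner term $I(V,X)=-h_{\mu\nu}\Delta^\mu\Delta^\nu<0$, and the resulting smooth transverse $W^\mu$ with $I(W,W)<0$ all go through (up to sign conventions) and coincide with what the cited literature does. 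Your ``secondary nuisance'' about projections is indeed only bookkeeping.

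The genuine gap is exactly where you suspect it, and the repair you sketch does not work as written. Negativity of the second variation of $\int g_{\mu\nu}\dot\gamma^{\mu}_{s}\dot\gamma^{\nu}_{s}\,d\lambda$ only shows that $\gamma_s$ is timelike on \emph{some} subinterval; elsewhere $\gamma_s$ will in general be spacelike, so there are no ``adjacent causal arcs of $\gamma_s$'' to concatenate, and the corner-smoothing arguments of \S~\ref{causal} have nothing causal to act on. This is precisely the step that carries the content of the theorem in the null case, and it cannot be bypassed by the integrated second variation alone. The two standard ways to close it are: (i) a pointwise rather than integrated variation argument, in which one adds to $W^\mu$ a multiple of an auxiliary (null or timelike) vector whose coefficient is chosen by solving an ordinary differential equation along $\gamma$ so that $\partial_s\bigl[g_{\mu\nu}\dot\gamma^{\mu}_{s}\dot\gamma^{\nu}_{s}\bigr]<0$ at every $\lambda$, making the deformed curves timelike everywhere for small $s$ (this is essentially the route of \cite{Hawking1973}); or (ii) the geometric route of \cite{Penrose1972}, which uses the refocusing of the null congruence emanating from $a$ near the conjugate point to produce a curve from $a$ to $a'$ that is causal everywhere but is not an unbroken null geodesic, and then invokes the fact—this is the combination of propositions \ref{prop2345} and \ref{prop23456} in the paper—that such a curve forces $a'\in I^{+}(a)$. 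Either of these completes your argument; as it stands, the proposal stops one essential step short of the conclusion.
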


Hence, the existence of conjugate points along a null geodesic implies that it can be smoothly deformed to yield a timelike curve. This can be interpreted as the failure of the null geodesic to remain in the boundary of the causal future (past) of a set, i.e., it must have entered the chronological future (past) of the set. The following remarks and results will be stated for the causal future of a set, but they are also valid for the causal past, with adequate changes in conditions.

A similar notion of conjugacy can be defined for a point and a two-dimensional spacelike submanifold (i.e., a surface), $S$, of a spacetime $(M,g_{\mu\nu})$. At each point $a\in S$, there exists two future directed null vectors that are orthogonal to $S$. An example of such a surface is the one spanned by the deviation vectors of a null geodesic congruence. As discussed in \S\;\ref{null}, $\{e^{\mu}_2,e^{\mu}_3\}$ span such a surface, while $\{\ell^{\mu},\eta^{\mu}\}$ are the future directed null vectors (here it is assumed that a time orientation is provided by $t^{\mu}=(1,0,0,0)$). Due to the choice of coordinates on a local Lorentz frame, $\ell^{\mu}$ is referred to as the tangent to the “outgoing” null congruence, and $\eta^{\mu}$ as the tangent to the “incoming” null congruence. Fig. \ref{fig:surface} exemplifies such a surface, in which one space dimension is suppressed, so that $S$ is depicted as a curve.

\begin{figure}[h]
\centering
\includegraphics[scale=1.2]{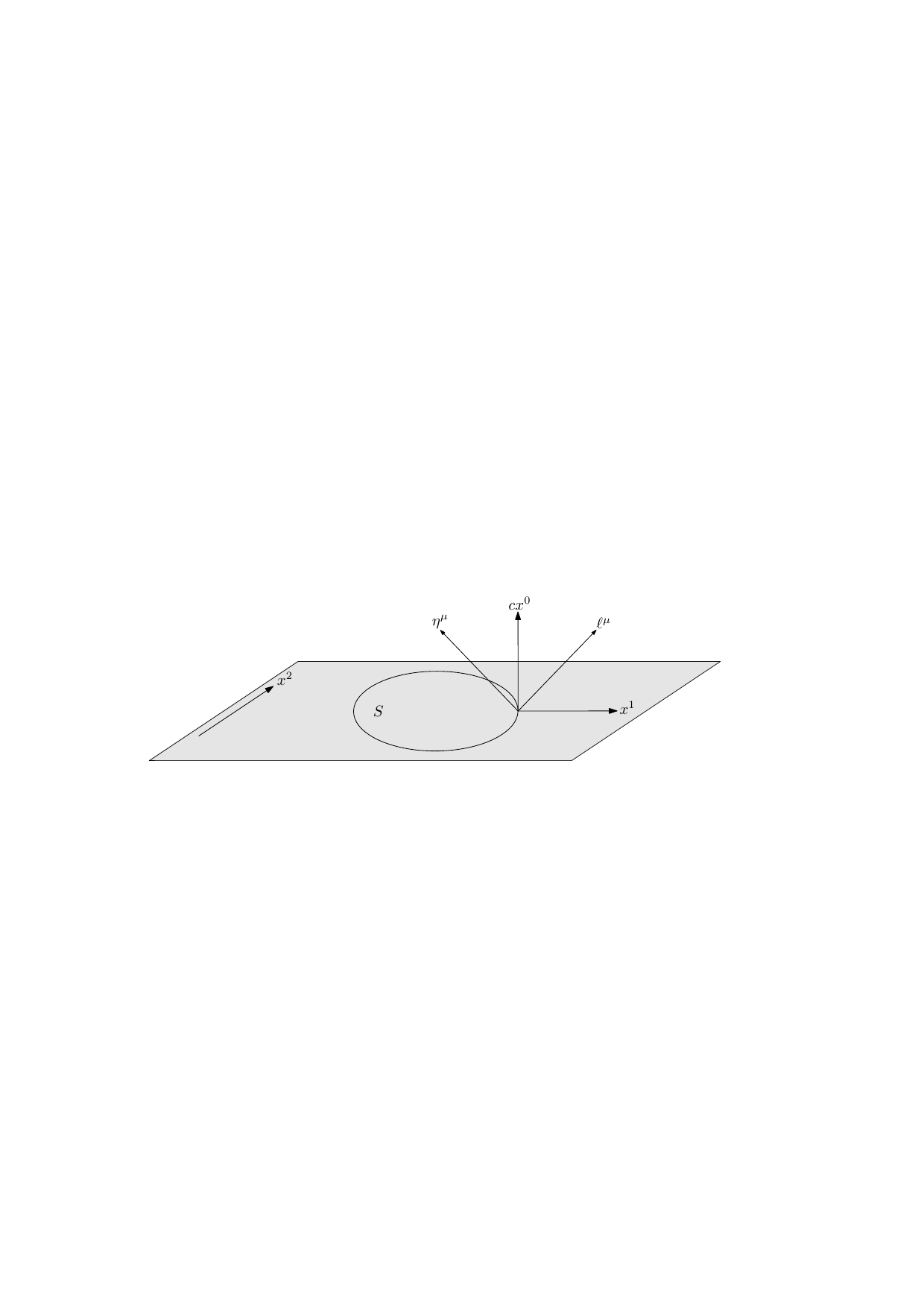} 
\caption{Surface spanned by vectors orthogonal to the ``incoming'' and ``outgoing'' null vectors, or equivalently, to $(\partial_{x^0})^{\mu}$ and $(\partial_{x^1})^{\mu}$.}
\caption*{Source: By the author.}
\label{fig:surface}
\end{figure}

 With these remarks, the notion of conjugacy can be defined in the following manner. Let $\gamma$ be a null geodesic orthogonal to $S$ (which will be the integral curve of $\ell^{\mu}$ or $\eta^{\mu}$) and $a'\in\gamma$ but $a'\not\in S$. The point $a'$ is said to be conjugate to $S$ along $\gamma$ if there exists a Jacobi field, $s^{\mu}$, on $\gamma$ which is nonzero on $S$ but vanishes at $a'$. By the same arguments of the conjugacy of two points, $a'$ will be conjugate to $S$ if and only if the expansion of the congruence of geodesics is orthogonal (the one generated by $\ell^{\mu}$ or $\eta^{\mu}$) to $S$ approaches $-\infty$ at $a$. Using proposition \ref{prop1} and a development analogous to that of the proof of proposition \ref{p1}, one can derive the following result.

\begin{proposition}\label{p2}
    Let $(M,g_{\mu\nu})$ be a spacetime satisfying $R_{\mu\nu}\ell^{\mu}\ell^{\nu}\geq 0$ for all null $\ell^{\mu}$. Let $S$ be a two-dimensional spacelike submanifold of $M$ such that the expansion of an orthogonal null geodesic congruence generated by $\ell^{\mu}$ has the negative value $\theta_0$ at $a\in S$. Then within finite parameter, there exists a point $a'$ conjugate to $S$ along the null geodesic $\gamma$ passing through $a$, assuming that $\gamma$ extends that far.
\end{proposition}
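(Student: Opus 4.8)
The plan is to reduce the statement to Proposition~\ref{prop1} applied to the null geodesic congruence orthogonal to $S$, together with the criterion --- motivated in the discussion preceding the proposition --- that $a'$ is conjugate to $S$ along $\gamma$ precisely when the expansion of that congruence diverges to $-\infty$ at $a'$.

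First I would make the congruence explicit. Taking $\ell^{\mu}$ to be the future directed null normal field of $S$ specified in the hypothesis, I extend it along the null geodesics it generates so as to obtain a null geodesic congruence defined near $S$ whose tangent on $S$ is everywhere orthogonal to $S$. By Frobenius' theorem (Theorem~\ref{frobenius}) this congruence is hypersurface orthogonal, hence $\omega_{\mu\nu}=0$; moreover its transverse space $\hat V_a$ is two dimensional and is naturally identified with $T_aS$. The deviation vectors that are nonzero on $S$ --- the $S$-Jacobi fields, whose transversality datum at $S$ is fixed by the restriction of $B_{\mu\nu}=\nabla_{\nu}\ell_{\mu}$ to $S$ --- span this two dimensional space, so, exactly as in the proof of Proposition~\ref{p1}, one may write $s^{a}(\lambda)=S^{a}{}_{b}(\lambda)\,w^{b}$ for a solution matrix $S^{a}{}_{b}$ of the Jacobi system. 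Then $a'$ is conjugate to $S$ iff $\det(S^{a}{}_{b})=0$ at $a'$, while $\theta=\dfrac{1}{\det S}\dfrac{d}{d\lambda}\det S$ shows that $\det(S^{a}{}_{b})\to 0$ iff $\theta\to -\infty$; this establishes the criterion.

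With the criterion in hand the argument is that of Proposition~\ref{prop1}. Along $\gamma$, Raychaudhuri's equation (eq.~\ref{eq13}) together with $\omega_{\mu\nu}=0$, $\sigma^{\mu\nu}\sigma_{\mu\nu}\ge 0$ and the hypothesis $R_{\mu\nu}\ell^{\mu}\ell^{\nu}\ge 0$ gives $\dfrac{d\theta}{d\lambda}+\dfrac12\theta^{2}\le 0$, i.e. $\dfrac{d}{d\lambda}(\theta^{-1})\ge \dfrac12$. Integrating from $a$, where $\theta=\theta_0<0$, yields $\theta^{-1}(\lambda)\ge \theta_0^{-1}+\tfrac12\lambda$, so $\theta\to -\infty$ at some affine parameter $\lambda\le 2/|\theta_0|$, provided $\gamma$ extends that far. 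By the criterion, the corresponding point $a'\in\gamma$ is conjugate to $S$, which is the claim.

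The Raychaudhuri estimate is verbatim Proposition~\ref{prop1} and is routine. The step requiring care --- and the main obstacle --- is the first one: setting up the matrix $S^{a}{}_{b}(\lambda)$ with the correct initial data encoding the extrinsic geometry of $S$ (value in $T_aS$, derivative determined by $B_{\mu\nu}$ restricted to $S$) in place of the data $S^{a}{}_{b}(0)=0$, $dS^{a}{}_{b}/d\lambda(0)=\delta^{a}{}_{b}$ used for a single point, and checking that with these data the derivative of $\det S$ cannot vanish where $\det S$ does, so that the equivalence ``$a'$ conjugate to $S$ $\Leftrightarrow$ $\theta\to-\infty$ at $a'$'' holds with no extra hypotheses.
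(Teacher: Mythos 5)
Your proof follows exactly the route the paper intends: establish the conjugate-point criterion ``$\theta\to-\infty$ at $a'$'' for a surface by rerunning the matrix argument of Proposition \ref{p1} with initial data $S^{a}{}_{b}(0)$ invertible and $dS^{a}{}_{b}/d\lambda(0)$ fixed by the restriction of $B_{\mu\nu}$ to $S$, and then apply the Raychaudhuri focusing estimate of Proposition \ref{prop1} along $\gamma$. This is the same approach as the paper's (sketched) derivation; incidentally, your intermediate inequality $d(\theta^{-1})/d\lambda\geq 1/2$ is the correct form needed to conclude $\theta^{-1}\geq\theta_0^{-1}+\lambda/2$.
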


The requirement of $R_{\mu\nu}\ell^{\mu}\ell^{\nu}\geq 0$ for all $\ell^{\mu}$ null is a consequence of the fact that any two null vectors whose inner product does not vanish can be used to construct a surface that respects the properties of the proposition, simply by taking the orthogonal space, $\hat{V}$. Thus, such a condition could be weakened to require that $R_{\mu\nu}\ell^{\mu}\ell^{\nu}\geq 0$ be respected only for the tangent to the null congruence one wishes to analyze. Furthermore, the null geodesics need to be orthogonal to $S$ as a consequence of the definition of conjugacy of a surface and a point. In particular, if $\gamma$ were not orthogonal to $S$, the Jacobi field would not lie within $S$, and the definition of conjugacy would not be adequate. With this in mind, a similar proof like that of theorem \ref{theorem4} yields the following \cite{Penrose1972}.
\begin{theorem}\label{theorem5}
    Let $(M,g_{\mu\nu})$ be a spacetime, let $S$ be a two-dimensional spacelike submanifold of $M$ and let $\gamma$ be a differentiable causal curve from $S$ to $a$. Then the necessary and sufficient condition that $\gamma$ cannot be smoothly deformed to a timelike curve connecting $S$ and $a$ is that $\gamma$ be a null geodesic orthogonal to $S$ with no conjugate point to $S$ between $S$ and $a$.
\end{theorem}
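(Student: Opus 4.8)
The plan is to prove the two implications separately, modelling the argument on the proof of Theorem \ref{theorem4} with ``point conjugate to $a$ along $\gamma$'' replaced throughout by ``point conjugate to the surface $S$ along $\gamma$'', and adding one extra case to account for the orthogonality requirement. Since a null curve has vanishing length, the object whose critical points and second variation must be analysed is not arc length but the functional built from the transverse metric $h_{\mu\nu}$ and the map $\hat{B}_{\mu\nu}$ of \S\;\ref{null}, evaluated on transverse (screen-space) variation fields $V^{\mu}\in\hat{V}$; this is the common technical backbone of both directions.

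\emph{Necessity}, by contrapositive: assume $\gamma$ is not a null geodesic orthogonal to $S$ with no interior conjugate point to $S$, and exhibit a smooth deformation of $\gamma$, relative to its endpoints (the endpoint on $S$ being allowed to move within $S$), to a timelike curve from $S$ to $a$. There are three cases. (i) If $\gamma$ is not a null geodesic, cover it by convex normal neighbourhoods; by Proposition \ref{prop23456} and the local identity of the causal structure with that of Minkowski spacetime, any non-geodesic or ``broken'' causal segment can be replaced by a nearby timelike segment, and these local shortcuts patch together into the required deformation. (ii) If $\gamma$ is a null geodesic but its initial tangent $\ell^{\mu}$ at $p\in S$ is not orthogonal to the tangent plane $T_{p}S$, choose $X^{\mu}\in T_{p}S$ with $X^{\mu}\ell_{\mu}$ nonzero and of the sign that points ``to the timelike side''; sliding the past endpoint along $\exp_{p}(sX)$ gives a variation whose first variation reduces to the boundary term proportional to $X^{\mu}\ell_{\mu}$, so the varied causal curves acquire a timelike excess and can be rounded off to timelike curves from $S$ to $a$. (iii) If $\gamma$ is a null geodesic orthogonal to $S$ but has a point $q$ conjugate to $S$ strictly between $S$ and $a$, then --- exactly as in Theorem \ref{theorem4}, now using the surface version of Proposition \ref{p1} (conjugacy to $S$ is equivalent to the expansion of the orthogonal congruence diverging to $-\infty$, as produced by Proposition \ref{p2}) --- past $q$ the geodesic can be deformed to a timelike curve; composing with the null segment from $p$ to $q$ and smoothing the corner at $q$ yields a timelike curve from $S$ to $a$ homotopic to $\gamma$.

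\emph{Sufficiency}: I will establish the stronger claim that, if $\gamma$ is a null geodesic orthogonal to $S$ with no point conjugate to $S$ in the open segment between $S$ and $a$, then $a\notin I^{+}(S)$; since then no timelike curve from $S$ to $a$ exists at all, a fortiori $\gamma$ admits no deformation into one. For such a $\gamma$ the first variation of the functional vanishes (geodesic equation, together with the $S$-boundary term vanishing by orthogonality), and its second variation on transverse fields $V^{\mu}\in\hat{V}$ that vanish at $a$ and lie in $T_{p}S$ at $p$ is the index form $I(V,V)=\int_{p}^{a}\big(h_{\mu\nu}(\nabla_{\ell}V)^{\mu}(\nabla_{\ell}V)^{\nu}-R_{\mu\alpha\nu\beta}\ell^{\alpha}\ell^{\beta}V^{\mu}V^{\nu}\big)\,d\lambda$ together with a boundary contribution from the second fundamental form of $S$ (with the overall sign fixed by the metric signature). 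Using the Jacobi equation and Propositions \ref{p1} and \ref{p2}, one shows that $I$ is negative semidefinite on this space precisely when there is no point conjugate to $S$ in $(p,a)$, whereas if $a\in I^{+}(S)$ one can construct a transverse field with $I(V,V)>0$ --- equivalently, a causal, in fact almost everywhere timelike, curve from $S$ to $a$ close to $\gamma$ --- contradicting semidefiniteness. Hence the hypothesis forces $a\notin I^{+}(S)$, as claimed.

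The main obstacle is the sufficiency direction, and within it two points. First, setting up the null index form correctly: because null curves have zero length one must carry the entire second-variation computation through the transverse projection onto $\hat{V}$, relating $\nabla_{\ell}V$, the tidal term $R_{\mu\alpha\nu\beta}\ell^{\alpha}\ell^{\beta}$, and $\hat{B}_{\mu\nu}$ via the Jacobi equation, and verifying that the $S$-boundary term is governed by the second fundamental form of $S$. Second, the endpoint bookkeeping: the distinction between $I$ being negative \emph{semidefinite} versus negative \emph{definite} corresponds exactly to whether a point conjugate to $S$ is permitted at $a$ or excluded throughout $(p,a)$, so the open/closed interval conditions must be tracked with care. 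By contrast, the necessity cases are routine once (iii) is recognised as a verbatim adaptation of the Theorem \ref{theorem4} argument with Proposition \ref{p1} taken in its surface form, and (i)--(ii) as standard local-shortcut constructions.
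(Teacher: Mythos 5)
The paper itself does not prove this theorem---it cites Penrose and remarks only that the argument is similar to that of theorem \ref{theorem4}---so your proposal has to be judged against the standard variational argument. Your necessity direction follows that standard route and is fine in outline: a causal curve that is not a null geodesic, or a null geodesic meeting $S$ non-orthogonally, can be deformed to a timelike curve by local corner-rounding and endpoint-sliding arguments, and a point conjugate to $S$ strictly between $S$ and $a$ allows the deformation via the surface analogue of theorem \ref{theorem4} together with propositions \ref{p1} and \ref{p2} (in your case (iii) the concatenation first yields a causal curve with a timelike portion, which is then handled by your case (i); that is cosmetic).

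The sufficiency direction, however, contains a genuine error. You propose to prove the stronger statement that if $\gamma$ is a null geodesic orthogonal to $S$ with no point conjugate to $S$ before $a$, then $a\notin I^{+}(S)$. That statement is false. In Minkowski spacetime let $S\subset\{t=0\}$ be the spacelike surface $z=f(x,y)$ with $f\equiv 0$ for $x^{2}+y^{2}\leq 1$ and $f\equiv 1$ for $x^{2}+y^{2}\geq 2$ (smoothly interpolated in between), let $\gamma$ be the null geodesic leaving the origin in the $+z$ direction, and let $a=(t,x,y,z)=(3,0,0,3)$. Then $\gamma$ is orthogonal to $S$, and since $S$ is flat near the origin and spacetime is flat, the Jacobi fields of the orthogonal congruence along $\gamma$ never vanish, so there is no point conjugate to $S$ anywhere on $\gamma$; yet $a\in I^{+}(S)$, because the straight line from $q=(0,1.5,0,1)\in S$ to $a$ has time elapsed $3$ and spatial displacement $2.5$, hence is timelike. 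Absence of conjugate points is a variational obstruction along $\gamma$, not a global statement about $I^{+}(S)$---this is precisely why theorem \ref{theorem6} needs global hyperbolicity and theorem \ref{theorem7}, rather than following from theorem \ref{theorem5} alone. Correspondingly, your key step, that ``if $a\in I^{+}(S)$ one can construct a transverse field with $I(V,V)>0$, equivalently a causal curve from $S$ to $a$ close to $\gamma$'', is a non sequitur: a timelike curve from $S$ to $a$ may lie far from $\gamma$ (in the example above, points of $S$ near the origin but off $\gamma$ are not even in $J^{-}(a)$), and it induces no variation field along $\gamma$. The correct sufficiency argument is local and uses the notion of deformation essentially: with no focal points, the null geodesics orthogonal to $S$ near $\gamma$ form a smooth null hypersurface containing $\gamma$, and one shows that any curve reached from $\gamma$ by a smooth deformation through causal curves from $S$ to $a$ must remain a generator of (or stay on) this hypersurface and so can never become timelike; your index form is the right tool for controlling such nearby curves, but it cannot exclude $a\in I^{+}(S)$.
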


Finally, as a consequence of theorem \ref{theorem5} and the properties of a globally hyperbolic spacetime, it is possible to state and prove the following theorem.
\begin{theorem}\label{theorem6}
    Let $(M,g_{\mu\nu})$ be a globally hyperbolic spacetime and let $S$ be a compact orientable two-dimensional spacelike submanifold of $M$. Then every $a\in\partial J^+(S)$ lies on a future directed null geodesic starting from $S$ which is orthogonal to $S$ and has no conjugate point to $S$ between $S$ and $a$.
\end{theorem}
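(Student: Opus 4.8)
The plan is to reduce the statement to the deformation criterion of Theorem~\ref{theorem5} by first exploiting global hyperbolicity to guarantee that points of $\partial J^{+}(S)$ are genuinely reachable from $S$. The subtlety the hypotheses are there to handle is that, for a general spacetime, $\partial J^{+}(S)$ is only known to lie inside $\overline{J^{+}(S)}$, and it may contain points off which \emph{no} causal curve reaches $S$ (cf.\ fig.~\ref{fig:prop23455}); compactness of $S$ together with global hyperbolicity removes exactly this pathology, after which the achronal-boundary character of $\partial J^{+}(S)$ closes off the only bad alternative in Theorem~\ref{theorem5}.

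Concretely, fix $a\in\partial J^{+}(S)$. If $a\in S$ the assertion is immediate (the connecting geodesic being degenerate), so assume $a\notin S$. Since $S$ is compact and $(M,g_{\mu\nu})$ is globally hyperbolic, Theorem~\ref{theorem7} gives that $J^{+}(S)$ is closed, whence $\partial J^{+}(S)\subseteq J^{+}(S)$ and in particular $a\in J^{+}(S)$; equivalently, by Theorem~\ref{the3} — whose past-inextendible alternative is now excluded — $a$ lies on a future directed null geodesic $\gamma$ contained in $\partial J^{+}(S)$ with past endpoint on $S$. Either way this produces a differentiable future directed causal curve $\gamma$ from $S$ to $a$, to which Theorem~\ref{theorem5} applies.

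Next I would record that $a\notin I^{+}(S)$: by eq.~\ref{future3} one has $I^{+}(S)=\langle J^{+}(S)\rangle$, so $I^{+}(S)$ is disjoint from $\partial J^{+}(S)$. Theorem~\ref{theorem5} presents the dichotomy that $\gamma$ either can be smoothly deformed to a timelike curve joining $S$ and $a$, or else is a null geodesic orthogonal to $S$ with no point conjugate to $S$ between $S$ and $a$. The first horn would produce a timelike curve from a point of $S$ to $a$, i.e.\ $a\in I^{+}(S)$, contradicting the previous line; hence the second horn holds, which is precisely the conclusion. Here orientability of $S$ enters only insofar as it lets one speak coherently of the null geodesic congruences normal to $S$ and their expansions, which underlie the notion of conjugacy invoked by Theorem~\ref{theorem5}.

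I do not expect a genuinely hard step to remain: Theorems~\ref{theorem5} and~\ref{theorem7} carry all the analytic weight (the Jacobi-field/conjugate-point analysis and the closedness of $J^{+}$), and what is left is bookkeeping. The points deserving care are (i) checking that the curve handed to Theorem~\ref{theorem5} is an honest differentiable curve from $S$ to $a$ and disposing of the degenerate case $a\in S$, and (ii) noting that global hyperbolicity is genuinely essential here — without it $\partial J^{+}(S)$ can contain points not lying in $J^{+}(S)$ at all, for which the statement simply fails. One could also bypass Theorem~\ref{the3} entirely and merely use $a\in J^{+}(S)$ to pick \emph{any} causal curve from $S$ to $a$, since the argument above in fact shows that \emph{every} such curve must be a null geodesic orthogonal to $S$ with no conjugate point between $S$ and $a$.
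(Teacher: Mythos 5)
Your argument is correct and follows essentially the same route as the paper's proof: use Theorem~\ref{theorem7} (closedness of $J^+(S)$) together with Theorem~\ref{the3} to produce a null geodesic from $S$ to $a$, then apply Theorem~\ref{theorem5} and the fact that $a\in\partial J^+(S)$ implies $a\notin I^+(S)$ to rule out the deformable-to-timelike alternative. Your extra remarks (handling $a\in S$, noting where orientability and global hyperbolicity enter, and the observation that any differentiable causal curve from $S$ to $a$ must already be such a geodesic) are sound refinements of the same argument.
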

\begin{proof}
    From the remarks below theorem \ref{theorem7}, it follows that if $a\in\partial J^+(S)$, then it must be connected to $S$ by a null geodesic. By theorem \ref{theorem5}, if this null geodesic were not orthogonal to $S$ or had a conjugate point between $S$ and $a$, then it would be possible to deform it to produce a timelike curve connecting $S$ to $a$, which would mean that $a\not\in\partial J^+(S)$.
\end{proof}

\section{Singularities}\label{sing}

This section is devoted to a brief discussion of a singularity theorem which proves, under certain conditions, the development of singularities in the context of gravitational collapse. First, it should be noted that a precise definition of a singularity is significantly problematic, as none of the many attempts to indicate their presence in a spacetime seem to fully describe it in all the necessary aspects \cite{Wald1984}. Nevertheless, one satisfying way to characterize a spacetime to possess a singularity is by identifying the “holes” it leaves behind. In other words, one can define a spacetime to be singular by identifying geodesics that reach these “holes”, which one could justifiably assume to be due to the presence of a singularity \cite{Geroch1968}. 

In this manner, it is necessary to give a precise notion of what it means for a geodesic to reach such a pathology. Since geodesics are a property of the intrinsic spacetime structure, the failure of their affine parameters to extend to arbitrarily large values can be associated with an encounter with a pathology. In light of this, a geodesic is defined to be \textit{incomplete} if it is inextendible in at least one direction, but has only a finite range of affine parameter. It is easy to see that this definition precisely accounts for what happens when a geodesic encounters an ``removed point'', as the one exemplified in fig. \ref{fig:prop23455}. Consequently, a spacetime is said to be \textit{singular} if it possesses at least one incomplete geodesic. Although this definition does not give a perfect notion of a singularity or details about its nature, the pathology arising in a spacetime that has at least on incomplete geodesic is evident, and one can reason that this class of spacetimes earns the adjective “singular”. In fact, the notion of geodesic incompleteness is the one present in the singularity theorems \cite{Hawking1973}, and it is reasonable to believe that this concept is enough to conclude that spacetime pathologies $-$or singularities$-$ are predicted by general relativity under certain conditions. Essentially, this line of reasoning is based on the idea that observers or light rays that follow incomplete geodesics will end their existence in a finite affine parameter.

In order to state the pertinent theorem that proves geodesic incompleteness, it is necessary to define the notion of a trapped surface. Let $(M,g_{\mu\nu})$ be a spacetime. A \textit{trapped surface}, $T$, is a closed (i.e., compact and without boundary), two-dimensional spacelike submanifold of $M$ such that the expansion of both sets of orthogonal future directed null geodesics (e.g., the future directed “incoming” and “outgoing” families of null geodesics) is everywhere negative. Fig. \ref{trap234} illustrates a surface, $T$, which is emitting a flash of light. The surfaces $S'$ and $S''$ illustrate the behavior of the ``incoming'' and ``outgoing'' wavefronts of the light emitted, respectively. As exemplified for a single event $a\in T$, such surfaces are constructed by considering all events in $T$. If the area of $S'$ and $S''$ are both less\footnote{In the following, we shall refer to the incoming and outgoing families of null geodesic without the quote unquote, but the reader should recall that such nomenclature does not necessarily indicate the behavior of the geodesics. For example, for a trapped surface, the outgoing null geodesics, which one would expect to be ``naturally'' diverging from one another, are actually converging.} than the area of $T$, then $T$ is a closed trapped surface, as exemplified in fig. \ref{fig:trap12}. 
\begin{figure}[h]
  \begin{subfigure}[b]{0.5\textwidth}
  \centering
    \includegraphics[scale=1]{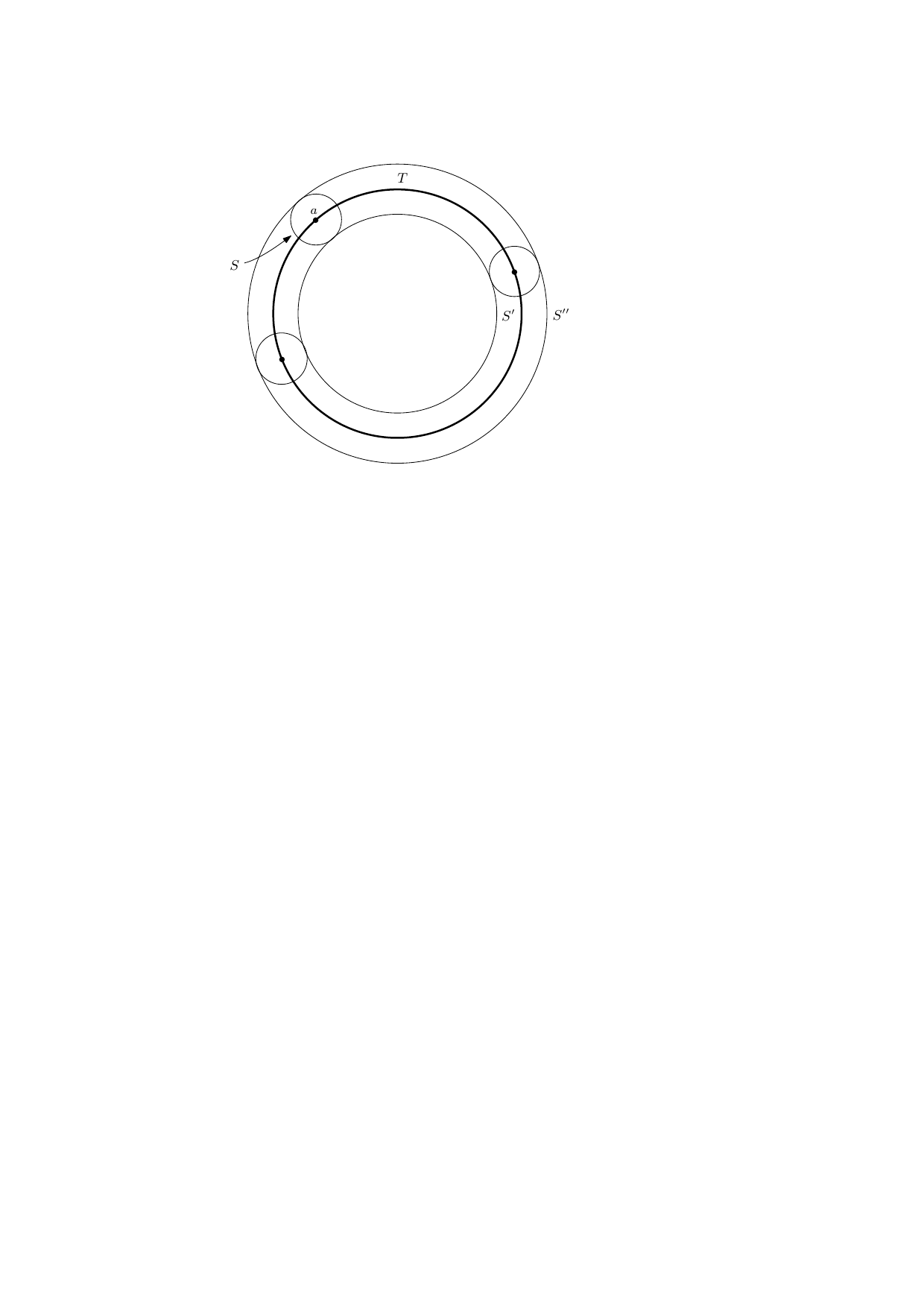}
    \caption{Behavior of the wave fronts of $T$.}
    \label{trap234}
  \end{subfigure}
  \hfill
  \begin{subfigure}[b]{0.5\textwidth}
  \centering
    \includegraphics[scale=1.08]{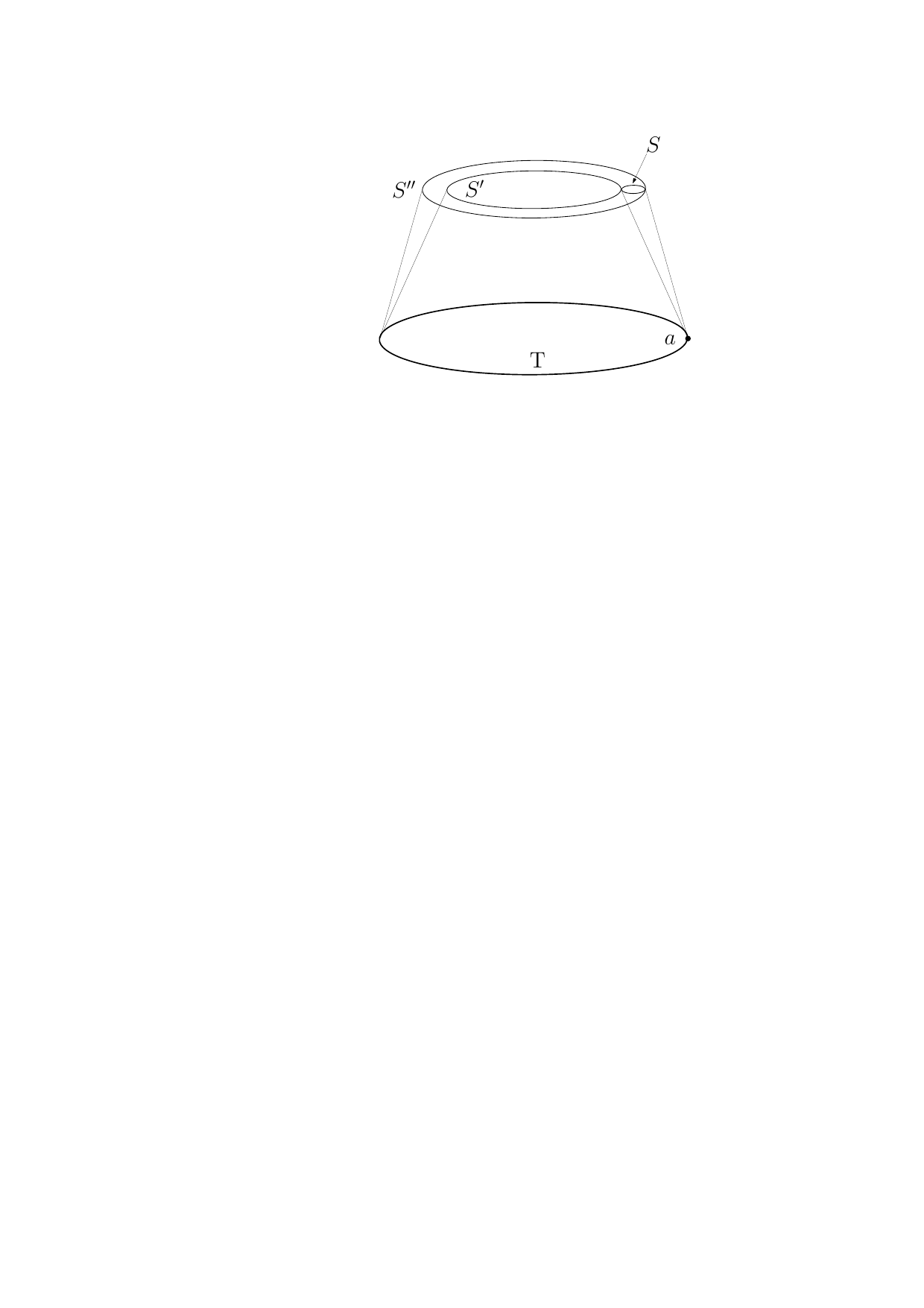}
    \caption{Example of a trapped surface.}
    \label{fig:trap12}
  \end{subfigure}
  \caption{Spacelike surface, $T$, and the orthogonal families of null geodesics.}\label{fig:trap1}
  \caption*{Source: Adapted from HAWKING; ELLIS \cite{Hawking1973}.}
\end{figure}

The requirement of $T$ to be closed is related to the idea that one wishes to interpret trapped surfaces in the context of spacetime curvature. In other words, if one does not require $T$ to be closed, the intersection of the past light cone of any two spacelike separated points in Minkowski spacetime would be an ``open trapped surface” \cite{Frolov1998}. As will be discussed below, trapped surfaces will be associated with regions of strong gravitational field, and the trivial ``open trapped surfaces” that exist in flat spacetime are uninteresting. Not only that, the compact property is essential to the proof of the pertinent theorem. Evidently, a trapped surface satisfies the requirements of propositions \ref{p2} and theorems \ref{theorem5} and \ref{theorem6}, which in combination with restrictions on the causal structure can be related to geodesic incompleteness. Such a relation is given by the next theorem, originally derived in \cite{Penrose1965}. 
\begin{theorem}\label{HawPen}
    Let $(M,g_{\mu\nu})$ be a globally hyperbolic spacetime with a noncompact Cauchy hypersurface satisfying $R_{\mu\nu}\ell^{\mu}\ell^{\nu}\geq 0$ for all  null $\ell^{\mu}$. If $(M,g_{\mu\nu})$ contains a trapped surface, then it is not null geodesically complete.
\end{theorem}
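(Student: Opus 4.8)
The plan is to argue by contradiction: suppose $(M,g_{\mu\nu})$ \emph{is} null geodesically complete, and deduce that its noncompact Cauchy hypersurface $\Sigma$ must in fact be compact. The first step exploits the compactness of the trapped surface $T$. The expansion of each of the two future directed null geodesic congruences issuing orthogonally from $T$ is a continuous, everywhere negative function on $T$, so each attains a maximum value $\theta_0<0$ on the compact set $T$. By Proposition \ref{p2} (one application per congruence), together with the curvature hypothesis $R_{\mu\nu}\ell^\mu\ell^\nu\geq 0$ and the completeness assumption (so the geodesics run far enough), every null geodesic leaving $T$ orthogonally contains a point conjugate to $T$ within affine parameter at most $2/|\theta_0|$, the uniform bound supplied by Proposition \ref{prop1}.

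The crux of the argument is the claim that $\partial J^+(T)$ is compact. By Theorem \ref{theorem6} (available since $(M,g_{\mu\nu})$ is globally hyperbolic; if necessary one first passes to the orientable double cover of $T$), every point of $\partial J^+(T)$ lies on a future directed null geodesic that leaves $T$ orthogonally and has no conjugate point to $T$ before that point is reached; by the previous paragraph such a segment has affine length at most $2/|\theta_0|$. Hence $\partial J^+(T)$ is contained in the image of the compact set $T\times[0,2/|\theta_0|]$ (one copy for each of the two null normal directions) under the continuous map sending $(q,\lambda)$ to the point at affine parameter $\lambda$ on the orthogonal null geodesic from $q$; being a closed subset of that compact image, $\partial J^+(T)$ is itself compact. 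It is also nonempty: since $T$ is compact, $J^+(T)$ lies entirely in the future of the finite range on $T$ of a Cauchy time function, so $J^+(T)$ is a proper, closed (Theorem \ref{theorem7}), nonempty subset of the connected manifold $M$ and therefore has nonempty boundary.

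Now I would derive the contradiction with noncompactness of $\Sigma$. Since $\partial J^+(T)$ is closed, achronal and edgeless, Theorem \ref{2t1} makes it a three-dimensional $C^0$ submanifold without boundary. Let $t^\mu$ be the nonvanishing timelike vector field furnished by stable causality of a globally hyperbolic spacetime, and define $\rho$ by sending $p\in\partial J^+(T)$ to the unique intersection with $\Sigma$ of the inextendible integral curve of $t^\mu$ through $p$: uniqueness of the intersection holds because $\Sigma$ is a Cauchy hypersurface, and this integral curve, being timelike, meets the achronal set $\partial J^+(T)$ only at $p$. Thus $\rho\colon\partial J^+(T)\to\Sigma$ is continuous and injective; since $\partial J^+(T)$ is compact and $\Sigma$ is Hausdorff, $\rho$ is a homeomorphism onto a compact (hence closed) subset of $\Sigma$, which by invariance of domain is also open. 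As $M\cong\mathbb{R}\times\Sigma$ is connected (Theorem \ref{timeins}), $\Sigma$ is connected, so $\rho(\partial J^+(T))=\Sigma$ and $\Sigma$ is compact, contradicting the hypothesis. Therefore $(M,g_{\mu\nu})$ cannot be null geodesically complete.

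I expect the compactness of $\partial J^+(T)$ to be the main obstacle: it is where the uniform conjugate-point bound (which rests jointly on the compactness of $T$ and on $R_{\mu\nu}\ell^\mu\ell^\nu\geq 0$) must be combined with the null-geodesic characterization of achronal boundaries in globally hyperbolic spacetimes from Theorem \ref{theorem6}, and it is the only place where null geodesic completeness is genuinely used — namely to guarantee that the relevant orthogonal null geodesic segments actually extend to the point where Proposition \ref{p2} forces a conjugate point. The concluding topological step is comparatively soft, modulo the standard fact of invariance of domain.
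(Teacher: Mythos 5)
Your proposal is correct and follows essentially the same route the paper indicates: assuming null completeness, the compactness of the trapped surface together with Proposition \ref{p2}, Theorem \ref{theorem6} and the uniform bound $2/|\theta_0|$ forces $\partial J^+(T)$ to be compact, which is then played off against the noncompact Cauchy hypersurface via the timelike-flow projection argument (this is exactly the Penrose argument the paper defers to Wald). The only loose thread is your parenthetical appeal to an orientable double cover of $T$ to meet the orientability hypothesis of Theorem \ref{theorem6}, which is a minor technical point and does not affect the structure of the proof.
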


The conditions of the theorem above can be interpreted as follows. Concerning the restriction on the contraction of the Ricci tensor, as has already been discussed, it will be respected if Einstein's equation holds and the weak or strong energy condition is respected by $T_{\mu\nu}$. The condition that spacetime contains a trapped surface can be interpreted by considering that the light rays that are following the outgoing null geodesics, which one would expect to be ``naturally'' diverging, are actually converging. In particular, the formation of a trapped surface can be interpreted as the limit at which gravitational collapse can no longer be stopped, since the areas of wave fronts of both incoming and outgoing families of null geodesics will decrease, and the distribution of energy contained inside a trapped surface will necessarily be contained in the region delimited by the outgoing null geodesics. Additionally, the requirement of existence of a noncompact Cauchy hypersurface can be interpreted as the requirement that the universe is ``infinite''. In essence, this means that the spatial section of the universe does not ``close around itself'', e.g., a plane or hyperbolic spatial section. The conclusion of null geodesic incompleteness can then be interpreted as a consequence of the pathological region resulting from the collapse associated with the trapped surface, which will undoubtedly be related to loss of determinism in at least one region of spacetime. Finally, the proof of this theorem consists in obtaining a contradiction between the null geodesic completeness, which implies that $\partial J^+(T)$ must be compact, and the existence of a noncompact Cauchy hypersurface. Details on this development can be found in, e.g., \cite{Wald1984}. 

Arguably, the assumption of global hyperbolicity is by far the strongest, but the restriction on the topology of the Cauchy hypersurface is also severely significant, such that it raises questions regarding the physical reliability of the results of this theorem. Evidently, experimental confirmation of either assumption is impossible, so that although this theorem provides a proof of the development of pathologies in a type of spacetime, its physical relevance is questionable. Nonetheless, developments originally proposed in \cite{Hawking1970} removed the requirement of the existence of Cauchy hypersurfaces. Not only that, the results derived there relate the existence of singularities in the context of gravitational collapse and cosmology with fairly general assumptions, the former still being related with the existence of trapped surfaces. It is in this sense that singularities in physically reliable spacetimes are a genuine prediction of general relativity.

\section{Asymptotic flatness}\label{flat}

The last tool necessary in order to have the complete framework to study black holes is the concept of asymptotic flatness. This concept is of significance because it characterizes isolated systems in the context of general relativity, and it also gives rise to a satisfying notion of what it means for an observer to be “distant” from the sources. For instance, the usefulness of this notion in general relativity can be seen in situations where one wishes to study the character of emitted radiation in a system, as it is equivalent to analyze the behavior of the metric for large distances and late times. Such analysis can then be used to study what events can be causally connected to the “distant” regions. In essence, since the presence of energy can be detected by the curvature of spacetime (as per Einstein's equation), a region can be characterized as “distant” if it is very close to being flat, that is, if the spacetime behaves similar to Minkowski spacetime in comparison with some sense of ``central region''.

Roughly speaking, asymptotic flatness can then be summarized as a spacetime whose metric, written in coordinates $\{t,x,y,z\}$, reduces to that of Minkowski as $r\to\infty$ and $ct\to\pm\infty$, where $r=\left(x^2+y^2+z^2\right)^{1/2}$. The issue with simply defining a spacetime to be asymptotic flat as such arises from the fact that this notion is coordinate dependent, and one wishes to be able to analyze the quantities at the limit of asymptotic flatness in a coordinate independent manner. To progress towards a definition that is adequate, it is useful to first analyze the Minkowski metric for large distances and characterize the ``distant'' regions from some ``central region'' given by a coordinate system. Such characterization can then be used to compare with regions of other spacetimes, and identify those that behave similarly. 

To start this analysis, consider the form of the Minkowski metric, $\eta_{\mu\nu}$, in null coordinates,
\begin{equation}\label{minnull}
    ds^2=-dudw+\frac{1}{4}(w-u)^2(d\theta^2+\sin^2\theta d\phi^2),
\end{equation}
where $w=ct+r$, $u=ct-r$ and $\{r,\theta,\phi\}$ are the ordinary spherical coordinates. Note that the pathologies such as $w-u=0$ and $\sin{\theta}=0$ are coordinate dependent, i.e., they are merely a consequence of the limitations of the coordinate system. In particular, they can be removed by imposing adequate coordinate restrictions, e.g., $0<\theta<\pi$, and in the points where the metric given by eq. \ref{minnull} is not defined, one has to use another appropriate coordinate system. Additionally, the lack of terms $dw^2$ and $du^2$ in eq. \ref{minnull} is a consequence of the fact that the coordinate basis vectors $(\partial_w)^{\mu}$ and $(\partial_u)^{\mu}$ are null, thus, the hypersurfaces $\{w=\text{constant}\}$ and $\{u=\text{constant}\}$ are also null. Because of this, such coordinates can then be identified as representing the outgoing and incoming radial null geodesics, respectively. To analyze the behavior of one of these families at large distances, it would be necessary to take the limit of the corresponding null coordinate. However, it is evident that taking the limit of $w$ or $u$ to infinity would yield a badly behaving metric. Furthermore, coordinate transformations that would be able to describe infinity at a finite distance, such as $1/w$, would also result in a pathological metric. Nevertheless, such pathologies can be removed by means of a conformal transformation.

A \textit{conformal transformation} is a map, $\psi:M\to M'$, such that its action on $g_{\mu\nu}$ is given by $\psi^* g_{\mu\nu}=\Omega^2g_{\mu\nu}$. The smooth, nonvanishing, non-negative function $\Omega$ is called \textit{conformal factor}. Additionally, if $\psi$ is a diffeomorphism, it is also said to be a \textit{conformal isometry}. In this context, the metric $\Omega^2g_{\mu\nu}$ is referred to as the \textit{unphysical metric} and $(\psi[M],\Omega^2g_{\mu\nu})$ is the \textit{unphysical spacetime}. These nomenclatures are a consequence of the fact that the curvature tensors are not preserved in a general conformal transformation, so that $\Omega^2g_{\mu\nu}$ will not be a solution to Einstein's equation.

Consider a conformal isometry of Minkowski spacetime,
\begin{equation}
    {\eta'}_{\mu\nu}=\Omega^2\eta_{\mu\nu},
\end{equation}
with conformal factor given by
\begin{equation}
    \Omega^2=4(1+w^2)^{-1}(1+u^2)^{-1}.
\end{equation}
By defining new coordinates, $t'$ and $r'$,
\begin{equation}
    t'=\arctan w+\arctan u,\; r'=\arctan w-\arctan u,
\end{equation}
the metric ${\eta'}_{\mu\nu}$ in coordinates $\{t',r',\theta,\phi\}$ reads
\begin{equation} \label{89}
    (ds')^2=-(dt')^2+(dr')^2+\sin^2 r'(d\theta^2+\sin^2\theta d\phi^2).
\end{equation}
The metric given by eq. \ref{89} is precisely the Lorentz metric on the four-dimensional cylinder, $\mathbb{S}^3\times\mathbb{R}$, but the coordinate ranges of $t'$ and $r'$ are restricted by the coordinate transformation, which are given by
\begin{equation}\label{eq25}
    -\pi < t'+r'< \pi,\quad-\pi < t'-r'< \pi,\quad0\leq r'.
\end{equation}
Clearly, the metric is pathological at $r'=0$, $r'=\pi$, $\theta=0$ and $\theta=\pi$. The removal of such singularities also follows from adequate coordinate restrictions. However, note that the pathological points are not ones associated with large distances \textit{and} times, as was the case of the metric given by eq. \ref{minnull}. Hence, this metric is smooth in the regions of interest.

The conformal transformation that results in the metric given by eq. \ref{89} and coordinate restrictions given by eq. \ref{eq25} is referred to as the \textit{conformal compactification} of the Minkowski spacetime. The \textit{conformal infinity of Minkowski spacetime} is defined as the boundary of the open region, $S$, given by the coordinate restrictions of eq. \ref{eq25} in the $\mathbb{S}^3\times\mathbb{R}$ manifold. Consequently, one may view a conformal compactification as a way to produce a well behaved metric that brings the infinitely ``distant'' regions of a physical spacetime in time or space to a finite region in the unphysical spacetime, which is precisely the boundary $\partial S$. The conformal infinity of Minkowski spacetime can be naturally divided into five parts, as detailed below and illustrated in the conformal compactification, fig. \ref{fig:min1}.\begin{figure}[h]
\centering
\includegraphics[scale=1.2]{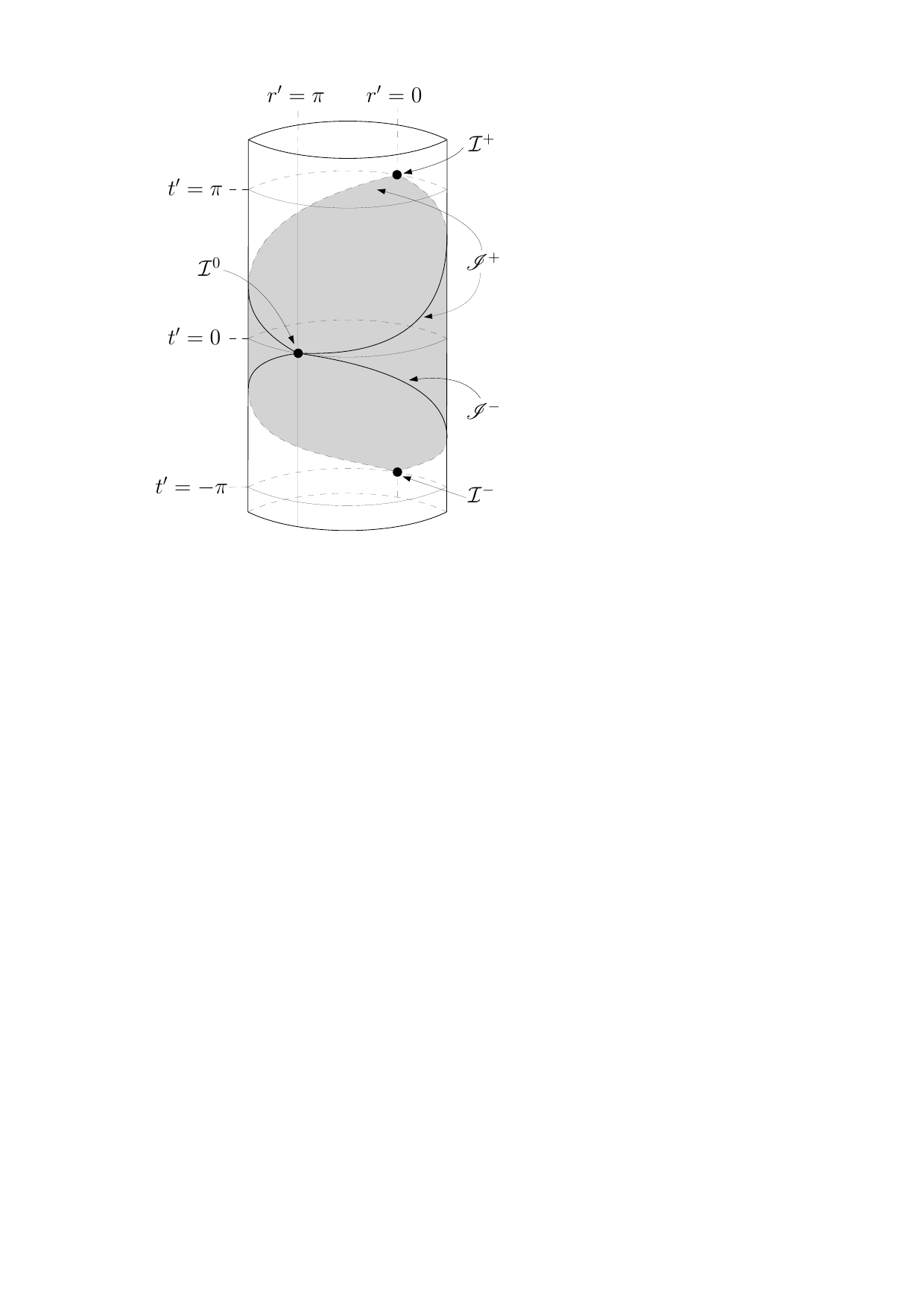} 
\caption{Conformal compactification of Minkowski spacetime in the manifold $S^3\times\mathbb{R}$.}
\caption*{Source: Adapted from WALD \cite{Wald1984}.}
\label{fig:min1}
\end{figure}
\begin{enumerate}
    \item[(1)] The point $\mathcal{I}^-$, called \textit{past timelike infinity}, given by coordinates $t'=-\pi\;,r'=0$, i.e., $ct\to-\infty$ at finite $r$.
    \item[(2)] The null hypersurface $\mathscr{I}^-$, called \textit{past null infinity}, given by $t'=-\pi+r'$ for $0<r'<\pi$, i.e., $ct-r\to-\infty$ at finite $ct+r$.
    \item[(3)] The point $\mathcal{I}^0$, called \textit{spatial infinity}, given by coordinates $t'=0,\;r'=\pi$, i.e., $r\to\infty$ at finite $ct$.
    \item[(4)] The null hypersurface $\mathscr{I}^+$, called \textit{future null infinity}, given by $t'=\pi-r'$ for $0<r'<\pi$, i.e., $ct+r\to\infty$ at finite $ct-r$.
    \item[(5)] The point $\mathcal{I}^+$, called \textit{future timelike infinity}, given by coordinates $t'=\pi\;,r'=0$, i.e., $ct\to\infty$ at finite $r$.
\end{enumerate}

The interpretation of each of these sets and points follows from the analysis of Minkowski spacetime in spherical coordinates. But first, note that a conformal transformation may affect the norm of a vector, but it always leaves its characterization unchanged. More precisely, it ``maps'' (see appendix \ref{derivativeoperators}) timelike, spacelike and null vectors into timelike, spacelike, and null vectors, respectively. Thus, it preserves the causal structure of the transformed spacetime. In a more general manner, one also has the following result \cite{Wald1984}.

\begin{proposition}\label{nulll}
    Let $(M,g_{\mu\nu})$ be a spacetime, $\psi$ denote a conformal isometry, $\gamma$ be a curve with tangent vector $\ell^{\mu}$ and $\nabla'_{\mu}$ denote the connection compatible with $\Omega^2g_{\mu\nu}$. If $\ell^{\mu}$ is null and obeys $\ell^{\nu}\nabla_{\nu}\ell^{\mu}=f\ell^{\mu}$, where $f$ is an arbitrary function on $\gamma$, then $\ell^{\nu}\nabla'_{\nu}\ell^{\mu}\propto\ell^{\mu}$.
\end{proposition}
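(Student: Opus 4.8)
The plan is to use the fact that the difference between the Levi-Civita connections of two metrics is a tensor, and to compute this tensor explicitly for the conformally related pair $g_{\mu\nu}$ and $\tilde{g}_{\mu\nu}\equiv\Omega^2 g_{\mu\nu}$. Writing $\nabla'_{\mu}V^{\nu}=\nabla_{\mu}V^{\nu}+C^{\nu}\mathstrut_{\mu\alpha}V^{\alpha}$ for an arbitrary vector field $V^{\mu}$, the tensor $C^{\nu}\mathstrut_{\mu\alpha}$ is fixed by requiring $\nabla'_{\mu}$ to be torsion-free (so that $C^{\nu}\mathstrut_{\mu\alpha}=C^{\nu}\mathstrut_{\alpha\mu}$) and compatible with $\tilde{g}_{\mu\nu}$, i.e. $\nabla'_{\mu}(\Omega^2 g_{\nu\alpha})=0$. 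The usual manipulation -- imposing this compatibility condition for the three cyclic permutations of the lower indices and combining them, exactly as one recovers the Christoffel symbols from the metric -- yields
\begin{equation}
    C^{\nu}\mathstrut_{\mu\alpha}=\delta^{\nu}\mathstrut_{\mu}\,\omega_{\alpha}+\delta^{\nu}\mathstrut_{\alpha}\,\omega_{\mu}-g_{\mu\alpha}\,\omega^{\nu},\qquad \omega_{\mu}\equiv\nabla_{\mu}\ln\Omega,
\end{equation}
where indices are raised with $g^{\mu\nu}$ (a derivation of this identity can also be found in, e.g., \cite{Wald1984}). I would present this step only in outline, as it is routine.

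With the tensor $C^{\nu}\mathstrut_{\mu\alpha}$ in hand, the rest is a short contraction. First note that $\tilde{g}_{\mu\nu}\ell^{\mu}\ell^{\nu}=\Omega^2 g_{\mu\nu}\ell^{\mu}\ell^{\nu}$, so $\ell^{\mu}$ is null with respect to $g_{\mu\nu}$ if and only if it is null with respect to $\Omega^2 g_{\mu\nu}$, and the hypothesis ``$\ell^{\mu}$ null'' is unambiguous. Using $\ell^{\nu}\nabla_{\nu}\ell^{\mu}=f\ell^{\mu}$ together with the expression above,
\begin{equation}
    \ell^{\nu}\nabla'_{\nu}\ell^{\mu}=\ell^{\nu}\nabla_{\nu}\ell^{\mu}+C^{\mu}\mathstrut_{\nu\alpha}\ell^{\nu}\ell^{\alpha}=f\ell^{\mu}+2(\omega_{\alpha}\ell^{\alpha})\ell^{\mu}-(g_{\nu\alpha}\ell^{\nu}\ell^{\alpha})\,\omega^{\mu}.
\end{equation}
The crucial observation -- and the reason the statement is special to null curves -- is that the last term vanishes identically because $g_{\nu\alpha}\ell^{\nu}\ell^{\alpha}=0$. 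Therefore
\begin{equation}
    \ell^{\nu}\nabla'_{\nu}\ell^{\mu}=\bigl(f+2\,\ell^{\alpha}\nabla_{\alpha}\ln\Omega\bigr)\ell^{\mu}\propto\ell^{\mu},
\end{equation}
which is the desired conclusion.

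The only real obstacle is getting the conformal-transformation formula for $C^{\nu}\mathstrut_{\mu\alpha}$ correct -- in particular the sign and the unit coefficient of the $g_{\mu\alpha}\omega^{\nu}$ term, since that is precisely the term that must be annihilated by the null condition; the placement of the two Kronecker deltas affects only the harmless proportionality factor $f+2\,\ell^{\alpha}\nabla_{\alpha}\ln\Omega$. Everything after that formula is a one-line computation. It is worth recording, as a by-product, that even when $\ell^{\mu}$ is affinely parametrized with respect to $\nabla_{\mu}$ (i.e. $f=0$) it generally fails to be affinely parametrized with respect to $\nabla'_{\mu}$ unless $\Omega$ is constant along $\gamma$, consistent with the fact that it is the \emph{image} of a null geodesic, not its affine parametrization, that is conformally invariant.
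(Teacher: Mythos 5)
Your proof is correct and is essentially the standard argument: the paper does not prove this proposition itself but defers to the cited reference (Wald), whose proof is exactly your computation of the connection-difference tensor $C^{\nu}\mathstrut_{\mu\alpha}=\delta^{\nu}\mathstrut_{\mu}\omega_{\alpha}+\delta^{\nu}\mathstrut_{\alpha}\omega_{\mu}-g_{\mu\alpha}\omega^{\nu}$ followed by contraction with $\ell^{\nu}\ell^{\alpha}$ and use of the null condition. Your closing remark about affine parametrization also matches the paper's own footnote to this proposition.
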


In particular, this result can be interpreted as stating that null geodesics are \textit{conformally invariant}\footnote{However, an affine parameter for $\gamma$ may not be an affine parameter for $\psi[\gamma]$.}. Although it is not valid, in general, for timelike or spacelike geodesics, one can still deduce that all timelike geodesics of Minkowski spacetime begin at $\mathcal{I}^-$ and end at $\mathcal{I}^+$, and all spacial sections pass through $\mathcal{I}^0$ \cite{Hawking1973}. Similarly, all null geodesics begin at $\mathscr{I}^-$ and end at $\mathscr{I}^+$. However, note that a non-geodesic timelike curve may end at $\mathscr{I}^+$. By construction, radial null geodesics correspond to $\pm45\degree$ lines in the \textit{conformal diagram} (also known as a \textit{Penrose diagram} \cite{Penrose1963}) of Minkowski spacetime on the $(t',r')$ plane, fig. \ref{fig:min2}.\begin{figure}[h]
\centering
\includegraphics[scale=1.5]{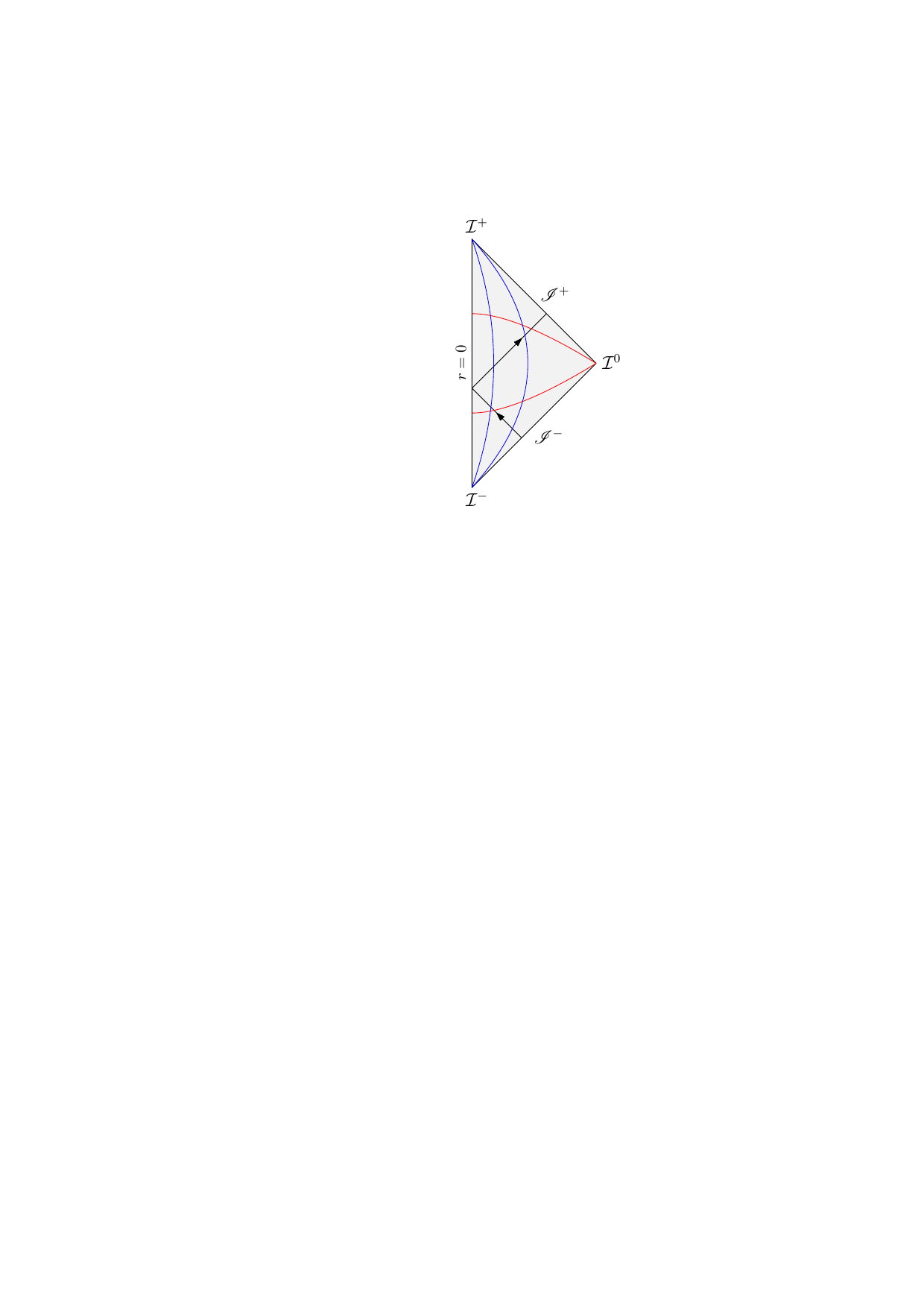} 
\caption{Conformal diagram of Minkowski spacetime.}
\caption*{Source: By the author.}
\label{fig:min2}
\end{figure} Note that the angular coordinates are suppressed, so each point, except for $\mathcal{I}^-$, $\mathcal{I}^+$, $\mathcal{I}^0$ and those with $r=0$, is a two-sphere of radius $r(t',r')$. Furthermore, red lines correspond to surfaces of constant $t$, while blue lines correspond to surfaces of constant $r$, and a radial null geodesic is represented by a black line inside the diagram. Note that the red and black lines are “reflected” on $r=0$, as the conformal diagram can be drawn from a part of the conformal compactification without loss of information. In this sense, one can see that an incoming null geodesic emerges from $\mathscr{I}^-$, reaches the ``center'' of the spacetime and then emerges as an outgoing null geodesic which will eventually reach $\mathscr{I}^+$.

The analysis of the conformal compactification of Minkowski spacetime allows one to give a precise definition of asymptotic flatness, using the following line of reasoning and definitions. A spacetime, $(M,g_{\mu\nu})$, is said to be \textit{asymptotically simple} if it there exists a spacetime, $(M',{g'}_{\mu\nu})$, and a conformal isometry, $\psi:M\to{M'}$, with conformal factor $\Omega$ such that $\Omega=0|_{\partial(\psi[M])}$, $\nabla_{\mu}\Omega\neq0|_{\partial(\psi[M])}$, $\partial(\psi[M])$ is smooth, and every null geodesic has two endpoints on $\partial(\psi[M])$. This definition precisely captures the notion of a spacetime being asymptotically similar to that of Minkowski, as it is possible to see in detail from its conditions. By requiring that $M$ be related by a conformal isometry to an open region of $M'$, one demands that $\psi[M]$ has the same causal structure of $M$. The restrictions on the behavior of the conformal factor are necessary for the correspondence between $\partial(\psi[M])$ and the infinity of the transformed spacetime. In particular, $\Omega=0|_{\partial(\psi[M])}$ implies that the affine parameter of a null geodesic diverges on $\partial(\psi[M])$, which can be interpreted as the infinite rescaling of the affine parameter at such regions. Moreover, by analysis of differentiability of the Ricci scalar of $g'_{\mu\nu}$ \cite{Hawking1973}, the condition $\nabla_{\mu}\Omega\neq0|_{\partial(\psi[M])}$ implies that $\nabla^{\mu}\Omega|_{\partial(\psi[M])}$ must be a null vector, which is precisely the normal to $\mathscr{I}^-$ and $\mathscr{I}^+$. On the other hand, the requirement of $\partial(\psi[M])$ to be smooth means that $\partial(\psi[M])$ must be the union of the null hypersurfaces $\mathscr{I}^+$ and $\mathscr{I}^-$, which is denoted by $\mathscr{I}$ and is referred to simply as the \textit{infinity} of an asymptotically simple spacetime. Namely, not only is the conformal boundary of $\mathcal{M}$ not smooth in $\mathcal{I}^-$, $\mathcal{I}^+$ and $\mathcal{I}^0$, it is also uninteresting to analyze the spacetime there. In other words, one is not interested in properties of a spacetime at a finite distance from the sources at early or late times, as well as only arbitrarily large distances. The limit of interest is precisely the one at large distances at early or late times, which is associated with the behavior of null geodesics. Finally, the last requirement excludes the possibility of regions where the gravitational interaction acts in such a manner as to “trap” null geodesics, meaning that they would not have endpoints in $\partial(\psi[M])$. However, as will be discussed in ch. \ref{chapter2}, there are spacetimes of interest that behave like Minkowski in regions far from the sources, but possess regions of strong gravitational field that result in such entrapment. Thus, it is necessary to generalize the definition of asymptotic simplicity to allow for such cases. 

A spacetime, $(M,g_{\mu\nu})$, is said to be \textit{weakly asymptotically simple} if there exists an asymptotically simple spacetime, $(M',g'_{\mu\nu})$, and a neighborhood, $S'$, of $\partial(\psi[M'])$ such that $\psi^{-1}[S']$, is isometric to an open set $S\subset M$. Hence, a weakly asymptotically simple spacetime can be converted to a asymptotically simple one by “ignoring” the regions where gravity acts in a manner as to ``trap'' null geodesics. Finally, a spacetime is defined to be \textit{asymptotically flat} if it is weakly asymptotically simple and $R_{\mu\nu}=0$ on a neighborhood of $\partial(\psi[M'])$. This last condition is merely a requirement that spacetime obeys the vacuum Einstein's equation at “infinity”, i.e., $T_{\mu\nu}=0$ on a neighborhood of $\partial(\psi[M'])$. It should be noted in spacetimes in which there is electric charge, this condition can be modified to allow for electromagnetic radiation near $\mathscr{I}$. In light of this definition, one can interpret asymptotically flat spacetimes as those in which the ``distant'' at early and late times are similar to that of Minkowski spacetime. Furthermore, because of the constraints on the Ricci tensor in such regions, one can justifiably affirm that such behavior is associated with the lack of an energy distribution. Finally, even if the energy distribution in spacetime ends up producing regions such that gravity ``traps'' null geodesics, one can still make a comparison by considering the regions where such effects do not happen. 

These arguments and definitions are illustrated in fig. \ref{fig:min5}. The asymptotically flat spacetime, $(M,g_{\mu\nu})$, has a region, $R$, illustrated in black, for which null geodesics become ``trapped''. Due to our definition of asymptotic flatness, there must be an asymptotically simple spacetime, $(M',g'_{\mu\nu})$, such that there exists an isometry, $\psi':M\to M'$, which maps the open set $S=M\backslash R$, into a region of $M'$.\begin{figure}[h]
\centering
\includegraphics[scale=1.1]{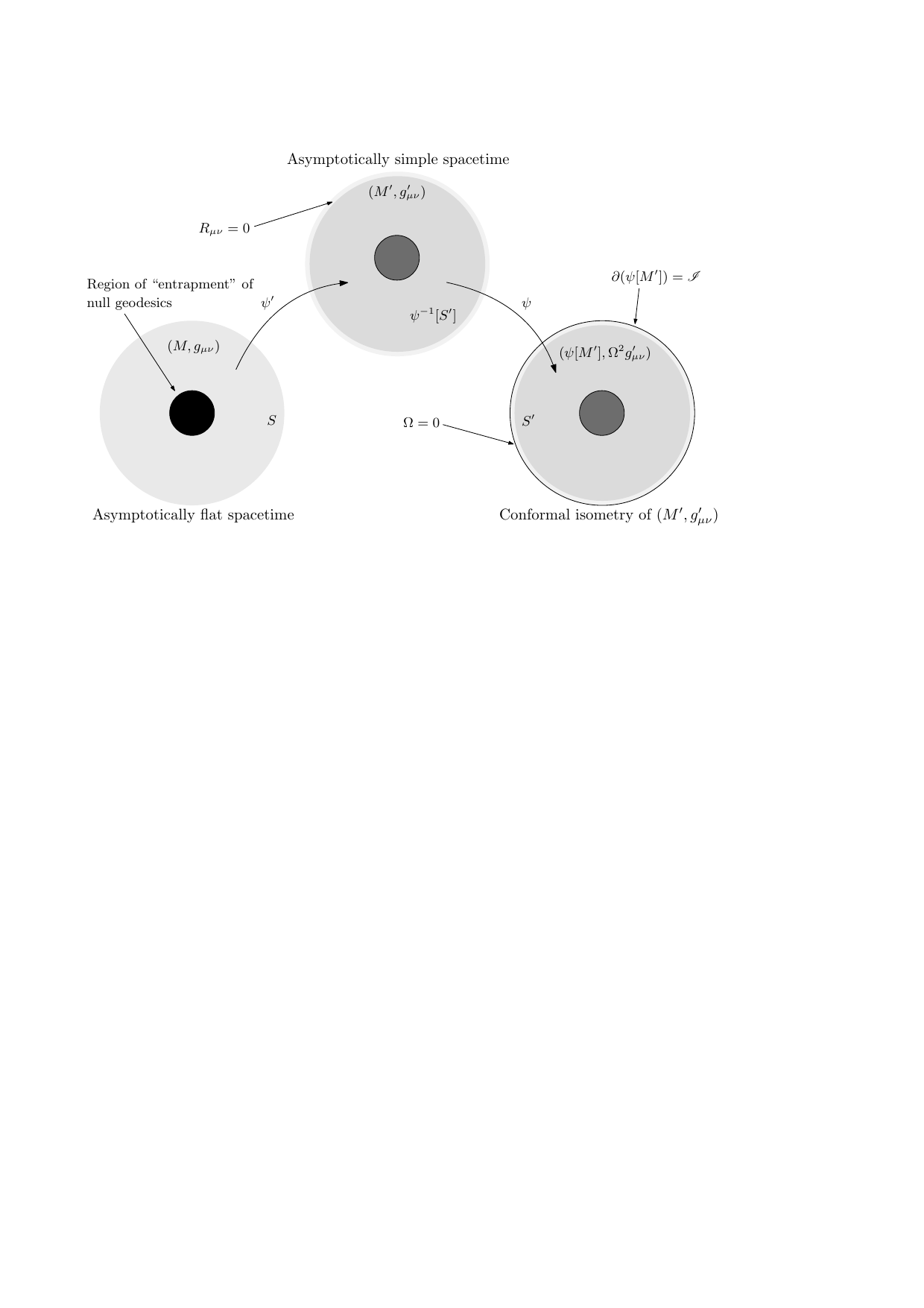} 
\caption{Concept of asymptotically flat spacetimes.}
\caption*{Source: By the author.}
\label{fig:min5}
\end{figure} Additionally, the conformal isometry of $(M',g'_{\mu\nu})$ is such that there is a neighborhood of $\mathscr{I}$, $S'$, for which the region $\psi^{-1}[S']$ is precisely the image of the map $\psi'$. Moreover, the light gray region in $\psi[M']$ is a neighborhood of $\mathscr{I}$ such that its inverse image is a region for which $R_{\mu\nu}=0$. Lastly, note that $\psi[M']$ must necessarily be bounded, as indicated by the black line (and the value of the conformal factor there), but the asymptotically simple and asymptotically flat spacetimes are not. In other words, one should view their illustrations as extending infinitely.

\chapter{Classical aspects of black holes}\label{chapter2}

All the machinery is now in place to describe black holes in the framework of general relativity. Roughly speaking, a black hole is a region of spacetime from which nothing can ``escape''. This “no escape” property can be precisely defined when one is dealing with an asymptotically flat spacetime, in the sense that the future null infinity, $\mathscr{I}^+$, can characterize a region for which observers and light rays can contemplate ``escaping'' to. In essence, for such spacetimes, the ``entrapment'' property is directly related to the behavior of the vector associated with displacements in the radial direction. That is, the distribution of energy will affect the metric in such a way as to produce a region whose geometry does not allow light rays or observers to increase their radial coordinate to a certain value, meaning that they are “trapped” in a region of spacetime. The goal of this chapter is to make these statements precise, investigate the consequences of these regions, and derive properties for the physical quantities that can be measured by observers outside of them.

To do so, we first analyze spacetime outside a spherically symmetric distribution of energy and how the metric behaves depending on the region in which the distribution is contained. In this manner, we will see that there is a limit for the radius for which the distribution can be contained and still allow observers and light rays to ``escape'' from the effects of the gravitational interaction. Although this analysis is of an idealized description of an energy distribution, it provides many important results that can be straightforwardly generalized to other distributions that also give rise to a “no escape” region. With the support of this simplified model, we will then give a precise definition of the black hole region of a spacetime based on the concept of asymptotic flatness. Furthermore, considering the results of causal structure and null geodesic congruences, it will be possible to deduce several properties for the dynamics of the black holes, which will be valid for spacetimes that possess globally hyperbolic regions. These properties will mainly follow from geometrical arguments, but they will also rely on assumptions that can be stated in the form of the so-called cosmic censor conjecture.

After discussing the black hole region and its boundary, we will study relations for the surface gravity as measured by observers at the asymptotic region, as well as derive relations for the mass, angular momentum, and area of black holes that are described by a time invariant configuration. We will also see that the importance of the black hole uniqueness theorems follows from another conjecture, namely, that at sufficiently ``late times'' after its formation, a black hole is expected to reach a time independent state, so that it will be completely characterized by three parameters as seen by observers outside of it: its mass, angular momentum, and electric charge.

\section{Schwarzschild spacetime}\label{schsec}
The analysis of spacetime geometry outside a spherically symmetric distribution of energy illustrates several properties that are useful for the study of black holes. Due to its high symmetry, it is a good starting point to analyze the properties of regions where gravity behaves in such a manner so that nothing can get out. The exterior (i.e., vacuum) geometry of such energy distributions is given by the following result, known as \textit{Birkhoff's theorem} \cite{Misner1973}.\begin{theorem}
Let the geometry of a given region of a spacetime be spherically symmetric and be a solution of the vacuum (i.e., $T_{\mu\nu}=0$) Einstein's equation. Then that geometry is locally isometric to Schwarzschild geometry.
\end{theorem}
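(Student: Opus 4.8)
The plan is to use spherical symmetry to reduce Einstein's vacuum equations to a handful of ordinary differential equations, integrate them, and recognize the outcome as the Schwarzschild line element. First I would make the hypothesis precise: a spherically symmetric region admits $SO(3)$ as a group of isometries whose generic orbits are spacelike two-spheres. By the discussion in \S\ref{symmetry} these isometries are generated by Killing vectors, and their orbits foliate an open dense subset of the region by round two-spheres. On each orbit the induced metric is that of a round sphere, so one may introduce the \emph{area radius} $r$, defined so that the orbit through a point has area $4\pi r^2$. Away from the locus where $\nabla_\mu r$ is null or vanishes, $r$ serves as a coordinate; choosing a second coordinate $t$ in the two-dimensional space orthogonal to the orbits and spending the residual coordinate freedom to diagonalize that orthogonal two-metric, the line element takes the form
\begin{equation}
ds^2=-e^{2\Phi(t,r)}\,dt^2+e^{2\Lambda(t,r)}\,dr^2+r^2\left(d\theta^2+\sin^2\theta\,d\phi^2\right),
\end{equation}
with $\theta,\phi$ the standard coordinates on the orbit (a parallel discussion treats the region where $\nabla_\mu r$ is timelike, in which $t$ and $r$ exchange roles).

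Next I would impose $R_{\mu\nu}=0$ and peel off the structure of $\Phi$ and $\Lambda$ one equation at a time. Computing the Ricci tensor of the metric above, the off-diagonal component equation $R_{tr}=0$ forces $\partial_t\Lambda=0$, so $\Lambda=\Lambda(r)$. A suitable combination of $R_{tt}$ and $R_{rr}$ then gives $\partial_r(\Phi+\Lambda)=0$, hence $\Phi(t,r)=-\Lambda(r)+c(t)$ for some function $c$; redefining the time coordinate absorbs $c(t)$, leaving $\Phi=\Phi(r)=-\Lambda(r)$. The one remaining independent equation, coming from $R_{\theta\theta}=0$, reduces to $\frac{d}{dr}\!\left(r\,e^{2\Phi}\right)=1$, which integrates to $e^{2\Phi}=1-\frac{2M}{r}$ with $M$ a constant of integration (to be identified with the mass, e.g.\ via the Komar integral of \S\ref{symmetry}). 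Substituting back reproduces exactly the Schwarzschild line element, so the region is locally isometric to Schwarzschild geometry; in particular it is automatically static wherever $\nabla_\mu r$ is spacelike, even though staticity was nowhere assumed.

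The main obstacle I anticipate is not the ODE integration, which is routine, but the geometric and coordinate bookkeeping around it: justifying that the $SO(3)$ orbits are genuinely two-spheres on a dense set, that the area radius is a legitimate coordinate off a closed ``bad'' set, and that the coordinate adjustments used to put the orthogonal two-metric in diagonal form are actually available. One must also handle separately the loci where $\nabla_\mu r$ degenerates and the region where it is timelike, in order to see that the conclusion ``locally isometric to Schwarzschild'' holds throughout — which is precisely why the statement is phrased locally. Carefully discharging these case distinctions is what upgrades the naive computation into a genuine proof.
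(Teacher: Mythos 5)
Your proposal is correct: the paper does not prove Birkhoff's theorem but simply states it with a citation to the literature, and the argument you outline — area-radius coordinates on the $SO(3)$ orbits, $R_{tr}=0$ forcing $\partial_t\Lambda=0$, the combination of $R_{tt}$ and $R_{rr}$ giving $\Phi=-\Lambda$ after absorbing $c(t)$ into a redefinition of $t$, and $R_{\theta\theta}=0$ integrating to $e^{2\Phi}=1-2M/r$ — is precisely the standard derivation in the cited reference. The case distinctions you flag (the loci where $\nabla_\mu r$ is timelike, null, or degenerate) are exactly the remaining bookkeeping needed for the purely local conclusion, so nothing essential is missing.
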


This theorem states that spacetime outside a spherically symmetric distribution of energy must be described by the \textit{Schwarzschild metric} (i.e., it must be a piece of Schwarzschild spacetime), which in coordinates $\{t,r,\theta,\phi\}$, takes the form
\begin{equation}\label{sch}
    ds^2=-\left(1-\frac{r_s}{r}\right)c^2dt^2+\left(1-\frac{r_s}{r}\right)^{-1}dr^2+r^2(d\theta^2+\sin^2\theta d\phi^2),
\end{equation}
where $r_s$ is the Schwarzschild radius (see eq. \ref{schr}). In this sense, Birkhoff's theorem establishes the \textit{uniqueness} of Schwarzschild geometry for the exterior region of an spherically symmetric energy distribution. Now, from the fact that Schwarzschild metric is independent of $t$, it is possible to roughly see that it is also asymptotically flat, as the components of the metric in this coordinate system behave like $g_{ab}=\eta_{ab}+\mathcal{O}(r^{-1})$ for large $r$ for all values of $t$. Details on the asymptotic flat property of the Schwarzschild spacetime following the definition given in \S\;\ref{flat} will be discussed below. Additionally, the term $M$ in $r_s$ (see eq. \ref{schr}) can be identified as the geometrical quantity associated with the mass of the energy distribution, as it is the “charge” associated with the gravitational interaction as measured by an observer at rest at the asymptotic region (i.e., an observer whose radial coordinate is arbitrarily large), where the gravitational effects can be approximated by Newtonian gravity.

Regarding the symmetries of the Schwarzschild spacetime, from the discussion of Lie differentiation in appendix \ref{derivativeoperators}, one knows that in a coordinate system where the components of a tensor are independent of a coordinate, $x^a$, the Lie derivative of such tensor with respect to $(\partial_{x^a})^{\mu}$ vanishes. Thus, $\xi^{\mu}=(\partial_{t})^{\mu}$ is a Killing vector. In particular, the existence of a timelike Killing vector implies that the orbits of the one-parameter group of diffeomorphisms generated by it are timelike curves, which can be interpreted as the invariance of the metric over time translations. A spacetime with a timelike Killing vector in a neighborhood of infinity, $\mathscr{I}$, is said to be \textit{stationary}. In fact, Schwarzschild spacetime obeys the stronger property of being \textit{static}, as it is stationary and $\xi^{\mu}$ is hypersurface orthogonal (see theorem \ref{frobenius}). As per Frobenius' theorem, this last condition is equivalent to requiring the vanishing of the vorticity tensor of the congruence generated by $\xi^{\mu}$. If that were not the case, the orbits of $\xi^{\mu}$ would favor a direction on the spacelike hypersurfaces $\{t=\text{constant}\}$, meaning that spacetime would not be invariant under a time reflection, i.e., a transformation such as $dt\to-dt$. Hence, Birkhoff's theorem implies that the unique solution to the exterior geometry of a spherically symmetric distribution of energy is also static. This can be interpreted as stating that Einstein's equation implies that there exists no monopole gravitational radiation, in the exact same way as Maxwell's equations imply that there are no monopole electromagnetic radiation. Additionally, as the metric components in eq. \ref{sch} are also independent of $\phi$,  $\psi^{\mu}=(\partial_{\phi})^{\mu}$ is also a Killing vector. Together with other two Killing vectors that are a linear combination of $\psi^{\mu}$ and $(\partial_{\theta})^{\mu}$, they span the group $\text{SO}(3)$, which is associated with the invariance of the metric over rotations in any angular direction, i.e., spherical symmetry. Because of this, one has that the orbits of this group of isometries in the spacetime will result in two-spheres, so that the action of the metric on these surfaces must characterize the area of a two-sphere, $A$. As such, it is important to note that the coordinate $r$, which is defined by the area of a two-sphere,
\begin{equation}\label{a32}
    r=\left(\frac{A}{4\pi}\right)^{1/2},
\end{equation}
need not represent the physical distance to the center of a two-sphere.

Even though Schwarzschild spacetime is mostly of interest to analyze the spacetime outside an spherically symmetric energy distribution, it is relevant to study its global properties. In this manner, a striking feature of the Schwarzschild metric is that it is pathological for $r=0$ and $r=r_s$. By evaluation of curvature scalars (i.e., quantities associated with curvature that are invariant) \cite{Misner1973}, say,  
\begin{equation}\label{curvature}
    R^{\mu\nu\alpha\beta}R_{\mu\nu\alpha\beta}=\frac{12r^2_s}{r^6},
\end{equation}
it becomes clear that the singular character at $r=r_s$ is merely a consequence of the coordinate system, and in fact, an observer can reach $r<r_s$ in a finite proper time \cite{Misner1973}. However, the singularity at $r=0$ is a \textit{physical} singularity. Hence, it cannot be eliminated by a coordinate transformation. This translates to the fact that it is not possible to find a coordinate system such that the metric is well behaved at $r=0$. Now, since the region $r<r_s$ is accessible to observers, it is of interest to analyze how they evolve there. Notice, however, that in the hypersurface $\{r=r_s\}$, $\xi^{\mu}$ and $\nabla^{\mu}r$ (i.e., the vector orthogonal to hypersurfaces $\{r=\text{constant}\}$) are null, and in the region $r<r_s$, $\xi^{\mu}$ is spacelike and $\nabla^{\mu}r$ is timelike. Thus, for observers that reach the hypersurface $\{r=r_s\}$, the singularity at $r=0$ is no longer a matter of “where”, but “when”. More precisely, the observer will end his existence in a finite proper time due to the incompleteness of his geodesic upon his inevitable encounter with the singularity. Furthermore, note that this behavior is not restricted to the motion of observers, that is, any light ray that reaches the region $r\leq r_s$ will not only not be able to escape to $r>r_s$, but it will also reach the singularity\footnote{Evidently, this follows if one considers that general relativity is valid in arbitrarily high curvature regime. That is, at some point in this analysis the Planck scale will be accessible, and as discussed in ch. \ref{Introduction}, the predictions of general relativity may not hold accurately.} in a finite affine parameter if it goes into $r<r_s$. Hence, it follows that the hypersurface $\{r=r_s\}$ is precisely the delimiter of the region of the Schwarzschild spacetime which observers and light rays cannot escape from.

In order to use the Schwarzschild metric in coordinates $\{t,r,\theta,\phi\}$ to describe the entire spacetime that possesses the region $r<r_s$, it is necessary to remove the hypersurface $\{r=r_s\}$ due to its singular nature. Such a process produces two disjointed open regions, and the requirement for spacetime to be connected will restrict one's analysis to only one of them. Nevertheless, it is possible to perform coordinate transformations as to find a metric that is not pathological for $r=r_s$, and then extend its domain to the entire spacetime \cite{Hawking1973}. In particular, the conformal diagram of the Schwarzschild spacetime can then be developed by performing a conformal transformation of the metric in this appropriate coordinate system. This can be done by considering the following transformations and line of reasoning.

First, note that radial null geodesics of the Schwarzschild metric (i.e., $ds^2=0$ for constant $d\phi=d\theta=0$) must respect the relation
\begin{equation}\label{eq33}
    cdt=\left(1-\frac{r_s}{r}\right)^{-1}dr.
\end{equation}
It is then useful to define the \textit{tortoise} coordinate, $r'$, by
\begin{equation}
    dr'=\left(1-\frac{r_s}{r}\right)^{-1}dr,
\end{equation}
so that radial null geodesics obey $cdt=dr'$. The coordinate transformation of interest is one that removes the singular behavior at $r=r_s$ and uses coordinates that are representative of the families of radial null geodesics.

Consider then the transformation
\begin{equation}\label{eq199}
    w=\exp{\left(\frac{ct+r'}{2r_s}\right)},\;u=-\exp{\left(\frac{r'-ct}{2r_s}\right)},
\end{equation}
in which the Schwarzschild metric takes the form
\begin{equation}\label{m3}
    ds^2=-\frac{4r_s^3}{r}\exp{\left(-\frac{r}{r_s}\right)}dwdu+r^2(d\theta^2+\sin^2\theta d\phi^2),
\end{equation}
where $r=r(w,u)$ is given implicitly by 
\begin{equation}
    wu=-\left(\frac{r}{r_s}-1\right)\exp{\left(\frac{r}{r_s}\right)}. 
\end{equation}
The coordinates $(w,u)$ are said to be null, for the same reasons as discussed in the Minkowski case (see \S~\ref{flat}). Note that the Schwarzschild metric in coordinates $\{w,u,\theta,\phi\}$ is not pathological for $r=r_s$, thus, one may \textit{analytically extend} it \cite{Townsend1997} to describe the entire region $0<r<\infty$ by allowing $\{w,u\}$ to take any value in the interval $(-\infty,\infty)$ compatible with $r>0$. The \textit{analytically extended Schwarzschild spacetime} is then described by the following intervals in these coordinates. The region $r>r_s$ corresponds to $w>0,\;u<0$, the hypersurface $\{r=r_s\}$ corresponds to $w=0$ or $u=0$, and the region $r<r_s$ corresponds to $w<0,\;u>0$. Similarly to the Minkowski case, in order to depict the entire spacetime in a finite illustration, it is necessary to bring infinity to a finite range of coordinates. Consider then another coordinate transformation,
\begin{equation}
    w'=\arctan{w},\;
    u'=\arctan{u},
\end{equation}
in which the Schwarzschild metric takes the form
\begin{equation}\label{m4}
    ds^2=-\frac{4r_s^3}{r\cos^2{w'}\cos^2{u'}}\exp{\left(-\frac{r}{r_s}\right)}dw'du'+r^2(d\theta^2+\sin^2\theta d\phi^2).
\end{equation}
Hence, the entire range of coordinates $\{t,r\}$ is depicted by coordinates $\{w',u'\}$ in the interval $(-\pi/2,\pi/2)$. Clearly, the metric given in eq. \ref{m4} presents pathologies in these intervals, but one can reach a well behaved form by performing a conformal transformation with conformal factor
\begin{equation}\label{13}
   \Omega=\cos{w'}\cos{u'},  
\end{equation}
which yields the unphysical metric
\begin{equation}
    (ds')^2=-\frac{4r_s^3}{r}\exp{\left(-\frac{r}{r_s}\right)}dw'du'+r^2\cos^2{w'}\cos^2{u'}(d\theta^2+\sin^2\theta d\phi^2).
\end{equation}

This conformal compactification of the analytically extended Schwarzschild spacetime in coordinates $\{w',u'\}$ is illustrated in fig. \ref{fig:sch1}, where the angular coordinates have been suppressed.\begin{figure}[h]
\centering
\includegraphics[scale=1]{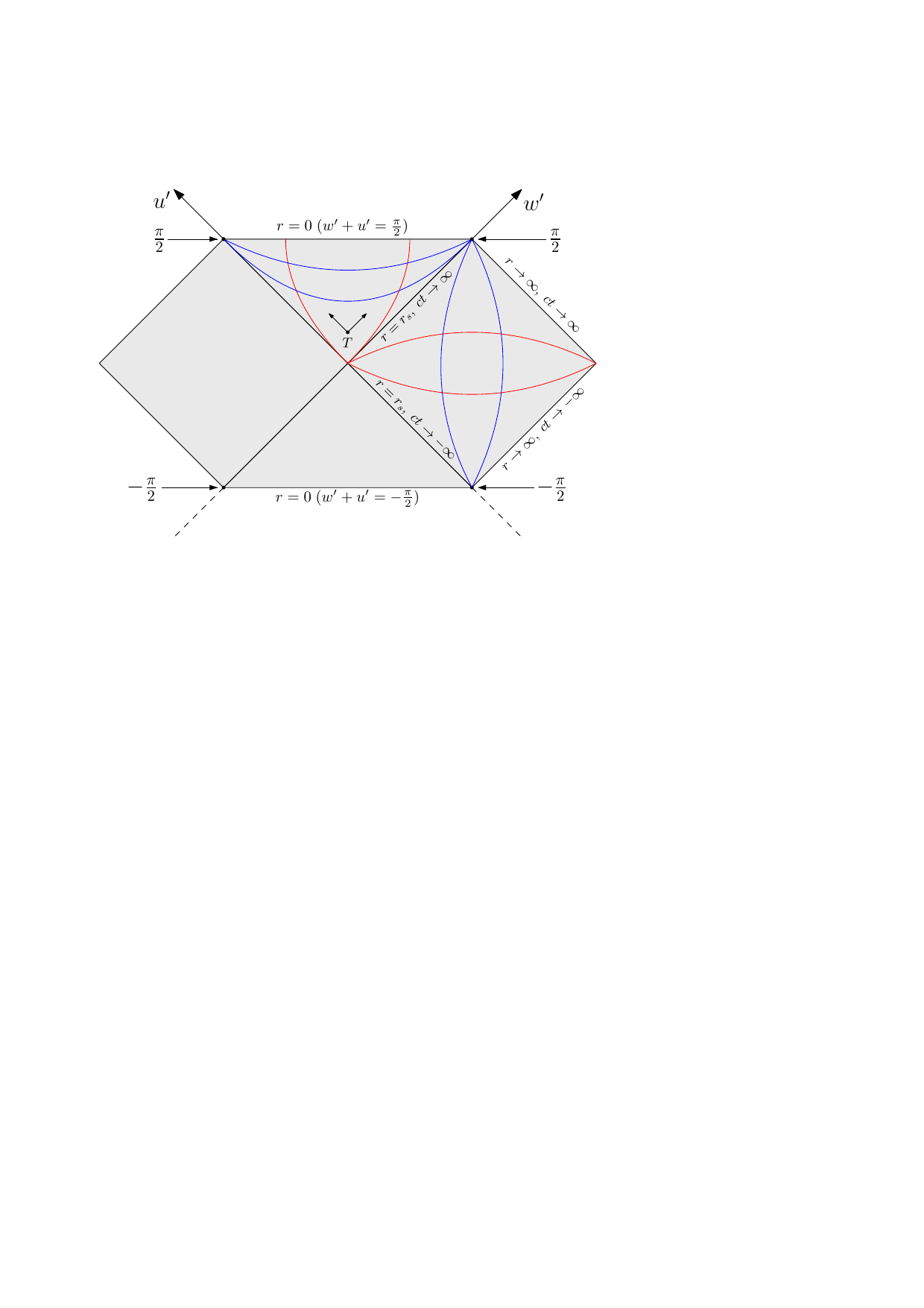} 
\caption{Conformal compactification of the analytically extended Schwarzschild spacetime on the plane $w'\times u'$.}
\caption*{Source: By the author.}
\label{fig:sch1}
\end{figure} Thus, each point represents a two-sphere of radius $r(w',u')$, except for some parts of its boundary. The correspondence between the coordinates and the regions discussed so far is as follows. The hypersurface $\{r=r_s\}$ corresponds to $w'=0$ or $u'=0$, with different values of time coordinate, $t$, values depending on the axis. The singularity at $r=0$ corresponds to $w'+u'=\pm\pi/{2}$. The line $w'=\pi/{2}$ for $-\pi/{2}<u'<0$ corresponds to the region $r\to\infty,\;t\to\infty$, as the line $u'=-\pi/{2}$ for $0<w'<\pi/{2}$ corresponds to the region $r\to\infty,\;t\to-\infty$. In addition, red lines correspond to the hypersurfaces $\{t=\text{constant}\}$, blue lines correspond to the hypersurfaces $\{r=\text{constant}\}$ and $45\degree$ lines correspond to radial null geodesics. The change in shape of the lines of the pertinent hypersurfaces in the region $r<r_s$ is due to the change in sign in the norm of $\xi^{\mu}$ and $\nabla^{\mu}r$. Finally, the two arrows in the region $r<r_s$ are representative of the incoming and outgoing families of future directed null geodesics orthogonal to the spacelike surface, $T$.

The conformal diagram of the analytically extended Schwarzschild spacetime is illustrated in fig. \ref{fig:sch2}. The interpretation of each one of the presented regions follows from fig. \ref{fig:sch1}. \begin{figure}[h]
\centering
\includegraphics[scale=1]{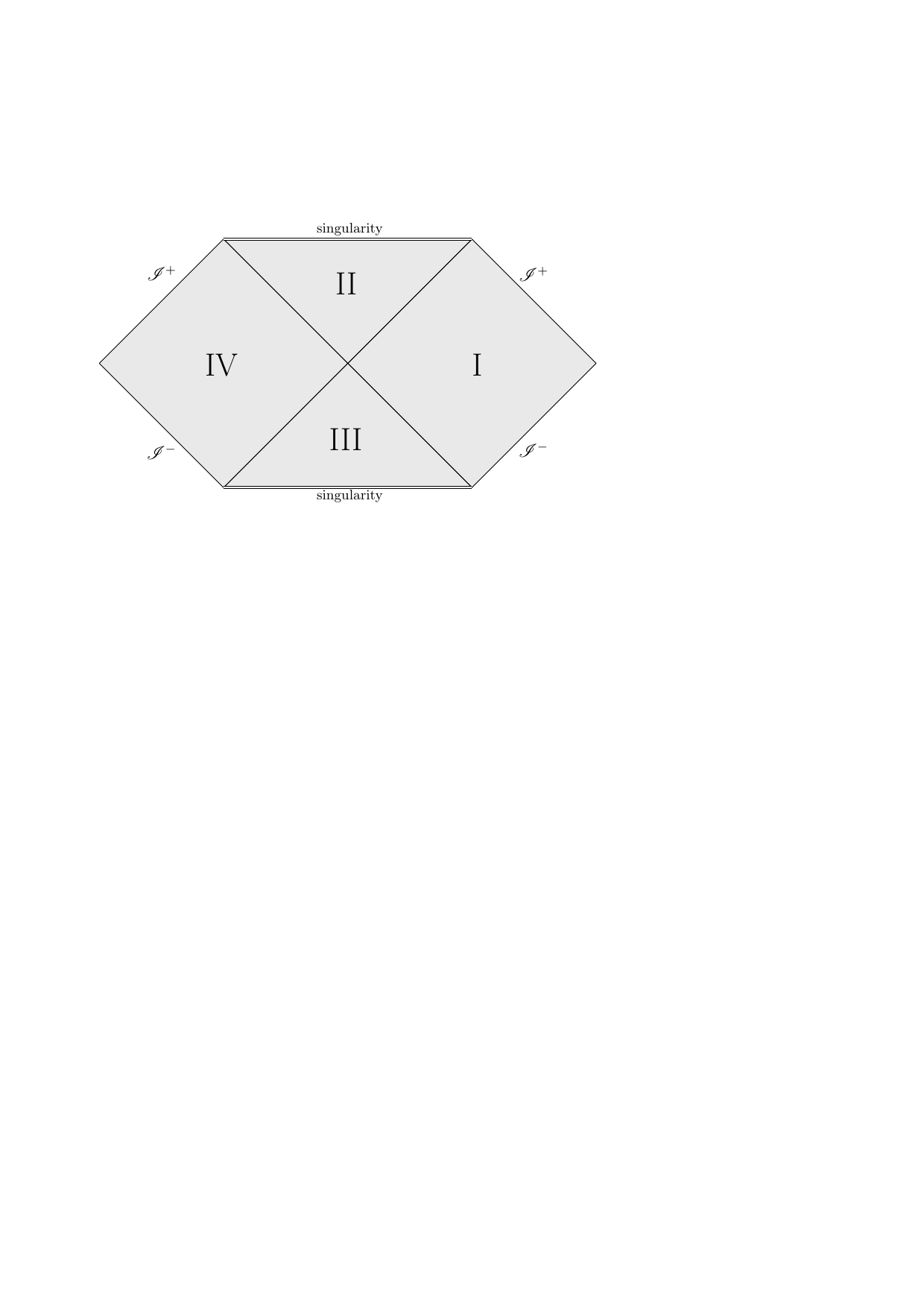} 
\caption{Conformal diagram of the analytically extended Schwarzschild spacetime.}
\caption*{Source: By the author.}
\label{fig:sch2}
\end{figure}Region I is the region $r>r_s$, and will be the representation of the spacetime in the vacuum region of any spherically symmetrical distribution of energy. Its asymptotic behavior is depicted from the limits of the conformal compactification in fig. \ref{fig:sch1}, and identified in fig. \ref{fig:sch2} as the null hypersurfaces $\mathscr{I}^-$ and $\mathscr{I}^+$. With this, one can conclude that Schwarzschild spacetime is asymptotically flat, as from the form of the conformal factor given by eq. \ref{13}, it is possible to see that it obeys all the properties of the definition given in \S\;\ref{flat}. Additionally, region II corresponds to $r<r_s$, and the “no escape” property can be clearly identified, as any observer must decrease its radial coordinate and any light ray (as represented by the two arrows pointing out of the surface $T$ in fig. \ref{fig:sch1}) will also eventually reach the singularity, represented by a thick white line. Since any light ray emitted in region II must decrease its radial coordinate, the area of its wave front must decrease (see eq. \ref{a32}). Namely, the identification of the coordinates $\{w',u'\}$ as the outgoing and incoming families of future directed null geodesics leads one to the conclusion that any two-sphere in region II is a trapped surface. Explicit verification that the expansion of the congruence of null geodesics generated by $(\partial_{w'})^{\mu}$ and $(\partial_{u'})^{\mu}$ is everywhere negative in the region $r<r_s$ can be found in \cite{Poisson2004}. In contrast, region III has the exact opposite properties of region II, as any light ray in it must have come from the singularity and will eventually leave region III. Finally, region IV represents another asymptotically flat region that is causally disconnected from region I, possessing its own set of null infinities. 

Although the conformal diagram of the analytically extended spacetime gives a better picture to the analysis of the Schwarzschild spacetime, the extension of the coordinates produces regions whose physical significance is questionable. First, note that region II will be a product of a spherically symmetric system if the energy distribution is such that it is contained in a radius $r< r_s$. One can study the plausibility of distributions of energy to obey such a scenario by considering, for example, the interior of stars. A star can be approximately described as a self-gravitating sphere of hydrogen supported by thermal and radiation pressure as a result of the process of nuclear fusion at the core. As the fuel for the fusion depletes in the later stages of the life of a star, the temperature will decrease and so will the pressure. However, the final state of the star with $T\to0$ will not result in $P\to 0$ because of the degeneracy pressure, as a consequence of the Pauli exclusion principle \cite{Sakurai1994}. Nonetheless, the decrease in pressure will result in a contraction that can produce a configuration where the mass is contained in a radius $r\leq r_s$. If such a configuration were to be produced, then a trapped surface must be formed around the distribution, and following theorem \ref{HawPen}, the spacetime will be null geodesically incomplete. It has been shown that these conclusions hold even if there are small deviations of spherical symmetry \cite{Hawking1973}. Even without knowledge of theorem \ref{HawPen}, one can still identify the pathological behavior of gravitational collapse, as the formation of region II will occur if the star is massive enough\footnote{Indeed, the formation of region II can occur in an arbitrarily low curvature regime (see eq. \ref{curvature}).}. The critical mass for which the degeneracy pressure will not be enough to sustain the effects of the gravitational interaction, resulting in the star being contained in an arbitrarily small radius, is given by $M_c\simeq 1.4\; M_{\odot}$, known as the \textit{Chandrasekhar limit}. Details on the derivation of this limit can be found in, e.g., \cite{Townsend1997}.

Consequently, at the late stages of the life of a star with a mass greater than $M_c$ and with small deviations of spherical symmetry, the gravitational effects will not be sustained, resulting in a configuration where region II will form, and a region where the ``no escape'' property will be an undeniable product of the collapse. The collapse of such a distribution of energy can be illustrated in the conformal diagram of the Schwarzschild spacetime, as shown in fig. \ref{fig:sch3}. The exterior region is depicted in light gray, which, from Birkhoff's theorem, must be a piece of the conformal diagram of the analytically extended Schwarzschild spacetime. The interior of the distribution is depicted in gray, where the geometry depends on the details of the energy-momentum tensor, which has to be spherically symmetric. In such an illustration,\begin{figure}[h]
  \begin{subfigure}[b]{0.5\textwidth}
  \centering
    \includegraphics[scale=1.1]{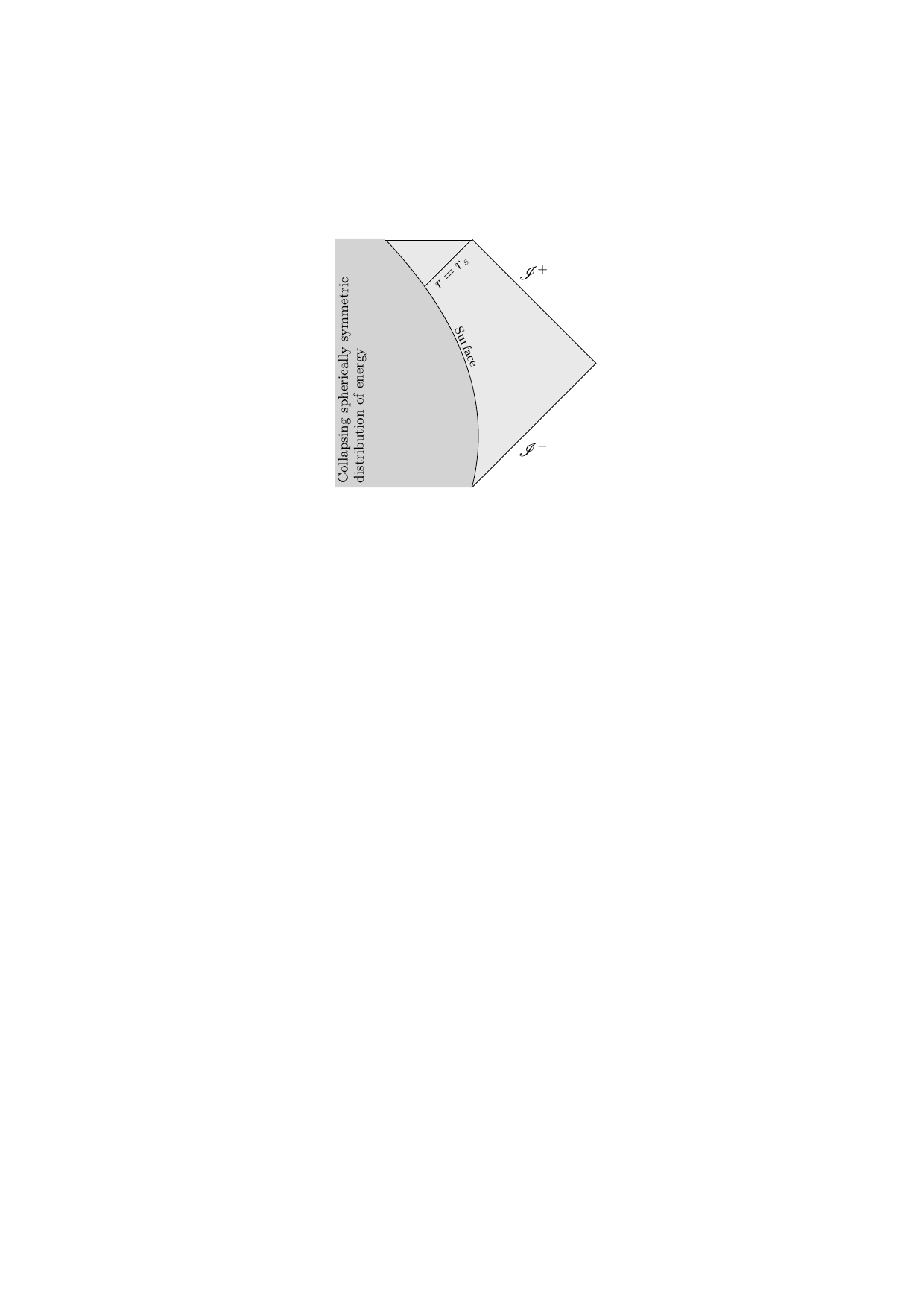}
    \caption{Collapse of a spherically symmetric distribution of energy.}
    \label{fig:sch3}
  \end{subfigure}
  \hfill
  \begin{subfigure}[b]{0.5\textwidth}
  \centering
    \includegraphics[scale=1.2]{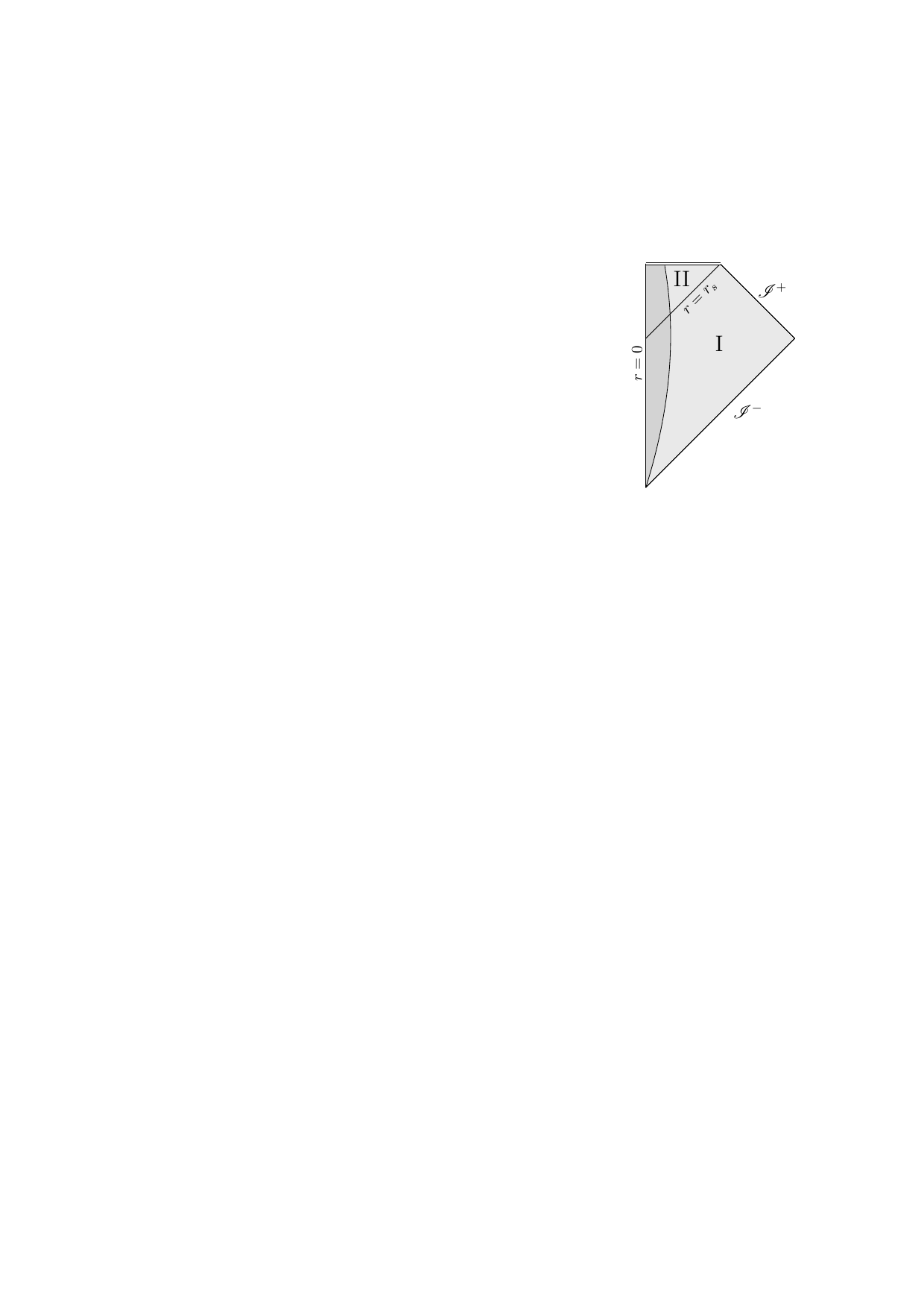}
    \caption{Conformal diagram restricted to the region of interest.}
    \label{fig:sch4}
  \end{subfigure}
  \caption{Collapse of a spherically symmetric distribution of energy in the conformal diagram of the extended Schwarzschild spacetime.}
  \caption*{Source: By the author.} 
\end{figure} the gray area covers regions III and IV, and at some point its surface crosses the Schwarzschild radius, which represents the point of no return when it comes to the collapse, i.e., all the distribution and signals emitted from it will unavoidably reach the singularity. Note that, in the context of gravitational collapse, regions III and IV never form, as a consequence of the fact that the Schwarzschild metric is only a solution to the exterior region. Fig. \ref{fig:sch4} is the conformal diagram of interest, illustrating all the important properties of regions I and II, as well as the collapsing spherically symmetric energy distribution in a finite drawing.

The collapse of a spherically symmetric energy distribution can also be illustrated in a spacetime diagram, fig. \ref{fig:sch5}. In such representation, only the incoming null geodesics are at $45\degree$ \footnote{This can be done by using the \textit{incoming Eddington–Finkelstein} coordinates (see \cite{Hawking1973}).}, and consequently, light cones do not necessarily form $45\degree$. Now that only one space dimension is suppressed, the structure of the collapse becomes even more obvious. The event $a$ indicates the formation of the region with the ``no escape'' property, since no event in its causal future has a radial coordinate greater than $r_s$. Although details about the light cones inside the dark region are dependent on the details of the energy distribution (which evidently dictates how outgoing null geodesics evolve from $a$ to the surface of the distribution), after the outgoing null geodesics exit the distribution, they will either reach the singularity or remain with radial coordinate $r=r_s$. Indeed, this translates to the conclusion that any two-sphere in the region $r<r_s$ will be a trapped surface. As indicated, the surface $T$ is a trapped surface, and one can picture the two arrows coming out of it as the incoming and outgoing null families of geodesics, similar to the ones depicted in fig. \ref{fig:sch1}. 

 \begin{figure}[h]
\centering
\includegraphics[scale=1]{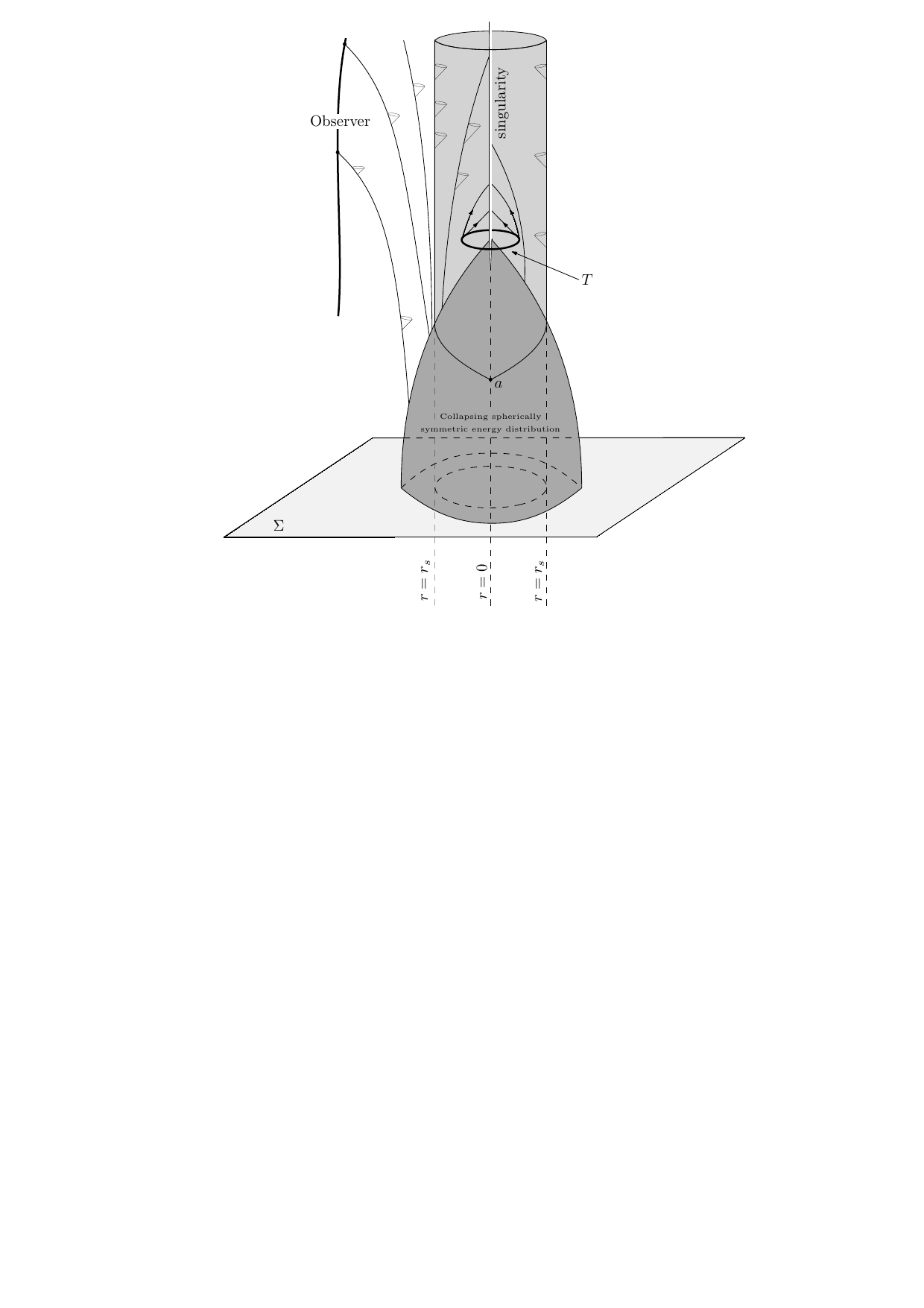} 
\caption{Spacetime diagram of a spherically symmetric gravitational collapse.}
\caption*{Source: Adapted from PENROSE \cite{Penrose1965}.}
\label{fig:sch5}
\end{figure}

Furthermore, the behavior of the outgoing null geodesics that exit the energy distribution just before it reaches $r=r_s$ indicates how observers on the outside perceive the collapse, which can be deduced by the following line of reasoning. Consider an observer with radial coordinate $r>r_s$ whose world line coincides with the integral curves of the timelike Killing vector, $\xi^{\mu}$. Its normalized tangent reads
\begin{equation}\label{ell1}
    \ell^{\mu}=\frac{\xi^{\mu}}{V(r)},
\end{equation}
where 
\begin{equation}\label{reds}
    V(r)=(-c^{-2}\xi^{\mu}\xi_{\mu}(r))^{1/2}=(-c^{-2}g_{tt}(r))^{1/2}.
\end{equation} 
In order words, normalization of $\ell^{\mu}$ translates to
\begin{equation}
    \frac{d\tau}{dt}=V(r),
\end{equation}
where $\tau$ denotes the observer's proper time. Consequently, the factor $V$ is known as \textit{redshift factor}, as it relates the proper time of an observer following the orbit of $\xi^{\mu}$ at $r>r_s$ to the proper time of an observer following the orbit of $\xi^{\mu}$ at the asymptotic region (i.e., one with $r\to\infty$). Namely, a light ray emitted at $r>r_s$ will reach the observer at the asymptotic region, whose proper time coincides with the coordinate time, $t$, redshifted by a factor of $V$. Hence, one can relate the passage of time for an observer following the orbit of $\xi^{\mu}$ at $r$ and $r'$ by
\begin{equation}
    \frac{d\tau}{d\tau'}=\frac{V(r)}{V(r')}.
\end{equation}

Evidently, the behavior of the redshift factor as $r\to r_s$ indicates that the gravitational time dilation diverges (see eq. \ref{sch}), which means that for any observer\footnote{This conclusion generalizes to observers that are not following the integral curves of $\xi^{\mu}$ but remain outside the black hole, as the variation of their spatial coordinates would contribute with a finite time dilation factor.} with radial coordinate $r>r_s$, it takes an infinite amount of time for any energy distribution to reach $r=r_s$. In essence, an observer at $r>r_s$ would perceive the energy distribution apparently slow down as it gets closer to $r=r_s$, and the light coming from it would get arbitrarily redshifted. As commented earlier, observers can reach $r=r_s$ in a finite amount of proper time, and the conclusion that for observers at $r>r_s$ this takes an infinite amount of time is merely another consequence of the conclusion that the notion of simultaneity is observer dependent.

\section{Black holes}\label{bhsec}
In the last section, the analysis of the Schwarzschild spacetime concluded that spherically symmetric distributions of energy contained in a radius $r\leq r_s$ would result in a region such that no observer or light ray that enters it could reach $r>r_s$. This ``no escape'' property can be precisely defined for any asymptotically flat spacetime\footnote{It is also possible to give a satisfying notion of a black hole when one is dealing with a spacetime which is not asymptotically flat but has regions that can be identified as ``infinity'' \cite{Wald1984}.}, if one considers that the causal past of the future null infinity does not contain the entire spacetime, as a consequence of such regions. In other words, there are events in such spacetimes that are not causally connected to $\mathscr{I}^+$. However, in order to evaluate which events are causally connected to the ``infinitely distant'' regions, it is necessary to have knowledge of the development of the entire spacetime. Thus, it is convenient to restrict one's analysis to asymptotic flat spacetimes that are causally and deterministic ``well behaved'', or at least, posses regions that are.

More precisely, consider an asymptotically flat spacetime, $(M,g_{\mu\nu})$, with associated conformal isometry\footnote{The ``middle'' asymptotically simple spacetime (see fig. \ref{fig:min5}) is not mentioned for simplicity, its existence is implied by our definition of asymptotic flatness (see \S\;\ref{flat}). In particular, one may picture the ``associated conformal isometry'', $\psi$ as the composition of the maps $\psi$ and $\psi'$ illustrated in fig. \ref{fig:min5}.}, $\psi$, to $(M',g'_{\mu\nu})$. The spacetime $(M,g_{\mu\nu})$ is said to be \textit{strongly asymptotically predictable} if there exists an open set, $A'\subset M'$, with $\overline{\psi[M]\cap J^-(\mathscr{I^+})}\subset A'$ such that $(A',g'_{\mu\nu})$ is globally hyperbolic. That is, the region $\psi^{-1}[A']$ is a globally hyperbolic region of $(M,g_{\mu\nu})$. The \textit{black hole region} of a strongly asymptotically predictable spacetime, $(M,g_{\mu\nu})$, is defined to be 
\begin{equation}\label{blackhole}
    B=M\backslash \psi^{-1}[J^-(\mathscr{I}^+)].
\end{equation}
This definition can be interpreted as stating that the development of the spacetime on the boundary of the black hole region and outside of it does not depend on its interior. In light of this, an asymptotically flat spacetime which fails to be strongly asymptotically predictable is said to possess a \textit{naked singularity} (see \cite{Penrose1973} for a detailed discussion). 

Note that the singularity in the gravitational collapse of a spherically symmetric energy distribution is not in $J^-(\mathscr{I^+})$ (see the conformal diagram, fig. \ref{fig:sch4}), as a consequence of the black hole region ``clothing'' it. A naked singularity, on the other hand, can be causally connected to $\mathscr{I^+}$. For example, the singularity in region III of the conformal diagram of the analytically extended Schwarzschild spacetime (see fig \ref{fig:sch2}) is in $J^-(\mathscr{I^+})$, which is unphysical in the context of gravitational collapse (in fact, because it has the exact opposite properties of the region II, region III is referred to as a \textit{white hole}). Many considerations over the past decades have been made regarding the general physical plausibility of naked singularities, and the lack of successful counter examples (see, e.g., \cite{Wald1997a}) have led to a conjecture, in which one of its formulations takes the form stated below (usually referred to as the \textit{weak version}).

\begin{cosmic*}

All physical spacetimes are globally hyperbolic.
 
\end{cosmic*}

Since a spacetime possessing an arbitrary naked singularity cannot be globally hyperbolic, the above formulation of the cosmic censor conjecture can be interpreted as stating that naked singularities, except for a possible initial cosmological one (which under certain conditions, is also predicted by results given in \cite{Hawking1970}), cannot be present in any physically reasonable spacetime, since they can be regarded as a way to violate predictability of the spacetime (unless, of course, one has the knowledge of how to describe spacetime singularities and impose adequate boundary conditions). In this sense, the cosmic censor conjecture can be understood as the requirement that gravitational collapse always results in a predictable black hole\footnote{Note that since the definition of geodesic completeness relies affine parameter range and such a concept is not necessarily conformally invariant (see \S~\ref{flat}), we will also require geodesic completeness in the physical spacetime.}. Although the cosmic censor conjecture is believed to be an excellent assumption in an extensive way mainly due to indirect evidence, proof of it is an open problem in general relativity. Lastly, more precise formulations can be made regarding the predictability from data on Cauchy hypersurfaces, which can be found in \cite{Wald1984}. The reader can also find a more comprehensive and philosophical discussion in \cite{Landsman2021}.

By restricting the analysis of black holes to strongly asymptotically predictable spacetimes, the interpretation of a Cauchy hypersurface as an “instant of time” can be used to define a black hole at a given time. Let $\psi^{-1}[A']\subset M$ be the globally hyperbolic region of a strongly asymptotically predictable spacetime, $(M,g_{\mu\nu})$, and let $\Sigma$ denote a Cauchy hypersurface. The \textit{black hole region at a time} $\Sigma$ is defined to be $B\cap\Sigma$, and each connected component of $B\cap\Sigma$ is a \textit{black hole at a time} $\Sigma$. The following theorem gives a property of the evolution of black holes over Cauchy hypersurfaces.

\begin{theorem}\label{45678}
    Let $(M,g_{\mu\nu})$ be a strongly asymptotically predictable spacetime and let $\Sigma_2$ and $\Sigma_1$ be Cauchy hypersurfaces for $\psi^{-1}[A']\subset M$, with $\Sigma_2\subset I^+(\Sigma_1)$. Let $\mathscr{B}_1$ be a nonempty connected component of $B\cap\Sigma_1$. Then $J^+(\mathscr{B}_1)\cap \Sigma_2\neq \emptyset$ and is contained in a single connected component of $B\cap\Sigma_2$.
\end{theorem}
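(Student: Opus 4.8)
The plan is to reduce the theorem to the globally hyperbolic structure of the region $N:=\psi^{-1}[A']$ together with one elementary observation about the black hole region; this is essentially Wald's argument (this statement is Proposition~12.2.3 of \cite{Wald1984}, whose proof could also simply be cited). The observation is the lemma: if $x\in B$ and $y\in J^{+}(x)$, then $y\in B$. Indeed $x\in B$ means $\psi(x)\notin J^{-}(\mathscr{I}^{+})$, and if $\psi(y)\in J^{-}(\mathscr{I}^{+})$ then concatenating a causal curve from $\psi(x)$ to $\psi(y)$ with one from $\psi(y)$ to $\mathscr{I}^{+}$ — legitimate because a conformal isometry preserves the causal structure — would give $\psi(x)\in J^{-}(\mathscr{I}^{+})$, a contradiction. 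Two consequences are immediate: every future directed causal curve whose past endpoint lies in $B$ stays entirely in $B$; and $J^{+}(\mathscr{B}_{1})\cap\Sigma_{2}\subset B$, since any point of this set lies in $J^{+}(p)$ for some $p\in\mathscr{B}_{1}\subset B$.

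Next I would build a projection onto $\Sigma_{2}$. Since $\Sigma_{1},\Sigma_{2}$ are Cauchy hypersurfaces for the globally hyperbolic $(N,g_{\mu\nu})$ with $\Sigma_{2}\subset I^{+}(\Sigma_{1})$, one first notes $\Sigma_{1}\subset I^{-}(\Sigma_{2})$ (a future-inextendible timelike curve from a point of $\Sigma_{1}$ meets $\Sigma_{2}$, to the future of its initial point). Pick a smooth future directed timelike vector field $t^{\mu}$ on $N$ with future-inextendible forward integral curves — the field underlying the homeomorphism $N\cong\mathbb{R}\times\Sigma$ of theorem \ref{timeins}. Following its integral curves forward to $\Sigma_{2}$ defines a continuous map $\alpha\colon N\cap J^{-}(\Sigma_{2})\to\Sigma_{2}$ with $\alpha|_{\Sigma_{2}}=\mathrm{id}$ and $\alpha(x)\in J^{+}(x)$ (and $\alpha(x)\in I^{+}(x)$ off $\Sigma_{2}$); continuity is the smooth dependence of flows on initial data. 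By the lemma, $\alpha$ maps $B\cap N\cap J^{-}(\Sigma_{2})$ into $B\cap\Sigma_{2}$.

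Now the theorem follows. As $\Sigma_{1}\subset N\cap J^{-}(\Sigma_{2})$, the set $\alpha(\mathscr{B}_{1})$ is defined; it is connected (continuous image of the connected $\mathscr{B}_{1}$) and lies in $I^{+}(\mathscr{B}_{1})\cap B\cap\Sigma_{2}\subset J^{+}(\mathscr{B}_{1})\cap\Sigma_{2}$, so it is nonempty and contained in a single connected component $\mathscr{B}_{2}$ of $B\cap\Sigma_{2}$; in particular $J^{+}(\mathscr{B}_{1})\cap\Sigma_{2}\neq\emptyset$. For the reverse inclusion, take $q\in J^{+}(\mathscr{B}_{1})\cap\Sigma_{2}$, pick $p\in\mathscr{B}_{1}$ and a future directed causal curve $\gamma$ from $p$ to $q$. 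Every point of $\gamma$ lies in $B$ (lemma) and in $J^{-}(q)\subset J^{-}(\Sigma_{2})$; assuming $\gamma\subset N$, the composite $\alpha\circ\gamma$ is a continuous path in $B\cap\Sigma_{2}$ from $\alpha(p)\in\mathscr{B}_{2}$ to $\alpha(q)=q$, so $q\in\mathscr{B}_{2}$. Hence $J^{+}(\mathscr{B}_{1})\cap\Sigma_{2}\subset\mathscr{B}_{2}$, which proves the claim.

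The step requiring real care — and which I expect to be the main obstacle — is the assumption $\gamma\subset N$, i.e.\ that a causal curve in $M$ joining two points of $N$ cannot leave $N$. I would prove this by contradiction: if $s_{*}$ is the last parameter at which $\gamma$ is outside $N$, then $\gamma$ restricted to $(s_{*},1]$ is a past-inextendible causal curve in $N$ ending at $q\in\Sigma_{2}$; extending it future-inextendibly past $q$ inside $N$ gives an inextendible curve in $N$, which must meet the Cauchy hypersurface $\Sigma_{1}$, necessarily at an interior point $\gamma(s_{2})$ with $s_{2}\in(s_{*},1)$ (the part past $q$ lies in $I^{+}(q)\subset I^{+}(\Sigma_{2})$, which is disjoint from $\Sigma_{1}$). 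Then $\gamma|_{[0,s_{2}]}$ is a causal curve between the two points $p$ and $\gamma(s_{2})$ of the achronal set $\Sigma_{1}$, hence a null geodesic, and it lies entirely in $\Sigma_{1}$ (its interior points are in neither $I^{+}(\Sigma_{1})$ nor $I^{-}(\Sigma_{1})$, and a locality argument then pins the whole segment to $\Sigma_{1}$); but $\gamma(s_{*})$ lies on $\gamma|_{[0,s_{2}]}$ and yet outside $N\supset\Sigma_{1}$, a contradiction. The remaining ingredients — continuity of $\alpha$, that a Cauchy hypersurface is met exactly once by each inextendible causal curve, and that a causal curve between two achronal points is a null geodesic — are standard.
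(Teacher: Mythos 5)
Your route is genuinely different from the paper's: the paper never transports a causal curve from $\mathscr{B}_1$ to $\Sigma_2$; it supposes $J^+(\mathscr{B}_1)\cap\Sigma_2$ is split by disjoint open subsets of $\Sigma_2$ and contradicts the connectedness of $\mathscr{B}_1$ using the connectedness of the set of future directed timelike vectors (proposition \ref{lightcone}), the only curves it actually needs being future directed timelike curves from points of $\mathscr{B}_1$ supplied inside the globally hyperbolic region by the Cauchy property of $\Sigma_2$. Your lemma $J^+(B)\subset B$, the nonemptiness argument, and the flow retraction $\alpha$ built from theorem \ref{timeins} are sound, and $\alpha$ is in fact a good way to make the paper's brief ``divide the timelike vectors at $a$'' step precise, since it yields connectedness of $I^+(a)\cap\Sigma_2$.

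The gap is exactly the step you flagged, $\gamma\subset N$ with $N=\psi^{-1}[A']$, and your argument for it fails at its last move. Granting the reduction to a null geodesic segment $\gamma|_{[0,s_2]}$ joining $p,\gamma(s_2)\in\Sigma_1$ whose interior points lie in neither $I^+(\Sigma_1)$ nor $I^-(\Sigma_1)$, you conclude that the segment lies in $\Sigma_1$. That inference rests on the decomposition of spacetime as $I^+(\Sigma_1)\cup\Sigma_1\cup I^-(\Sigma_1)$, which is precisely the statement $D(\Sigma_1)=N$ and therefore exhausts only $N$; it says nothing about points of $M\backslash N$, and $\gamma(s_*)$ lies in $M\backslash N$ by construction. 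So ``not in $I^{\pm}(\Sigma_1)$'' does not force $\gamma(s_*)$ into $\Sigma_1$, no contradiction is reached, and the argument is circular: membership in $N$ is what you are trying to prove. (A smaller problem upstream: the achronality of $\Sigma_1$ used to get ``null geodesic'' is achronality within $N$, while $\gamma|_{[0,s_2]}$ passes through $M\backslash N$.) Nor is $\gamma\subset N$ free from the hypotheses: strong asymptotic predictability only places $\overline{\psi[M]\cap J^-(\mathscr{I}^+)}$ inside $A'$, i.e.\ the exterior region and the horizon, whereas $\gamma$ lies in $B$, whose interior need not be contained in $\psi^{-1}[A']$; a causal curve from $\Sigma_1$ to $\Sigma_2$ running through the deep black hole interior could a priori leave $N$. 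To close the proof you must either establish this causal-convexity statement honestly (which appears to need more than the stated hypotheses), or restructure as the paper does, applying your retraction only to objects constructed inside $N$ --- for instance to $I^+(a)\cap J^-(\Sigma_2)$ with the chronological sets taken in $N$ for $a\in\mathscr{B}_1$, which gives the connectedness of $I^+(a)\cap\Sigma_2$ that drives the paper's separation argument --- rather than to a causal curve handed to you in $M$.
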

\begin{proof}
$J^+(\mathscr{B}_1)\cap \Sigma_2\neq \emptyset$ follows from the condition that $\Sigma_1$ and $\Sigma_2$ are Cauchy hypersurfaces and $I^+(\Sigma_1)\supset\Sigma_2$. From the definition of the black hole region, one has that $J^+(\mathscr{B}_1)\subset B$, which means that $J^+(\mathscr{B}_1)\cap \Sigma_2$ is contained in $B\cap\Sigma_2$. Now, if $J^+(\mathscr{B}_1)\cap \Sigma_2$ were not connected, then it would be possible to find disjoint open sets, $S$ and $S'$ contained in $\Sigma_2$ such that $S\cap J^+(\mathscr{B}_1)\neq\emptyset$, $S'\cap J^+(\mathscr{B}_1)\neq\emptyset$ and $S\cup S'= J^+(\mathscr{B}_1)\cap \Sigma_2$. Then, one would have that $\mathscr{B}_1\cap I^-(S)\neq\emptyset$, $\mathscr{B}_1\cap I^-(S')\neq\emptyset$ and $\mathscr{B}_1\subset I^-(S)\cap I^-(S')$. However, no point $a\in\mathscr{B}_1$ could lie in both $I^-(S)$ and $I^-(S')$, as it would be possible to divide timelike vectors at $a$ into two nonempty disjoint sets, contradicting the connectedness of the set of future directed vectors at $a$ (see theorem \ref{02} and proposition \ref{lightcone}). Hence, it would be possible to write $\mathscr{B}_1$ as the union of the disjoint open sets $I^-(S)\cap\Sigma_1$ and $I^-(S')\cap\Sigma_1$, contradicting its connectedness. Thus, $J^+(\mathscr{B}_1)\cap \Sigma_2$ is connected.
\end{proof}

This result states that a black hole cannot disappear from the strongly asymptotically predictable spacetime, i.e., $J^+(\mathscr{B}_1)\cap \Sigma_2\neq \emptyset$, and it cannot bifurcate, as $J^+(\mathscr{B}_1)\cap \Sigma_2$ must be contained in a single connected component of the black hole region in $\Sigma_2$. It should be noted that it has no dependence on Einstein's equation or conditions respected by the energy distribution in spacetime, as it is a consequence of the properties of Cauchy hypersurfaces and that $B\cap \psi^{-1}[J^-(\mathscr{I}^+)]=\emptyset$. The following result relates the existence of trapped surfaces with the black hole region, proof of which can be found in \cite{Hawking1973}.

\begin{proposition}\label{trapped}
    Let $(M,g_{\mu\nu})$ be a strongly asymptotically predictable spacetime satisfying $R_{\mu\nu}\ell^{\mu}\ell^{\nu}\geq 0$ for all null $\ell^{\mu}$. Suppose $M$ contains a trapped surface, $T$. Then $T\subset B$.
\end{proposition}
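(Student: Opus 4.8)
The plan is to argue by contradiction, assuming $T\not\subseteq B$, i.e.\ $T\cap\psi^{-1}[J^-(\mathscr{I}^+)]\neq\emptyset$, and to contradict the trapped character of $T$ using the conjugate‑point machinery of \S\;\ref{conj}. First I would pick $p\in T$ lying in $J^-(\mathscr{I}^+)$; then $\mathscr{I}^+\cap J^+(T)\neq\emptyset$, whereas, since $T$ is compact (a trapped surface is closed, hence compact), arbitrarily early points of $\mathscr{I}^+$ do not lie in $J^+(T)$, so $\mathscr{I}^+$ is not contained in $J^+(T)$. Because $\mathscr{I}^+$ is connected, there must then be a point $r\in\mathscr{I}^+\cap\partial J^+(T)$ with $r\notin I^+(T)$. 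Everything in the causal past of $r$ lies in $J^-(\mathscr{I}^+)$, which by strong asymptotic predictability sits inside the globally hyperbolic region $A'$; this is the one place that hypothesis is used, and verifying that the relevant part of $\partial J^+(T)$ together with its null generators back to $T$ genuinely lie in $A'$ is the technical crux (see below).

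Granting that, I would apply Theorem \ref{theorem6} (with $S=T$, working in the globally hyperbolic $(A',g'_{\mu\nu})$): the point $r$ lies on a future‑directed null geodesic $\gamma$ starting from $T$, orthogonal to $T$, with \emph{no} point conjugate to $T$ between $T$ and $r$. Alternatively one reaches the same conclusion from Theorem \ref{theorem7} and Theorem \ref{theorem5}, since $\gamma$ is a past‑directed generator of the achronal set $\partial J^+(T)$ and $r\notin I^+(T)$ forbids deforming $\gamma$ into a timelike curve from $T$ to $r$.

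Now I would invoke the trapped hypothesis together with $R_{\mu\nu}\ell^{\mu}\ell^{\nu}\geq 0$. Both families of future‑directed null geodesics orthogonal to $T$ have everywhere negative expansion, so whichever family $\gamma$ belongs to has expansion $\theta_0<0$ at its initial point on $T$; by Proposition \ref{p2} there is a point conjugate to $T$ along $\gamma$ within a finite affine parameter, \emph{provided $\gamma$ extends that far}. But $\gamma$ reaches $\mathscr{I}^+$, and as recalled in \S\;\ref{flat} a null geodesic reaching the conformal boundary has divergent affine parameter; hence $\gamma$ does extend far enough and picks up a point conjugate to $T$ strictly before $r$. This contradicts the conclusion of the previous step, so $T\cap J^-(\mathscr{I}^+)=\emptyset$, i.e.\ $T\subseteq B$.

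The main obstacle, as flagged, is the causal bookkeeping around the globally hyperbolic region: one must be sure that $\partial J^+(T)$ near $r$, and the past null generator through $r$ all the way back to $T$, lie inside $A'$ so that Theorems \ref{theorem5}--\ref{theorem7} are applicable, and one must keep straight that ``conjugate point'' and ``affine length'' refer to the \emph{physical} metric $g_{\mu\nu}$ while $\mathscr{I}^+$ belongs to the unphysical conformal completion. A minor point is that a trapped surface, having two well‑defined future‑directed null normal directions, automatically furnishes the orthogonal null geodesic congruence that Proposition \ref{p2} requires, so no separate orientability hypothesis on the normal bundle is needed.
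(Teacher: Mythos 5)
The paper does not give its own proof of Proposition \ref{trapped} — it defers to Hawking and Ellis — and your argument is essentially that standard proof (the same one found in Wald): assume $T\not\subset B$, locate $r\in\mathscr{I}^+\cap\partial J^+(T)$ with $r\notin I^+(T)$, apply Theorem \ref{theorem6} in the globally hyperbolic region to get a null geodesic from $T$ to $r$, orthogonal to $T$ and free of conjugate points, and contradict this via Proposition \ref{p2} together with the divergence of the physical affine parameter at $\mathscr{I}^+$. The two technical points you flag are genuine but close exactly as you suggest: everything causally preceding $r$ lies in $J^-(\mathscr{I}^+)$, whose closure is contained in $A'$ by strong asymptotic predictability, and the physical/unphysical bookkeeping is harmless because null geodesics and timelike separation are conformally invariant — so the conjugate point furnished by Proposition \ref{p2} lets Theorem \ref{theorem5} deform the generator to a timelike curve, forcing $r\in I^+(T)$ and contradicting $r\in\partial J^+(T)$.
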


Thus, in strongly asymptotically predictable spacetimes where Einstein's equation and the weak or strong energy condition hold, a trapped surface implies that the spacetime possesses a non-empty black hole region. However, the converse is not true, a black hole region does not imply the existence of a trapped surface. Furthermore, from theorem \ref{HawPen}, black holes can then be associated with singularities under certain conditions, which will, of course, not be naked. Lastly, note that although the definition of the black hole region requires knowledge of the entire development of spacetime, one can use local properties (in time and space) to locate a connected component of the black hole region if one can detect trapped surfaces. 

The definition of a variation of the concept of a trapped surface will also be useful for the discussion of further properties of black holes. Let $(M,g_{\mu\nu})$ be a strongly asymptotically predictable spacetime. An \textit{outer trapped surface} is a compact spacelike two-dimensional submanifold of $M$ such that the expansion of the orthogonal outgoing future directed null geodesic congruence to it is everywhere nonpositive, i.e., $\theta\leq0$. That is, an outer trapped surface differs from a trapped surface simply because the expansion of the ``outgoing'' future directed null congruence may vanish, instead of being manifestly negative. With this definition, the \textit{trapped region}, $\mathcal{T}$, at a time $\Sigma$, is the set of all points $a\in\Sigma$ such that there is an outer trapped surface in $\Sigma$ through $a$. The \textit{apparent horizon}, $\mathscr{A}$, is the boundary of the trapped region, $\mathscr{A}=\partial \mathcal{T}$. The outer (inner) boundary $\partial \mathcal{T}$ is called the outer (inner) apparent horizon. The following result relates the existence of these sets with the black hole region at a time $\Sigma$, proof of which can be found in \cite{Hawking1973}. 

\begin{proposition}
    Let $(M,g_{\mu\nu})$ be a strongly asymptotically predictable spacetime satisfying $R_{\mu\nu}\ell^{\mu}\ell^{\nu}\geq 0$ for all null $\ell^{\mu}$. Let $\Sigma$ be a Cauchy hypersurface for the globally hyperbolic region $\psi^{-1}[A']\subset M$ and $\mathcal{T}\subset\Sigma$ be a trapped region. Then $\mathcal{T}\subset\mathscr{B}_{\Sigma}$ and the outer apparent horizon is a surface such that the expansion of its outgoing orthogonal future directed null congruence vanishes.
\end{proposition}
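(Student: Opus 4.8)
The plan is to establish the two assertions separately, each by contradiction, using the results on null geodesic congruences of \S\ref{null} and \S\ref{conj}. Throughout I identify $M$ with its image in $M'$ under the conformal isometry $\psi$, and work inside the globally hyperbolic region $\psi^{-1}[A']$, which contains $\Sigma$ (hence any $T'\subset\Sigma$) and the relevant portion of $\mathscr{I}^+$.

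For the inclusion $\mathcal{T}\subset\mathscr{B}_{\Sigma}$, fix $p\in\mathcal{T}$ and an outer trapped surface $T'\subset\Sigma$ through $p$; it suffices to show $T'\subset B$, since then $p\in B\cap\Sigma=\mathscr{B}_{\Sigma}$. Suppose instead that some point of $T'$ lies in $\psi^{-1}[J^-(\mathscr{I}^+)]$. Then $J^+(T')$ meets $\mathscr{I}^+$, and since $T'$ is compact, $J^+(T')$ cannot contain all of $\mathscr{I}^+$, so $\partial J^+(T')\cap\mathscr{I}^+\neq\emptyset$; by Theorem \ref{theorem7}, $J^+(T')$ is closed, so a boundary point $q$ lies in $J^+(T')$. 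By Theorem \ref{theorem6} (applied to the compact, spacelike, orientable surface $T'$), $q$ lies on a future directed null geodesic $\mu$ from $T'$, orthogonal to $T'$, with no point conjugate to $T'$ strictly between $T'$ and $q$; since $\mu$ runs out to $\mathscr{I}^+$, it is a generator of the outgoing orthogonal null congruence of $T'$ and has infinite affine parameter range (the conformal factor $\Omega\to0$ at $\mathscr{I}^+$ rescales the affine parameter to infinity, cf. \S\ref{flat}). But for an outer trapped surface the outgoing expansion satisfies $\theta_0\le0$ everywhere on $T'$. If $\theta_0<0$ at $\mu\cap T'$, Proposition \ref{p2} (valid since $R_{\mu\nu}\ell^{\mu}\ell^{\nu}\ge0$) yields a point conjugate to $T'$ within \emph{finite} affine parameter along $\mu$, contradicting its defining property. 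If $\theta_0=0$, Raychaudhuri's equation (eq. \ref{eq13}) with $\omega_{\mu\nu}=0$ (Frobenius' theorem \ref{frobenius}, the congruence being hypersurface orthogonal), $\sigma_{\mu\nu}\sigma^{\mu\nu}\ge0$ and $R_{\mu\nu}\ell^{\mu}\ell^{\nu}\ge0$ gives $d\theta/d\lambda\le-\tfrac{1}{2}\theta^2\le0$ along $\mu$; hence either $\theta$ becomes negative somewhere and $\theta\to-\infty$ in finite affine parameter by Proposition \ref{prop1} (again a conjugate point, contradiction), or $\theta$, $\sigma_{\mu\nu}$ and $R_{\mu\nu}\ell^{\mu}\ell^{\nu}$ all vanish identically along $\mu$ — a degenerate alternative ruled out by a closer analysis of the congruence near $\mathscr{I}^+$ (see \cite{Hawking1973}). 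Every branch gives a contradiction, so $T'\subset B$ and therefore $\mathcal{T}\subset\mathscr{B}_{\Sigma}$.

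For the second assertion, let $q$ be a point of the outer apparent horizon $\mathscr{A}=\partial\mathcal{T}$. Taking outer trapped surfaces $T'_n\subset\Sigma$ through points $p_n\in\mathcal{T}$ with $p_n\to q$ and passing to a limiting surface, one obtains a compact spacelike surface $S\subset\Sigma$ through $q$ whose outgoing orthogonal null expansion $\theta$ is nonpositive everywhere (being a limit of nonpositive expansions); in particular $\mathscr{A}\subset\mathcal{T}$. Now if $\theta(q)<0$, then $\theta<0$ on a neighbourhood of $q$ in $S$, so $S$ may be pushed slightly outward within $\Sigma$ inside that neighbourhood and left unchanged elsewhere; the resulting surface $S'$ is again compact and spacelike, still has $\theta\le0$ everywhere, and — because $q$ lies on the \emph{outer} boundary of $\mathcal{T}$ — for a suitable push $S'$ passes through a point $p'\notin\mathcal{T}$. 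But $S'$ is then an outer trapped surface through $p'$, forcing $p'\in\mathcal{T}$, a contradiction. Therefore $\theta(q)=0$: the expansion of the outgoing orthogonal future directed null congruence vanishes everywhere on the outer apparent horizon. That $\mathscr{A}$ is itself a surface in $\Sigma$ follows, as for the boundaries analysed above, from its achronal-boundary–type structure; see \cite{Hawking1973}.

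The main obstacle is the degenerate case in the first part — a generator reaching $\mathscr{I}^+$ along which $\theta$, $\sigma_{\mu\nu}$ and the Ricci term all vanish identically — together with the regularity of $\mathscr{A}$ as a surface in the second part; for both I would rely on \cite{Hawking1973} rather than reprove them. One must also take care in the deformation step that the perturbed surface genuinely enters the untrapped region while preserving $\theta\le0$, which is precisely why localizing the perturbation to a neighbourhood where $\theta<0$ is essential.
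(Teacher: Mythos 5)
The paper itself offers no proof of this proposition --- it is stated with a pointer to \cite{Hawking1973} --- so there is nothing internal to compare you against; judged on its own, your reconstruction follows the standard Hawking--Ellis route (a point of $\partial J^+(T')$ on $\mathscr{I}^+$ via theorems \ref{theorem7} and \ref{theorem6}, focusing via proposition \ref{p2}, and an outward-deformation argument for $\theta=0$ on the outer apparent horizon), and the $\theta_0<0$ branch and the deformation step are essentially correct at the level of rigor of that source. Two smaller points deserve a sentence each: you silently combine proposition \ref{p2} (a conjugate point at finite \emph{physical} affine parameter) with theorem \ref{theorem6} applied in the \emph{unphysical} spacetime, which is fine only once one invokes the conformal invariance of null geodesics and of their conjugate points; and you assert without argument that the generator reaching $\mathscr{I}^+$ belongs to the \emph{outgoing} orthogonal family, which is the only family constrained by the outer-trapped condition.

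The genuine gaps are the two items you yourself defer, and the first of them is not a removable corner case: the marginal branch $\theta_0=0$ is precisely what distinguishes this proposition (outer trapped, $\theta\leq0$) from the preceding one on trapped surfaces ($\theta<0$), so a proof that handles only $\theta_0<0$ has not proved the new statement. The standard way to close it needs no conjugate points at all: with $\omega_{\mu\nu}=0$ (theorem \ref{frobenius}) and $R_{\mu\nu}\ell^{\mu}\ell^{\nu}\geq0$, Raychaudhuri's equation gives $\theta\leq0$ along the entire generator; but along a generator reaching $\mathscr{I}^+$ the physical cross-sectional area element equals $\Omega^{-2}$ times the unphysical one, which tends to a finite nonzero limit while $\Omega\to0$ with $\nabla_{\mu}\Omega\neq0$, so the physical area element diverges (equivalently $\theta>0$ sufficiently near $\mathscr{I}^+$), contradicting monotone non-increase of the area element --- and this single asymptotic argument disposes of both branches at once. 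Second, in the apparent-horizon part the existence of a limiting compact surface through $q\in\partial\mathcal{T}$ with $\theta\leq0$ is asserted by ``passing to a limiting surface'' with no control of convergence or regularity; that limit (and the smoothness of $\mathscr{A}$) is the genuinely delicate point, heuristic even in \cite{Hawking1973}, so your proof of the second assertion really establishes only the conditional statement that \emph{if} such a limit surface exists then its outgoing expansion cannot be negative at $q$.
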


Under the same assumptions of prop. \ref{trapped}, this proposition shows that the trapped region, and thus, the apparent horizon, must lie inside the black hole region. These properties of the trapped region and the apparent horizon can be seen in the conformal diagram of the analytically extended Schwarzschild spacetime, fig. \ref{fig:sch2}. Evidently, in such case, the trapped region coincides with the black hole region, and the outer apparent horizon coincides with the boundary of $\mathscr{B}_{\Sigma}$, which is precisely the intersection of the hypersurface $\{r=r_s\}$ with the Cauchy hypersurface, $\Sigma$. 

\section{Event horizon}
Let $(M,g_{\mu\nu})$ be a strongly asymptotically predictable spacetime. The \textit{event horizon}, $H$, of $(M,g_{\mu\nu})$ is the boundary of the black hole region (see eq. \ref{blackhole}) in the physical spacetime, $H=\partial(\psi^{-1}[J^-(\mathscr{I^+})])$. Since the event horizon does not contain $\mathscr{I}^+$, it is made only of the null geodesics (see proposition \ref{nulll}) in $\partial(J^-(\mathscr{I^+}))\backslash{\overline{\mathscr{I}^+}}$, and thus, $H$ is a null hypersurface. In fact, one can verify that the event horizon is a part of the black hole region by the following line of reasoning. Let $a,a'\in\mathscr{I^+}$, $a'\in J^+(a)$ (with the causal past evaluated at the unphysical spacetime) and another event, $a''$, such that $a''\in \psi[M]\cap J^-(a)$. Then, following the same argumentation as the one presented in the proof of proposition \ref{prop2345}, one can conclude that $a''\in I^-(a')$. Since one may apply this logic to any two events located in arbitrarily large parameter of the null geodesic generators of $\mathscr{I^+}$, one then concludes that $\psi^{-1}[J^-(\mathscr{I^+})]=\psi^{-1}[I^-(\mathscr{I^+})]$. Hence, the black hole region is closed in the physical spacetime, i.e., $H\subset B$. In particular, this means that one may define the black hole region, equivalently, as $B=M\backslash \psi^{-1}[I^-(\mathscr{I}^+)]$. 

Additionally, by theorem \ref{the3}, the event horizon must be generated by future inextendible null geodesics contained entirely in it, as no generator of $H$ can have a future endpoint on $\mathscr{I^+}$. Consequently, the generators of $H$ cannot develop caustics, since that would mean that the generator has left $\partial(\psi^{-1}[J^-(\mathscr{I^+})])$. Nonetheless, a generator may enter $H$ at a caustic (which can happen if the black hole absorbs matter or radiation, or when it is formed, as exemplified by the event $a$ in fig. \ref{fig:sch5}), but once it enters, it can never leave, since it would be in contradiction with theorem \ref{the3}. Lastly, note that $\psi[H]\subset\overline{\psi[M]\cap J^-(\mathscr{I^+})}\subset A'$, where $A'$ is the globally hyperbolic region of the strongly asymptotically predictable spacetime. As stated above, this condition does not exclude the possibility of, for example, the null geodesic generators of $H$ in the physical spacetime to be future incomplete. In this sense, one should regard this as an additional assumption for the physical spacetime.

Similarly as the intersection of the black hole region with a Cauchy hypersurface was used to define a black hole at a given time, one can define the event horizon at a time, $\Sigma$, to be $ H\cap \Sigma$, which is a spacelike two-dimensional submanifold of $\Sigma$. Similarly, each connected component of $ H\cap \Sigma$ is the event horizon of a black hole at a time $\Sigma$. The next theorem gives an important result regarding the evolution of the event horizon over Cauchy hypersurfaces, originally derived in \cite{Hawking1971}\footnote{A more general version of this theorem can be found in \cite{Chruściel2001}.}.

\begin{theorem}\label{area}
    \hypertarget{Area}{}Let $(M,g_{\mu\nu})$ be a strongly asymptotically predictable spacetime satisfying $R_{\mu\nu}\ell^{\mu}\ell^{\nu}\geq0$ for all $\ell^{\mu}$ null. Let $\Sigma_1$ and $\Sigma_2$ be Cauchy hypersurfaces for the globally hyperbolic region $\psi^{-1}[A']\subset M$ with $\Sigma_2\subset I^+(\Sigma_1)$ and let $\mathscr{H}_1=H\cap\Sigma_1$ and $\mathscr{H}_2=H\cap\Sigma_2$. Then the area of $\mathscr{H}_2$ is greater or equal to the area of $\mathscr{H}_1$.
\end{theorem}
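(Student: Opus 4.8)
The plan is to prove the theorem in two stages: first to show that the expansion $\theta$ of the null geodesic generators of the event horizon $H$ is nonnegative everywhere on $H$, and then to deduce the area inequality by transporting $\mathscr{H}_1$ forward along these generators and using that the transverse area element grows at the rate $\theta$.

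For the first stage I would argue by contradiction. Recall from the discussion of the event horizon that the generators of $H$ are future inextendible null geodesics lying entirely in $H=\partial(\psi^{-1}[J^-(\mathscr{I}^+)])$, that $H\subset B$ so $H\cap\psi^{-1}[J^-(\mathscr{I}^+)]=\emptyset$, and that such generators cannot develop caustics without leaving $H$ (a caustic produces a conjugate point, so by theorem \ref{theorem4} the geodesic admits a timelike deformation and hence, arguing as in the proof of proposition \ref{prop2345}, enters $\psi^{-1}[I^-(\mathscr{I}^+)]=\psi^{-1}[J^-(\mathscr{I}^+)]$, a contradiction). Now suppose $\theta=\theta_0<0$ at some $p\in H$, on a generator $\gamma$. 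Since $R_{\mu\nu}\ell^\mu\ell^\nu\geq0$ by hypothesis, Raychaudhuri's equation \ref{eq13} and proposition \ref{prop1} force $\theta\to-\infty$ along $\gamma$ within affine parameter $\lambda\leq 2/|\theta_0|$, i.e.\ $\gamma$ develops a caustic a finite affine distance to the future of $p$ — contradicting the previous sentence. Hence $\theta\geq0$ everywhere on $H$.

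For the second stage I would follow the generators of $H$ from $\Sigma_1$ to $\Sigma_2$. Each point of $\mathscr{H}_1=H\cap\Sigma_1$ lies on a generator of $H$; since these generators are future inextendible, remain in $H$, and lie in the globally hyperbolic region $\psi^{-1}[A']$ for which $\Sigma_2$ (with $\Sigma_2\subset I^+(\Sigma_1)$) is a Cauchy hypersurface, each such generator meets $\mathscr{H}_2=H\cap\Sigma_2$, which gives a continuous map $f:\mathscr{H}_1\to\mathscr{H}_2$. This map is injective, for if the generators through two distinct points of $\mathscr{H}_1$ met at a point of $\mathscr{H}_2$ they would cross there, forcing (as above) one of them to leave $H$ just beyond that point, contradicting that both remain in $H$ past $\Sigma_2$. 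Along each generator the induced area element $\delta A$ of the cross sections obeys $d(\delta A)/d\lambda=\theta\,\delta A$ by the definition of the expansion, so $\delta A$ is nondecreasing because $\theta\geq0$; hence $\mathrm{Area}(f(\mathscr{H}_1))\geq\mathrm{Area}(\mathscr{H}_1)$. Since $f(\mathscr{H}_1)\subseteq\mathscr{H}_2$ (the inclusion may be proper, as further generators can enter $H$ at caustics between $\Sigma_1$ and $\Sigma_2$), we conclude $\mathrm{Area}(\mathscr{H}_2)\geq\mathrm{Area}(f(\mathscr{H}_1))\geq\mathrm{Area}(\mathscr{H}_1)$.

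The hard part is the first stage, where two technical points need care: proposition \ref{prop1} only delivers the caustic ``assuming that such geodesic extends that far'', which can fail if the generator is future incomplete, and in general $H$ is only a $C^0$ hypersurface, so $\theta$ is not classically defined pointwise on it. The standard way around both is to not work on $H$ directly but, assuming $\theta<0$ at $p$, to deform a cross section of $H$ slightly outward into $\psi^{-1}[J^-(\mathscr{I}^+)]$, obtaining a smooth compact spacelike two-surface $K$ on which, by continuity, the outgoing null expansion is still negative near $p$; applying proposition \ref{p2} and theorems \ref{theorem5} and \ref{theorem6} to $K$ then reaches the contradiction without any completeness or differentiability assumption on $H$ itself. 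I would also flag that this first stage is the only place the curvature hypothesis $R_{\mu\nu}\ell^\mu\ell^\nu\geq0$ — i.e.\ the null energy condition through Einstein's equation — is used; the remainder of the proof is purely causal and geometric.
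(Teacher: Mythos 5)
Your proposal follows the same two-step strategy as the paper's proof: establish $\theta\geq0$ for the null geodesic generators of $H$ by a contradiction argument built on Raychaudhuri's equation and the fact (theorem \ref{the3}) that generators, once on $H$, cannot leave it, and then obtain the area inequality by mapping $\mathscr{H}_1$ into $\mathscr{H}_2$ along the generators. The one substantive difference is how the completeness caveat is handled: the paper simply assumes the generators of $H$ are complete (and acknowledges this in the remarks following the proof), invoking proposition \ref{p2} directly on the horizon cross-section, whereas you flag that assumption together with the $C^0$ issue and replace it by the standard outward-deformation argument — pushing the cross-section slightly into $\psi^{-1}[J^-(\mathscr{I}^+)]$ and applying proposition \ref{p2} with theorems \ref{theorem5} and \ref{theorem6} to the deformed surface, whose outgoing geodesics reach $\mathscr{I}^+$ and so attain arbitrarily large affine parameter. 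That refinement buys a proof that does not presuppose completeness of the horizon generators themselves, at the cost of a slightly longer causal argument; it is the route taken in Hawking's original proof and in Wald's textbook treatment. One small point of care in your first, unrefined version of stage one: a caustic of the generator congruence is a conjugate point relative to the cross-section, so the timelike deformation should be supplied by theorem \ref{theorem5} (conjugacy to a surface) rather than theorem \ref{theorem4} (conjugacy of two points along a congruence emanating from a point); your refined version uses the correct statements.
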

\begin{proof}
Since through each point in $\mathscr{H}_1$ passes precisely one null geodesic generator of $H$, one can construct a map $f:\mathscr{H}_1\to\mathscr{H}_2$ with $f[\mathscr{H}_1]\subseteq\mathscr{H}_2$ by following the generators from $\Sigma_1$ to $\Sigma_2$. As $I^+(\Sigma_1)\supset\Sigma_2$, the variation of the area mapped by following the generators from $\Sigma_1$ to $\Sigma_2$ is given by the expansion of the generators of $H$. Hence, it suffices to show that the expansion of the null geodesic generators of the event horizon is non-negative. Suppose $\theta<0$ at $a\in\mathscr{H}_1$, such that through it passes the null geodesic $\gamma$. Assuming that the null geodesics in $H$ are complete, it follows from proposition \ref{p2} that within finite affine parameter there will be a point $a'\in\gamma$ conjugate to $a$, which means that points in $\gamma$ beyond $a'$ are timelike related to $a$. Thus, $\gamma$ must have left $H$ at $a'$, contradicting the result that the null generators of the horizon have to be contained entirely in it, as per theorem \ref{the3}. Hence, $\theta\geq0$ everywhere on $H$.
\end{proof}

Note that this theorem relies on the assumption that the null geodesic generators of the event horizon are complete, which will be the case if one considers that singularities do not develop on $H$. Furthermore, it also requires that $R_{\mu\nu}\ell^{\mu}\ell^{\nu}\geq 0$ for all null $\ell^{\mu}$, which will be the case if Einstein's equation and the strong or weak energy condition hold. In particular, the development of caustics on $H$ can only happen if the area of the event horizon at a given time increases, i.e., a generator entering $H$ rather than leaving it. Finally, this result concerns the evolution of $\mathscr{H}_{\Sigma}$, which is the event horizon of the black hole region at a time $\Sigma$. Because this is a global property (in terms of a spatial section) of the black hole region, one can ask if for a spacetime containing multiple black holes at a time $\Sigma$ one of them could decrease the area of its event horizon while being “compensated” by a greater increase in the area of others. In other words, one can ask if the results of the theorem hold locally. It has been shown that this is indeed the case \cite{Giulini2003}, as if a connected component of $\mathscr{H}_2$ has a smaller area than any connected component of $\mathscr{H}_1$, then it must have formed at a time $\Sigma$ such that $\Sigma_1<\Sigma<\Sigma_2$, since theorem \ref{45678} states that black holes cannot bifurcate.

Fig. \ref{fig:haw1} illustrates the dynamics of the event horizon over Cauchy hypersurfaces. Namely, at a time $\Sigma$, there exists a black hole whose event horizon area is given by $S_1(\Sigma)$, and at the event $a\in\Sigma$, another black hole forms. Following, at a time $\Sigma'$, the black hole formed at a time $\Sigma$ and the other one merge, and at a time $\Sigma''$, there are again two black holes. Now, if the area of the event horizon of the smaller black hole at $\Sigma''$, given by $S_3(\Sigma'')$, is smaller than both $S_1(\Sigma')$ and $S_2(\Sigma')$, then it must have formed in a time between $\Sigma'$ and $\Sigma''$.

 \begin{figure}[h]
\centering
\includegraphics[scale=1.2]{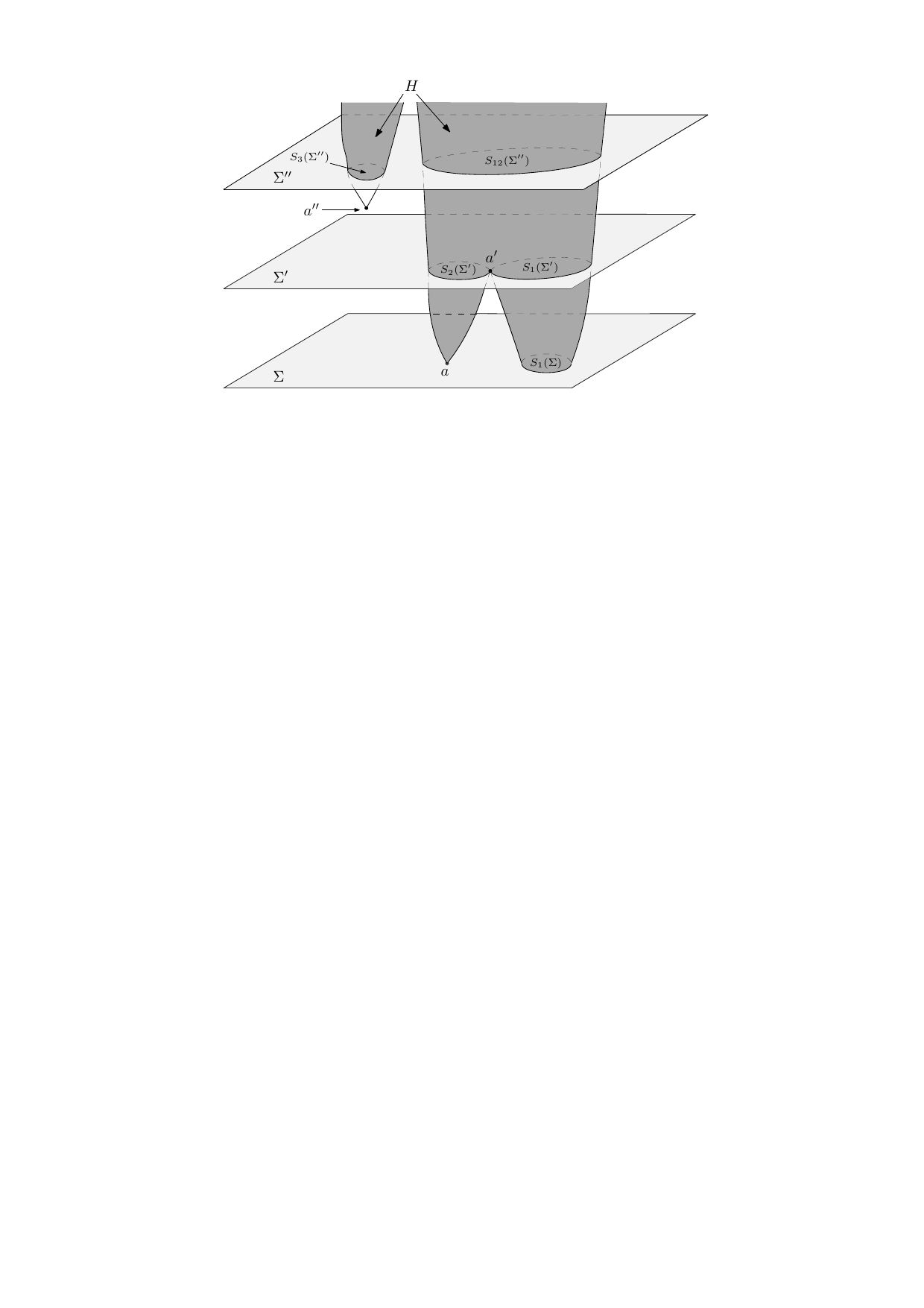} 
\caption{Spacetime diagram illustrating the dynamics of connected components of $\mathscr{H}$.}
\caption*{Source: By the author.}
\label{fig:haw1}
\end{figure}

To conclude this section, a property of the event horizon of stationary spacetimes is presented. To state the pertinent result, it is necessary to first define the notion of a Killing horizon. A null hypersurface whose normal vector is a Killing vector is said to be a \textit{Killing horizon}. The next result, derived originally in \cite{Hawking1972a}, states that the event horizon of a stationary black hole must be a Killing horizon, and is known as the \textit{strong rigidity theorem} (see also \cite{Wald1994} for a convenient statement). 

\begin{theorem}\label{theoremkilling} 
    Let $(M,g_{\mu\nu})$ be an strongly asymptotically predictable stationary spacetime which obeys the Einstein's equation such that the metric and the matter fields are analytic. Then $H$ is a Killing horizon.
\end{theorem}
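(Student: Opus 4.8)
The plan is to follow Hawking's original strategy: first exploit stationarity together with the area theorem to show that the event horizon is an intrinsically very rigid null hypersurface, then identify a Killing vector normal to it, using analyticity to propagate Killing's equation off the horizon. \textbf{Step 1 (the horizon is non-expanding and shear-free).} Since $(M,g_{\mu\nu})$ is stationary, the one-parameter group of isometries generated by the stationary Killing vector $\xi^{\mu}$ maps the event horizon $H$ to itself, so the cross-sectional area of $H$ at a time $\Sigma$ is invariant under this flow. By the area theorem (Theorem \ref{area}) the expansion $\theta$ of the null geodesic generators of $H$ is non-negative; since that area cannot change, $\theta=0$ everywhere on $H$. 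The generators of the null hypersurface $H$ are hypersurface orthogonal, so their vorticity vanishes by Frobenius' theorem (Theorem \ref{frobenius}). Substituting $\theta=0$ and $\omega_{\mu\nu}=0$ into Raychaudhuri's equation (eq. \ref{eq13}) and using $R_{\mu\nu}\ell^{\mu}\ell^{\nu}\geq 0$, which follows from Einstein's equation and the null energy condition, forces both $\sigma_{\mu\nu}=0$ and $R_{\mu\nu}\ell^{\mu}\ell^{\nu}=0$ on $H$. Hence the transverse part $\hat{B}_{\mu\nu}$ of $\nabla_{\nu}\ell_{\mu}$ vanishes on $H$: the generators form a shear-free, expansion-free, vorticity-free null congruence.

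\textbf{Step 2 (static case) and Step 3 (rigidity).} Since $H$ is invariant under the isometry flow, $\xi^{\mu}$ is tangent to $H$ at every point of $H$. If $\xi^{\mu}$ is parallel to the generators $\ell^{\mu}$ everywhere on $H$, then, since the normal to a null hypersurface coincides with its generator direction, $\xi^{\mu}$ is normal to $H$ and $H$ is a Killing horizon with respect to $\xi^{\mu}$ by definition; this is the case realized when the spacetime is static. Otherwise $\xi^{\mu}$ has, somewhere on $H$, a component transverse to the generators, and the isometry flow permutes the generators nontrivially. The crucial claim is that this motion is a rotation: modulo motion along the generators, the flow of $\xi^{\mu}$ induces a one-parameter group of diffeomorphisms of a compact cross-section of $H$, whose closure is a torus action; a generator of that action, Lie-dragged along the null generators, yields a vector field $\psi^{\mu}$ on $H$ that is spacelike, tangent to $H$, has closed orbits, commutes with $\xi^{\mu}$, and, using the vanishing of $\hat{B}_{\mu\nu}$, satisfies Killing's equation (eq. \ref{eq3}) to the required order on $H$. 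One then exhibits a constant $\Omega_H$, the angular velocity of the horizon, such that $\chi^{\mu}=\xi^{\mu}+\Omega_H\psi^{\mu}$ is null on $H$ and tangent to its generators, hence normal to $H$.

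\textbf{Step 4 (extension off the horizon via analyticity).} So far $\chi^{\mu}$ (and $\psi^{\mu}$) is only known to satisfy Killing's equation on $H$ itself. One verifies that $\nabla_{(\mu}\chi_{\nu)}$ and its normal derivatives vanish on $H$, and then uses the fact that any Killing field obeys the linear second-order propagation equations $\nabla_{\nu}\nabla_{\alpha}\chi_{\mu}=-R_{\alpha\mu\nu}{}^{\beta}\chi_{\beta}$ (eq. \ref{kr}) and, tracing, $\nabla_{\nu}\nabla^{\nu}\chi^{\mu}=-R^{\mu\nu}\chi_{\nu}$ (eq. \ref{kill}): a solution of this system with vanishing data on the characteristic surface $H$ must vanish in the region it determines, so $\nabla_{(\mu}\chi_{\nu)}=0$ on a neighborhood of $H$. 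In the analytic category this locally defined Killing field extends uniquely, by analytic continuation, to a Killing field on the whole connected spacetime. Therefore $\chi^{\mu}$ is a Killing vector normal to $H$, i.e., $H$ is a Killing horizon.

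\textbf{Main obstacle.} The delicate part is Step 3: showing that the transverse drift of $\xi^{\mu}$ along $H$ is genuinely a rotation with closed orbits, and that the resulting $\psi^{\mu}$ really satisfies Killing's equation on $H$ (not merely that it generates isometries of the degenerate intrinsic metric of $H$), requires a careful structural analysis of stationary non-expanding horizons together with a separate treatment of the degenerate case, in which the surface gravity vanishes and $H$ has no bifurcation surface. The analyticity hypothesis on the metric and matter fields is precisely what makes Step 4 tractable; without it, promoting the horizon-localized Killing data to a global Killing vector is known to require substantially more work.
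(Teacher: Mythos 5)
Your outline follows Hawking's original strategy, which is precisely the route the paper takes: the text does not prove Theorem \ref{theoremkilling} itself but quotes it from the original reference, adding only the heuristic remark that stationarity forces the expansion of the generators of $H$ to vanish and forces $\xi^{\mu}$ to be tangent to $H$ (it also records the weak rigidity theorem, a purely geometric alternative assuming the $t$-$\phi$ orthogonality property, which avoids Einstein's equation and analyticity at the price of extra symmetry hypotheses). Your Steps 1 and 2 are sound, with one hypothesis you should make explicit: the conclusion $\theta\geq 0$ from Theorem \ref{area} requires the null generators of $H$ to be complete, an assumption the paper states separately; given that, stationarity plus Raychaudhuri's equation (eq. \ref{eq13}) with $\omega_{\mu\nu}=0$ from Theorem \ref{frobenius} indeed gives $\theta=\sigma_{\mu\nu}=R_{\mu\nu}\ell^{\mu}\ell^{\nu}=0$ on $H$.

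As a proof, however, the proposal has two genuine gaps. First, Step 3 is not a step but the entire content of the rigidity theorem: that the transverse drift of $\xi^{\mu}$ on a compact cross-section closes up into a circle action, that the resulting $\psi^{\mu}$ generates an isometry of the spacetime metric (not merely of the degenerate intrinsic geometry of $H$), and that the angular velocity $\Omega_H$ is constant over $H$ are exactly what must be established, and you only assert them; the degenerate case $\kappa=0$, which lacks a bifurcation surface, also needs separate treatment, as you acknowledge. Second, the propagation argument in Step 4 fails as stated: $H$ is a single characteristic (null) hypersurface, and vanishing data on one characteristic surface does not determine a solution of a second-order linear system in a full neighborhood; moreover, eqs. \ref{kr} and \ref{kill} are identities satisfied by Killing fields, whereas your $\chi^{\mu}$ is not yet known to be Killing off $H$, so they cannot be used to propagate it. The actual argument defines $\chi^{\mu}$ off $H$ (e.g., by transport along transverse geodesics), uses Einstein's equation and the vanishing of $\hat{B}_{\mu\nu}$ to show that $\mathcal{L}_{\chi}g_{\mu\nu}$ and all its transverse derivatives vanish on $H$, and only then invokes analyticity of the metric and matter fields — via a Taylor-expansion argument together with the classical extension theorem for local Killing fields on analytic manifolds — to conclude that $\chi^{\mu}$ is Killing near $H$ and, by analytic continuation, on the spacetime. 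Analyticity is thus not a convenience that "makes Step 4 tractable"; it is what replaces the uniqueness claim you invoke, which is false for data on a single null surface.
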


Theorem \ref{theoremkilling} is merely a consequence of the fact that, in stationary spacetimes, the expansion of the null geodesic generators of the event horizon must vanish identically, so that the area of $\mathscr{H}$ is invariant over time translations. More precisely, the Killing vector associated with the time translation symmetry, $\xi^{\mu}$ (see \S~\ref{schsec}), must be tangent to $H$, which means that it must be spacelike or null. If that were not the case, $H$ would not be mapped into itself along the orbits of $\xi^{\mu}$, thus contradicting the stationary property of spacetime. For instance, $\xi^{\mu}$ is the normal to the event horizon of a Schwarzschild black hole (i.e., a black hole described by the Schwarzschild metric). Consequently, since the expansion of the null vector normal to $H$ vanishes, the event horizon and outer apparent horizon of a stationary black hole coincide. 

It should be noted that the Killing horizon characteristic of the event horizon of a stationary black hole can also be deduced without invoking Einstein's equation, relying on purely geometrical arguments, defined as follows. An \textit{axisymmetric} spacetime possesses a spacelike Killing vector, $\psi^{\mu}$, associated with rotations at infinity, that is, its orbits are closed curves of length $2\pi$ in a neighborhood of $\mathscr{I}$. A spacetime is said to be \textit{stationary and axisymmetric} if the Killing vectors associated with these isometries commute. A stationary and axisymmetric spacetime is said to possess the $t$-$\phi$ orthogonality property if the two planes spanned by $\xi^{\mu}$ and $\psi^{\mu}$ are orthogonal to a family of surfaces. More details on these properties can be found in \cite{Wald1984}.

In particular, it can be shown \cite{Heusler1996} that in a static or stationary spacetime with the $t$-$\phi$ orthogonality property, the event horizon must be a Killing horizon. This result is known as the \textit{weak rigidity theorem}. This result is of significance because, although theorem \ref{theoremkilling} is an interesting result, the assumption that the metric and matter fields are analytic has no physical justification. In this sense, a purely geometrical result provides another rationale for this property.

\section{Surface gravity}\label{kappas}

The purpose of this section is to introduce a scalar that can be defined for null hypersurfaces and extended to Killing horizons, which will be of great importance to the analysis of classical and semiclassical aspects of black holes. Now, the definition of this quantity does not rely on the assumption that the event horizon is a Killing horizon, but several physically significant properties of it do. Hence, it is of interest to restrict this discussion to stationary black holes. In particular, we will also see that the physical interpretation of this quantity can be easily investigated in the case of a Schwarschild black hole. For simplicity, geometrized units will be adopted in this section.

Since the event horizon is a null hypersurface, its normal vector, $\chi^{\mu}$, must respect
\begin{equation}
    \chi^{\mu}\chi_{\mu}\overset{H}{=}0,
\end{equation}
where the sign $H$ was used on the equality to state that it is only valid on $H$.
Since the vector $\nabla^{\mu}(\chi^{\nu}\chi_{\nu})$ must also be normal to the horizon (see the remarks below theorem \ref{frobenius}), one can define a scalar, $\kappa$, to be the proportionality factor between these two vectors,
\begin{equation}\label{k7}
    \nabla^{\mu}(\chi^{\nu}\chi_{\nu})\overset{H}{=}-2\kappa\chi^{\mu},
\end{equation}
which is equivalently to writing
\begin{equation}\label{k1}
    \chi^{\nu}\nabla_{\nu}\chi^{\mu}\overset{H}{=}\kappa\chi^{\mu},
\end{equation}
as per Killing's equation. It is also possible to find an explicit relation for $\kappa$ by noting that since $\chi^{\mu}$ is hypersurface orthogonal on the horizon, it obeys $\chi_{[\mu}\nabla_{\nu}\chi_{\alpha]}\overset{H}{=}0$. As a consequence of Killing's equation, Frobenius' theorem yields 
 \begin{equation}\label{fk}
     \chi_{\mu}\nabla_{\nu}\chi_{\alpha}\overset{H}{=}-2\chi_{[\nu}\nabla_{\alpha]}\chi_{\mu}.
 \end{equation}
By contracting eq. \ref{fk} with $\nabla^{\nu}\chi^{\alpha}$, one finds
 \begin{equation}
 \begin{aligned}[b]
     \chi_{\mu}(\nabla^{\nu}\chi^{\alpha})(\nabla_{\nu}\chi_{\alpha})&\overset{H}{=}(\nabla^{\nu}\chi^{\alpha})(\chi_{\nu}\nabla_{\alpha}\chi_{\mu}-\chi_{\alpha}\nabla_{\nu}\chi_{\mu})\\
     &\overset{H}{=}\chi_{\nu}(\nabla^{\nu}\chi^{\alpha})(\nabla_{\alpha}\chi_{\mu})-\chi_{\alpha}(\nabla^{\nu}\chi^{\alpha})(\nabla_{\nu}\chi_{\mu})\\
     &\overset{H}{=} 2(\chi_{\nu}\nabla^{\nu}\chi^{\alpha})(\nabla_{\alpha}\chi_{\mu}), 
 \end{aligned}
 \end{equation}
 in which Killing's equation was used and the indices were relabeled. By using eq. \ref{k1} repeatedly, one obtains
 \begin{equation}\label{kd2}
     \kappa^2\overset{H}{=}-\frac{1}{2}(\nabla^{\mu}\chi^{\nu})(\nabla_{\mu}\chi_{\nu}),
 \end{equation}
 which is the desired relation. Note that this relation is equivalent to eq. \ref{k1} in the case where a null hypersurface is a Killing horizon, but is generally more efficient to evaluate the explicit form of $\kappa$ through eq. \ref{kd2}. 

To investigate the physical interpretation of $\kappa$ for an event horizon, it is convenient to analyze the case of a Schwarzschild black hole. Consider the observer with four-velocity $\ell^{\mu}$, as given by eq. \ref{ell1}. One can verify if such an observer at $r>r_s$ follows a geodesic by evaluating if its tangent respects the geodesic equation. This evaluation yields
\begin{equation}
\begin{aligned}[b]
    a^{\mu}&=\ell^{\nu}\nabla_{\nu}\ell^{\mu}\\
    &=\frac{\xi^{\nu}}{V}\nabla_{\nu}\frac{\xi^{\mu}}{V}\\
    &=\frac{\xi^{\nu}\nabla_{\nu}\xi^{\mu}}{V^2}+\frac{\xi^{\nu}{\xi^{\mu}}}{V}\nabla_{\nu}\frac{1}{V}\\
    &=-\frac{\xi^{\nu}\nabla^{\mu}\xi_{\nu}}{V^2}-\frac{\xi^{\nu}{\xi^{\mu}}}{2V^4}\nabla_{\nu}V^2\\
    &=\frac{\nabla^{\mu}(-\xi^{\nu}\xi_{\nu})}{2V^2}+\frac{{\xi^{\mu}\xi^{\nu}\xi^{\alpha}}}{V^4}\nabla_{\nu}\xi_{\alpha}\\
    &=\frac{\nabla^{\mu}V}{V}\\
    &=\frac{\xi^{\nu}\nabla_{\nu}\xi^{\mu}}{(-\xi^{\alpha}\xi_{\alpha})},
    \end{aligned}
\end{equation}
as the last term on the third line vanishes due to the contraction of a symmetric tensor with an antisymmetric one, and Killing's equation as well as $\nabla^{\mu}V^2/2V^2=\nabla^{\mu}V/V$ were used. As it can be readily verified that $\xi^{\mu}$ is not proportional to $\nabla^{\mu}V$, this result shows that an observer following an orbit of $\xi^{\mu}$ which is not at the asymptotic region is not in a geodesic motion. That is, it is necessary to apply an acceleration for his world line to remain an integral curve of $\xi^{\mu}$, whose module is given by
\begin{equation}\label{local}
    a=(a^{\mu}a_{\mu})^{1/2}=\frac{(\xi^{\nu}\nabla_{\nu}\xi^{\mu}\xi^{\alpha}\nabla_{\alpha}\xi_{\mu})^{1/2}}{(-\xi^{\beta}\xi_{\beta})}.
\end{equation}
To continue, it is useful to consider that this acceleration is being applied by an observer with four-velocity $\ell^{\mu}$ at the asymptotic region through a massless inextendible string. In order to evaluate the acceleration the observer at infinity must be exerting on the string, consider the following line of reasoning.

Let $a_{r}$ denote the module of the local acceleration and let $a_{\infty}$ denote the module of the acceleration measured at infinity, as illustrated in fig. \ref{fig:kappa}. In order to “pull” or “release” a unit mass test point-like body that is following an integral curve of $\xi^{\mu}$, the observer at infinity must apply a force such that the work done is $dW_{\infty}=a_{\infty}d\ell$. Similarly, the work applied to the unit mass body is measured locally to be $dW_{r}=a_{r}d\ell$, since the string is ideal and thus, $d\ell$ must be the same in both cases. Clearly, these two variations of energy cannot be equal, otherwise energy could be created by sending light rays from one end of the string to the other (as per the redshift factor, $V$). The proportionality factor between $a_{\infty}$ and $a_r$ can be found precisely from this fact, as the passage of time for the observer at $r$ is not the same for the one at infinity. Therefore, for energy to be conserved, it is necessary that
\begin{equation}
    a_{\infty}=V(r)a_r.
\end{equation}

\begin{figure}[h]
\centering
\includegraphics[scale=1.2]{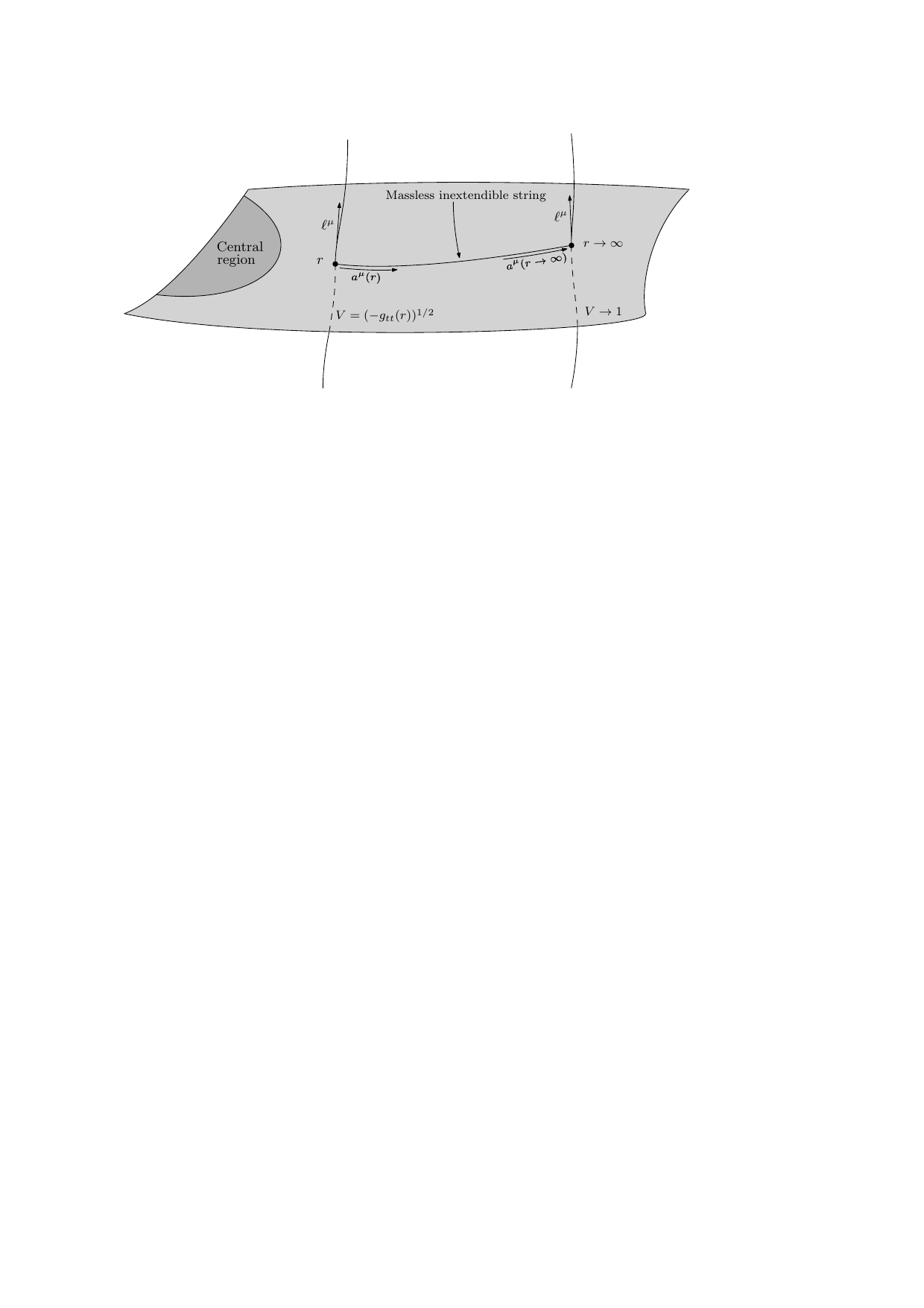} 
\caption{Spacetime diagram for the evaluation of $a_{\infty}$.}
\caption*{Source: By the author.}
\label{fig:kappa}
\end{figure}

To proceed, note that any Killing vector must obey 
\begin{equation}\label{46}
\begin{aligned}[b]
    \chi_{[\mu}\nabla_{\nu}\chi_{\alpha]}&=\frac{2!}{3!}(\chi_{\mu}\nabla_{[\nu}\chi_{\alpha]}+\chi_{\nu}\nabla_{[\alpha}\chi_{\mu]}+\chi_{\alpha}\nabla_{[\mu}\chi_{\nu]})\\
    &=\frac{1}{3}(\chi_{\mu}\nabla_{\nu}\chi_{\alpha}+\chi_{\nu}\nabla_{\alpha}\chi_{\mu}+\chi_{\alpha}\nabla_{\mu}\chi_{\nu})\\
    &=\frac{1}{3}(\chi_{\mu}\nabla_{\nu}\chi_{\alpha}+\chi_{\nu}\nabla_{\alpha}\chi_{\mu}-\chi_{\alpha}\nabla_{\nu}\chi_{\mu}).
    \end{aligned}
\end{equation}
Contracting eq. \ref{46} with itself with all the indices raised yields
\begin{equation}\label{199}
    3(\chi^{[\mu}\nabla^{\nu}\chi^{\alpha]})(\chi_{[\mu}\nabla_{\nu}\chi_{\alpha]})=\chi^{\mu}\chi_{\mu}(\nabla^{\nu}\chi^{\alpha})(\nabla_{\nu}\chi_{\alpha})-2\chi^{\mu}\chi_{\nu}(\nabla^{\nu}\chi^{\alpha})(\nabla_{\mu}\chi_{\alpha}).
\end{equation}
If one divides eq. \ref{199} by the norm of $\chi^{\mu}$ and takes the limit $r\to r_s$, one finds that the right hand side must vanish. This can be deduced from the fact that, if $\kappa\neq0$, then $\nabla_{\mu}(\chi^{\nu}\chi_{\nu})\neq 0$ on the horizon (see the remarks below theorem \ref{frobenius}), while Frobenius' theorem implies that 
\begin{equation}
    \nabla_{\beta}(\chi^{[\mu}\nabla^{\nu}\chi^{\alpha]})(\chi_{[\mu}\nabla_{\nu}\chi_{\alpha]})=2\chi^{[\mu}\nabla^{\nu}\chi^{\alpha]}\nabla_{\beta}\chi_{[\mu}\nabla_{\nu}\chi_{\alpha]}\overset{H}{=}0.
\end{equation}
Thus, L'Hopital's rule \cite{Stewart2007} implies that the limit of the left hand side of eq. \ref{199} divided by $\chi^{\mu}\chi_{\mu}$ as $r\to r_s$ results in
\begin{equation}
\lim_{r\to r_s}{\left[(\nabla^{\nu}\chi^{\alpha})(\nabla_{\nu}\chi_{\alpha})\right]}=\lim_{r\to r_s}{\left[\frac{2\chi^{\mu}\chi_{\nu}(\nabla^{\nu}\chi^{\alpha})(\nabla_{\mu}\chi_{\alpha})}{(-\chi^{\beta}\chi_{\beta})}\right]},
\end{equation}
which by using eq. \ref{kd2} reduces to
\begin{equation}
    \kappa=\lim_{r\to r_s}{\left[\frac{(\chi_{\nu}\nabla^{\nu}\chi^{\alpha})^{1/2}(\chi^{\mu}\nabla_{\mu}\chi_{\alpha})^{1/2}}{(-\chi^{\beta}\chi_{\beta})^{1/2}}\right]}.
\end{equation}
Finally, for a Schwarzschild black hole, $\chi^{\mu}=\xi^{\mu}$, and using eqs. \ref{reds} and \ref{local} yields
\begin{equation}
    \kappa=\lim_{r\to r_s}{[V(r)a_r]}.
\end{equation}

Hence, $\kappa$ is exactly the acceleration measured by an observer following the orbit of $\xi^{\mu}$ at the asymptotic region to hold a unit mass test point-like body at a constant radial coordinate as $r\to r_s$, which can be interpreted as the gravitational acceleration\footnote{Although $\kappa$ is finite, the local acceleration, $a_{r_s}$, diverges.} at the event horizon of a Schwarzschild black hole. Due to this interpretation, $\kappa$ is referred to as the \textit{surface gravity} of a stationary black hole. It should be noted that the acceleration necessary to keep a particle “at rest” at the event horizon of a black hole, as measured from the asymptotic region, can be evaluated for any black hole (but details about the spacetime would be necessary). However, most of the interesting properties of $\kappa$ rely on the assumption that the black hole is stationary. 

The value of $\kappa$ for the Schwarzschild black hole can be computed using eq. \ref{kd2}. Since the null Killing vector normal to the event horizon is $\xi^{\mu}$, eq. \ref{kd2} reads,
\begin{equation}\label{kd3}
    \kappa^2\overset{H}{=}-\frac{1}{2}(\nabla^{\mu}\xi^{\nu})(\nabla_{\mu}\xi_{\nu}).
\end{equation}
To evaluate the surface gravity, one makes use of eq. \ref{Christoffel dual} and notes that
\begin{equation}
    \nabla^{\mu}\xi^{\nu}=g^{\mu\alpha}\nabla_{\alpha}\xi^{\nu}=g^{\mu\alpha}(\partial_{\alpha}\xi^{\nu}+\Gamma^{\nu}\mathstrut_{\alpha\beta}\xi^{\beta}).
\end{equation}
Since $\xi^{\mu}=(\partial_t)^{\mu}$, it is not difficult to conclude\footnote{This conclusion follows from the Christoffel symbols presented in appendix \ref{B1}. The metric presented there is a more general one, but it reduces to the Schwarschild metric when the factor $a=0$. More details on this metric are given in \S\;\ref{kerr}.} that the only nonvanishing independent component of $\nabla^{\mu}\xi^{\nu}$ is $\nabla^{r}\xi^{t}=-r_s/2r^2$. As $\nabla^{r}\xi^{t}=-\nabla^{t}\xi^{r}$, eq. \ref{kd3} reads
\begin{equation}
    \kappa^2=-g_{rr}g_{tt}(\nabla^{r}\xi^{t})^2\Big|_{r=r_s}=\frac{r_s^2}{4r^4}\Big|_{r=r_s}.
\end{equation}
Thus, restoring the constants, the surface gravity of the Schwarzschild black hole is 
\begin{equation}\label{ksch}
    \kappa=\frac{c^2}{2r_s}.
\end{equation}

It follows from the explicit form of the surface gravity of the Schwarzschild black hole that it is constant over the event horizon. In fact, this result can be proven to be much more general. In order to investigate explicitly the variation of $\kappa$ over the event horizon of any stationary black hole, one needs an adequate operator to do so. Since the surface gravity is defined only on $H$, it can only be differentiated in directions tangent to the event horizon. One would assume that a choice of operator to measure the change of $\kappa$ over $H$ would be the projector operator associated with its metric, but since $H$ is a null hypersurface, no such natural operator exists. Nevertheless, such differentiation can be done by considering the tensor\footnote{This tensor should not be thought of as the volume element of $H$, since it does not obey eq. \ref{form1}. Evidently, there is no such tensor.} $\epsilon^{\mu\nu\alpha\beta}\chi_{\beta}$, where $\epsilon_{\mu\nu\alpha\beta}$ is the volume element of $M$. Since this tensor is tangent to the horizon (in the sense that it is orthogonal to any vector normal to it), one may apply $\epsilon^{\mu\nu\alpha\beta}\chi_{\beta}\nabla_{\alpha}$ to any equation holding there, and its action will give information about the variation of said quantity over the horizon. Moreover, due to the antisymmetric property of $\epsilon^{\mu\nu\alpha\beta}$, one can apply, equivalently, $\epsilon^{\mu\nu\alpha\beta}\chi_{[\beta}\nabla_{\alpha]}$. For simplicity, the development of calculations can be made by applying only $\chi_{[\beta}\nabla_{\alpha]}$ and then contracting with the volume element at the end.
 
 By applying $\chi_{[\beta}\nabla_{\alpha]}$ to eq. \ref{k1}, one obtains
 \begin{equation}\label{k2}
     \chi_{\mu}\chi_{[\beta}\nabla_{\alpha]}\kappa\overset{H}{=}\underbrace{\chi_{[\beta}\nabla_{\alpha]}(\chi^{\nu}\nabla_{\nu}\chi_{\mu})}_{(\text{I})}-\kappa\chi_{[\beta}\nabla_{\alpha]}\chi_{\mu}.
 \end{equation}
 Note that (I) can be written as
 \begin{equation}
 \begin{aligned}[b]
     \chi_{[\beta}\nabla_{\alpha]}(\chi^{\nu}\nabla_{\nu}\chi_{\mu}) & =(\chi_{[\beta}\nabla_{\alpha]}\chi^{\nu})(\nabla_{\nu}\chi_{\mu})+\chi^{\nu}\chi_{[\beta}\nabla_{\alpha]}\nabla_{\nu}\chi_{\mu}\\
     & \overset{H}{=} \kappa\chi_{[\beta}\nabla_{\alpha]}\chi_{\mu}-\chi^{\nu}R_{\nu\mu[\alpha}\mathstrut^{\sigma}\chi_{\beta]}\chi_{\sigma}, 
\end{aligned}
 \end{equation}
 where eqs. \ref{kr}, \ref{k1} were used and eq. \ref{fk} was used twice. Due to the antisymmetry of the first two indices of the Riemann tensor, as per eq. \ref{Rfirst}, eq. \ref{k2} then reads
 \begin{equation}\label{k3}
     \chi_{\mu}\chi_{[\beta}\nabla_{\alpha]}\kappa\overset{H}{=}\chi^{\nu}R_{\mu\nu[\alpha}\mathstrut^{\sigma}\chi_{\beta]}\chi_{\sigma}.
 \end{equation}
One's goal now is to rewrite the right hand side of eq. \ref{k3} to find a relation for $\chi_{[\beta}\nabla_{\alpha]}\kappa$. To proceed, apply $\chi_{[\beta}\nabla_{\alpha]}$ to eq. \ref{fk}, which yields
\begin{equation}\label{k4}
    \underbrace{(\chi_{[\beta}\nabla_{\sigma]}\chi_\mu)(\nabla_{\nu}\chi_{\alpha})}_{(\text{II})}+\chi_{\mu}\chi_{[\beta}\nabla_{\sigma]}\nabla_{\nu}\chi_{\alpha}=\underbrace{-2(\chi_{[\beta}\nabla_{\sigma]}\chi_{[\nu})\nabla_{\alpha]}\chi_{\mu}}_{(\text{III})}-2(\chi_{[\beta}\nabla_{\sigma]}\nabla_{[\alpha}\chi_{|\mu|})\chi_{\nu]}.
\end{equation}
Eq. \ref{k4} can be simplified by noting that
\begin{equation}
\begin{aligned}[b]
    \text{(II)}-\text{(III)} & =(\chi_{[\beta}\nabla_{\sigma]}\chi_\mu)(\nabla_{\nu}\chi_{\alpha})+(\chi_{[\beta}\nabla_{\sigma]}\chi_{\nu})\nabla_{\alpha}\chi_{\mu}-(\chi_{[\beta}\nabla_{\sigma]}\chi_{\alpha})\nabla_{\nu}\chi_{\mu}\\
               & = -\frac{1}{2}[(\chi_{\mu}\nabla_{\beta}\chi_\sigma)(\nabla_{\nu}\chi_{\alpha})+(\chi_{\nu}\nabla_{\beta}\chi_{\sigma})\nabla_{\alpha}\chi_{\mu}-(\chi_{\alpha}\nabla_{\beta}\chi_{\sigma})\nabla_{\nu}\chi_{\mu}]\\   
               & = -\frac{1}{2}(\nabla_{\beta}\chi_\sigma)[(\chi_{\mu}\nabla_{\nu}\chi_{\alpha}+2\chi_{[\nu}\nabla_{\alpha]}\chi_{\mu}]\overset{H}{=}0.
\end{aligned}   
\end{equation}

Consequently, using eq. \ref{kr}, eq. \ref{k4} yields
\begin{equation}\label{k5}
    -\chi_{\mu}R_{\nu\alpha[\sigma}\mathstrut^{\delta}\chi_{\beta]}\chi_{\delta}=2\chi_{[\nu}R_{\alpha]\mu[\sigma}\mathstrut^{\delta}\chi_{\beta]}\chi_{\delta}.
\end{equation}
By contracting $g^{\sigma\mu}$ with eq. \ref{k5}, one finds
\begin{equation}\label{sssq}
\begin{aligned}[b]
    -\chi^{\mu}R_{\nu\alpha[\mu}\mathstrut^{\delta}\chi_{\beta]}\chi_{\delta}&=2\chi_{[\nu}R_{\alpha]}\mathstrut^{\mu}\mathstrut_{[\mu}\mathstrut^{\delta}\chi_{\beta]}\chi_{\delta},\\
    -\chi_{\beta}R_{\nu\alpha\mu}\mathstrut^{\delta}\chi^{\mu}\chi_{\delta}+\chi_{\delta}R_{\nu\alpha\beta}\mathstrut^{\delta}\chi_{\mu}\chi^{\mu}&=\chi_{[\nu}R_{\alpha]}\mathstrut_{\mu}\mathstrut^{\mu\delta}\chi_{\beta}\chi_{\delta}-\chi_{[\nu}R_{\alpha]}\mathstrut_{\mu\beta}\mathstrut^{\delta}\chi^{\mu}\chi_{\delta}.
\end{aligned}   
\end{equation}Note that the left hand side of eq. \ref{sssq} vanishes at $H$, as the first term is the contraction of a symmetric tensor with and antisymmetric one, whereas the second term is multiplied by the norm of a null vector. Therefore, by using that $R_{\alpha\mu}\mathstrut^{\mu\delta}=-R_{\alpha}\mathstrut^{\delta}$, eqs. \ref{Rthird}, \ref{Rfirst} and \ref{Rsecond}, it is possible to find
\begin{equation}
    -\chi_{[\nu}R_{\alpha]}\mathstrut^{\delta}\chi_{\beta}\chi_{\delta}\overset{H}{=}\chi^{\mu}R_{\beta\mu[\alpha}\mathstrut^{\delta}\chi_{\nu]}\chi_{\delta},
\end{equation}
which is the desired relation to use in eq. \ref{k3}, allowing one to write
\begin{equation}\label{kap}
    \chi_{[\beta}\nabla_{\alpha]}\kappa\overset{H}{=}-\chi_{[\beta}R_{\alpha]}\mathstrut^{\mu}\chi_{\mu}.
\end{equation}

Eq. \ref{kap} was derived using only geometrical arguments, being a consequence of the fact that $\kappa$ was defined by eq. \ref{k1} and that $\chi^{\mu}$ is the normal to a Killing horizon. Eq. \ref{kap} is of interest because one can use it to relate the variation of $\kappa$ over the horizon with the energy-momentum tensor. More specifically, considering Einstein's equation, one can apply $\chi^{\mu}$ twice with the adequate choice of indices, so that one finds
\begin{equation}
    \chi_{\alpha}R_{\mu\nu}\chi^{\nu}-\frac{1}{2}R\chi_{\alpha}g_{\mu\nu}\chi^{\nu}=8\pi\chi_{\alpha}T_{\mu\nu}\chi^{\nu}.
\end{equation}
By antisymetrizing over $(\alpha\mu)$, one obtains
\begin{equation}\label{eq37}
\chi_{[\alpha}R_{\mu]}\mathstrut^{\nu}\chi_{\nu}=8\pi\chi_{[\alpha}T_{\mu]}\mathstrut^{\nu}\chi_{\nu}.
\end{equation}
Now, since the black hole is stationary, the expansion of the null geodesic generators of $H$ must vanish, and from the Killing horizon property of $H$, one has that $\nabla_{(\mu}\chi_{\nu)}=0$. Since the generators of a null geodesic congruence must have a vanishing vorticity tensor, Raychaudhuri's equation yields
\begin{equation}\label{eq78}
    R_{\mu\nu}\chi^{\mu}\chi^{\nu}\overset{H}{=}0.
\end{equation}
From Einstein's equation, eq. \ref{eq78} then implies that $T_{\mu\nu}\chi^{\mu}\chi^{\nu}\overset{H}{=}0$. In essence, this means that the vector $-T^{\mu\nu}\chi_{\nu}$ is orthogonal to $\chi^{\mu}$ on the horizon, thus, it must be spacelike or null on it. However, if the dominant energy condition holds, then $-T^{\mu\nu}\chi_{\nu}$ must be proportional to $\chi^{\mu}$, so that the right hand side of eq. \ref{eq37} vanishes. Hence, if Einstein's equation and dominant energy condition hold, the surface gravity is constant over the event horizon of a stationary black hole, i.e., 
\begin{equation}
\epsilon^{\mu\nu\alpha\beta}\chi_{[\mu}\nabla_{\nu]}\kappa\overset{H}{=}0.
\end{equation}

Finally, it should be noted that one can arrive at the same conclusion without invoking Einstein's equation, as shown in \cite{Rácz1996}. Such derivation consists in purely geometrical arguments, and is made by considering that a static, or merely stationary spacetime, obeys the $t$-$\phi$ orthogonality property, in a very similar fashion as for the geometrical arguments given for the weak rigidity theorem. For an extensive review of the many formulations of the constancy of $\kappa$ over Killing horizons, see \cite{Heusler1996}.

\section{Stationary black holes}\label{kerr}

In the last sections, several properties of the black hole region and the event horizon, $H$, were discussed, including the Killing horizon property of $H$ of a stationary black hole and a derivation of the generality of the constancy of the surface gravity over $H$ in such cases. The reason behind the interest in properties of stationary black holes lies in the idea that, although details of the gravitational collapse\footnote{It should be noted that other sources of gravitational collapse, such as inhomogeneities in the early universe, will produce black holes that, viewed as isolated systems at sufficient ``late times'', will also reach a stationary final state.} that results in a black hole can greatly affect the geometry of the spacetime, at sufficiently ``late times'' after its formation, the black hole is expected to reach a stationary final state due to the emission of gravitational waves and interaction with others sources of energy in the spacetime. This assumption is corroborated by the behavior of other systems, e.g., electromagnetic ones, in which a time dependent configuration of charges radiates away the higher order multipole moments and eventually ``settles down'' to a stationary final state  \cite{Townsend1997,Wald1984}. These results and arguments have led to the following conjecture. 

\begin{state*}
Let $(M,g_{\mu\nu})$ be a strongly asymptotically predictable spacetime, $\Sigma_{t}$ be a Cauchy hypersurface for the globally hyperbolic region $\psi^{-1}[A]\subset M$ (see \S~\ref{bhsec}) and $B$ denote the black hole region of $(M,g_{\mu\nu})$. If $B\cap\Sigma_{t'}=\emptyset$ for all $t'<t$ and $B\cap\Sigma_{t'}\neq\emptyset$ for all $t'\geq t$, then at $\Sigma_{t'}$, with $t'\gg t$, the spacetime will be described by a stationary metric.

\end{state*}

In other words, at sufficiently ``late times'' after its formation, a black hole will be in a stationary state. In particular, the usage of quote unquote for ``late times'' is due to the fact\footnote{In the following we shall drop the quote unquote for such statements.} that one cannot simply use the coordinate time, or more precisely, the proper time of any observer outside a black hole to make such statements. As was exemplified in the discussion of Schwarzschild spacetime, the coordinate time is not an appropriate coordinate to make statements regarding $H$, and in fact, the Schwarzschild black hole never forms in the frame of reference of outside observers. Consequently, it is more appropriate to make such statements regarding the evolution of black holes over ``time'' by the means of Cauchy hypersurfaces, which is precisely what was done for theorems \ref{45678} and \ref{area}. Indeed, although Cauchy hypersurfaces are not unique, since a globally hyperbolic spacetime has topology $\mathbb{R}\times\Sigma$ (see theorem \ref{timeins}), one can make use of them to make adequate statements.

Now, the stationary state conjecture is of physical significance because it is possible to show that the description of a stationary black hole solution of the Einstein–Maxwell equation (i.e., a solution of eq. \ref{eq1} with the energy momentum tensor given by the electromagnetic field \cite{Wald2022}) is made uniquely by the \textit{Kerr-Newman metric}. This metric can be written in Boyer–Lindquist coordinates (see, e.g., \cite{Giulini2009}) as
\begin{equation}\label{kerrnewman}
    \begin{split}
   ds^2=&-\left(\frac{\Delta-a^2\sin^2\theta}{\Sigma}\right)c^2dt^2-\frac{2a\sin^2\theta(r^2+a^2-\Delta)}{\Sigma}cdtd\phi \\
&+\left[\frac{(r^2+a^2)^2-\Delta a^2\sin^2\theta}{\Sigma}\right]\sin^2\theta d\phi^2+\frac{\Sigma}{\Delta}dr^2+\Sigma d\theta^2,
\end{split}
\end{equation}
where
\begin{equation}
    a=\frac{L}{Mc},\; \Sigma=r^2+a^2\cos^2{\theta},\;\Delta=r^2-r_sr+a^2+e^2,\; e^2=\frac{q^2G}{4\pi\epsilon_0c^4},
\end{equation}
$L$ is the angular momentum of the black hole, $q$ its electric charge and $\epsilon_0$ is the \textit{vacuum permittivity}. The length parameters related to angular momentum and electric charge are refereed to as the \textit{Kerr parameter}, $a$, and \textit{length electric charge}, $e$. 

The uniqueness of the Kerr-Newman metric to describe stationary black holes\footnote{Even though the Kerr-Newman metric is the most general stationary black hole solution of the Einstein-Maxwell equation, there is no analogue of Birkhorff's theorem to it, i.e., it is not the most general description of a star \cite{Townsend1997}.} is often referred to as the result from the black hole \textit{uniqueness theorems}, which can be summarized as follows\footnote{Unless stated otherwise, in following we will restrict our analysis to black hole solutions of the Einstein-Maxwell equation.}. It was first shown that every stationary black hole must be static or axisymmetric, which was followed by the proof of uniqueness of the Schwarschild spacetime as the only static vacuum black hole solution. It was then proved that a stationary axisymmetric black hole must be characterized only by its mass and angular momentum, and generalization of these statements to charged distributions of energy was then derived. A qualitative review of the highly technical arguments from which these conclusions were made and the references of the original derivations can be found in \cite{Robinson2009}, while details on the quantitative aspects can be found in \cite{Heusler1996} and the derivation of the Kerr-Newman metric can be found in \cite{Frolov1998}. See also \cite{Landsman2021} for an interesting discussion.

These series of results that prove the uniqueness of the Kerr-Newman metric as a description of stationary black holes in the presence of suitable matter fields (see \S~\ref{nature} for further discussion) clearly states, together with the stationary state conjecture, that at sufficiently late times after its formation, a black hole\footnote{By black hole, we mean the region $B\cap\Sigma$, where $\Sigma$ denotes a Cauchy hypersurface.} will have only three degrees of freedom: its mass, angular momentum, and electric charge. As a consequence, stationary black holes are said to have no ``hair'', i.e., no distinguishable characteristics of the energy distribution that gave rise to it are accessible to outside observers after it has ``settled down'' to a stationary state other than the parameters $(r_s,a,e)$. Lastly, as will be discussed in detail below, the Kerr-Newman metric describes a connected black hole region. It seems that the only stationary spacetime for which one would have a non-connected black hole region is one describing multiple black holes with $a=0$ and $2e=r_s$, which would have to be a static configuration \cite{Hartle1972,Neugebaur2009}. As implied by our focus on the analysis of the Kerr-Newman metric, this section is restricted to cases in which the black hole region is connected, since the exception mentioned above is non interesting as it is highly non-physical.

The region in the exterior of the energy distribution described by the Kerr-Newman metric is known as \textit{electrovac} due to the fact that it is vacuum with the exception of possible electromagnetic fields as a consequence of the charge of the energy distribution. However, in physical scenarios, if the contributions of the electromagnetic field are of considerable order, over time the black hole will attract charges of opposite sign, eventually decreasing the significance of these contributions. Hence, astrophysical bodies can be treated to respect $e\simeq 0$ to a very good order of approximation, which restricts their description to the Kerr family of metrics, given by eq. \ref{kerrnewman} with $e=0$. Consequently, any Kerr black hole will be uniquely described by its mass and angular momentum. Details on the Kerr metric, such as Christoffel symbols and its inverse, are presented in appendix \ref{B1}. In addition, if $a=0$ and $e\neq0$, eq. \ref{kerrnewman} reduces to the Reissner–Nordström solution, which describes the spacetime outside a charged spherically symmetric distribution of energy. 

Mathematically speaking, the Kerr metric differs very little from the Kerr-Newman metric. In fact, the qualitative nature of the analysis of the Kerr metric is mostly unaffected by the presence of electric charge. Thus, most of the results concerning a Kerr black hole can be straightforwardly generalized to a Kerr-Newman one. Because of this, for the remainder of this section the focus will be on the qualitative structure of the Kerr spacetime and its black hole region, but there will be remarks when the relevant quantitative structure changes if one were to consider $e\neq0$. Starting, we will first discuss its symmetries, and evaluate conserved quantities associated with the Killing vectors. We then turn our attention to the black hole region, computing the radial coordinate at which observers and light rays cannot escape to infinity and analyzing its properties as measured by observers at the asymptotic region. Next, using the tangent to the null geodesic generators of the event horizon, we will derive a relation between the mass, area and angular momentum of the black hole, and study how such quantities vary when the black hole is perturbed. Finally, we discuss the physical plausibility of the inequalities of the mass and Kerr parameter, $a$. 

The symmetries of the Kerr spacetime can be identified directly from its form in Boyer–Lindquist coordinates. First, as expected, none of its coordinate components depend on $t$, translating to the fact that $\xi^{\mu}=(\partial_{t})^{\mu}$, which is a timelike in a neighborhood of infinity, $\mathscr{I}$, is a Killing vector. From this and the fact that the metric components behave as $g_{ab}=\eta_{ab}+\mathcal{O}(r^{-1})$ for large $r$, one can roughly see that it is asymptotically flat. A detailed demonstration of this property of the Kerr spacetime can be found in \cite{Ashtekar1978}. However, although the Kerr metric is stationary, it is not static due to the presence of the term $dtd\phi$. This also implies that it is not spherically symmetric, as the angular momentum of the energy distribution privileges a direction in space. This direction can be easily identified from the fact that the metric components also do not depend on $\phi$, so $\psi^{\mu}=(\partial_{\phi})^{\mu}$ is the Killing vector associated with translation in the direction of the rotation of the energy distribution that gave rise to the black hole. Since this Killing vector can be reparametrized so that its orbits are closed curves of length $2\pi$, the Kerr metric is also axisymmetric. Indeed, the Kerr-Newman metric is a stationary and axisymmetric spacetime that obeys the $t$-$\phi$ orthogonality property \cite{Heusler1996}.

The Killing vectors discussed give rise to quantities conserved along geodesics, 
\begin{equation}\label{energy}
    E=-s^{\mu}\xi_{\mu},
\end{equation}
\begin{equation}\label{angularmomentum}
    L=s^{\mu}\psi_{\mu},
\end{equation}
which can be interpreted as energy per unit rest mass and angular momentum per unit rest mass, respectively. It should be noted that the scalars given by eqs. \ref{energy} and \ref{angularmomentum} can still be interpreted as such, even if $s^{\mu}$ is not the tangent to a geodesic. That is, the scalar may not be conserved, but its interpretation as the energy or angular momentum of the observer with four-velocity $s^{\mu}$ is still physically justified. Additionally, one expects that these interpretations should also be valid for the conservation laws that give rise to the conserved charges given by the Komar integrals (see \S~\ref{symmetry}). More precisely, one expects that the conserved quantity given by the Komar integral of the timelike Killing vector to be related to the total mass of the spacetime, while the Komar integral of the spacelike Killing vector is expected to be related to its total angular momentum. 

It is possible to find these relations by evaluating the Komar integral over a two-sphere at the asymptotic region. For simplicity, calculations for $\xi^{\mu}$ and $\psi^{\mu}$ will be performed simultaneously. This computation is based on the fact that the integrand of the Komar integral must be proportional to the volume element on a two-sphere, as the vector space of two-forms over a two-dimensional vector space is one-dimensional (see theorem \ref{propform}). This statement translates to
\begin{equation}\label{eq1300}
    \epsilon_{\mu\nu\alpha\beta}\nabla^{\alpha}(\partial_a)^{\beta}=f_a\epsilon_{\mu\nu\alpha\beta}\zeta^{\alpha}\tau^{\beta},
\end{equation}
where $a$ is an index to represent each Killing vector, and $\tau^{\mu}$, $\zeta^{\mu}$ are the normal vectors to the two-sphere at sufficiently large $r$, i.e., they are normalized and orthogonal to the hypersurfaces $\{t=\text{constant}\}$,
\begin{equation}\label{normalt}
    \tau^{\mu}=\frac{\nabla^{\mu}t}{\left(-\nabla^{\mu}t\nabla_{\mu}t\right)^{1/2}}=\frac{g^{\mu\nu}\nabla_{\nu}t}{\left(-g^{tt}\right)^{1/2}}=\frac{g^{\mu\nu}\partial_{\nu}t}{\left(-g^{tt}\right)^{1/2}}=\frac{g^{\mu t}}{\left(-g^{tt}\right)^{1/2}},
\end{equation}
and $\{r=\text{constant}\}$,
\begin{equation}\label{normalr}
    \zeta^{\mu}=\frac{\nabla^{\mu}r}{\left(\nabla^{\mu}r\nabla_{\mu}r\right)^{1/2}}=\frac{g^{\mu\nu}\nabla_{\nu}r}{\left(g^{rr}\right)^{1/2}}=\frac{g^{\mu\nu}\partial_{\nu}r}{\left(g^{rr}\right)^{1/2}}=\frac{g^{\mu r}}{\left(g^{rr}\right)^{1/2}}.
\end{equation}
Denoting the volume element on the two-sphere by $\epsilon_{\mu\nu}$, one can evaluate the scalar $f_{a}$ by applying $\epsilon_{\mu\nu}$ to both sides of eq. \ref{eq1300}. Using Killing's equation, eqs. \ref{Christoffel vector}, \ref{form1}, and \ref{form2}, one obtains
\begin{equation}
\begin{aligned}[b]
f_a\epsilon^{\mu\nu}\epsilon_{\mu\nu}&=\epsilon^{\mu\nu}\epsilon_{\mu\nu\alpha\beta}\nabla^{\alpha}(\partial_a)^{\beta},\\
 2f_a&=-4\delta^{[{\lambda}}\mathstrut_{\alpha}\delta^{\sigma]}\mathstrut_{\beta}\zeta_{\lambda}\tau_{\sigma}\nabla^{\alpha}(\partial_a)^{\beta},\\
f_a&=-2\zeta^{\lambda}\tau_{\sigma}\nabla_{{\lambda}}(\partial_a)^{\sigma}\\
&=-2(-g^{tt}g^{rr})^{-1/2}g^{\lambda r}\delta^{t}\mathstrut_{\sigma}\left(\partial_{\lambda}(\partial_a)^{\sigma}+\Gamma^{\sigma}\mathstrut_{\lambda \rho}(\partial_a)^{\rho}\right)\\
&=-2(-g^{tt}g^{rr})^{-1/2}g^{rr}\Gamma^{t}\mathstrut_{r a}.
\end{aligned}    
\end{equation}

By considering the explicit form of the volume element on the two-sphere, given by eq. \ref{volumeexplicit2}, one can write the Komar integral as
\begin{equation}
    \int_{\partial\Sigma}\epsilon_{\mu\nu\alpha\beta}\nabla^{\alpha}(\partial_a)^{\beta}=\int_{\partial\Sigma}f_a\epsilon_{\mu\nu}=\int_{\partial\Sigma}f_ar^2d\Omega.
\end{equation}
Since the Komar integral is independent of $\partial\Sigma$\footnote{For a Kerr-Newman black hole, the Komar integral will have an extra term due to electric charge contribution. Namely, since the current $-R^{\mu\nu}\chi_{\nu}$ will not vanish due to the electrovac region, one would have to take into account its contributions.}, one can take it to be a two-sphere at the asymptotic region, i.e., $r\to\infty$. Evaluation of the scalar $f_a$ in this limit for the pertinent cases, considering only terms of dominant order, yields
\begin{equation}
    f_t=-2(-g^{tt}g^{rr})^{-1/2}g^{rr}\Gamma^{t}\mathstrut_{r t}\implies\lim_{r\to\infty}f_t=-\frac{cr_s}{r^2},
\end{equation}
\begin{equation}
    f_{\phi}=-2(-g^{tt}g^{rr})^{-1/2}g^{rr}\Gamma^{t}\mathstrut_{r \phi}\implies\lim_{r\to\infty}f_{\phi}=\frac{3r_sa\sin{\theta}}{r^2}.
\end{equation}
Thus, the Komar integrals for the Killing vectors of the Kerr metric read
\begin{equation}
    \int_{\partial\Sigma}\epsilon_{\mu\nu\alpha\beta}\nabla^{\alpha}\xi^{\beta}=-cr_s\int_{\partial\Sigma}d\Omega,
\end{equation}
\begin{equation}
    \int_{\partial\Sigma}\epsilon_{\mu\nu\alpha\beta}\nabla^{\alpha}\psi^{\beta}=3r_sa\int_{\partial\Sigma}\sin{\theta}d\Omega,
\end{equation}
in which one finds the expected relations for the total mass,
\begin{equation}\label{MKerr}
    M=-\frac{c}{8\pi G}\int_{\partial\Sigma}\epsilon_{\mu\nu\alpha\beta}\nabla^{\alpha}\xi^{\beta},
\end{equation}
and angular momentum,
\begin{equation}\label{JKerr}
   L=\frac{c^3}{16\pi G}\int_{\partial\Sigma}\epsilon_{\mu\nu\alpha\beta}\nabla^{\alpha}\psi^{\beta}.
\end{equation}

To discuss the black hole region of the Kerr spacetime, first note that the Kerr metric is a solution of the vacuum Einstein's equation for any value of mass, $r_s$, and Kerr parameter, $a$. For the case $a=0$, the metric reduces to the Schwarzschild metric and the pertinent results have already been discussed. Moreover, note that the sign of the Kerr parameter is not relevant, as if $a<0$, one can simply perform the transformation  $\phi\to-\phi$ (which results in $a\to-a$), so one can consider $a>0$ without loss of generality. In this manner, it is clear that the Kerr metric is pathological for $\Sigma=0$ and $\Delta=0$. By evaluating curvature scalars \cite{Poisson2004}, it becomes clear that the singular character at $\Delta=0$ is coordinate dependent, given by the values
\begin{equation}\label{coorKerr}
    r_\pm=\frac{1}{2}\left(r_s\pm\left(r_s^2-4a^2\right)^{1/2}\right).
\end{equation}
Consequently, one can relate the Kerr parameter and the Schwarzschild radius to the coordinate singularities by
\begin{equation}\label{radius12}
    a^2=r_+r_-,
\end{equation}
\begin{equation}\label{radius13}
    r_s=r_++r_-.
\end{equation}
For the case $2a<r_s$, known as \textit{slow} Kerr, there are two coordinate singularities. The case $2a=r_s$ is known as the \textit{extreme} Kerr\footnote{Analogous cases arise if $e\neq0$. For instance, the extreme Kerr-Newman black hole obeys $4(a^2+e^2)=r_s^2$.}, and has only one coordinate singularity. For the case $2a>r_s$, known as \textit{fast} Kerr, there are no coordinate singularities. However, the singularity at $\Sigma=0$, which is present in all of these cases, is a true, physical singularity, given by
\begin{equation}
    r^2+a^2\cos^2\theta=0.
\end{equation}
For details on the analytical extension of the metric to describe the entire spacetime for each of the cases above, as well as their conformal diagrams, see \cite{ONeill1995}. In the following, only the slow Kerr case will be considered. The analysis of the fast and extreme case will be made at the end of this section. 

We now show that the outer coordinate singularity, $r_+$, delimits the black hole region of Kerr spacetime. As introduced before, the vector $\nabla^{\mu}r$ (see eq. \ref{normalr}) is normal to the hypersurfaces $\{r=\text{constant}\}$. From its norm, $\nabla^{\mu}r\nabla_{\mu}r=g^{rr}=\Delta/\Sigma$, one can see that it will be spacelike in the regions $r>r_+$ and $r<r_-$, timelike in $r_-<r<r_+$ and null on the hypersurfaces $\{r=r_{\pm}\}$. Note that the fact that $\nabla^{\mu}r$ is timelike in the region delimited by the coordinate singularities implies that $s^{\mu}\nabla_{\mu}r=dr/d\lambda$ will be negative for any future directed timelike or null vector, $s^{\mu}$, in $r_-<r<r_+$, where $\lambda$ is the parameter of the causal curve. Hence, any light ray or observer passing through the hypersurface $\{r=r_+\}$ will inevitably reach the hypersurface $\{r=r_-\}$, with the region $r>r_+$ no longer being accessible to them. More precisely, no observer or light ray in $r<r_+$ can be in the past of events in the region $r>r_+$, and thus, $r_+$ delimiters the black hole region of the Kerr spacetime. From this, one can evaluate the area of the event horizon\footnote{By area of the event horizon, it is meant the area of $\mathscr{H}$. Because of the null nature of the hypersurface, its ``volume'' is zero.} by integrating the two-dimensional metric that rises from the Kerr metric with $dt=0$ and $r=r_+$, 
\begin{equation}\label{areakerr}
    \begin{aligned}[b]
    A &=\int_{\mathscr{H}}(g_{\theta\theta}g_{\phi\phi})^{1/2}d\theta d\phi\\
         &=(r_+^2+a^2)\int_0^{2\pi}d\phi\int_0^{\pi}\sin{\theta}d\theta\\
         &=4\pi(r_+^2+a^2).   
    \end{aligned}
\end{equation}
Lastly, although the hypersurface $\{r=r_-\}$ is not as significant to the black hole region, it also earns the adjective ``horizon'' due to the behavior of the vector $\nabla^{\mu}r$, that is, it also acts as a one-way membrane. Indeed, analysis of the extended Kerr spacetime confirms that the hypersurface $\{r=r_-\}$ is a Cauchy horizon, marking the region from which predictions regarding the evolution of physical fields can be made from outside the black hole (see \cite{Hawking1973,Wald1984} for further discussion). 

Moving on, from the fact that the Kerr spacetime is stationary, $\xi^{\mu}$ must be null or spacelike on the event horizon. Since $\xi^{\mu}\xi_{\mu}=g_{tt}$, one can conclude that $\xi^{\mu}$ becomes spacelike in the region 
\begin{equation}\label{534}
    \frac{1}{2}\left(r_s-\left(r_s^2-4a^2\cos^2{\theta}\right)^{1/2}\right)<r<\frac{1}{2}\left(r_s+\left(r_s^2-4a^2\cos^2{\theta}\right)^{1/2}\right).
\end{equation}
Evidently, the upper limit of this region does not coincide with $r_+$ if $\cos^2{\theta}\neq 1$. This means that there is a region outside the black hole for which observers cannot follow the integral curves of $\xi^{\mu}$. Such region, corresponding to $r_+<r<r_e$, is known as the \textit{ergoregion}, and the surface $\{r=r_e,t=\text{constant}\}$, where $r_e$ denotes the upper limit in eq. \ref{534}, is known as the \textit{ergosphere}. To analyze the possible behavior of world lines in the ergoregion, one can consider the constraints on the components of the tangent vector, $s^{\mu}$, of a future directed causal curve parametrized by $\lambda$,
\begin{equation}
    s^{\mu}s_{\mu}=g_{tt}\left(\frac{dt}{d\lambda}\right)^2+2g_{t\phi}\left(\frac{dt}{d\lambda}\right)\left(\frac{d\phi}{d\lambda}\right)+g_{rr}\left(\frac{dr}{d\lambda}\right)^2+g_{\phi\phi}\left(\frac{d\phi}{d\lambda}\right)^2+g_{\theta\theta}\left(\frac{d\theta}{d\lambda}\right)^2\leq0,
\end{equation}
in which one can see that all terms are manifestly positive, except for the second one. Now, in the ergoregion, $g_{t\phi}<0$ and $dt/d\lambda=s^{\mu}\nabla_{\mu}t>0$, since $\nabla^{\mu}t$ is past directed timelike, in order for $t$ to increase as one moves to the future. Hence, one concludes that $d\phi/d\lambda>0$ for any timelike or null vector in the region $r_+<r<r_e$, which means it is impossible for an observer or signal in this region to not rotate in the direction of the black hole. This effect is known as \textit{frame dragging}.

Even though it is not possible to follow the orbits of $\xi^{\mu}$ once an observer reaches the hypersurface $\{r=r_e\}$, they still can maintain a fixed $r$ and $\theta$ and move following the orbits of a Killing vector that is a linear combination of $\xi^{\mu}$ and $\psi^{\mu}$. Such observers would, of course, see no variation of the metric as they evolve. We now show that there exists a family of observers that follow such orbits and move orthogonal to the hypersurfaces $\{t=\text{constant}\}$. That is, the tangent vector to their world lines is proportional to $\nabla^{\mu}t$. It is easy to see that the failure of the world lines of these observers to coincide with the orbits of $\xi^{\mu}$ is directly related to the nonvanishing of the $g_{\phi t}$ term in the Kerr metric. Indeed, an interesting property of this family of observers, whose tangent vector will be denoted by $\chi^{\mu}$, is that they have no angular momentum,
\begin{equation}
    L=\psi_{\mu}\chi^{\mu}\propto\psi_{\mu}\tau^{\mu}=0,
\end{equation}
but have a nonvanishing coordinate angular velocity which is a function of their radial coordinate,
\begin{equation}
    \Omega(r)=\frac{\chi^{\phi}}{\chi^t}=\frac{d\phi}{dt}=\frac{\tau^{\phi}}{\tau^{t}}=\frac{g^{\phi t}}{g^{tt}}=-\frac{g_{\phi t}}{g_{\phi\phi}},
\end{equation}
where $g^{\nu\alpha}g_{\alpha\mu}=\delta^{\nu}\mathstrut_{\mu}$ was used for both results. From the Kerr metric, one obtains
\begin{equation}\label{angularv}
    \Omega(r)=\frac{ca(r^2+a^2-\Delta)}{(r^2+a^2)^2-\Delta a^2\sin^2{\theta}}.
\end{equation}
Finally, normalizing the tangent vector to the world line of these observers yields
\begin{equation}\label{tangentobs}
    \chi^{\mu}=\frac{\xi^{\mu}+\Omega(r)\psi^{\mu}}{\left(-g_{tt}-2\Omega(r) g_{t\phi}-\Omega^2(r) g_{\phi\phi}\right)^{1/2}}.
\end{equation}

Note that the coordinate angular velocity given by eq. \ref{angularv} is the one measured by observers following an orbit of $\xi^{\mu}$ at the asymptotic region, i.e., those whose proper time coincides with the coordinate time. The significance of the family of observers that follow the orbits of $\chi^{\mu}$ in the ergoregion is given by the following line of reasoning. Since $\xi^{\mu}$ is spacelike on the hypersurface $\{r=r_+\}$, the null Killing vector tangent to the null geodesic generators of the event horizon must be a linear combination of $\xi^{\mu}$ and $\psi^{\mu}$. By requiring that such a linear combination be null, a possible solution\footnote{Any other solution is proportional to $\chi^{\mu}$. This one is ``preferred'' for our analysis since it is the tangent vector that rises when one parametrizes the null geodesics generators of $H$ by the time coordinate, $t$.} is $\chi^{\mu}=\xi^{\mu}+\Omega(r_+)\psi^{\mu}$, where 
\begin{equation}
    \Omega(r_+)=\frac{ca}{r_+^2+a^2}.
\end{equation}
Now, no observer would be able to follow the orbits of $\chi^{\mu}$ at $r_+$, but as $\Omega(r)$ is the angular velocity of the particles as measured by ``distant'' observers with four velocity $\xi^{\mu}$, as $r\to r_+$, these observers can identify $\Omega(r_+)$ as the angular velocity of the event horizon, which will be denoted simply as $\Omega$. Thus, the tangent vector to the null geodesic generators of the event horizon of the Kerr black hole, when the null geodesics are parametrized by $t$, is 
\begin{equation}\label{tangentH}
    \chi^{\mu}=\xi^{\mu}+\Omega\psi^{\mu}.
\end{equation}

One can straightforwardly deduce that the family of observers that follow the orbits of $\chi^{\mu}$ is not in geodesic motion. Evidently, the acceleration necessary to keep such observers following the orbits of $\chi^{\mu}$ as $r\to r_+$, as measured by observers at the asymptotic region following the orbits of $\xi^{\mu}$, is precisely the surface gravity of a Kerr black hole. However, note that such measurement cannot be made by an ideal string as proposed for the Schwarzschild black hole, since $\chi^{\phi}\neq 0$. The measurement process would be made by local observers at $r\to r_+$ and converted to a measure at infinity by a redshift factor, which is given by the norm of the vector in eq. \ref{tangentobs}. As such, from the explicit form of $\chi^{\mu}$, one can obtain \cite{Chruściel2020} 
\begin{equation}\label{kappakerr}
    \kappa=\frac{c^2\left(r_s^2-4a^2\right)^{1/2}}{r_s\left(r_s+\left(r_s^2-4a^2\right)^{1/2}\right)}.
\end{equation}
Furthermore, it follows from the explicit form of $\Omega$ that
\begin{equation}\label{chixi}
    \chi^{\mu}\xi_{\mu}\overset{H}{=}0,
\end{equation}
\begin{equation}\label{chipsi}
    \chi^{\mu}\psi_{\mu}\overset{H}{=}0,
\end{equation}
and from the fact that $\xi^{\mu}$ and $\psi^{\mu}$ are elements of a coordinate basis, $[\xi,\chi]^{\mu}$ and $[\xi,\psi]^{\mu}$ must vanish, which is equivalent to
\begin{equation}\label{chixi1}
    \chi^{\nu}\nabla_{\nu}\xi^{\mu}=\xi^{\nu}\nabla_{\nu}\chi^{\mu},
\end{equation}
\begin{equation}\label{chipsi2}
    \chi^{\nu}\nabla_{\nu}\psi^{\mu}=\psi^{\nu}\nabla_{\nu}\chi^{\mu}.
\end{equation}

In the developments that follow, it will be necessary to make use of the transverse metric of $H$ with respect to $\chi^{\mu}$. In particular, the auxiliary null vector used to define the transverse metric in \S\;\ref{null} which obeys $\eta^{\mu}\chi_{\mu}=-1$, will be chosen for convenience to respect further relations, as detailed below. In a local Lorentz frame at the event horizon, with coordinates $\{cx^0,x^1,x^2,x^3\}$, one can write
\begin{equation}
    \chi^{\mu}=(1,1,0,0),
\end{equation}
\begin{equation}
    \xi^{\mu}=(0,0,1,0),
\end{equation}
as a consequence of eq. \ref{chixi}. From eq. \ref{tangentH}, these choices fix $\psi^{\mu}$ to be
\begin{equation}
    \psi^{\mu}=\left(\frac{1}{\Omega},\frac{1}{\Omega},-\frac{1}{\Omega},0\right).
\end{equation}
A vector, $\vartheta^{\mu}$, given by
\begin{equation}
    \vartheta^{\mu}=(0,0,0,1),
\end{equation}
together with $\xi^{\mu}$ and $\psi^{\mu}$, clearly yields a basis for the vector space spanned by vectors orthogonal to $\chi^{\mu}$. Thus, by choosing 
\begin{equation}\label{choice}
    \eta^{\mu}=(1,0,-1,0),
\end{equation}
it will also respect the following relations,
\begin{equation}\label{Auxxi}
    \eta^{\mu}\xi_{\mu}=-1,
\end{equation}
\begin{equation}\label{Auxpsi}
    \eta^{\mu}\psi_{\mu}=0,
\end{equation}
\begin{equation}
    \eta^{\mu}\vartheta_{\mu}=0.
\end{equation}
Lastly, the transverse metric still takes the form given by eq. \ref{transversemetric}, since it is a consequence of the condition $\eta^{\mu}\chi_{\mu}=-1$. 

The next property of interest of a Kerr black hole is the relationship between the variations of mass, angular momentum and area of the event horizon of two slightly different Kerr black holes. To derive such a relation, consider first how these quantities relate for a single black hole, which can be found by applying the Levi-Civita connection to eq. \ref{tangentH} and integrating over the event horizon. Adopting geometrized units throughout this calculation, the integration yields
\begin{equation}\label{34}
    \int_{\mathscr{H}}\epsilon_{\mu\nu\alpha\beta}\nabla^{\alpha}\chi^{\beta}=\int_{\mathscr{H}}\epsilon_{\mu\nu\alpha\beta}\nabla^{\alpha}\xi^{\beta}+\Omega\int_{\mathscr{H}}\epsilon_{\mu\nu\alpha\beta}\nabla^{\alpha}\psi^{\beta}.
\end{equation}
Now, one can identify the integrals on the right hand side of eq. \ref{34} with the Komar integrals over $\partial\Sigma=\mathscr{H}$, which is possible due to the fact that it is independent of $\partial\Sigma$, provided that the flux of the conserved current vanishes over it. Thus, one has
\begin{equation}\label{mass2}
    \underbrace{\int_{\mathscr{H}}\epsilon_{\mu\nu\alpha\beta}\nabla^{\alpha}\chi^{\beta}}_{\text{(I)}}=-8\pi M+16\pi \Omega L.
\end{equation}

In order to evaluate (I), one should note that the volume element restricted to $\mathscr{H}$ is (see appendix \ref{integration} and \S~\ref{null})
\begin{equation}
    \epsilon_{\mu\nu}=\epsilon_{\mu\nu\alpha\beta}\eta^{\alpha}\chi^{\beta},
\end{equation}
where $\eta^{\mu}$ is the auxiliary null vector which in a local Lorentz frame at $H$ takes the form given by eq. \ref{choice}. Since $\mathscr{H}$ is a surface, one can use the linearity of two-forms over it to compute (I) in terms of $\epsilon_{\mu\nu}$ and the proportionality factor, $f$, i.e.,
\begin{equation}\label{eqx}
    f\epsilon_{\mu\nu}=\epsilon_{\mu\nu\alpha\beta}\nabla^{\alpha}\chi^{\beta}.
\end{equation}
The scalar $f$ can be evaluated by contracting $\epsilon^{\mu\nu}$ with eq. \ref{eqx}, which yields
\begin{equation}
    f=\frac{1}{2}\epsilon^{\mu\nu}\epsilon_{\mu\nu\alpha\beta}\nabla^{\alpha}\chi^{\beta}.
\end{equation}
It is possible to simplify $f$ by using that $\eta^{\mu}\chi_{\mu}=-1$, which coupled with the fact that $\chi^{\mu}$ is a Killing vector, the definition of $\kappa$ and eq. \ref{form2}, yields
\begin{equation}
\begin{aligned}[b]
    f & =\frac{1}{2}\eta_{\lambda}\chi_{\delta}\epsilon_{\mu\nu\alpha\beta}\epsilon^{\mu\nu\lambda\delta}\nabla^{\alpha}\chi^{\beta}\\
    &= -2\eta_{\lambda}\chi_{\delta}\delta^{[\lambda}\mathstrut_{\alpha}\delta^{\delta]}\mathstrut_{\beta}\nabla^{\alpha}\chi^{\beta}\\
    & =-2\eta_{\lambda}\chi_{\delta}\nabla^{[\lambda}\chi^{\delta]}\\
    & = 2\eta_{\lambda}\chi_{\delta}\nabla^{\delta}\chi^{\lambda}\\
    &= 2\kappa\eta_{\lambda}\chi^{\lambda}\\
    & = -2\kappa.
\end{aligned}
\end{equation}
Thus, one has
\begin{equation}
    \text{(I)}=\int_{\mathscr{H}}f\epsilon_{\mu\nu}=-2\int_{\mathscr{H}}\kappa\epsilon_{\mu\nu},
\end{equation}
but since a Kerr black hole is stationary, $\kappa$ is constant over $H$ and $\int_{\mathscr{H}}\epsilon_{\mu\nu}$ is just the area of the event horizon (see eq. \ref{areakerr}), hence, eq. \ref{mass2} reads
\begin{equation}\label{mass3}
    M=\frac{1}{4\pi}\kappa A+2\Omega L.
\end{equation}

Eq. \ref{mass3} is known as \textit{Smarr's formula}, and is the desired relation for the quantities of interest of a Kerr black hole. For a Kerr-Newman black hole, Smarr's formula has an extra term corresponding to the electric charge contribution, $q\Phi $, where $\Phi$ is the electric potential at the event horizon\footnote{Which can also be shown to be constant over $H$ \cite{Heusler1996}.}. Such a relation would be derived in the exact same manner, and the additional term would show in the mass, as the total mass of the spacetime would be equal to the mass of the black hole plus the contributions of the energy-momentum tensor in the electrovac region (see \cite{Heusler1996} for details). 

To proceed, one needs to perturb Smarr's formula, where one finds
\begin{equation}\label{Smarrper}
    \delta M=\frac{1}{4\pi}(\kappa\delta A+\delta\kappa A)+2(\delta\Omega L+\Omega \delta L),
\end{equation}
which has the unwanted variations $\delta\kappa$ and $\delta\Omega$. As the relation of interest for the perturbed version of eq. \ref{mass3} is for variations only on the mass, angular momentum and event horizon area, it is necessary to turn one's attention to perturbations of the Kerr metric. The idea is to consider two neighboring configurations of Kerr black holes \cite{Bardeen1973}, which are presented in more detail in appendix \ref{B2}. As such, the variation of quantities of two nearby stationary solutions is denoted $\delta$. 

To remove the unwanted terms in eq. \ref{Smarrper}, one starts by contracting eq. \ref{k7} with the auxiliary vector $\eta^{\mu}$, which yields
\begin{equation}\label{k16}
    \kappa=\frac{1}{2}\eta^{\mu}\nabla_{\mu}(\chi^{\nu}\chi_{\nu}).
\end{equation}
By perturbing eq. \ref{k16}, one obtains
\begin{equation}\label{k2345}
    \delta\kappa=\frac{1}{2}[(\delta\eta^{\mu})\nabla_{\mu}(\chi^{\nu}\chi_{\nu})+\underbrace{\eta^{\mu}\nabla_{\mu}(\chi_{\nu}\delta\chi^{\nu}+\chi^{\nu}\delta\chi_{\nu})}_{\text{(I)}}].
\end{equation}
Using eqs. \ref{chipsi2}, \ref{Beq2} and Killing's equation, one can deduce that (I) is equal to
\begin{equation}
    \begin{aligned}[b]
    \text{(I)} & =\eta^{\mu}\chi^{\nu}\nabla_{\mu}(\delta\chi_{\nu})+\eta^{\mu}(\delta\chi_{\nu})\nabla_{\mu}\chi^{\nu}+\eta^{\mu}\chi_{\nu}\nabla_{\mu}(\delta\chi^{\nu})+\eta^{\mu}(\delta\chi^{\nu})\nabla_{\mu}\chi_{\nu} \\
    & = \eta^{\mu}\chi^{\nu}\nabla_{\mu}(\delta\chi_{\nu})+\eta^{\mu}(\delta\chi_{\nu})\nabla_{\mu}\chi^{\nu}+\eta^{\mu}\chi_{\nu}\nabla_{\mu}(\delta\Omega\psi^{\nu})+\eta^{\mu}(\delta\Omega\psi^{\nu})\nabla_{\mu}\chi_{\nu}\\
    &= \eta^{\mu}\chi^{\nu}\nabla_{\mu}(\delta\chi_{\nu})+\eta^{\mu}(\delta\chi_{\nu})\nabla_{\mu}\chi^{\nu}+\eta^{\mu}\chi_{\nu}(\delta\Omega)\nabla_{\mu}\psi^{\nu}+\eta^{\mu}(\delta\Omega)\chi^{\nu}\nabla_{\mu}\psi_{\nu}\\
    &= 2\eta^{(\mu}\chi^{\nu)}\nabla_{\mu}(\delta\chi_{\nu})+\eta^{\mu}(\delta\chi_{\nu})\nabla_{\mu}\chi^{\nu}+2\eta^{\mu}\chi_{\nu}(\delta\Omega)\nabla_{\mu}\psi^{\nu}-\eta^{\nu}\chi^{\mu}\nabla_{\mu}(\delta\chi_{\nu})\\
    &= 2\eta^{(\mu}\chi^{\nu)}\nabla_{\mu}(\delta\chi_{\nu})+2\eta^{\mu}(\delta\chi_{\nu})\nabla_{\mu}\chi^{\nu}+2(\delta\Omega)\eta^{\mu}\chi_{\nu}\nabla_{\mu}\psi^{\nu}.
    \end{aligned}
\end{equation}
where the term $\eta^{\nu}\chi^{\mu}\nabla_{\mu}(\delta\chi_{\nu})$ was added and subtracted in the fourth line. This allows one to write eq. \ref{k2345} as
\begin{equation}\label{k17}
    \delta\kappa=\underbrace{(\delta\eta^{\mu})\chi_{\nu}\nabla_{\mu}\chi^{\nu}}_{\text{(II)}}+\underbrace{\eta^{(\mu}\chi^{\nu)}\nabla_{\mu}(\delta\chi_{\nu})}_{\text{(III)}}+\underbrace{\eta^{\mu}(\delta\chi_{\nu})\nabla_{\mu}\chi^{\nu}}_{\text{(IV)}}+(\delta\Omega)\eta^{\mu}\chi_{\nu}\nabla_{\mu}\psi^{\nu}.
\end{equation}

Eq. \ref{k17} can be simplified by noting that, as a consequence of eq \ref{Beq8}, one has
\begin{equation}
    \text{(II)}+\text{(IV)}=\nabla^{\mu}\chi^{\nu}((\delta\eta_{\mu})\chi_{\nu}+\eta^{\mu}(\delta\chi_{\nu}))=0.
\end{equation}
Similarly, using eqs. \ref{Beq3}, \ref{Beq9} and the transverse metric on $\mathscr{H}$, eq. \ref{transversemetric}, one can rewrite $\text{(III)}$, 
\begin{equation}
\begin{aligned}[b]
\eta^{(\mu}\chi^{\nu)}\nabla_{\mu}(\delta\chi_{\nu}) &= \frac{1}{2}(h^{\mu\nu}-g^{\mu\nu})\nabla_{\mu}(\delta\chi_{\nu})\\
& = \frac{1}{2}(\alpha h^{\mu\nu}\nabla_{\mu}\chi_{\nu}+h^{\mu\nu}\chi_{\nu}\nabla_{\mu}\alpha-\nabla^{\nu}(\delta\chi_{\nu}))\\
& = -\frac{1}{2}\nabla^{\nu}(\gamma_{\nu\mu}\chi^{\mu}+(\delta\Omega)\chi_{\nu})\\
& = -\frac{1}{2}(\gamma_{\nu\mu}\nabla^{\nu}\chi^{\mu}+\chi^{\mu}\nabla^{\nu}\gamma_{\nu\mu}+(\delta\Omega)\nabla^{\nu}\chi_{\nu})\\
& = -\frac{1}{2}\chi^{\mu}\nabla^{\nu}\gamma_{\nu\mu}\\
& = \frac{1}{2}\chi_{\mu}\nabla_{\nu}\gamma^{\nu\mu},
\end{aligned}    
\end{equation}
where the terms $h^{\mu\nu}\nabla_{\mu}\chi_{\nu}$ and $\gamma_{\nu\mu}\nabla^{\nu}\chi^{\mu}$ vanish by being the contraction of a symmetric tensor with an antisymmetric one, and $\nabla^{\nu}\chi_{\nu}=0$, as per Killing's equation. Thus, eq. \ref{k17} reads
\begin{equation}
    \delta\kappa=\frac{1}{2}\chi_{\mu}\nabla_{\nu}\gamma^{\nu\mu}+(\delta\Omega)\eta^{\mu}\chi_{\nu}\nabla_{\mu}\psi^{\nu},
\end{equation}
and by integrating over $\mathscr{H}$, one obtains
\begin{equation}\label{k18}
    \delta\kappa A=\frac{1}{2}\int_{\mathscr{H}}\epsilon_{\alpha\beta}\chi_{\mu}\nabla_{\nu}\gamma^{\nu\mu}+(\delta\Omega)\underbrace{\int_{\mathscr{H}}\epsilon_{\alpha\beta}\eta^{\mu}\chi_{\nu}\nabla_{\mu}\psi^{\nu}}_{\text{(V)}},
\end{equation}
where $\epsilon_{\mu\nu}$ is the volume element on $\mathscr{H}$. 

One can evaluate (V) by considering the scalar $f'$ that relates the two-form $\epsilon_{\mu\nu\alpha\beta}\nabla^{\alpha}\psi^{\beta}$ to its integrand,
\begin{equation}\label{123456}
     f' \epsilon_{\mu\nu\alpha\beta}\nabla^{\alpha}\psi^{\beta}=\epsilon_{\mu\nu}\eta^{\alpha}\chi_{\beta}\nabla_{\alpha}\psi^{\beta},
\end{equation}
which can be computed by applying $\epsilon^{\mu\nu}$ to both sides of eq. \ref{123456},
\begin{equation}
\begin{aligned}[b]
  f' \epsilon^{\mu\nu\sigma\rho}\eta_{\sigma}\chi_{\rho}\epsilon_{\mu\nu\alpha\beta}\nabla^{\alpha}\psi^{\beta}&=\epsilon^{\mu\nu}\epsilon_{\mu\nu}\eta_{\alpha}\chi_{\beta}\nabla^{\alpha}\psi^{\beta}, \\
 -4f'\delta^{[\sigma}\mathstrut_{\alpha}\delta^{\rho]}\mathstrut_{\beta} \eta_{\sigma}\chi_{\rho}\nabla^{\alpha}\psi^{\beta}&=2\eta_{\alpha}\chi_{\beta}\nabla^{\alpha}\psi^{\beta}, \\
-4f'\eta_{\sigma}\chi_{\rho}\nabla^{\sigma}\psi^{\rho}&=2\eta_{\alpha}\chi_{\beta}\nabla^{\alpha}\psi^{\beta}, \\
 f'&=-\frac{1}{2},
\end{aligned}    
\end{equation}
and from the Komar integral of the spacelike Killing vector, one has
\begin{equation}
    \text{(V)}=-\frac{1}{2}\int_{\mathscr{H}}\epsilon_{\mu\nu\alpha\beta}\nabla^{\alpha}\psi^{\beta}=-8\pi L.
\end{equation}
Thus, eq. \ref{k18} reduces to
\begin{equation}
    \underbrace{\int_{\mathscr{H}}\epsilon_{\alpha\beta}\chi_{\mu}\nabla_{\nu}\gamma^{\nu\mu}}_{\text{(VI)}}=2A\delta\kappa +16\pi L\delta\Omega.
\end{equation}

In order to proceed, it is necessary to make use of the general result stated by eq. \ref{generalresult}. Such result is useful because one can evaluate (VI) by relating their integrands. More precisely, the scalar, $f''$, that relates the integrand of (VI) to that of the integral over $\mathscr{H}$ of eq. \ref{generalresult} is given by
\begin{equation}
\begin{aligned}[b]
f''\epsilon_{\mu\nu\alpha\beta}\xi^{\beta}\nabla_{\rho}(\gamma^{\alpha\rho}-g^{\alpha\rho}\gamma) &= \epsilon_{\mu\nu}\chi_{\alpha}\nabla_{\beta}\gamma^{\beta\alpha},\\
 f''\epsilon^{\mu\nu\rho\lambda}\eta_{\rho}\chi_{\lambda}\epsilon_{\mu\nu\alpha\beta}\xi^{\beta}\nabla_{\rho}(\gamma^{\alpha\rho}-g^{\alpha\rho}\gamma)&=\epsilon^{\mu\nu}\epsilon_{\mu\nu}\chi_{\alpha}\nabla_{\beta}\gamma^{\beta\alpha},\\
 -4f''\delta^{[\rho}\mathstrut_{\alpha}\delta^{\lambda]}\mathstrut_{\beta}\eta_{\rho}\chi_{\lambda}\xi^{\beta}\nabla_{\rho}(\gamma^{\alpha\rho}-g^{\alpha\rho}\gamma)&=2\chi_{\alpha}\nabla_{\beta}\gamma^{\beta\alpha},\\
 -2f''\eta_{[\alpha}\chi_{\beta]}\xi^{\beta}\nabla_{\rho}(\gamma^{\alpha\rho}-g^{\alpha\rho}\gamma)&=\chi_{\alpha}\nabla_{\beta}\gamma^{\beta\alpha},\\
 -f''\chi_{\alpha}\nabla_{\rho}(\gamma^{\alpha\rho}-g^{\alpha\rho}\gamma)&=\chi_{\alpha}\nabla_{\beta}\gamma^{\beta\alpha},\\
 -f''\chi_{\alpha}\nabla_{\rho}\gamma^{\rho\alpha}&=\chi_{\alpha}\nabla_{\beta}\gamma^{\beta\alpha},\\
 f''&=-1,
\end{aligned}    
\end{equation}
where $\eta^{\mu}\chi_{\mu}=-1$ and eqs. \ref{chixi}, \ref{Auxxi}, and \ref{Beq5} were used. Therefore, one has
\begin{equation}
    \underbrace{\int_S\epsilon_{\nu\alpha\beta\delta}\xi^{\delta}\nabla_{\rho}(\gamma^{\beta\rho}-g^{\beta\rho}\gamma)}_{\text{(VII)}}=-2A\delta\kappa-16\pi L\delta\Omega.
\end{equation}
Using eq. \ref{admmass}, (VII) is evaluated to be $8\pi \delta M$. Hence, one obtains a relation for the variation of the surface gravity, mass and angular velocity,
\begin{equation}\label{kappavar}
    \delta M=-\frac{1}{4\pi}A\delta\kappa-2L\delta\Omega,
\end{equation} 
which can be summed with eq. \ref{Smarrper} to yield the sought result. Restoring the constants, one finds
\begin{equation}\label{first}
    c^2\delta M=\frac{\kappa c^2}{8\pi G}\delta A+\Omega\delta L.
\end{equation}

Thus, the variation in first order of mass, area and angular momentum of a stationary uncharged black hole over a perturbation respects eq. \ref{first}. For a Kerr-Newman black hole, there is an extra term corresponding to variation of electric charge, $\Phi \delta q$. Indeed, since all terms have units of energy, such an equation may simply be interpreted as an ``energy conservation law'' applied to the geometrical properties of a black hole. Finally, note that the geometrical interpretation for the total mass and angular momentum of a black hole follows from the fact that these quantities are defined from the Komar integrals. In other words, they rise from Killing vectors, which are derived from the metric, and thus, are a consequence of spacetime geometry.

The last property of interest of a stationary black hole in the classical framework that will be discussed is the physical plausibility of the extreme and fast Kerr cases. As noted in the derivation of the coordinate singularities, the extreme case has only one coordinate singularity, while the fast case has none. In the extreme case, the behavior of the hypersurface $\{r=r_+\}$ is very similar, as it acts as a way one membrane. However, the absence of a coordinate singularity in the fast case means that the fast Kerr metric does not describe a black hole, but rather, a naked singularity. 

Assuming that the cosmic censor conjecture holds, one would then conclude that the fast case is unphysical. In particular, since perturbations of the extreme case would also produce a naked singularity, one would also expect that the extreme case should not be physically accessible, in the sense that no continuous process applied to a slow Kerr black hole could produce an extreme Kerr black hole. This notion of attainability of configuration that one would deem to be physical, but which could produce an unphysical one if it is perturbed, is very similar to the one discussed in \S\;\ref{causal} for the definition of a stably causal spacetime. In essence, since the surface gravity of an extreme Kerr black hole must vanish (see eq. \ref{kappakerr}), these ideas can be stated as the impossibility of reducing the surface gravity of a Kerr black hole to zero. Note that this is also the case for the surface gravity of an extreme Kerr-Newman or Reissner–Nordström ($r_s=2e$) black hole \cite{Wald1984}. Additionally, note that this line of reasoning is based only on the assumption that the cosmic censor conjecture should hold, but it has been shown that it is possible to reach extremal cases of Kerr black holes by finite amount of processes \cite{Farrugia1979} (although this is not a violation of the cosmic censor conjecture). Nonetheless, these exceptions to the attainability of $\kappa=0$ can be used to point to a more appropriate version of this property of the surface gravity \cite{Sullivan1980}, which one could argue to be nothing more than a particular case of the cosmic censor conjecture. Such a development was proposed in \cite{Israel1986} and takes the form of the theorem below.

\begin{theorem}\label{thekappa}
    Let $(M,g_{\mu\nu})$ be a strongly asymptotically predictable spacetime with $g_{\mu\nu}$ continuous and piecewise $C^3$, and $\Sigma_t$ denote Cauchy hypersurfaces for the globally hyperbolic region $\psi^{-1}[A']\subset M$. If $\Sigma_t$ contains trapped surfaces for all $t<t'$ but none for $t>t'$, then the weak energy condition does not hold in a neighborhood of the apparent horizon of $\Sigma_{t'}$.
\end{theorem}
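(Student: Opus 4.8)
The plan is to establish the contrapositive. Suppose there \emph{is} a neighborhood $U$ of the outer apparent horizon $\mathscr{A}'$ of $\Sigma_{t'}$ on which the weak energy condition holds; I will derive a contradiction with the assumption that the slices $\Sigma_t$ carry trapped surfaces for $t<t'$ but none for $t>t'$. On $U$, Einstein's equation (eq.~\ref{eq1}) together with the weak energy condition gives $R_{\mu\nu}\ell^{\mu}\ell^{\nu}\ge 0$ for every null $\ell^{\mu}$, which is precisely the positivity hypothesis needed to run Raychaudhuri's equation (eq.~\ref{eq13}) and propositions~\ref{prop1} and~\ref{p2}; the weak energy condition also controls the mixed component $R_{\mu\nu}\ell^{\mu}\eta^{\nu}$ with the sign I will need.

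First I would fix the local picture near the transition time. For $t$ slightly below $t'$ the trapped region $\mathcal{T}_t\subset\Sigma_t$ is nonempty, and since these regions shrink onto $\mathscr{A}'$ one can take $t$ close enough to $t'$ that $\overline{\mathcal{T}_t}\subset U$. On its boundary $\mathscr{A}_t=\partial\mathcal{T}_t$ the expansion $\theta^{(\ell)}$ of the outgoing orthogonal future directed null congruence vanishes (the proposition on the apparent horizon proved above), while on genuine trapped surfaces just inside both $\theta^{(\ell)}$ and the ingoing expansion $\theta^{(\eta)}$ are strictly negative, with $\theta^{(\eta)}<0$ persisting onto $\mathscr{A}_t$ by continuity. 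As $t$ varies these surfaces sweep out a three-dimensional trapping horizon $\mathcal{H}=\bigcup_{t\le t'}\mathscr{A}_t$, contained in $U$ near $t'$.

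The core of the argument is to show that under these conditions $\mathcal{H}$ cannot close off into a single Cauchy slice, so that the trapped region survives past $\Sigma_{t'}$. I would split this in two. (i) Raychaudhuri's equation (eq.~\ref{eq13}) for $\ell^{\mu}$, evaluated on $\mathscr{A}_t$ where $\theta^{(\ell)}=0$ and (by Frobenius) $\omega_{\mu\nu}=0$, gives $d\theta^{(\ell)}/d\lambda=-\sigma^{\mu\nu}\sigma_{\mu\nu}-R_{\mu\nu}\ell^{\mu}\ell^{\nu}\le 0$; hence the outgoing null geodesics leaving $\mathscr{A}_t$ immediately acquire $\theta^{(\ell)}\le 0$, so cross-sections obtained by flowing $\mathscr{A}_t$ a short way along $\ell^{\mu}$ are still outer trapped, and since $\theta^{(\eta)}<0$ nearby, one in fact produces genuine trapped surfaces slightly to the future of $\mathscr{A}_t$. (ii) A cross-focusing identity for the variation of $\theta^{(\ell)}$ in the ingoing direction — obtained by applying $\eta^{\nu}\nabla_{\nu}$ to the marginal condition $\theta^{(\ell)}=0$ on $\mathcal{H}$ and reorganizing as in the passage from eq.~\ref{eq12} to eq.~\ref{eq13}, with the transverse metric (eq.~\ref{transversemetric}) — expresses that variation in terms of $R_{\mu\nu}\ell^{\mu}\eta^{\nu}$, the intrinsic scalar curvature of the compact cross-section (bounded via Gauss--Bonnet), and manifestly ``spacelike'' quadratic terms. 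The weak energy condition pins the sign of the curvature term, forcing $\mathcal{H}$ to be nowhere timelike and its cross-sectional area to be non-decreasing toward the future along $\mathcal{H}$ — an apparent-horizon analogue of Hawking's area theorem (Theorem~\ref{area}). Consequently the area of the outer apparent horizon, positive for $t<t'$, cannot shrink to zero, so (marginally) trapped surfaces persist onto $\Sigma_{t'}$ and onto slices just above it; together with part (i) this contradicts the hypothesis that no trapped surface exists for $t>t'$. Hence no neighborhood $U$ on which the weak energy condition holds can exist, which is the claim.

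The hard part will be step (ii): showing rigorously that the trapped region cannot pinch off. The outer apparent horizon need not be smooth and $g_{\mu\nu}$ is only piecewise $C^3$, so the cross-focusing and area-monotonicity estimates must be carried out in low regularity and must allow for new marginally trapped surfaces appearing discontinuously outside the old ones (an analogue of the caustic entry permitted for the event horizon, cf.\ Theorem~\ref{45678}). Worse, as $t\to t'^{-}$ the shrinking trapped regions could a priori develop cusps or collapse to lower-dimensional sets, and keeping all the geometric inequalities uniform through that limit — rather than on a single smooth marginally outer trapped surface — is the technical core of Israel's argument and the place I expect to need the most care.
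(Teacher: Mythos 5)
The paper itself does not prove this theorem: it states it and defers entirely to Israel (1986), so your proposal has to be judged against that cited argument rather than an in-paper proof. Your overall strategy — argue by contraposition and use the focusing of the outgoing orthogonal null congruence under the weak energy condition to show that trapped surfaces present before $\Sigma_{t'}$ cannot simply cease to exist afterwards — is indeed the physical mechanism behind Israel's proof, and your step (i) is essentially sound, modulo the fact that on a marginal surface ($\theta^{(\ell)}=0$, possibly with vanishing shear and vanishing $R_{\mu\nu}\ell^{\mu}\ell^{\nu}$) Raychaudhuri only gives $\theta^{(\ell)}\le 0$ after flowing along $\ell^{\mu}$, not strict trapping, so some genericity or limiting argument is needed even there.

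The genuine gap is in step (ii). You assert that the weak energy condition ``pins the sign'' of the cross-focusing variation $\delta_{\eta}\theta^{(\ell)}$ and thereby forces the marginally trapped tube $\mathcal{H}$ to be nowhere timelike with non-decreasing cross-sectional area. That is not true: the variation of $\theta^{(\ell)}$ in the ingoing direction involves the mixed component $R_{\mu\nu}\ell^{\mu}\eta^{\nu}$ (equivalently $T_{\mu\nu}\ell^{\mu}\eta^{\nu}$ up to trace terms), the intrinsic scalar curvature of the cross-section, and divergence and quadratic terms in the transverse connection; the weak (or null) energy condition controls $T_{\mu\nu}\ell^{\mu}\ell^{\nu}$ but says nothing pointwise about $T_{\mu\nu}\ell^{\mu}\eta^{\nu}$, and Gauss--Bonnet only bounds the integral of the intrinsic curvature (and only for spherical topology). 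In the quasi-local horizon literature the inequality $\delta_{\eta}\theta^{(\ell)}<0$ is precisely the independent ``outer'' hypothesis, not a consequence of any energy condition; without it, marginally trapped tubes can be timelike and their area can decrease even when all classical energy conditions hold, so the area-monotonicity step that carries your whole contradiction does not follow from the WEC. Israel's actual argument avoids this: rather than proving an area law for the apparent horizon, it tracks the outgoing null congruences launched from the trapped surfaces themselves — under $R_{\mu\nu}\ell^{\mu}\ell^{\nu}\ge 0$ one has $d\theta/d\lambda\le-\tfrac{1}{2}\theta^{2}$, so $\theta$ can never pass from negative to non-negative along a generator, and the disappearance of all trapped surfaces at finite time would require exactly such a crossing of the boundary of the trapped region; the regularity hypotheses (and, in Israel's original formulation, boundedness of the stress-energy near the apparent horizon) are what control caustics and the limit $t\to t'$ so that the flowed surfaces genuinely furnish outer trapped surfaces on later slices. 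If you replace step (ii) by this crossing/monotonicity argument, your outline aligns with the cited proof; as written, the central inequality is unsupported.
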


Details on the proof of this theorem can be found in \cite{Israel1986} and references therein. Its content can be interpreted by considering that the slow Kerr black hole possesses trapped surfaces, while the extreme Kerr black hole does not (see \cite{ONeill1995} for details on the analytical extension of the Kerr metric and analysis of the expansion of the incoming and outgoing null congruences). In particular, one can visualize the “extremization” of a slow Kerr black hole as a series of continuous processes in which the black hole “captures” distributions of energy with parameters that contribute to reducing the inequality $r_s>2a$. Consequently, any absorption (excluding negative energy, as per the weak energy condition in theorem \ref{thekappa}) will “inflate” the inner apparent horizon, up to the point where it merges with the outer horizon. At that point, there would be no trapped surfaces, which would correspond to the extreme case. This is precisely what the result above restricts, which says that unless the weak energy condition is violated, it is not possible to “eliminate” all trapped surfaces by a finite amount of processes. 

In fact, the analysis of particular cases of the attainability of extreme black holes was considered in detail in \cite{Wald1974,Sorce2017}. In such an analysis, an explicit examination was made of the behavior of test point-like massive objects falling into a black hole, possessing parameters that, if absorbed, would extremize it or produce a naked singularity. More specifically, in the case of a rotating black hole, the gravitational spin interaction \cite{Wald1972} would result in a repulsion that would stop the body from entering the black hole region. Similarly, for the case of a charged black hole, electric repulsion would also play a role in restricting the absorption of charged bodies that would extremize the black hole.

\chapter{Semiclassical aspects of black holes}\label{chapter3}

The next step on the road to the black hole information problem is the effective emission of particles by black holes, which rises as a prediction from quantum field theory in curved spacetime. Although semiclassical gravity is expected to be only an approximation at the scales where spacetime structure can still be described classically, it is reasonable to anticipate that the predictions that rise from this limit of application of both of these highly successful theories will provide insights into quantum effects in gravitation as well as information regarding the development of an adequate theory of quantum gravity. Nevertheless, in extreme regions, such as the vicinity of a singularity or the early universe, in which the curvature scales are of order of $\ell_p^{-2}$ (see eq. \ref{planck}), the pertinent phenomena will have to be analyzed under the light of a complete theory of quantum gravity. However, these extreme regimes will not appear for most of our analysis, and semiclassical gravity will be an adequate approximation to analyze the most of the physics of black holes whose mass is much greater than the Planck mass (i.e., black holes that obey $r_s\gg\ell_p$).

We first present a precise description of the formalism of quantum field theory and a review of its basic elements in Minkowski spacetime. The first section of this chapter is devoted to the presentation of these concepts as well as their generalization to curved spacetime. In this context, we study the solutions of a free scalar field in Minkowski spacetime in order to develop the calculations in curved spacetime, while also discussing the ambiguity in the definition of a vacuum state for general spacetimes. Using these developments, we show how a dynamical gravitational field can lead to the creation of particles and study this effect in detail in the case of a Kerr black hole. Additionally, we analyze the classical properties of stationary black holes in light of these predictions and thermodynamic arguments. We then conclude the chapter by discussing the deep implication that entanglement (see appendix \ref{information} for a review of the necessary concepts that lead to this implication) has for quantum field theory and black holes. Namely, this conclusion follows from the assumption that any physically reliable quantum state must satisfy a condition in order for it to be possible to define a nonsingular ``energy-momentum expectation value''.

\section{Quantum field theory in curved spacetime}\label{qft}

In the \textit{Lagrange formalism}, a system of point-like massive bodies is characterized by a set of discrete generalized coordinates, $q_{\alpha}$, which are a representation of the degrees of freedom of the system. The \textit{principle of least action} leads one to the \textit{Euler-Lagrange equations}, which can be used to derive the equations of motion of bodies \cite{Greiner1996}. In \textit{classical field theory}, the object of study is now a \textit{field} (which will not necessarily be a scalar), denoted by $\psi(t,\textbf{x})$, where $\textbf{x}=(x^1,x^2,x^3)$. In particular, a \textit{classical} field is characterized by its value at each point of spacetime. Consequently, a system described by a field possesses infinite degrees of freedom, as the dynamical variables of the system are now the values of the field at each point of the manifold. An example of a classical field is the electromagnetic field, $A^{\mu}(t,\textbf{x})$, which is perhaps the best example in nature in which a phenomenon -- the electromagnetic interaction -- is fundamentally described by a field.

Let $\psi(t,\textbf{x})$ denote a classical field over Minkowski spacetime. The \textit{Lagrangian}, $L$, of the field can be written as a volume integral of a density function, known as the \textit{Lagrange density},
\begin{equation}
    L(t)=\int d^3x\;\mathcal{L}(\psi(t,\textbf{x}),\partial_{\mu}\psi(t,\textbf{x})),
\end{equation}
where the restrictions on the possible dependence of $\mathcal{L}$ on the field and its derivatives have been found to be adequate to describe phenomena represented by field theories \cite{Greiner1996}. The principle of least action,
\begin{equation}
    \delta\int_{t}^{t'} dt'' \int d^3x\;\mathcal{L}(\psi(t,\textbf{x}),\partial_{\mu}\psi(t,\textbf{x}))=0,
\end{equation}
leads one to the \textit{Euler-Lagrange field equations},
\begin{equation}
    \frac{\partial \mathcal{L}}{\partial\psi}-\partial_{\mu}\left(\frac{\partial \mathcal{L}}{\partial(\partial_{\mu}\psi)}\right)=0.
\end{equation}

The Hamiltonian formalism can also be applied to a field theory by defining the \textit{canonically conjugate field},
\begin{equation}
    \pi(\textbf{x},t)=\frac{\partial \mathcal{L}}{\partial(\partial_t{\psi})},
\end{equation}
in which the \textit{Hamiltonian} can be written as an integral of the \textit{Hamiltonian density}, $\mathscr{H}(t,\textbf{x})$,
\begin{equation}
    H(t)=\int d^3x\;\mathscr{H}(t,\textbf{x}),\quad \mathscr{H}(t,\textbf{x})=\pi(t,\textbf{x})\partial_t{\psi}(t,\textbf{x})-\mathcal{L}(\psi(t,\textbf{x}),\partial_{\mu}\psi(t,\textbf{x})).
\end{equation}
By requiring that a variation of $H$ vanishes, one obtains \textit{Hamilton's equations of motion},
\begin{equation}
    \partial_t{\psi}(t,\textbf{x})=\frac{\delta H}{\delta \pi},\quad \partial_t{\pi}(t,\textbf{x})=-\frac{\delta H}{\delta \psi}.
\end{equation}

Lastly, by defining the \textit{Poisson brackets} for the functionals (see, e.g., \cite{Greiner1996}) $A[\phi,\pi]$ and $B[\phi,\pi]$, 
\begin{equation}
    \{A,B\}=\int d^3x\;\left(\frac{\delta A}{\delta\phi(x)}\frac{\delta B}{\delta\pi(x)}-\frac{\delta A}{\delta\pi(x)}\frac{\delta B}{\delta\phi(x)}\right),
\end{equation}
one can find relations for the field and canonically conjugate field,
\begin{equation}
    \{\psi(t,\textbf{x}),\pi(t,\textbf{x}')\}=\delta^3(\textbf{x}-\textbf{x}'),
\end{equation}
\begin{equation}
    \{\psi(t,\textbf{x}),\psi((t,\textbf{x}')\}=\{\pi(t,\textbf{x}),\pi(t,\textbf{x}')\}=0,
\end{equation}
where $\delta^3(\textbf{x}-\textbf{x}')$ is the three-dimensional Dirac delta distribution \cite{Butkov1973}.

The process of quantization of a classical field is analogous to the one of a classical harmonic oscillator. Namely, the classical field and the canonically conjugate field are promoted to operators, $\hat{\psi}(t,\textbf{x})$ and $\hat{\pi}(t,\textbf{x})$, which are postulated to satisfy the \textit{equal-time commutation relations}
\begin{equation}\label{equal1}
    [\hat{\psi}(t,\textbf{x}),\hat{\pi}(t,\textbf{x}')]=i\hbar\delta^3(\textbf{x}-\textbf{x}')\hat{I},
\end{equation}
\begin{equation}\label{equal2}
    [\hat{\psi}(t,\textbf{x}),\hat{\psi}(t,\textbf{x}')]=[\hat{\pi}(t,\textbf{x}),\hat{\pi}(t,\textbf{x}')]=0,
\end{equation}
where $\hat{I}$ is the identity operator of the pertinent Hilbert space. By postulating these relations, it is implied that the field is a \textit{bosonic} one. In contrast, a \textit{fermionic} field would be postulated to obey corresponding  anticommutator relations. The reason behind these nomenclatures will become clear below. In either case, the self-adjoint operators, $\hat{\psi}(t,\textbf{x})$ and $\hat{\pi}(t,\textbf{x})$, act on the possible quantum states of the system, which are elements of the pertinent Hilbert space.

These ideas can be straightforwardly generalized to curved spacetimes which are globally hyperbolic, simply by identifying the field as a tensor over $M$, using the covariant volume element in the integrals and applying the Levi-Civita connection instead of the ordinary partial derivative. In particular, the globally hyperbolic property is necessary in order to ensure\footnote{Nevertheless, one can argue that suitable conditions on spacelike hypersurface that fails to a Cauchy hypersurface should suffice for the evolution of the field to be a ``well posed'' problem (see \S~\ref{evaporation}).} that the evolution of the field is a ``well posed'' problem (see \S~\ref{causal}). The development of this work is based on a free neutral scalar field, whose Lagrangian reads 
\begin{equation}\label{lagkg}
    \mathcal{L}=\frac{1}{2}\sqrt{-g}\left[\nabla^{\mu}\psi\nabla_{\mu}\psi-\left(\frac{m^2c^2}{\hbar^2}+\xi R\right)\psi^2\right],
\end{equation}
where $m$ is the field mass, $R$ is the Ricci scalar and $\xi$ a real constant. By ``free'', it is meant that the field does not interact with itself or other fields, except with the spacetime background, i.e., by means of gravitational interaction. Interaction with other fields would yield a scalar potential term in the Lagrangian, and the gravitational interaction is made explicit by the presence of the Levi-Civita connection and the term $\xi R\psi^2$. This last term is present due to the fact that when $m=0$ and $\xi=1/6$, the action is invariant over conformal transformations\footnote{Another reason for including this term is due to the \textit{renormalization} of the theory when one considers an interaction term such as $\lambda\psi^4$ \cite{Parker2009}.}. Hence, one can argue that, in the general case, simply taking $\xi=0$ is not necessarily the most physically adequate choice. Nevertheless, for developments of interest this term will not affect the results, and for simplicity, it will be considered that the field is \textit{minimally coupled}, i.e., $\xi=0$. Applying the Lagrangian of eq. \ref{lagkg} in the Euler-Lagrange Field equations yields the \textit{Klein-Gordon} equation,
\begin{equation}\label{kleingordon}
    \left(\nabla^{\mu}\nabla_{\mu}+\frac{m^2c^2}{\hbar^2}+\xi R\right)\psi=0.
\end{equation}
Further details on this equation, as well as how it can be deduced that it describes spinless particles can be found in \cite{Greiner2000}. In the developments that follow, we will adopt geometrized units.

We will use the Heisenberg picture, so that the dynamics of the system is given entirely by the evolution of the fundamental operator of the system, the field operator. In order to study how the field operator acts on the vector states, it is useful to first consider ``classical'' (i.e., scalar) solutions of the Klein-Gordon equation. Let $f_a$ and $f_b$ be scalar solutions of the Klein-Gordon equation. Consider the vector 
\begin{equation}
    \ell^{\mu}=f_a^*\nabla^{\mu}f_b-f_b\nabla^{\mu}f_a^*,
\end{equation}
where $^*$ represents complex conjugation. One can readily verify that $\ell^{\mu}$ respects the covariant form of the conservation equation, and, if the solutions vanish as $r\to\infty$, following the same line of reasoning as the one that lead to the Komar integral in \S\;\ref{symmetry}, one can show that the conserved charge associated with the conserved current $\ell^{\mu}$ is simply the integral of $\epsilon_{\mu\nu\alpha\beta}\ell^{\beta}$ over a spacelike hypersurface, $\Sigma$. Consequently, this conserved quantity can be used to define the \textit{Klein-Gordon inner product}, $(f_a,f_b)$, as
\begin{equation}\label{inner}
    (f_a,f_b)=i\int_{\Sigma} \epsilon_{\mu\nu\alpha\beta}\left(f_a^*\nabla^{\beta}f_b-f_b\nabla^{\beta}f_a^*\right),
\end{equation}
which, clearly, is independent of the choice of $\Sigma$. It is easy to see that this inner product is linear with respect to the second argument and antilinear with respect to the first. One can also readily verify that
\begin{equation}\label{inner1}
(f_a,f_b)^*=(f_b,f_a)=-(f_a^*,f_b^*).
\end{equation}

Consider, now, a solution of the minimally coupled Klein-Gordon equation in Minkowski spacetime which takes the form
\begin{equation}\label{KG1234}
    f_p=c_p e^{ip_{\mu}x^{\mu}},
\end{equation}
with $p^{\mu}=(\omega_p,\textbf{p})$, $\textbf{p}=(p^{x^1},p^{x^2},p^{x^3})$, $\omega_p=\sqrt{|\textbf{p}|+m^2}$ and $c_p$ a constant. Note that the solution index, $p$, is three dimensional, being associated with the \textit{mode} $\textbf{p}$, but the vector notation in the index has been removed for clearer notation. A solution is said to be a \textit{positive frequency solution} if
\begin{equation}\label{timeKG}
    \xi^{\mu}\nabla_{\mu}f_p=-i\omega_p f_p\;\text{for }\omega_p>0,
\end{equation}
where $\xi^{\mu}$ is a timelike Killing vector of Minkowski spacetime. The complex conjugate of a positive frequency solution is said to be a \textit{negative frequency solution}. It should be noted that a more rigorous nomenclature would be positive or negative norm solution (with a norm given by the Klein-Gordon inner product), but we will adopt the customary nomenclature of frequencies to identify solutions.

In this manner, the most general solution of the minimally coupled Klein-Gordon equation is a linear combination of positive frequency solutions, $f_p$, and negative frequency solutions, $f_p^*$, i.e., $\{f_p,f_p^*\}$ comprises a basis of the vector space of solutions. The structure of an inner product gives rise to a notion of orthogonality between the elements of a basis, namely,
\begin{equation}\label{inner2}
    (f_p,f_{p'})=\delta^3(\textbf{p}-\textbf{p}'),
\end{equation}
\begin{equation}\label{inner3}
    (f_p,f_{p'}^*)=0.
\end{equation}
Eq. \ref{inner2} can then be used to find $c_p=(2\omega_p)^{-1/2}(2\pi)^{-3/2}$. The expansion of the field operator, $\hat{\psi}$, in terms of this basis is given by
\begin{equation}\label{exp1234}
    \hat{\psi}=\int^{\infty}_{-\infty} d^3p\left(\hat{a}_pf_p+\hat{a}^{\dag}_pf_p^*\right),
\end{equation}
where $\hat{a}_p$ and $\hat{a}_p^{\dag}$ are the operators in the expansion of $\hat{\psi}$ in terms of the basis $\{f_p,f_p^*\}$. Evidently, these operators can also be written as
\begin{equation}
    \hat{a}_p=(f_p,\hat{\psi}),
\end{equation}
\begin{equation}
    \hat{a}^{\dag}_p=(f^*_{p},\hat{\psi}).
\end{equation}
Additionally, from the orthogonality of solutions, eqs. \ref{inner2} and \ref{inner3}, it also follows the commutation relations,
\begin{equation}\label{commua}
    [\hat{a}_p,\hat{a}^{\dag}_{p'}]=\delta^3(\textbf{p}-\textbf{p}')\hat{I},
 \end{equation}
\begin{equation}\label{commua1}
    \left[\hat{a}_p,\hat{a}_{p'}\right]=[\hat{a}^{\dag}_p,\hat{a}^{\dag}_{p'}]=0.
\end{equation}

To interpret these operators, it is convenient to restrict the solutions $f_p$ to a volume with periodic boundary conditions. This can be done by considering that the system is confined to a cube of side $L$, such that the solution given in eq. \ref{KG1234} now takes the form
\begin{equation}
    f_p=(2L^3\omega_p)^{-1/2} e^{ip_{\mu}x^{\mu}},
\end{equation}
where
\begin{equation}
    p^{a}=\frac{2\pi n}{L},\quad n=0,\pm 1,\pm 2,...,\quad a=x^1,x^2,x^3.
\end{equation}
By doing so, the Dirac delta appearing in the orthogonality condition becomes a Kronecker delta, which then results in
\begin{equation}\label{inner25}
    [\hat{a}_p,\hat{a}^{\dag}_{p'}]=\delta_{\textbf{p}\textbf{p}'}.
\end{equation}
Now, returning to a classical perspective and using a spatial Fourier decomposition, it is possible to see that the minimally coupled Klein-Gordon equation implies that the complex coefficients, $a_p$, individually satisfy the harmonic oscillator equation. In analogy with the quantization of such a system, one can then identify the operators $\hat{a}_p$ and $\hat{a}_p^{\dag}$ in eq. \ref{exp1234} as \textit{annihilation} and \textit{creation} operators for a particular mode, respectively. Hence, one can define the \textit{vacuum} state, $|0\rangle$, to be the one which is ``destroyed'' by the action of any $\hat{a}_p$, i.e.,
\begin{equation}\label{vacuum}
    \hat{a}_p|0\rangle=0,\;\forall\;\textbf{p},
\end{equation}
and also being normalized as $\langle 0|0\rangle=1$. 

One can also define the \textit{number operator} to be
\begin{equation}
    \hat{N}_p=\hat{a}_p^{\dag}\hat{a}_p,
\end{equation}
which measures the number of ``quanta'' or ``excitations'' in the mode $\textbf{p}$. Following the \textit{Fock representation} (i.e., states are represented by their ``quanta'' content), the Hilbert space of a system is represented using the eigenvalues of the number operator. Note that this is viable due to the fact that any of its elements can be constructed by acting $\hat{a}_p^{\dag}$ on $|0\rangle$. In particular, an orthonormal basis of the Hilbert space is given by the set of vectors that can be written as
\begin{equation}\label{statevec}
    |\alpha_{p_1},\beta_{p_2},\ldots\rangle=(\alpha!\beta!\ldots)^{-1/2}(\hat{a}^{\dag}_{p_1})^{\alpha}(\hat{a}^{\dag}_{p_2})^{\beta}\ldots|0\rangle,
\end{equation}
where an arbitrary vector such as the one in eq. \ref{statevec} can be interpreted to represent a state with $\alpha$ quanta in mode $\textbf{p}_1$, $\beta$ quanta in mode $\textbf{p}_2$ and so forth. This is precisely the interpretation that quantum field theory provides of ``particles''\footnote{In the following, we shall drop the quote unquote in the word particles, but the reader should recall that this is merely a compact way of referring to the ``excitations'' of a quantum field.}, in other words, they are merely ``excitations'' of the quantum field. 

Hence, in this interpretation lies the root of the field nomenclature given above for the commutation relations. Namely, a field whose quanta obey Bose-Einstein statistics \cite{Sakurai1994} will be a bosonic one, while one whose quanta obey Fermi-Dirac statistics will be a fermionic one \cite{Parker2009}. Additionally, note that the ``confinement'' of the field in a box in order to obtain the proper delta in the commutation relation, eq. \ref{inner25}, is merely a mathematical procedure. In particular, one may always perform this process in order to discretize the index of the solutions, which can also be made by constructing an orthonormal set of wave packets from $f_p$. Conversely, one may always take the limit $L\to\infty$ in order to reinstate the continuous index. Finally, the interpretation of these operators generalizes straightforwardly to curved spacetime, as the field operator can always be expanded in the form given in eq. \ref{exp1234} when one considers a prescription to identify positive frequency solutions.

It is useful to investigate how two bases of the vector space of solutions of the Klein-Gordon (not necessarily the minimally coupled) equation, $\{f_{\omega},f_{\omega}^*\}$ and $\{u_{\omega},u_{\omega}^*\}$, relate. Namely, it is evident that the construction of annihilation and creation operators is directly connected with the choice of basis, and one wishes to analyze how such operators can be expressed in terms of one another. To do so, note that $\hat{\psi}$ can be written in terms of each basis independently, 
\begin{equation}
\begin{aligned}[b]
    \hat{\psi} & =\int d\omega(\hat{a}_{\omega}f_{\omega}+\hat{a}^{\dag}_{\omega}f_{\omega}^*)\\
    & =\int d\omega(\hat{b}_{\omega}u_{\omega}+\hat{b}^{\dag}_{\omega}u_{\omega}^*),
\end{aligned}
\end{equation}
where the index $\omega$ should be interpreted as the set of pertinent indices of the system, i.e., it represents a number of continuous and/or discrete indices, and the integral is merely a symbol to denote the pertinent sums. Evidently, one can also write $u_{\omega}$ in terms of $\{f_{\omega},f_{\omega}^*\}$,
\begin{equation}\label{bogo}
u_{\omega}=\int d\omega'(\alpha_{\omega\omega'}f_{\omega'}+\beta_{\omega\omega'}f_{\omega'}^*).
\end{equation}
The coefficients $\alpha_{\omega\omega'}$ and $\beta_{\omega\omega'}$ are known as \textit{Bogolubov coefficients} and the change of basis represented by them is known as a \textit{Bogolubov transformation}. 

By applying $(f_{\omega'},\cdot)$ and $(f^*_{\omega'},\cdot)$ to eq. \ref{bogo}, one finds
\begin{equation}\label{coef1}
    \alpha_{\omega\omega'}=(f_{\omega'},u_{\omega}),
\end{equation}
\begin{equation}\label{coef2}
    \beta_{\omega\omega'}=-(f^*_{\omega'},u_{\omega}).
\end{equation}
Using the normalization conditions and the properties of the inner product, one can then readily verify that the inverse transformation reads 
\begin{equation}
    f_{\omega}=\int d\omega'(\alpha_{\omega'\omega}^*u_{\omega'}-\beta_{\omega'\omega}u_{\omega'}^*).
\end{equation}
The coefficients can also be used to relate the operators associated with each basis by using that $\hat{b}_{\omega}=(u_{\omega},\hat{\psi})$ and $\hat{a}_{\omega}=(f_{\omega},\hat{\psi})$, which yield
\begin{equation}\label{bogo1}
    \hat{b}_{\omega}=\int d\omega'(\alpha^*_{\omega\omega'}\hat{a}_{\omega'}-\beta^*_{\omega\omega'}\hat{a}_{\omega'}^{\dag}),
\end{equation}
\begin{equation}\label{bogo2}
    \hat{a}_{\omega}=\int d\omega'(\alpha_{\omega'\omega}\hat{b}_{\omega'}+\beta_{\omega'\omega}^*\hat{b}_{\omega'}^{\dag}).
\end{equation}
By applying $(u_{\omega'},\cdot)$ and $(u^*_{\omega'},\cdot)$ to eq. \ref{bogo}, one finds that the coefficients also respect
\begin{equation}\label{coeff1}
    \int d\omega''(\alpha_{\omega\omega''}\alpha^*_{\omega'\omega''}-\beta_{\omega\omega''}\beta^*_{\omega'\omega''})=\delta({\omega-\omega'}),
\end{equation}
\begin{equation}
     \int d\omega''(\alpha_{\omega\omega''}\beta_{\omega'\omega''}-\beta_{\omega\omega''}\alpha_{\omega'\omega''})=0.
\end{equation}

Note that there are infinitely many choices of basis of solutions of the Klein-Gordon equation, and since the vacuum state is defined by identifying the positive frequency solution on a basis, one might ask if the definition of the vacuum state is dependent on the choice of basis. First, note that in order to label a solution as a positive frequency one in a ``natural'' manner, it is necessary for spacetime to possess a timelike Killing vector, and evaluate the condition given by eq. \ref{timeKG}. Thus, adopting a different basis but maintaining the prescription to split solutions into positive and negative ones will yield the same vacuum state. This can be readily verified by noting that, from eqs. \ref{bogo1} and \ref{bogo2}, the creation operators defined by different bases are related by the coefficients $\alpha_{\omega\omega'}$, but they may also be related to the annihilation operators if the coefficients $\beta_{\omega\omega'}$ do not vanish. Now, if the creation operators defined by a basis are linear combination of creation and annihilation operators defined by another basis, then it is clear that the vacuum state defined by \ref{vacuum} will be basis dependent, i.e., there will be a disagreement in the measurement of ``quanta'' of the vacuum state defined by a given basis. 

This disagreement will be derived explicitly in the next section, but to see this in more detail now, consider the vacuum state as defined in a inertial frame of reference, $\mathscr{A}$, in Minkowski spacetime. Due to the static nature of Minkowski spacetime, there is a timelike Killing vector that can be used to define the vacuum state. It is straightforward to deduce \cite{Dowker} that the positive frequency solutions in any other inertial frame which is related to $\mathscr{A}$ by an orthochronous Lorentz transformation can be written as a linear combination of the positive frequency solutions in $\mathscr{A}$. Thus, the coefficients $\beta_{\omega\omega'}$ vanish in the Bogolubov transformation relating the bases of solutions, which means that the definition of the vacuum state is independent of the choice of inertial frame of reference. This ``natural'' vacuum state that rises due to the static property of Minkowski spacetime is known as \textit{static vacuum}\footnote{The definition of a ``natural'' vacuum state associated with a timelike Killing vector does not become more ``special'' if such a vector is also hypersurface orthogonal. In other words, the argumentation presented would not change if the spacetime in consideration were only stationary.}. However, uniformly accelerated observers do not perceive the static vacuum as one with no quanta. As we will see in the next section, the nature of this phenomenon lies in the fact that positive frequency solutions in such a non-inertial frame are related to positive \textit{and} negative frequency solutions in an inertial frame. In fact, this is known as the \textit{Fulling-Davies-Unruh effect} \cite{Crispino2007}, which states that a uniformly accelerated observer perceives the static vacuum as a thermal state. 

Finally, for the discussion of the propagation of the scalar field in spacetimes, it will be necessary to consider the \textit{geometric optics approximation}. This approximation is of use when the wavelength of a wave propagating in spacetime is orders of magnitude smaller than the typical length scale of the analysis. Namely, the radius of curvature of the spacetime\footnote{The typical radius of the curvature of the spacetime can be evaluated by $\left(R^{\mu\nu\alpha\beta}R_{\mu\nu\alpha\beta}\right)^{-1/4}$.} and the length over which the amplitude of the wave varies. For example, if one considers a wave of the form $\psi=A(x^{a})e^{iS(x^a)}$, then the wave equation (which is also the massless minimally coupled Klein-Gordon equation) under the assumption that $\nabla_{\mu}S\gg\nabla_{\mu}A$ implies that \cite{Fabbri2005}
\begin{equation}\label{optics}
    \nabla_{\mu}S\nabla^{\mu}S=0.
\end{equation}
Since the propagation of a wave is given by the curves normal to the surfaces of constant phase, $S$, eq. \ref{optics} can be interpreted as stating that the wave vector is null. Moreover, following the same line of reasoning as the remarks presented below theorem \ref{frobenius}, the propagation of the wave must be along null geodesics. Thus, in order to study the propagation of the solutions of the massless minimally coupled Klein-Gordon equation under the geometric optics approximation, it suffices to know the behavior of the null geodesics of spacetime.

\section{Effective particle creation by black holes}\label{creation}

The content of the effective particle creation effect by black holes is directly related to the failure of observers in different regions of spacetime to agree on a definition of a vacuum state. In particular, because of the necessity of a timelike Killing vector to define a \textit{preferred} vacuum state, perhaps the most notable difference between quantum field theory in Minkowski spacetime and general spacetimes is the ambiguity in the definition of a \textit{physical} vacuum state. This is a consequence of the fact that without a ``natural'' definition of positive frequency solutions, i.e., a ``natural'' symmetry, the vacuum state that rises from any choice of basis has no physical meaning. 

Nonetheless, for stationary spacetimes, or merely those that possess stationary regions, it is possible to use the timelike Killing vector associated with the time translation symmetry to have a preferred selection of positive frequency solutions. In the context of black holes, this is the fundamental aspect behind the process of effective particle creation. In essence, in a spacetime possessing a black hole that resulted from gravitational collapse, one expects that at ``early times''\footnote{The way ``early time'' is referred to here is in the same sense as the one presented in the stationary state conjecture (see \S~\ref{kerr}). More precisely, one identifies ``early times'' by means of Cauchy hypersurfaces. The  reader should recall that these time comparisons are related to a ``slicing'' of the spacetime in Cauchy hypersurfaces.} the energy distribution that gave rise to the black hole should be described by a metric that is asymptotically flat (in the sense that the spacetime has a region similar to $\mathscr{I}^-$), and thus, stationary. Following considerations of the stationary final state conjecture, at late times after the formation of the black hole, the spacetime will also be described by a asymptotically flat metric, so that it is possible to use the prescription given by the timelike Killing vector in $\mathscr{I}^-$ and $\mathscr{I}^+$ to compare the vacuum state defined in them. Such a comparison is simply made through the Bogolubov coefficients relating a basis in each of these regions. In this manner, one finds that the vacuum state in past null infinity, $|0\rangle_{\mathscr{I}^-}$, will not be equal to the vacuum state in future null infinity, $|0\rangle_{\mathscr{I}^+}$, as illustrated in fig. \ref{fig:qft1}. A detailed analysis of this disagreement between definitions of the vacuum state will be made for a black hole that was the result of gravitational collapse. Natural units (see ch. \ref{Introduction}) will be adopted throughout this development.

\begin{figure}[h]
\centering
\includegraphics[scale=1.25]{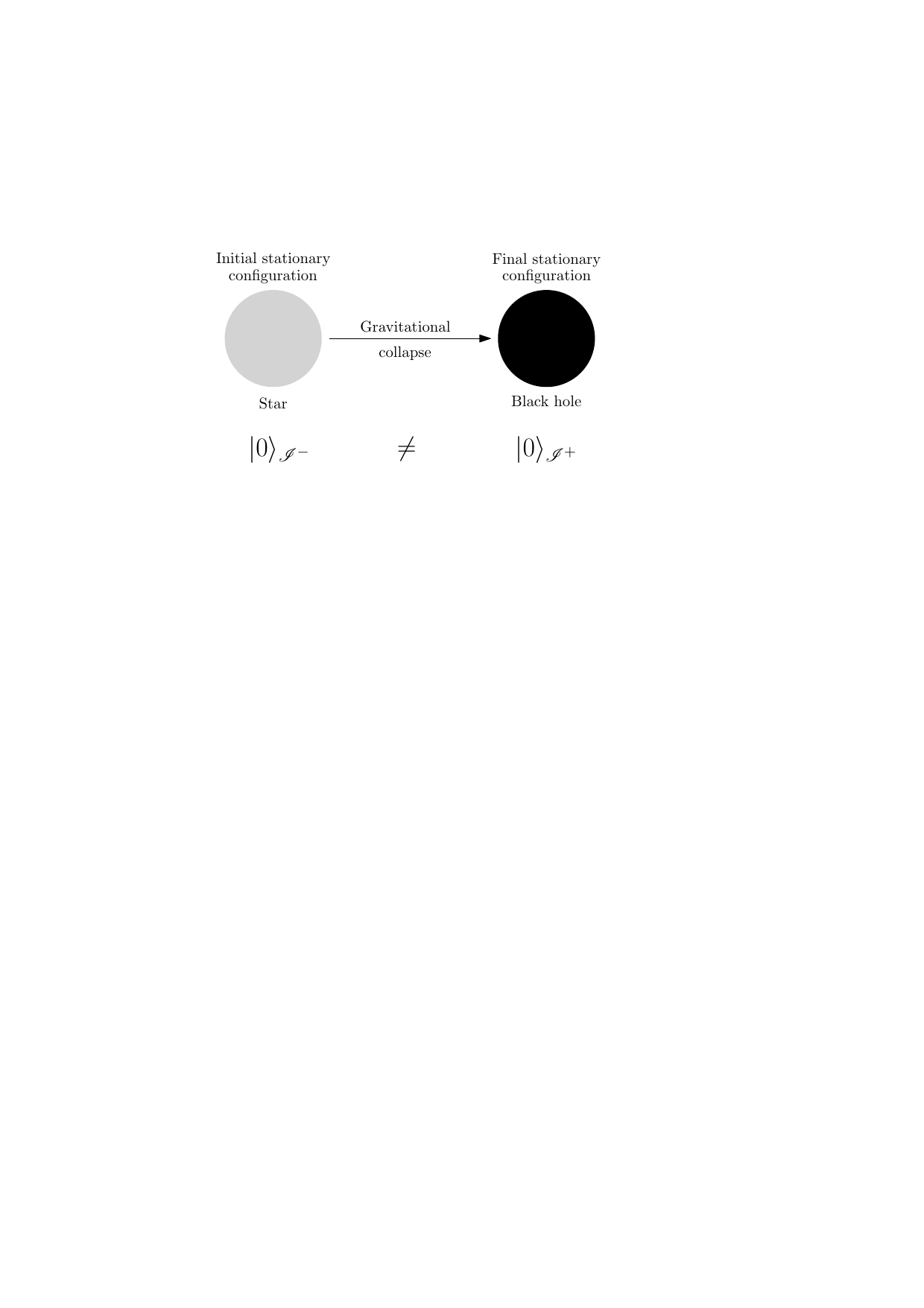} 
\caption{Change in the definition of the vacuum state due to the process of gravitational collapse.}
\caption*{Source: Adapted from FABBRI; NAVARRO-SALAS \cite{Fabbri2005}.}
\label{fig:qft1}
\end{figure}

As discussed, it will be assumed that the spacetime possesses an early time and, at least, distant stationary region, $\mathscr{I}^-$, and due to the stationary state conjecture, a late time stationary region, $\mathscr{I}^+$. The analysis starts by considering linear independent sets of solutions in both regions. Let $\{f_{\omega},f_{\omega}^*\}$ denote such a set on $\mathscr{I}^-$, and $\{u_{\omega},u_{\omega}^*\}$ one on $\mathscr{I}^+$, where the index $\omega$ should be understood to represent the pertinent number of continuous and/or discrete indices. Evidently, one can write the field operator unambiguously in terms of $\{f_{\omega},f_{\omega}^*\}$ if it is complete, where the basis is defined using the timelike Killing vector in $\mathscr{I}^-$, 
\begin{equation}\label{7890}
    \hat{\psi}=\int d\omega(\hat{a}_{\omega}f_{\omega}+\hat{a}^{\dag}_{\omega}f_{\omega}^*),
\end{equation}
but the corresponding expansion does not hold for $\{u_{\omega},u_{\omega}^*\}$. This is a consequence of the fact that all the null geodesics of spacetime have a past endpoint on $\mathscr{I}^-$, but the same cannot be said for $\mathscr{I}^+$. In particular, there are null geodesics that never reach $\mathscr{I}^+$ because they enter the black hole region. Thus, it is actually $H\cup\mathscr{I}^+$, where $H$ denotes the event horizon, that is the hypersurface that all null geodesics encounter. Namely, in order to expand $\hat{\psi}$ in terms of $\{u_{\omega},u_{\omega}^*\}$, it is necessary to also specify a set of solutions at $H$ and use both sets simultaneously.

Let $\{u_{\omega},u_{\omega}^*\}$ denote a complete set of solutions on $\mathscr{I}^+$ with zero Cauchy data on $H$, i.e., they vanish on $H$ as well as their gradients, and let $\{s_{\omega},s_{\omega}^*\}$ denote a complete set of solutions on $H$ with zero Cauchy data on $\mathscr{I}^+$ (see fig. \ref{fig:qft27}). Then $\{u_{\omega},u_{\omega}^*,s_{\omega},s_{\omega}^*\}$ is a basis of the vector space of solutions. Hence, the field operator can be written as
\begin{equation}
    \hat{\psi}=\int d\omega(\hat{b}_{\omega}u_{\omega}+\hat{c}_{\omega}s_{\omega}+\hat{b}^{\dag}_{\omega}u^*_{\omega}+\hat{c}^{\dag}_{\omega}s^*_{\omega}).
\end{equation}
Note that the split of $\{u_{\omega},u_{\omega}^*\}$ is unambiguous due to the timelike Killing vector in $\mathscr{I}^+$, but the same cannot be said for $\{s_{\omega},s_{\omega}^*\}$, as on $H$ there is no timelike Killing vector.\begin{figure}[h]
\centering
\includegraphics[scale=1.25]{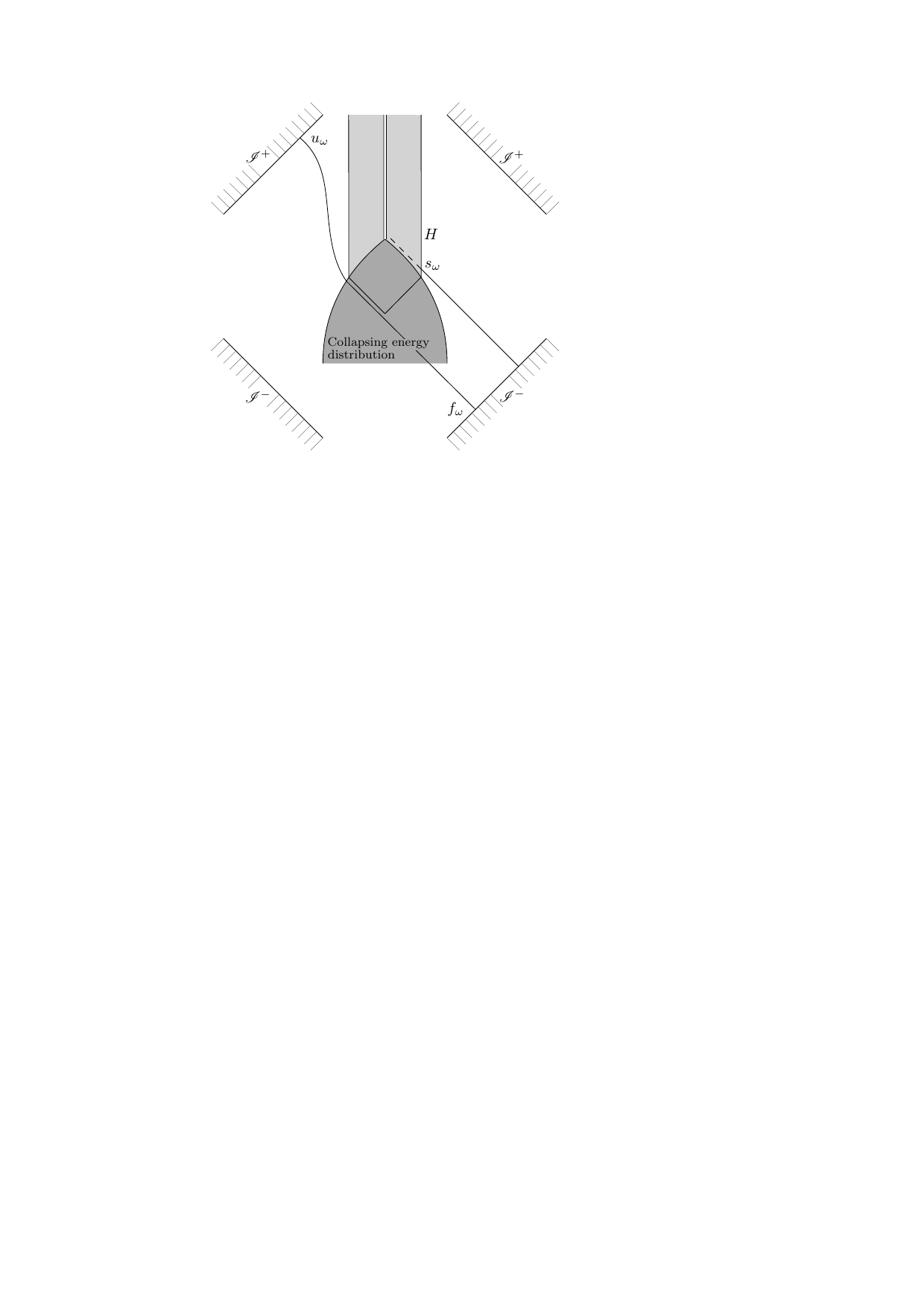} 
\caption{Evolution of field modes in the process of black hole formation.}
\caption*{Source: Adapted from HAWKING \cite{Hawking1976}.}
\label{fig:qft27}
\end{figure} This means that there is no natural definition of a vacuum state at $H$, but this is not relevant to the calculation of the particle creation effect \cite{Wald1975}, as the analysis of the vacuum state is only in $\mathscr{I}^+$. Each of the sets $\{u_{\omega},u_{\omega}^*\}$ and $\{s_{\omega},s_{\omega}^*\}$ obey their respective orthogonality relations individually, but since they are also in disjoint regions at late times, one must have
\begin{equation}(s_{\omega},u_{\omega'})=(s_{\omega},u^*_{\omega'})=0.
\end{equation}
To proceed, it is necessary to analyze the Bogolubov transformation relating the bases $\{u_{\omega},u_{\omega}^*,s_{\omega},s_{\omega}^*\}$ and $\{f_{\omega},f_{\omega}^*\}$. In particular, the relation of interest is the one between the operator $\hat{b}_{\omega}$ and the operators $\hat{a}_{\omega}$ and $\hat{a}_{\omega}^{\dag}$. This relation can then be used to construct the number operator on $\mathscr{I}^+$ and act it on the state $|0\rangle_{\mathscr{I}^-}$, which will give information about how the vacuum state of $\mathscr{I}^-$ is perceived at $\mathscr{I}^+$. 

The expansion of $u_{\omega}$ in terms of $\{f_{\omega},f_{\omega}^*\}$ is given by eq. \ref{bogo}, and the operator $\hat{b}_{\omega}^{\dag}$ relates to the operators $\hat{a}_{\omega}$ and $\hat{a}_{\omega}^{\dag}$ by eq. \ref{bogo1}. Taking the conjugate transpose of eq. \ref{bogo1} yields
\begin{equation}\label{bogotrans}
    \hat{b}_{\omega}^{\dag}=\int d\omega'(\alpha_{\omega\omega'}\hat{a}^{\dag}_{\omega'}-\beta_{\omega\omega'}\hat{a}_{\omega'}).
\end{equation}
From eqs. \ref{bogo1} and \ref{bogotrans}, one can then construct the number operator on $\mathscr{I}^+$, $\hat{N}_{\omega}=\hat{b}_{\omega}^{\dag}\hat{b}_{\omega}$, which reads
\begin{multline}\label{Nbqft}
    \hat{N}_{\omega}=\int d\omega'\int d\omega''\left(\alpha_{\omega\omega''}\alpha^*_{\omega\omega'}\hat{a}_{\omega''}^{\dag}\hat{a}_{\omega'}-\alpha_{\omega\omega''}\beta^*_{\omega\omega'}\hat{a}_{\omega''}^{\dag}\hat{a}_{\omega'}^{\dag}-\alpha^*_{\omega\omega'}\beta_{\omega\omega''}\hat{a}_{\omega''}\hat{a}_{\omega'}\right.\\ \left. +\beta_{\omega\omega''}\beta^*_{\omega\omega'}\hat{a}_{\omega''}\hat{a}_{\omega'}^{\dag}\right) .
\end{multline}
Using this construction, it is possible to evaluate $_{{\mathscr{I}^-}}\langle0|\hat{N}_{\omega}|0\rangle_{\mathscr{I}^-}$. The action of each of the individual operators in \ref{Nbqft} follows from their interpretation of creation and annihilation operators individually for each mode (see, e.g., \cite{Sakurai1994} for a deduction of the action of these operators for a simple harmonic oscillator). Hence, from the condition that the state vectors given in eq. \ref{statevec} form an orthonormal basis, one finds
\begin{equation}
    _{{\mathscr{I}^-}}\langle0|\hat{a}_{\omega''}^{\dag}\hat{a}_{\omega'}|0\rangle_{\mathscr{I}^-}=0,
\end{equation}
\begin{equation}
    _{{\mathscr{I}^-}}\langle0|\hat{a}_{\omega''}^{\dag}\hat{a}_{\omega'}^{\dag}|0\rangle_{\mathscr{I}^-}=0,
\end{equation}
\begin{equation}
    _{{\mathscr{I}^-}}\langle0|\hat{a}_{\omega''}\hat{a}_{\omega'}|0\rangle_{\mathscr{I}^-}=0,
\end{equation}
\begin{equation}
    _{{\mathscr{I}^-}}\langle0|\hat{a}_{\omega''}\hat{a}_{\omega'}^{\dag}|0\rangle_{\mathscr{I}^-}={}_{{\mathscr{I}^-}}\langle 1_{\omega''}|1_{\omega'}\rangle_{\mathscr{I}^-}=\delta(\omega'-\omega'').
\end{equation}
Thus,
\begin{equation}\label{par1}
    _{{\mathscr{I}^-}}\langle0|\hat{N}_{\omega}|0\rangle_{\mathscr{I}^-}=\int d\omega'|\beta_{\omega\omega'}|^2.
\end{equation}
As discussed at the end of \S~\ref{qft}, the information about how the vacuum state in $\mathscr{I}^-$ is perceived by observers in the future asymptotic region is given by the coefficients $\beta_{\omega\omega'}$. Indeed, eq. \ref{par1} shows that if the coefficients $\beta_{\omega\omega'}$ do not vanish, then the state $|0\rangle_{\mathscr{I}^-}$ will not be perceived as one with no particles at $\mathscr{I}^+$. In other words, the failure of the positive frequency solutions in $\mathscr{I}^-$ to evolve to linear combinations of \textit{only} positive frequency solutions in $\mathscr{I}^+$ is precisely the reason why these two regions have different definitions of vacuum states. 

Therefore, to evaluate the content of the perceived state in $\mathscr{I}^+$, it suffices to find an explicit relation to the integral on the right hand side of eq. \ref{par1}. However, evaluating the Bogolubov coefficients explicitly and integrating is not trivial in general, and one can justifiably expect that they will be highly dependent on details of the collapse. Nevertheless, it will soon become clear that details of the collapse are negligible, as the effect of the black hole region and the stationary state conjecture will make it so that only the parameters $(r_s,a,e)$ are of significance to the spectrum of measured particles at $\mathscr{I}^+$. In this manner, the path that will be taken is to find a relation for $|\beta_{\omega\omega'}|^2$ and $|\alpha_{\omega\omega'}|^2$ and then use eq. \ref{coeff1} to evaluate eq \ref{par1}, as first done originally in \cite{Hawking1975}. To find how the coefficients relate, it is necessary to solve the Klein-Gordon equation and make use of the explicit forms of $f_{\omega}$ and $u_{\omega}$. 

Consider the massless Klein-Gordon equation, for simplicity, in the Kerr spacetime, which from eq. \ref{Christoffel divergence}, reads
\begin{equation}\label{KG}
    \frac{1}{\sqrt{-g}}\partial_{\mu}\left[\sqrt{-g}g^{\mu\nu}\partial_{\nu}\psi\right]=0.
\end{equation}
 Using the Kerr metric and its inverse (see appendix \ref{B1}), eq. \ref{KG} can be written as
\begin{multline}\label{KG2}
\left\{\frac{1}{\Delta}\left[(r^2+a^2)^2-\Delta a^2\sin^2{\theta}\right]\partial^2_t-\left(\frac{\Delta-a^2\sin^2{\theta}}{\Delta\sin^2{\theta}}\right)\partial^2_{\phi}+\left(\frac{2r_sar}{\Delta}\right)\partial_t\partial_{\phi}-\partial_r(\Delta\partial_r)\right.\\ \left.-\frac{1}{\sin{\theta}}\partial_{\theta}\left(\sin{\theta}\partial_{\theta}\right)\right\}\psi=0.
\end{multline}
Supposing a solution $\psi=R(r)\Theta(\theta)e^{im\phi}e^{-i\omega t}$, eq. \ref{KG2} separates and yields equations for $R(r)$ and $\Theta(\theta)$. Using $\lambda$ as the separation constant, one obtains
\begin{equation}\label{radialKG}
    \Delta\frac{d}{dr}\left(\Delta\frac{dR(r)}{dr}\right)+\left[\omega(r^2+a^2)^2+m^2a^2-2r_sarm\omega-(\omega^2a^2+\lambda)\Delta\right]R(r)=0,
\end{equation}
\begin{equation}\label{angularKG}
\frac{1}{\sin{\theta}}\frac{d}{d\theta}\left(\sin{\theta}\frac{d\Theta(\theta)}{d\theta}\right)+\left(\lambda+\omega^2a^2\cos{\theta}-\frac{m^2}{\sin^2{\theta}}\right)\Theta(\theta)=0.
\end{equation}

The angular equation, eq. \ref{angularKG}, is an eigenvalue equation whose solution is known to give rise to the \textit{oblate spheroidal harmonics} \cite{Ford1975}. Indeed, this equation, coupled with the harmonic oscillator equation for the coordinate $\phi$, is very similar to those which yield the spheroidal harmonics, the only difference being the additional $\omega^2a^2\cos{\theta}\Theta(\theta)$ term. This small difference makes it so that the eigenvalues, $\lambda_{\ell m}$, associated with the oblate spheroidal harmonics, $S_{\ell m}(\theta,\phi)=\Theta_{\ell m}(\theta)e^{im\phi}$, have a nontrivial dependence on the integers $\ell(=0,1,2\ldots)$ and $m(=-\ell,\ell+1,\ldots,\ell-1,\ell)$. Nevertheless, these functions are orthogonal \cite{Iyer1979} in the sense that by imposing adequate regularity conditions \cite{Frolov2011}, they can be shown to obey 
\begin{equation}
    \int^{2\pi}_0 d\phi\int^{\pi}_0 d\theta  \sin{\theta}S^*_{\ell m}(\theta,\phi) S_{\ell' m'}(\theta,\phi)=\delta_{\ell\ell'}\delta_{mm'}.
\end{equation}

As for the radial equation, eq. \ref{radialKG}, it is easier to analyze it through a more adequate coordinate. In appendix \ref{congrukerr}, the derivation of principal null congruences for the Kerr spacetime is presented, and an adequate coordinate,   $r'$, is defined (see eq. \ref{tortoiseKerr}). By also defining
\begin{equation}
    U(r)=R(r)(r^2+a^2)^{1/2},
\end{equation}
eq. \ref{radialKG} takes the form
\begin{equation}
    \frac{dU(r)}{dr'}+V(r)=0,
\end{equation}
with 
\begin{multline}\label{potential}
    V(r)=\omega^2+\frac{1}{(r^2+a^2)^2}\left[m^2a^2-2r_sarm\omega-(\omega^2a^2+\lambda)\Delta\right]\\+\frac{\Delta}{(r^2+a^2)^3}\left[\Delta+r(2r-r_s)\right]-\frac{3r^2\Delta^2}{(r^2+a^2)^4}.
\end{multline}
Due to the interest in analyzing the solutions of Klein-Gordon equation in the asymptotic regions, it suffices to study the behaviour of $V(r)$ as $r\to\infty$. In addition, because the hypersurface $H\cup\mathscr{I}^+$ is the one that all null geodesics encounter, it will also be useful to analyze the behaviour of the potential as $r\to r_+$. In these limits, one finds
\begin{equation}\label{eq5}
    V(r)=
     \begin{cases}
      \omega^2, & r\to \infty, \\
      (\omega-m\Omega)^2, & r\to r_+.
    \end{cases}
\end{equation}

Thus, considering the definition of the null coordinates in the Kerr spacetime, eqs. \ref{null1kerr} and \ref{null2kerr}, as well as the well behaved angular coordinate, $\phi'$, at $H$, given by eq. \ref{phiKerr}, one obtains scalar solutions of eq. \ref{KG2} with form
\begin{equation}
    \psi_{\mathscr{I}^-}=\frac{S_{\ell m}(\theta,\phi)}{r}(\underbrace{e^{-i\omega w}}_{\text{(I)}}+\underbrace{e^{-i\omega u}}_{\text{(II)}}),
\end{equation}
\begin{equation}
    \psi_{\mathscr{I}^+}=\frac{S_{\ell m}(\theta,\phi)}{r}(\underbrace{e^{-i\omega w}}_{\text{(III)}}+\underbrace{e^{-i\omega u}}_{\text{(IV)}}),
\end{equation}
\begin{equation}\label{waveH}
    \psi_{H}=\frac{e^{im\phi'}\Theta_{\ell m}(\theta)}{(r_+^2+a^2)^{1/2}}(\underbrace{e^{-i(\omega-m\Omega) w}}_{\text{(V)}}+\underbrace{e^{-i(\omega-m\Omega) u}}_{\text{(VI)}}).
\end{equation}
The interpretation of each of these solutions is straightforward when one considers that solutions that evolve with a constant value of outgoing coordinate, $w$, correspond to null geodesics of the incoming principal null congruence. Evidently, the same line of reasoning applies to solutions with a constant value of incoming null coordinate, $u$, that is, they correspond to null geodesics of the outgoing principal null congruence. Hence, the solutions (I), (III) and (V) are incoming at the corresponding regions, while the solutions (II), (IV) and (VI) are outgoing, as illustrated in fig. \ref{fig:qft10}. \begin{figure}[h]
\centering
\includegraphics[scale=1.5]{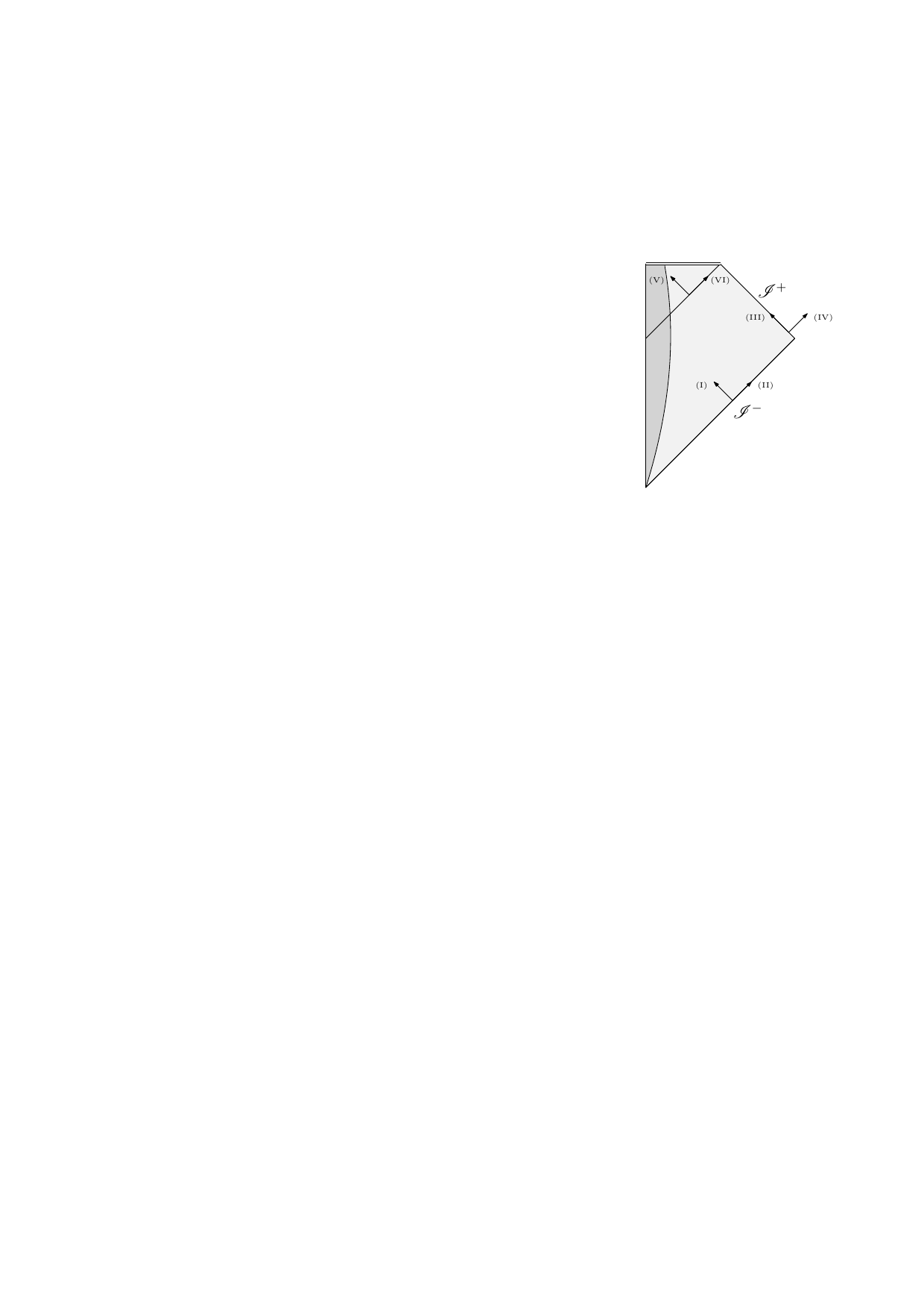} 
\caption{Solutions of the Klein-Gordon equation in the regions of interest in the conformal diagram of the analytically extended Schwarzschild spacetime.}
\caption*{Source: By the author.}
\label{fig:qft10}
\end{figure}
Note that such an illustration is made using the conformal diagram of a black hole that formed due to the collapse of a spherically symmetric distribution of energy, which is adequate to illustrate the qualitative causal structure on or outside the event horizon. 

Although all of these solutions are physical, not all of them will be relevant to the computation of the Bogolubov coefficients. More precisely, the observers at the future asymptotic region will only have access to the solutions (III) and (IV), but (III) is just the ``natural'' evolution of (I) along the asymptotic region, while (IV) will be the result of the ``natural'' evolution of (II) along the asymptotic region as well as the contributions from (I) as it is scattered  (as a consequence of the potential barrier given by eq. \ref{potential}) and ``reflected'' by the spacetime. More precisely, only the solutions (I), (IV) and (V) will be of significance to the computation of the coefficients, the latter being necessary only for the analysis of the phases of the solutions as they are ``reflected'' ``close'' to the formation of $H$. Therefore, normalized solutions \cite{Parker2009} of eq. \ref{KG2} in the asymptotic regions read
\begin{equation}\label{wave-}
    f_{\omega\ell m}=\frac{e^{-i\omega w}}{C_{\pi}r(2\omega)^{1/2}}S_{\ell m}(\theta,\phi),
\end{equation}
\begin{equation}\label{wave+}
    u_{\omega\ell m}=\frac{e^{-i\omega u}}{C_{\pi}r(2\omega)^{1/2}}S_{\ell m}(\theta,\phi),
\end{equation}
where $C_{\pi}$ is factor containing the $2\pi$ terms and now the explicit number of indices necessary to label the solutions have been shown, so that the index $\omega$ is now one-dimensional. Namely, an equation such as eq. \ref{7890} actually reads
\begin{equation}
    \hat{\psi}=\sum_{\ell, m}\int_{0}^{\infty} d\omega(\hat{a}_{\omega\ell m}f_{\omega\ell m}+\hat{a}^{\dag}_{\omega\ell m}f_{\omega\ell m}^*),
\end{equation}
in which the discrete indices $\ell$ and $m$ can be interpreted as the quantum numbers associated with angular momentum \cite{Parker2009}. Finally, as discussed above, a wave packet formed by superposition of the $f_{\omega\ell m}$ is incoming and localized at large $r$ at $ct\to-\infty$, while one formed by the superposition of $u_{\omega\ell m}$ is outgoing and localized at large $r$ at $ct\to\infty$. For simplicity of notation, in the developments that follow, the discrete quantum numbers will be suppressed. 

In order to evaluate the Bogolubov coefficients with the explicit forms of solutions of the Klein-Gordon equation, it is useful to consider the ``time reversed'' evolution of the solutions $u_{\omega}$ that reached $\mathscr{I}^+$ at large constant values of incoming null coordinate. When following such a solution backwards in time, part of it will be scattered by the potential barrier (i.e., due to $V(r)$) and reach $\mathscr{I}^-$ with a frequency similar to the one it had when it was ``emitted'' in $\mathscr{I}^+$. The other part of $u_{\omega}$ that was not scattered will pass through the center of the collapsing energy distribution and ``emerge'' as an incoming solution\footnote{This corresponds to the process of ``reflection'' of a null geodesic in the conformal diagram. } that will reach $\mathscr{I}^-$ having a large frequency $\omega'\gg\omega$. This significant blueshift is a consequence of the fact that when $u_{\omega}$ reaches the collapsing distribution, it will be in a much more dense configuration in comparison with how it was when it ``left'' it as an incoming wave. Indeed, this can be thought of as the ``time inversion'' of the gravitational redshift effect that arises when one receives signals emitted from a source closer to an energy distribution (see \S~\ref{schsec} for details on this for a spherically symmetric distribution of energy). Because of this large blueshift and the consideration of high frequency modes\footnote{In this approximation, one can justifiably neglect interactions of the wave solution with the energy distribution.}, the propagation of the solutions from $\mathscr{I}^+$ to $\mathscr{I}^-$ is made through null geodesics, as per the geometric optics approximation. Now, since the expansion of $u_{\omega}$ in terms of $\{f_{\omega'},f_{\omega'}^*\}$ is valid in the entire spacetime, one can use the ``traced backwards'' form of $u_{\omega}$ at $\mathscr{I}^-$ to evaluate the modules of the Bogolubov coefficients, $\alpha_{\omega\omega'}$ and $\beta_{\omega\omega'}$. However, due to the large blueshift at the event horizon, the expansion of $u_{\omega}$ in terms of $\{f_{\omega'},f_{\omega'}^*\}$ will be made by coefficients obeying $\omega'\gg\omega$. Consequently, the spectrum of effectively created particles is determined by $\beta_{\omega\omega'}$ (see eq. \ref{par1}) for arbitrarily large $\omega'$.  

The pertinent analysis is then made by viewing the solutions $u_{\omega}$ not as a function of their incoming null coordinate, $u_{\omega}(u)$, as given by eq. \ref{wave+}, but actually, as a function of the outgoing null coordinate, $u_{\omega}(w)$, that it has when it ``reaches'' $\mathscr{I}^-$. Therefore, one has to evaluate how the incoming coordinate of a pertinent solution at $\mathscr{I}^+$ (i.e., a solution (IV) in fig. \ref{fig:qft10}) relates to the outgoing coordinate of a pertinent solution at $\mathscr{I}^-$ (i.e., a solution (I) in fig. \ref{fig:qft10}). This function, $u(w)$, relates the incoming null coordinate of an outgoing null geodesic to the outgoing coordinate of an incoming null geodesic, both belonging to the respective principal null congruences. In contrast, the function $u_{\omega}(w)$ is the form of a solution at $\mathscr{I}^+$ when traced to $\mathscr{I}^-$ in terms of the outgoing null coordinate of the null geodesic that reaches $\mathscr{I}^-$. Using developments of \cite{Parker2009}, consider the following line of reasoning to evaluate $u(w)$.

Let $w_0$ be the outgoing null coordinate of the incoming null geodesic that generates the event horizon. This null geodesic and any other incoming one with outgoing null coordinate $w>w_0$ which is not scattered by the potential barrier will, evidently, never reach $\mathscr{I}^+$, since it will either generate or cross the event horizon. In this context, the analysis of interest is of how an incoming null geodesic with outgoing coordinate $w<w_0$ evolves into an outgoing null geodesic with incoming coordinate $u(w)$. To do so, consider an incoming null geodesic with $w'\gg w_0$, parametrized by affine parameter, $\lambda$. \begin{figure}[h]
  \centering
    \includegraphics[scale=1.4]{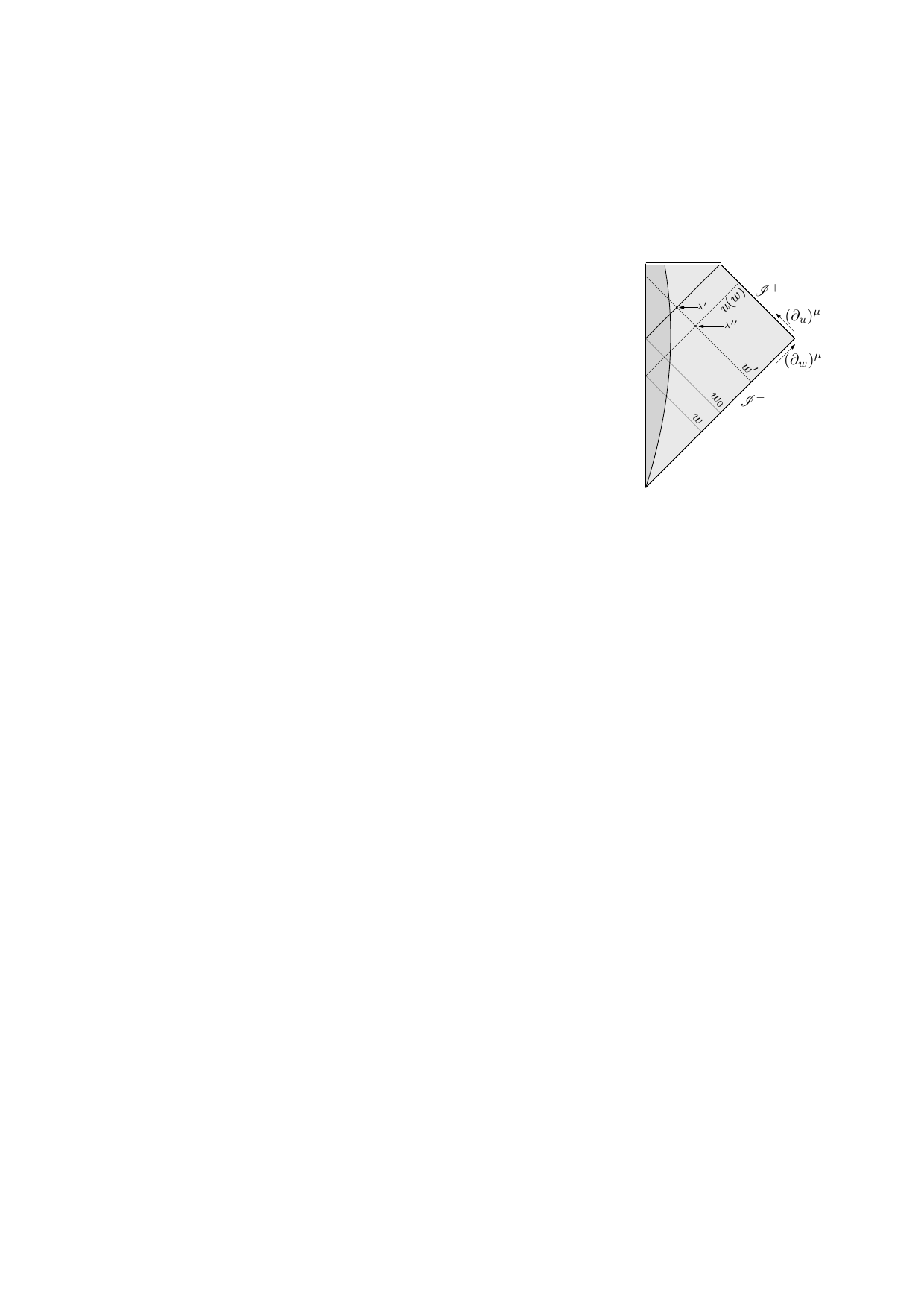}
  \caption{Incoming and outgoing null geodesics in the collapse of spherically symmetric energy distribution for the evaluation of $u(w)$.}\label{fig:qft45}
\caption*{Source: By the author.}
\end{figure} Because $w'\gg w_0$, the part of this solution that reaches the event horizon will do so a ``long time'' after its formation\footnote{In the sense that the outgoing parameter of the incoming null geodesic is numerically of orders of magnitude bigger than the one that generates the event horizon.}, and one can consider that at that point the black hole would have ``settled down'' to a Kerr black hole. Let $\lambda'$ and $\lambda''$ denote the affine parameter value of this curve as it intersects the outgoing null geodesic with incoming null coordinate $u(w_0)$ and $u(w)$, respectively, as illustrated in fig. \ref{fig:qft45}. Now, if one considers that
\begin{equation}\label{geo4}
    w_0-w\ll 1,
\end{equation}
then, physically, this means that the outgoing null geodesic with incoming coordinate $u(w)$ leaves the collapsing distribution ``just before'' the formation of the event horizon\footnote{In the sense that the incoming parameter of the outgoing null geodesic minus the one that generates the event horizon is orders of magnitude smaller than $-1$.}. Thus, the affine parameter, $\lambda$, can be chosen so that the evolution of the incoming parameter, $u(\lambda)$, of the incoming null geodesic with outgoing coordinate $w'$ as it approaches $H$ is approximated by eq. \ref{incoming}. 

Indeed, one can characterize the incoming coordinate, $u$, of an outgoing null geodesic by the value of the affine parameter, $\lambda''$, that the incoming null geodesic with outgoing coordinate $w'$ had when it crossed it. For example, an incoming null geodesic that obeys eq. \ref{geo4} will ``emerge'' as an outgoing null geodesic with incoming coordinate given by
\begin{equation}\label{incoming1}
    u(\lambda'')\approx-\frac{1}{\kappa}\ln\left(\frac{\lambda''}{C'}\right).
\end{equation}
Similarly, one evidently has that $u(\lambda')\to\infty$, as $\lambda'$ was adopted to vanish at the event horizon (see appendix \ref{congrukerr}). Therefore, the affine separation distance between the outgoing null geodesics with incoming coordinates $u(w_0)$ and $u(w)$, $\lambda''-\lambda'$, as measured by any incoming null geodesic that crosses the event horizon, is then simply $\lambda''$.

Note that the affine parameter distance between incoming null geodesics, with outgoing coordinates $w$ and $w_0$, can be chosen to be constant throughout the spacetime \cite{Parker2009}, i.e., when they ``emerge'' as outgoing null geodesics with incoming coordinates $u(w)$ and $w(w_0)$, they will still have the same affine parameter distance. Additionally, due to the asymptotic flat nature of the Kerr spacetime, the parameter $w$ itself is an affine parameter\footnote{This can be deduced by the same developments that led to eq. \ref{dev}.} at $\mathscr{I}^-$, i.e., along the outgoing null geodesics with incoming coordinates $u\to-\infty$ that generate $\mathscr{I}^-$. This translates to
\begin{equation}\label{parakerr}
    w_0-w=C\lambda'',
\end{equation}
where $C$ is a negative constant. Hence, one can apply eq. \ref{parakerr} in eq. \ref{incoming1}, resulting in
\begin{equation}\label{uw}
    u(w)\approx-\frac{1}{\kappa}\ln\left(\frac{w_0-w}{C'}\right),
\end{equation}
where all constants have been absorbed in the positive constant $C'$. This relation is what determines the spectrum of created particles due to the gravitational collapse of an energy distribution that results in a black hole. More precisely, the stationary state conjecture and the arbitrarily high blueshift (for the ``traced backwards'' solutions at $\mathscr{I}^+$) at the event horizon make it so that the Bogolubov coefficients are determined, predominantly, by solutions that obey eq. \ref{uw}. It is in this sense that the details of the gravitational collapse will pose negligible influence on the spectrum of created particles.

Lastly, in order to compute the exact final form of $u_{\omega}(w)$ at $\mathscr{I}^-$, it is also necessary to study how its phase changes as it passes close to $H$. Evidently, from eq. \ref{waveH}, the effective frequency at $H$ is $\omega-m\Omega$, so that when one traces the surfaces of constant phase of the solutions $u_{\omega}$ from $H$ to $\mathscr{I}^-$, they will reach $\mathscr{I}^-$ the same effective frequency. Thus, by tracing back $u_{\omega}$ from $\mathscr{I}^+$ to $\mathscr{I}^-$, one finds that it reaches $\mathscr{I}^-$ with the form
\begin{equation}\label{wave1}
    u_{\omega}(w)=
     \begin{cases}
      r^{-1}S(\theta,\phi)\left[2(\omega-m\Omega)\right]^{-1/2}\exp\left[{\frac{i(\omega-m\Omega)}{\kappa}\ln\left({\frac{w_0-w}{C'}}\right)}\right], & w<w_0,\\
      0, & w>w_0, 
    \end{cases}
\end{equation}
where $w_0$ is the outgoing null coordinate of the incoming null geodesic that generates the event horizon. The vanishing of the solution for $w<w_0$ is a consequence of the fact that scattered solutions will reach $\mathscr{I}^-$ with a relatively small frequency, and those that are not scattered will cross $H$, being irrelevant to created particles at $\mathscr{I}^+$.
  
From the explicit form of $u_{\omega}(w)$ on $\mathscr{I}^-$, eq. \ref{wave1}, one can evaluate the Bogolubov coefficients as follows. Consider the Fourier transform of $u_{\omega}(w)$ \cite{Butkov1973},
\begin{equation}\label{Fourier}
    \mathcal{U}_{\omega}(\omega')=\frac{1}{\sqrt{2\pi}}\int^{\infty}_{-\infty}u_{\omega}(w)e^{i\omega'w}dw,
\end{equation}
and its inverse,
\begin{equation}\label{Fourierinverse}
    u_{\omega}(w)=\frac{1}{\sqrt{2\pi}}\int^{\infty}_{-\infty}\mathcal{U}_{\omega}(\omega')e^{-i\omega'w}d\omega'.
\end{equation}
From eq. \ref{Fourierinverse}, it is possible to write
\begin{equation}\label{u123456}
    \begin{aligned}[b]
    u_{\omega}(w) & = \frac{1}{\sqrt{2\pi}}\left(\int^{\infty}_{0}\mathcal{U}_{\omega}(\omega')e^{-i\omega'w}d\omega'+\int^{0}_{-\infty}\mathcal{U}_{\omega}(\omega')e^{-i\omega'w}d\omega'\right)\\
    & = \frac{1}{\sqrt{2\pi}}\left(\int^{\infty}_{0}\mathcal{U}_{\omega}(\omega')e^{-i\omega'w}d\omega'+\int_{0}^{\infty}\mathcal{U}_{\omega}(-\omega')e^{i\omega'w}d\omega'\right),
    \end{aligned}
\end{equation}
where the variable change $\omega'\to-\omega'$ was performed in the second integral of the second line. Now, from eq. \ref{bogo} and the asymptotic form of the solutions  $f_{\omega}$, given by eq. \ref{wave-}, one also has
\begin{equation}\label{u12345}
    u_{\omega}=\int^{\infty}_{0} d\omega'\left(\frac{\alpha_{\omega\omega'}S(\theta,\phi)}{C_{\pi}r\sqrt{2\omega'}}e^{-i\omega'w}+\frac{\beta_{\omega\omega'}S(\theta,\phi)}{C_{\pi}r\sqrt{2\omega'}}e^{i\omega'w}\right).
\end{equation}
Comparison of eqs. \ref{u123456} and \ref{u12345} yields
\begin{equation}
    \alpha_{\omega\omega'}=\frac{C_{\pi}r\sqrt{4\pi\omega'}}{S(\theta,\phi)}\mathcal{U}_{\omega}(\omega'),
\end{equation}
\begin{equation}
    \beta_{\omega\omega'}=\frac{C_{\pi}r\sqrt{4\pi\omega'}}{S(\theta,\phi)}\mathcal{U}_{\omega}(-\omega').
\end{equation}
Finally, using the Fourier transform, eq. \ref{Fourier}, and the ``traced backwards'' form of $u_{\omega}$, eq. \ref{wave1}, one obtains
\begin{equation}\label{alpha1}
    \alpha_{\omega\omega'}=\int^{w_0}_{-\infty}\left(\frac{\omega'}{\omega-m\Omega}\right)^{1/2}e^{i\omega'w}\exp\left[{\frac{i(\omega-m\Omega)}{\kappa}\ln\left({\frac{w_0-w}{C'}}\right)}\right]dw,
\end{equation}
\begin{equation}\label{beta1}
    \beta_{\omega\omega'}=\int^{w_0}_{-\infty}\left(\frac{\omega'}{\omega-m\Omega}\right)^{1/2}e^{-i\omega'w}\exp\left[{\frac{i(\omega-m\Omega)}{\kappa}\ln\left({\frac{w_0-w}{C'}}\right)}\right]dw.
\end{equation}

The goal now is to work with these integrals in order to find a relation for the modules of the coefficients. Consider first the change of variable $s=w_0-w$ in eq. \ref{alpha1}, and $s=w-w_0$ in eq. \ref{beta1}, which yields
\begin{equation}\label{alpha2}
    \alpha_{\omega\omega'}=-\int^{0}_{\infty}\left(\frac{\omega'}{\omega-m\Omega}\right)^{1/2}e^{i\omega'w_0}e^{-i\omega's}\exp\left[{\frac{i(\omega-m\Omega)}{\kappa}\ln\left({\frac{s}{C'}}\right)}\right]ds,
\end{equation}
\begin{equation}\label{beta2}
    \beta_{\omega\omega'}=\int^{0}_{-\infty}\left(\frac{\omega'}{\omega-m\Omega}\right)^{1/2}e^{-i\omega'w_0}e^{-i\omega's}\exp\left[{\frac{i(\omega-m\Omega)}{\kappa}\ln\left({-\frac{s}{C'}}\right)}\right]ds.
\end{equation}
To simplify these integrals, it is useful to make use of complex analysis. Since their integrands are analytic and proportional to $e^{-i\omega's}$ with $\omega'>0$, one can relate the integral along the real axis to one along the imaginary axis by studying a closed contour in the lower half of the circle, $|z|=R$. The adequate choices of contours to each integral are illustrated in figs. \ref{fig:contour1} and \ref{fig:contour2}. \begin{figure}[h]
  \begin{subfigure}[b]{0.5\textwidth}
  \centering
    \includegraphics[scale=1.2]{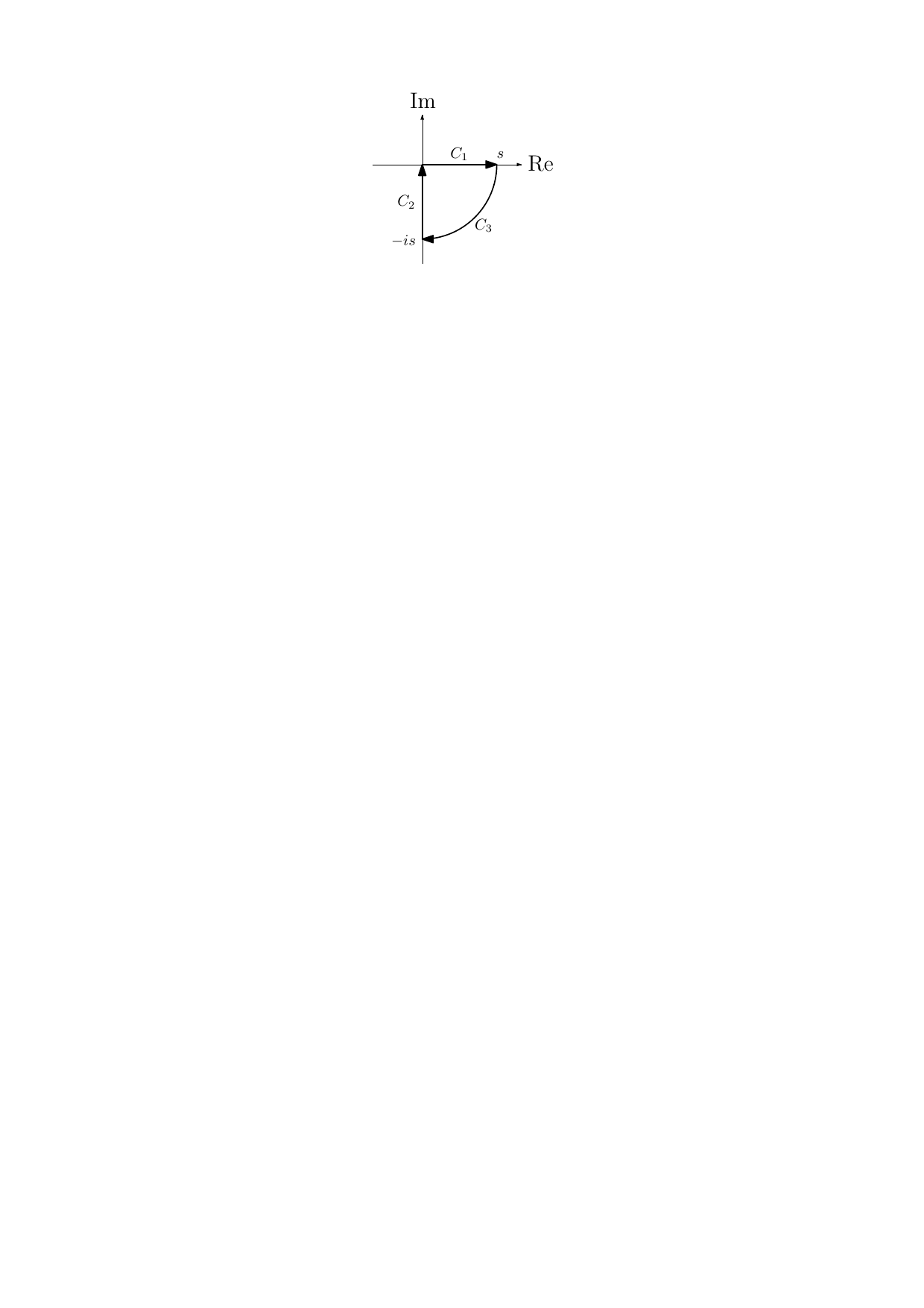}
    \caption{Contour for $\alpha_{\omega\omega'}$.}
    \label{fig:contour1}
  \end{subfigure}
  \hfill
  \begin{subfigure}[b]{0.5\textwidth}
  \centering
    \includegraphics[scale=1.2]{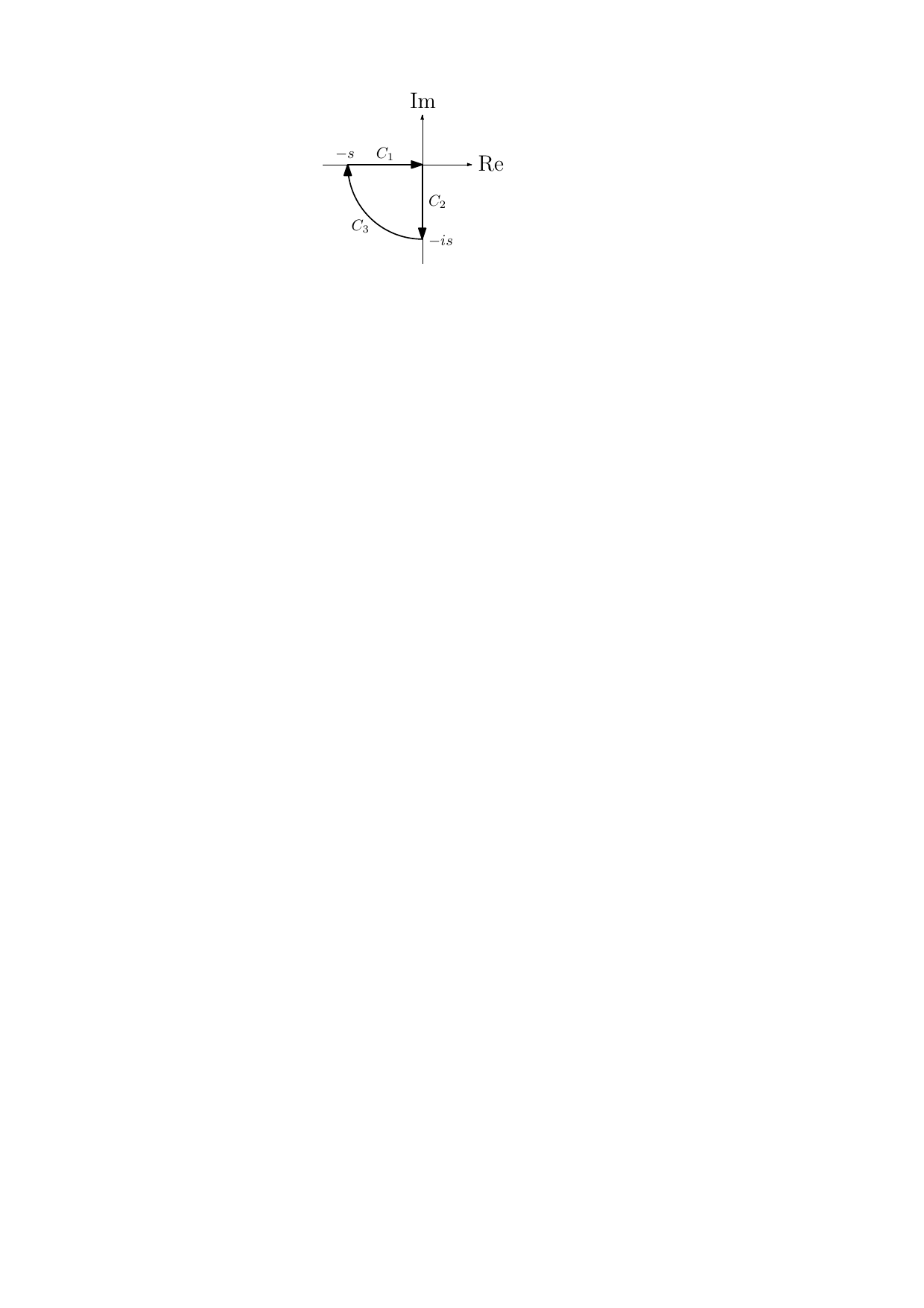}
    \caption{Contour for $\beta_{\omega\omega'}$.}
    \label{fig:contour2}
  \end{subfigure}
  \caption{Contour of integration on the complex plane for evaluation of $\alpha_{\omega\omega'}$ and $\beta_{\omega\omega'}$, with $s\to\infty$.}
\caption*{Source: By the author.}
\end{figure} In this manner, residue theorem \cite{Churchill1974} can then be used to show that the integral in eqs. \ref{alpha2} and \ref{beta2} over the respective closed contours for $\alpha_{\omega\omega'}$ and $\beta_{\omega\omega'}$ vanishes. Moreover, Jordan's lemma \cite{Churchill1974} can be used to show that the integral along the contour $C_3$ also vanishes. Hence, one finds that it is possible to rewrite both integrals with limits along the imaginary axis,
\begin{equation}
    \alpha_{\omega\omega'}: \int^{0}_{\infty}=\int^{0}_{-i\infty},
\end{equation}
\begin{equation}
    \beta_{\omega\omega'}: \int^{0}_{-\infty}=\int^{0}_{-i\infty}.
\end{equation}

These changes in limits of integration, together with another variable substitution, $s= is'$, for both integrals, yields
\begin{equation}
    \alpha_{\omega\omega'}=-i\int^{0}_{-\infty}\left(\frac{\omega'}{\omega-m\Omega}\right)^{1/2}e^{i\omega'w_0}e^{\omega's'}\exp\left[{\frac{i(\omega-m\Omega)}{\kappa}\ln\left({\frac{is'}{C'}}\right)}\right]ds',
\end{equation}
\begin{equation}
    \beta_{\omega\omega'}=i\int^{0}_{-\infty}\left(\frac{\omega'}{\omega-m\Omega}\right)^{1/2}e^{-i\omega'w_0}e^{\omega's'}\exp\left[{\frac{i(\omega-m\Omega)}{\kappa}\ln\left({-\frac{is'}{C'}}\right)}\right]ds'.
\end{equation}
By restricting the action of the natural logarithm function for the region $s'<0$ so that it can be a single valued function,
\begin{equation}
    \ln{\left(\pm\frac{is'}{C'}\right)}=\mp i\left(\frac{\pi}{2}\right)+\ln{\left(\frac{|s'|}{C'}\right)},
\end{equation}
one finds
\begin{equation}\label{alpha3}
    \alpha_{\omega\omega'}=-i\frac{e^{i\omega'w_0}}{\sqrt{2\pi}}\exp\left[{\frac{\pi(\omega-m\Omega)}{2\kappa}}\right]\mathcal{I},
\end{equation}
\begin{equation}\label{beta3}
    \beta_{\omega\omega'}=i\frac{e^{-i\omega'w_0}}{\sqrt{2\pi}}\exp\left[{-\frac{\pi(\omega-m\Omega)}{2\kappa}}\right]\mathcal{I},
\end{equation}
where
\begin{equation}
    \mathcal{I}=\int^{0}_{-\infty}\left(\frac{\omega'}{\omega-m\Omega}\right)^{1/2}e^{\omega's'}\exp{\left[{\frac{i(\omega-m\Omega)}{\kappa}\ln\left({\frac{|s'|}{C'}}\right)}\right]}ds'.
\end{equation}

Eqs. \ref{alpha3} and \ref{beta3} can be used to write the desired relation between the modules of the Bogolubov coefficients,
\begin{equation}
    |\alpha_{\omega\omega'}|^2=\exp\left[{-\frac{2\pi(\omega-m\Omega)}{\kappa}}\right]|\beta_{\omega\omega'}|^2.
\end{equation}
In this manner, one can use eq. \ref{coeff1} with $\omega=\omega'$, which yields a relation for the right hand side of eq. \ref{par1}, resulting in 
\begin{equation}\label{delta12}
    _{{\mathscr{I}^-}}\langle0|N_{\omega}|0\rangle_{\mathscr{I}^-}=\left\{\exp\left[{-\frac{2\pi(\omega-m\Omega)}{\kappa}}\right]-1\right\}^{-1}\delta(0).
\end{equation}
The infinity that rises in eq. \ref{delta12} due to the Dirac delta term $\delta(0)$ \cite{Butkov1973} is a physical one. Evidently, since one is analyzing the expected number of particles at $\mathscr{I}^+$, a steady flux from a stationary configuration will result in an infinite number of particles\footnote{Neglecting change in mass of the black hole. This issue will be discussed in detail in \S\;\ref{evaporation}.}. Hence, the quantity of interest is the expected number of particles per unit time. An \textit{heuristic} way of evaluating such quantity from eq. \ref{delta12} is to make use of the integral representation of the Dirac delta and isolate its infinity in a ``time variable''. More precisely, consider the identity
\begin{equation}
    \delta(x-a)=\lim_{t\to\infty}\frac{1}{2\pi}\int_{-\frac{t}{2}}^{\frac{t}{2}}e^{i(x-a)t'}dt',
\end{equation}
which is just a more convenient way of writing the integral representation. It can then be used to isolate the infinity in a ``time variable'', in the sense that 
\begin{equation}
    \delta(0)=\lim_{t\to\infty}\frac{t}{2\pi},
\end{equation}
which allows one to obtain
\begin{equation}\label{flux}
    \frac{d}{dt}\left(_{{\mathscr{I}^-}}\langle0|N_{\omega}|0\rangle_{\mathscr{I}^-}\right)=\frac{1}{2\pi}\left\{\exp\left[{-\frac{2\pi (\omega-m\Omega)}{\kappa}}\right]-1\right\}^{-1}.
\end{equation}

A more rigorous way of obtaining this result is by considering wave packets constructed from $u_{\omega}$ or by confining the solutions to a finite volume (see \cite{Parker2009} for details on this procedure). As discussed in \S~\ref{qft}, these are mathematical procedures that reduce the delta in eq. \ref{coeff1} to a Kronecker one. Physically, this can be interpreted as a way to localize the modes so that the infinity arising from the steady flux is removed. Additionally, one should include a fraction term to the distribution presented in eq. \ref{flux} to account for the scattering of waves discussed earlier, as a consequence of the potential barrier\footnote{The exact form of this term can be evaluated by relaxing the approximation of geometric optics.}. This term, denoted by $\Gamma_{\ell m}(\omega)$, is related to the part of the waves that, when traced back from $\mathscr{I}^+$ would reach the collapsing body ``just before'' the formation of the event horizon, so that one now has
\begin{equation}\label{flux1}
    \frac{d}{dt}\left(_{{\mathscr{I}^-}}\langle0|N_{\omega\ell m}|0\rangle_{\mathscr{I}^-}\right)=\frac{\Gamma_{\ell m}(\omega)}{2\pi}\left\{\exp\left[{-\frac{2\pi (\omega-m\Omega)}{\kappa}}\right]-1\right\}^{-1}.
\end{equation}
As the black hole evolves to a stationary state through the emission of gravitational radiation, this fraction becomes closer to the one that dictates the amount of waves that would cross the white hole horizon (see \S~\ref{bhsec}) if one were to consider the analytical extension of the spacetime. Because the scattering properties from $\mathscr{I}^+$ to $\mathscr{I}^-$ are symmetrical in such extension, for $u\to\infty$, one may interpret $\Gamma_{\ell m}(\omega)$ as the probability that an incoming wave emitted in $\mathscr{I}^-$ will cross the black hole event horizon. Therefore, eq. \ref{flux1} can be interpreted as stating that the expected number of particles perceived at $u\to\infty$ in $\mathscr{I}^+$ between $\omega$ and $\omega+d\omega$, and with angular momentum quantum numbers $\ell$ and $m$ per unit time, is given by a black body spectrum \cite{Swendsen2020} with temperature 
\begin{equation}\label{temperature}
    T=\frac{\hbar \kappa}{2\pi ck_B}.
\end{equation}
For a Schwarzschild black hole, considering eq. \ref{ksch}, one has
\begin{equation}
    T\approx 6.18\;10^{-8}\left(\frac{M}{M_\odot}\right)^{-1}\;\text{K},
\end{equation}
i.e., the temperature of a Schwarzschild black hole is inversely proportional to its mass. 

In essence, these results translate to the conclusion that a stationary black hole will behave as a gray body of absorptivity $\Gamma_{\ell m}(\omega)$ and temperature proportional to its surface gravity. This effective particle creation effect by black holes is also referred to as the \textit{Hawking effect}, while the approximately\footnote{An affirmation of an \textit{exactly} thermal character would only be justifiable in a development taking account the effects of emitted particles on the metric, \textit{physical optics} (i.e., exact wave propagation), as well as trans-Planckian physics.} thermal radiation predicted by it is referred to as \textit{Hawking radiation}. Finally, the term $-m\Omega$ accompanying the frequency in the black body spectrum in eq. \ref{flux1} can be interpreted as a ``chemical potential'' term, which physically means that for a given mode, the effective emission of particles with angular momentum $m$ is more likely than that of particles with angular momentum $-m$. For a Kerr-Newman black hole, there will be an extra term corresponding to the electric charge contribution, such that the ``chemical potential'' term takes the form $-m\Omega-e\Phi$ (see the remarks below eq. \ref{mass3}). The same interpretation follows for its presence, that is, the effective emission of particles will also favor those with the same sign of electric charge as that of the black hole. Consequently, the spectrum of emitted particles from a black hole tends to carry away angular momentum and electric charge.

\section{Black holes and thermodynamics}\label{thermo}

In this section, we will discuss how the classical properties derived in ch. \ref{chapter2} can be interpreted in light of the semiclassical particle creation effect. Particularly, the discussion about the nature of the classical properties of black holes arose even before the derivation of the effective particle creation effect, where it was suggested that black holes must possess a physical entropy in order to preserve the second law of thermodynamics \cite{Bekenstein1972, Bekenstein1973, Bekenstein1974}. In fact, the impossibility of reducing the surface gravity of a black hole to zero (see theorem \ref{thekappa}) was first assumed \cite{Bardeen1973} as a way to make a complete correspondence between the classical properties of black holes, which are rigorous results of differential geometry \cite{Bekenstein1980,Wald2001}, and the laws of thermodynamics, which are approximations of macroscopic properties of a system. To start this discussion, we first give a brief review of the concepts necessary to the state the laws of thermodynamics.

By \textit{laws of thermodynamics}, we refer to the postulate construction presented in \cite{Swendsen2020} which leads to the relations and properties of entropy, temperature and energy of a composite system (which will be referred to simply as a system), i.e., a system composed of subsystems with some constraint between them. In such a construction, the entropy is viewed as a function of \textit{extensive parameters}, i.e., the parameters that can be used to fully characterize a time independent state of a system, known as an \textit{equilibrium state}. The entropy \textit{may} be a homogeneous first order function of the extensive parameters and additive over systems, both of which are approximations of the interactions and properties of the system. One can relate the variation of the entropy with the variation of energy, $U$, volume, $V$, and mole number, $N$, of a system by 
\begin{equation}\label{firstlaw}
    \delta U=T\delta S-P\delta V+\mu\delta N,
\end{equation}
commonly referred to as the \textit{first law}, which is merely a consequence of energy conservation. Here, $T$ is the temperature, $P$ is the pressure, $\mu$ is the chemical potential, and the $\delta$ is representative of variations of an equilibrium state to a neighboring one, which is made by virtue of a \textit{quasi-static process}. From this relationship, the \textit{zeroth law} can be stated following the net flow of energy between constituents of a system. In other words, a system in equilibrium has constant temperature. The \textit{second law} is related to the time asymmetry of ``natural'' physical processes, which can be related to the ``natural'' flow of energy between two systems after a constraint is released, corresponding to the mathematical relation
\begin{equation}
    \delta S\geq0.
\end{equation}
The \textit{third law} states that the entropy of a system goes to a constant as $T\to0$. 

Although these general, macroscopic properties (i.e., properties of systems containing a large number of particles such that measurements have negligible statistical fluctuations) are in excellent agreement with experiments, they are not useful for the derivation of \textit{intrinsic} properties of a system. That is, they do not describe how unique microscopic properties amount to these macroscopic relations. A proposal to give this description is through postulating that the entropy is a multiple of the logarithm of the probability of the macroscopic state of a system. Consequently, the entropy of a system can be interpreted as a measure of its degrees of freedom, i.e., the number of quantum states accessible to the energy distribution that describes it. Following this postulate and using \textit{quantum} and \textit{classical statistical mechanics}, one can provide a justification for the postulates and the laws of thermodynamics. Because of this proposed definition of entropy, the second law of thermodynamics can then be interpreted as the natural evolution of a system from a less to a more probable macroscopic state. Lastly, it should be noted that the fact that $T=0$ is unattainable is \textit{not} a consequence of the third law \cite{Swendsen2020}. In reality, both quantum and classical statistical mechanics state that in order to achieve absolute zero, it would be necessary an infinite number of processes.

Regardless of the nature of the interactions present in a system, one expects that the laws of thermodynamics are applicable to it. In order to investigate how they may apply to black holes, it is useful to consider how a black hole is perceived by an observer outside of it. Such an observer would make measurements of the energy distribution outside the black hole and would assign an entropy to it. However, due to the uniqueness theorems, the information about the energy distribution inside the black hole is limited to the three parameters that describe it, $(r_s,a,e)$, and it is not clear how such an observer would go about assigning an entropy to the black hole, if it even has a nonvanishing one. Suppose, first, a scenario in which the black hole has a vanishing entropy. This would clearly be problematic when one considers the second law of thermodynamics, as it would be possible to reduce the entropy of the universe simply by allowing energy to cross the event horizon. Thus, in order to preserve the second law of thermodynamics, black holes must have a nonvanishing physical entropy.

This being the case, one then must have that the entropy assigned to a stationary black hole must be dependent only on the parameters that characterize its ``equilibrium state'', much like the dependence of the ordinary entropy on the extensive parameters of a system in equilibrium. Now, consider the relation in first order of the parameters of a stationary black hole when it is perturbed, eq. \ref{first}. Notice that the first law of thermodynamics has a similar form, where one has a $TdS$ term followed by arbitrary ``work'' terms. The association of a possible physical temperature of a black hole with a multiple of $\kappa$, as given by the Hawking effect (see eq. \ref{temperature}), then leads to one to the association of the entropy with a multiple of the area of the event horizon, namely, 
\begin{equation}\label{entropy}
    S_B=\frac{k_B A}{4\ell_p^2}=\frac{c^3k_BA}{4\hbar G},
\end{equation}
which for a Schwarzschild black hole, reads
\begin{equation}
    S_B\approx 3.61\;10^{53}\left(\frac{M}{M_\odot}\right)^2\;\frac{\text{J}}{\text{K}}.
\end{equation}
Intriguingly, in order for eq. \ref{first} to have the exact same form of eq. \ref{firstlaw} given eq. \ref{temperature}, $S_B$ has to be defined with the Planck length. Evidently, such an association gains a deeper physical meaning when one considers the Hawking effect, as otherwise a simple prescription of this assignment of entropy would have no physical meaning, other than an attempt to preserve the second law of thermodynamics. That is, because the black hole is effectively emitting radiation with an approximate black body spectrum, one can arguably affirm that Hawking effect is a justification for a \textit{physical} temperature of a black hole. 

In light of this, if one considers that the event horizon area is in fact related to the \textit{physical} entropy of a black hole, then the entropy of a system containing a black hole is given by the \textit{generalized entropy}
\begin{equation}
    S=S'+S_B,
\end{equation}
where $S'$ is the entropy of the energy outside of the black hole. Thus, the second law of thermodynamics can be restated as the \textit{generalized second law} \cite{Bekenstein1972},
\begin{equation}
    \delta S\geq0.
\end{equation}
Indeed, when semiclassical analysis is taken into account, purely classical violations of the generalized second law do not hold \cite{Wald1984}, so that the full consideration of the Hawking effect leads one to believe that the association of $\kappa$ with the temperature of a black hole is a physical one (see, e.g., \cite{Matsas2005} for a detailed discussion), rather than just a mathematical analogy. It is in this sense that one may conclude that the classical properties derived in ch. \ref{chapter2} are merely the laws of thermodynamics applied to a system containing a black hole.

Still, in the context of quantum field theory in curved spacetime, the assumptions made in order to derive the classical properties may not hold. A clear example of this is theorem \ref{area}, which relies on the condition that $R_{\mu\nu}\ell^{\mu}\ell^{\nu}\geq 0$ for all null $\ell^{\mu}$. As discussed, this condition will be satisfied if Einstein's equation and null energy condition hold. More precisely, due to Raychaudhuri's equation, it is known that one can interpret such condition as the attractive nature of gravity, which also follows from the weak and strong energy conditions. But, in the semiclassical framework, it is straightforward to find examples where these conditions are violated, e.g., the \textit{Casimir effect} \cite{Casimir1948 ,Parker2009} (still, ``averaged'' energy conditions can be satisfied in such cases \cite{Fewster2012}). Although this is a clear indicator that theorem \ref{area} is no longer valid in the semiclassical depiction of a black hole, this is in agreement with the expectation that a black hole should lose energy due to the Hawking effect, reducing its area, and thus, its entropy (more details on this will be given in \S~\ref{evaporation}). Evidently, this is also in agreement with the generalized second law.

Concerning the other properties, first note that the stationary state conjecture is also physically justifiable by analogous behavior of thermodynamic systems, in which arbitrary states are expected to ``settle down'' to a final, time independent, equilibrium state, being described uniquely by extensive parameters. Hence, the constancy of $\kappa$ over the event horizon of a stationary black hole can be argued to be analogous, or some perspectives might even say, equivalent, to the constancy of $T$ in the constituents of a system in equilibrium. Recall that this property may be derived by using the Killing horizon property of $H$ and assuming the validity of Einstein's equation and the dominant energy condition (which is associated with the speed limit of observers and signals), or from purely geometrical arguments following from the $t$-$\phi$ orthogonality property. As discussed, one can also interpret the relation of the variations of the parameters of a black hole simply as the first law of thermodynamics applied to it (in fact, it can be seen merely as an ``energy conservation law''). In particular, note that the derivation of eq. \ref{first} followed from the constancy of $\kappa$ over $H$, the Killing horizon nature of the event horizon, variations over neighboring stationary solutions and the asymptotic properties of the Kerr spacetime. In other words, such a result can also be derived through purely geometrical arguments. Notwithstanding, it has also been shown \cite{Iyer1994} that a more general form of eq. \ref{first} holds in any metric theory of gravity (see ch. \ref{Introduction}) whose field equations are derived from a diffeomorphism covariant Lagrangian \cite{Wald1993, Rácz1992, Iyer1995}. In this framework, the variation relation (to first order) is seen as a direct consequence of the variation identity of the Noether current, and the black hole entropy is seen as the Noether charge arising from the symmetry represented by the diffeomorphism. Similarly, it is straightforward to see that black holes also obey that their area goes to a constant as $\kappa\to0$ (see eqs. \ref{areakerr} and \ref{kappakerr}). Furthermore, following considerations of the cosmic censor conjecture, one expects that it would be impossible to reduce the surface gravity of a black hole to zero in a finite amount of time, much like statistical mechanics states that it is impossible to achieve $T=0$ by a finite number of processes. 

Because of the geometrical arguments and conjectures associated with the derivations of classical properties of black holes, it is evident that they do not necessarily depend on the specific content of the field equations (although they may also be derived from them). This generality points to a possible connection between gravitation and the description of heat, which is, in fact, not exclusive to black holes. Other thermodynamic properties of null hypersurfaces, not necessarily event horizons, have also been developed in the last decades (see, e.g., \cite{Gibbons1977,Hawking1995, Jacobson2003, Padmanabhan2010,Guedens2012}). Moreover, under certain hypotheses, one can also derive Einstein's equation from the thermodynamic relation of heat, temperature and entropy \cite{Jacobson1995}. Although these developments are promising, no underlying explanation for such connections has been found, and there is still an important question regarding the microscopic derivation of these properties and laws. For instance, if the geometrical quantities of a black hole are in fact associated with its thermodynamic properties, one can only guess as to what, and where, are the degrees of freedom responsible for them.

\section{Energy-momentum expectation values and Hadamard states}\label{hadmard}

To conclude this chapter, we will discuss an important implication of entanglement for quantum field theory, which mainly follows from analysis of a construction of a suitable ``energy-momentum expectation value''. As it is well known, in classical theories, the mathematical object that contains all the information about an energy distribution is the energy momentum tensor, $T_{\mu\nu}$. Of course, this is precisely the object that acts as a source for the gravitational interaction, as postulated by Einstein's equation, eq. \ref{eq1}. Consequently, in the semiclassical framework, one would expect that some notion of ``semiclassical Einstein's equation'' \cite{Wald1994}, i.e.,
\begin{equation}\label{semiefe}
    R_{\mu\nu}-\frac{1}{2}Rg_{\mu\nu}=\frac{8\pi G}{c^4}\langle \psi|\hat{T}_{\mu\nu}|\psi\rangle,
\end{equation}
to be an adequate description of the effect of the quantum field on the spacetime metric for an arbitrary state, $|\psi\rangle$. However, it is far from clear how one would go about constructing the operator $\hat{T}_{\mu\nu}$ for each $a\in M$, since the most ``natural'' approach of doing so by using the corresponding classical tensor would yield ill-defined operations. Namely, this is a consequence of the fact that in the process of quantization, (see eqs. \ref{equal1} and \ref{equal2}), the quantum field operator is actually defined as an \textit{operator-valued distribution} \cite{Wald1994,Wald1984}.

For instance, if one tries to construct $\hat{T}_{\mu\nu}$ for the classical Klein-Gordon field from its energy momentum tensor\footnote{The energy-momentum tensor of classical fields can be evaluated from the variation of their action with respect to the spacetime metric \cite{Parker2009}.},
\begin{equation}
    T_{\mu\nu}=\nabla_{\mu}\psi\nabla_{\nu}\psi-\frac{1}{2}g_{\mu\nu}(\nabla_{\alpha}\psi\nabla^{\alpha}\psi+\frac{m^2c^2}{\hbar^2}\psi^2),
\end{equation}
it is clear that this would not yield a well defined operator as a consequence of the nonlinear operations on $\hat{\psi}$. Focusing on the terms $\hat{\psi}^2$, a way to deal with such a complication is to first consider the well defined \textit{bi-distribution} $\hat{\psi}(a)\hat{\psi}(a')$ and then take the limit to the corresponding event, i.e., $a\to a'$. However, for any state with finitely many particles, the expectation value $\langle \psi|[\hat{\psi}(a)]^2|\psi\rangle$ would diverge. In essence, this can be seen by substituting the formal expression for $\hat{\psi}$, eq. \ref{exp1234}, in
\begin{equation}
   \lim_{a\to a'}\hat{\psi}(a)\hat{\psi}(a'),
\end{equation}
which yields an infinite sum of terms $\hat{a}_{p}\hat{a}_{p}^{\dag}$ evaluated at the event $a$.

In Minkowski spacetime, this divergence can be traced back to the interpretation that this calculation is merely the sum of the zero-point energies of the infinite number of harmonic oscillators that give rise to the field. Evidently, by interpreting the divergence as such, one can then make use of a ``vacuum energy subtraction'' to define a smooth function of $a$ and $a'$ by
\begin{equation}\label{eqsubs}
    F(a,a')=\langle \psi|\hat{\psi}(a)\hat{\psi}(a')|\psi\rangle-\langle0|\hat{\psi}(a)\hat{\psi}(a')|0\rangle,
\end{equation}
 where $|0\rangle$ denotes the Minkowski static vacuum. Hence, one can then define 
\begin{equation}
    \langle \psi|[\hat{\psi}(a)]^2|\psi\rangle=\lim_{a\to a'}F(a,a'),
\end{equation}
and using a similar logic, one may construct an adequate notion of $\langle \psi|\hat{T}_{\mu\nu}|\psi\rangle$ in Minkowski spacetime. However, because of the absence of a preferred physical definition of a vacuum state in an arbitrary spacetime, the notion of a ``vacuum energy subtraction'' loses its ``natural'' meaning. Nevertheless, it is possible to construct an axiomatic approach to establish the uniqueness of $\langle \psi|\hat{T}_{\mu\nu}|\psi\rangle$ up to an addition of a conserved local curvature term, so that one can single-out a condition for what classes of states one would deem physical, i.e., those that would lead to a physically adequate definition of $\langle \psi|\hat{T}_{\mu\nu}|\psi\rangle$ in any spacetime.

This physical condition can be defined by requiring that the subtraction in eq. \ref{eqsubs} is made not by some notion of a vacuum state, but rather, a \textit{locally constructed bi-distribution} with similar characteristics as $\langle0|\hat{\psi}(a)\hat{\psi}(a')|0\rangle$. Namely, this locally constructed bi-distribution should have a divergence of leading order as to mimic the singular character of $\langle0|\hat{\psi}(a)\hat{\psi}(a')|0\rangle$, such as those that are proportional, in leading order, to the inverse squared geodesic distance between the events $a$ and $a'$. Thus, the \textit{Hadamard ansatz} for this bi-distribution, $H(a,a')$, can be written as
\begin{equation}\label{hard}
    H(a,a')=\frac{U(a,a')}{(2\pi)^2\sigma(a,a')}+V(a,a')\ln{\sigma}+W(a,a'),
\end{equation}
where $U(a,a')$, $V(a,a')$ and $W(a,a')$ are smooth functions that equal one when $a=a'$, and $\sigma(x,x')$ is the squared geodesic distance between the unique geodesic connecting $a$ and $a'$. Note that the existence of such a unique geodesic is a consequence of the result that every event in a spacetime has a convex normal neighborhood (see \S~\ref{causal}), while the particular characteristics of the smooth functions in eq. \ref{hard} can be found by requiring that $H(a,a')$ obeys the Klein-Gordon equation, eq. \ref{kleingordon}. More details on the construction of this bi-distribution, such as the subtleties that come into play when the geodesic connection $a$ and $a'$ is null, can be found in \cite{Wald1994}. 

Therefore, given the Hadamard ansatz of eq. \ref{hard}, one can define
\begin{equation}\label{32}
    F(a,a')=\langle \psi|\hat{\psi}(a)\hat{\psi}(a')|\psi\rangle-H(a,a'),
\end{equation}
so that one obtains a unique, physical prescription to define $\langle \psi|[\hat{\psi}(a)]^2|\psi\rangle$ (and thus, $\langle \psi|\hat{T}_{\mu\nu}|\psi\rangle$), provided that the state be one such that $F(a,a')$ is a smooth (or sufficiently differentiable) function of $a$ and $a'$. The condition of order of differentiability can be traced back to the physical assumption that the energy momentum tensor be locally conserved \cite{Wald1994}. In essence, these ideas for the definition of $\langle \psi|\hat{T}_{\mu\nu}|\psi\rangle$ on arbitrary spacetimes can be interpreted as a requirement that the ``short distance singularity structure'' of the Hadamard ansatz (corresponding to the first term on the right hand side of eq. \ref{hard}) to be similar to that of $\langle0|\hat{\psi}(a)\hat{\psi}(a')|0\rangle$. Consequently, one can argue that it is physically reasonable to require that, for a state to be considered physically acceptable, $\langle \psi|\hat{\psi}(a)\hat{\psi}(a')|\psi\rangle$ must exist and have a ``short distance singularity structure'' of the Hadamard ansatz, eq. \ref{hard}. States satisfying this condition, known as \textit{Hadamard condition}, are referred to as \textit{Hadamard states}. 

The above discussion can be summarized in the statement that $\langle \psi|\hat{T}_{\mu\nu}|\psi\rangle$ is defined up to a curvature term and is non singular for all Hadamard states. Conversely, $\langle \psi|\hat{T}_{\mu\nu}|\psi\rangle$ should be singular for any non-Hadamard state. Furthermore, one can verify that for a massive Klein-Gordon field in any static, globally hyperbolic spacetime, the static vacuum state is a Hadamard state. It then follows that all states with only finitely many quanta in each mode of the field satisfy the Hadamard condition. Thus, there is a wide class of Hadamard states in globally hyperbolic spacetimes. Also, it can be shown that the Hadamard condition is preserved under dynamical evolution. More precisely, if a state satisfies the Hadamard condition in a neighborhood of any Cauchy hypersurface, then it satisfies the Hadamard condition throughout spacetime. For details and proofs of these properties, see \cite{Wald1994} and references therein.

The importance of Hadamard states for quantum field theory can be analyzed in the following manner \cite{Unruh2017}. Let $(M,g_{\mu\nu})$ be a globally hyperbolic spacetime and $\Sigma$ denote a Cauchy hypersurface. The full system described by the quantum field is then given by the quantum field observables in a neighborhood of $\Sigma$, so that one can use dynamical evolution laws to obtain the states of the field throughout spacetime. Now, consider a division $\Sigma$ into two subsystems, where each one is given by the quantum field observables in the disjoint open regions $\Sigma_1\subset \Sigma$ and $\Sigma_2\subset \Sigma$ that have common boundary, $S$, such that $\Sigma_1\cup\Sigma_2\cup S=\Sigma$. Let $V_1$ and $V_2$ denote the globally hyperbolic regions with Cauchy hypersurfaces $\Sigma_1$ and $\Sigma_2$, respectively. Consequently, the subsystem $\alpha$ will consist of the field observables in the globally hyperbolic region $V_{\alpha}$, with $\alpha=1,2$, as illustrated in fig. \ref{fig:hada1}. It is then straightforward to see that for any Hadamard state, the two subsystems will be entangled.

\begin{figure}[h]
\centering
\includegraphics[scale=1.5]{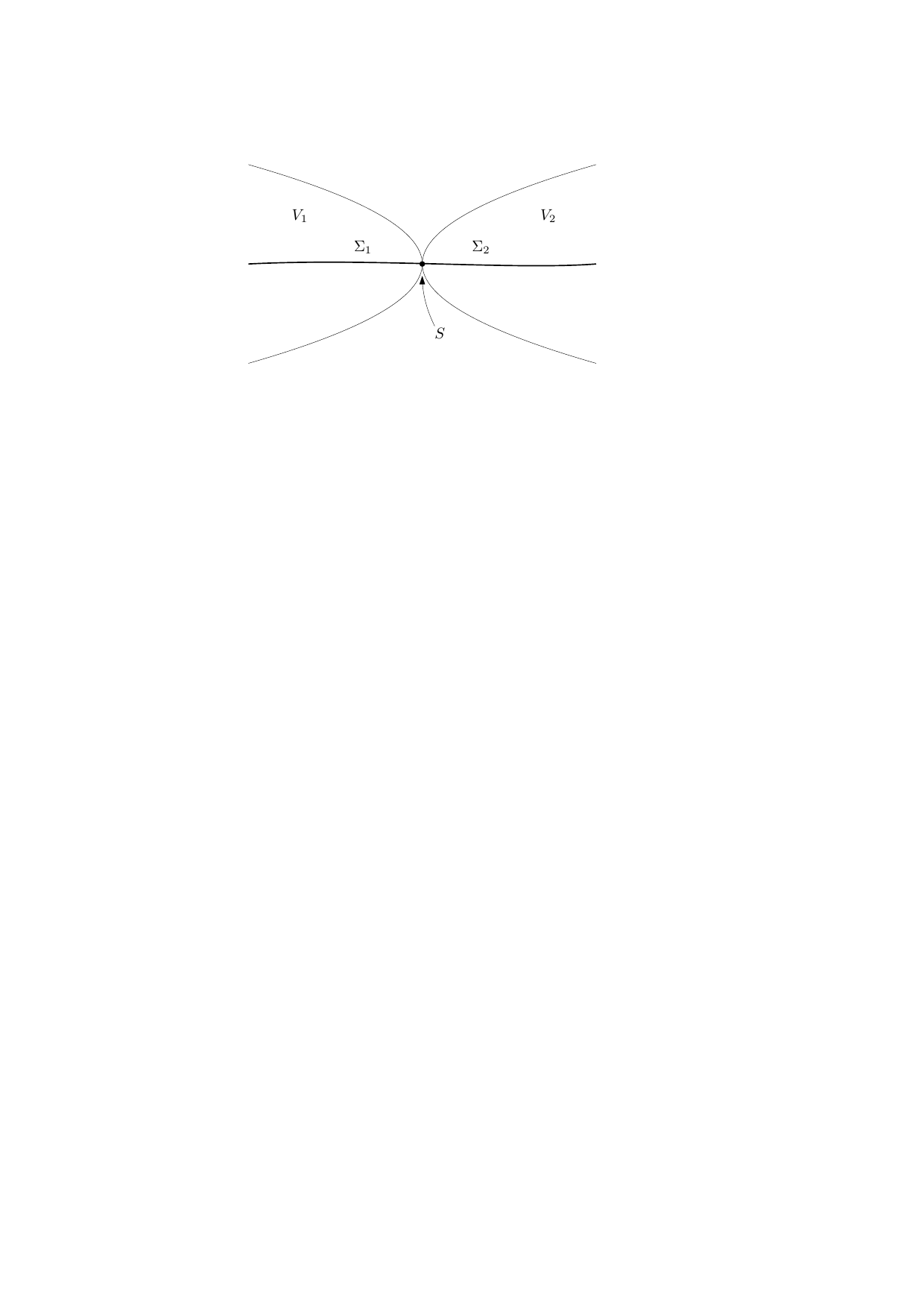} 
\caption{Subsystems with globally hyperbolic regions $V_1$ and $V_2$ in the globally hyperbolic spacetime $(M,g_{\mu\nu})$.}
\caption*{Source: Adapted from UNRUH; WALD \cite{Unruh2017}.}
\label{fig:hada1}
\end{figure}

This can be deduced by considering events $a\in S$, $a'\in\Sigma_1$ and $a''\in\Sigma_2$. The entanglement of the two subsystems can be shown by studying the condition given by eq. \ref{ent} when both $a'$ and $a''$ approach $a$. More precisely, if $|\psi\rangle$ is a Hadamard state and $\hat{\psi}$ is the field operator of a free scalar field, then by the Hadamard condition (see eqs. \ref{hard} and \ref{32}),
\begin{equation}
    \lim_{a',a''\to a}\left[\langle\psi|\hat{\psi}(a')\hat{\psi}(a'')|\psi\rangle\right]=\lim_{a',a''\to a}\frac{U(a',a'')}{(2\pi)^2\sigma(a',a'')},
\end{equation}
which is clearly divergent, as $\sigma(a',a'')$ is the squared geodesic distance between the events $a'$ and $a''$. However, for any physically acceptable state, one also has
\begin{equation}\label{remarks}
    \lim_{a',a''\to a}\left[\langle\psi|\hat{\psi}(a')|\psi\rangle\langle\psi|\hat{\psi}(a'')|\psi\rangle\right]=\left[\langle\psi|\hat{\psi}(a)|\psi\rangle\right]^2,
\end{equation}
which is finite, since the state $|\psi\rangle$ is a superposition of states with finitely many particles (see the expansion of the field operator in terms of annihilation and creation operators, eq. \ref{exp1234}). Consequently, one has that
\begin{equation}
    \lim_{a',a''\to a}\left[\langle\psi|\hat{\psi}(a')\hat{\psi}(a'')|\psi\rangle\right]\neq\lim_{a',a''\to a}\left[\langle\psi|\hat{\psi}(a')|\psi\rangle\langle\psi|\hat{\psi}(a'')|\psi\rangle\right].
\end{equation}

By eq. \ref{ent}, one can then conclude that entanglement between states in causally complementary regions always occurs in quantum field theory for any spacetime and Hadamard state. In the context of black holes, the application of this result is evident if one considers the regions $\langle B\rangle\cap\Sigma$ and $\Sigma\backslash B$, with the common boundary being $H$. In particular, for any physically acceptable (Hadamard) state, the field observables inside and outside the black hole at a given time will be entangled.

\newpage

\chapter{Black hole information problem}\label{chapter4}

The results and discussions of chs. \ref{chapter3} and \ref{chapter4} as well as those in appendix \ref{information} give us all the necessary tools to precisely formulate the black hole information problem. In this chapter, we will discuss the consequences of the Hawking effect on the dynamical evolution of the black hole, which mainly follow from the expectation that the emitted particles will carry away energy, angular momentum, and electric charge in such a way as to reduce the parameters of the black hole, causing it to slowly reduce in size. With the assumption that the evolution of the black hole is approximately given by a quasi-static process and that no deviation from semiclassical predictions occurs when the black hole reaches the Planck scale (i.e., $r_s\sim\ell_p$), we will see that information loss is a genuine prediction of semiclassical gravity. In a classical sense, this would correspond to the idea that, after the black hole has disappeared due to the emission of radiation, observers would not be able to access most of the distinguishable information about the energy distribution that gave rise to it other than the three parameters, $(r_s,a,e)$, i.e., information would have been lost. From a quantum mechanics perspective, the process of black hole formation and complete evaporation corresponds to the evolution of a pure state to a mixed one. We will then discuss some proposals for alternatives that would result in a process in which unitary evolution is preserved, but mainly at the cost of questioning the validity of semiclassical predictions in regimes in which one expects it to be an adequate description of the fundamental interactions. Finally, we will review the assumptions and hypotheses that lead to the black hole information problem.

\section{Consequences of particle creation by black holes}\label{evaporation}

In this section, we will analyze the consequences of the Hawking effect for the dynamical evolution of $B\cap\Sigma$ (which we refer to simply as a black hole), where $\Sigma$ denotes a Cauchy hypersurface. Although calculating the precise effect of the emitted particles on the metric (also known as \textit{back reaction} effects) is not a trivial task for four-dimensional spacetimes, one expects that the main consequence of their effective emission will be to reduce the parameters of the black hole at a given time, i.e., its mass, angular momentum and electric charge. In particular, the energy reduction is merely a consequence of the fact that the positive energy flux to infinity implies that there must exist a negative energy flux going into the $B\cap\Sigma$, while the reduction of the Kerr parameter, $a$, and the length electric charge, $e$, are expected due to the fact that created particles tend to carry away such physical properties, as discussed in the remarks below eq. \ref{temperature}. In fact, it can be shown that the reduction in angular momentum and electric charge is much faster than the one in mass, so that a Kerr-Newman black hole will quickly (in comparison with time scales of interest) become a Schwarzschild black hole (see \cite{Page1976, Page1976a, Page1977} for quantitative details). Because of this, in the following discussions we will only consider a Schwarzschild black hole. 

As per eqs. \ref{ksch} and \ref{temperature}, the temperature of the spectrum associated with a Schwarzschild black hole is inversely proportional to its mass, which means that the energy of created particles tends to increase as a Schwarzschild black hole shrinks. More specifically, this relation between the particle energy spectrum and black hole mass ensures that created particles will have a small influence on the spacetime metric for most of their evolution, so that neglecting back reaction effects will be a good approximation when the black hole has a mass much greater than the Planck mass. In this sense, the details of the back reaction effects will pose little influence over the event horizon, $H$, and the exterior region, $\Sigma\backslash B$. Consequently, the geometry of the spacetime can be described by a sequence of quasi-static processes in which the mass of the black hole, $r_s$, decreases slowly, with the process of energy loss being approximated by Stefan's law \cite{Swendsen2020} with temperature given by eq. \ref{temperature}. This argumentation leads one to the conclusion that a black hole should completely \textit{evaporate}, i.e., radiate away its mass, in a finite amount of time. Evidently, the Planck scale will be accessible to $H$ at some finite time. Hence, this prediction is only valid if one considers that no deviations from general relativity and quantum field theory occur on that scale. For the moment, let us consider that to be the case and analyze the consequences of a complete evaporation process. 

The conformal diagram of a Schwarzschild black hole that resulted from a spherically symmetric collapse and evaporates completely is depicted in fig. \ref{fig:eva5} \cite{Hawking1975,Wald1984b}.\begin{figure}[h]
  \begin{subfigure}[b]{0.5\textwidth}
  \centering
    \includegraphics[scale=1.3]{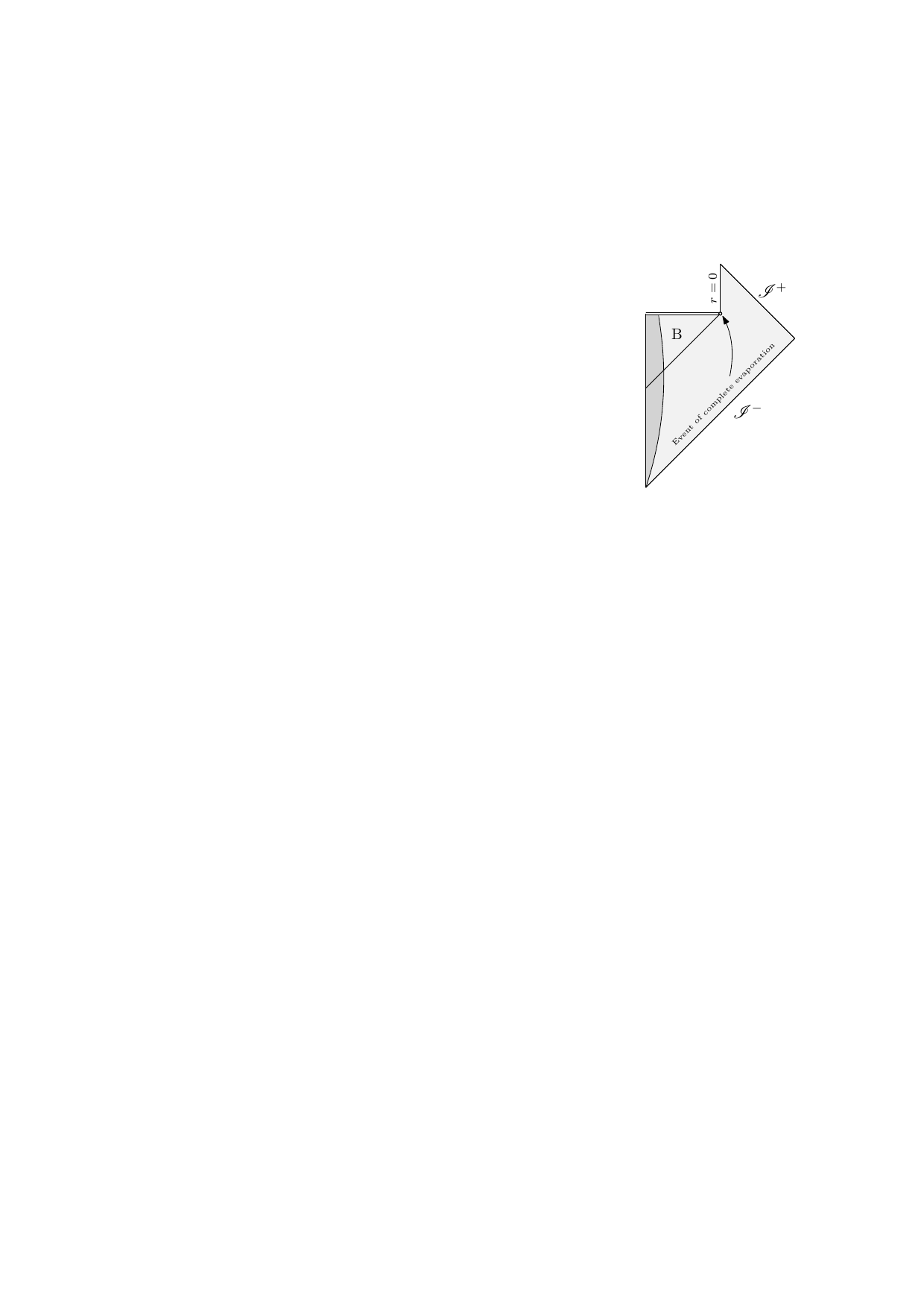}
    \caption{Illustration of the spacetime.}
    \label{fig:eva1}
  \end{subfigure}
  \hfill
  \begin{subfigure}[b]{0.5\textwidth}
  \centering
    \includegraphics[scale=1.3]{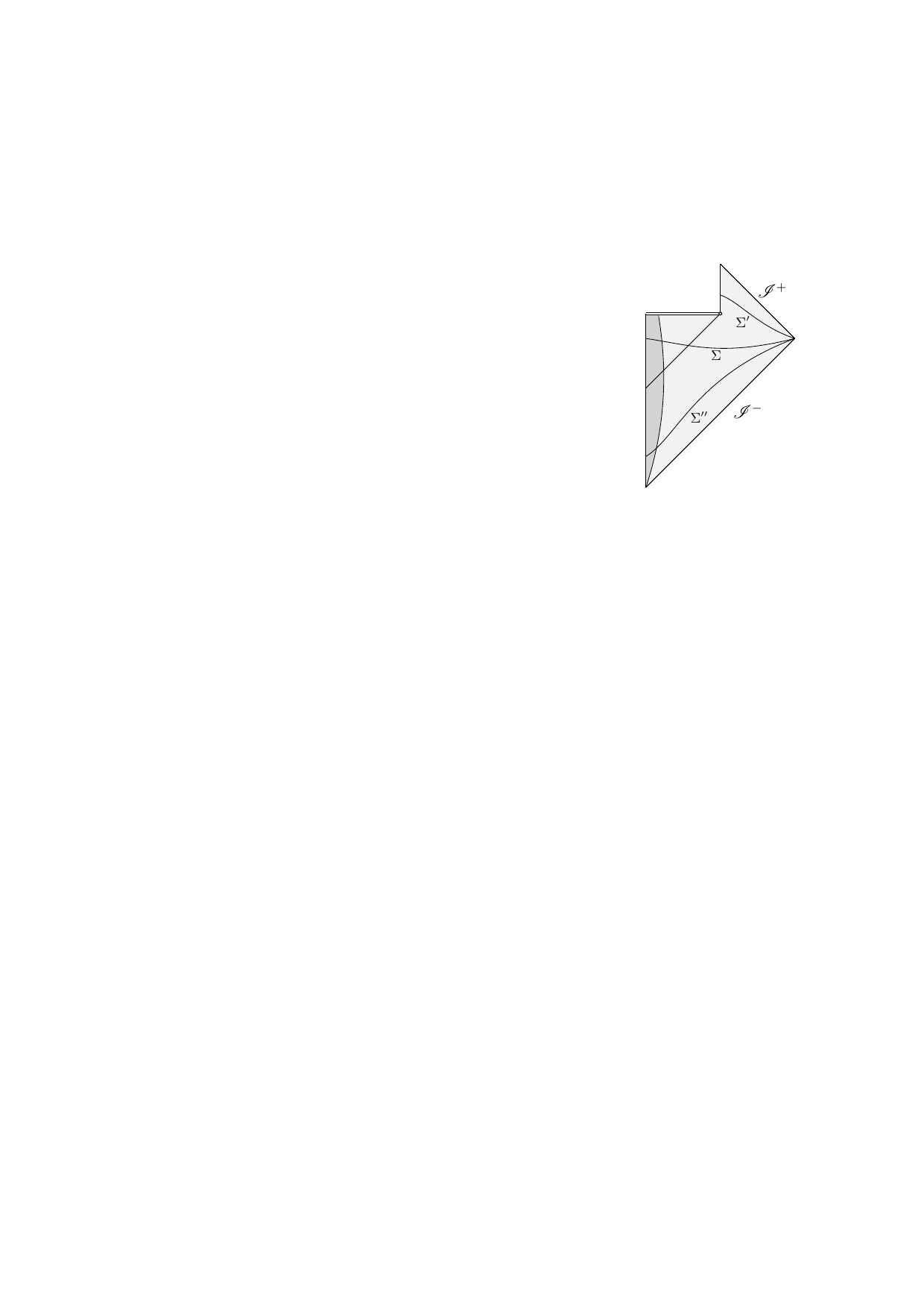}
    \caption{Illustration of spacelike hypersurfaces.}
    \label{fig:eva2}
  \end{subfigure}
  \caption{Conformal diagram of a Schwarzschild black hole that formed from a spherically symmetric distribution of energy and evaporated completely.}
\caption*{Source: By the author.}
\label{fig:eva5}
\end{figure} In this representation of the causal structure of such spacetime, one can see that ``after'' complete evaporation, events located at the origin of spatial coordinates are causally connected to $\mathscr{I}^+$. In contrast, all causal curves that cross $H$ do not reach any spacelike hypersurface ``after'' complete evaporation. This signifies that knowledge of the field observables in any spacelike hypersurface ``after'' the event of complete evaporation, such as $\Sigma'$ in fig. \ref{fig:eva2}, will not suffice to determine conditions in the entire spacetime. Indeed, if one considers any spacelike hypersurface ``before'' complete evaporation, such as $\Sigma$ or $\Sigma''$ in fig. \ref{fig:eva2}, then $D(\Sigma')$ does not contain it. Conversely, one can also conclude that $\Sigma'$ is also not contained in either $D(\Sigma)$ or $D(\Sigma'')$ \cite{Kodama1979}. However, given suitable initial conditions, one can argue that data at either $\Sigma$ or $\Sigma''$ may very well be enough to determine conditions at $\Sigma'$ \cite{Wald1984b}. In this sense, adequate spacelike hypersurfaces ``before'' complete evaporation can be regarded as ones for which conditions in the entire spacetime can be determined from. Thus, one can argue that the initial value problem for the matter fields earns the adjective ``well posed'' in the entire spacetime\footnote{Nevertheless, this \textit{heuristic} argumentation does not suffice to argue that black hole evaporation is not in conflict with predictability \cite{Lesourd2019}.}. 

Additionally, the hypothesis that the black hole disappears is in accord with theorem \ref{45678}, whose results are based on the condition that the evolution of the black hole is given along Cauchy hypersurfaces. In any case, this supposed ``loss of determinacy'' (in the sense that knowledge of conditions at $\Sigma'$ will certainly not be sufficient to describe the entire region of the spacetime ``before'' the complete evaporation) is clearly independent of the details of what happens to the distribution that follows the causal curves that cross $H$, relying only on the assumption that they will no longer be accessible to any observer after the black hole has completely evaporated. This ``past-indeterminacy'' at $\Sigma'$ can be precisely quantified by the following line of reasoning. 

Using the representation of fig. \ref{fig:eva2}, consider a time\footnote{This hypersurface can be considered as an ``instant of time'' because the region ``before'' complete evaporation is globally hyperbolic.}, $\Sigma''$, for which the spherically symmetric energy distribution has not collapsed to form a black hole yet, and consider that maximum possible knowledge of its details has been acquired, i.e., it is described by a pure state. Consider now a time, $\Sigma$, in which the black hole has already formed and ``settled down'' to a stationary configuration. In accord with quantum field theory, the quantum field states inside and outside the black hole are entangled (see \S~\ref{hadmard}), which means that the state of the system outside of the black hole can only be described by a density operator (see appendix \ref{informationtheory}). Nonetheless, the full state of the system at $\Sigma$ still has the same purity as the one at $\Sigma''$. At this point, the black hole is emitting radiation as a gray body with an approximately thermal spectrum, with temperature given by eq. \ref{temperature} and absorptivity $\Gamma$, and slowly evaporating due to loss of energy, in accord with semiclassical properties. If the evaporation occurs completely, then the state of the field at a ``time'' $\Sigma'$ will still be mixed, as it remains entangled with the states inside the black hole, even though it no longer exists. In essence, the system will evolve from a pure state at $\Sigma''$ to a mixed state at $\Sigma'$, which in the vocabulary of mixtures, corresponds to a \textit{loss of quantum coherence}. Hence, the information about the quantum field state at $\Sigma'$ (which is the only region accessible to observers after complete evaporation) will not be enough to determine the state of the system at a time $\Sigma''$ or $\Sigma$, i.e., information will be lost. 

Generalization of this argumentation for an energy distribution which is not spherically symmetric or stationary follows from the results and discussions in chs. \ref{chapter2} and \ref{chapter3}. More precisely, the stationary state conjecture, the black hole uniqueness theorems, and the fact that the event horizon acts in a way as to make sure that the details of the collapse pose negligible influence over the spectrum of created particles measured at late times at $\mathscr{I}^+$ ensure that the same conclusion holds. 

The description of the process of black hole formation and evaporation stated above merits two important remarks. First, the loss of information can be clearly traced back to the interpretation that the evolution of the black hole ``removes'' the degrees of freedom of the quantum field to observers outside of the black hole, as its complete evaporation means that those degrees of freedom are no longer accessible. In essence, the complete evaporation of a black hole can be interpreted as leaving a lasting ``deterministic pathology'' on the spacetime, signified by the fact that conditions on ``late time'' spacelike hypersurfaces do not suffice to entirely determine those ``before'' complete evaporation. The second remark concerns the evolution from a pure state to a mixed one\footnote{The point of this analysis is that, regardless of the purity of the initial state of the energy distribution that gave rise to the black hole, the final state will be in a mixed state.}, which can be wrongfully interpreted as a breakdown of postulates of quantum mechanics. In fact, the evolution from a pure state to a mixed one should not be confused with a lack of conservation of probability, which would in turn be extremely problematic. Note that this is not the case because the dynamical evolution of the quantum field in the process of black hole formation and complete evaporation is not given along Cauchy hypersurfaces. Precisely, because conditions at $\Sigma'$ \textit{certainly} do not suffice to determine conditions at $\Sigma''$, evolution of the quantum state from $\Sigma''$ to $\Sigma'$ is expected to be non-unitary (see eqs. \ref{eqr}, \ref{eqs}, and \ref{45}). In this sense, the formation and complete evaporation of a black hole may be interpreted as producing an \textit{open system}, which would result in a \textit{physical} non-unitary evolution.

It should be noted that this conclusion of loss of quantum coherence is in accord with quantum field theory and, in fact, it even happens in any ``well behaved'', globally hyperbolic spacetime for suitable choices of ``initial'' and ``final'' hypersurface. In essence, one can obtain the same conclusion if one considers the dynamical evolution of the quantum field observables for a massless Klein-Gordon field in Minkowski spacetime from a Cauchy hypersurface (e.g., a hyperplane) to, say, an ``asymptotically null'' hyperboloid (see fig. \ref{fig:min56}). The exact same phenomenon of loss of quantum coherence occurs in this globally hyperbolic spacetime, in which the hypersurface $\Sigma'$ fails to be a Cauchy hypersurface and thus, evolution of the state from $\Sigma$ to $\Sigma'$ corresponds to the evolution of a pure to a mixed state, i.e., a non-unitary evolution (see \cite{Unruh2017} for a physical example of this process). In this sense, physical non-unitary evolution is a prediction of quantum field theory for suitable choices of ``initial'' and ''final'' hypersurfaces.

\begin{figure}[h]
\centering
\includegraphics[scale=1.3]{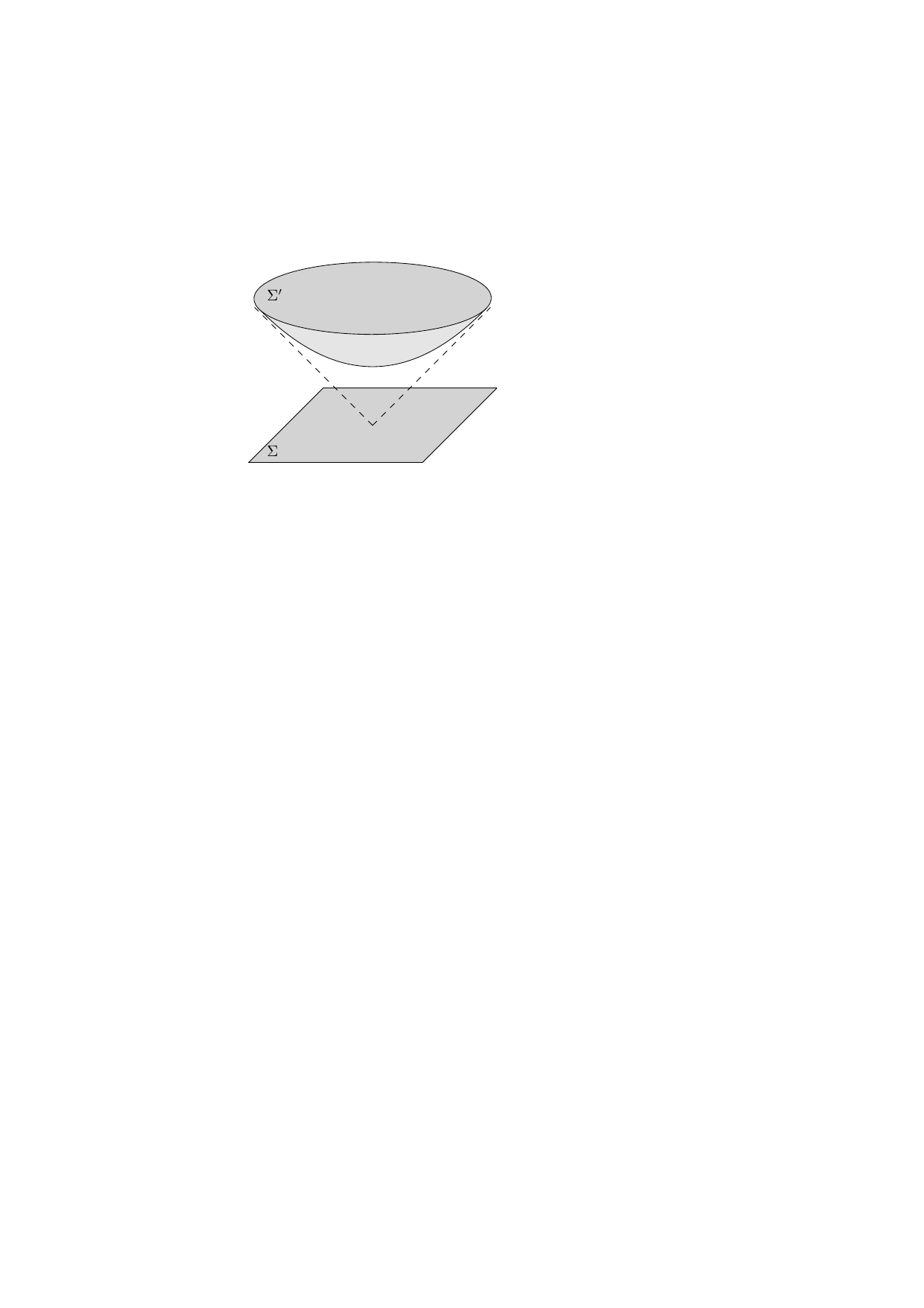} 
\caption{Spacetime diagram of Minkowski spacetime showing a Cauchy hypersurface, $\Sigma$, and a hyperboloid, $\Sigma'$.}
\caption*{Source: Adapted from WALD \cite{Wald1994}.}
\label{fig:min56}
\end{figure} 

In the exact same manner, during the entire process of black hole formation and evaporation, the quantum field respects deterministic equations of motion, and presents no kind of pathological behavior. The same can be said about the structure of spacetime and the physics governing it, except perhaps at the singularity. Thus, under the assumption that a black hole evaporates completely, the result of information loss is in complete accord with predictions of quantum field theory in curved spacetime. 

\section{Alternatives to information loss}

There are, however, proposals of alternatives which revolve around possibilities regarding the physics of black holes in order to restore unitary evolution. In this section, we briefly summarize some of the most notable of them, and discuss how and where the picture stated in the last section fails.

One of the proposals to restore unitarity is known as \textit{remnants} (see, e.g., \cite{Bekenstein1994}), which states that the evaporation process does not continue when the black hole reaches the Planck scale, so that a Planck-sized object containing all the information would be the final state of a black hole (see region I in fig. \ref{fig:bh1} for an illustration of where the conformal diagram of fig. \ref{fig:eva5} would fail in this framework). Another perspective to avoid information loss is known as \textit{firewalls} (see, e.g., \cite{Almheiri2013}), which states that the evaporation process does occur as simply as predicted by semiclassical gravity\footnote{The firewall proposal was mainly motivated by the concept of \textit{black hole complementarity} \cite{Susskind1993, Susskind1994}, which in summary, states that information that enters the event horizon is accessible on the outside.}, in particular, that there is less entanglement (or none at all) between pairs of created particles (i.e., those that reach infinity and those that represent the negative energy flux into the black hole). As a consequence, an infalling observer would perceive high energy field quanta at the event horizon, and there would be no evolution from a pure to mixed state in the process of evaporation (see region II in fig. \ref{fig:bh1}). Additionally, there is the view of \textit{fuzzballs} \cite{Mathur2005}, which is named after the proposed structure that should follow from gravitational collapse, rather than a black hole. In this view, the whole concept of evaporation and information loss is avoided by simply stating that black holes do not form, as some new physics would prevent it (see region III in fig. \ref{fig:bh1}). 

\begin{figure}[h]
\centering
\includegraphics[scale=1.3]{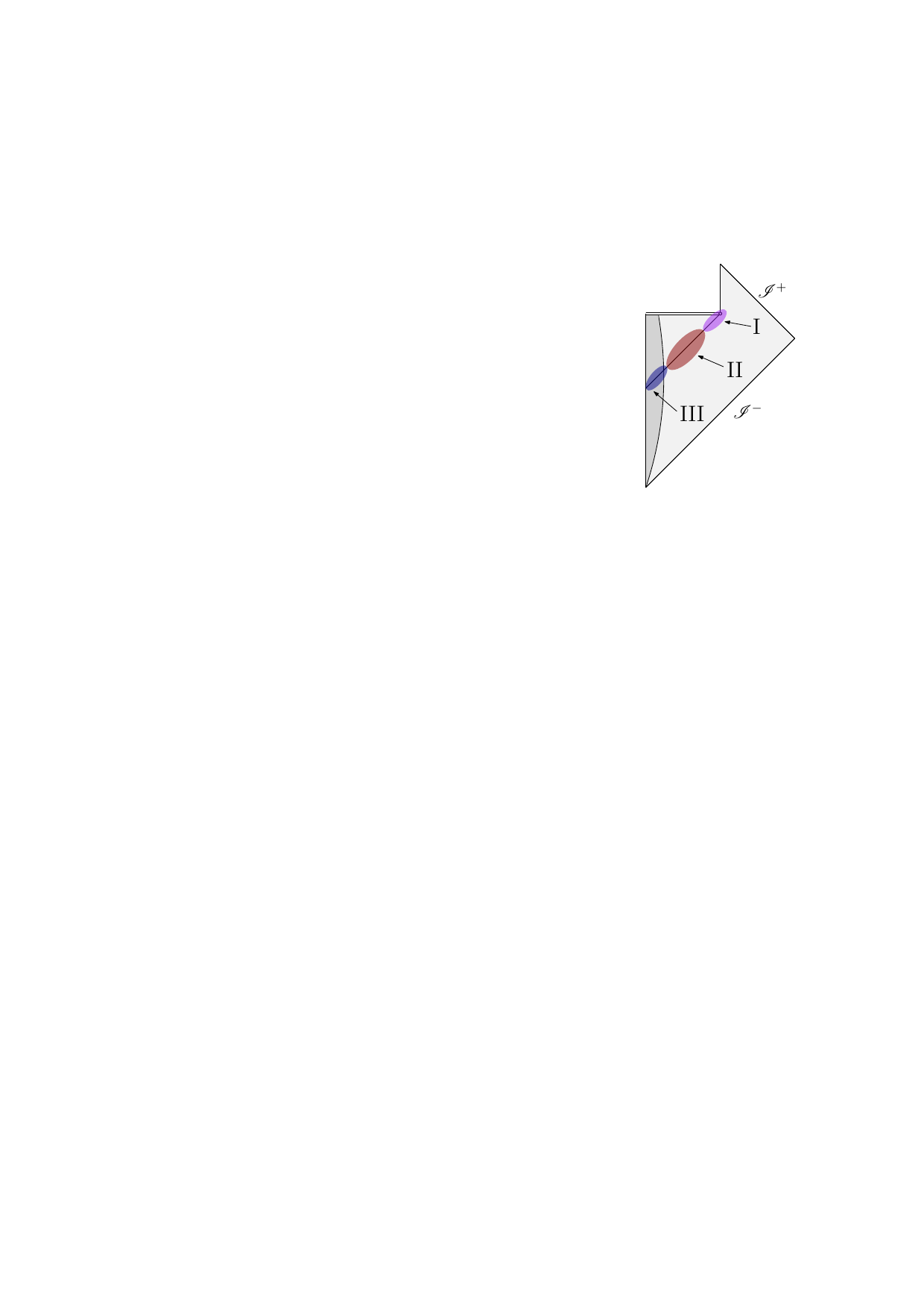} 
\caption{Regions of possible failure of semiclassical gravity in the conformal diagram of a spherically symmetric black hole formed from gravitational collapse.}
\caption*{Source: By the author.}
\label{fig:bh1}
\end{figure} 

Although these proposals lead to a more complete discussion and shed light on the reliable fundamental structure of the theories involved in the process of black hole formation and evaporation, there are significant flaws in arguing that semiclassical predictions do not hold at arbitrary scales. More specifically, proposals that seek to invalidate predictions of general relativity and quantum field theory in regimes where one justifiably expects them to be an adequate description of nature is problematic. For example, in order for the fuzzball proposal to be an adequate possibility, it would be necessary to contradict predictions of general relativity concerning gravitational collapse. In other words, since the formation of an event horizon does not necessarily require high energy or high curvature conditions, proposing that semiclassical gravity presents flaws in arbitrary energy scales to prevent the formation of \textit{any} black hole is contradictory. Recall that the scales of curvature at the event horizon of a Schwarzschild black hole are proportional to $r_s^{-2}$ (see eq. \ref{curvature}), which means that supermassive collapses would have an event horizon form in arbitrarily low curvature regimes\footnote{Indeed, because an event horizon has no local significance, arguments based on some mechanism that would stop the formation of any black hole are inconsistent.}. Hence, it is hard to see how these claims would be useful to describe the impossibility of the formation of an event horizon in the collapse of such distributions, which is an instance of scales where semiclassical predictions are expected to hold very accurately. Regarding the firewall proposal, the same argument comes into play, since the explanation for the reduction or lack of entanglement (i.e., deviation from semiclassical picture) would have to hold at arbitrarily low energy and curvature regimes.

That being said, there are no similar remarks to the remnant proposal, since the scale at which the evaporation process is proposed to stop is one where details on the quantum gravitational phenomena would be necessary to accurately predict the fate of the black hole. Nevertheless, following such a hypothesis, the Planck-sized object would need to possess an extremely high entanglement entropy (and in fact, arbitrarily high, since \textit{any} black hole would have to result in a remnant), which would clearly need to exceed the ``ordinary'' proposal of a black hole entropy, as given by eq. \ref{entropy}. In addition, in order for observers to physically access the information, an interaction with the remnants would be necessary, and due to the arbitrarity present in the amount of entropy it would have, it is unclear how such interaction would work in the general case. In contrast, if remnants do not interact with the rest of the universe, then information can be basically assumed to be inaccessible. Lastly, the exact same reasoning applies to proposals that seek to invalidate the conclusion of information loss by arguing that it goes into a ``baby universe'' (say, if the singularity is actually a ``bridge'' between two universes). That is, if information is inaccessible to any observer at a time ``after'' complete evaporation, then it is hard to argue why it would not merit the conclusion of being lost.

Nonetheless, there is no way to make a precise prediction concerning the result of the evaporation process without a complete theory of quantum gravity, which will allow for a complete evaluation of the particle creation, back reaction effects and determination of black hole physics beyond Planck scales. This is precisely the content of the \textit{black hole information problem}, which is this question regarding the result of the process of black hole evaporation. In particular, one would like to know if information is truly lost, or how this inadequate conclusion can shed light on the regime in which semiclassical analysis is expected to fail. That is, although general relativity and quantum field theory provide a good description of black hole physics on most of the accessible energy scales, and in the author's opinion, point to the most plausible result of information loss, it is possible only to speculate as to how this will be addressed in an adequate theory of quantum gravity.

\section{Nature of the black hole information problem}\label{nature}

The evolution of the black hole region in light of semiclassical developments points to the physical prediction of information loss. Evidently, the formulation of this prediction and the state of affairs regarding it relies on several assumptions and hypotheses that rise from the theories involved in the semiclassical framework of gravity. The purpose of this section is to discuss these assumptions and hypotheses. It should be noted that an extensive discussion concerning the physical principles behind the theories involved is beyond the scope of this work. In particular, our starting point will be to assume the validity of the postulates of general relativity (see chs. \ref{Introduction} and \ref{chapter1}) and the quantization process in quantum field theory (see \S~\ref{qft}). The reader can find an extensive discussion on these concepts, as well as experimental evidence for their validity, in \cite{Will2014,Weinberg1995, Zee2023}.

With these remarks, consider, first, some results that arise in the purely classical analysis in the framework of general relativity. Arguably, the main point is the physical plausibility of the existence of black holes. As discussed, gravitational collapse and pathological regions are genuine predictions of the theory (see \S~\ref{sing}), which mainly follow from analysis of the stability of astronomical bodies and the eventual formation of trapped surfaces (see \S~\ref{schsec}). Considering the cosmic censor conjecture (see \S~\ref{bhsec}), the existence of black holes can then be seen as a way to preserve determinism in spacetimes, rather than a generic prediction of the theory. In particular, the physical assumption behind this comes from the expectation of global hyperbolicity (see \S~\ref{causal}), namely, that a physically reliable spacetime is one in which the dynamical evolution of physical fields is a ``well posed'' problem. Additionally, by defining black holes in the context of asymptotically flat spacetimes (see \S~\ref{flat}), one makes use of the frame of reference of observers at the asymptotic region to analyze physically significant parameters. As it is customary in the analysis of physical systems, one considers that the system is isolated for a simpler analysis. Although no physical black hole would constitute an isolated system, a theoretical treatment of one as such is clearly justified by the expectation that an idealized analysis would yield results that are of significance for observations.

With the existence and definition of black holes in the framework of general relativity being well justified, the most notable property to the formulation of the information problem comes from the uniqueness theorems (see \S~\ref{kerr}), which are also rigorous results in differential geometry and topology. However, it is evident that its physical relevance (i.e., applicability to real scenarios) follows directly from the assumption that black holes reach a stationary final state. Indeed, the expectation that physical properties not associated with invariance over a one-parameter group of isometries (see \S~\ref{symmetry}) should be radiated away is precisely the justification for this conjecture. Although one might argue that the stationary state conjecture does not rise as a consequence of a fundamental physical assumption such as determinism in spacetimes, it is undeniable that the similar character of other dynamical systems and the prediction of gravitational waves justifies it. In summary, the cosmic censor and stationary state conjecture are reasonable assumptions in the classical realm.

When quantum field theory in curved spacetime is taken into account, it is natural to question if these conjectures and properties hold in the same manner. With regard to the cosmic censor conjecture, the assumption that gravitational collapse should result in a black hole remains reasonable, since versions of the singularity theorems with weakened energy conditions (e.g., quantum energy inequalities \cite{Fewster2011}) are valid\footnote{However, general developments taking into account back reaction effects and interactions may still change these conclusions.}. As a matter of fact, considering that violations of the weak version of the cosmic censor conjecture can be found in quantum mechanics \cite{Matsas2009}, and that theorem \ref{thekappa} need not hold in such a framework, one can justifiably challenge its physical reliability in a more general manner. In other words, singularities, either naked or ``concealed'' by event horizons, may very well be a genuine feature of a complete theory of gravitation.

Now, the main concern with singularities is not necessarily with their existence, but rather, with what their existence implies in a theory in which one does not have the necessary tools to describe them in a satisfactory manner. In other words, the intrinsic quantum process for which the weak version of the cosmic censor conjecture may not hold, and the expectation of determinism (or at least, predictability\footnote{See details below.}) in a physical theory support the hypothesis that, with a complete theory of quantum gravity, an adequate description of singularities might be possible \cite{Penrose1979}. This is meant not only in the sense that one would have information about the quantum gravitational phenomena that occur in the vicinity of a singularity, but also, details about the appropriate boundary conditions one would impose there. Evidently, this does not invalidate predictions of black holes or their physical importance in the semiclassical context, but simply points to the possibility that not every singularity has to be concealed by an event horizon in a complete theory of quantum gravity. 

Currently, as far as the author is aware, there are no arguments that suggest that the stationary state conjecture should not hold in the semiclassical framework, but the same cannot be said about the black hole uniqueness theorems. Indeed, it is well known that the uniqueness theorems summarized in \S~\ref{kerr} are valid only when suitable classical fields are present \cite{Wald1984}, i.e., they are valid for Einstein-Maxwell equation, so that full consideration of more general equations can bring to light some ``hair''\footnote{In the sense that time independent black holes may not be described uniquely by $(r_s,a)$ and finitely many ``charge'' variables.}. Nonetheless, one can argue that such time independent solutions not described by the Kerr-Newman metric would be subjected to some sort of \textit{generalized uniqueness theorems} (see \cite{Chruściel2012} for an extensive review of solutions and arguments supporting this hypothesis). 

Turning our attention now to the pertinent effects that arise in the semiclassical framework, perhaps the most important assumptions and hypotheses are those that lead to the effective particle creation effect and the conclusion of entanglement between field observables inside and outside the black hole. First, one has the requirement of global hyperbolicity, which is necessary for one to have a ``well posed'' (see \S~\ref{causal}) problem for the dynamical evolution of the quantum field observables (see \S~\ref{qft}). For instance, note that even though it can be deduced that a hypersurface such as $\Sigma$ in fig. \ref{fig:eva2} does not obey $D(\Sigma)=M$, this does not mean that $M$ is not globally hyperbolic. More precisely, it only means that $\Sigma$ is not a Cauchy hypersurface. Additionally, since it can be argued that data on $\Sigma$ can still be regarded to be enough to determine conditions at $\Sigma'$, evolution of the field observables from $\Sigma$ to $\Sigma'$ could still constitute a ``well posed'' problem, regardless of global hyperbolicity\footnote{However, one would still expect that suitable causality conditions are satisfied \cite{Lesourd2019}.}. Second, one has to consider the geometric optics approximation, which is well justified for analysis of scalar solutions of the minimally coupled Klein-Gordon equation with high frequency (see \S~\ref{creation}). In particular, the analysis of high frequency solutions is only relevant to the ``traced backwards'' form of solution from $\mathscr{I}^+$ to $H$, as from $H$ to $\mathscr{I}^-$ the geometric optics approximation is justified by the arbitrary blueshift at the event horizon.

In fact, such a property is also responsible for ensuring that details of the energy distribution that gave rise to the black hole pose negligible influence over the spectrum of created particles. Because of this and the rigorous arguments that are involved in the derivation of the Hawking effect, one is tempted to believe that perhaps the only flaw in its prediction lies in the fact that the exponential redshift suffered by the outgoing particles means that they originate from modes with extremely high frequency. In other words, the approximately thermal spectrum measured at $\mathscr{I}^+$ should possess much higher wavenumbers when the particles are located closer to the event horizon, and may very well constitute a situation in which Planck scales are accessible. Not only that, consideration of only high frequency modes from $\mathscr{I}^+$ to $H$, and neglection of back reaction effects in the derivation of the Hawking effect lead one to the question of how much information Hawking radiation would actually carry, as well as if its deduction is independent of Planckian physics. 

Supposing for the moment that one is satisfied with the semiclassical arguments that give rise to Hawking radiation, the prediction of information loss follows from the conclusion of entanglement between the quantum field observables inside and outside the black hole. As briefly commented in appendix \ref{entanglement}, entanglement is an intrinsic property of quantum mechanics derived from the theory alone, and experimental evidence for it is abundant. Nonetheless, it sparked a powerful discussion regarding its role in a physical theory that is both local and realistic \cite{Einstein1935}. By \textit{realistic}, it is meant that the physical quantities that are predicted by the theory have definite values, i.e., the observables of the theory are consistent with an objective reality, independent of the process of measurement. By \textit{local}, it is meant that measurements made on spacelike separated (i.e., causally disconnected) systems cannot be relevant to one another. Although these philosophical perspectives concerning the nature of a successful theory are (arguably) justifiable, one can deduce that quantum mechanics is not, in general, in accord with such principles, as per \textit{Bell's inequalities}\footnote{These inequalities can be understood as a way to predict constraints on experiments using a local realistic theory, and then compare them with predictions of quantum mechanics. See \cite{Nielsen2010} for a detailed discussion.} \cite{Bell1964}. Certainly, it is not difficult to see that entanglement implies that measurements made on arbitrarily distant entangled systems can be of influence to one another. Not only that, experiments that show violations of Bell's inequalities (see, e.g., \cite{Aspect1981, Aspect1982}) support the conclusion that local realistic theories are not adequate for the description of quantum phenomena. In this purely theoretical sense, entanglement is a necessary consequence of a physical theory that accurately describes quantum phenomena.

In the context of quantum field theory in curved spacetime, entanglement rises as an intrinsic feature of causally complementary regions due to Hadamard states (see \S~\ref{hadmard}). As discussed, the physical assumption behind this class of states is that they are those for which the expectation value $\langle \psi|\hat{T}_{\mu\nu}|\psi\rangle$ is non singular. In light of the expectation that quantum gravitational effects to be governed by equations whose source is directly related to some notion of ``energy-momentum expectation values'', the requirement for physically acceptable classes of states to be those that obey the Hadamard condition is well justified \cite{Wald1994}. In fact, entanglement between the field observables inside and outside the black hole, as predicted by the requirement that states obey the Hadamard condition (see the remarks below eq. \ref{remarks}), is the justification for the nature of the Hawking radiation, i.e., it justifies its approximately thermal character. Precisely, the same exact explanation arises when one considers the Fulling-Davies-Unruh effect (see \S~\ref{qft}) in Minkowski spacetime, in which the thermal radiation measured by uniformly accelerated observers originates from the entanglement of the field observables in the left and right Rindler wedges \cite{Wald1994}. Certainly, in the context of black holes, the particles emitted to $\mathscr{I}^+$ at late times are (most likely) weakly correlated with each other (i.e., there are weak correlations between measurements of particles emitted in different modes), but the fact that they are correlated with particles that enter the black hole (i.e., those corresponding to the negative energy flux) is a direct consequence of the nature of the entanglement \cite{Unruh2005, Unruh2017}.

Given that entanglement and the approximately thermal character of Hawking radiation are well justified, the result is that the process of black hole formation and complete evaporation will constitute the evolution from a pure state to a mixed one. The conclusion of information loss follows precisely from this loss of quantum coherence, as quantified by the Von Neumann (see \S~\ref{von}), but the physical plausibility of such evolution has also been the topic of much discussion. Namely, it has been argued that this process could be stated in the requirement that the dynamical evolution of $\hat{\rho}$ has to be given by a more general equation than the Schrödinger's one (see, e.g., \cite{Sakurai1994}), which in turn, would result in violations of energy-momentum conservation \cite{Banks1984,Ellis1984}. Indeed, no physical process known to date provides an account for violation of unitarity (in the sense that a pure state evolves into a mixed state) for dynamical evolution of a system over Cauchy hypersurfaces. However, we stress again that in the process of black hole formation and evaporation, the violation of unitarity is a consequence of semiclassical predictions in light of the fact that any spacelike ``late time'' hypersurface will not suffice to determine conditions in all spacetime. Consequently, it should be noted that, in this context, the failure of the evolution of the system from a pure to a mixed state does not imply that probability is not conserved (which, arguably, is a fundamental principle), but rather, that at ``late times'' no observer will be able to make a complete description of the system via dynamical evolution to the past. It is in this sense that information loss is in complete accord with semiclassical predictions following the assumption of complete evaporation \cite{Unruh2017}, and one can also argue that such evolution may happen in more general processes in light of a complete theory of quantum gravity \cite{Unruh1995b}.

The last point worthy of discussion is not about an assumption that contributes directly to the formulation of the black hole information problem, but rather, how one can interpret semiclassical properties of black holes from a ``full picture'' perspective. Namely, since one may interpret the Hawking effect as a manifestation of the physical temperature of a black hole (see \S~\ref{thermo}), it is of interest to understand which relevant assumptions about the information problem corroborate or invalidate such interpretation. First, it should be noted that, by temperature, we refer to a behavior of a system in which its microscopic degrees of freedom result in a emission of radiation in accord with a black body spectrum. Although the predominant behavior of the spectrum of created particles is given by a black body spectrum (see eq. \ref{flux1}), most likely deviations will arise from a detailed calculation (in the sense that considerations of back reaction effects, higher range of frequency and quantum gravitational phenomena may change its precise form). Nonetheless, given the constraints that come into play in the development presented in \S~\ref{creation}, one can show that the content of the Hawking radiation is in agreement with a correspondence with black body radiation. Indeed, in Wald's words \cite{Wald1975}:

\noindent``\textit{The density matrix for emission of particles to infinity at late times by spontaneous particle creation resulting from spherical gravitational collapse to a black hole is identical in all aspects to that of black body thermal emission at temperature $kT=\hbar\kappa/2\pi c$}''.

Generalizations of such a statement to a Kerr-Newman black hole follows from the stationary state conjecture and most of the arguments presented in the derivation of the Hawking effect in ch. \ref{chapter3} (see \cite{Hawking1975, Wald1975} for a detailed discussion). Thus, the agreement of the predominant part of the Hawking radiation with a black body spectrum is not just numerical (see eq. \ref{flux1}), it is identical in all aspects (clearly, the gray body factor does not affect the nature of the spectrum). This corroborates the interpretation of $\kappa$, as given by eq. \ref{temperature}, as the physical temperature of a black hole, and the discussion in \S~\ref{thermo} supports the idea that one may interpret the stationary final state of a black hole as physically equivalent to any other system in thermodynamic equilibrium \cite{Wald2001}. In this sense, the semiclassical properties are known to constitute the alleged \textit{black hole thermodynamics}. 

However, if that is truly the case, then the physical assumptions that give rise to the semiclassical properties derived in chs. \ref{chapter2} and \ref{chapter3} should have a deeper meaning. For instance, when one considers the property that $\kappa=0$ should be unattainable, a possible interpretation is that this is just a restatement of the cosmic censor conjecture. Similarly, when one considers the constancy of $\kappa$ over the event horizon, one may derive it from Einstein's equation and the dominant energy condition or from purely geometrical arguments. Additionally, the fact that one can derive an analogous ``energy conservation law'' (see eq. \ref{first}) for more general metric theories of gravity, and the arguments surrounding the generalized second law corroborate the ideas stated above. Consequently, if one were to consider a physical association between the geometrical and thermodynamic properties of black holes, it would be natural to question what are the underlying physical correspondences between the arguments that lead to these sets of relations, both from a geometrical and thermodynamic perspective. In essence, it is unclear what would be the microscopic properties that give rise to the thermodynamic properties of black holes, and although many developments were proposed \cite{Wald2001}, it is safe to say that this is still an open question.

In light of the discussions above, it is clear that analysis of the black hole information problem is of extreme interest to studying fundamental properties of the gravitational interaction. Thus, it is critical to the development of a complete quantum theory of gravity. Evidently, this does not follow because it is ``paradoxical'', as many references tend to label it as, but in fact, because it provides some of the best clues available to study gravitation in regimes where semiclassical gravity is expected to breakdown. In essence, considering the success of both general relativity and quantum field theory to describe phenomena at currently accessible energy scales, it is natural to suppose that they are merely consequences (or effective theories) of a deeper, more comprehensive theory of the fundamental interactions. It is in this sense that the conclusion of information loss and the state of affairs surrounding it are expected to point to aspects of these theories that will help one reevaluate the assumptions of reality. In a more provocative tone, it is tempting to believe that the black hole information problem may lead to progress towards the development of a new conceptual framework.

\chapter{Conclusions and perspectives}\label{conclusion} 

In this work, we reviewed the classical and semiclassical description of black holes and presented the black hole information problem. With a precise formulation of the problem, we were able to identify the many assumptions and hypotheses that give rise to it and discuss their physical plausibility. The main conclusion of this analysis is that the conjectures and hypotheses necessary for the classical description of black holes and the derivation of the semiclassical particle creation effect are well justified in the framework of quantum field theory in curved spacetime. Consequently, under the ``natural'' assumption of complete evaporation and considering the Einstein-Maxwell equation, the conclusion of information loss is in complete accord with semiclassical predictions, and does not stand in contradiction with any known phenomena. However, it is clear that by studying the evolution of the black hole, at some point one will require knowledge of physics at the Planck scale. Without the development of a complete, satisfying theory of quantum gravity, it is the author's opinion that the conclusion of information loss stands as the most adequate one given the current knowledge about gravitation.

Nevertheless, the developments presented in this work point to several open questions, and the perspective for future research is extensive. We briefly summarize below some of the most intriguing of them and possible directions for progress in black hole physics and in the development of a quantum theory of gravity.

\section{Hawking effect and information in Hawking radiation}

Although the predominant character of Hawking radiation is thermal, the geometric optics approximation, neglection of back reaction effects, and lack of details of quantum gravitational effects at the Planck scale raise questions about how much information it actually carries. Indeed, although there are clear constraints on the thermal character of Hawking radiation \cite{Visser2015}, a quantification of its precise dependence on details of the gravitational collapse can only be analyzed under the light of a complete theory of quantum gravity. Similarly, these approximations also lead to questions regarding the universality of the Hawking effect \cite{Unruh2005}, as well as the precise origin of Hawking radiation \cite{Giddings2016}. Perhaps the most notable progress towards answers to these questions comes from considerations of \textit{analogue gravity} (see, e.g., \cite{Barceló2011} for an extensive review), which investigates analogue behavior of gravitation in other physical systems. First proposed as an analysis of \textit{moving mirrors} \cite{Fulling1976, Birrell1984} and \textit{sonic analogue} black holes \cite{Unruh1981, Unruh1995}, the prediction of a thermal character in such systems can be understood as corroboration of the Hawking effect and its existence for physical black holes. Evidently, observation and experimentation of analogue black holes is a much more suitable task for investigation of the Hawking effect \cite{Modugno2021, Kolobov2021, Wilson2011}, and one can justifiably expect that it may lead to new insights into fundamental aspects of semiclassical gravitation. 

In a more general context for physical black holes, one can use the expected contributions of Hawking radiation to the observable effects in the $\gamma$-ray background, and then evaluate the possible constraints on the population of black holes in a given mass range. Most notably, one can analyze the possibility of black holes which formed in the early epochs of the universe to be candidates for \textit{dark matter} (see, e.g., \cite{Green2021}). Although experimental evidence for black holes (see, \cite{Genzel2010, Event2019, Narayan1995} for notable observations and \cite{Rees1998} for an extensive review) supports the prediction of the Hawking effect, experimental confirmation of it may be out of reach for the foreseeable future. This is clearly related to the fact that the most notable (i.e., supermassive) black holes have an extremely small associated temperature. In this sense, purely theoretical advancements in a theory of quantum gravity stand as the most promising direction for general investigations of the Hawking effect in gravitation.

\section{Degrees of freedom in black holes}

Following the interpretation of the Hawking effect as a manifestation of a physical temperature of a black hole, many more considerations rise for its thermodynamic properties. Notably, lying at the center of these considerations is the question about the degrees of freedom responsible not only for black hole entropy, but also its temperature. The first direct calculation of black hole entropy followed from considerations of \textit{Euclidean quantum gravity} \cite{Gibbons1977b,Gibbons1993}, which provided an interpretation of metric contributions to the partition function. The generality of these results has also been discussed further in \cite{Iyer1995,Hawking2005}. Different perspectives arising from \textit{quantum geometry} \cite{Ashtekar1998} and entanglement entropy
\cite{Bombelli1986, Solodukhin2011} have also been considered as interesting approaches to reproduce the ``macroscopical'' result of an entropy that is proportional to the area of an event horizon. However, arguably the most quantitatively successful calculation is the one following from \textit{string theory} \cite{Strominger1996,Maldacena1997,Horowitz1998,Hubeny2015}, which managed to reproduce the numerical factor of $1/4$. Although these developments are promising, the nature and location of the degrees of freedom of a black hole is still an open question \cite{Wald2001}. Because of this, one can also raise questions regarding Boltzmann's interpretation of entropy. In other words, in a gravitational context, perhaps entropy may simply present itself as a mathematical aspect, as it arguably does for the explanation of heat engines. 

\section{Thermodynamic aspects of gravitation}

The issue of thermodynamic properties of the gravitational interaction is also of interest, in which black hole thermodynamics presents itself simply as the most notable front \cite{Wald2001, Mathur2023}. Indeed, the idea that one may interpret the area of a black hole as its physical entropy sparked the discussion about thermodynamic properties of any null hypersurface, in the sense that one may attribute entropy to any region of a spacetime \cite{Srednicki1993}. That this entropy can be interpreted as a result of entanglement entropy (as is the case for black holes) may follow for any region of spacetime \cite{Raamsdonk2010}. Most intriguingly, one can consider a point of view in which such entropy could be interpreted as a measurement of the degrees of freedom of a more fundamental theory of gravitation, rather than just those of a region enclosed by an event horizon. An extensive review of the many thermodynamic aspects of gravity can be found in \cite{Chirco2011}.

Additionally, it is also attractive to consider a perspective in which geometrical and physical arguments that come into play in the framework of general relativity are related to the justification of the laws of thermodynamics. That is, if the principles that give rise to a geometrical interpretation of gravity are simply different manifestations from those that give rise to the laws of thermodynamics. For example, how one should interpret assumptions such as the cosmic censor conjecture or the stationary state conjecture in light of their importance for the derivation of the classical properties of black holes in ``equilibrium''.

\section{Quantum gravity and black hole information problem}

With regard to black holes and the information problem in general, the search for clues of how physics at the Planck scale could interfere with the conclusion of information loss is ongoing and extensive. Proposals for black hole quantization \cite{Stephens1994} preserving unitarity, presence of ``soft hair'' due to the \textit{gravitational wave memory effect} (see, e.g. \cite{Hawking2016,Hawking2017, Favata2010}), and \textit{emergent gravity} \cite{Verlinde2011,Verlinde2017,Visser2019} have surfaced, but arguably the most notable developments are those stated in the AdS/CFT (Anti-de Sitter/conformal field theory) conjecture \cite{Maldacena1999,Hubeny2015}. In essence, this conjecture can be understood as the assertion that a complete theory of gravity is \textit{dual}\footnote{Namely, \textit{duality} is a relation between different theories that can describe the same physical phenomena \cite{Ammon2015}. In the context of this particular duality, the word \textit{holography} is used to refer to the fact that one of the theories is defined in a higher dimension.} to a \textit{conformal field theory}\footnote{A conformal field theory is a quantum field theory that is invariant under conformal transformations \cite{Francesco1996}.} defined on the boundary of \textit{anti-de Sitter} spacetime \cite{Hawking1973}. Namely, this conjecture is an example of the \textit{holographic principle} \cite{Bousso2002}, and stands as one of the intriguing developments in the context of quantum gravity and the black hole information problem. Indeed, the main argument against information loss in this proposal lies in the fact that the conformal field theory for which a complete description of gravity is dual would not admit non-unitary evolution. Thus, since it would be possible to describe the process of black hole formation and evaporation through the dual conformal field theory, the evolution of a pure state to a mixed one would not occur, even in a complete theory of quantum gravity.Although promising, these proposals for developments on black hole physics that may shed light on the evolution of the black hole region are still lacking a precise formulation and, arguably, do not yet constitute a satisfying solution to the question of the final state of a black hole.

In fact, the effect of black holes on the quantum superposition of states has been argued to be much more radical, in the sense that decoherence (i.e., loss of quantum coherence) of quantum superposition is expected to occur for states simply in the presence of a black hole \cite{Danielson2022}. Additionally, in a more general fashion, one can also show that the same conclusion holds for any Killing horizon \cite{Danielson2023}. The full extent of these properties, which at first seem to be restricted to black holes but then are generalized to other gravitational systems, as well as the cosmic censor conjecture, some ``generalized uniqueness theorems'' for black holes, and the nature of Hadamard states, again falls on the developments of a quantum theory to describe the gravitational interaction.

\section{Concluding words}

As discussed, most of these questions and possible directions of study are related to the development of a complete theory of quantum gravity. In a more pragmatic approach, perhaps if one were able to describe the black hole region in a local manner, i.e., without the need to rely on a potential asymptotic structure of a spacetime, a more adequate description of the pertinent phenomena would surface. 
In fact, one can only guess as to what would be the full picture of a more general theory of fundamental interactions, which may even enforce a different philosophical perspective at low energy and low curvature regimes, in a very similar way as general relativity did to Newtonian gravity. For instance, one could be tempted to believe that developments in a quantum theory of gravity could shed light on the nature of the concept of entropy, whose nature even in flat spacetime is far from having the status of universal agreement. In other words, although the black hole information problem may simply be stated as a question concerning the final state of a black hole in the semiclassical framework of gravity, its resolution may impact other theories. It remains for future research to delve deeper into these questions and possible connections.

\postextual

\bibliography{References/References.bib}

\providecommand{\abntreprintinfo}[1]{%
 \citeonline{#1}}
\begin{thebibliography}{}
\providecommand{\abntrefinfo}[3]{}
\providecommand{\abntbstabout}[1]{}
\abntbstabout{v-1.9.7 }

\bibitem{Wald1984}
\abntrefinfo{WALD}{WALD}{1984}
{WALD, R.~M. \textbf{General relativity}. Chicago: The University of Chicago
  Press, 1984.}

\bibitem{Misner1973}
\abntrefinfo{MISNER, THORNE and WHEELER}{MISNER; THORNE; WHEELER}{1973}
{MISNER, C.~W.; THORNE, K.~S.; WHEELER, J.~A. \textbf{Gravitation}. San
  Francisco: W. H. Freeman, 1973.}

\bibitem{Einstein1916}
\abntrefinfo{Einstein}{EINSTEIN}{1916}
{EINSTEIN, A. Die grundlage der allgemeinen relativitätstheorie.
\textbf{Annalen der Physik}, v.~354, n.~7, p. 769--822, 1916.
Available at:
  \url{https://onlinelibrary.wiley.com/doi/abs/10.1002/andp.19163540702}.
  Accessible at: 28 Mar. 2024.}

\bibitem{Will2018}
\abntrefinfo{Will}{WILL}{2018}
{WILL, C.~M. \textbf{Theory and experiment in gravitational physics}. 2nd ed.
  New York: Cambridge University Press, 2018.}

\bibitem{Will2014}
\abntrefinfo{Will}{WILL}{2014}
{WILL, C.~M. The confrontation between general relativity and experiment.
\textbf{Living Reviews in Relativity}, v.~17, n.~4, 2014.
Available at: \url{https://doi.org/10.12942/lrr-2014-4}. Accessible at: 28 Mar.
  2024.}

\bibitem{Dyson1920}
\abntrefinfo{Dyson, Eddington and Davidson}{DYSON; EDDINGTON; DAVIDSON}{1920}
{DYSON, F.~W.; EDDINGTON, A. S.~S.; DAVIDSON, C.~R. A determination of the
  deflection of light by the sun's gravitational field, from observations made
  at the total eclipse of may 29, 1919.
\textbf{Philosophical Transactions of the Royal Society A}, v.~220, p.
  291--333, 1920.
Available at:
  \url{https://royalsocietypublishing.org/doi/abs/10.1098/rsta.1920.0009}.
  Accessible at: 28 Mar. 2024.}

\bibitem{LIGO2016}
\abntrefinfo{Abbott \textit{et al.}}{ABBOTT \textit{et al.}}{2016}
{ABBOTT, B.~P. \textit{et al.} Observation of gravitational waves from a binary
  black hole merger.
\textbf{Physical Review Letters}, American Physical Society, v.~116, p. 061102,
  Feb. 2016.
Available at: \url{https://link.aps.org/doi/10.1103/PhysRevLett.116.061102}.
  Accessible at: 28 Mar. 2024.}

\bibitem{Oppenheimer1939}
\abntrefinfo{Oppenheimer and Snyder}{OPPENHEIMER; SNYDER}{1939}
{OPPENHEIMER, J.~R.; SNYDER, H. On continued gravitational contraction.
\textbf{Physical Review}, American Physical Society, v.~56, p. 455--459, Sept.
  1939.
Available at: \url{https://link.aps.org/doi/10.1103/PhysRev.56.455}. Accessible
  at: 28 Mar. 2024.}

\bibitem{Penrose1965}
\abntrefinfo{Penrose}{PENROSE}{1965}
{PENROSE, R. Gravitational collapse and space-time singularities.
\textbf{Physical Review Letters}, American Physical Society, v.~14, p. 57--59,
  Jan. 1965.
Available at: \url{https://link.aps.org/doi/10.1103/PhysRevLett.14.57}.
  Accessible at: 28 Mar. 2024.}

\bibitem{Penrose1969}
\abntrefinfo{PENROSE}{PENROSE}{1969}
{PENROSE, R. Gravitational collapse: the role of general relativity.
\textbf{Rivista del Nuovo Cimento}, v.~1, n.~1, p. 252--276, 1969.}

\bibitem{Hawking1972a}
\abntrefinfo{Hawking}{HAWKING}{1972}
{HAWKING, S.~W. Black holes in general relativity.
\textbf{Communications in Mathematical Physics}, v.~25, p. 152--166, 1972.
Available at: \url{https://doi.org/10.1007/BF01877517}. Accessible at: 28 Mar.
  2024.}

\bibitem{Geroch1968}
\abntrefinfo{Geroch}{GEROCH}{1968}
{GEROCH, R. What is a singularity in general relativity?
\textbf{Annals of Physics}, v.~48, n.~3, p. 526--540, 1968.
Available at:
  \url{https://www.sciencedirect.com/science/article/pii/0003491668901449}.
  Accessible at: 28 Mar. 2024.}

\bibitem{Hawking1970}
\abntrefinfo{Hawking S. W.;~Penrose}{HAWKING S. W.;~PENROSE}{1970}
{HAWKING S. W.;~PENROSE, R. The singularities of gravitational collapse and
  cosmology.
\textbf{Proceedings of the Royal Society of London A}, v.~314, n.~1519, p.
  529--548, 1970.
Available at:
  \url{https://royalsocietypublishing.org/doi/abs/10.1098/rspa.1970.0021}.
  Accessible at: 28 Mar. 2024.}

\bibitem{Israel1967}
\abntrefinfo{Israel}{ISRAEL}{1967}
{ISRAEL, W. Event horizons in static vacuum space-times.
\textbf{Physical Review}, American Physical Society, v.~164, p. 1776--1779,
  Dec. 1967.
Available at: \url{https://link.aps.org/doi/10.1103/PhysRev.164.1776}.
  Accessible at: 28 Mar. 2024.}

\bibitem{Carter1971}
\abntrefinfo{Carter}{CARTER}{1971}
{CARTER, B. Axisymmetric black hole has only two degrees of freedom.
\textbf{Physical Review Letters}, American Physical Society, v.~26, p.
  331--333, Feb. 1971.
Available at: \url{https://link.aps.org/doi/10.1103/PhysRevLett.26.331}.
  Accessible at: 28 Mar. 2024.}

\bibitem{Penrose1971}
\abntrefinfo{PENROSE and FLOYD}{PENROSE; FLOYD}{1971}
{PENROSE, R.; FLOYD, R. Extraction of rotational energy from a black hole.
\textbf{Nature Physical Science}, v.~229, p. 177--179, 1971.
Available at: \url{https://doi.org/10.1038/physci229177a0}. Accessible at: 28
  Mar. 2024.}

\bibitem{Hawking1971}
\abntrefinfo{Hawking}{HAWKING}{1971}
{HAWKING, S.~W. Gravitational radiation from colliding black holes.
\textbf{Physical Review Letters}, American Physical Society, v.~26, p.
  1344--1346, May 1971.
Available at: \url{https://link.aps.org/doi/10.1103/PhysRevLett.26.1344}.
  Accessible at: 28 Mar. 2024.}

\bibitem{Hawking1973}
\abntrefinfo{HAWKING S. W.;~ELLIS}{HAWKING S. W.;~ELLIS}{1973}
{HAWKING S. W.;~ELLIS, G. F.~R. \textbf{The large scale structure of
  space-time}. Cambridge: Cambridge University Press, 1973.}

\bibitem{Bardeen1973}
\abntrefinfo{BARDEEN, CARTER and HAWKING}{BARDEEN; CARTER; HAWKING}{1973}
{BARDEEN, J.~M.; CARTER, B.; HAWKING, S.~W. The four laws of black hole
  mechanics.
\textbf{Communications in Mathematical Physics}, v.~31, n.~2, p. 161–170,
  June 1973.
Available at: \url{https://doi.org/10.1007/BF01645742}. Accessible at: 28 Mar.
  2024.}

\bibitem{Heusler1996}
\abntrefinfo{Heusler}{HEUSLER}{1996}
{HEUSLER, M. \textbf{Black hole uniqueness theorems}. Cambridge: Cambridge
  University Press, 1996.}

\bibitem{Bekenstein1972}
\abntrefinfo{BEKENSTEIN}{BEKENSTEIN}{1972}
{BEKENSTEIN, J.~D. Black holes and the second law.
\textbf{Lettere al Nuovo Cimento}, v.~4, n.~15, p. 737--740, June 1972.
Available at: \url{https://doi.org/10.1007/BF02757029}. Accessible at: 28 Mar.
  2024.}

\bibitem{Bekenstein1973}
\abntrefinfo{Bekenstein}{BEKENSTEIN}{1973}
{BEKENSTEIN, J.~D. Black holes and entropy.
\textbf{Physical Review D}, American Physical Society, v.~7, p. 2333--2346,
  Apr. 1973.
Available at: \url{https://link.aps.org/doi/10.1103/PhysRevD.7.2333}.
  Accessible at: 28 Mar. 2024.}

\bibitem{Weinberg1995}
\abntrefinfo{Weinberg}{WEINBERG}{1995}
{WEINBERG, S. \textbf{The quantum theory of fields}. Cambridge: Cambridge
  University Press, 1995. v.~1.}

\bibitem{Mandl1984}
\abntrefinfo{Mandl and Shaw}{MANDL; SHAW}{1984}
{MANDL, F.; SHAW, G. \textbf{Quantum field theory}. Michigan: John Wiley \&
  Sons Inc, 1984.}

\bibitem{Nagashima2010}
\abntrefinfo{Nagashima and Nambu}{NAGASHIMA; NAMBU}{2010}
{NAGASHIMA, Y.; NAMBU, Y. \textbf{Elementary particle physics}. Chicago: Wiley,
  2010. v.~1.}

\bibitem{Peskin1995}
\abntrefinfo{Peskin and Schroeder}{PESKIN; SCHROEDER}{1995}
{PESKIN, M.~E.; SCHROEDER, D.~V. \textbf{An introduction to quantum field
  theory}. Boulder: Westview Press, 1995.  (Frontiers in Physics).}

\bibitem{Fan2022}
\abntrefinfo{Fan \textit{et al.}}{FAN \textit{et al.}}{2023}
{FAN, X. \textit{et al.} Measurement of the electron magnetic moment.
\textbf{Physical Review Letters}, American Physical Society, v.~130, p. 071801,
  Feb. 2023.
Available at: \url{https://link.aps.org/doi/10.1103/PhysRevLett.130.071801}.
  Accessible at: 28 Mar. 2024.}

\bibitem{Fulling1973}
\abntrefinfo{Fulling}{FULLING}{1973}
{FULLING, S.~A. Nonuniqueness of canonical field quantization in {R}iemannian
  space-time.
\textbf{Physical Review D}, American Physical Society, v.~7, p. 2850--2862, May
  1973.
Available at: \url{https://link.aps.org/doi/10.1103/PhysRevD.7.2850}.
  Accessible at: 28 Mar. 2024.}

\bibitem{Davies1975}
\abntrefinfo{Davies}{DAVIES}{1975}
{DAVIES, P. C.~W. Scalar production in {S}chwarzschild and {R}indler metrics.
\textbf{Journal of Physics A}, v.~8, n.~4, p.~609, Apr. 1975.
Available at: \url{https://dx.doi.org/10.1088/0305-4470/8/4/022}. Accessible
  at: 28 Mar. 2024.}

\bibitem{Unruh1976}
\abntrefinfo{Unruh}{UNRUH}{1976}
{UNRUH, W.~G. Notes on black-hole evaporation.
\textbf{Physical Review D}, American Physical Society, v.~14, p. 870--892, Aug.
  1976.
Available at: \url{https://link.aps.org/doi/10.1103/PhysRevD.14.870}.
  Accessible at: 28 Mar. 2024.}

\bibitem{Parker1968}
\abntrefinfo{Parker}{PARKER}{1968}
{PARKER, L. Particle creation in expanding universes.
\textbf{Physical Review Letters}, American Physical Society, v.~21, p.
  562--564, Aug. 1968.
Available at: \url{https://link.aps.org/doi/10.1103/PhysRevLett.21.562}.
  Accessible at: 28 Mar. 2024.}

\bibitem{Parker1969}
\abntrefinfo{Parker}{PARKER}{1969}
{PARKER, L. Quantized fields and particle creation in expanding universes. {I}.
\textbf{Physical Review}, American Physical Society, v.~183, p. 1057--1068,
  July 1969.
Available at: \url{https://link.aps.org/doi/10.1103/PhysRev.183.1057}.
  Accessible at: 28 Mar. 2024.}

\bibitem{Parker2009}
\abntrefinfo{Parker and Toms}{PARKER; TOMS}{2009}
{PARKER, L.; TOMS, D. \textbf{Quantum field theory in curved spacetime}. New
  York: Cambridge University Press, 2009.  (Cambridge monographs on
  mathematical physics).}

\bibitem{Wald1994}
\abntrefinfo{WALD}{WALD}{1994}
{WALD, R.~M. \textbf{Quantum field theory in curved spacetime and black hole
  thermodynamics}. Chicago: The University of Chicago Press, 1994.}

\bibitem{Birrell1984}
\abntrefinfo{Birrell and Davies}{BIRRELL; DAVIES}{1984}
{BIRRELL, N.~D.; DAVIES, P. C.~W. \textbf{Quantum fields in curved space}.
  Cambridge: Cambridge University Press, 1984.  (Cambridge monographs on
  mathematical physics).}

\bibitem{Fulling1989}
\abntrefinfo{Fulling}{FULLING}{1989}
{FULLING, S.~A. \textbf{Aspects of quantum field theory in curved space-time}.
  Cambridge: Cambridge University Press, 1989.  (London mathematical society
  student texts).}

\bibitem{Hawking1974}
\abntrefinfo{HAWKING}{HAWKING}{1974}
{HAWKING, S.~W. Black hole explosions?
\textbf{Nature}, v.~248, n.~5443, p. 30--31, 1974.
Available at: \url{https://doi.org/10.1038/248030a0}. Accessible at: 28 Mar.
  2024.}

\bibitem{Hawking1975}
\abntrefinfo{HAWKING}{HAWKING}{1975}
{HAWKING, S.~W. Particle creation by black holes.
\textbf{Communications in Mathematical Physics}, v.~43, n.~3, p. 199--220, Aug.
  1975.
Available at: \url{https://doi.org/10.1007/BF02345020}. Accessible at: 28 Mar.
  2024.}

\bibitem{Hawking1976}
\abntrefinfo{Hawking}{HAWKING}{1976}
{HAWKING, S.~W. Breakdown of predictability in gravitational collapse.
\textbf{Physical Review D}, American Physical Society, v.~14, p. 2460--2473,
  Nov. 1976.
Available at: \url{https://link.aps.org/doi/10.1103/PhysRevD.14.2460}.
  Accessible at: 28 Mar. 2024.}

\bibitem{Unruh2017}
\abntrefinfo{Unruh and Wald}{UNRUH; WALD}{2017}
{UNRUH, W.~G.; WALD, R.~M. Information loss.
\textbf{Reports on Progress in Physics}, IOP Publishing, v.~80, n.~9, p.
  092002, July 2017.
Available at: \url{https://dx.doi.org/10.1088/1361-6633/aa778e}. Accessible at:
  28 Mar. 2024.}

\bibitem{Hanoch2023}
\abntrefinfo{Gutfreund and Renn}{GUTFREUND; RENN}{2023}
{GUTFREUND, H.; RENN, J. \textbf{The Einsteinian revolution}. Princeton:
  Princeton University Press, 2023.}

\bibitem{Wheeler2000}
\abntrefinfo{Ford and Wheeler}{FORD; WHEELER}{2000}
{FORD, K.~W.; WHEELER, J.~A. \textbf{Geons, black holes, and quantum foam}. New
  York: W. W. Norton \& Company, 2000.}

\bibitem{Greiner1996}
\abntrefinfo{Greiner}{GREINER}{1996}
{GREINER, J.~R. W. \textbf{Field quantization}. Berlin: Springer, 1996.}

\bibitem{Dowker}
\abntrefinfo{Dowker}{DOWKER}{}
{DOWKER, F.
\textbf{Black holes}.
Available at:
  \url{http://sns.ias.edu/~bkocsis/Assets/Teaching/BH2015/dowker.pdf}.
  Accessible at: 8 June 2023.}

\bibitem{Komar1959}
\abntrefinfo{Komar}{KOMAR}{1959}
{KOMAR, A. Covariant conservation laws in general relativity.
\textbf{Physical Review}, American Physical Society, v.~113, p. 934--936, Feb.
  1959.
Available at: \url{https://link.aps.org/doi/10.1103/PhysRev.113.934}.
  Accessible at: 28 Mar. 2024.}

\bibitem{Komar1963}
\abntrefinfo{Komar}{KOMAR}{1963}
{KOMAR, A. Positive-definite energy density and global consequences for general
  relativity.
\textbf{Physical Review}, American Physical Society, v.~129, p. 1873--1876,
  Feb. 1963.
Available at: \url{https://link.aps.org/doi/10.1103/PhysRev.129.1873}.
  Accessible at: 28 Mar. 2024.}

\bibitem{Penrose1972}
\abntrefinfo{Penrose}{PENROSE}{1972}
{PENROSE, R. \textbf{Techniques of differential topology in relativity}.
  Philadelphia: SIAM, 1972.}

\bibitem{Geroch1970}
\abntrefinfo{Geroch}{GEROCH}{1970}
{GEROCH, R. Domain of dependence.
\textbf{Journal of Mathematical Physics}, v.~11, n.~2, p. 437--449, Feb. 1970.
Available at: \url{https://doi.org/10.1063/1.1665157}. Accessible at: 28 Mar.
  2024.}

\bibitem{Geroch1979}
\abntrefinfo{Geroch R.;~Horowitz}{GEROCH R.;~HOROWITZ}{1979}
{GEROCH R.;~HOROWITZ, G.~T. Global structure of spacetimes. \textit{In}:
  HAWKING S. W.;~ISRAEL, W. (ed.). \textbf{General relativity}. Cambridge:
  Cambridge University Press, 1979. p. 212--293.}

\bibitem{Galloway2014}
\abntrefinfo{Galloway}{GALLOWAY}{}
{GALLOWAY, G.~J.
\textbf{Notes on Lorentzian causality}.
Available at:
  \url{https://www.math.miami.edu/~galloway/vienna-course-notes.pdf}.
  Accessible at: 8 Feb. 2024.}

\bibitem{Poisson2004}
\abntrefinfo{POISSON}{POISSON}{2004}
{POISSON, E. \textbf{A relativist's toolkit}. Cambridge: Cambridge University
  Press, 2004.}

\bibitem{Magnus1999}
\abntrefinfo{Magnus and Neudecker}{MAGNUS; NEUDECKER}{1999}
{MAGNUS, J.~R.; NEUDECKER, H. \textbf{Matrix differential calculus with
  applications in statistics and econometrics}. Rev. Chichester: John Wiley,
  1999.  (Wiley series in probability and statistics).}

\bibitem{Frolov1998}
\abntrefinfo{Frolov P. V.;~Novikov}{FROLOV P. V.;~NOVIKOV}{1998}
{FROLOV P. V.;~NOVIKOV, I.~D. \textbf{Black hole physics}. Berlin: Springer
  Dordrecht, 1998.}

\bibitem{Penrose1963}
\abntrefinfo{Penrose}{PENROSE}{1963}
{PENROSE, R. Asymptotic properties of fields and space-times.
\textbf{Physical Review Letters}, American Physical Society, v.~10, p. 66--68,
  Jan. 1963.
Available at: \url{https://link.aps.org/doi/10.1103/PhysRevLett.10.66}.
  Accessible at: 28 Mar. 2024.}

\bibitem{Townsend1997}
\abntrefinfo{Townsend}{TOWNSEND}{}
{TOWNSEND, P.~K.
\textbf{Black holes}.
Available at: \url{https://arxiv.org/abs/gr-qc/9707012v1arXiv:gr-qc/9707012v1}.
  Accessible at: 1 Sept. 2022.}

\bibitem{Sakurai1994}
\abntrefinfo{Sakurai}{SAKURAI}{1994}
{SAKURAI, J.~J. \textbf{Modern quantum mechanics}. Rev. ed. San Francisco:
  Addison-Wesley Pub. Co, 1994.}

\bibitem{Penrose1973}
\abntrefinfo{Penrose}{PENROSE}{1973}
{PENROSE, R. Naked singularities.
\textbf{Annals of the New York Academy of Sciences}, v.~224, n.~1, p. 125--134,
  1973.
Available at:
  \url{https://nyaspubs.onlinelibrary.wiley.com/doi/abs/10.1111/j.1749-6632.1973.tb41447.x}.
  Accessible at: 28 Mar. 2024.}

\bibitem{Wald1997a}
\abntrefinfo{Wald}{WALD}{}
{WALD, R.~M.
\textbf{Gravitational collapse and cosmic censorship}.
Available at: \url{https://doi.org/10.48550/arXiv.gr-qc/9710068}. Accessible
  at: 28 Mar. 2024.}

\bibitem{Landsman2021}
\abntrefinfo{Landsman}{LANDSMAN}{2021}
{LANDSMAN, K. \textbf{Foundations of general relativity}. Nijmegen: Radboud
  University Press, 2021.}

\bibitem{Chruściel2001}
\abntrefinfo{ChruŚciel \textit{et al.}}{CHRUŚCIEL \textit{et al.}}{2001}
{CHRUŚCIEL, P. \textit{et al.} Regularity of horizons and the area theorem.
\textbf{Annales Henri Poincaré}, n.~2, p. 109--178, 2001.
Available at: \url{https://doi.org/10.48550/arXiv.gr-qc/0001003}. Accessible
  at: 28 Mar. 2024.}

\bibitem{Giulini2003}
\abntrefinfo{Giulini}{GIULINI}{2003}
{GIULINI, D. Local version of the area theorem (on a question by {G}. 't
  {H}ooft). \textit{In}:  HEHL, F.; KIEFER, C.; METZLER, R. (ed.).
  \textbf{Black holes}. Berlin: Springer, 2003.
Available at: \url{https://doi.org/10.1007/978-3-540-49535-2\_9}. Accessible
  at: 28 Mar. 2024.}

\bibitem{Stewart2007}
\abntrefinfo{Stewart}{STEWART}{2007}
{STEWART, J. \textbf{Calculus}. 6th ed. Belmont: Brooks Cole, 2007.  (Stewart's
  Calculus Series).}

\bibitem{Rácz1996}
\abntrefinfo{RÁcz I.;~Wald}{RÁCZ I.;~WALD}{1996}
{RÁCZ I.;~WALD, R. Global extensions of spacetimes describing asymptotic final
  states of black hole.
\textbf{Classical and Quantum Gravity}, v.~13, n.~3, p.~539, Mar 1996.
Available at: \url{https://doi.org/10.1088/0264-9381/13/3/017}. Accessible at:
  28 Mar. 2024.}

\bibitem{Wald2022}
\abntrefinfo{Wald}{WALD}{2022}
{WALD, R. \textbf{Advanced classical electromagnetism}. Princeton: Princeton
  University Press, 2022.}

\bibitem{Giulini2009}
\abntrefinfo{Giulini}{GIULINI}{2009}
{GIULINI, D. The {K}err spacetime - a brief introduction. \textit{In}:
  WILTSHIRE, D.~L.; VISSER, M.; SCOTT, S.~M. (ed.). \textbf{The Kerr
  spacetime}. New York: Cambridge University Press, 2009.}

\bibitem{Robinson2009}
\abntrefinfo{RObinson}{ROBINSON}{2009}
{ROBINSON, D.~C. Four decades of black holes uniqueness theorems. \textit{In}:
  WILTSHIRE, D.~L.; VISSER, M.; SCOTT, S.~M. (ed.). \textbf{The Kerr
  spacetime}. New York: Cambridge University Press, 2009.}

\bibitem{Hartle1972}
\abntrefinfo{Hartle and Hawking}{HARTLE; HAWKING}{1972}
{HARTLE, J.~B.; HAWKING, S.~W. Solutions of the {E}instein-{M}axwell equations
  with many black holes.
\textbf{Communications in Mathematical Physics}, v.~26, n.~2, p. 87--101, June
  1972.
Available at: \url{https://doi.org/10.1007/BF01645696}. Accessible at: 28 Mar.
  2024.}

\bibitem{Neugebaur2009}
\abntrefinfo{NEUGEBAUER G.;~HENNIG}{NEUGEBAUER G.;~HENNIG}{2009}
{NEUGEBAUER G.;~HENNIG, J. Non-existence of stationary two-black-hole
  configurations.
\textbf{General Relativity and Gravitation}, v.~41, n.~9, p. 2113--2130, 2009.
Available at: \url{https://doi.org/10.1007/s10714-009-0840-8}. Accessible at:
  28 Mar. 2024.}

\bibitem{Ashtekar1978}
\abntrefinfo{Ashtekar and Hansen}{ASHTEKAR; HANSEN}{1978}
{ASHTEKAR, A.; HANSEN, R.~O. A unified treatment of null and spatial infinity
  in general relativity. {I}. universal structure, asymptotic symmetries, and
  conserved quantities at spatial infinity.
\textbf{Journal of Mathematical Physics}, v.~19, n.~7, p. 1542--1566, Aug.
  1978.
ISSN 0022-2488.
Available at: \url{https://doi.org/10.1063/1.523863}. Accessible at: 28 Mar.
  2024.}

\bibitem{ONeill1995}
\abntrefinfo{O’Neill}{O’NEILL}{1995}
{O’NEILL, B. \textbf{The geometry of Kerr black holes}. Wellesley: A K
  Peters, 1995.}

\bibitem{Chruściel2020}
\abntrefinfo{ChruŚciel}{CHRUŚCIEL}{2020}
{CHRUŚCIEL, P.~T. \textbf{Geometry of black holes}. Oxford: Oxford University
  Press, 2020.}

\bibitem{Farrugia1979}
\abntrefinfo{Farrugia and H{\'a}j{\'i}cek}{FARRUGIA; H{\'A}J{\'I}CEK}{1979}
{FARRUGIA, C.~J.; H{\'A}J{\'I}CEK, P. The third law of black hole mechanics: a
  counterexample.
\textbf{Communications in Mathematical Physics}, v.~68, p. 291--299, 1979.
Available at: \url{https://doi.org/10.1007/BF01221129}. Accessible at: 28 Mar.
  2024.}

\bibitem{Sullivan1980}
\abntrefinfo{Sullivan and Israel}{SULLIVAN; ISRAEL}{1980}
{SULLIVAN, B.; ISRAEL, W. The third law of black hole mechanics: what is it?
\textbf{Physics Letters A}, v.~79, n.~5, p. 371--372, 1980.
Available at: \url{https://doi.org/10.1016/0375-9601(80)90266-2}. Accessible
  at: 28 Mar. 2024.}

\bibitem{Israel1986}
\abntrefinfo{Israel}{ISRAEL}{1986}
{ISRAEL, W. Third law of black-hole dynamics: A formulation and proof.
\textbf{Physical Review Letters}, American Physical Society, v.~57, p.
  397--399, July 1986.
Available at: \url{https://link.aps.org/doi/10.1103/PhysRevLett.57.397}.
  Accessible at: 28 Mar. 2024.}

\bibitem{Wald1974}
\abntrefinfo{Wald}{WALD}{1974}
{WALD, R. Gedanken experiments to destroy a black hole.
\textbf{Annals of Physics}, v.~82, n.~2, p. 548--556, 1974.
Available at: \url{https://doi.org/10.1016/0003-4916(74)90125-0}. Accessible
  at: 28 Mar. 2024.}

\bibitem{Sorce2017}
\abntrefinfo{Sorce and Wald}{SORCE; WALD}{2017}
{SORCE, J.; WALD, R.~M. Gedanken experiments to destroy a black hole. {II}.
  {K}err-{N}ewman black holes cannot be overcharged or overspun.
\textbf{Physical Review D}, American Physical Society, v.~96, p. 104014, Nov.
  2017.
Available at: \url{https://link.aps.org/doi/10.1103/PhysRevD.96.104014}.
  Accessible at: 28 Mar. 2024.}

\bibitem{Wald1972}
\abntrefinfo{Wald}{WALD}{1972}
{WALD, R. Gravitational spin interaction.
\textbf{Physical Review D}, American Physical Society, v.~6, p. 406--413, July
  1972.
Available at: \url{https://link.aps.org/doi/10.1103/PhysRevD.6.406}. Accessible
  at: 28 Mar. 2024.}

\bibitem{Butkov1973}
\abntrefinfo{Butkov}{BUTKOV}{1973}
{BUTKOV, E. \textbf{Mathematical physics}. London: Addison-Wesley Educational
  Publishers Inc, 1973.  (World Student).}

\bibitem{Greiner2000}
\abntrefinfo{Greiner}{GREINER}{2000}
{GREINER, D.~B. W. \textbf{Relativistic quantum mechanics}. 3rd ed. Berlin:
  Springer, 2000.}

\bibitem{Crispino2007}
\abntrefinfo{Crispino, Higuchi and Matsas}{CRISPINO; HIGUCHI; MATSAS}{2008}
{CRISPINO, L. C.~B.; HIGUCHI, A.; MATSAS, G. E.~A. The {U}nruh effect and its
  applications.
\textbf{Reviews of Modern Physics}, American Physical Society, v.~80, p.
  787--838, July 2008.
Available at: \url{https://link.aps.org/doi/10.1103/RevModPhys.80.787}.
  Accessible at: 28 Mar. 2024.}

\bibitem{Fabbri2005}
\abntrefinfo{Fabbri and Navarro-Salas}{FABBRI; NAVARRO-SALAS}{2005}
{FABBRI, A.; NAVARRO-SALAS, J. \textbf{Modeling black hole evaporation}.
  London: Imperial College Press, 2005.}

\bibitem{Wald1975}
\abntrefinfo{Wald}{WALD}{1975}
{WALD, R.~M. {On particle creation by black holes}.
\textbf{Communications in Mathematical Physics}, Springer, v.~45, n.~1, p. 9 --
  34, 1975.
Available at: \url{https://doi.org/10.1007/BF01609863}. Accessible at: 28 Mar.
  2024.}

\bibitem{Ford1975}
\abntrefinfo{Ford}{FORD}{1975}
{FORD, L.~H. Quantization of a scalar field in the {K}err spacetime.
\textbf{Physical Review D}, American Physical Society, v.~12, p. 2963--2977,
  Nov. 1975.
Available at: \url{https://link.aps.org/doi/10.1103/PhysRevD.12.2963}.
  Accessible at: 28 Mar. 2024.}

\bibitem{Iyer1979}
\abntrefinfo{Iyer and Kumar}{IYER; KUMAR}{1979}
{IYER, B.~R.; KUMAR, A. Hawking radiation of scalar and {D}irac quanta from a
  {K}err black hole.
\textbf{Pramana Journal of Physics}, v.~12, p. 103--120, 1979.
Available at: \url{https://doi.org/10.1007/BF02846113}. Accessible at: 28 Mar.
  2024.}

\bibitem{Frolov2011}
\abntrefinfo{Frolov and Zelnikov}{FROLOV; ZELNIKOV}{2011}
{FROLOV, V.~P.; ZELNIKOV, A. \textbf{Introduction to black hole physics}.
  Oxford: Oxford University Press, 2011.}

\bibitem{Churchill1974}
\abntrefinfo{Churchill}{CHURCHILL}{1974}
{CHURCHILL, R.~V. \textbf{Complex variables and applications}. 3rd ed. New
  York: McGraw-Hill, 1974.}

\bibitem{Swendsen2020}
\abntrefinfo{SWENDSEN}{SWENDSEN}{2020}
{SWENDSEN, R.~H. \textbf{An introduction to statistical mechanics and
  thermodynamics}. 2nd ed. New York: Oxford University Press, 2020.}

\bibitem{Bekenstein1974}
\abntrefinfo{Bekenstein}{BEKENSTEIN}{1974}
{BEKENSTEIN, J.~D. Generalized second law of thermodynamics in black-hole
  physics.
\textbf{Physical Review D}, American Physical Society, v.~9, p. 3292--3300,
  June 1974.
Available at: \url{https://link.aps.org/doi/10.1103/PhysRevD.9.3292}.
  Accessible at: 28 Mar. 2024.}

\bibitem{Bekenstein1980}
\abntrefinfo{BEKENSTEIN}{BEKENSTEIN}{1980}
{BEKENSTEIN, J.~D. Black-hole thermodynamics.
\textbf{Physics Today}, v.~33, n.~1, p. 24--31, 1980.
Available at: \url{https://doi.org/10.1063/1.2913906}. Accessible at: 28 Mar.
  2024.}

\bibitem{Wald2001}
\abntrefinfo{WALD}{WALD}{2001}
{WALD, R. The thermodynamics of black holes.
\textbf{Living Reviews in Relativity}, v.~4, n.~1, p.~6, July 2001.
Available at: \url{https://doi.org/10.12942/lrr-2001-6}. Accessible at: 28 Mar.
  2024.}

\bibitem{Matsas2005}
\abntrefinfo{Matsas and Rocha}{MATSAS; ROCHA}{2005}
{MATSAS, G. E.~A.; ROCHA da S. A.~R. New thought experiment to test the
  generalized second law of thermodynamics.
\textbf{Physical Review D}, American Physical Society, v.~71, p. 107501, May
  2005.
Available at: \url{https://link.aps.org/doi/10.1103/PhysRevD.71.107501}.
  Accessible at: 28 Mar. 2024.}

\bibitem{Casimir1948}
\abntrefinfo{Casimir.}{CASIMIR.}{1948}
{CASIMIR., H. B.~G. On the attraction between two perfectly conducting plates.
\textbf{Proceedings of the Koninklijke Nederlandse Akademie van Wetenschappen},
  v.~51, p. 793--795, 1948.
Available at:
  \url{https://www.mit.edu/~kardar/research/seminars/Casimir/Casimir1948.pdf}.
  Accessible at: 28 Mar. 2024.}

\bibitem{Fewster2012}
\abntrefinfo{Fewster}{FEWSTER}{2012}
{FEWSTER, C.~J.
\textbf{Lectures on quantum energy inequalities}, 2012.
Available at: \url{https://doi.org/10.48550/arXiv.1208.5399}. Accessible at: 28
  Mar. 2024.}

\bibitem{Iyer1994}
\abntrefinfo{Iyer and Wald}{IYER; WALD}{1994}
{IYER, V.; WALD, R.~M. Some properties of the {N}oether charge and a proposal
  for dynamical black hole entropy.
\textbf{Physical Review D}, American Physical Society, v.~50, p. 846--864, July
  1994.
Available at: \url{https://link.aps.org/doi/10.1103/PhysRevD.50.846}.
  Accessible at: 28 Mar. 2024.}

\bibitem{Wald1993}
\abntrefinfo{Wald}{WALD}{1993}
{WALD, R.~M. Black hole entropy is the {N}oether charge.
\textbf{Physical Review D}, American Physical Society, v.~48, p. R3427--R3431,
  Oct. 1993.
Available at: \url{https://link.aps.org/doi/10.1103/PhysRevD.48.R3427}.
  Accessible at: 28 Mar. 2024.}

\bibitem{Rácz1992}
\abntrefinfo{RÁCZ I.;~WALD}{RÁCZ I.;~WALD}{1992}
{RÁCZ I.;~WALD, R.~M. Extensions of spacetimes with {K}illing horizons.
\textbf{Classical and Quantum Gravity}, v.~9, n.~12, p. 2643--2656, 1992.
Available at:
  \url{https://iopscience.iop.org/article/10.1088/0264-9381/9/12/008}.
  Accessible at: 28 Mar. 2024.}

\bibitem{Iyer1995}
\abntrefinfo{IYER and WALD}{IYER; WALD}{1995}
{IYER, V.; WALD, R.~M. A comparison of {N}oether charge and {E}uclidean methods
  for computing the entropy of stationary black holes.
\textbf{Physical Review D}, n.~52, p. 4430--4439, Oct. 1995.
Available at: \url{https://link.aps.org/doi/10.1103/PhysRevD.52.4430}.
  Accessible at: 28 Mar. 2024.}

\bibitem{Gibbons1977}
\abntrefinfo{GIBBONS G. W.;~HAWKING}{GIBBONS G. W.;~HAWKING}{1977}
{GIBBONS G. W.;~HAWKING, S.~W. Cosmological event horizons, thermodynamics, and
  particle creation.
\textbf{Physical Review D}, v.~15, n.~10, p. 2738--2751, May 1977.
Available at: \url{https://link.aps.org/doi/10.1103/PhysRevD.15.2738}.
  Accessible at: 28 Mar. 2024.}

\bibitem{Hawking1995}
\abntrefinfo{Hawking, Horowitz and Ross}{HAWKING; HOROWITZ; ROSS}{1995}
{HAWKING, S.~W.; HOROWITZ, G.~T.; ROSS, S.~F. Entropy, area, and black hole
  pairs.
\textbf{Physical Review D}, American Physical Society, v.~51, p. 4302--4314,
  Apr. 1995.
Available at: \url{https://link.aps.org/doi/10.1103/PhysRevD.51.4302}.
  Accessible at: 28 Mar. 2024.}

\bibitem{Jacobson2003}
\abntrefinfo{JACOBSON T.;~PARENTANI}{JACOBSON T.;~PARENTANI}{2003}
{JACOBSON T.;~PARENTANI, R. Horizon entropy.
\textbf{Foundations of Physics}, v.~33, n.~2, p. 323--348, Feb. 2003.
Available at: \url{https://doi.org/10.1023/A:1023785123428}. Accessible at: 28
  Mar. 2024.}

\bibitem{Padmanabhan2010}
\abntrefinfo{Padmanabhan}{PADMANABHAN}{2010}
{PADMANABHAN, T. Thermodynamical aspects of gravity: new insights.
\textbf{Reports on Progress in Physics}, v.~73, n.~4, p. 046901, Mar 2010.
Available at: \url{https://dx.doi.org/10.1088/0034-4885/73/4/046901}.
  Accessible at: 28 Mar. 2024.}

\bibitem{Guedens2012}
\abntrefinfo{Guedens, Jacobson and Sarkar}{GUEDENS; JACOBSON; SARKAR}{2012}
{GUEDENS, R.; JACOBSON, T.; SARKAR, S. Horizon entropy and higher curvature
  equations of state.
\textbf{Physical Review D}, American Physical Society, v.~85, p. 064017, Mar
  2012.
Available at: \url{https://link.aps.org/doi/10.1103/PhysRevD.85.064017}.
  Accessible at: 28 Mar. 2024.}

\bibitem{Jacobson1995}
\abntrefinfo{Jacobson}{JACOBSON}{1995}
{JACOBSON, T. Thermodynamics of spacetime: The {E}instein equation of state.
\textbf{Physical Review Letters}, American Physical Society, v.~75, p.
  1260--1263, Aug. 1995.
Available at: \url{https://link.aps.org/doi/10.1103/PhysRevLett.75.1260}.
  Accessible at: 28 Mar. 2024.}

\bibitem{Page1976}
\abntrefinfo{Page}{PAGE}{1976a}
{PAGE, D.~N. Particle emission rates from a black hole: Massless particles from
  an uncharged, nonrotating hole.
\textbf{Physical Review D}, American Physical Society, v.~13, p. 198--206, Jan.
  1976.
Available at: \url{https://link.aps.org/doi/10.1103/PhysRevD.13.198}.
  Accessible at: 28 Mar. 2024.}

\bibitem{Page1976a}
\abntrefinfo{Page}{PAGE}{1976b}
{PAGE, D.~N. Particle emission rates from a black hole. {II}. {M}assless
  particles from a rotating hole.
\textbf{Physical Review D}, American Physical Society, v.~14, p. 3260--3273,
  Dec. 1976.
Available at: \url{https://link.aps.org/doi/10.1103/PhysRevD.14.3260}.
  Accessible at: 28 Mar. 2024.}

\bibitem{Page1977}
\abntrefinfo{Page}{PAGE}{1977}
{PAGE, D.~N. Particle emission rates from a black hole. {III}. {C}harged
  leptons from a nonrotating hole.
\textbf{Physical Review D}, American Physical Society, v.~16, p. 2402--2411,
  Oct. 1977.
Available at: \url{https://link.aps.org/doi/10.1103/PhysRevD.16.2402}.
  Accessible at: 28 Mar. 2024.}

\bibitem{Wald1984b}
\abntrefinfo{Wald}{WALD}{1984}
{WALD, R.~M. Black holes, singularities and predictability. \textit{In}:
  CHRISTENSEN, S.~M. (ed.). \textbf{Quantum theory of gravity}. Bristol: Adam
  Hilger Ltd., 1984. p. 160--168.}

\bibitem{Kodama1979}
\abntrefinfo{Kodama}{KODAMA}{1979}
{KODAMA, H. {Inevitability of a Naked Singularity associated with the Black
  Hole Evaporation}.
\textbf{Progress of Theoretical Physics}, v.~62, n.~5, p. 1434--1435, 11 1979.
Available at: \url{https://doi.org/10.1143/PTP.62.1434}. Accessible at: 28 Mar.
  2024.}

\bibitem{Lesourd2019}
\abntrefinfo{Lesourd}{LESOURD}{2018}
{LESOURD, M. Causal structure of evaporating black holes.
\textbf{Classical and Quantum Gravity}, IOP Publishing, v.~36, n.~2, p. 025007,
  Dec. 2018.
Available at: \url{https://dx.doi.org/10.1088/1361-6382/aaf5f8}. Accessible at:
  28 Mar. 2024.}

\bibitem{Bekenstein1994}
\abntrefinfo{Bekenstein}{BEKENSTEIN}{1994}
{BEKENSTEIN, J.~D. Entropy bounds and black hole remnants.
\textbf{Physical Review D}, American Physical Society, v.~49, p. 1912--1921,
  Feb. 1994.
Available at: \url{https://link.aps.org/doi/10.1103/PhysRevD.49.1912}.
  Accessible at: 28 Mar. 2024.}

\bibitem{Almheiri2013}
\abntrefinfo{Almheiri \textit{et al.}}{ALMHEIRI \textit{et al.}}{2013}
{ALMHEIRI, A. \textit{et al.} Black holes: complementarity or firewalls?
\textbf{Journal of High Energy Physics}, v.~2013, n.~62, 2013.
Available at: \url{https://doi.org/10.1007/JHEP02(2013)062}. Accessible at: 28
  Mar. 2024.}

\bibitem{Susskind1993}
\abntrefinfo{Susskind, Thorlacius and Uglum}{SUSSKIND; THORLACIUS; UGLUM}{1993}
{SUSSKIND, L.; THORLACIUS, L.; UGLUM, J. The stretched horizon and black hole
  complementarity.
\textbf{Physical Review D}, American Physical Society, v.~48, p. 3743--3761,
  Oct. 1993.
Available at: \url{https://link.aps.org/doi/10.1103/PhysRevD.48.3743}.
  Accessible at: 28 Mar. 2024.}

\bibitem{Susskind1994}
\abntrefinfo{Susskind and Thorlacius}{SUSSKIND; THORLACIUS}{1994}
{SUSSKIND, L.; THORLACIUS, L. Gedanken experiments involving black holes.
\textbf{Physical Review D}, American Physical Society, v.~49, p. 966--974, Jan.
  1994.
Available at: \url{https://link.aps.org/doi/10.1103/PhysRevD.49.966}.
  Accessible at: 28 Mar. 2024.}

\bibitem{Mathur2005}
\abntrefinfo{Mathur}{MATHUR}{2005}
{MATHUR, S. The fuzzball proposal for black holes: an elementary review.
\textbf{Fortschritte der Physik}, v.~53, n.~7-8, p. 793--827, 2005.
Available at: \url{https://doi.org/10.1002/prop.200410203}. Accessible at: 28
  Mar. 2024.}

\bibitem{Zee2023}
\abntrefinfo{Zee}{ZEE}{2023}
{ZEE, A. \textbf{Quantum field theory, as simply as possible}. New Jersey:
  Princeton University Press, 2023.}

\bibitem{Fewster2011}
\abntrefinfo{Fewster and Galloway}{FEWSTER; GALLOWAY}{2011}
{FEWSTER, C.~J.; GALLOWAY, G.~J. Singularity theorems from weakened energy
  conditions.
\textbf{Classical and Quantum Gravity}, v.~28, n.~12, p. 125009, May 2011.
Available at: \url{https://dx.doi.org/10.1088/0264-9381/28/12/125009}.
  Accessible at: 28 Mar. 2024.}

\bibitem{Matsas2009}
\abntrefinfo{Matsas \textit{et al.}}{MATSAS \textit{et al.}}{2009}
{MATSAS, G. E.~A. \textit{et al.} Can quantum mechanics fool the cosmic censor?
\textbf{Physical Review D}, American Physical Society, v.~79, p. 101502, May
  2009.
Available at: \url{https://link.aps.org/doi/10.1103/PhysRevD.79.101502}.
  Accessible at: 28 Mar. 2024.}

\bibitem{Penrose1979}
\abntrefinfo{Penrose}{PENROSE}{1979}
{PENROSE, R. Singularities and time-asymmetry. \textit{In}:  HAWKING S.
  W.;~ISRAEL, W. (ed.). \textbf{General relativity}. Cambridge: Cambridge
  University Press, 1979. p. 581--638.}

\bibitem{Chruściel2012}
\abntrefinfo{ChruŚciel, Costa and Heusler}{CHRUŚCIEL; COSTA; HEUSLER}{2012}
{CHRUŚCIEL, P.~T.; COSTA, J.~L.; HEUSLER, M. Stationary black holes:
  Uniqueness and beyond.
\textbf{Living Reviews in Relativity}, v.~15, n.~7, 2012.
Available at: \url{https://link.springer.com/article/10.12942/lrr-2012-7}.
  Accessible at: 28 Mar. 2024.}

\bibitem{Einstein1935}
\abntrefinfo{Einstein, Podolsky and Rosen}{EINSTEIN; PODOLSKY; ROSEN}{1935}
{EINSTEIN, A.; PODOLSKY, B.; ROSEN, N. Can quantum-mechanical description of
  physical reality be considered complete?
\textbf{Physical Review}, American Physical Society, v.~47, p. 777--780, May
  1935.
Available at: \url{https://link.aps.org/doi/10.1103/PhysRev.47.777}. Accessible
  at: 28 Mar. 2024.}

\bibitem{Nielsen2010}
\abntrefinfo{Nielsen and Chuang}{NIELSEN; CHUANG}{2010}
{NIELSEN, M.~A.; CHUANG, I.~L. \textbf{Quantum computation and quantum
  information}. New York: Cambridge University Press, 2010.}

\bibitem{Bell1964}
\abntrefinfo{Bell}{BELL}{1964}
{BELL, J.~S. On the {E}instein {P}odolsky {R}osen paradox.
\textbf{Physics Physique Fizika}, American Physical Society, v.~1, p. 195--200,
  Nov. 1964.
Available at:
  \url{https://link.aps.org/doi/10.1103/PhysicsPhysiqueFizika.1.195}.}

\bibitem{Aspect1981}
\abntrefinfo{Aspect, Grangier and Roger}{ASPECT; GRANGIER; ROGER}{1981}
{ASPECT, A.; GRANGIER, P.; ROGER, G. Experimental tests of realistic local
  theories via {B}ell's theorem.
\textbf{Physical Review Letters}, American Physical Society, v.~47, p.
  460--463, Aug. 1981.
Available at: \url{https://link.aps.org/doi/10.1103/PhysRevLett.47.460}.
  Accessible at: 28 Mar. 2024.}

\bibitem{Aspect1982}
\abntrefinfo{Aspect, Grangier and Roger}{ASPECT; GRANGIER; ROGER}{1982}
{ASPECT, A.; GRANGIER, P.; ROGER, G. Experimental realization of
  {E}instein-{P}odolsky-{R}osen-{B}ohm gedankenexperiment: A new violation of
  {B}ell's inequalities.
\textbf{Physical Review Letters}, American Physical Society, v.~49, p. 91--94,
  July 1982.
Available at: \url{https://link.aps.org/doi/10.1103/PhysRevLett.49.91}.
  Accessible at: 28 Mar. 2024.}

\bibitem{Unruh2005}
\abntrefinfo{Unruh and Sch\"utzhold}{UNRUH; SCH\"UTZHOLD}{2005}
{UNRUH, W.~G.; SCH\"UTZHOLD, R. Universality of the {H}awking effect.
\textbf{Physical Review D}, American Physical Society, v.~71, p. 024028, Jan.
  2005.
Available at: \url{https://link.aps.org/doi/10.1103/PhysRevD.71.024028}.
  Accessible at: 28 Mar. 2024.}

\bibitem{Banks1984}
\abntrefinfo{Banks, Susskind and Peskin}{BANKS; SUSSKIND; PESKIN}{1984}
{BANKS, T.; SUSSKIND, L.; PESKIN, M.~E. Difficulties for the evolution of pure
  states into mixed states.
\textbf{Nuclear Physics B}, v.~244, n.~1, p. 125--134, 1984.
Available at:
  \url{https://www.sciencedirect.com/science/article/pii/0550321384901846}.
  Accessible at: 28 Mar. 2024.}

\bibitem{Ellis1984}
\abntrefinfo{Ellis \textit{et al.}}{ELLIS \textit{et al.}}{1984}
{ELLIS, J. \textit{et al.} Search for violations of quantum mechanics.
\textbf{Nuclear Physics B}, v.~241, n.~2, p. 381--405, 1984.
Available at: \url{https://doi.org/10.1016/0550-3213(84)90053-1}. Accessible
  at: 28 Mar. 2024.}

\bibitem{Unruh1995b}
\abntrefinfo{Unruh and Wald}{UNRUH; WALD}{1995}
{UNRUH, W.~G.; WALD, R.~M. On evolution laws taking pure states to mixed states
  in quantum field theory.
\textbf{Physical Review D}, American Physical Society, v.~52, p. 2176--2182,
  Aug. 1995.
Available at: \url{https://link.aps.org/doi/10.1103/PhysRevD.52.2176}.
  Accessible at: 28 Mar. 2024.}

\bibitem{Visser2015}
\abntrefinfo{Visser}{VISSER}{2015}
{VISSER, M. Thermality of the {H}awking flux.
\textbf{Journal of High Energy Physics}, v.~2015, n.~9, 2015.
Available at: \url{https://doi.org/10.1007/JHEP07(2015)009}. Accessible at: 28
  Mar. 2024.}

\bibitem{Giddings2016}
\abntrefinfo{Giddings}{GIDDINGS}{2016}
{GIDDINGS, S.~B. Hawking radiation, the {S}tefan–{B}oltzmann law, and
  unitarization.
\textbf{Physics Letters B}, v.~754, p. 39--42, 2016.
ISSN 0370-2693.
Available at: \url{https://doi.org/10.1016/j.physletb.2015.12.076}. Accessible
  at: 28 Mar. 2024.}

\bibitem{Barceló2011}
\abntrefinfo{BarcelÓ and Visser}{BARCELÓ; VISSER}{2011}
{BARCELÓ, S.~L. C.; VISSER, M. Analogue gravity.
\textbf{Living Reviews in Relativity}, v.~14, n.~3, 2011.
Available at: \url{https://doi.org/10.12942/lrr-2011-3}. Accessible at: 28 Mar.
  2024.}

\bibitem{Fulling1976}
\abntrefinfo{Fulling, Davies and Penrose}{FULLING; DAVIES; PENROSE}{1976}
{FULLING, S.~A.; DAVIES, P. C.~W.; PENROSE, R. Radiation from a moving mirror
  in two dimensional space-time: conformal anomaly.
\textbf{Proceedings of the Royal Society of London A}, v.~348, n.~1654, p.
  393--414, 1976.
Available at:
  \url{https://royalsocietypublishing.org/doi/abs/10.1098/rspa.1976.0045}.
  Accessible at: 28 Mar. 2024.}

\bibitem{Unruh1981}
\abntrefinfo{Unruh}{UNRUH}{1981}
{UNRUH, W.~G. Experimental black-hole evaporation?
\textbf{Physical Review Letters}, American Physical Society, v.~46, p.
  1351--1353, May 1981.
Available at: \url{https://link.aps.org/doi/10.1103/PhysRevLett.46.1351}.
  Accessible at: 28 Mar. 2024.}

\bibitem{Unruh1995}
\abntrefinfo{Unruh}{UNRUH}{1995}
{UNRUH, W.~G. Sonic analogue of black holes and the effects of high frequencies
  on black hole evaporation.
\textbf{Physical Review D}, American Physical Society, v.~51, p. 2827--2838,
  Mar 1995.
Available at: \url{https://link.aps.org/doi/10.1103/PhysRevD.51.2827}.
  Accessible at: 28 Mar. 2024.}

\bibitem{Modugno2021}
\abntrefinfo{Modugno}{MODUGNO}{2021}
{MODUGNO, G. The life of an analogue black hole.
\textbf{Nature Physics}, v.~17, p. 300 -- 301, 2021.
Available at: \url{https://doi.org/10.1038/s41567-020-01160-5}. Accessible at:
  28 Mar. 2024.}

\bibitem{Kolobov2021}
\abntrefinfo{{Kolobov} \textit{et al.}}{{Kolobov} \textit{et al.}}{2021}
{{Kolobov}, V.~I. \textit{et al.} {Observation of stationary spontaneous
  Hawking radiation and the time evolution of an analogue black hole}.
\textbf{Nature Physics}, v.~17, n.~3, p. 362--367, Jan. 2021.
Available at: \url{https://doi.org/10.1038/s41567-020-01076-0}. Accessible at:
  28 Mar. 2024.}

\bibitem{Wilson2011}
\abntrefinfo{Wilson \textit{et al.}}{WILSON \textit{et al.}}{2011}
{WILSON, C.~M. \textit{et al.} Observation of the dynamical {C}asimir effect in
  a superconducting circuit.
\textbf{Nature}, v.~479, p. 376--379, 2011.
Available at: \url{https://doi.org/10.1038/nature10561}. Accessible at: 28 Mar.
  2024.}

\bibitem{Green2021}
\abntrefinfo{Green and Kavanagh}{GREEN; KAVANAGH}{2021}
{GREEN, A.~M.; KAVANAGH, B.~J. Primordial black holes as a dark matter
  candidate.
\textbf{Journal of Physics G}, IOP Publishing, v.~48, n.~4, p. 043001, Feb.
  2021.
Available at: \url{https://dx.doi.org/10.1088/1361-6471/abc534}. Accessible at:
  28 Mar. 2024.}

\bibitem{Genzel2010}
\abntrefinfo{Genzel, Eisenhauer and Gillessen}{GENZEL; EISENHAUER;
  GILLESSEN}{2010}
{GENZEL, R.; EISENHAUER, F.; GILLESSEN, S. The galactic center massive black
  hole and nuclear star cluster.
\textbf{Reviews of Modern Physics}, American Physical Society, v.~82, p.
  3121--3195, Dec. 2010.
Available at: \url{https://link.aps.org/doi/10.1103/RevModPhys.82.3121}.
  Accessible at: 28 Mar. 2024.}

\bibitem{Event2019}
\abntrefinfo{{THE EVENT HORIZON TELESCOPE COLLABORATION} \textit{et al.}}{{THE
  EVENT HORIZON TELESCOPE COLLABORATION} \textit{et al.}}{2019}
{{THE EVENT HORIZON TELESCOPE COLLABORATION} \textit{et al.} First {M87} event
  horizon telescope results. {I}. {T}he shadow of the supermassive black hole.
\textbf{The Astrophysical Journal Letters}, The American Astronomical Society,
  v.~875, n.~1, p.~L1, Apr. 2019.
Available at: \url{https://dx.doi.org/10.3847/2041-8213/ab0ec7}. Accessible at:
  28 Mar. 2024.}

\bibitem{Narayan1995}
\abntrefinfo{Narayan, Yi and Mahadevan}{NARAYAN; YI; MAHADEVAN}{1995}
{NARAYAN, R.; YI, I.-S.; MAHADEVAN, R. Explaining the spectrum of {S}agittarius
  {A}* with a model of an accreting black hole.
\textbf{Nature}, v.~374, p. 623--625, 1995.
Available at: \url{https://doi.org/10.1038/374623a0}. Accessible at: 28 Mar.
  2024.}

\bibitem{Rees1998}
\abntrefinfo{Rees}{REES}{1998}
{REES, M.~J. Astrophysical evidence for black holes. \textit{In}:  WALD, R.
  (ed.). \textbf{Black holes and relativistic stars}. Chicago: University of
  Chicago Press, 1998. p. 79–101.}

\bibitem{Gibbons1977b}
\abntrefinfo{Gibbons G. W.;~Hawking}{GIBBONS G. W.;~HAWKING}{1977}
{GIBBONS G. W.;~HAWKING, S.~W. Action integrals and partition functions in
  quantum gravity.
\textbf{Physical Review D}, American Physical Society, v.~15, p. 2752--2756,
  May 1977.
Available at: \url{https://link.aps.org/doi/10.1103/PhysRevD.15.2752}.
  Accessible at: 28 Mar. 2024.}

\bibitem{Gibbons1993}
\abntrefinfo{Gibbons and Hawking}{GIBBONS; HAWKING}{1993}
{GIBBONS, G.; HAWKING, S. \textbf{Euclidean quantum gravity}. Singapore: World
  Scientific, 1993.}

\bibitem{Hawking2005}
\abntrefinfo{HAWKING}{HAWKING}{2005}
{HAWKING, S.~W. Information loss in black holes.
\textbf{Physical Review D}, v.~48, p. 084013, Oct. 2005.
Available at: \url{https://doi.org/10.1103/PhysRevD.72.084013}. Accessible at:
  28 Mar. 2024.}

\bibitem{Ashtekar1998}
\abntrefinfo{Ashtekar \textit{et al.}}{ASHTEKAR \textit{et al.}}{1998}
{ASHTEKAR, A. \textit{et al.} Quantum geometry and black hole entropy.
\textbf{Physical Review Letters}, American Physical Society, v.~80, p.
  904--907, Feb. 1998.
Available at: \url{https://link.aps.org/doi/10.1103/PhysRevLett.80.904}.
  Accessible at: 28 Mar. 2024.}

\bibitem{Bombelli1986}
\abntrefinfo{Bombelli \textit{et al.}}{BOMBELLI \textit{et al.}}{1986}
{BOMBELLI, L. \textit{et al.} Quantum source of entropy for black holes.
\textbf{Physical Review D}, American Physical Society, v.~34, p. 373--383, July
  1986.
Available at: \url{https://link.aps.org/doi/10.1103/PhysRevD.34.373}.
  Accessible at: 28 Mar. 2024.}

\bibitem{Solodukhin2011}
\abntrefinfo{Solodukhin}{SOLODUKHIN}{2011}
{SOLODUKHIN, S.~N. Entanglement entropy of black holes.
\textbf{Living Reviews in Relativity}, v.~14, n.~8, 2011.
Available at: \url{https://doi.org/10.12942/lrr-2011-8}. Accessible at: 28 Mar.
  2024.}

\bibitem{Strominger1996}
\abntrefinfo{Strominger and Vafa}{STROMINGER; VAFA}{1996}
{STROMINGER, A.; VAFA, C. Microscopic origin of the {B}ekenstein-{H}awking
  entropy.
\textbf{Physics Letters B}, v.~379, n.~1, p. 99--104, 1996.
Available at: \url{https://doi.org/10.1016/0370-2693(96)00345-0}. Accessible
  at: 28 Mar. 2024.}

\bibitem{Maldacena1997}
\abntrefinfo{Maldacena and Strominger}{MALDACENA; STROMINGER}{1997}
{MALDACENA, J.; STROMINGER, A. Black hole greybody factors and {D}-brane
  spectroscopy.
\textbf{Physical Review D}, American Physical Society, v.~55, p. 861--870, Jan.
  1997.
Available at: \url{https://link.aps.org/doi/10.1103/PhysRevD.55.861}.
  Accessible at: 28 Mar. 2024.}

\bibitem{Horowitz1998}
\abntrefinfo{Horowitz}{HOROWITZ}{1998}
{HOROWITZ, G.~T. Quantum states of black holes. \textit{In}:  WALD, R. (ed.).
  \textbf{Black holes and relativistic stars}. Chicago: University of Chicago
  Press, 1998. p. 241–266.}

\bibitem{Hubeny2015}
\abntrefinfo{Hubeny}{HUBENY}{2015}
{HUBENY, V.~E. The {A}d{S}/{CFT} correspondence.
\textbf{Classical and Quantum Gravity}, IOP Publishing, v.~32, n.~12, p.
  124010, June 2015.
Available at: \url{https://dx.doi.org/10.1088/0264-9381/32/12/124010}.
  Accessible at: 28 Mar. 2024.}

\bibitem{Mathur2023}
\abntrefinfo{Mathur and Mehta}{MATHUR; MEHTA}{2023}
{MATHUR, S.~D.; MEHTA, M. The universality of black hole thermodynamics.
\textbf{International Journal of Modern Physics D}, v.~32, n.~14, p. 2341003,
  2023.
Available at: \url{https://doi.org/10.1142/S0218271823410031}. Accessible at:
  28 Mar. 2024.}

\bibitem{Srednicki1993}
\abntrefinfo{Srednicki}{SREDNICKI}{1993}
{SREDNICKI, M. Entropy and area.
\textbf{Physical Review Letters}, American Physical Society, v.~71, p.
  666--669, Aug. 1993.
Available at: \url{https://link.aps.org/doi/10.1103/PhysRevLett.71.666}.
  Accessible at: 28 Mar. 2024.}

\bibitem{Raamsdonk2010}
\abntrefinfo{Raamsdonk}{RAAMSDONK}{2010}
{RAAMSDONK, M.~V. Building up spacetime with quantum entanglement.
\textbf{General Relativity and Gravitation}, v.~42, p. 2323–2329, 2010.
Available at: \url{https://doi.org/10.1007/s10714-010-1034-0}. Accessible at:
  28 Mar. 2024.}

\bibitem{Chirco2011}
\abntrefinfo{Chirco}{CHIRCO}{2011}
{CHIRCO, G.
\textbf{Thermodynamic aspects of gravity}.
2011. Thesis (Doctorate) --- Scuola Internazionale Superiore di Studi Avanzati,
  2011.
Available at: \url{https://iris.sissa.it/handle/20.500.11767/4603}. Accessible
  at: Mar. 3th 2024.}

\bibitem{Stephens1994}
\abntrefinfo{Stephens, {'t HOOFT} and Whiting}{STEPHENS; {'t HOOFT};
  WHITING}{1994}
{STEPHENS, C.~R.; {'t HOOFT}, G.; WHITING, B.~F. Black hole evaporation without
  information loss.
\textbf{Classical and Quantum Gravity}, v.~11, n.~3, p.~621, Mar 1994.
Available at: \url{https://dx.doi.org/10.1088/0264-9381/11/3/014}. Accessible
  at: 28 Mar. 2024.}

\bibitem{Hawking2016}
\abntrefinfo{Hawking, Perry and Strominger}{HAWKING; PERRY; STROMINGER}{2016}
{HAWKING, S.~W.; PERRY, M.~J.; STROMINGER, A. Soft hair on black holes.
\textbf{Physical Review Letters}, American Physical Society, v.~116, p. 231301,
  June 2016.
Available at: \url{https://link.aps.org/doi/10.1103/PhysRevLett.116.231301}.
  Accessible at: 28 Mar. 2024.}

\bibitem{Hawking2017}
\abntrefinfo{HAWKING, PERRY and STROMINGER}{HAWKING; PERRY; STROMINGER}{2017}
{HAWKING, S.~W.; PERRY, M.~J.; STROMINGER, A. Superrotation charge and
  supertranslation hair on black holes.
\textbf{Journal of High Energy Physics}, v.~2017, p.~161, 2017.
Available at: \url{https://doi.org/10.1007/JHEP05(2017)161}. Accessible at: 28
  Mar. 2024.}

\bibitem{Favata2010}
\abntrefinfo{Favata}{FAVATA}{2010}
{FAVATA, M. The gravitational-wave memory effect.
\textbf{Classical and Quantum Gravity}, v.~27, n.~8, p. 084036, Apr. 2010.
Available at: \url{https://dx.doi.org/10.1088/0264-9381/27/8/084036}.
  Accessible at: 28 Mar. 2024.}

\bibitem{Verlinde2011}
\abntrefinfo{Verlinde}{VERLINDE}{2011}
{VERLINDE, E. On the origin of gravity and the laws of newton.
\textbf{Journal of High Energy Physics}, n.~29, 2011.
Available at: \url{https://doi.org/10.1007/JHEP04(2011)029}. Accessible at: 28
  Mar. 2024.}

\bibitem{Verlinde2017}
\abntrefinfo{Verlinde}{VERLINDE}{2017}
{VERLINDE, E.~P. {Emergent gravity and the dark universe}.
\textbf{SciPost Physics}, SciPost, v.~2, p.~016, 2017.
Available at: \url{https://scipost.org/10.21468/SciPostPhys.2.3.016}.
  Accessible at: Mar. 3th 2024.}

\bibitem{Visser2019}
\abntrefinfo{Visser}{VISSER}{2019}
{VISSER, M.~R.
\textbf{Emergent gravity in a holographic universe}.
2019. Thesis (Doctorate) --- University of Amsterdam, 2019.
Available at: \url{https://doi.org/10.48550/arXiv.1908.05469}. Accessible at:
  Mar. 4th 2024.}

\bibitem{Maldacena1999}
\abntrefinfo{Maldacena}{MALDACENA}{1999}
{MALDACENA, J. The large-{N} limit of superconformal field theories and
  supergravity.
\textbf{International Journal of Theoretical Physics}, v.~38, p. 1113–1133, 4
  1999.
Available at: \url{https://doi.org/10.1023/A:1026654312961}. Accessible at: 28
  Mar. 2024.}

\bibitem{Ammon2015}
\abntrefinfo{Ammon}{AMMON}{2015}
{AMMON, J.~E. M. \textbf{Gauge/gravity duality}. Cambridge: Cambridge
  University Press, 2015.}

\bibitem{Francesco1996}
\abntrefinfo{Francesco, Mathieu and SÉnÉchal}{FRANCESCO; MATHIEU;
  SÉNÉCHAL}{1996}
{FRANCESCO, P.; MATHIEU, P.; SÉNÉCHAL, D. \textbf{Conformal field theory}.
  New York: Springer, 1996.}

\bibitem{Bousso2002}
\abntrefinfo{Bousso}{BOUSSO}{2002}
{BOUSSO, R. The holographic principle.
\textbf{Reviews of Modern Physics}, American Physical Society, v.~74, p.
  825--874, Aug. 2002.
Available at: \url{https://link.aps.org/doi/10.1103/RevModPhys.74.825}.
  Accessible at: 28 Mar. 2024.}

\bibitem{Danielson2022}
\abntrefinfo{Danielson, Satishchandran and Wald}{DANIELSON; SATISHCHANDRAN;
  WALD}{2022}
{DANIELSON, D.~L.; SATISHCHANDRAN, G.; WALD, R.~M. Black holes decohere quantum
  superpositions.
\textbf{International Journal of Modern Physics D}, v.~31, n.~14, p. 2241003,
  2022.
Available at: \url{https://doi.org/10.1142/S0218271822410036}. Accessible at:
  28 Mar. 2024.}

\bibitem{Danielson2023}
\abntrefinfo{Danielson, Satishchandran and Wald}{DANIELSON; SATISHCHANDRAN;
  WALD}{2023}
{DANIELSON, D.~L.; SATISHCHANDRAN, G.; WALD, R.~M. Killing horizons decohere
  quantum superpositions.
\textbf{Physical Review D}, American Physical Society, v.~108, p. 025007, July
  2023.
Available at: \url{https://link.aps.org/doi/10.1103/PhysRevD.108.025007}.
  Accessible at: 28 Mar. 2024.}

\bibitem{Darling1994}
\abntrefinfo{Darling}{DARLING}{1994}
{DARLING, R. W.~R. \textbf{Differential forms and connections}. Cambridge:
  Cambridge University Press, 1994.}

\bibitem{ONeill1983}
\abntrefinfo{O’Neill}{O’NEILL}{1983}
{O’NEILL, B. \textbf{Semi-Riemannian geometry with applications to
  relativity}. San Diego: Academic Press, 1983.}

\bibitem{Evans}
\abntrefinfo{Evans}{EVANS}{}
{EVANS, J.
\textbf{Topology and groups module}.
Available at:
  \url{https://www.homepages.ucl.ac.uk/~ucahjde/tg/html/index.html}. Accessible
  at: 14 Nov. 2023.}

\bibitem{Lee2003}
\abntrefinfo{Lee}{LEE}{2003}
{LEE, J.~M. \textbf{Introduction to smooth manifolds}. New York: Springer,
  2003.  (Graduate Texts in Mathematics).}

\bibitem{Lee2011}
\abntrefinfo{Lee}{LEE}{2011}
{LEE, J.~M. \textbf{Introduction to topological manifolds}. 2nd ed. New York:
  Springer, 2011.  (Graduate Texts in Mathematics).}

\bibitem{Leithold1976}
\abntrefinfo{Leithold}{LEITHOLD}{1976}
{LEITHOLD, L. \textbf{The calculus, with analytic geometry}. 3rd ed. New York:
  Harper \& Row, 1976.}

\bibitem{Ghorpade2006}
\abntrefinfo{Ghorpade and Limaye}{GHORPADE; LIMAYE}{2006}
{GHORPADE, S.~R.; LIMAYE, B.~V. \textbf{A course in calculus and real
  analysis}. 1st ed. New York: Springer, 2006.  (Undergraduate texts in
  mathematics).}

\bibitem{Bryant1993}
\abntrefinfo{Bryant}{BRYANT}{1993}
{BRYANT, V. \textbf{Aspects of combinatorics}. Cambridge: Cambridge University
  Press, 1993.}

\bibitem{Geroch1975}
\abntrefinfo{Geroch and Jang}{GEROCH; JANG}{1975}
{GEROCH, R.; JANG, P.~S. Motion of a body in general relativity.
\textbf{Journal of Mathematical Physics}, v.~16, n.~1, p.~65, Jan. 1975.
Available at: \url{https://doi.org/10.1063/1.522416}. Accessible at: 28 Mar.
  2024.}

\bibitem{Horn2013}
\abntrefinfo{Horn and Johnson}{HORN; JOHNSON}{2013}
{HORN, R.~A.; JOHNSON, C.~R. \textbf{Matrix analysis}. 2nd ed. New York:
  Cambridge University Press, 2013.}

\bibitem{Muller2010}
\abntrefinfo{MÜeller and Grave}{MÜELLER; GRAVE}{2010}
{MÜELLER, T.; GRAVE, F.
\textbf{Catalogue of spacetimes}, 2010.
Available at: \url{https://doi.org/10.48550/arXiv.0904.4184}. Accessible at: 8
  Jan. 2024.}

\bibitem{Walker1970}
\abntrefinfo{Walker and Penrose}{WALKER; PENROSE}{1970}
{WALKER, M.; PENROSE, R. {On quadratic first integrals of the geodesic
  equations for type 22 spacetimes}.
\textbf{Communications in Mathematical Physics}, v.~18, n.~4, p. 265--274, Dec.
  1970.
Available at: \url{https://doi.org/10.1007/BF01649445}. Accessible at: 28 Mar.
  2024.}

\bibitem{Blum2012}
\abntrefinfo{Blum}{BLUM}{2012}
{BLUM, K. \textbf{Density matrix theory and applications}. 3rd ed. Berlin:
  Springer, 2012.  (Springer series on atomic, optical, and plasma physics).}

\bibitem{Horodecki2009}
\abntrefinfo{Horodecki \textit{et al.}}{HORODECKI \textit{et al.}}{2009}
{HORODECKI, R. \textit{et al.} Quantum entanglement.
\textbf{Reviews of Modern Physics}, American Physical Society, v.~81, p.
  865--942, June 2009.
Available at: \url{https://link.aps.org/doi/10.1103/RevModPhys.81.865}.}

\bibitem{Schrödinger1935}
\abntrefinfo{SchrÖdinger}{SCHRÖDINGER}{1935}
{SCHRÖDINGER, E. Discussion of probability relations between separated
  systems.
\textbf{Mathematical Proceedings of the Cambridge Philosophical Society},
  v.~31, n.~4, p. 555–563, 1935.
Available at: \url{https://doi.org/10.1017/S0305004100013554}. Accessible at:
  28 Mar. 2024.}

\bibitem{Bouwmeester1997}
\abntrefinfo{Bouwmeester \textit{et al.}}{BOUWMEESTER \textit{et al.}}{1998}
{BOUWMEESTER, D. \textit{et al.} Experimental quantum teleportation.
\textbf{Nature}, Nature Publishing Group, v.~390, n.~6660, p. 575--579, 1998.
Available at: \url{https://doi.org/10.1007/s003400050575}. Accessible at: 28
  Mar. 2024.}

\bibitem{Ren2017}
\abntrefinfo{Ren \textit{et al.}}{REN \textit{et al.}}{2017}
{REN, J. \textit{et al.} Ground-to-satellite quantum teleportation.
\textbf{Nature}, v.~549, p. 70--73, 2017.
Available at: \url{https://doi.org/10.1038/nature23675}. Accessible at: 28 Mar.
  2024.}

\end{thebibliography}

\begin{apendicesenv}
\partapendices

\chapter{Manifolds}\label{A}

The purpose of this appendix is to provide an objective review of the basic mathematical framework necessary for the development of this work. We start by presenting the definition of a topology and several properties of interest in topological spaces, as well as proving two results to illustrate their application. From such formalism, an $n$-dimensional manifold is defined as a topological space such that every point in it belongs to a set that can be identified with $\mathbb{R}^n$. Consequently, the notion of tensors on manifolds arises naturally from that of a tangent vector space at each point, and a spacetime is defined as an $n$-dimensional manifold that has a Lorentzian metric tensor field defined on it. This definition will, evidently, carry properties one expects a physical manifold should have. We then discuss concepts of interest and also additional structures that follow directly from the metric tensor, such as the Levi-Civita connection, the notion of curvature, and the integration of functions on arbitrary manifolds. 

It should be noted that this is not meant to serve as a pedagogical introduction, but rather, as a reference to the development of relations and the arguments presented in this work. The interested reader can find a more pedagogical, detailed presentation of such subjects in the references on which the construction of this appendix was based on \cite{Wald1984, Misner1973, Poisson2004, Hawking1973, Darling1994, ONeill1983, Evans, Lee2003,Lee2011}.

\section{Topological spaces}\label{A1}

A topology is a structure one imposes on a set, in the same manner as one imposes an algebraic structure on numbers. Such a structure may seem rather arbitrary, but a topology on a set can be naturally induced by an additional structure, e.g., an inner product over a vector space. In particular, a topology on a set is useful because it tells one how elements in a set are ``connected'', even though the elements of the pertinent set can be any mathematical object. 

More precisely, a \textit{topological space}, $(X,\mathscr{T})$, consists of a set, $X$, and a collection, $\mathscr{T}$, of subsets of $X$, called the \textit{topology}, satisfying the following properties.
\begin{enumerate}
    \item[(1)] The entire set, $X$, and the empty set, $\emptyset$, are in $\mathscr{T}$.
    \item[(2)] The union of an arbitrary collection of subsets is in $\mathscr{T}$, i.e., if $S_{\mu}\in\mathscr{T}$, then $$\bigcup_{\mu}S_{\mu}\in\mathscr{T}.$$
    \item[(3)] The intersection of a finite number of subsets is in $\mathscr{T}$, i.e., if $S_1,...,S_{\mu}\in\mathscr{T}$, then $$\bigcap^{\mu}_{\nu=1}S_{\nu}\in\mathscr{T}.$$
\end{enumerate}

Subsets of $X$ which are in the collection $\mathscr{T}$ are called \textit{open sets}. A \textit{neighborhood} of $S\subset X$ consists of an open subset of $X$ that contains $S$. A subset, $S$, is said to be \textit{closed} if its complement, $X\backslash S=\{x\in X|\;x\notin S\}$, is open. It follows from the topological space axioms that the intersection of an arbitrary collection of closed sets is closed and the union of a finite collection of closed sets is also closed. A subset of $X$ can be both open and closed, or neither open nor closed. If the only subsets which are both open and closed are $X$ and $\emptyset$, the topological space is said to be \textit{connected}. Equivalently, connectedness of a topological space can be defined as the impossibility of the set $X$ to be represented as the union of disjoint (i.e., $S\bigcap S'=\emptyset$) non-empty open sets. 

Let $(X,\mathscr{T})$ and $(X',\mathscr{T}')$ be topological spaces. Consider the \textit{Cartesian product}, $X\times X'=\{(a,a')|\;a\in X,\;a'\in X'\}$ and the collection, $\mathcal{T}$, of all subsets of $X\times X'$ that can be expressed as unions of sets of the form $S\times S'$, with $S\in \mathscr{T}$ and $S'\in \mathscr{T}'$. It can be verified that $(X\times X',\mathcal{T})$ is a topological space, in which $\mathcal{T}$ is called the \textit{product topology}. For example, if one considers the standard topology on $\mathbb{R}$, i.e., open sets are open intervals, one can construct a topology on $\mathbb{R}^n$ simply by taking the product topology $n$ times. By doing so, the topology induced on $\mathbb{R}^n$ is known as the \textit{standard topology}.

For an arbitrary subset $S\subset X$, the \textit{closure}, $\overline{S}$, of $S$ is the intersection of all closed sets that contain $S$. Evidently, $\overline{S}$ is closed, contains $S$, and equals $S$ if and only if $S$ is closed. The \textit{interior} of $S$, $\langle S\rangle$, is the union of all open sets contained in $S$. Clearly, $\langle S\rangle$ is open, is contained in $S$, and equals $S$ if and only if $S$ is open. The \textit{boundary} of $S$, $\partial S$, is $\overline{S}\backslash\langle S\rangle$. Equivalently, the boundary of $S$ can be expressed as the intersection between its closure and the closure of its complement, $\partial S=\overline{S}\cap\overline{(X\backslash S)}$, from which it is clear that the boundary of a set is always a closed set.

Let $(X,\mathscr{T})$ and $(X',\mathscr{T}')$ be topological spaces. A map $\psi:X\to X'$ is said to be \textit{continuous} if the inverse image of every open set $S'\subset X'$, $\psi^{-1}[S']=\{x\in X|\;\psi(x)\in S'\}$, is an open set in $X$. In particular, this notion of continuity is equivalent to the $(\epsilon,\delta)$ definition \cite{Leithold1976} if one takes the standard topology on $\mathbb{R}$ and analyses maps $\psi:\mathbb{R}\to\mathbb{R}$. The importance of this more general definition of continuity is exemplified in the proof of the following theorem.

\begin{theorem}\label{01}
    The closed interval $[0,1]$ in $\mathbb{R}$ with the standard topology is connected.
\end{theorem}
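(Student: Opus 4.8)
The plan is to argue by contradiction using the definition of connectedness given just above: a topological space fails to be connected precisely when it can be written as the union of two disjoint nonempty open sets. So I would suppose that $[0,1]$ (with the subspace topology inherited from the standard topology on $\mathbb{R}$) is disconnected, and write $[0,1] = A \cup B$ where $A$ and $B$ are nonempty, disjoint, and open in $[0,1]$. Without loss of generality I would assume $0 \in A$.

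The key step is to exploit the order-completeness of $\mathbb{R}$: let $c = \sup\{x \in [0,1] : [0,x] \subseteq A\}$, or equivalently consider $c = \sup(A \cap [0, t))$ for a suitable $t \in B$; I would pick whichever formulation keeps the bookkeeping cleanest, probably $c = \sup A$ is too crude, so instead: fix any $b \in B$ and let $c = \sup\{x \in A : x < b\}$. Then $c \in [0,1]$. Now I would derive a contradiction by a case analysis on whether $c \in A$ or $c \in B$ (one of these must hold since $A \cup B = [0,1]$). If $c \in A$, then since $A$ is open in $[0,1]$ and $c < b$ (note $c \neq b$ because $b \in B$ and a neighborhood argument shows points slightly less than $b$ lie in $A$), there is an interval $(c - \delta, c + \delta) \cap [0,1] \subseteq A$, so points slightly larger than $c$ but still less than $b$ lie in $A$, contradicting that $c$ is an upper bound. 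If $c \in B$, then since $B$ is open there is an interval around $c$ contained in $B$, so points slightly less than $c$ lie in $B$; but by definition of supremum there are points of $A$ arbitrarily close to $c$ from below, and these would have to lie in $B$ — contradicting $A \cap B = \emptyset$. Either way we reach a contradiction, so no such decomposition exists and $[0,1]$ is connected.

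The main obstacle is not conceptual but a matter of being careful about the subspace topology (open sets of $[0,1]$ are intersections of standard-open sets of $\mathbb{R}$ with $[0,1]$, which matters at the endpoints $0$ and $1$) and about the boundary cases where $c$ could in principle equal $0$, $1$, or $b$; each of these needs a one-line check that the supremum argument still goes through. I would also need the small lemma that, since $0 \in A$ and $A$ is open, $c$ is strictly positive and strictly less than $b$, so that "points slightly larger than $c$ and still below $b$" genuinely exist — this is where the openness of $A$ at $0$ and the strict separation of $A$ from $b$ get used. Once those edge cases are dispatched, the argument is short and self-contained, relying only on the least-upper-bound property of $\mathbb{R}$ and the definitions of open set and connectedness recalled in the text.
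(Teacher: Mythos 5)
Your argument is correct, but it follows a different route from the paper's. The paper also argues by contradiction from a decomposition $[0,1]=S\cup S'$ into disjoint nonempty open sets, but instead of working with suprema it defines the two-valued map $\psi:[0,1]\to\mathbb{R}$ equal to $0$ on $S$ and $1$ on $S'$, observes that $\psi$ is continuous precisely because the inverse image of any open subset of $\mathbb{R}$ is one of $S$, $S'$, $[0,1]$ or $\emptyset$ (all open in the subspace topology), and then contradicts the intermediate value theorem, which is cited from the literature. Your proof instead invokes the least-upper-bound property directly: fixing $b\in B$ and setting $c=\sup\{x\in A: x<b\}$, the case analysis on $c\in A$ versus $c\in B$ plays exactly the role that the IVT plays in the paper, and your handling of the subspace topology and the edge cases $c=0$, $c=b$ is the right bookkeeping (note that $c=b$ or $c=0$ can only occur in the case consistent with disjointness, so each is dispatched in a line, as you say). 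What each approach buys: the paper's proof is shorter and ties the statement to the continuity machinery it has just introduced, at the cost of importing the IVT as a black box; your proof is more self-contained, since it uses only the completeness of $\mathbb{R}$ and the definitions of openness and connectedness --- indeed it essentially inlines the completeness argument that underlies the IVT itself, so nothing circular is lurking in either direction.
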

\begin{proof}
    First, note that although  $[0,1]$ is a closed interval, it is an open set inside itself, i.e., in the topology $\mathscr{T}$, where $X=[0,1]$ and $(X,\mathscr{T})$ is a topological space. Now, if the set $[0,1]$ were not connected, it would be possible to find non-empty disjoint open sets, $S$ and $S'$, contained in $[0,1]$ such that $S\cup S'=[0,1]$. Suppose this is the case, and define a map $\psi:[0,1]\to\mathbb{R}$ such that
    \begin{equation}
        \psi(a)=
        \begin{cases}
      0, & a\in S, \\
      1, & a\in S'.
    \end{cases}
    \end{equation}
    Evidently, this map is continuous because the inverse image of an open set $S''\in\mathbb{R}$, $\psi^{-1}[S'']$, is either $S$, $S'$, $[0,1]$ or $\emptyset$, since any open set in $\mathbb{R}$ either contains $0$, $1$, both or neither. However, if the image of $\psi$ consists of both $0$ and $1$, then by the intermediate value theorem (see, e.g., \cite{Stewart2007} for a convenient statement and \cite{Ghorpade2006} for a proof), $\psi$ must take on every value in the interval $(0,1)$. Since $\psi$ clearly does not, then $S$ or $S'$ must be the empty set, contradicting the supposition that neither is. Thus, $[0,1]$ is connected in $\mathbb{R}$ with the standard topology.
\end{proof}

A topological space, $(X,\mathscr{T})$, is said to be \textit{path-connected} if for any two points $a,a'\in X$ there exists a continuous map $\psi:[0,1]\to X$ such that $\psi(0)=a$ and $\psi(1)=a'$. The following theorem states an important property of path-connected topological spaces.

\begin{theorem}\label{02}
    A path-connected topological space is connected.
\end{theorem}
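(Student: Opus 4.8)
The plan is to argue by contradiction, combining the definition of connectedness (impossibility of writing $X$ as a disjoint union of two non-empty open sets) with the path-connectedness hypothesis and Theorem \ref{01}, which tells us that the interval $[0,1]$ is connected. The key observation is that if $X$ could be split, the splitting would pull back through a path to a splitting of $[0,1]$, which is forbidden.

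\textbf{Step 1.} Suppose $(X,\mathscr{T})$ is path-connected but not connected. Then there exist non-empty disjoint open sets $S,S'\subset X$ with $S\cup S'=X$. Pick $a\in S$ and $a'\in S'$; these exist since both sets are non-empty. \textbf{Step 2.} By path-connectedness, there is a continuous map $\psi:[0,1]\to X$ with $\psi(0)=a$ and $\psi(1)=a'$. \textbf{Step 3.} Consider $\psi^{-1}[S]$ and $\psi^{-1}[S']$. Since $\psi$ is continuous and $S,S'$ are open, both preimages are open subsets of $[0,1]$. They are disjoint because $S\cap S'=\emptyset$, and their union is $\psi^{-1}[S\cup S']=\psi^{-1}[X]=[0,1]$. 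Moreover $0\in\psi^{-1}[S]$ since $\psi(0)=a\in S$, and $1\in\psi^{-1}[S']$ since $\psi(1)=a'\in S'$, so neither preimage is empty. \textbf{Step 4.} This exhibits $[0,1]$ as a union of two non-empty disjoint open sets, contradicting Theorem \ref{01}. Hence $X$ must be connected.

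\textbf{Main obstacle.} There is no serious obstacle here; the proof is short once one has Theorem \ref{01} in hand. The only point requiring a little care is making sure that the two preimages are genuinely non-empty — this is where the choice of endpoints $a\in S$, $a'\in S'$ in Step 1 is used — and that one correctly invokes the definition of continuity (inverse images of open sets are open) rather than any metric or $(\epsilon,\delta)$ notion, since $X$ is only assumed to be a topological space. The whole argument is essentially a transport of the connectedness of $[0,1]$ along an arbitrary path.
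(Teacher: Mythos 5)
Your proof is correct and follows essentially the same route as the paper: assume a disconnection $X=S\cup S'$, choose endpoints $a\in S$, $a'\in S'$, pull the splitting back along a path via continuity, and contradict Theorem \ref{01}. The extra care you take in verifying non-emptiness of the preimages is a welcome (if minor) sharpening of the paper's wording.
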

\begin{proof}
    Let $(X,\mathscr{T})$ be a path-connected topological space and suppose that it is not connected. Then it would be possible to find non-empty disjoint open sets, $S$ and $S'$ contained in $X$ such that $S\cup S'=X$. By path-connectedness, it is possible to find a continuous map, $\psi$, such that $\psi(0)=a$ and $\psi(1)=a'$ with $a\in S$ and $a'\in S'$. However, this implies that $\psi^{-1}[S]$ and $\psi^{-1}[S']$ are two disjoint non-empty open sets whose union is $[0,1]$, contradicting theorem \ref{01}. Hence, $S$ or $S'$ must be the empty set, contradicting the supposition that neither is. Thus, every path-connected topological space is connected.
\end{proof}

Let $(X,\mathscr{T})$ and $(X',\mathscr{T}')$ be topological spaces. If a map $\psi:X\to X'$ is continuous, a bijection, and its inverse is continuous, then $\psi$ is said to be a \textit{homeomorphism} and $(X,\mathscr{T})$ and $(X',\mathscr{T}')$ are said to be \textit{homeomorphic}. Homeomorphic topological spaces are identical from the perspective of topological properties.
 
A topological space, $(X,\mathscr{T})$, is said to be \textit{Hausdorff} if for each pair of distinct points, $a,\;a'\in X,$ one can find a neighborhood of $a$, $S$, and of $a'$, $S'$, such that $S\bigcap S'=\emptyset$. The usefulness of this property lies in the fact that one can use it to identify topological spaces that possess points that cannot be ``separated''.

Let $(X,\mathscr{T})$ be a topological space and $S\subset X$. A collection of open sets, $\{S_{\mu}\}$, is said to be an \textit{open cover} of $S$ if the union of these sets contains $S$. A subcollection of the open cover which also covers $S$ is called a \textit{subcover}. A set $S$ is said to be \textit{compact} is every open cover of $S$ has a finite subcover. This notion is useful because, under certain circumstances, it allows one to identify when a set is ``finite''. Similarly, an open cover $\{S'_{\mu}\}$ is said to be a \textit{refinement} of $\{S_{\mu}\}$ if for each $S'_{\mu}$ there exists a $S_{\mu}$ such that $S'_{\mu}\subset S_{\mu}$. Moreover, the open cover $\{S'_{\mu}\}$ is said to be \textit{locally finite} if each $a\in X$ has an neighborhood $S''$ such that only a finite amount of $S'_{\mu}$ respects $S''\cap S'\neq\emptyset$. Finally, $(X,\mathscr{T})$ is \textit{paracompact} if every open cover has a locally finite refinement. Paracompactness can then be interpreted as stating that the topological space can be divided into pieces, which can then be used, for example, to extend local properties to global ones.

An $n$-dimensional \textit{manifold}, $M$, is a topological space such that every point has a neighborhood homeomorphic to a neighborhood of a point in $\mathbb{R}^n$ with the standard topology. Let $\psi_{\mu}$ denote a homeomorphism from a neighborhood of $a\in M$, $S_{\mu}$, into $\mathbb{R}^n$, known as a \textit{coordinate system}. For any two non-disjoint open sets, $S_{\mu}$ and $S_{\nu}$, contained in $M$, the manifold is said to $C^r$ if the function $\psi_{\nu}\circ\psi_{\mu}^{-1}:[\psi_{\mu}[S_{\mu}\cap S_{\nu}]]\to[\psi_{\nu}[S_{\mu}\cap S_{\nu}]]$ is $C^{r}$. Fig. \ref{fig:manifold} illustrates this map, with its action being the white region, corresponding to the intersection of the open sets. One also imposes the condition that the collection $\{S_{\mu}\}$ and the family $\{\psi_{\mu}\}$ are \textit{maximal}, i.e., all $S_{\mu},\;\psi_{\mu}$ which are compatible with the condition above are included in $\{S_{\mu}\}$ and $\{\psi_{\mu}\}$. In particular, our discussion will be restricted to $C^{\infty}$ manifolds, which will be referred to simply as manifolds.
\begin{figure}[h]
\centering
\includegraphics[scale=1.35]{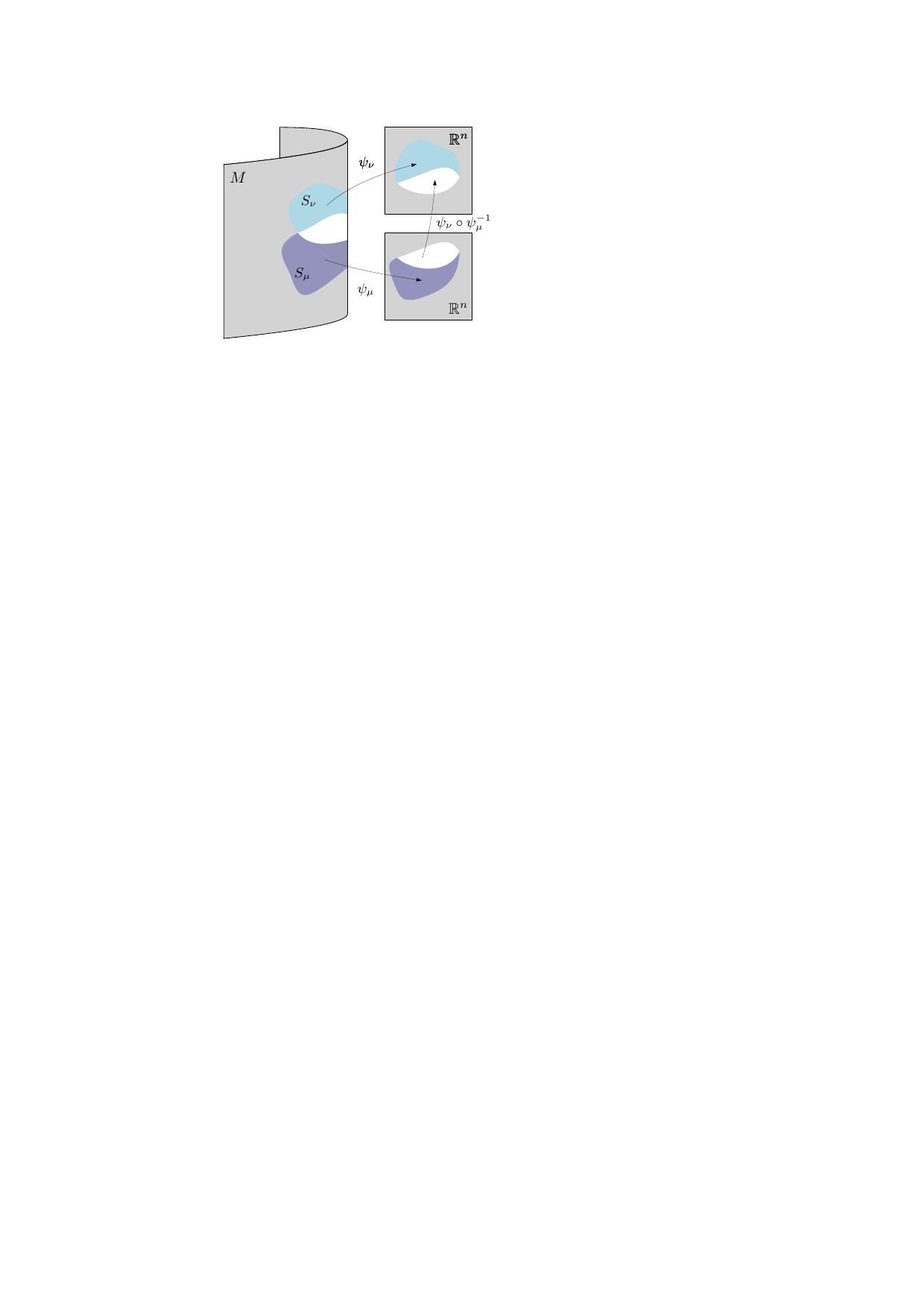} 
\caption{Manifold, $M$, and the smooth map $\psi_{\nu}\circ\psi_{\mu}^{-1}$.}
\caption*{Source: By the author.}
\label{fig:manifold}
\end{figure}

Let $M$ be an $n$-dimensional manifold and $M'$ be an $n'$-dimensional manifold, such that $n'\leq n$. A map $\psi:M'\to M$ is said to be $C^r$ if the function $\psi_{\nu}\circ\psi\circ\psi^{-1}_{\mu}:[\psi_{\mu}[S_{\mu}]]\to[\psi_{\nu}[S_{\nu}]]$ is $C^r$ for all homeomorphisms to $\mathbb{R}^n$ for each respective manifold, $\psi_{\mu}$ and $\psi_{\nu}$. Additionally, if $\psi$ is one-to-one, $\psi[M']$ is said to be a $C^r$ $n'$-dimensional \textit{embedded submanifold} of $M$. A two-dimensional $C^r$ embedded submanifold is called a $C^r$ \textit{surface}, while an $(n-1)$-dimensional (with $n\geq4$) $C^r$ embedded submanifold is called a $C^r$ \textit{hypersurface}. 

\section{Tensor fields}\label{A2}

It is straightforward to see that $\mathbb{R}^n$ with the standard topology has the natural structure of a $n$-dimensional vector space, which rises by taking a vector to be the $n$-tuple associated with each $(x^1,\ldots,x^n)\in\mathbb{R}^n$, together with the field of real numbers and ordinary sum and multiplication by scalar operations. However, in arbitrary manifolds, this global notion is lost due to the nontrivial topology. Since a vector space is necessary to define tensors, which are geometrical quantities associated with the properties of nature from a physical perspective, it is of interest to retrieve this notion for arbitrary manifolds. Still, one wishes to do so in a way that is intrinsic to the manifold structure, that is, without recurring to a higher dimension manifold which it might be embedded. We will now see that this can be done precisely by the concept of a vector space that is tangent to each point in the manifold.

Let $M$ be an $n$-dimensional manifold. The \textit{tangent vector space} at a point $a\in M$, $V_a$, is the collection of vectors defined at $a$, together with the field of real numbers and ordinary sum and multiplication by scalar operations. The notion of a vector in a point of a arbitrary manifold rises if one considers that directional derivatives can be interpreted as vectors. More precisely, let $\mathcal{F}$ denote the collection of smooth functions $f:M\to\mathbb{R}$. A tangent vector at a point $a\in M$ is a map, $s:\mathcal{F}\to\mathbb{R}$, which is linear and respects the Leibniz rule. Namely, given a coordinate system, one can associate the directional derivative of each coordinate as a basis vector, and thus, any $s(f)\in V_a$ can be written as 
\begin{equation}\label{eq65}
    s(f)=\sum_{\mu=0}^{n-1}s^{\mu}\partial_{\mu}(f),
\end{equation}
where
\begin{equation}\label{aeq1}
    \partial_{\mu}(f)=\partial_{\mu}(f\circ\psi^{-1})\big|_{\psi(a)},
\end{equation}
with $\partial_{\mu}=\partial/\partial x^{\mu}$, $\{x^{\mu}\}$ the Cartesian coordinates of $\mathbb{R}^n$ and $f\in\mathcal{F}$. Evidently, from eq. \ref{aeq1}, one can see that $\partial_{\mu}(f)$ is the directional derivative of the function $f\circ\psi^{-1}:[\psi[S]]\to \mathbb{R}$ at the point $\psi(a)\in\mathbb{R}^n$. Hence, one may picture $\partial_{\mu}(f)$ as an arrow at $\psi(a)$ pointing in the direction of increasing $x^{\mu}$. It is important to note that a unique function $f$ is not relevant to the definition of a unique vector, as with the information of the directional derivative of any function one can uniquely define a vector. A proof of the dimensionality of $V_a$, which must be equal to the dimensionality of the manifold, can be found in \cite{Wald1984}. Also, note that if one chooses a different coordinate system, $\psi'$, eq. \ref{aeq1} would have defined a different coordinate basis, which would, clearly, span the same tangent vector space. In particular, given a coordinate system, $\psi$, and a vector, $s$, the pertinent association is given by
\begin{equation}
    (s^1,\ldots,s^n)\longleftrightarrow s^1\frac{\partial}{\partial x^1}+\ldots+s^n\frac{\partial}{\partial x^n},
\end{equation}
where the scalars $(s^1,\ldots,s^n)$ are the \textit{coordinate components} of the vector $s$ in the coordinate system $\psi$.

A \textit{vector field} is an assignment of a tangent vector at each point $a\in M$. A \textit{$C^r$ vector field}, $s$, on a manifold is a vector field such that its coordinate  components are $C^r$ functions. For any two vector fields, $s$ and $w$, their \textit{commutator}, $[s,w]$, is 
\begin{equation}\label{aeq11}
    [s,w](f)=s[w(f)]-w[s(f)].
\end{equation}
Note that the commutator of two vector fields in a coordinate basis vanishes, as a consequence of the equality of mixed partial derivatives in $\mathbb{R}^n$. In the following, the symbol $(f)$ in vectors will be dropped, as the action of a vector on an arbitrary smooth function is implied by its nature.

Let $V$ be an $n$-dimensional vector space. The elements of $V$ will be denoted with \textit{contravariant} indices, i.e., $s^{\mu}\in V$. Consider the collection of linear maps $f:V\to\mathbb{R}$. This collection, together with the field of real numbers and ordinary sum and multiplication by scalar operations, has the structure of a vector space and is known as the \textit{dual vector space} to $V$, $V^*$. Elements of $V^*$ are called \textit{dual vectors}, and are denoted by \textit{covariant} indices, i.e., $\omega_{\mu}\in V^*$. A \textit{dual vector field} is an assignment of a tangent dual vector at each $a \in M$. A dual vector field is said to be $C^r$ if for each $C^r$ vector field, $s^{\mu}$, the function $\omega_{\mu}(s^{\mu})$ is $C^r$. Given a basis of $V$, $\{s_1^{\mu},\ldots,s_n^{\mu}\}$, a \textit{dual basis}, $\{w^1_{\mu},\ldots,w^n_{\mu}\}$, to $\{s_1^{\mu},\ldots,s_n^{\mu}\}$ is defined as vectors of $V^*$ such that $w^{a}_{\mu}(s_{b}^{\mu})=\delta^{a}\mathstrut_{b}$, where $\delta^{a}\mathstrut_{b}$ is $1$ if $a=b$ and $0$ otherwise. Using this same line of reasoning, the dual to $V^*$, $V^{**}$, can be seen to be \textit{canonically} isomorphic to $V$. In essence, this follows from the fact that the linear maps $f':V^*\to\mathbb{R}$ can be defined ``naturally'' as 
\begin{equation}
    f'(s^{\mu},w_{\mu})=w_{\mu}(s^{\mu}),
\end{equation}
and thus, the only spaces of significance are $V$ and $V^*$. Consequently, one may view vectors as linear maps $f:V^*\to\mathbb{R}$, in which case one may write $w_{\mu}(s^{\mu})$ simply as $w_{\mu}s^{\mu}$ or $s^{\mu}w_{\mu}$. Lastly, due to the one-to-one correspondence between vectors and directional derivatives, it is natural to denote the elements of a basis of the tangent vector space induced by a coordinate system simply as $(\partial_{x^{n}})^{\mu}$, where $x^{n}$ is representative of the coordinates induced in $\mathbb{R}^n$ by the coordinate system. Similarly, constructing a dual basis by requiring that $s_{\mu}^a(\partial_{x^{b}})^{\mu}=\delta^{a}\mathstrut_{b}$ leads one to the convenient representation of elements of the dual basis by $(d{x^{a}})_{\mu}$.

A \textit{tensor}, $T$, of rank $(\ell,\ell')$ over a vector space, $V$, is a multilinear map 
\begin{equation}
    T:\underbrace{V^*\times\cdots\times V^*}_{\ell}\times \underbrace{V\times\cdots\times V}_{\ell'}\to\mathbb{R},
\end{equation}
i.e., given $\ell$ dual vectors and $\ell'$ vectors, $T$ produces a real number, and its linearity does not depend on the number of vectors or dual vector it operates on. Namely, a tensor of rank $(\ell,\ell')$ is denoted by $T^{\mu_1\cdots\mu_{\ell}}\mathstrut_{\nu_1\cdots\nu_{\ell'}}$, i.e., $\ell$ contravariant indices and $\ell'$ covariant indices. Such tensor is an element of the vector space of tensors of rank $(\ell,\ell')$, denoted by $\mathcal{T}(\ell,\ell')$. A \textit{tensor field} is an assignment of a tangent tensor at each $a\in M$. A tensor field of rank $(\ell,\ell')$ is said to be $C^r$ if, by operating on $\ell$ $C^r$ dual vector fields and $\ell'$ $C^r$ vector fields, it yields a $C^r$ function. Evidently, a vector is a tensor of rank $(1,0)$, a dual vector is a tensor of rank $(0,1)$ and a scalar is a tensor of rank $(0,0)$. In the following, we will refer to smooth tensor fields simply as tensors, unless stated otherwise. Since tensors are geometrical objects, they are invariant, but their components on a coordinate system are not. The transformation law for the components of a tensor can be readily derived from eq. \ref{eq65} and the relation between a basis and its dual.

The notation adopted for the representation of tensors is called \textit{abstract index notation}. In particular, “abstract” indices, represented by Greek letters, will be used to indicate the rank of the tensor, that is, in which objects it acts on. Thus, $T^{\mu_1\cdots\mu_{\ell}}\mathstrut_{\nu_1\cdots\nu_{\ell'}}\in\mathcal{T}(\ell,\ell')$. Additionally, “concrete” indices, represented by Latin letters, will be used to indicate the components of a tensor in a coordinate system. Hence, $T^{a_1\cdots a_{\ell}}\mathstrut_{b_1\cdots b_{\ell'}}\in\mathbb{R}$. Consequently, latin letters take on values from $0$ to $n-1$, where $n$ is the dimension of the pertinent vector space. Furthermore, following \textit{Einstein's summation convention}, repeated indices on an equation, regardless of being Greek or Latin letters, imply a sum. Such indices are referred to as “dummy” indices, as they carry no information regarding the rank of the tensor and are merely an implication of a “hidden” sum of such indices from $0$ to $n-1$. Finally, equations such as
\begin{equation}
    T^{\mu_1\cdots\mu_{\ell}}\mathstrut_{\nu_1\cdots\nu_{\ell'}}=0
\end{equation}
should be interpreted as stating that the tensor $T^{\mu_1\cdots\mu_{\ell}}\mathstrut_{\nu_1\cdots\nu_{\ell'}}$ is the zero element of the vector space of $\mathcal{T}(\ell,\ell')$. As it is known, under the ordinary sum operation, this is just the object that has all of its components in any coordinate system equal to zero.

We now define two operations on tensors. The first one is called the \textit{outer product}, which is a map $\otimes:\mathcal{T}(\ell,\ell')\times\mathcal{T}(\eta,\eta')\to\mathcal{T}(\ell+\eta,\ell'+\eta')$. The outer product of $T^{\mu_1\cdots\mu_{\ell}}\mathstrut_{\nu_1\cdots\nu_{\ell'}}$ and $S^{\mu_1\cdots\mu_{\eta}}\mathstrut_{\nu_1\cdots\nu_{\eta'}}$ is denoted by
\begin{equation}
    T^{\mu_1\cdots\mu_{\ell}}\mathstrut_{\nu_1\cdots\nu_{\ell'}}\otimes S^{\alpha_1\cdots\alpha_{\eta}}\mathstrut_{\beta_1\cdots\beta_{\eta'}}=T^{\mu_1\cdots\mu_{\ell}}\mathstrut_{\nu_1\cdots\nu_{\ell'}}S^{\alpha_1\cdots\alpha_{\eta}}\mathstrut_{\beta_1\cdots\beta_{\eta'}}.
\end{equation}
The second operation is called \textit{contraction} with respect to the $\gamma$th contravariant slot and $\lambda$th covariant slot, and is a map $C_{\gamma\lambda}:\mathcal{T}(\ell,\ell')\to\mathcal{T}(\ell-1,\ell'-1)$ defined as
\begin{equation}
    C_{\gamma,\lambda}T^{\mu_1\cdots \mu_{\gamma}\cdots\mu_{\ell}}\mathstrut_{\nu_1\cdots \nu_{\lambda}\cdots\nu_{\ell'}}=T^{\mu_1\cdots a\cdots\mu_{\ell}}\mathstrut_{\nu_1\cdots a\cdots\nu_{\ell'}},
\end{equation}
where the $\gamma$th contravariant slot, represented by $a$, operates on each element of the basis $\{s_1^{\mu},\ldots,s_n^{\mu}\}$, while the $\lambda$th covariant slot, also represented by $a$, operates on the corresponding dual vector of the dual basis $\{w^1_{\mu},\ldots,w^n_{\mu}\}$. Note that since $a$ is a repeated index, a summation is implied. Also, even though the definition of the contraction requires a basis of $V$ and its dual, the operation is well defined in the sense that $T^{\mu_1\cdots a\cdots\mu_{\ell}}\mathstrut_{\nu_1\cdots a\cdots\nu_{\ell'}}$ is a geometrical object, i.e., independent of coordinate system. As such, due to the action of the contraction on a $(1,1)$ tensor, one can interpret it as the trace of a tensor over the $\gamma$th contravariant slot and $\lambda$th slot. Finally, note that, since the repeated index in the contraction is merely a representation of a summation, it does not “count” for the rank of the tensor, so that $T^{\mu_1\cdots a\cdots\mu_{\ell}}\mathstrut_{\nu_1\cdots a\cdots\nu_{\ell'}}\in\mathcal{T}(\ell-1,\ell'-1)$. 

It is also useful to study the symmetric and antisymmetric parts of tensors. In order to isolate them for a given tensor, one constructs a new tensor by taking the sum over all permutations of its indices. Evidently, for the antisymmetric case, odd permutations must have a minus sign. For instance, one can write the symmetric part of a tensor $T_{\mu\nu\alpha}$, denoted by $T_{(\mu\nu\alpha)}$ as
\begin{equation}
    T_{(\mu\nu\alpha)}=\frac{1}{3!}(T_{\mu\nu\alpha}+T_{\nu\alpha\mu}+T_{\alpha\mu\nu}+T_{\nu\mu\alpha}+T_{\alpha\nu\mu}+T_{\mu\alpha\nu}),
\end{equation}
and the antisymmetric part, denoted by $T_{[\mu\nu\alpha]}$, as
\begin{equation}
    T_{[\mu\nu\alpha]}=\frac{1}{3!}(T_{\mu\nu\alpha}+T_{\nu\alpha\mu}+T_{\alpha\mu\nu}-T_{\nu\mu\alpha}-T_{\alpha\nu\mu}-T_{\mu\alpha\nu}).
\end{equation}
In general, one has
\begin{equation}\label{symmetric}
    T_{(\mu_{1}...\mu_{\ell})}=\frac{1}{\ell!}T_{\mu_{\alpha(1)}\cdots\mu_{\alpha(\ell)}},
\end{equation}
\begin{equation}\label{antisymmetric}
    T_{[\mu_{1}...\mu_{\ell}]}=\frac{1}{\ell!}\delta_{\alpha}T_{\mu_{\alpha(1)}\cdots\mu_{\alpha(\ell)}},
\end{equation}
where a sum is taken over all permutations of $1,...,\ell$, and $\delta_{\alpha}$ (not a tensor) is $+1$ for even permutations and $-1$ for odd permutations. These prescriptions are also valid for contravariant indices. Furthermore, the contraction of indices of symmetric slots with antisymmetric slots will always vanish, as the terms of the contraction will cancel out exactly.

An $\ell$-form over $V$ is an antisymmetric tensor of rank $(0,\ell)$,
\begin{equation}
    s_{\mu_1\cdots\mu_{\ell}}=s_{[\mu_1\cdots\mu_{\ell}]}.
\end{equation}
The vector space of $\ell$-forms over $V$ is denoted by $\Lambda^{\ell}$. Even though it is not natural to say that a tensor of rank $(0,1)$ is symmetric or antisymmetric, it is customary to refer to dual vectors as $1$-forms. Similarly, a tensor of rank $(0,0)$ is referred to as a $0$-form. Let $s_{\mu_1\cdots\mu_{\ell}}$ be an $\ell$-form and $w_{\nu_1\cdots\nu_{\ell'}}$ be an $\ell'$-form. One can construct an $(\ell+\ell')$-form by antisymmetrizing the outer product of $s_{\mu_1\cdots\mu_{\ell}}$ and $w_{\nu_1\cdots\nu_{\ell'}}$, thus defining a map $\wedge:\Lambda^{\ell}\times\Lambda^{\ell'}\to\Lambda^{\ell+\ell'}$, known as the \textit{wedge product}, by
\begin{equation}\label{ori}
    (s\wedge w)_{\mu_1\cdots\mu_{\ell}\nu_1\cdots\nu_{\ell'}}=\frac{(\ell+\ell')!}{\ell!\ell'!}s_{[\mu_1\cdots\mu_{\ell}}w_{\nu_1\cdots\nu_{\ell'}]}.
\end{equation}
\begin{theorem}\label{propform}
    Let $V$ be a $n$-dimensional vector space and $\Lambda^{\ell}$ denote the vector space of $\ell$-forms over $V$. Then \normalfont{dim}$\;\Lambda^{\ell}=n!/\ell!(n-\ell)!$ \textit{if} $\ell\leq n$ \textit{and} \normalfont{dim}$\;\Lambda^{\ell}=0$ \textit{if} $\ell>n$.
\end{theorem}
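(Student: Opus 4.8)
The plan is to exhibit an explicit basis of $\Lambda^{\ell}$ built from a dual basis of $V^*$ and then count it. First I would fix a basis $\{e_1^{\mu},\ldots,e_n^{\mu}\}$ of $V$ together with its dual basis $\{\omega^1_{\mu},\ldots,\omega^n_{\mu}\}$, and consider the collection of $\ell$-forms
\[
\omega^{a_1}\wedge\cdots\wedge\omega^{a_\ell},\qquad 1\le a_1<a_2<\cdots<a_\ell\le n.
\]
The claim to be established is that, for $\ell\le n$, this collection is a basis of $\Lambda^{\ell}$, from which the dimension count follows immediately.

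For the spanning part, I would start from the fact that any $\ell$-form is in particular a tensor of rank $(0,\ell)$, hence a linear combination of the outer products $\omega^{a_1}\otimes\cdots\otimes\omega^{a_\ell}$; applying the antisymmetrization built into the definition of an $\ell$-form and using the definition of the wedge product, eq. \ref{ori}, one rewrites each such term, up to sign and a combinatorial factor, in terms of $\omega^{a_1}\wedge\cdots\wedge\omega^{a_\ell}$. Any such wedge product with a repeated index vanishes, and any one with distinct indices can be reordered into strictly increasing order at the cost of a sign, so only the strictly increasing wedge products survive. For linear independence, I would compute $(\omega^{a_1}\wedge\cdots\wedge\omega^{a_\ell})(e_{b_1}^{\mu},\ldots,e_{b_\ell}^{\mu})$ for strictly increasing multi-indices $(a_i)$ and $(b_i)$; from eq. \ref{ori} and $\omega^a_{\mu}e_b^{\mu}=\delta^a{}_b$ this equals a determinant of Kronecker deltas, which is $1$ when $(a_i)=(b_i)$ and $0$ otherwise. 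Evaluating a vanishing linear combination $\sum c_{a_1\cdots a_\ell}\,\omega^{a_1}\wedge\cdots\wedge\omega^{a_\ell}=0$ on $(e_{b_1}^{\mu},\ldots,e_{b_\ell}^{\mu})$ then yields $c_{b_1\cdots b_\ell}=0$ for every strictly increasing multi-index, giving independence.

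Counting the basis elements finishes the $\ell\le n$ case: the number of strictly increasing sequences $1\le a_1<\cdots<a_\ell\le n$ is the number of $\ell$-element subsets of $\{1,\ldots,n\}$, namely $\binom{n}{\ell}=n!/\ell!(n-\ell)!$. For $\ell>n$, any $\ell$-tuple of indices drawn from $\{1,\ldots,n\}$ must repeat a value by the pigeonhole principle, so antisymmetry forces every $\ell$-form to vanish on every basis $\ell$-tuple of $V$, hence on all of $V\times\cdots\times V$; thus $\Lambda^{\ell}=\{0\}$ and $\dim\Lambda^{\ell}=0$.

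I expect the main obstacle to be the bookkeeping in the spanning step: keeping careful track of the combinatorial factors $(\ell+\ell')!/\ell!\ell'!$ appearing in eq. \ref{ori} and verifying that the passage from the tensor-product expansion to the strictly-increasing wedge expansion is internally consistent, together with establishing the determinant-of-deltas evaluation formula, which is the technical heart of the independence argument. Everything else is routine linear algebra and a standard combinatorial identity.
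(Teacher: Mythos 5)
Your proof is correct and follows essentially the same route as the paper: both construct the candidate basis of wedge products of dual basis elements with strictly increasing indices and count them via the binomial coefficient, with the $\ell>n$ case handled by the vanishing of wedge products with repeated factors. The only difference is that you supply the spanning and linear-independence verifications (via antisymmetrization of the tensor expansion and evaluation on basis vectors) that the paper's proof merely asserts, which strengthens rather than changes the argument.
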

\begin{proof}
   Consider a basis of the vector space of two-forms over a three-dimensional vector space, which can be written as $\{dx^0\wedge dx^1,dx^1\wedge dx^2,dx^0\wedge dx^2 \}$. Note that this basis is constructed by taking the linear independent wedge products of all possible combinations of two elements of $\{dx^0,dx^1,dx^2\}$, which is the basis of one-forms over the three-dimensional vector space. Furthermore, since the wedge product produces a $\delta_{\alpha}$ (see eq. \ref{antisymmetric}) over exchange of indices, combinations with the same indices in different positions will yield a linear dependent element. Additionally, due to the antisymmetry, repeated indices will make the wedge product vanish. Thus, a basis of $\ell$-forms over an $n$-dimensional vector space can be constructed by taking $\ell$ distinct elements of the set $\{dx^0,\ldots,dx^{n-1}\}$ such that two arrangements with the same elements are considered to be equivalent. The number of elements that can be constructed in this manner is given by the \textit{binomial coefficient} \cite{Bryant1993}, therefore
    \begin{equation}
        \text{dim}\;\Lambda^{\ell}=\binom{n}{\ell}=\frac{n!}{\ell!(n-\ell)!}.
    \end{equation}
    Moreover, if $\ell>n$, then any combination of $\{dx^0,\ldots,dx^{n-1}\}$ will result in elements with repeated $dx$'s, which, by the rules of the wedge product, will vanish.
\end{proof}

We now define the metric tensor, which is the operator that measures the infinitesimal squared distance given by an infinitesimal displacement. Seeing that the notion of an infinitesimal displacement in a given direction is precisely captured by vectors as derivative operators, the metric can be interpreted as the inner product of the tangent space at each $a\in M$. Hence, a metric, $g_{\mu\nu}$, is a symmetric, nondegenerate map $V\times V\to \mathbb{R}$. A \textit{symmetric} metric respects the property that 
\begin{equation}
    g_{\mu\nu}s^{\mu}w^{\nu}=g_{\mu\nu}s^{\nu}w^{\mu},\;\forall\;s^{\mu},\;w^{\mu}\in V,
\end{equation}
and by \textit{nondegenerate}, it is meant
\begin{equation}\label{deg}
    g_{\mu\nu}s^{\mu}w^{\nu}=0,\;\forall\;w^{\mu}\in V_a\text{ if and only if }s^{\mu}=0.
\end{equation}

Evidently, since any tensor can be reduced to a symmetric and antisymmetric part, one must have that $g_{(\mu\nu)}=g_{\mu\nu}$ and $g_{[\mu\nu]}=0$. The symmetric property can be interpreted when one considers the geometrical notion of an inner product, i.e., the ``measurement'' it makes should not depend on the order of vectors it acts on. Additionally, the nondegenerate property implies that the $n\times n$ matrix form of $g_{\mu\nu}$ has a nonvanishing determinant, i.e., $g_{\mu\nu}$ is invertible. Moreover, note that the metric gives rise to a ``natural'' isomorphism between $V$ and $V^*$, as it can be seen, equivalently, as a map $V\to V^*$. Similarly, the inverse metric, which acts as the inner product of $V^*$, can be seen as a map $V^*\times V^*\to\mathbb{R}$, or equivalently, as a map $V^*\to V$. Since the metric and its inverse can be used to identify vectors with dual vectors, and vice versa, they can be used to “raise” or “lower” indices in tensors. In order for this process to be consistent, the metric and its inverse must obey $g^{\mu\alpha}g_{\alpha\nu}=\delta^{\mu}\mathstrut_{\nu}$, where $\delta^{\mu}\mathstrut_{\nu}$ is the identity map over $V$. For example, the inner product of $s^{\mu}$ and $w^{\mu}$ can be written as
\begin{equation}
    g_{\mu\nu}s^{\mu}w^{\nu}=s^{\mu}w_{\mu}=s_{\mu}w^{\mu}.
\end{equation}

The \textit{signature} of a metric is the number of $+$ and $-$ that one gets by applying it to all vectors of an orthonormal basis of $V$. \textit{Riemannian} metrics, i.e., metrics with signatures $+\ldots+$, are positive definite. However, metrics do not need to be positive definite. For instance, a \textit{Lorentzian} metric has the signature $-+\ldots+$ (with only one minus), which evidently gives rise to three classes of vectors, depending on the sign of their norm. In particular, if $s^{\mu}s_{\mu}<0$, the vector is said to be \textit{timelike}. Similarly, if $s^{\mu}s_{\mu}=0$, the vector is said to be \textit{null}, and if $s^{\mu}s_{\mu}>0$, it is said to be \textit{spacelike}.

A \textit{spacetime}, $(M,g_{\mu\nu})$, is a connected Hausdorff manifold that has a Lorentzian metric defined on it. From a physical perspective, the points of a spacetime are then referred to as \textit{events}. In particular, the restrictions on the manifold structure of a spacetime are necessary in order to exclude unphysical ones. Namely, there are physical reasons to believe that a spacetime is a manifold with such properties. First, a spacetime is expected to be Hausdorff, so that it is always possible to physically distinguish different events. Second, connectedness implies that spacetime cannot be divided into regions that cannot communicate. If this was not the case, one could only consider a connected component, since detection of other connected components would never be possible. Lastly, a Lorentzian metric is necessary for a consistent differentiation between a time dimension and space dimensions, which is also in accord with a limited spatial velocity as measured by observers. It should be noted that the Hausdorff property together with the existence of a Lorentzian metric imply that a spacetime must be paracompact. For an extensive discussion on the restrictions on the topology of the universe and these properties, see, e.g., \cite{Geroch1979}. For simplicity, it will be assumed that the metric defined on a spacetime is a smooth tensor field, but we stress that this is not a necessary property for some of the developments presented in this work. For discussions on the order of differentiability necessary for the metric tensor, see \cite{Hawking1973}. 

A \textit{$C^r$ curve}, $\lambda$, on a manifold, $M$, is a $C^{r}$ map $\lambda:\mathbb{R}\to M$. One can associate a tangent vector, $s^{\mu}$, to $\lambda$ at each point $a\in \lambda$ by setting it to be equal to the derivative of the function $f\circ\lambda$, where $f\in\mathcal{F}$, evaluated at $a$ with respect to the parameter of the curve, $t$. More precisely, by choosing a coordinate system, the curve $\lambda(t)$ will get mapped into a curve $x^{a}(t)$ in $\mathbb{R}^n$. Thus, for any $f\in\mathcal{F}$, one has
\begin{equation}\label{aeq100}
    s^{\mu}=\frac{dx^{a}}{dt}(\partial_{x^a})^{\mu}.
\end{equation}
Conversely, it is possible to find the \textit{integral curves} of $s^{\mu}$, that is, the family of curves in $M$ such that for each point $a\in M$ only one curve of this family passes, and its tangent vector is $s^{\mu}$. By choosing a coordinate system in a neighborhood of $a$, finding these curves reduces to solving the system 
\begin{equation}
    \frac{dx^{a}}{dt}=s^a(x^0,...,x^{n-1}).
\end{equation}
Finally, a curve is said to be timelike, null or spacelike if for each $a\in \lambda$, its tangent vector is timelike, null or spacelike, respectively.

A similar characterization of embedded submanifolds can be made. Let $(M,g_{\mu\nu})$ be an $n$-dimensional spacetime, $S\subset M$ be a hypersurface and $a\in S$. The tangent space, $\overline{V}_a$, of the manifold $S$ can be viewed as a $(n-1)$-dimensional subspace of $V_a$, the tangent space of $a\in M$. Such subspace can be identified with the one orthogonal to a vector $\ell^{\mu}\in V_a$, i.e., $s^{\mu}\in\overline{V}_a$ if $s^{\mu}\ell_{\mu}=0$, and, consequently, the vector $\ell^{\mu}$ is said to be \textit{normal} to $S$. If the normal to a hypersurface is timelike, the hypersurface is said to be spacelike, and if its normal is null, the hypersurface is said to be null. Similar associations hold for surfaces, where its characterization is made by the nature of the vectors that span it. In particular, a surface is said to be spacelike if the vectors that span it are spacelike. Note that a spacelike surface can be generated by taking the orthogonal space to a timelike and spacelike vector or two non-proportional null vectors.

\section{Derivative operators}\label{derivativeoperators}

A \textit{connection}, $\nabla$, on a manifold, $M$, is a map that takes tensors of rank $(\ell,\ell')$ to tensors of rank $(\ell,\ell'+1)$. Because of this, it is convenient to denote the connection as $\nabla_{\mu}$, even though it is not a dual vector. A connection can be interpreted as the operator that ``takes'' tensors from a point in a manifold to another, so that one can analyze how the tensor varies, since comparison of tensors over different vector spaces is not particularly useful. Indeed, the notion of the ``velocity'' of a curve is already captured by tangent vectors, but a connection will allow one to study ``acceleration'' of curves by providing a means to analyze changes of tensors.

In order to interpret a connection as a derivative operator, one must require it to be linear, respect the Leibniz rule and commute with contraction of indices of a tensor. Additionally, it should be consistent with the notion of a vector as a directional derivative of a function, $f$, i.e., 
\begin{equation}\label{aeq2}
    s(f)=s^{\mu}\nabla_{\mu}f,
\end{equation}
and one also requires that it be torsion free\footnote{Not requiring that the torsion vanishes would lead one to theories of gravitation such as \textit{teleparallel gravity}.}, i.e., 
\begin{equation}\label{aeq3}
    \nabla_{[\mu}\nabla_{\nu]}f=0.
\end{equation}

From the requirement that $\nabla_{\mu}$ obey the Leibniz rule and eqs. \ref{aeq2} and \ref{aeq3}, it is possible to write the commutator of two vectors (see eq. \ref{aeq11}), $s^{\mu}$ and $w^{\mu}$, as
\begin{equation}
    \begin{aligned}[b]
    [s,w](f)& =s[w(f)]-w[s(f)]\\
    &=s^{\mu}\nabla_{\mu}(w^{\nu}\nabla_{\nu}f)-w^{\mu}\nabla_{\mu}(s^{\nu}\nabla_{\nu}f)\\
    &=(s^{\mu}\nabla_{\mu}w^{\nu}-w^{\mu}\nabla_{\mu}s^{\nu})\nabla_{\nu}f,   
    \end{aligned}
\end{equation}
and by comparison with eq. \ref{aeq2} yields
\begin{equation}\label{commu}
    [s,w]^{\mu}=s^{\nu}\nabla_{\nu}w^{\mu}-w^{\nu}\nabla_{\nu}s^{\mu}.
\end{equation}

One reliable connection can be easily defined by a coordinate system, simply by taking it to be the ordinary derivative operator associated with it. Namely, the \textit{ordinary derivative} can be defined as the action of $\partial_{\mu}=\partial/\partial x^{\mu}$ on all components of the tensor with respect to the coordinate basis induced by $\psi$. Consequently, the properties mentioned in the definition of $\nabla_{\mu}$ follow directly from the properties of partial derivatives. However, if one were to choose a different coordinate system, $\psi'$, the same process would yield a different connection, as the tensors resulting from the actions of $\partial_{\mu}$ and $\partial_{\mu}'$ would be different. Thus, simply defining a connection as such is not a unique prescription.

To progress towards a unique definition of a connection, it is useful to study how different choices of connections differ in their application on tensors. As given by eq. \ref{aeq2}, any two connections must agree on their action on scalars. In order to investigate the disagreement of the action of any two connections on vectors, it is convenient to calculate the difference between their action on $fs^{\mu}$ for an arbitrary smooth function $f$ and vector $s^{\mu}$,
\begin{equation}\label{aeq30}
    \nabla_{\mu}(fs^{\nu})-\nabla'_{\mu}(fs^{\nu})=f(\nabla_{\mu}-\nabla'_{\mu})s^{\nu}.
\end{equation}
It is clear that the action of the connection on a vector does not depend only on its value at a given point, but also on a neighborhood of that point. However, the right hand side of eq. \ref{aeq30} can be shown to depend only on the value of the vector at the point where the derivative is being evaluated. 

For example, consider the vector $fs^{\mu}+f'w^{\mu}$, with $f(a)=1$ and $f'(a)=0$. Both functions are smooth, and one has information about their value only on $a$. They are otherwise arbitrary on the rest of $M$. Due to the fact that the test function is factored and the overall result of the right hand side is calculated at $a$, one obtains
\begin{equation}
    [(\nabla_{\mu}-\nabla'_{\mu})(fs^{\nu}+f'w^{\nu})]\big|_a=[(\nabla_{\mu}-\nabla'_{\mu})s^{\nu}]\big|_a.
\end{equation}
From this, one concludes that $[(\nabla_{\mu}-\nabla'_{\mu})(fs^{\mu}+f'w^{\mu})]|_a$ depends only on the value of $fs^{\mu}+f'w^{\mu}$ at $a$. Thus, $(\nabla_{\mu}-\nabla'_{\mu})$ defines a linear map from $(1,0)$ tensors to $(1,1)$ tensors at $a\in M$. More precisely, it defines a symmetric (due to eq. \ref{aeq3}) $(1,2)$ tensor,
\begin{equation}\label{eq443}
(\nabla_{\mu}-\nabla'_{\mu})s^{\nu}=C^{\nu}\mathstrut_{\mu\alpha}s^{\alpha}.
\end{equation}

In the case of $\nabla'_{\mu}=\partial_{\mu}$, the tensor $C^{\nu}\mathstrut_{\mu\alpha}$ is called a \textit{Christoffel symbol} and denoted by $\Gamma^{\nu}\mathstrut_{\mu\alpha}$. Given any coordinate system, $\Gamma^{\nu}\mathstrut_{\mu\alpha}$ relates the action of the ordinary derivative operator to that of an arbitrary connection. In particular, given an arbitrary connection, the component $\Gamma^{a}\mathstrut_{bc}$ is the $a$th component of the variation of the vector $(\partial_c)^{\mu}$ along the direction of $(\partial_b)^{\mu}$, with such variation being the one associated with the arbitrary connection. Lastly, generalizations of eq. \ref{eq443} to tensors of arbitrary rank follows by induction. In other words, one first considers the action of $(\nabla_{\mu}-\nabla'_{\mu})$ on a scalar such as $s^{\mu}w_{\mu}$ to find the relation for dual vectors, and then for tensors of arbitrary rank one considers the pertinent contraction of $C^{\nu}\mathstrut_{\mu\alpha}$ with each contravariant and covariant index.

Although eq \ref{eq443} gives a prescription to relate the action of any two connections, it still does not give rise to a unique connection. In essence, to define a connection associated with the structure of the manifold, one must consider the parallel transport of vectors. A vector, $s^{\mu}$, is said to be \textit{parallel transported} along a curve with tangent $t^{\mu}$ if the equation
\begin{equation}\label{aeq33}
t^{\mu}\nabla_{\mu}s^{\nu}=0
\end{equation}
is satisfied along the curve. Consequently, parallel transport can be interpreted as the lack of variation of a vector along a curve. The desired connection can be derived by requiring that any two vectors $s^{\mu}$ and $w^{\mu}$ which are parallel transported along a curve with tangent $t^{\mu}$ have a constant inner product,
\begin{equation}\label{aeq4}
t^{\mu}\nabla_{\mu}(g_{\nu\alpha}s^{\nu}w^{\alpha})=0.
\end{equation}
Eq. \ref{aeq4} will be valid for all curves and parallel transported vectors if and only if
\begin{equation}\label{aeq5}
\nabla_{\mu}g_{\nu\alpha}=0.
\end{equation}

The connection that respects eq. \ref{aeq5} is said to be \textit{compatible} with the metric, and it is referred to as the \textit{Levi-Civita connection}. In the following, $\nabla_{\mu}$ will denote the Levi-Civita connection. In particular, the action of the ordinary derivative for a given coordinate system is related to that of $\nabla_{\mu}$ by the Christoffel symbol
\begin{equation}
\Gamma^{\alpha}\mathstrut_{\mu\nu}=\frac{1}{2}g^{\alpha\beta}(\partial_{\mu}g_{\nu\beta}+\partial_{\nu}g_{\mu\beta}-\partial_{\beta}g_{\mu\nu}).
\end{equation}
It is useful to evaluate the contracted Christoffel symbol, which reads
\begin{equation}\label{4567}
    \Gamma^{a}\mathstrut_{a\mu}=\frac{g^{ac}}{2}\partial_{\mu} g_{ac},
\end{equation}
but since it is possible to relate the variation of the logarithm of the determinant of a nonsingular matrix to the trace of its inverse \cite{Poisson2004}, one can rewrite eq. \ref{4567} as
\begin{equation}
    \Gamma^{a}\mathstrut_{a\mu}=\partial_{\mu}\ln{\sqrt{-g}},
\end{equation}
where $g=\text{det}(g_{\mu\nu})$. Thus, choosing a coordinate system, considering the properties of $\nabla_{\mu}$ and $\Gamma^{\mu}\mathstrut_{\nu\alpha}$, one finds
\begin{equation}
    \nabla_{a}f=\partial_af, \;\forall\;f\in\mathcal{F},
\end{equation}
\begin{equation}\label{Christoffel vector}
    \nabla_{a}s^{b}=\partial_{a}s^{b}+\Gamma^{b}\mathstrut_{ac}s^{c},
\end{equation}
\begin{equation}\label{Christoffel dual}
    \nabla_{a}s_{b}=\partial_{a}s_{b}-\Gamma^{c}\mathstrut_{ab}s_{c},
\end{equation}
\begin{equation}\label{Christoffel divergence}
    \nabla_{a}s^{a}=\frac{1}{\sqrt{-g}}\partial_a(\sqrt{-g}s^{a}).
\end{equation}
and generalizations for higher rank tensors follow by induction.

It is also useful to define the derivative operator that takes $\ell$-forms to $(\ell+1)$-forms. The \textit{exterior derivative} is a map, $d:\Lambda^{\ell}\to\Lambda^{\ell+1}$, defined by
\begin{equation}
    (ds)_{\nu\mu_1\cdots\mu_{\ell}}=(\ell+1)\nabla_{[\nu}s_{\mu_1\cdots\mu_{\ell}]}.
\end{equation}
Since the action of any two connections is related by the symmetric tensor, $C^{\nu}\mathstrut_{\mu\alpha}$, the definition of $d$ is independent of choice of connection. 

Lastly, we define the notion of Lie differentiation. Let $M$ and $N$ be $n$-dimensional manifolds, and a map $\psi:M\to N$ be $C^{\infty}$, a bijection, and its inverse be $C^{\infty}$, i.e., a \textit{diffeomorphism}. Now, the case of particular interest is when $M=N$, and the pertinent analysis is that of how tensors fields in $M$ are affected by the action of $\psi$. Evidently, the action of $\psi$ on tensors of rank $(0,0)$ is such that the function smooth $f:\psi[M]\to\mathbb{R}$ is “pulled back” to the function $f\circ\psi:M\to\mathbb{R}$. Similarly, one can identify that the action of $\psi$ on tensors of rank $(1,0)$ is to ``pushforward'' tangent vectors at $a\in M$ to tangent vectors at $\psi(a)\in M$, thus defining a map $\psi^*:V_a\to V_{\psi(a)}$. For all smooth $f:\psi[M]\to \mathbb{R}$, the vector $(\psi^*s)^{\mu}\in V_{\psi(a)}$ is defined by 
\begin{equation}
    (\psi^*s)^{\mu}(f)=s^{\mu}(f\circ\psi),
\end{equation}
where $s^{\mu}\in V_a$. In essence, by choosing a coordinate system in a neighborhood of $a\in M$ and one in a neighborhood of $\psi(a)\in M$, the action of $\psi^*$ can then be verified to be the coordinate transformation associated with the change in a coordinate system. In other words, $\psi$ may be viewed as leaving $a$ and vectors at $a$ unchanged, and effectively being a coordinate transformation.

Analogously, one may view the action of $\psi$ on tensors of rank $(0,1)$ as a ``pullback'' map $\psi_*:V^*_{\psi(a)}\to V^*_a$, so that $(\psi_*s)^{\mu}\in V^*_{\psi(a)}$ respects
\begin{equation}
    (\psi_*w)_{\mu}s^{\mu}=w_{\mu}(\psi^*s)^{\mu},\;\forall\;s^{\mu}\in V_a.
\end{equation}
Since $\psi$ is a diffeomorphism, one can then verify that the action of $\psi_*$ has to be given by of $(\psi^{-1})^*$ for scalars to obey corresponding transformation laws, i.e., they must be invariant over coordinate transformations. From the fact that $\psi_*=(\psi^{-1})^*$, one can define the action of $\psi$ on tensors of rank $(\ell,\ell')$, which is denoted by $(\psi^*T)^{\mu_1\cdots\mu_{\ell}}\mathstrut_{\nu_1\cdots\nu_{\ell'}}$. Such a definition is made by requiring that 
\begin{multline}
    (\psi^*T)^{\mu_1\cdots\mu_{\ell}}\mathstrut_{\nu_1\cdots\nu_{\ell'}}(w_1)_{\mu_1}\cdots(w_{\ell})_{\mu_{\ell}}(s_1)^{\nu_1}\cdots(s_{\ell'})^{\nu_{\ell'}}\\=T^{\mu_1\cdots\mu_{\ell}}\mathstrut_{\nu_1\cdots\nu_{\ell'}}(\psi_*w_1)_{\mu_1}\cdots(\psi_*w_{\ell})_{\mu_{\ell}}(\psi_*s_1)^{\nu_1}\cdots(\psi_*s_{\ell'})^{\nu_{\ell'}}.
    \end{multline}
Consequently, from the interpretation of $\psi^*$ and $\psi_*$, one may view the action of the diffeomorphism $\psi$ as inducing a coordinate transformation at each $a\in M$.

From these definitions and interpretations, one can study the variation of tensor fields over the action of a family of diffeomorphisms as follows. Consider a one-parameter group of diffeomorphisms $\psi_s:\mathbb{R}\times M\to M$, that is, for each $s\in\mathbb{R}$, $\psi_s$ is a diffeomorphism. In other words, one has 
\begin{equation}
    \psi_{s}\circ\psi_{s'}=\psi_{s+s'},
\end{equation}
which implies that $\psi_{-s}=(\psi_{s})^{-1}$ and that $\psi_0$ is the identity, mapping every point of $M$ into itself. Now, for a fixed $a\in M$, $\psi_s(a):\mathbb{R}\to M$ defines a curve, called the \textit{orbit} of $\psi_s$. The tangent vector to the orbit of $\psi_s$, $\chi^{\mu}$, can be interpreted as the infinitesimal generator of this one-parameter group. Thus, the \textit{Lie derivative} of a tensor over $\psi_s$ can be defined as
\begin{equation}\label{lie}
   \mathcal{L}_{\chi}T^{\mu_1\cdots\mu_{\ell}}\mathstrut_{\nu_1\cdots\nu_{\ell'}}=\lim_{s\to0}\left(\frac{(\psi_{-s}^*T)^{\mu_1\cdots\mu_{\ell}}\mathstrut_{\nu_1\cdots\nu_{\ell'}}-T^{\mu_1\cdots\mu_{\ell}}\mathstrut_{\nu_1\cdots\nu_{\ell'}}}{s}\right).
\end{equation}
In particular, the abstract index in the representation of $\chi$ has been removed for clearer notation, and the action of the diffeomorphism is represented by $\psi_{-s}^*$ so that the tensors are evaluated in the same point in the manifold. Thus, the Lie derivative is a map of tensors of rank $(\ell,\ell')$ to tensors of rank $(\ell,\ell')$, which clearly obeys the Leibniz rule. In essence, it can be interpreted as the ``infinitesimal'' change that a diffeomorphism has on a tensor, or, more precisely, how a tensor changes as one moves along an orbit of the one-parameter group of diffeomorphisms. 

From eq. \ref{lie}, one can find a prescription for evaluating the Lie derivative of tensors of arbitrary rank. To do so, first note that eq. \ref{lie} implies that
\begin{equation}\label{Liescalar}
    \mathcal{L}_{\chi}f=\chi^{\mu}(f).
\end{equation}
Moreover, using the coordinate transformation perspective of differmorphisms, one can consider a coordinate system such that the parameter $s$ along the orbit of $\psi_s$ is chosen as one of the coordinates, $x^0$, so that $\chi^{\mu}=(\partial_{s})^{\mu}$. Then, the action of $\psi_{-s}$ corresponds to the coordinate transformation $x'^0=x^0+s$, which means that
\begin{equation}
    (\psi_{-s}^*T)^{a_1\cdots a_{\ell}}\mathstrut_{b_1\cdots b_{\ell'}}(x^0,\ldots,x^{n-1})=T^{a_1\cdots a_{\ell}}\mathstrut_{b_1\cdots b_{\ell'}}(x^0+s,\ldots,x^{n-1}),
\end{equation}
which yields a simple result for the Lie derivative of said tensor in this convenient coordinate system,
\begin{equation}
    \mathcal{L}_{\chi}T^{a_1\cdots a_{\ell}}\mathstrut_{b_1\cdot sb_{\ell'}}=\frac{\partial T^{a_1\cdots a_{\ell}}\mathstrut_{b_1\cdots b_{\ell'}}}{\partial x^{0}}.
\end{equation}
Hence, if the coordinate components of a tensor do not depend on the coordinate $x^0$, then $\mathcal{L}_{\chi}T^{a_1\cdots a_{\ell}}\mathstrut_{b_1\cdots b_{\ell'}}=0$. 

For tensors of rank $(1,0)$, i.e., vectors, it is not difficult to deduce that in such coordinate system the components of the $\mathcal{L}_{\chi}w^{\mu}$ and the commutator $[\chi,w]^{\mu}$ are the same. Since both of these quantities are independent of the coordinate system, one can conclude that 
\begin{equation}\label{Lievector}
    \mathcal{L}_{\chi}w^{\mu}=[\chi,w]^{\mu}.
\end{equation}
To evaluate $\mathcal{L}_{\chi}w_{\mu}$, one uses the Leibniz rule and eqs. \ref{Lievector}, \ref{Liescalar} and the commutator of two vectors in terms of the Levi-Civita connection  (i.e., using writing eq. \ref{aeq11} using eq. \ref{aeq2}), which yields
\begin{equation}\label{Lieform}
    \mathcal{L}_{\chi}w_{\mu}=\chi^{\nu}\nabla_{\nu}w_{\mu}+w_{\nu}\nabla_{\mu}\chi^{\nu}.
\end{equation}
The Lie derivative of a tensor of arbitrary rank can then be derived by induction from these results. In particular, its action of tensors of rank $(0,2)$ is given by
\begin{equation}\label{Liemetric}
    \mathcal{L}_{\chi}A_{\mu\nu}=\chi^{\alpha}\nabla_{\alpha}A_{\mu\nu}+A_{\nu\alpha}\nabla_{\mu}\chi^{\alpha}+A_{\mu\alpha}\nabla_{\nu}\chi^{\alpha}.
\end{equation}

Note that the notion of Lie derivative does not depend on an additional structure on the manifold. In contrast, the notion of a unique connection only exists when one is given an additional structure over the manifold, signifying that Lie differentiation is a more fundamental form of differentiation than the one given by the action of a connection. Of course, this translates to the fact that the connection appearing in eqs. \ref{Lieform} and \ref{Liemetric} is arbitrary, i.e., it need not be the Levi-Civita connection. This can be traced back to the fact that all connections must agree on their action of scalars. Similarly, one can also see that the notion of exterior differentiation is also intrinsic to the manifold structure, as it is independent of choice of connection.

\section{Curvature}\label{A3}

The notion of curvature is directly related to the failure of multiple applications of the Levi-Civita connection on a tensor to commute \cite{Wald1984}. Similar to $(\nabla_{\mu}-\nabla'_{\mu})$, one can verify that $(\nabla_{\mu}\nabla_{\nu}-\nabla_{\nu}\nabla_{\mu})$ defines a linear map from $(1,0)$ tensors to $(1,2)$ tensors at $a\in M$. More precisely, it defines a $(1,3)$ tensor,
\begin{equation}\label{eqa20}
(\nabla_{\mu}\nabla_{\nu}-\nabla_{\nu}\nabla_{\mu})s^{\alpha}=-R_{\mu\nu\beta}\mathstrut^{\alpha}s^{\beta},
\end{equation}
known as the \textit{Riemann tensor}. 

From eq. \ref{eqa20} and the same line of reasoning as for $C^{\nu}\mathstrut_{\mu\alpha}$, it is possible to derive 
\begin{equation}\label{Riedual}
    (\nabla_{\mu}\nabla_{\nu}-\nabla_{\nu}\nabla_{\mu})s_{\alpha}=R_{\mu\nu\alpha}\mathstrut^{\beta}s_{\beta},
\end{equation}
\begin{equation}\label{Rie2}
    (\nabla_{\mu}\nabla_{\nu}-\nabla_{\nu}\nabla_{\mu})A^{\alpha\beta}=-R_{\mu\nu\lambda}\mathstrut^{\alpha}A^{\lambda\beta}-R_{\mu\nu\lambda}\mathstrut^{\beta}A^{\alpha\lambda}.
\end{equation}
From these relations, one concludes that $R_{\mu\nu\alpha}\mathstrut^{\beta}$ obeys,
\begin{equation}\label{Rthird}
    R_{\mu\nu\alpha\beta}=-R_{\mu\nu\beta\alpha},
\end{equation}
\begin{equation}\label{Rfirst}
    R_{\mu\nu\beta}\mathstrut^{\alpha}=-R_{\nu\mu\beta}\mathstrut^{\alpha},
\end{equation}
\begin{equation}\label{bianchi}
    \nabla_{[\mu}R_{\nu\alpha]\beta}\mathstrut^{\sigma}=0,
\end{equation}
\begin{equation}\label{Rsecond}
    R_{[\mu\nu\beta]}\mathstrut^{\alpha}=0.
\end{equation}
Eq. \ref{bianchi} is known as the \textit{Bianchi identity}. Eqs. \ref{Rfirst}, \ref{Rsecond} and \ref{Rthird} also imply that
\begin{equation}\label{Rfourth}
    R_{\mu\nu\alpha\beta}=R_{\alpha\beta\mu\nu}.
\end{equation}

The trace of the Riemann tensor over the second and fourth indices (or equivalently, over the first and third) defines the \textit{Ricci tensor}
\begin{equation}
    R_{\mu\nu}=R_{\mu\alpha\nu\beta}g^{\alpha\beta}=R_{\mu\alpha\nu}\mathstrut^{\alpha},
\end{equation}
which is symmetric, as a consequence of eq. \ref{Rfourth}. Finally, the \textit{Ricci scalar} is defined as the trace of the Ricci tensor,
\begin{equation}
    R=R_{\mu\nu}g^{\mu\nu}=R_{\mu}\mathstrut^{\mu}.
\end{equation}
Given a coordinate system, the components of these tensors can be derived from \cite{Wald1984} 
\begin{equation}
    R_{\mu\nu\beta}\mathstrut^{\alpha}=\partial_{\nu}\Gamma^{\alpha}\mathstrut_{\mu\beta}-\partial_{\mu}\Gamma^{\alpha}\mathstrut_{\nu\beta}+\Gamma^{\lambda}\mathstrut_{\mu\beta}\Gamma^{\alpha}\mathstrut_{\lambda\nu}-\Gamma^{\lambda}\mathstrut_{\nu\beta}\Gamma^{\alpha}\mathstrut_{\lambda\mu}.
\end{equation}

By contracting the Bianchi identity, over the second and fourth indices of the Riemann tensor, one finds 
\begin{equation}
    \nabla^{\mu}\left(R_{\mu\nu}-\frac{1}{2}Rg_{\mu\nu}\right)=0.
\end{equation}
This result is of significant importance, as by the field equations postulated by general relativity (see eq. \ref{eq1}), it implies that $\nabla_{\mu}T^{\mu\nu}=0$, i.e., the energy-momentum tensor is locally conserved. Under further assumptions, it is also possible to show \cite{Geroch1975} that the geodesic hypothesis is also a consequence of the contracted Bianchi identity. That is, free \textit{massive} point-like objects follow timelike geodesics. A \textit{geodesic}, $\gamma$, is a curve whose tangent vector is parallel propagated along itself, i.e., a curve whose tangent, $s^{\mu}$, satisfies the \textit{geodesic equation},
\begin{equation}\label{aeq7}
    s^{\mu}\nabla_{\mu}s^{\nu}=f s^{\nu},
\end{equation}
where $f$ is an arbitrary function on the curve. In particular, the geodesic equation can be interpreted as stating that the variation of a curve's tangent is given only on its direction, that is, the vector is varying ``as little'' as possible. In this sense, geodesics can be interpreted as curves that are as ``straight'' as possible in a manifold. Indeed, this is precisely the content of the hypothesis that freely falling bodies follow ``locally straight'' curves.

Furthermore, a geodesic is said to be \textit{affine parametrized} if it satisfies the condition 
\begin{equation}\label{aeq8}
    s^{\mu}\nabla_{\mu}s^{\nu}=0.
\end{equation}
If the tangent vector to $\gamma(t)$ respects eq. \ref{aeq7}, then the tangent vector to $\gamma(t')$ obeys eq. \ref{aeq8}, where $dt'=e^{\int f(t)dt}dt$. In other words, any geodesic can be reparametrized to respect eq. \ref{aeq8}.

Let $(M,g_{\mu\nu})$ be an $n$-dimensional manifold and $\gamma_s(\lambda)$ denote a one-parameter system of affinely parametrized geodesics. This system can be identified in $M$ as the image of the map $\psi:M'\to M$, where $M'$ is a two-dimensional strip of the plane $(\lambda,s)$, with $0<\lambda<\lambda'$ and $-\epsilon<s<\epsilon$ (see fig. \ref{fig:sub1}). The image of the map $\psi$ for lines ${s=\text{constant}}$ are affinely parametrized geodesics, as illustrated in \ref{fig:sub2}.\begin{figure}[h]
\begin{subfigure}[t]{0.5\textwidth}
  \centering
    \includegraphics[scale=1.2]{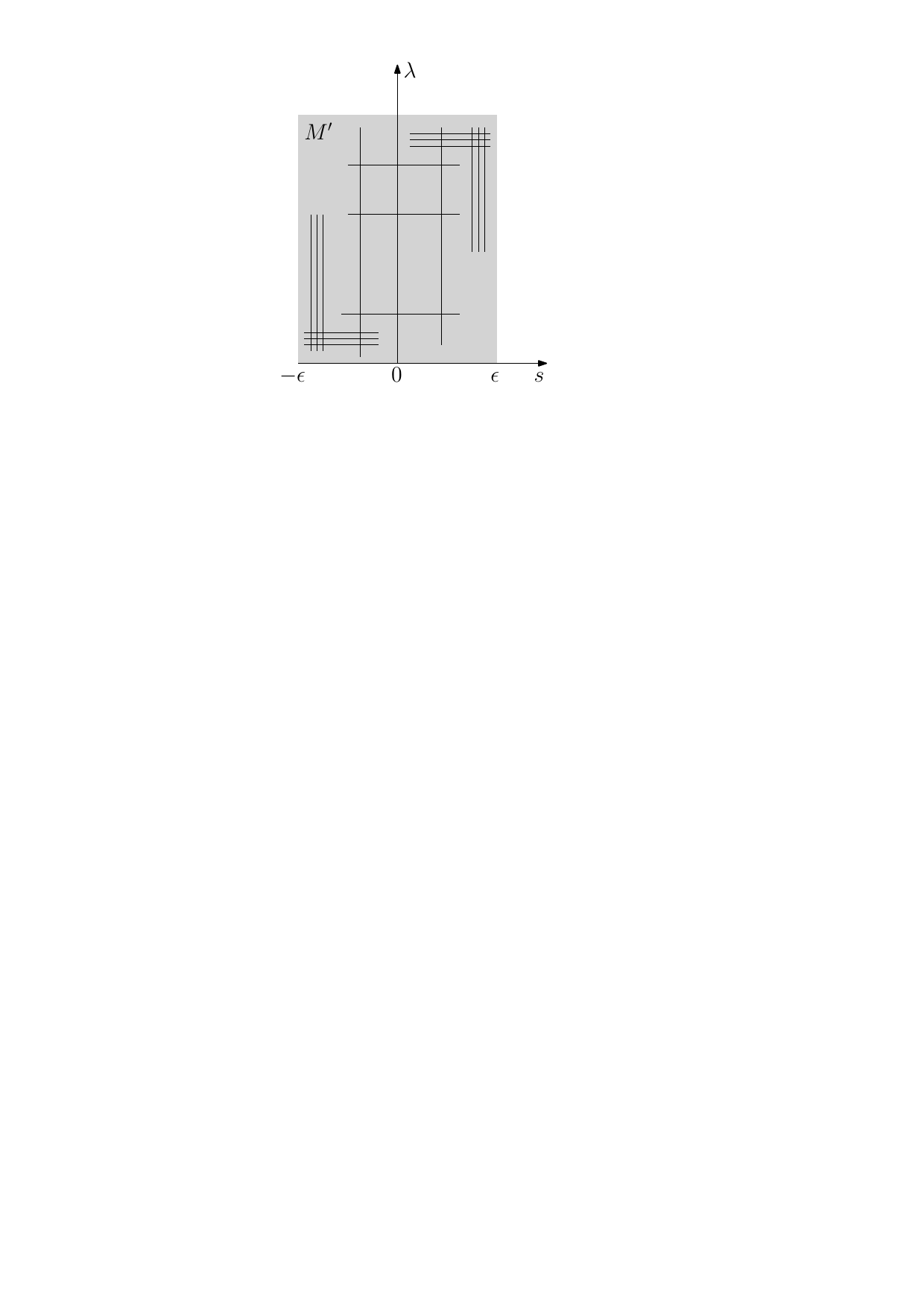}
    \caption{Manifold, $M'$, which is a strip of the plane $(\lambda,s)$ and whose image under the map $\psi$ is a family of null geodesics.}
    \label{fig:sub1}
  \end{subfigure}
  \hfill
  \begin{subfigure}[t]{0.5\textwidth}
  \centering
    \includegraphics[scale=1.2]{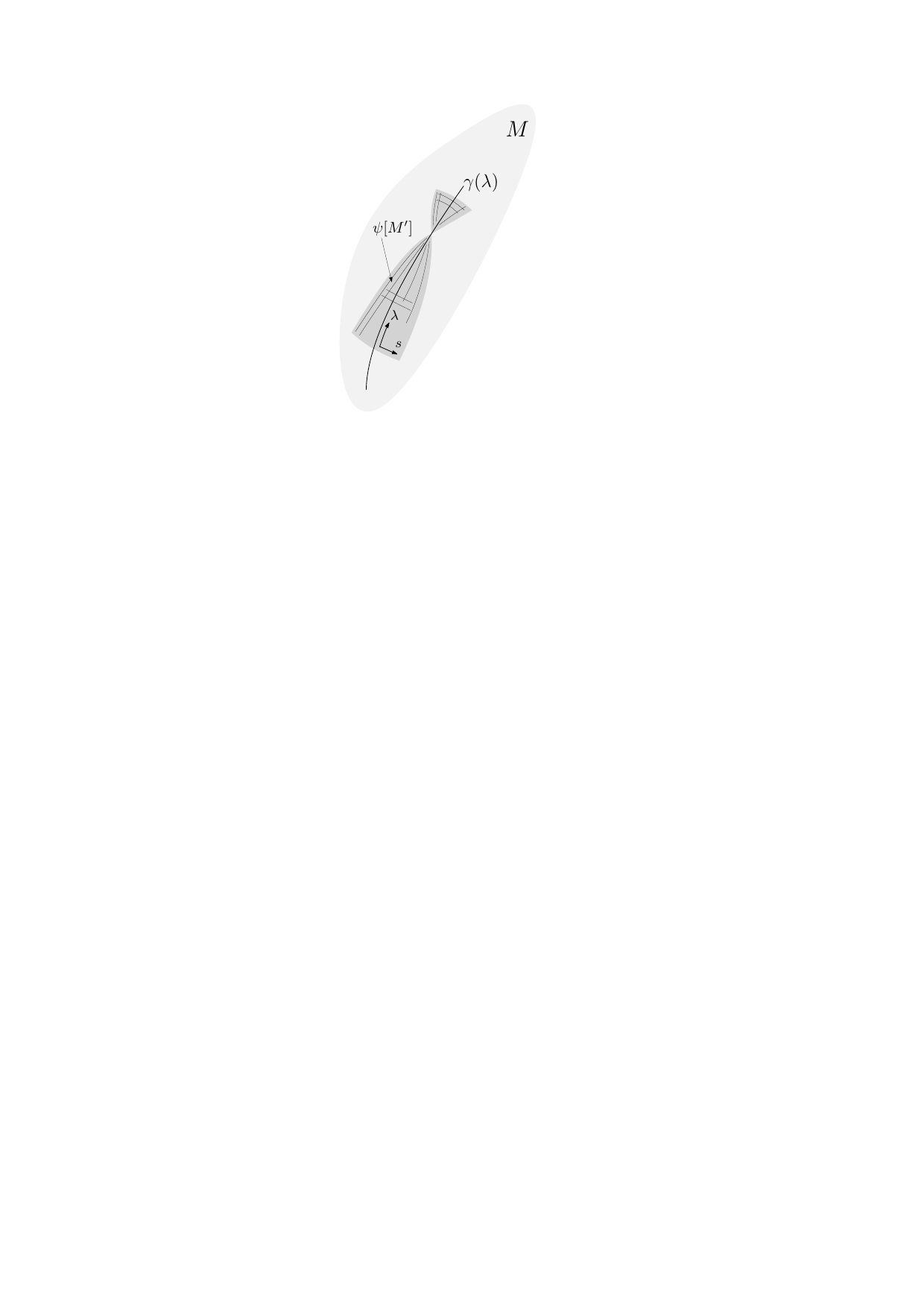}
    \caption{Embedding of $M'$ in $M$ through the map $\psi$. The geodesic $\gamma$ is parametrized by $\lambda$, and the deviation vectors are $(\partial_s)^{\mu}$.}
    \label{fig:sub2}
  \end{subfigure}
  \caption{One-parameter system of geodesics in $M'$ and $M$.}
  \caption*{Source: Adapted from PENROSE \cite{Penrose1972}.}
  \label{fig:sub3}
\end{figure} One may choose $s$ and $\lambda$ as coordinates of $M'$, that is, $(\partial_{\lambda})^{\mu}$ and $(\partial_s)^{\mu}$ comprise a coordinate basis. More precisely, $\ell^{\mu}=(\partial_{\lambda})^{\mu}$ is the tangent vector to the family of geodesics, and $s^{\mu}=(\partial_s)^{\mu}$ is known as the \textit{deviation vector}, which can be interpreted as the displacement between two ``infinitesimally nearby'' geodesics.
By being coordinate vectors, their commutator vanishes, and from eq. \ref{commu} one then has
\begin{equation}\label{aeq10}
    \ell^{\nu}\nabla_{\nu}s^{\mu}=s^{\nu}\nabla_{\nu}\ell^{\mu},
\end{equation}
Namely, the vector $\ell^{\nu}\nabla_{\nu}s^{\mu}$ can be interpreted as the relative velocity of infinitesimally nearby geodesics, as it measures the rate of change of $s^{\mu}$ along geodesics. Correspondingly, the vector $\ell^{\nu}\nabla_{\nu}(\ell^{\alpha}\nabla_{\alpha}s^{\mu})$ can be interpreted as the relative acceleration. 

In this sense, one can readily conclude that the relative acceleration between nearby geodesics is directly related to the Riemann tensor, as
\begin{equation}\label{aeq12}
\begin{aligned}[b]
\ell^{\nu}\nabla_{\nu}(\ell^{\alpha}\nabla_{\alpha}s^{\mu}) 
        & =\ell^{\nu}\nabla_{\nu}(s^{\alpha}\nabla_{\alpha}\ell^{\mu})\\
        & =\ell^{\nu}\nabla_{\nu}s^{\alpha}(\nabla_{\alpha}\ell^{\mu})+\ell^{\nu}s^{\alpha}\nabla_{\nu}\nabla_{\alpha}\ell^{\mu}\\
        & =s^{\nu}\nabla_{\nu}\ell^{\alpha}(\nabla_{\alpha}\ell^{\mu})+\ell^{\nu}s^{\alpha}(\nabla_{\alpha}\nabla_{\nu}\ell^{\mu}-R_{\nu\alpha\beta}\mathstrut^{\mu}\ell^{\beta})\\
        & =-R_{\nu\alpha\beta}\mathstrut^{\mu}\ell^{\nu}s^{\alpha}\ell^{\beta}.
\end{aligned}
\end{equation}
Eq. \ref{aeq12} is known as the \textit{Jacobi equation}, which tells one that geodesics will accelerate in comparison with one another if and only if $R_{\mu\nu\beta}\mathstrut^{\alpha}\neq 0$.

\section{Integration}\label{integration}

Let $M$ be an $n$-dimensional manifold. If there exists an everywhere nonvanishing $n$-form, $s_{\mu_1\cdots\mu_{n}}$, on $M$, then $M$ is said to be \textit{orientable} and $s_{\mu_1\cdots\mu_{n}}$ is said to provide an \textit{orientation}. Due to theorem \ref{propform}, an orientation must be related to any other $n$-form by scalar multiplication. Hence, given a coordinate system, $\psi$, an orientation can be written as (see eq. \ref{ori})
\begin{equation}
    s_{\mu_1\cdots\mu_{n}}=f(x^0,\ldots,x^{n-1})n!(dx^0)_{[\mu_1}\cdots(dx^{n-1})_{\mu_{n}]}=f(x^0,\ldots,x^{n-1})(dx^{0}\wedge\ldots\wedge dx^{n-1})_{\mu_1\cdots\mu_{n}}.
\end{equation}
If $f>0$, the integral of an $n$-form,
\begin{equation}
    w_{\mu_1\cdots\mu_{n}}=f'(x^0,\ldots,x^{n-1})(dx^{0}\wedge\ldots\wedge dx^{n-1})_{\mu_1\cdots\mu_{n}},
\end{equation}
in the open region $S\subset M$ covered by the coordinate system $\psi$ is defined as 
\begin{equation}\label{aeq102}
    \int_{S}w_{\mu_1\cdots\mu_{n}}=\int_{\psi[S]}f'dx^0\ldots dx^{n-1},
\end{equation}
i.e., one uses a coordinate system to perform the integration process on $\mathbb{R}^n$. 

Consequently, the sign of the function $f$ determines the sign of the integration of $n$-forms that rises due to the coordinate system $\psi$. More precisely, eq. \ref{aeq102} would have been defined with a minus sign if $f<0$. In particular, a coordinate system such that the scalar $f$ in the orientation is positive is said to be \textit{right handed}, and if $f<0$, it is said to be \textit{left handed}. This local definition of integration of an $n$-form can be expanded to the entire manifold if it is paracompact \cite{Wald1984}. Such an expansion can be used to state the following result, known as \textit{Stokes' theorem} \cite{Darling1994}.
\begin{theorem}
    Let $M$ be an $n$-dimensional oriented manifold, $S$ be an $n$-dimensional compact oriented submanifold with boundary and let $s_{\mu_1\cdots\mu_{n-1}}$ be an $(n-1)$-form on $M$ which is $C^1$. Then
    \begin{equation}
        \int_{S}(ds)_{\nu\mu_1\cdots\mu_{n-1}}=\int_{\partial S}s_{\mu_1\cdots\mu_{n-1}}.
    \end{equation}
\end{theorem}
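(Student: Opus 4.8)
The statement to prove is the classical Stokes' theorem for an $(n-1)$-form on an $n$-dimensional oriented manifold, integrated over a compact oriented submanifold with boundary. The plan is to reduce the global statement to a purely local computation on $\mathbb{R}^n$ via a partition of unity, exploiting the paracompactness of $M$ that was established in Appendix \ref{A1}. First I would choose a locally finite atlas $\{S_\mu,\psi_\mu\}$ of coordinate charts covering $S$, with the property that each chart either lies in the interior of $S$ or is a ``half-space'' chart adapted to $\partial S$ (i.e.\ $\psi_\mu$ maps a neighborhood of a boundary point to an open set of the closed half-space $\{x^0\leq 0\}\subset\mathbb{R}^n$, with $\partial S$ mapped into $\{x^0=0\}$). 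By paracompactness one can subordinate to this cover a smooth partition of unity $\{\varphi_\mu\}$, so that $s_{\mu_1\cdots\mu_{n-1}}=\sum_\mu \varphi_\mu\, s_{\mu_1\cdots\mu_{n-1}}$, where each term is supported in a single chart. Since both $\int_S(ds)$ and $\int_{\partial S}s$ are linear in $s$ and the sum is locally finite (so it is really a finite sum on the compact set $S$), it suffices to prove the identity for an $(n-1)$-form $w_{\mu_1\cdots\mu_{n-1}}$ compactly supported inside one chart.

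The next step is the local computation. Writing $w$ in the coordinate basis of a single chart as $w = \sum_{a=0}^{n-1}(-1)^a f_a\, dx^0\wedge\cdots\wedge\widehat{dx^a}\wedge\cdots\wedge dx^{n-1}$ (hat denoting omission), the exterior derivative is $dw = \bigl(\sum_{a}\partial_a f_a\bigr)\,dx^0\wedge\cdots\wedge dx^{n-1}$, using that $d$ is independent of the choice of connection and reduces to the ordinary antisymmetrized partial derivative. Then by the definition \eqref{aeq102} of integration of an $n$-form, $\int_S dw = \int_{\psi[S]}\sum_a\partial_a f_a\, dx^0\cdots dx^{n-1}$, which by Fubini's theorem and the fundamental theorem of calculus in one variable evaluates each term $\int \partial_a f_a\, dx^a$ separately. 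For an interior chart, $f_a$ has compact support in the open set and every such integral vanishes, matching $\int_{\partial S}w=0$ since the support misses $\partial S$. For a half-space chart, only the $a=0$ term survives, producing $\int_{\{x^0=0\}}f_0\, dx^1\cdots dx^{n-1}$, and one checks that this is exactly $\int_{\partial S}w$ once $\partial S$ is given the induced orientation (the orientation convention on $\partial S$ is precisely what is chosen to make the sign come out right). Summing over the partition of unity recovers the global identity.

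The main obstacle — and the point that requires genuine care rather than routine calculation — is the orientation bookkeeping on the boundary. One must specify the induced orientation on $\partial S$ so that the sign $(-1)^0$ appearing in the surviving term of the local computation agrees consistently across overlapping boundary charts, and verify that transition functions between right-handed half-space charts preserve this induced orientation. This is where the hypothesis that $M$ (and hence $S$ and $\partial S$) is oriented is actually used in an essential way, as opposed to merely orientable. A secondary technical subtlety is ensuring the partition of unity can be chosen subordinate to a cover consisting only of interior charts and boundary-adapted half-space charts; this follows from paracompactness together with the local structure of a manifold-with-boundary, but it should be stated explicitly. Once these two points are handled, everything else is the standard reduction to the one-variable fundamental theorem of calculus.
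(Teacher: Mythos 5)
The paper does not actually prove this statement: theorem (Stokes) is quoted with a citation to the differential-geometry literature, so there is no internal proof to compare against. Your proposal is the standard textbook argument, and it is sound as a sketch: reduce to a single chart by a partition of unity subordinate to a locally finite cover by interior charts and boundary-adapted half-space charts (finitely many suffice by compactness of $S$), compute $dw$ in coordinates for a compactly supported $(n-1)$-form, and apply Fubini together with the one-variable fundamental theorem of calculus, with interior charts contributing zero and boundary charts contributing the integral of the pulled-back form over $\{x^0=0\}$. The two points you single out as needing care are indeed the right ones. With the half-space convention $\{x^0\leq 0\}$ the fundamental theorem of calculus gives $\int_{-\infty}^{0}\partial_0 f_0\,dx^0=f_0|_{x^0=0}$, so the boundary term appears with a plus sign, and it is exactly the choice of induced orientation on $\partial S$ (outward-normal-first, equivalently declaring the restricted charts right handed) that makes this coincide with $\int_{\partial S}w$ consistently across overlapping boundary charts; spelling out that the transition functions between such adapted right-handed charts have positive Jacobian on the boundary is the one step still left implicit. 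The partition-of-unity existence subordinate to an adapted cover follows, as you say, from paracompactness (which the paper notes is automatic here) plus the local half-space structure of a manifold with boundary. One small caveat: smoothness of the partition functions combined with the $C^1$ hypothesis on $s_{\mu_1\cdots\mu_{n-1}}$ is exactly enough for the local computation, so no regularity is lost in the reduction. Modulo writing out these bookkeeping details, the argument is complete and is presumably the same one found in the reference the paper cites.
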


On the other hand, integration of functions over $M$ can be defined up to a choice of sign (as a consequence of the arbitrarity in choice of orientation) when one has a metric defined on it. Such integration is defined by
\begin{equation}\label{aeq1010}
    \int_Mf=\int_Mf\epsilon_{\mu_1\cdots\mu_{n}},
\end{equation}
where $\epsilon_{\mu_1\cdots\mu_{n}}$ is an nonvanishing $n$-form, referred to as a \textit{volume element}, which obeys
\begin{equation}\label{form1}
    \epsilon^{\mu_1\cdots\mu_{n}}\epsilon_{\mu_1\cdots\mu_{n}}=(-1)^s n!,
\end{equation}
where $s$ is the number of minuses in the signature of the metric. Note that eq. \ref{form1} depends not only on the signature of the metric, but also on its explicit form, as the “raised” indices in $\epsilon^{\mu_1\cdots\mu_{n}}$ are a result of multiple contractions with the $g^{\mu\nu}$. Moreover, differentiating eq. \ref{form1} yields
\begin{equation}\label{54}
    \nabla_{\nu}\epsilon_{\mu_1\cdots\mu_{n}}=0.
\end{equation}

Considering that the vector space of antisymmetric tensors of rank $(n,n)$ in an $n$-dimensional manifold is one-dimensional, one can also deduce that
\begin{equation}\label{form2}
    \epsilon^{\mu_1\cdots\mu_{\alpha}\mu_{\alpha+1}\cdots\mu_{\ell}}\epsilon_ {\mu_1\cdots\mu_{\alpha}\nu_{\alpha+1}\cdots\nu_{\ell}}=(-1)^s(\ell-\alpha)!\alpha!\delta^{[\mu_{\alpha+1}}\mathstrut_{\nu_{\alpha+1}}\cdots\delta^{\mu_{\ell}]}\mathstrut_{\nu_{\ell}}.
\end{equation}
Eq. \ref{form2} is merely a consequence of the fact that the outer product $\epsilon^{\mu_1\cdots\mu_{n}}\epsilon_{\nu_1\cdots\nu_{n}}$ must be proportional to the antisymetrized outer product of multiple identity maps, $\delta^{\mu}\mathstrut_{\nu}$. Furthermore, from eq. \ref{form1}, the coordinate components of $\epsilon_{\mu_1\cdots\mu_{n}}$ satisfy
\begin{equation}
    g^{\mu_1\nu_1}\cdots g^{\mu_n\nu_n}\epsilon_{\mu_1\cdots\mu_{n}}\epsilon_{\nu_1\cdots\nu_{n}}=(-1)^sn!,
\end{equation}
which yields
\begin{equation}
    \epsilon_{12\cdots n}=[(-1)^s\det(g_{\mu\nu})]^{1/2}=\sqrt{|g|}.
\end{equation}
More precisely, in any right handed coordinate system, the volume element is
\begin{equation}\label{volumeexplicit}
    \epsilon_{\mu_1\cdots\mu_{n}}=\sqrt{|g|}(dx^{0}\wedge\ldots\wedge dx^{n-1})_{\mu_1\cdots\mu_{n}}.
\end{equation}
Thus, one can interpret the integration of any $n$-form in an $n$-dimensional manifold, eq. \ref{aeq102}, simply as the integration of a function, eq. \ref{aeq1010}. This is merely a consequence of the fact that any $n$-form is proportional to the volume element. 

It is also of interest to discuss the form of the volume element of a lower dimensional embedded submanifold of $M$. As discussed in the end of appendix \ref{A2}, one can identify the subspace of vectors that are orthogonal to a vector, $\chi^{\mu}$, with the tangent vector space of a hypersurface, $S$, for which $\chi^{\mu}$ is said to be normal to. Without loss of generality, consider that $\chi^{\mu}$ has norm $\pm1$ and note that the volume element of $S$, $\epsilon_{\mu_1\cdots\mu_{n-1}}$, can be identified with the metric that rises as one restricts the action of $g_{\mu\nu}$ to vectors that are orthogonal to $\chi^{\mu}$. This metric, denoted by $h_{\mu\nu}$, has to obey $h^{\mu\alpha}h_{\alpha\nu}=\delta^{\mu}\mathstrut_{\nu}$, as well as
\begin{equation}\label{eq345}
    h^{\mu\nu}\chi_{\nu}=h^{\mu\nu}\chi_{\mu}=0,
\end{equation}
which means that it is ``orthogonal'' to $\chi^{\mu}$. Additionally, in order to identify $h^{\mu}\mathstrut_{\nu}$ as an \textit{orthogonal projection operator} \cite{Horn2013}, one also requires it to obey
\begin{equation}\label{eq3456}
    h^{\mu}\mathstrut_{\alpha}h^{\alpha}\mathstrut_{\nu}=h^{\mu}\mathstrut_{\nu},
\end{equation}
and
\begin{equation}\label{eq34567}
    g_{\mu\alpha}h^{\alpha}\mathstrut_{\nu}=g_{\alpha\mu}h^{\alpha}\mathstrut_{\nu}.
\end{equation}

Respecting these properties, $h_{\mu\nu}$ can take two forms, depending on the nature of the normal vector\footnote{The case $\chi_{\mu}\chi^{\mu}=0$ will be analyzed in detail in \S\;\ref{null}.},
\begin{equation}\label{metric1}
   h_{\mu\nu}=
     \begin{cases}
      g_{\mu\nu}+\chi_{\mu}\chi_{\nu} \text{ for } \chi_{\mu}\chi^{\mu}<0, \\
      g_{\mu\nu}-\chi_{\mu}\chi_{\nu} \text{ for } \chi_{\mu}\chi^{\mu}>0.
    \end{cases}
\end{equation}
Now, consider the $(n-1)$-form $\epsilon_{\mu_1\cdots\mu_{n-1}\nu}\chi^{\nu}$. This tensor is ``orthogonal'' to any vector proportional to $\chi^{\mu}$, in the sense that contraction of any of its indices with it will vanish. Thus, $\epsilon_{\mu_1\cdots\mu_{n-1}\nu}\chi^{\nu}$ is a tensor over the tangent vector space of any $a\in S$, so it must be related by a scalar to the volume element of $S$. Note that this is the case because the subvector space spanned by vectors orthogonal to $\chi^{\mu}$ is $(n-1)$-dimensional, so that $(n-1)$-forms over $S$ span a one-dimensional vector space. However, since the volume element of a manifold must obey eq. \ref{form1}, from eq. \ref{form2} one can conclude that $\epsilon_{\mu_1\cdots\mu_{n-1}\nu}\chi^{\nu}$ itself is the volume element of $S$, since $\chi^{\mu}$ is normalized. Hence, in a right-handed coordinate system of $S$, one obtains
\begin{equation}
    \epsilon_{12\cdots n-1\;\nu}\chi^{\nu}=\sqrt{|h|},
\end{equation}
where $h=\text{det}(h_{\mu\nu})$. In a more general way, one can conclude the following. Given an orthonormal set of vectors, $\{\chi^{\mu},\ldots,\eta^{\mu}\}$, the volume element of a embedded submanifold, $S$, whose tangent space is spanned by vectors orthogonal to the set $\{\chi^{\mu},\ldots,\eta^{\mu}\}$ is given by $\epsilon_{\mu_1\cdots\mu_{n-\ell}\nu_1\cdots\nu_{\ell}}\chi^{\nu_1}\ldots\eta^{\nu_{\ell}}$. The coordinate components of this volume element in a right-handed coordinate system of $S$ are given by 
\begin{equation}\label{volumeexplicit2}
    \epsilon_{12\cdots{n-\ell}\;\nu_1\cdots\nu_{\ell}}\chi^{\nu_1}\ldots\eta^{\nu_{\ell}}=\sqrt{|h|},
\end{equation}
where $h_{\mu\nu}$ is the metric that rises when one restricts $g_{\mu\nu}$ to act on the subspace of vectors orthogonal to $\{\chi^{\mu},\ldots,\eta^{\mu}\}$.

Lastly, it is also useful to use the one-dimensional property of $n$-forms in a $n$-dimensional manifold to state two convenient forms of Stokes' theorem. Let $w_{\mu_1\cdots\mu_{n-1}}$ be an $(n-1)$-form such that
\begin{equation}
    w_{\mu_1\cdots\mu_{n-1}}=\epsilon_{\alpha\mu_1\cdots\mu_{n-1}}v^{\alpha},
\end{equation}
where $v^{\mu}$ is an arbitrary vector. Taking the exterior derivative of $w_{\mu_1\cdots\mu_{n-1}}$ yields
\begin{equation}
\begin{aligned}
    (dw)_{\beta\mu_1\cdots\mu_{n-1}}&=n\nabla_{[\beta}(\epsilon_{|\alpha|\mu_1\cdots\mu_{n-1}]}v^{\alpha})\\
    &=n\epsilon_{\alpha[\mu_1\cdots\mu_{n-1}}\nabla_{\beta]}v^{\alpha},
    \end{aligned}
\end{equation}
where eq. \ref{54} was used and it was indicated in the first line that the index $\alpha$ is not to be antisymmetrized. Of course, $(dw)_{\beta\mu_1\cdots\mu_{n-1}}$ must be related to the volume element by a scalar,
\begin{equation}\label{aeq1021}
    f\epsilon_{\beta\mu_1\cdots\mu_{n-1}}=n\epsilon_{\alpha[\mu_1\cdots\mu_{n-1}}\nabla_{\beta]}v^{\alpha},
\end{equation}
which can be evaluated by applying $\epsilon^{\beta\mu_1\cdots\mu_{n-1}}$ to both sides of eq. \ref{aeq1021}. Using eq. \ref{form2}, this application yields
\begin{equation}
\begin{aligned}[b]
    fn!&=n(n-1)!\delta^{\beta}\mathstrut_{\alpha}\nabla_{\beta}v^{\alpha},\\
    f&=\nabla_{\alpha}v^{\alpha},
    \end{aligned}
\end{equation}
and thus,
\begin{equation}
    (dw)_{\mu_1\cdots\mu_{n}}=\epsilon_{\mu_1\cdots\mu_{n}}\nabla_{\alpha}v^{\alpha}.
\end{equation}
Hence, for a compact and bounded $S\subset M$ that is $n$-dimensional, Stokes' theorem can be expressed as
\begin{equation}\label{stokes2}
    \int_{S}\epsilon_{\mu_1\cdots\mu_{n}}\nabla_{\alpha}v^{\alpha}=\int_{\partial S}\epsilon_{\alpha\mu_1\cdots\mu_{n-1}}v^{\alpha}.
\end{equation}

Similarly, Stokes' theorem also allows one to find a relation for the integral of $w_{\mu_1\cdots\mu_{n-2}}=\epsilon_{\alpha\beta\mu_1\cdots\mu_{n-2}}A^{\alpha\beta}$ over a $(n-2)$-dimensional embedded submanifold of $M$ and the integral of an $(n-1)$-form over a $(n-1)$-dimensional embedded submanifold of $M$. Note, that $A^{\alpha\beta}$ can be considered a two-form without loss of generality, since the symmetric part of the tensor would get ``filtered out'' by contraction with the volume element of $M$. Now, from the exterior derivative of the $(n-2)$-form,
\begin{equation}
    (dw)_{\lambda\mu_1\cdots\mu_{n-2}}=(n-1)\epsilon_{\alpha\beta[\mu_1\cdots\mu_{n-2}}\nabla_{\lambda]}A^{\alpha\beta},
\end{equation}
one can find the scalar that relates it to the $(n-1)$-form $\epsilon_{\beta\lambda\mu_1\cdots\mu_{n-2}}\nabla_{\alpha}A^{\alpha\beta}$. Again, note that this is the case because these tensors are defined to be restricted to act on the $(n-1)$-dimensional embedded submanifold of $M$ (i.e., they are ``tangent'' to it), so that $(n-1)$-forms span a one-dimensional vector space. This scalar can be found by using eq. \ref{form2},
\begin{equation}
\begin{aligned}[b]
    f\epsilon_{\beta\lambda\mu_1\cdots\mu_{n-2}}\nabla_{\alpha}A^{\alpha\beta}&=(n-1)\epsilon_{\alpha\beta[\mu_1\cdots\mu_{n-2}}\nabla_{\lambda]}A^{\alpha\beta},\\
    f\epsilon_{\beta\lambda\mu_1\cdots\mu_{n-2}}\epsilon^{\omega\lambda\mu_1\cdots\mu_{n-2}}\nabla_{\alpha}A^{\alpha\beta}(\nabla^{\rho}A_{\rho\omega})&=(n-1)\epsilon^{\omega\lambda\mu_1\cdots\mu_{n-2}}\epsilon_{\alpha\beta[\mu_1\cdots\mu_{n-2}}\nabla_{\lambda]}A^{\alpha\beta}(\nabla^{\rho}A_{\rho\omega}),\\
    f(n-1)!\delta^{\omega}\mathstrut_{\beta}\nabla_{\alpha}A^{\alpha\beta}(\nabla^{\rho}A_{\rho\omega})&=2(n-1)!\delta^{[\omega}\mathstrut_{\alpha}\delta^{\lambda]}\mathstrut_{\beta}\nabla_{\lambda}A^{\alpha\beta}(\nabla^{\rho}A_{\rho\omega}),\\
    f(n-1)!\nabla_{\alpha}A^{\alpha\omega}(\nabla^{\rho}A_{\rho\omega})&=2(n-1)!\nabla_{\alpha}A^{\alpha\beta}(\nabla^{\rho}A_{\rho\beta}),\\
    f&=2.
    \end{aligned}
\end{equation}
Thus, for a compact and bounded $S\subset M$ that is $(n-1)$-dimensional, Stokes' theorem can also be expressed as
\begin{equation}\label{stokes3}
    \int_{S}\epsilon_{\beta\mu_1\cdots\mu_{n}}\nabla_{\alpha}A^{\alpha\beta}=\frac{1}{2}\int_{\partial S}\epsilon_{\alpha\beta\mu_1\cdots\mu_{n-2}}A^{\alpha\beta}.
\end{equation}

\chapter{Kerr spacetime}\label{apb}

The purpose of this appendix is to provide details on several results concerning the Kerr metric, which will mainly be of use in chs. \ref{chapter2} and \ref{chapter3}.

\section{Inverse metric and Christoffel symbols}\label{B1}
The contravariant form of the Kerr metric can be derived from the Kerr-Newman metric with $e=0$, eq. \ref{kerrnewman}. First, its determinant reads
\begin{equation}
    g=-\Sigma^2\sin^2{\theta}.
\end{equation}
Second, the covariant form of the metric can be evaluated to be
\begin{equation}
    \begin{aligned}[b]
    \partial^2_s &=g^{ab}\partial_a\partial_b\\
         &=-\left(\frac{\Sigma'}{\Sigma\Delta}\right)\frac{1}{c^2}\partial^2_t-\left(\frac{2r_s a r }{\Sigma\Delta}\right)\frac{1}{c}\partial_t\partial_\phi+\frac{\Delta}{\Sigma}\partial^2_r+\frac{1}{\Sigma}\partial^2_\theta+\left(\frac{\Delta-a^2\sin^2{\theta}}{\Sigma\Delta\sin^2{\theta}}\right)\partial^2_\phi,
    \end{aligned}
\end{equation}
where
\begin{equation}
    \Sigma'=(r^2+a^2)\Sigma+r_sa^2r\sin^2{\theta},
\end{equation}
and the other variables are given by eq. \ref{kerrnewman}. Lastly, the nonvanishing Christoffel symbols of the Kerr metric are \cite{Muller2010}
\begin{align*}
\Gamma^{r}\mathstrut_{t\phi}&=-\frac{c\Delta r_sa\sin^2{\theta}(r^2-a^2\cos^2{\theta})}{2\Sigma^3}, & \Gamma^{r}\mathstrut_{tt}&=\frac{c^2r_s\Delta(r^2-a^2\cos^2{\theta})}{2\Sigma^3},\\
\Gamma^{r}\mathstrut_{rr}&=\frac{2ra^2\sin^2{\theta}-r_s(r^2-a^2\cos^2{\theta})}{2\Sigma\Delta},  &  \Gamma^{r}\mathstrut_{r\theta}&=-\frac{a^2\sin{\theta}\cos{\theta}}{\Sigma},\\
\Gamma^{r}\mathstrut_{\phi\phi}&=\frac{\Delta\sin^2{\theta}(-2r\Sigma^2+r_s a^2\sin^2{\theta}(r^2-a^2\cos^2{\theta}))}{2\Sigma^3}, &  \Gamma^{r}\mathstrut_{\theta\theta}&=-\frac{r\Delta}{\Sigma},\\
\Gamma^{t}\mathstrut_{r\phi}&=\frac{r_sa\sin^2{\theta}(a^2\cos^2{\theta}(a^2-r^2)-r^2(a^2+3r^2))}{2c\Sigma^2\Delta}, & \Gamma^{t}\mathstrut_{t\theta}&=-\frac{r_sa^2r\sin{\theta}\cos{\theta}}{\Sigma^2},\\
\Gamma^{t}\mathstrut_{tr}&=\frac{r_s(r^2+a^2)(r^2-a^2\cos^2{\theta})}{2\Sigma^2\Delta}, & \Gamma^{t}\mathstrut_{\theta\phi}&=\frac{r_sa^3r\sin^3{\theta}\cos{\theta}}{c\Sigma^2},\\
\Gamma^{\theta}\mathstrut_{\phi\phi}& =-\frac{\sin{\theta}\cos{\theta}(\Sigma'\Sigma+(r^2+a^2)r_sa^2r\sin^2{\theta})}{\Sigma^3},  &  \Gamma^{\theta}\mathstrut_{\theta\theta}&=-\frac{a^2\sin{\theta\cos{\theta}}}{\Sigma},\\
\Gamma^{\theta}\mathstrut_{tt}&=-\frac{c^2r_sa^2r\sin{\theta}\cos{\theta}}{\Sigma^3}, &  \Gamma^{\theta}\mathstrut_{rr}&=\frac{a^2\sin{\theta}\cos{\theta}}{\Sigma\Delta},\\
 \Gamma^{\theta}\mathstrut_{t\phi}&=\frac{cr_sar(r^2+a^2)\sin{\theta}\cos{\theta}}{\Sigma^3}, & \Gamma^{\theta}\mathstrut_{r\theta}&=\frac{r}{\Sigma},\\
 \Gamma^{\phi}\mathstrut_{r\phi}&=\frac{2r\Sigma^2+r_s(a^4\sin^2{\theta}\cos^2{\theta}-r^2(\Sigma+r^2+a^2))}{2\Sigma^2\Delta}, &  \Gamma^{\phi}\mathstrut_{t\theta}&=-\frac{cr_sar\cot{\theta}}{\Sigma^2},\\
 \Gamma^{\phi}\mathstrut_{tr}&=\frac{cr_sa(r^2-a^2\cos^2{\theta})}{2\Sigma^2\Delta}, &  \Gamma^{\phi}\mathstrut_{\theta\phi}&=\frac{\cot{\theta}(\Sigma^2+r_sa^2r\sin^2{\theta})}{\Sigma^2}.
\end{align*}

\section{Neighboring Kerr black holes}\label{B2}

In order to evaluate the infinitesimal variations to first order of the parameters and area of a Kerr black hole discussed in \S\;\ref{kerr}, it is useful to do so by analyzing the difference of such parameters between two neighboring (in the sense that the difference in parameter value is infinitesimal) states of a Kerr black hole. The framework presented is based on \cite{Bardeen1973}. By considering that the original black hole is described by the metric $g_{\mu\nu}$ and the perturbed is described by $g'_{\mu\nu}$, one can introduce a covariant tensor to represent the perturbation by
\begin{equation}\label{B000}
    \gamma_{\mu\nu}=g'_{\mu\nu}-g_{\mu\nu}.
\end{equation}
Using the symbol $\delta$ to represent the variation of quantities over the perturbation, and requiring that the relation $g^{\mu\alpha}g_{\alpha\nu}=\delta^{\mu}\mathstrut_{\nu}$ holds in first order of $\gamma_{\mu\nu}$, one must have that
\begin{equation}
    \delta g_{\mu\nu}=\gamma_{\mu\nu},
       \end{equation}
    \begin{equation}
    \delta g^{\mu\nu}=-\gamma^{\mu\nu}.
\end{equation}

To evaluate the quantities of interest, it will be necessary to study the variation of tensors that are relevant to the description of the event horizon. Since there is a gauge freedom associated with the mapping of the two spacetimes (i.e., how one compares events of one with the other), it is possible to use this freedom to preserve the actions of the isometries represented by $\xi^{\mu}$ and $\psi^{\mu}$, as well as the position of the event horizon of the two solutions, which correspond to requiring that
\begin{equation}\label{Beq100}
    \delta\xi^{\mu}=0,
\end{equation}
\begin{equation}\label{Beq11}
    \delta\psi^{\mu}=0,
\end{equation}
which in combination with the definition of $\gamma_{\mu\nu}$, yield
\begin{equation}\label{Beq12}
    \delta\xi_{\mu}=\gamma_{\mu\nu}\xi^{\nu},
\end{equation}
\begin{equation}\label{Beq13}
    \delta\psi_{\mu}=\gamma_{\mu\nu}\psi^{\nu}.
\end{equation}
Using the null vector tangent to the horizon, eq. \ref{tangentH}, one also has
\begin{equation}\label{Beq2}
    \delta\chi^{\mu}=\delta\xi^{\mu}+\Omega_H\delta\psi^{\mu}+(\delta\Omega)\psi^{\mu}=(\delta\Omega)\psi^{\mu},
\end{equation}
\begin{equation}\label{Beq3}
    \delta\chi_{\mu}=\gamma_{\mu\nu}\chi^{\nu}+(\delta\Omega)\psi_{\mu},
\end{equation}
which shows that even though the event horizon is the same position, its angular velocity might still change.

Moving forward, by applying the Lie derivative along the Killing vectors, $\xi^{\mu}$ and $\psi^{\mu}$, on eq. \ref{B000} and taking the trace, yields the following relations for $\gamma=\gamma^{\mu\nu}g_{\mu\nu}$,
\begin{equation}\label{Beq5}
     \xi^{\mu}\nabla_{\mu}\gamma=\psi^{\mu}\nabla_{\mu}\gamma=\chi^{\mu}\nabla_{\mu}\gamma=0.
\end{equation}
The invariance of the position of the event horizon also ensures that the scalars $\chi_{\mu}\xi^{\mu}$, $\chi_{\mu}\psi^{\mu}$ and $\chi_{\mu}\eta^{\mu}$ are invariant, where $\eta^{\mu}$ is the auxiliary null vector which can be defined in both solution to be given by \ref{choice}. For such contractions to be invariant, it is necessary that
\begin{equation}\label{Beq6}
    \xi^{\mu}(\delta\chi_{\mu})=0,
\end{equation}
\begin{equation}\label{Beq7}
    \psi^{\mu}(\delta\chi_{\mu})=0,
\end{equation}
\begin{equation}\label{Beq8}
    \eta^{\mu}(\delta\chi_{\mu})=-\chi_{\mu}(\delta\eta^{\mu}).
\end{equation}
Finally, the invariance of the event horizon also implies that $\delta\chi_{\mu}$ and $\chi_{\mu}$ are proportional, so the perturbation is such that 
\begin{equation}\label{Beq9}
    \delta\chi_{\mu}=\alpha\chi_{\mu},
\end{equation}
which coupled with the fact that $\mathcal{L}_{\chi}(\delta\chi_{\mu})$ must vanish, yields
\begin{equation}\label{Beq10}
    \chi^{\mu}\nabla_{\mu}\alpha=0,
\end{equation}
\begin{equation}\label{Beq1}
    \chi^{\mu}\nabla_{\mu}(\delta\chi_{\nu})=(\delta\chi_{\mu})\nabla^{\mu}\chi_{\nu}.
\end{equation}
These are almost all the relations necessary for the analysis of the variation of the parameters and the area of a Kerr black hole. The last one which will be derived below is a general result \cite{Wald1984} that can be used to relate the variation of the mass to the variation of asymptotic properties of each solution.

Consider two arbitrary vector fields, $s^{\mu}$ and $w^{\mu}$, such that $\nabla_{\mu}s^{\mu}=\nabla_{\mu}w^{\mu}=0$ and $[s,w]^{\mu}=0$. It can be readily verified that $\nabla_{\mu}(s^{[\mu}w^{\nu]})=0$, which implies that \cite{Wald1984}
\begin{equation} 
    \nabla_{[\mu}(\epsilon_{\nu\alpha]\beta\delta}s^{\beta}w^{\delta})=0.
\end{equation}
By integrating this three-form over a volume bounded by two-spheres, $S$ and $S'$, Stokes' theorem allows one to write
\begin{equation}
    \int_S\epsilon_{\nu\alpha\beta\delta}s^{\beta}w^{\delta}=\int_{S'}\epsilon_{\nu\alpha\beta\delta}s^{\beta}w^{\delta}.
\end{equation}
The significance of this general result to the developments in ch. \ref{chapter2} lies in the case where $s^{\mu}=\xi^{\mu}$, the timelike Killing vector, and $w^{\mu}=\nabla_{\nu}(\gamma^{\mu\nu}-g^{\mu\nu}\gamma)$. Clearly, $\nabla_{\mu}\xi^{\mu}=0$, since it is a Killing vector, and $\nabla_{\mu}\w^{\mu}=0$, as it is the trace of the perturbed Einstein's equation \cite{Wald1984}. Furthermore, by arguments similar to those that led to eq. \ref{Beq5}, $\mathcal{L}_{\xi}w^{\mu}$ must vanish, and thus,  $\xi^{\mu}$ and $w^{\mu}$ commute. By taking $S$ to be a two-sphere at the asymptotic region and $S'=\mathscr{H}$, one then finds
\begin{equation}\label{generalresult}
    \int_S\epsilon_{\nu\alpha\beta\delta}\xi^{\delta}\nabla_{\rho}(\gamma^{\beta\rho}-g^{\beta\rho}\gamma)=\int_{\mathscr{H}}\epsilon_{\nu\alpha\beta\delta}\xi^{\delta}\nabla_{\rho}(\gamma^{\beta\rho}-g^{\beta\rho}\gamma).
\end{equation}
In order to develop the integral on the left hand side, it is convenient to evaluate the function $f$ that relates its two-form to the volume element of the two-sphere, $\epsilon_{\mu\nu\alpha\beta}\zeta^{\alpha}\tau^{\beta}$ (see eqs. \ref{normalt} and \ref{normalr}). At the asymptotic region, one has
\begin{equation}
\begin{aligned}[b]
f\epsilon_{\mu\nu\alpha\beta}\zeta^{\alpha}\tau^{\beta} &=\epsilon_{\mu\nu\alpha\beta}\xi^{\beta}\nabla_{\rho}(\gamma^{\alpha\rho}-g^{\alpha\rho}\gamma), \\
 2f&=\epsilon^{\mu\nu\lambda\omega}\zeta_{\lambda}\tau_{\omega}\epsilon_{\mu\nu\alpha\beta}\xi^{\beta}\nabla_{\rho}(\gamma^{\alpha\rho}-g^{\alpha\rho}\gamma),\\
 f&=-2(-g^{tt}g^{rr})^{-1/2}\delta^{[\lambda}\mathstrut_{\alpha}\delta^{\omega]}\mathstrut_{\beta}\delta^{r}\mathstrut_{\lambda}\delta^{t}\mathstrut_{\omega}\xi^{\beta}\nabla_{\rho}(\gamma^{\alpha\rho}-g^{\alpha\rho}\gamma)\\
 f&=-2c\delta^{[r}\mathstrut_{\alpha}\delta^{t]}\mathstrut_{\beta}\xi^{\beta}\nabla_{\rho}(\gamma^{\alpha\rho}-g^{\alpha\rho}\gamma)\\
 f&=-c\nabla_{\rho}(\gamma^{r\rho}-\eta^{r\rho}\gamma).
\end{aligned}    
\end{equation}

To evaluate $f$, one should first consider the asymptotic form of the Kerr metric, which due to its asymptotic flat nature, is
\begin{equation}\label{1789}
    g_{\mu\nu}=\eta_{\mu\nu}+\gamma_{\mu\nu},
\end{equation}
with 
\begin{equation}
    \gamma_{ab}dx^adx^b=\frac{r_s}{r}c^2dt^2+\frac{r_s}{r}dr^2+\mathcal{O}(r^{-2}),
\end{equation}
where $(1-x)^{-1}\approx (1+x)$ was used. These relations also lead to
\begin{equation}
    \gamma^{ab}\partial_a\partial_b=\frac{r_s}{r}\frac{1}{c^2}\partial_t^2+\frac{r_s}{r}\partial_r^2+\mathcal{O}(r^{-2}),
\end{equation}
Hence, using eq. \ref{Christoffel divergence}, one finds
\begin{equation}
    \nabla_{\rho}\gamma^{r\rho}=\partial_r(\gamma^{rr})=-\frac{r_s}{r^2},
\end{equation}
\begin{equation}
    \nabla_{\rho}(\eta^{r\rho}\gamma)=0,
\end{equation}
which leads to,
\begin{equation}
    f=\frac{cr_s}{r^2}.
\end{equation}

Returning to the integral on the left hand side of eq. \ref{generalresult}, one obtains
\begin{equation}
\begin{aligned}[b]
\int_S\epsilon_{\nu\alpha\beta\delta}\xi^{\delta}\nabla_{\rho}(\gamma^{\beta\rho}-\eta^{\beta\rho}\gamma) &= \int_Sf\epsilon_{\mu\nu\alpha\beta}\zeta^{\alpha}\tau^{\beta}\\
 &=cr_s\int_Sd\Omega\\
 &=\frac{8\pi G M}{c}.
\end{aligned}    
\end{equation}
This relation can be interpreted as using the asymptotic properties of spacetime to compute its total mass. If one considers this relantionship for each individual neighboring Kerr black hole, one can write the difference of their masses as
\begin{equation}\label{admmass}
    \int_S\epsilon_{\nu\alpha\beta\delta}\xi^{\delta}\nabla_{\rho}(\gamma^{\beta\rho}-g^{\beta\rho}\gamma) =\frac{8\pi G \delta  M}{c},
\end{equation}
where perturbation of the metric is now given by eq. \ref{B000}, which is simply the difference between each individual one, given in eq. \ref{1789}, and $g^{\mu\nu}$ is now given by the metric of the original stationary configuration.

\section{Principal null congruences}\label{congrukerr}
In order to find the equations that describe the behavior of null geodesics, it is useful to consider the conserved quantities associated with them. We will do so here, adopting geometrized units.

In Kerr spacetime, these constants follow from the norm of the null vector and the existence of Killing vectors. More precisely, there are the constants given by eqs. \ref{energy} and \ref{angularmomentum},
\begin{equation}
    E=-g_{tt}\left(\frac{dt}{d\lambda}\right)-g_{t\phi}\left(\frac{d\phi}{d\lambda}\right),
\end{equation}
\begin{equation}
    L=-g_{t\phi}\left(\frac{dt}{d\lambda}\right)-g_{\phi\phi}\left(\frac{d\phi}{d\lambda}\right),
\end{equation}
and the one from the norm of the null vector,
\begin{equation}
    g_{tt}\left(\frac{dt}{d\lambda}\right)^2+2g_{t\phi}\left(\frac{dt}{d\lambda}\right)\left(\frac{d\phi}{d\lambda}\right)+g_{rr}\left(\frac{dr}{d\lambda}\right)^2+g_{\phi\phi}\left(\frac{d\phi}{d\lambda}\right)^2+g_{\theta\theta}\left(\frac{d\theta}{d\lambda}\right)^2=0,
\end{equation}

Similarly, there is also another conserved quantity which rises due to the fact that the Kerr metric possesses a \textit{Killing tensor of valence $2$} \cite{Walker1970}, $K_{\mu\nu}$, which is a symmetric tensor that obeys \cite{Wald1984}
\begin{equation}
    \nabla_{(\mu}K_{\nu\alpha)}=0.
\end{equation}
Although Killing tensors do not rise as generators of a group of diffeomorphisms, they do give rise to conserved quantities along geodesics by the exact same line of reasoning as the one presented in \S~\ref{symmetry}. In essence, the conserved quantity in null geodesics can be expressed by
\begin{equation}
    \frac{1}{\Delta}\left[(r^2+a^2)E-aL\right]^2-\frac{\Sigma^2}{\Delta}\left(\frac{dr}{d\lambda}\right)^2=C,
\end{equation}
\begin{equation}
    (aE\sin{\theta}-L\csc{\theta})^2+\Sigma^2\left(\frac{d\theta}{d\lambda}\right)^2=C.
\end{equation}
Consequently, these results yield the following equations of motion
\begin{equation}
    \Sigma\left(\frac{dt}{d\lambda}\right)^2=\frac{1}{\Delta}\left\{\left[(r^2+a^2)^2-\Delta a\sin^2{\theta}\right]E-ar_srL\right\},
\end{equation}
\begin{equation}
    \Sigma^2\left(\frac{dr}{d\lambda}\right)^2=\left[(r^2+a^2)E-aL\right]-\Delta C,
\end{equation}
\begin{equation}
    \Sigma^2\left(\frac{d\theta}{d\lambda}\right)^2=-\left(aE\sin{\theta}-L\csc{\theta}\right)^2+C,
\end{equation}
\begin{equation}
    \Sigma\left(\frac{d\phi}{d\lambda}\right)^2=\frac{1}{\Delta}\left[ar_srE+(\Sigma-r_sr)L\csc^2{\theta}\right].
\end{equation}

The principal null congruences are those with $\theta(\lambda)=\theta_0$, which will be the case if $C=0$ and $L=aE\sin^2{\theta}$, resulting in
\begin{equation}
    \frac{dt}{d\lambda}=\frac{(r^2+a^2)E}{\Delta},
\end{equation}
\begin{equation}\label{rkerr}
    \frac{dr}{d\lambda}=\pm E,
\end{equation}
\begin{equation}
    \frac{d\theta}{d\lambda}=0,
\end{equation}
\begin{equation}
    \frac{d\phi}{d\lambda}=\frac{aE}{\Delta},
\end{equation}
in which one identifies those with $dr/d\lambda>0$ as outgoing, and those with $dr/d\lambda<0$ as incoming. Thus, the vector that generates the principal outgoing null congruence is
\begin{equation}
    \ell^{\mu}=\left(\frac{r^2+a^2}{\Delta},1,0,\frac{a}{\Delta}\right),
     \end{equation}
     while the one which generates the principal incoming null congruence is
    \begin{equation}\eta^{\mu}=\left(\frac{r^2+a^2}{2\Sigma},-\frac{\Delta}{2\Sigma},0,\frac{a}{2\Sigma}\right),
\end{equation}
being normalized by $\ell^{\mu}\eta_{\mu}=-1$. 

Before proceeding, it is convenient to define a new radial coordinate, $r'$, such that
\begin{equation}\label{tortoiseKerr}
    \frac{dr'}{dr}=\frac{r^2+a^2}{\Delta},
\end{equation}
and an angular coordinate, $\phi'$, 
\begin{equation}\label{phiKerr}
    \phi'=\phi-\Omega t,
\end{equation}
which is well behaved at $r=r_+$. In the same manner, it is also useful to define the incoming and outgoing null coordinates
\begin{equation}\label{null1kerr}
    u=t-r',
\end{equation}
\begin{equation}\label{null2kerr}
    w=t+r'.
\end{equation}

Now, to develop the calculations for the particle creation effect of a Kerr black hole, it will be necessary to evaluate the explicit form of the function $u(\lambda)$, where $\lambda$ is an affine parameter of an incoming null geodesic of the principal null congruence and $u$ is the incoming null coordinate. From eqs. \ref{null1kerr} and \ref{tortoiseKerr}, one finds
\begin{equation}\label{u}
    \begin{aligned}[b]
    \frac{du}{d\lambda}  & = \frac{dt}{d\lambda}-\frac{dr'}{dr}\frac{dr}{d\lambda}\\
    & = 2E\frac{r^2+a^2}{\Delta}.
    \end{aligned}
\end{equation}
Integrating eq. \ref{rkerr}, one obtains
\begin{equation}
    r-r_+=-E\lambda,
\end{equation}
where it was considered that $r(0)=r_+$. Using this result in eq. \ref{u} yields
\begin{equation}
    \begin{aligned}[b]
    \frac{du}{d\lambda}  & = 2E\frac{(r_+-E\lambda)^2+a^2}{(r_+-E\lambda)^2-(r_++r_-)(r_+-E\lambda)+r_+r_-}\\
    & = \frac{2[(r_+-E\lambda)^2+a^2]}{\lambda[E\lambda-(r_+-r_-)]},
    \end{aligned}
\end{equation}
in which eq. \ref{radius12} was used. Integration yields
\begin{equation}\label{dev}
    u(\lambda)=2E\lambda-\frac{1}{\kappa}\ln\left(\frac{\lambda}{C'}\right)+2\left(\frac{r_+-r_-}{r_-^2+a^2}\right)\ln\left[\frac{E\lambda-(r_+-r_-)}{C''}\right],
\end{equation}
where $\kappa$ is the surface gravity of a Kerr black hole, and $C'$ and $C''$ are negative constants. The limit of interest is as $r\to r_+$, in which one has
\begin{equation}\label{incoming}
    u(\lambda)\approx-\frac{1}{\kappa}\ln\left(\frac{\lambda}{C'}\right).
\end{equation}

\chapter{Information}\label{information} 

The purpose of this appendix is to provide an objective review of the quantification of information in quantum mechanics. Namely, the formalism of density operators, the Von Neumann entropy, and how they lead to the concept of entanglement of systems. First, we note that even though there is an intrinsic probabilistic notion to most measurements in quantum mechanics, a lack of details about the precise state of a system can introduce another layer of uncertainty, which is precisely the scenario in which one would justifiably state that there is a ``lack of information'' in a quantum system. With this in mind, we review how the formalism of density operators naturally takes into account ``ensembles'' of pure states, which one can argue to be a situation in which one does not have all the possible information about the precise state of the system. Indeed, as measured by the Von Neumann entropy, the states known as mixed ones can be interpreted as having an additional ``degree of uncertainty''. Consequently, when dealing with composite systems, the concept of entanglement identifies when the total state of a system is such that measurements in one subsystem will affect measurements in the other, even if they are spacelike separated. 

Again, this is not meant to serve as a pedagogical introduction, but rather, as a reference for the development of relations and the arguments presented in this work, mainly in chs. \ref{chapter3} and \ref{chapter4}. The interested reader can find a more pedagogical, detailed presentation of such subjects in the references on which the construction of this appendix was based on \cite{Sakurai1994, Blum2012, Nielsen2010, Horodecki2009}.

\section{Density operators}\label{informationtheory}

In a classical sense, information about a system is completely characterized by the position and momentum of all the constituents of the energy distribution present in it. From the formalisms of classical mechanics, the complete evolution of the system can then be predicted with certainty. However, realistically, one can only measure averages of these quantities, so that statistics come into play as a consequence of the lack of information about the precise conditions of the system. The purpose of this section is to make these statements precise in a setting in which quantum concepts should be considered, i.e., how ``lack of information'' of a state in a quantum system can be precisely quantified. As it is known, complete information about a quantum system can be expressed as complete knowledge of the state that describes it, rather than the exact values of the physical observations one can make. In other words, although there is an intrinsic uncertainty present in quantum mechanics, which can be exemplified by the probabilistic nature of measurements of observables for which the system is not in any of its eigenstates, one can still identify how much of this uncertainty is a consequence of the theory itself, or ``lack of information'' about the precise state of the system. The first step to analyze these concepts in quantum mechanics is to identify scenarios in which one has complete information about the state of a system. Recall that the state of a system is described by a vector in a Hilbert space, $\mathscr{H}$. Note that since these statements regard vectors that are elements of a complex vector space, it is more convenient to denote them using the \textit{bra-ket} notation \cite{Sakurai1994}. 

Of course, given a basis, $\{|\mu\rangle\}$, an arbitrary state, $|\psi\rangle$, can be written as\footnote{The lower greek index is merely a label of the complex coefficient, i.e., it is \textit{not} an abstract index.} 
\begin{equation}\label{pure}
    |\psi\rangle=\sum_{\mu}a_{\mu}|{\mu}\rangle,
\end{equation}
where $a_{\mu}\in\mathbb{C}$ and the sum is over $d$, with $d=\text{dim}(\mathscr{H})$. In the following, summations should always be regarded to be over $d$, unless denoted otherwise. If the basis is orthonormal, i.e., $\langle\mu|{\nu}\rangle=\delta_{\mu\nu}$, then requiring that $|\psi\rangle$ be unit length translates to 
\begin{equation}
    \sum_{\mu}|a_{\mu}|^2=1.
\end{equation}
Furthermore, given a basis, the \textit{identity operator}, $\hat{I}$, can be expressed as \cite{Sakurai1994}
\begin{equation}\label{identity}
    \hat{I}=\sum_{\mu}|{\mu}\rangle\langle{\mu}|,
\end{equation}
while the \textit{trace operator}, $\text{Tr}(\cdot)$, is given by
\begin{equation}\label{trace}
    \text{Tr}(\hat{A})=\sum_{\mu}\langle{\mu}|\hat{A}|{\mu}\rangle,
\end{equation}
where $\hat{A}$ is an arbitrary operator. 

A state that can be written as in eq. \ref{pure} is known as a \textit{pure} state. All physically significant properties (such as expectation values and probabilities) of a pure state can be easily evaluated from the form given by eq. \ref{pure} simply by acting operators and states on it. Still, all of this information can be also be analyzed through the lens of its \textit{density operator}, given by
\begin{equation}\label{density}
    \hat{\rho}=|\psi\rangle\langle\psi|.
\end{equation}
Evidently, the density operator of a pure state is precisely the \textit{projector operator} associated with the state, which is clearly hermitian, i.e., 
\begin{equation}\label{hermitian}
    \hat{\rho}^{\dag}=\hat{\rho},
\end{equation}
 idempotent, 
 \begin{equation}
      \hat{\rho}^2=|\psi\rangle\langle\psi|\psi\rangle\langle\psi|=|\psi\rangle\langle\psi|=\hat{\rho},
 \end{equation}
 as per normalization of $|\psi\rangle$, and also respects
 \begin{equation}
      \text{Tr}(\hat{\rho})=\sum_{\mu}\langle{\mu}|\psi\rangle\langle\psi|{\mu}\rangle=\sum_{\mu}|a_{\mu}|^2=1.
  \end{equation}
  
In particular, expectation values for an arbitrary operator, $\hat{A}$, can be evaluated by
\begin{equation}\label{expectation}
\begin{aligned}[b]
\langle\psi|\hat{A}|\psi\rangle &= \sum_{\mu,\nu}\langle\psi|\mu\rangle\langle\mu|\hat{A}|\nu\rangle\langle\nu|\psi\rangle\\
&=\sum_{\mu,\nu}\langle\nu|\psi\rangle\langle\psi|\mu\rangle\langle\mu|\hat{A}|\nu\rangle\\
&=\sum_{\mu,\nu}\langle\nu|\hat{\rho}|\mu\rangle\langle\mu|\hat{A}|\nu\rangle\\
&=\sum_{\nu}\langle\nu|\hat{\rho}\hat{A}|\nu\rangle\\
&=\text{Tr}\left(\hat{\rho}\hat{A}\right),    
\end{aligned}
\end{equation}
where the identity operator, eq. \ref{identity} was applied twice in the first line. Similarly, probabilities for an arbitrary state, $|\psi'\rangle$, read
\begin{equation}\label{probability}
\begin{aligned}[b]
     |\langle\psi'|\psi\rangle|^2 &= \langle\psi'|\psi\rangle\langle\psi|\psi'\rangle\\
     &=\sum_{\mu}\langle\psi'|{\mu}\rangle\langle{\mu}|\psi\rangle\langle\psi|\psi'\rangle\\
     &=\sum_{\mu}\langle\psi'|{\mu}\rangle\langle{\mu}|\hat{\rho}|\psi'\rangle\\
     &=\sum_{\mu}\langle{\mu}|\hat{\rho}|\psi'\rangle\langle\psi'|{\mu}\rangle\\
     &= \text{Tr}\left(\hat{\rho}|\psi'\rangle\langle\psi'|\right).
     \end{aligned}
 \end{equation}
Lastly, \textit{unitary} dynamical evolution of the state, $|\psi\rangle$, of a \textit{closed} system reads
\begin{equation}\label{eqr}
    |\psi(t)\rangle=\hat{U}(t,t_0)|\psi(t_0)\rangle,
\end{equation}
where $\hat{U}(t_0,t_0)=\hat{I}$ and $\hat{U}^{\dagger}(t,t_0)\hat{U}(t,t_0)=\hat{I}$, which translates to 
\begin{equation}\label{eqs}
    \hat{\rho}(t)=\hat{U}(t,t_0)\hat{\rho}(t_0)\hat{U}^{\dagger}(t,t_0).
\end{equation}
Evidently, eqs. \ref{eqr} and \ref{eqs} do not correspond to the \textit{Heisenberg picture} \cite{Sakurai1994}, in which one would have dynamical evolution of observables rather than states. Nonetheless, one can use eqs. \ref{expectation} and \ref{probability} to effectively act $\hat{U}(t,t_0)$ on observables rather than $|\psi\rangle$, providing a completely equivalent formulation of the dynamics of the system. Because of this, one can refer to the dynamical evolution of a system by the evolution of the states or by the evolution of observables.

A simple reason to understand why density operators are a more interesting representation of states is that they are a \textit{unique} representation, as opposed to a ket, which can have an arbitrary phase, $e^{i\theta}$. Furthermore, density operators can be used to study more general states that cannot be simply represented by a ket. Such states are known as \textit{mixed} states, which can be interpreted as representing an \textit{ensemble} of pure states. In particular, the density operator of a more general state can be written as
\begin{equation}\label{densitygen}
    \hat{\rho}=\sum_{\mu}^Np_{\mu}|\psi_{\mu}\rangle\langle\psi_{\mu}|,
\end{equation}
where $\{|\psi_{\mu}\rangle\}$ is a set of normalized pure states, not necessarily orthogonal, so that the number $N$ is not limited by $d$. Namely, given a basis, for each element of $\{|\psi_{\mu}\rangle\}$, i.e., for a fixed $\mu$, one has
\begin{equation}
    |\psi_{\mu}\rangle=\sum_{\nu}a_{\mu\nu}|{\nu}\rangle,
\end{equation}
and normalization of each of them yields
\begin{equation}
    \sum_{\nu}|a_{\mu\nu}|^2=1.
\end{equation}
Additionally, the $N$ \textit{real} numbers, $p_{\mu}$, obey
\begin{equation}\label{mixedrelation}
    0<p_{\mu}\leq1,\quad \sum_{\mu}^Np_{\mu}=1.
\end{equation}
Evidently, this more general density operator is hermitian (see eq. \ref{hermitian}) and respects
\begin{equation}
    \text{Tr}(\hat{\rho})=\sum_{\nu}\sum_{\mu}^Np_{\mu}\langle{\nu}|\psi_{\mu}\rangle\langle\psi_{\mu}|{\nu}\rangle=\sum_{\mu}^Np_{\mu}\sum_{\nu}|a_{\mu\nu}|^2=1.
\end{equation}
In this manner, the physically significant properties of a more general state whose density operator is given by eq. \ref{densitygen} follow from the same developments of eqs. \ref{expectation} and \ref{probability}, from which one can see that the numbers $p_{\mu}$ act as ``weights'' for each of the individual properties of the pure states in the set $\{|\psi_{\mu}\rangle\}$. Finally, comparison with eq. \ref{density} indicates that a  density operator will describe a pure state if and only if $N=1$, and thus, $p_1=1$.

Given a density operator, one can readily verify if it describes a pure or a mixed state. More specifically, since a density operator is always hermitian, it can be diagonalized \cite{Sakurai1994}, i.e., written in the form 
\begin{equation}\label{eq30}
    \hat{\rho}=\sum_{\mu}\lambda_{\mu}|{\mu}\rangle\langle{\mu}|,
\end{equation}
where $\{|{\mu}\rangle\}$ is a basis and
\begin{equation}\label{diagrelation}
    0\leq\lambda_{\mu}\leq1,\quad \sum_{\mu}\lambda_{\mu}=1,
\end{equation}
are simply its eigenvalues. Now, from the fact that the density operator of a pure state is idempotent and has a unit trace, one must have 
\begin{equation}\label{tracepure}
    \text{Tr}(\hat{\rho}^2)=1.
\end{equation}
However, since the trace is independent of choice of basis, this equality will only be respected if one of the eigenvalues of the density operator equals $1$ while all the others vanish. In contrast, if the density operator is of a mixed state, then none of its eigenvalues can be $1$, since this would mean that it could be written in the form of eq. \ref{density}. Hence, the density operator of a mixed state must obey 
\begin{equation}
    \text{Tr}(\hat{\rho}^2)<1,
\end{equation}
which means that 
\begin{equation}
     \hat{\rho}^2\neq\hat{\rho}.
\end{equation}

Consequently, one can use the trace of a squared density operator (or simply the squared density operator) to evaluate if it describes a pure or a mixed state. In particular, $\text{Tr}(\hat{\rho}^2)$ is known as the \textit{purity} of a state, and may take any value in the interval $[1/d,1]$, with the density operator describing a pure state if and only if it equals the upper limit. Evidently, for the trace of a squared density operator to obey the lower limit, the density operator must be proportional to the identity operator, i.e.,
\begin{equation}
    \hat{\rho}=\frac{\hat{I}}{d},
\end{equation}
which is said to describe a \textit{maximally mixed} state. Also, note that unitary evolution does not affect the eigenvalues of an operator, as
\begin{equation}\label{45}
    \begin{aligned}[b]
\text{det}(\hat{\rho}(t)-\hat{I}\lambda) & = \text{det}(\hat{U}(t,t_0)\hat{\rho}(t_0)\hat{U}^{\dagger}(t,t_0)-\hat{U}(t,t_0)\hat{U}^{\dagger}(t,t_0)\hat{I}\lambda)\\
& = \text{det}(\hat{U}(t,t_0)[\hat{\rho}(t_0)-\hat{I}\lambda]\hat{U}^{\dagger}(t,t_0))\\
& = \text{det}([\hat{\rho}(t_0)-\hat{I}\lambda]\hat{U}^{\dagger}(t,t_0)\hat{U}(t,t_0))\\
& = \text{det}(\hat{\rho}(t_0)-\hat{I}\lambda)\text{det}(\hat{U}^{\dagger}(t,t_0)\hat{U}(t,t_0))\\
& = \text{det}(\hat{\rho}(t_0)-\hat{I}\lambda).
    \end{aligned}
\end{equation}
Therefore, unitary evolution of a closed system does not change the nature of the state. In other words, the purity of a state is unaffected by such dynamical evolution.

Perhaps the most notable example of a mixed state is the one describing a system in thermal equilibrium \cite{Blum2012}, for instance, thermal radiation, whose density operator reads
\begin{equation}
    \hat{\rho}=\frac{e^{-\beta \hat{H}}}{Z},
\end{equation}
where $Z=\text{Tr}(e^{-\beta \hat{H}})$, $\beta=k_BT$, $T$ is the temperature of the system and $\hat{H}$ its Hamiltonian. One can verify that this density operator describes a mixed state by writing it in terms of a basis of eigenvalues of $\hat{H}$, $\{|\mu\rangle\}$, i.e.,
\begin{equation}
    \hat{H}|\mu\rangle=E_{\mu}|\mu\rangle,
\end{equation}
so that it takes the form
\begin{equation}
    \hat{\rho}=\frac{1}{Z}\sum_{\mu}e^{-\beta E_{\mu}}|\mu\rangle\langle\mu|.
\end{equation}
Comparison with eq. \ref{eq30} then shows that it describes a mixed state. 

Physically, a mixed state can be completely described by the way one ``prepares'' the ensemble it represents. For example, consider a system which has a probability $1/2$ to be in the state $|\psi\rangle$ and probability $1/2$ to be in the state $|\psi'\rangle$. The mixture described by this system is completely characterized by this statement, which, by eq. \ref{densitygen}, translates to
\begin{equation}\label{0001}
    \hat{\rho}=\frac{1}{2}(|\psi\rangle\langle\psi|+|\psi'\rangle\langle\psi'|).
\end{equation}
It is not difficult to see that this density operator does not equal the one describing a superposition of the states $|\psi\rangle$ and $|\psi'\rangle$ with equal amplitude. Indeed, such a density operator would be associated with the pure state\footnote{Up to a phase, $e^{i\theta}$.}
\begin{equation}
    |\psi\rangle=\frac{1}{\sqrt{2}}\left(|\psi\rangle+|\psi'\rangle\right),
\end{equation}
which from eq. \ref{density}, would read
\begin{equation}
     \hat{\rho}=\frac{1}{2}(|\psi\rangle\langle\psi|+|\psi\rangle\langle\psi'|+|\psi'\rangle\langle\psi|+|\psi'\rangle\langle\psi'|).
\end{equation}

At this point, it is clear that there is a fundamental difference between a superposition of states and a mixture. In a superposition, there is a definite phase relation between the possible states, which allows one to write the joint state of the system in the form of eq. \ref{pure}. However, for a mixture, one only has access to the probabilities to prepare a given state. The lack of a definite phase relation between the states, i.e., information only about fractional populations and not the complex numbers themselves, is precisely what makes it so that it is not possible to write a mixture as a single ket. Indeed, when dealing with a system in which one does not have complete information about the relative phase between the possible states, one refers to it as an \textit{incoherent mixture}\footnote{This terminology is justified when one considers a polarized and an unpolarized beam of light (see \cite{Sakurai1994} for a detailed discussion).}. 

\section{Von Neumann entropy}\label{von}

The argument presented above points to the important idea that mixed states are states in which one does not have all the possible information regarding the state of a system. This concept can be quantified similarly as one quantifies the uncertainty of a random variable through the \textit{Shannon entropy} \cite{Nielsen2010}. The Shannon entropy can be generalized to density operators by the \textit{Von Neumann entropy}, which is given by
\begin{equation}
    S(\hat{\rho})=-\text{Tr}(\hat{\rho}\log_2\hat{\rho}),
\end{equation}
where $\log_2$ means that the Von Neumann entropy is measured in \textit{bits}. To evaluate the Von Neumann entropy associated with a density operator, one writes it in a diagonal form, which allows one to deduce that 
\begin{equation}
    S(\hat{\rho})=-\sum_{\mu}\lambda_{\mu}\log_2\lambda_{\mu}.
\end{equation}
From the properties of a density operator of pure and mixed states, and adopting
\begin{equation}
    \lim_{x\to 0^+}x\log_2{x}=0,
\end{equation}
one then finds
\begin{equation}
    S(\hat{\rho})=0\quad \text{for a pure state},
\end{equation}
\begin{equation}
    \log_2{d}\geq S(\hat{\rho})>0\quad \text{for a mixed state}.
\end{equation}

From this, it is evident that the interpretation of the Von Neumann entropy is that it measures the ``missing'' information of a state described by $\hat{\rho}$. For instance, a pure state has no missing information, as it corresponds to the maximum possible information one can have regarding a quantum system. Of course, in such case one still does not have the maximum degree of certainty of values of all observables, but a pure state can be interpreted as the maximum amount of knowledge quantum mechanics allows one to have. In contrast, in a mixed state there is not only this fundamental lack of certainty as a consequence of quantum mechanics, but also, there is the lack of knowledge regarding the state of the system.

\section{Entanglement}\label{entanglement}

Given these definitions and properties, it is natural to analyze how they generalize to composite systems. Let $|\psi\rangle\in\mathscr{H}_A\otimes\mathscr{H}_B$ denote a pure state of a composite system, comprised of subsystems $A$ and $B$. Let $\{|{\mu}\rangle_A\}$ and $\{|{\mu}\rangle_B\}$ denote orthonormal bases for the subsystems $A$ and $B$, respectively, so that $|\psi\rangle$ can be written as
\begin{equation}\label{ent12345}
    |\psi\rangle=\sum_{\mu,\nu}a_{\mu\nu}|{\mu}\rangle_A|{\nu}\rangle_B,
\end{equation}
where the sum is over $d\times d'$, with $d=\text{dim}(\mathscr{H}_A)$ and $d'=\text{dim}(\mathscr{H}_B)$, and $|{\mu}_A\rangle|{\nu}_B\rangle$ denotes $|{\mu}_A\rangle|\otimes|{\nu}_B\rangle$. Normalization of $|\psi\rangle$ translates to 
\begin{equation}
    \sum_{\mu,\nu}|a_{\mu\nu}|^2=1.
\end{equation}
The density matrix for $|\psi\rangle$ reads
\begin{equation}
    \hat{\rho}_{AB}=\sum_{\mu,\nu,\sigma,\epsilon}a_{\mu\nu}a^*_{\sigma\epsilon}|\mu\rangle_A\langle\nu|\otimes|\sigma\rangle_B\langle\epsilon|.
\end{equation}

Now, one can find an adequate operator to evaluate physically significant quantities in one of the subsystems by taking the trace of the density operator of the composite system over the other, which yields the \textit{reduced density operator}. For example, the reduced density operator for the system $A$ reads 
\begin{equation}
    \hat{\rho}_A=\sum_{\mu}\hat{I}_A\otimes{}_B\langle\mu|\hat{\rho}_{AB}|\mu\rangle_B\otimes\hat{I}_A=\text{Tr}_B(\hat{\rho}_{AB}),
    \end{equation}
where $\hat{I}_A$ is the identity operator of the system $A$. The density operator $\hat{\rho}_A$ then allows one to evaluate physically significant quantities of the subsystem $A$ using eqs. \ref{expectation} and \ref{probability}. Mathematically, the procedure to obtain $\hat{\rho}_A$ from $\hat{\rho}_{AB}$ can be thought of as ``tracing-out'' the subsystem $B$ (see \cite{Nielsen2010} for details). Physically, this would correspond to ``ignoring'' information regarding the subsystem $B$, which is useful when such a subsystem is irrelevant or inaccessible.

Consider a state, $|\psi\rangle\in\mathscr{H}_A\otimes\mathscr{H}_B$, that can be written in the form
\begin{equation}\label{ent1}
    |\psi\rangle=|\psi\rangle_A|\psi\rangle_B,
\end{equation}
where ${}_A\langle\psi|\psi\rangle_A={}_B\langle\psi|\psi\rangle_B=1$. The reduced density operator for the system $A$, $\hat{\rho}_A$, reads
\begin{equation}
     \hat{\rho}_A=\sum_{\mu}\hat{I}_A\otimes{}_B\langle\mu|\hat{\rho}_{AB}|\mu\rangle_B\otimes\hat{I}_A=|\psi\rangle_A\langle\psi|\sum_{\mu}{}_B\langle\mu|\psi\rangle_B\langle\psi|\mu\rangle_B=|\psi\rangle_A\langle\psi|,
\end{equation}
in which the normalization of $|\psi\rangle_B$ was used. Evidently, $\hat{\rho}_A$ describes a pure state, and an analogous development for system $B$ leads one to the conclusion that $\hat{\rho}_B$ also describes a pure state. Indeed, this is a consequence of the fact that since $|\psi\rangle$ can be written in the form of eq. \ref{ent1}, measurements made on one system do not affect measurements on the other. In other words, action of an operator on system $A$ does not affect expectation values of the action of an operator on system $B$. 

In contrast, if the state $|\psi\rangle$ cannot be written in the form of eq. \ref{ent1}, then the systems $A$ and $B$ are said to be \textit{entangled}, and the state of the systems $A$ and $B$ each are individually mixed. More precisely, the reduced density matrices, $\hat{\rho}_A$ and $\hat{\rho}_B$, will individually describe a mixed state with the same purity, as a consequence of the \textit{Schmidt decomposition} \cite{Nielsen2010}. That is, although the total state of the composite system is completely known (i.e., it is described by a pure state), each individual subsystem has the same degree of ``lack of information''. This is a consequence of the fact that entangled systems are not irrelevant to each other in the process of measurement. For example, if two systems are entangled, it is possible to find an observable for system $A$, $\hat{A}$, and one for system $B$, $\hat{B}$, such that \cite{Unruh2017}
\begin{equation}\label{ent}
    \langle\psi|\hat{A}\otimes\hat{B}|\psi\rangle\neq\langle\psi|\hat{A}\otimes\hat{I}_B|\psi\rangle\langle\psi|\hat{I}_A\otimes\hat{B}|\psi\rangle.
\end{equation}
Physically, this means that there are non trivial correlations between measurements made on both systems. 

Although entanglement may seem like an exotic feature arising from concepts of quantum mechanics (see \cite{Horodecki2009} for an extensive review), it is an intrinsic feature that was derived from the theory alone. In fact, in Schrödinger's words \cite{Schrödinger1935}:

\noindent\textit{``I would not call [entanglement] one but rather the characteristic trait of quantum mechanics, the one that enforces its entire departure from classical lines of thought.''}

Nowadays, experimental evidence of entanglement is undeniable, with many applications to quantum information theory and quantum communication technology, perhaps the most notable application being quantum teleportation (see, e.g., \cite{Bouwmeester1997, Ren2017}). The fundamental relevance of entanglement in the context of quantum field theory will be exemplified by the line of reasoning presented in \S~\ref{hadmard}.
 


\end{apendicesenv}

\phantompart
\printindex

\end{document}